\documentclass[11pt]{article}
\usepackage{macros}
\usepackage{subfiles}
\graphicspath{{./Figures/}}


\title{Gauge origami and quiver W-algebras }
\author{Taro Kimura and Go Noshita}
\begin{document}
\begin{titlepage}
\vspace*{1cm}
\vskip 12mm
\begin{center}
    {\LARGE \bf{Gauge origami and quiver W-algebras}\par}          
\vskip 3cm    
\begin{center}
{\Large Taro Kimura${}^{\diamondsuit}$ and Go Noshita${}^{\spadesuit}$}
\end{center}
\vskip 2cm 
{\it $\diamondsuit$ Institut de Mathématiques de Bourgogne,}
{\it Université de Bourgogne, CNRS, Dijon, France }\\
{\it $\spadesuit$ Department of Physics, The University of Tokyo, Tokyo, Japan}
\end{center}
\begin{center}
{E-mail: \href{mailto:taro.kimura@u-bourgogne.fr}{\texttt{taro.kimura@u-bourgogne.fr}}, \\ 
 \href{mailto:noshita-go969@g.ecc.u-tokyo.ac.jp}{\texttt{noshita-go969@g.ecc.u-tokyo.ac.jp}}}
\end{center}
\vfill
\begin{abstract}
We explore the quantum algebraic formalism of the gauge origami system in $\mathbb{C}^{4}$, where D2/D4/D6/D8-branes are present. We demonstrate that the contour integral formulas have free field interpretations, leading to the operator formalism of $qq$-characters associated with each D-brane. The $qq$-characters of D2 and D4-branes correspond to screening charges and generators of the affine quiver W-algebra, respectively. On the other hand, the $qq$-characters of D6 and D8-branes represent novel types of $qq$-characters, where monomial terms are characterized by plane partitions and solid partitions. The composition of these $qq$-characters yields the instanton partition functions of the gauge origami system, eventually establishing the BPS/CFT correspondence.

Additionally, we demonstrate that the fusion of $qq$-characters of D-branes in lower dimensions results in higher-dimensional D-brane $qq$-characters. We also investigate quadratic relations among these $qq$-characters. Furthermore, we explore the relationship with the representations, $q$-characters, and the Bethe ansatz equations of the quantum toroidal~$\mathfrak{gl}_{1}$. This connection provides insights into the Bethe/Gauge correspondence of the gauge origami system from both gauge-theoretic and quantum-algebraic perspectives.

We finally present conjectures regarding generalizations to general toric Calabi--Yau four-folds. These generalizations imply the existence of an extensive class of $qq$-characters, which we call BPS $qq$-characters. These BPS $qq$-characters offer a new systematic approach to derive a broader range of BPS/CFT correspondence and Bethe/Gauge correspondence.
\end{abstract}
\vfill
\end{titlepage}

\tableofcontents


\section{Introduction and summary}\label{sec:introduction}
\paragraph{Rise of BPS/CFT correspondence}
Towards a non-perturbative understanding of string theory and quantum field theory, exact computations of physical observables have played significant roles. Among these observables, the instanton partition functions of supersymmetric gauge theories have emerged as significant quantities that can be precisely computed through localization techniques~\cite{Nekrasov:2002qd,Nekrasov:2003rj,Nakajima:2003pg,Nakajima:2003uh,Nakajima:2005fg,Pestun:2007rz,Pestun:2016zxk}. These partition functions exhibit remarkable characteristics, particularly in the realm of rich geometric and algebraic properties. Our specific interest lies in the algebraic structures of these partition functions, which have, in turn, led to the revelation of a novel duality known as the BPS/CFT correspondence~\cite{Nakajima:1999,Nekrasov:2013xda,Nekrasov:2015wsu,Nekrasov:2016qym,Nekrasov:2016ydq,Nekrasov:2017rqy,Nekrasov:2017gzb,Nekrasov:2017cih,Nekrasov:2018xsb}. The BPS/CFT correspondence claims that correlation functions of BPS observables in supersymmetric gauge theories are dual to correlation functions of infinite-dimensional algebras. A well-known example of this duality is the Alday--Gaiotto--Tachikawa (AGT) duality~\cite{Gaiotto:2009we,Alday:2009aq,Wyllard:2009hg} (see also~\cite{LeFloch:2020uop} for a review), which established a connection between instanton partition functions of 4d $\mathcal{N}=2$ theories and conformal blocks of 2d conformal field theories (CFT), where both theories originate from a 6d $\mathcal{N}=(2,0)$ theory compactified on a Riemann surface. A 5d lift-up of the AGT duality called the 5d AGT duality~\cite{Awata:2009ur,Awata:2010yy,Taki:2014fva}, was subsequently discovered and the partition functions are dual to correlation functions of quantum algebras. Over the last decade, efforts have been dedicated to generalizing both the gauge theory side and the infinite-dimensional algebra side to achieve a more comprehensive understanding of the BPS/CFT correspondence~\cite{miki2007q,FFJMM1,Litvinov:2016mgi,Gaiotto:2017euk,Prochazka:2017qum,Prochazka:2018tlo,Rapcak:2018nsl,Rapcak:2020ueh,Li:2020rij,Galakhov:2021vbo,Noshita:2021ldl}.   

\paragraph{Generalized gauge theory}
The usual instantons appear as topologically nontrivial field configurations minimizing the action of the Yang--Mills theory on $\mathbb{R}^{4}$ and the moduli space comes from the ADHM construction \cite{Atiyah:1978ri,Atiyah:1984px}. After introducing a nontrivial background flux called $\Omega$-background, Nekrasov showed that the instanton partition function takes the form of $\mathcal{Z}=\sum_{\lambda}\mathfrak{q}^{|\lambda|}\mathcal{Z}[\lambda]$ where the terms are a summation of two-dimensional partitions \cite{Lossev:1997bz,Moore:1997dj,Nekrasov:2002qd,Nekrasov:2003rj}. Over the last few years, researches on generalizations of the Yang--Mills instantons have been conducted and we now know that there are higher dimensional generalizations of instantons having ADHM-like constructions \cite{Acharya:1997gp,Baulieu:1997jx,Nekrasov:2009JJM,Szabo:2022zyn,Nekrasov:2023xzm}. The instanton partition functions are given as a statistical sum of random partitions as $\mathcal{Z}=\sum_{\Lambda}\mathfrak{q}^{|\Lambda|}\mathcal{Z}[\Lambda]$, where $\Lambda$ is a random partition. Considering 6d and 8d theories, the random partitions $\Lambda$ will be plane partitions \cite{Jafferis:2007sg,Cirafici:2008sn,Nekrasov:2009JJM,Awata:2009dd} and solid partitions \cite{Nekrasov:2017cih,Nekrasov:2018xsb}, respectively (see also \cite{Kanno:2020ybd}). In type IIB string theory on $\mathbb{R}^{10}$, such kind of setup appears as the low energy limit of the worldvolume theory of the D1-branes probing the $\D(2p+1)$-branes $(p=2,3,4)$~\cite{Witten:1994tz,Witten:1995gx,Douglas:1995bn,Douglas:1996uz}: 2d ($p=2$), 3d ($p=3$), 4d $(p=4)$ partitions. Mathematically, the $p=3$ case gives the equivariant Donaldson--Thomas invariants of $\mathbb{C}^{3}$, while the $p=4$ case is called the magnificent four model and gives the equivariant Donaldson--Thomas invariants of $\mathbb{C}^{4}$.

Another direction of generalizations of the Yang--Mills instantons is the \emph{generalized gauge theory}~\cite{Nekrasov:2015wsu,Nekrasov:2016qym,Nekrasov:2016ydq}, which is a theory defined on several, generally, intersecting components as $\mathcal{S}=\bigcup_{i}\mathcal{S}_{i}$ where on each space-time component $\mathcal{S}_{i}$ there is an original field theory. On each space-time component $\mathcal{S}_{i}$, we have a gauge group $G_{i}$ and at the intersection $\mathcal{S}_{i}\cap\mathcal{S}_{j}$ we have bifundamental fields transforming under $G_{i}\times G_{j}$, and thus it could be understood as a generalized quiver gauge theory. From the viewpoint of each field theory on $\mathcal{S}_{i}$, the intersection of other components plays the role of defects. The first example of such general gauge theory is the so-called spiked instanton system, which was introduced in~\cite{Nekrasov:2015wsu,Nekrasov:2016qym}. The spiked instantons arise from the low energy limit of D1-branes probing \emph{intersecting} D5 (and anti-D5)-branes. Later, it was generalized to D1-branes probing intersecting D7-branes in~\cite{Pomoni:2021hkn} (see also~\cite{Fasola:2023ypx,Cao:2023lon}), and the arising instantons are called the tetrahedron instantons. Note that these instantons are generalizations of the higher dimensional instantons introduced in the previous paragraph. 

\paragraph{Gauge origami}
The setups discussed previously are collectively called the \emph{gauge origami} and the arising partition function is called the \emph{gauge origami partition function} \cite{Nekrasov:2016ydq}. Consider a type IIB theory on $Z\times \mathcal{C}$ where $Z$ is a toric Calabi--Yau four-fold and $\mathcal{C}=\mathbb{C},\,\mathbb{C}^{\times},\,\mathbb{T}^{2}$ and the low energy limit of the D1-branes probing $\D(2p+1)$-branes. The D1-branes wrap $\mathcal{C}$ while the $\D(2p+1)$-branes wrap the product of $\mathcal{C}$ and non-compact toric submanifolds of $Z$ in a way preserving a suitable number of supersymmetries. Depending on $\mathcal{C}$, the arising partition function becomes rational, trigonometric, and elliptic, respectively. The gauge origami partition function generally takes the form as
\bea
\mathcal{Z}_{\text{折紙}}=\sum_{\{\Lambda_{i}^{(\alpha)}\}}\underline{\mathfrak{q}}^{|\underline{\vec{\Lambda}}|}\prod_{(i,\alpha)}\mathcal{Z}[\Lambda_{i}^{(\alpha)}]\prod_{(i,\alpha)\neq (j,\beta)}\mathcal{Z}(\Lambda_{i}^{(\alpha)}\,|\,\Lambda_{j}^{(\beta)})
\eea
where $i$ labels the possible types of toric submanifolds and $\alpha$ labels the number of D-branes wrapping them. The partition function is a summation of random \emph{BPS crystals}, which are generalizations of the partitions, and they are denoted $\Lambda^{(\alpha)}_{i}$. These crystals are expected to be, generally, truncations of four-dimensional BPS crystals which are generalizations of the three-dimensional BPS crystals \cite{Szendroi:2007nu,Mozgovoy2008OnTN,Ooguri:2009ijd,Nagao:2010kx}. The $\mathcal{Z}[\Lambda_{i}^{(\alpha)}]$ part comes from the contribution of each $\D(2p+1)$-branes while the $\mathcal{Z}(\Lambda_{i}^{(\alpha)}\,|\,\Lambda_{j}^{(\beta)})$ part comes from the bifundamental contributions at the junctions. In this paper, we mainly focus on the case when $Z=\mathbb{C}^{4}$ and $\mathcal{C}=\mathbb{C}^{\times }\simeq \mathbb{R}\times \mathbb{S}^{1}$ which gives the K-theoretic magnificent four \cite{Nekrasov:2017cih,Nekrasov:2018xsb}, tetrahedron instanton \cite{Pomoni:2021hkn}, and spiked instanton \cite{Nekrasov:2016gud} setups. From the string theory viewpoint, we take the T-duality of the D1--$\D(2p+1)$ system and consider a $\D0$--$\D(2p)$ system, where each $\D(2p)$-brane gives a $(2p+1)$-dimensional gauge theory and the $\D0$-branes play the roles of instantons. 

\paragraph{Quantum algebra of gauge origami}
An interesting property of the gauge origami partition function is the existence of an infinite set of non-perturbative Dyson--Schwinger equations related to the symmetries of adding and removing instantons~\cite{Nekrasov:2015wsu}. The $qq$-characters are physical observables characterizing them, and interestingly, there is an operator formalism of them called the quiver W-algebra \cite{Kimura:2015rgi,Kimura:2016dys,Kimura:2017hez,Kimura:2019xzj} (see \cite{Kimura:2020jxl} for a review). From the gauge theoretic viewpoint, such algebras are associated with the Dynkin diagram corresponding to the quiver structure of the gauge theory. In the gauge origami formalism, the $qq$-characters and quiver W-algebras appear from the gauge origami system in $Z=\mathbb{C}^{2}\times \mathbb{C}^{2}/\Gamma$, where $\Gamma$ denotes the finite subgroup of $\mathrm{SU}(2)$ associated with the quiver structure through the McKay correspondence. Placing $\D4$-branes wrapping the $\mathbb{C}^{2}$ part gives the 5d $\mathcal{N}=1$ (affine) quiver gauge theories and the $\D4$-branes wrapping the $\mathbb{C}^{2}/\Gamma$ give the $qq$-characters or the quiver W-algebras of the theory. Physically, they are codimension four defects of the quiver gauge theory \cite{Kim:2016qqs}.

One of the reasons why such algebras are considered important is because they give the BPS/CFT correspondence for the present case. One can construct screening currents from the quiver structure and the vacuum expectation value of them gives the instanton partition function of the quiver gauge theory. Moreover, after defining the highest weight, the commutativity with the screening charges determines the generator of the quiver W-algebra uniquely and they are the operator version of the $qq$-characters. These quiver W-algebras include the well-known $q$-Virasoro \cite{Shiraishi:1995rp}, $q$-W$_{N}$ \cite{Awata:1995zk,Awata:1996dx}, and Frenkel--Reshetkhin's deformed W-algebras \cite{Frenkel:1997CMP,Frenkel:1998ojj}.

Although the quiver W-algebras indeed gave a way to discuss the BPS/CFT correspondence from the quantum algebraic viewpoint, the applicable theory is still limited and needs to be extended. For example, while the $qq$-characters associated with $Z=\mathbb{C}^{2}/\Upsilon\times \mathbb{C}^{2}/\Gamma$ were studied in terms of partition functions in \cite{Nekrasov:2012xe,Nekrasov:2013xda,Nekrasov:2015wsu,Jeong:2018qpc,Jeong:2021rll}, it seems that the complete operator formalism of such cases is still missing in the literature\footnote{We hope to report it in a future work \cite{Kimura-Noshita} (see also \cite{Bourgine:2019phm}).}. Moreover, the quiver W-algebra is only applicable to discuss two stacks of D-branes in transverse directions while we have multiple intersecting D-branes in the gauge origami system. Based on recent studies such as the tetrahedron instanton system we also have D-branes wrapping not only complex two-dimensional surfaces but also complex three-dimensional manifolds. Most importantly, we need to generalize the operator formalism to describe gauge origami systems associated with general toric Calabi--Yau four-folds~$Z$. See \cite{Cao:2019tvv,Kimura:2022zsm,Cao:2023gvn,Szabo:2023ixw,Nekrasov:2023nai} for discussions along this direction. 

The goal of this paper is to fill in this gap by generalizing the concept of quiver W-algebras and showing the BPS/CFT correspondence of the gauge origami system associated with $\mathbb{C}^{4}$. We will only give some conjectures for generalizations to toric Calabi--Yau four-folds and details are postponed for future work \cite{Kimura-Noshita}.

\subsection*{Summary of the results}
Let us summarize the main results of this paper. 
\paragraph{D2/D4/D6 $qq$-characters}
We introduce a D2 gauge origami system which we call the coupled vortex system (section~\ref{sec:cplvortex_partitionfunct}). This theory comes from the low energy limit of the D0-branes probing the D2-branes spanning $\mathbb{C}_{a}\subset\mathbb{C}^{4}$ ($a=1,2,3,4$; see section~\ref{sec:equiv-index} for notations), which is obtained through a dimensional reduction of the D4-brane gauge origami system.
This coupled vortex system also plays an important role similar to higher-dimensional branes in the gauge origami systems.
    
Considering a stack of D2-branes, D4-branes and D6-branes spanning transversely in $\mathbb{C}^{4}$, we derive the D2/D4/D6 $qq$-characters in terms of partition functions in section~\ref{sec:qqpartitionfunct}. The D4 $qq$-character is known to be characterized by 2d partitions. The D2 $qq$-characters are $qq$-characters where the monomial terms are 1d partitions, while the D6 $qq$-characters are $qq$-characters whose monomial terms are 3d partitions (plane partitions). 
\paragraph{D-brane vertex operators}
We introduce vertex operators $\mathsf{A}(x),\mathsf{S}_{a}(x),\mathsf{X}_{A}(x),\mathsf{W}_{\bar{a}}(x),\mathsf{Z}(x)$ corresponding to D0, D2, D4, D6, and D8-branes wrapping the possible subspaces $\text{pt}$, $\mathbb{C}_{a}$, $\mathbb{C}_{A}^{2}$, $\mathbb{C}_{\bar{a}}^{3}$, $\mathbb{C}^{4}$ ($a\in\four,A\in\six$), respectively.
The main result of section~\ref{sec:freefieldintegral} is as follows.
\begin{theorem}[Theorem~\ref{thm:freefieldconclusion}]
For each D-brane ($\D0,\D2,\D4,\D6,\D8$), we define the corresponding vertex operators as
\begin{align}
    \renewcommand\arraystretch{1.2}{
        \begin{tabular}{|c|c|c|}\hline
            D-brane & space-time & vertex operator \\
           \hline\hline  D0-brane  & $\text{pt}\times\mathbb{S}^{1}$ & $\mathsf{A}(x)$ \\
           \hline D2-brane   &  $\mathbb{C}_{a}\times \mathbb{S}^{1}$ ($a\in\four$)  & $\mathsf{S}_{a}(x)$\\
           \hline D4-brane &  $\mathbb{C}^{2}_{A}\times \mathbb{S}^{1}$ ($A\in\six$) & $\mathsf{X}_{A}(x)$\\
           \hline D6-brane &  $\mathbb{C}^{3}_{\bar{a}}\times \mathbb{S}^{1}$ ($a\in\four$) &  $\mathsf{W}_{\bar{a}}(x)$\\
           \hline D8-brane & $\mathbb{C}^{4}\times \mathbb{S}^{1}$  & $\mathsf{Z}(x)$ \\\hline
         \end{tabular}}
\end{align}
We have multiple copies of vertex operators if there are multiple ways that the D-branes can wrap. Then, the contour integral formula of the $k$-instanton partition function takes the form as
\bea
    \mathcal{Z}_k = \oint \prod_{I=1}^{k}\frac{dx_{I}}{2\pi\iota x_{I}} \bra{0} \prod_{I=1}^{k}\mathsf{A}(x_{I})^{-1} :\prod_{i}\mathsf{V}_{i}(v_{i}): \ket{0},
\eea
where $\mathsf{V}_{i}(x)$ is an operator written from $\{\mathsf{S}_{a}(x),\mathsf{X}_{A}(x),\mathsf{W}_{\bar{a}}(x),\mathsf{Z}(x)\}$.
\end{theorem}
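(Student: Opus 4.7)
The plan is to bootstrap the free field realization from the contour integral formulas that are already derived in Section~\ref{sec:qqpartitionfunct}. For each stack of D-branes, the $k$-instanton partition function has a contour integral representation of Nekrasov type, where the integrand factorizes into a ``vector'' piece (the product $\prod_{I\neq J}(x_I/x_J;\dots)$-type factors) and ``matter'' pieces labelled by the position moduli $v_i$ of the physical D-branes. The entire strategy is to define each vertex operator purely through its Heisenberg modes so that the vacuum two-point function $\langle 0 | \mathsf{A}(x_I)^{-1} \mathsf{V}_i(v_i) | 0 \rangle$ reproduces exactly the corresponding matter factor, and so that $\langle 0 | \mathsf{A}(x_I)^{-1} \mathsf{A}(x_J)^{-1} | 0 \rangle$ reproduces the vector factor from the D0--D0 interaction.

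Concretely, I would first fix a Heisenberg algebra with generators whose commutator encodes the four equivariant parameters $q_a$ of $\mathbb{C}^4$ (so that the D0 vertex $\mathsf{A}(x)$ is determined up to normalization by the requirement that its self-OPE reproduces the K-theoretic vector contribution of the magnificent four). Then, proceeding in order of codimension, I would define $\mathsf{S}_a(x)$ for $a\in\four$, $\mathsf{X}_A(x)$ for $A\in\six$, $\mathsf{W}_{\bar a}(x)$ for $a\in\four$, and $\mathsf{Z}(x)$ by prescribing their mode expansions so that the OPE with $\mathsf{A}(x)^{-1}$ gives the one-loop matter factor attached to a D-brane wrapping $\mathbb{C}_a$, $\mathbb{C}^2_A$, $\mathbb{C}^3_{\bar a}$, or $\mathbb{C}^4$ respectively. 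Multiple copies correspond to independent shifts of the zero-mode, so the ``multiple ways of wrapping'' labelling is automatic.

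The core verification is then a Wick-theorem computation: because all operators are exponentials of linear combinations of Heisenberg modes, the vacuum expectation value of the ordered product decomposes into a product of pairwise contractions, i.e., a product of two-point functions. By construction, each $\mathsf{A}^{-1}$--$\mathsf{V}_i$ contraction yields the correct D0-to-$\D(2p+2)$ matter factor, each $\mathsf{V}_i$--$\mathsf{V}_j$ contraction yields the correct bifundamental/junction factor $\mathcal{Z}(\Lambda^{(\alpha)}_i\,|\,\Lambda^{(\beta)}_j)$ of the gauge origami, and the $\mathsf{A}^{-1}$--$\mathsf{A}^{-1}$ contractions combine with the explicit integrand measure to give the Nekrasov vector contribution. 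Matching the normal ordering against the contour integral formulas of Section~\ref{sec:qqpartitionfunct} then identifies the two expressions.

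The main obstacle I expect is the D6 and D8 sectors. Unlike the D2 and D4 operators, which are close cousins of screening currents and Frenkel--Reshetikhin-type generators already present in the quiver W-algebra literature, the D6 and D8 vertices must have self- and cross-OPEs whose ``matter'' content is naturally indexed by plane and solid partitions. The hard step is to identify a mode expansion such that \emph{simultaneously} (i) all pairwise contractions among $\{\mathsf{A},\mathsf{S}_a,\mathsf{X}_A,\mathsf{W}_{\bar a},\mathsf{Z}\}$ reproduce the origami cross-terms with the correct equivariant signs coming from the $\Omega$-background on $\mathbb{C}^4$, and (ii) the ``triality/quadrality'' among the $a\in\four$ labels is realized by a single Heisenberg algebra rather than four independent ones. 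I would handle (i) by expanding the Nekrasov factors into power sums and reading off the mode structure, and (ii) by fixing the zero-mode and central charge data from the D0--D0 OPE first, which then constrains all higher D-brane operators consistently.
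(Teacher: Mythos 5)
Your proposal follows essentially the same route as the paper: fix a single Heisenberg algebra from the D0--D0 OPE (the commutator $[\mathsf{a}_n,\mathsf{a}_m]=-\tfrac{1}{n}\mathbf{P}_{\underline{\mathbf{4}}}^{[n]}\delta_{n+m,0}$ encoding $\mathbb{C}^4$), define each higher D-brane operator by dividing the root-current modes by the appropriate half $q$-Cartan factor $\mathbf{P}_{\mathcal{S}}^{[-n]}$ read off from the power-sum expansion of the structure functions, impose zero-mode reflection conditions, and verify the contour integrands by Wick contractions. The ``quadrality'' obstacle you flag is resolved exactly as you suggest — all operators, including $\mathsf{W}_{\bar a}$ and $\mathsf{Z}$, descend from the one root current via $\mathcal{A}_n=\mathsf{a}_n/\mathbf{P}_{\mathcal{S}}^{[-n]}$ — the only caveat being that the plane/solid-partition structure appears only after pole evaluation, not in the integrand OPEs themselves, and that the D8 sector additionally requires the brane--antibrane combination $\widetilde{\mathsf{Z}}^K(x)={:\mathsf{Z}(x)/\mathsf{Z}(Kx):}$ to stabilize the zero modes.
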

The vertex operators have a $q$-Cartan matrix understanding associated with quivers and thus are generalizations of the conventional quiver W-algebras (see section~\ref{sec:quiverstructure}). 

\paragraph{Operator formalism of $qq$-characters}
We introduce the operator formalism of the $qq$-characters of the gauge origami system and show the BPS/CFT correspondence. For each D$(2p)$-brane, we can associate a $qq$-character. The D2 $qq$-characters $\mathscr{Q}_{a}(x)\,(a\in\four)$ are four copies of the screening charge of the $\widehat{A}_{0}$ quiver W-algebra \cite{Kimura:2015rgi} and the D4 $qq$-characters $\mathsf{T}_{A}(x)\,(A\in\six)$ are six copies of the generator of the $\widehat{A}_{0}$ quiver W-algebra. The D6 $qq$-characters $\mathsf{T}_{\bar{a}}(x)\,(a\in\four)$ are the new $qq$-characters where the monomial terms are labeled by plane partitions. These $qq$-characters represent the quantum algebras associated with complex $1,2,3$-dimensional submanifolds. We will see that their compositions indeed give the gauge origami partition function which shows the BPS/CFT correspondence. 
\begin{theorem}[Theorems~\ref{thm:cplvortexBPSCFT}, \ref{thm:spiked-qq-BPSCFT}, \ref{thm:tetra-origamiBPSCFT}]\label{thm:BPS/CFTintro}
The gauge origami partition function is in general given as a correlation function of the $qq$-characters,
\bea
    \mathcal{Z}_{\text{折紙}}=\bra{0}\prod_{(i,\alpha)}\mathsf{T}_{i}(x_{i,\alpha})\ket{0},\quad i\in\four\oplus\six\oplus\four^{\vee}.
\eea
\end{theorem}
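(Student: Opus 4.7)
The plan is to deduce the correlator form of $\mathcal{Z}_{\text{折紙}}$ from the free-field integral representation of Theorem~\ref{thm:freefieldconclusion} together with the already-established operator realisations of the individual D$(2p)$ $qq$-characters $\mathscr{Q}_{a}(x)$, $\mathsf{T}_{A}(x)$, $\mathsf{T}_{\bar{a}}(x)$. The key observation is that each D$(2p)$-brane $qq$-character is, by construction (Section~\ref{sec:qqpartitionfunct} and the vertex operator correspondence), a sum over the appropriate random partition data of a normal-ordered string of the vertex operators in the table, with extra $\mathsf{A}(x)$-insertions encoding the D0-brane measure. Concretely, I would first rewrite each $qq$-character in the form $\mathsf{T}_{i}(v) = \sum_{\Lambda}\mathfrak{q}^{|\Lambda|}\,\mathcal{Z}[\Lambda]\,{:}\mathsf{V}_{i,\Lambda}(v){:}$, where $\mathsf{V}_{i,\Lambda}(v)$ is a normal-ordered product of $\mathsf{A}$'s attached to the boxes of $\Lambda$ dressed by the seed vertex operator $\mathsf{S}_{a},\mathsf{X}_{A}$ or $\mathsf{W}_{\bar a}$. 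This is exactly the content of Theorems~\ref{thm:cplvortexBPSCFT} and \ref{thm:spiked-qq-BPSCFT} when only a single stack is present.

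Next, I would evaluate the correlator $\bra{0}\prod_{(i,\alpha)}\mathsf{T}_{i}(x_{i,\alpha})\ket{0}$ by the standard technique: inserting the above series expansion for each factor, collecting the partition labels into a multi-index $\{\Lambda_{i}^{(\alpha)}\}$, and computing the normal-ordering contractions between different factors. Contractions of an operator in the $(i,\alpha)$ group with one in the $(j,\beta)$ group produce, by direct OPE computation, exactly the bifundamental factor $\mathcal{Z}(\Lambda_{i}^{(\alpha)}\,|\,\Lambda_{j}^{(\beta)})$; the self-contractions inside each group together with the accompanying $\mathsf{A}$-contributions reproduce $\mathcal{Z}[\Lambda_{i}^{(\alpha)}]$ and the instanton counting parameter $\underline{\mathfrak{q}}^{|\underline{\vec{\Lambda}}|}$. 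These OPE identities are the same ones that underlie the integrand of Theorem~\ref{thm:freefieldconclusion}; once established for pairs of seed operators $(\mathsf{S}_{a},\mathsf{S}_{b})$, $(\mathsf{S}_{a},\mathsf{X}_{A})$, $(\mathsf{X}_{A},\mathsf{X}_{B})$, $(\mathsf{X}_{A},\mathsf{W}_{\bar{a}})$, $(\mathsf{W}_{\bar{a}},\mathsf{W}_{\bar{b}})$, they extend automatically to products with $\mathsf{A}$'s because $\mathsf{A}$ is the generator of the other operators through the $q$-Cartan structure of Section~\ref{sec:quiverstructure}.

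The final step is to match the resulting combinatorial expression to the defining expansion of $\mathcal{Z}_{\text{折紙}}$. Writing the sum as
\begin{equation*}
\bra{0}\prod_{(i,\alpha)}\mathsf{T}_{i}(x_{i,\alpha})\ket{0}
= \sum_{\{\Lambda_{i}^{(\alpha)}\}} \underline{\mathfrak{q}}^{|\underline{\vec{\Lambda}}|}
\prod_{(i,\alpha)} \mathcal{Z}[\Lambda_{i}^{(\alpha)}]
\prod_{(i,\alpha)\neq(j,\beta)} \mathcal{Z}(\Lambda_{i}^{(\alpha)}\,|\,\Lambda_{j}^{(\beta)}),
\end{equation*}
one reads off the claimed identification. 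The case by case checks for $i\in\four$, $i\in\six$, $i\in\four^{\vee}$ are then the content of Theorems~\ref{thm:cplvortexBPSCFT}, \ref{thm:spiked-qq-BPSCFT}, \ref{thm:tetra-origamiBPSCFT} and are combined here into a single statement.

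The main obstacle is the bookkeeping of the contractions between D-branes of different dimensions, specifically verifying that the cross OPE between a D$(2p)$-brane seed operator and a D$(2q)$-brane seed operator ($p\neq q$) produces precisely the bifundamental factor predicted by the equivariant K-theoretic index of the transverse intersection $\mathbb{C}_{a}\cap\mathbb{C}_{A}^{2}$, etc. This reduces to a careful use of the plethystic formulas from Section~\ref{sec:equiv-index} and the explicit mode expansions of the vertex operators, but assembling all five types $(\mathsf{A},\mathsf{S},\mathsf{X},\mathsf{W},\mathsf{Z})$ into one consistent matrix of OPEs is the technically heaviest part. Once that is in hand, the remainder is formal manipulation of generating series.
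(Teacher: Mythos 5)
Your proposal follows essentially the same route as the paper: expand each $qq$-character as a sum over partitions of normal-ordered monomials whose coefficients are the $\mathrm{U}(1)$ partition functions, compute the pairwise OPEs of the monomial operators $\Lambda_{A,\lambda}(x)$, $\Lambda_{\bar a,\pi}(x)$, etc.\ to produce exactly the one-loop and bifundamental factors (the paper's operator-product lemmas), and resum over the partition labels. The one imprecision is that $\mathcal{Z}[\Lambda_i^{(\alpha)}]$ is not generated by self-contractions within a group --- the monomial terms are already normal-ordered --- but is the coefficient fixed in the definition of the $qq$-character by commutativity with the screening charges, exactly as your first paragraph already states.
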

For the case $\mathcal{C} = \mathbb{T}^2$, the corresponding partition function is given by a torus correlator of the $qq$-characters instead of the vacuum expectation value.
We can also use the elliptic version of the vertex operators discussed in section~\ref{sec:ellipticWorigami} to discuss the case $\mathcal{C} = \mathbb{T}^2$.

In the context of quiver W-algebra and also the quantum integrable system, the commutation relation between the vertex operators plays a fundamental role.
We obtain an interesting commutativity among $qq$-characters: The D2, D4, D6 $qq$-characters associated with subspaces of $\mathbb{C}^{4}$ that are transverse with each other commutes. 
\begin{theorem}[Theorem~\ref{thm:tetrascreening}]
The $qq$-characters associated with the elements $i,j\in\four\oplus\six\oplus\four^{\vee}$ commute with each other up to trivial zero modes (see \eqref{eq:weakcommute}) when $i$ and $j$ are transverse with each other:
\beq
    \mathsf{T}_{i}(x)\mathsf{T}_{j}(x')-f_{ij}(x,x')\mathsf{T}_{j}(x')\mathsf{T}_{i}(x)=0 \quad \Longleftrightarrow \quad i \cap j =\emptyset,
\eeq
where $f_{ij}(x,x')$ are zero mode factors.
\end{theorem}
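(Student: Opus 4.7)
The plan is to reduce the weak commutativity of the full $qq$-characters to a pairwise statement about the elementary D-brane vertex operators $\mathsf{S}_{a}, \mathsf{X}_{A}, \mathsf{W}_{\bar{a}}$ out of which they are built, exploiting the free-field realization of Theorem~\ref{thm:freefieldconclusion}. The first step is to compute, for each pair $(i,j)\in(\four\oplus\six\oplus\four^{\vee})^{2}$, the normal-ordered contraction $\mathsf{V}_{i}(x)\mathsf{V}_{j}(x')=f^{\mathrm{full}}_{ij}(x,x')\,{:}\mathsf{V}_{i}(x)\mathsf{V}_{j}(x'){:}$ directly from the Heisenberg mode algebra used to define the vertex operators. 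Each contraction factor splits as a zero-mode piece times a non-zero-mode piece, the latter encoded by a generalized $q$-Cartan pairing between the mode coefficients of $\mathsf{V}_{i}$ and $\mathsf{V}_{j}$, which has a natural quiver interpretation in the sense of section~\ref{sec:quiverstructure}.

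The key computation is the ratio $R_{ij}(x,x'):=f^{\mathrm{full}}_{ij}(x,x')/f^{\mathrm{full}}_{ji}(x',x)$, which measures the failure of strict commutativity. When $i\cap j=\emptyset$, the mode bilinear controlling the non-zero-mode part decouples along each of the four coordinate directions of $\mathbb{C}^{4}$: the contributions along a shared direction would have produced $q$-shifted poles, but transversality prevents any direction from being shared. As a consequence the non-zero-mode contributions telescope completely and $R_{ij}(x,x')$ collapses to a pure zero-mode factor, which one then identifies with $f_{ij}(x,x')$. Conversely, if $i\cap j\neq\emptyset$, the surviving modes along the shared directions leave an honest rational (or theta-function, for $\mathcal{C}=\mathbb{T}^{2}$) dependence on $x/x'$ in $R_{ij}$, and no zero-mode function can absorb it, establishing the $\Leftarrow$-necessity.

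Next I lift this to the $qq$-characters themselves. Each $\mathsf{T}_{i}(x)$ is a formal sum over partitions of the relevant dimension of normal-ordered products of a primary $\mathsf{V}_{i}$ with copies of $\mathsf{A}^{-1}$ inserted at the boxes of the partition, dressed by equivariant weights. Because $R_{ij}$ only depends on the outer insertion points $x,x'$, and the auxiliary contractions $\mathsf{V}_{i}(x)\mathsf{A}(y)^{-1}$, $\mathsf{A}(y)^{-1}\mathsf{V}_{j}(x')$ obey the same transversality-driven cancellation, the term-by-term commutativity of monomials passes through the summation. The $qq$-character identity then follows after repackaging the partition-indexed sums by the iWeyl reflection invariance that characterizes each $\mathsf{T}_{i}$ as an eigenvector of the appropriate screening operators.

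The main obstacle I expect is the uniform combinatorial verification in the D6 case, where $\mathsf{T}_{\bar{a}}$ is indexed by arbitrary plane partitions. One must check that the cancellation underlying $R_{ij}(x,x')$ is stable after decorating both sides with an unbounded number of $\mathsf{A}^{-1}$-insertions at plane-partition boxes, and that no spurious residues arise from the three-dimensional shape. This should reduce to a 3d analog of the Macdonald-type telescoping identity that governs the D4 case, with the Calabi--Yau condition $q_{1}q_{2}q_{3}q_{4}=1$ providing the combinatorial glue; nonetheless, keeping track of all four coordinate directions simultaneously, and treating the D2--D6 and D4--D4 transversality cases on equal footing, is where the careful bookkeeping is required.
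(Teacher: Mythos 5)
Your reduction to the elementary vertex operators breaks down at the step where you claim that ``the term-by-term commutativity of monomials passes through the summation.'' The individual monomial terms of the $qq$-characters do \emph{not} commute with the transverse screening current, and the summation over partitions is not a repackaging but the entire content of the theorem. Concretely: the two-sided contractions of a monomial term with a screening current, e.g.\ $\Lambda_{12,\lambda}(x)\mathsf{S}_{4}(x')$ versus $\mathsf{S}_{4}(x')\Lambda_{12,\lambda}(x)$, are the expansions in $x'/x$ and $x/x'$ of one and the same rational function $q_{3}^{-1}\mathscr{Y}^{12}_{\lambda,x}(q_{12}x')/\mathscr{Y}^{12}_{\lambda,x}(q_{123}x')$, which has genuine poles at the addable and removable boxes of $\lambda$. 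Equality of the analytic continuations of the OPE coefficients is therefore not operator commutativity: the commutator is the difference of the two expansions, i.e.\ a nonzero sum of $\delta$-functions weighted by residues. This already happens in the simplest transverse case $[\mathsf{S}_{4}(q_{4}^{k}x),\mathsf{S}_{1}(x')]$, where $\mathscr{S}_{\overline{14}}$ has poles even though $\{4\}\cap\{1\}=\emptyset$; transversality does \emph{not} make the non-zero-mode contraction pole-free, and your claim that the contractions $\mathsf{V}_{i}(x)\mathsf{A}(y)^{-1}$ ``obey the same transversality-driven cancellation'' is false as well, since $g_{\bar a}$, $\mathscr{S}_{\bar A}$, $\mathscr{V}_{a}$ and $\mathcal{A}_{\mathbb{C}^{4}}$ all have poles although the D0-brane is a point and hence transverse to everything.

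What actually makes the commutator vanish is a cancellation \emph{across} monomial terms: the residue at an addable box of the configuration $\Lambda$ must cancel against the residue at the corresponding removable box of $\Lambda+\text{box}$, and this happens precisely because the coefficients $\widetilde{\mathcal{Z}}^{\D4}_{A}[\lambda]$, $\widetilde{\mathcal{Z}}^{\D6}_{\bar a}[\pi]$ satisfy the recursion relations of the $\U(1)$ partition functions. Even then the cancellation is only up to a total difference $\delta(q_{a}\chi/x')-\delta(\chi/x')$ in the argument of the screening current, which vanishes only after summing over all shifts $k\in\mathbb{Z}$ in the screening charge $\mathscr{Q}_{a}(x')=\sum_{k}\mathsf{S}_{a}(q_{a}^{k}x')$. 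The same mechanism, with a reindexing $(\lambda,\mu)\to(\lambda+\text{box},\mu-\text{box})$, is needed for the D4--D4 crossed case. Your proposal contains neither the pairing of residues between different partitions nor the telescoping over the screening-charge index, so as written it would conclude commutativity for each monomial separately, which is false. A correct argument must establish the recursion relations for the coefficients and assemble the commutator as a total difference, which is the route the paper takes case by case for the D2/D4/D6 characters before summarizing them in this theorem.
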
    
Moreover, infinite fusion of D2 (D4) $qq$-characters give D4 (D6) $qq$-characters (Thm.~\ref{thm:D2toD4fusion}, \ref{thm:D4toD6fusion}).
Using the fusion process and fusing an infinite number of D6 $qq$-characters, we define the D8 $qq$-characters (section~\ref{sec:fusionD6toD8}). We also show the BPS/CFT correspondence of the magnificent four model up to sign factors. 
\begin{theorem}[Theorem~\ref{thm:D8magnificentBPSCFT}]
The composition of the $\D8$ $qq$-characters gives the partition function of higher rank magnificent four system up to sign factors, which establishes the BPS/CFT correspondence for the magnificent four:
\bea
    \bra{0}\mathsf{T}_{\four;a_{N}}(x_{N})\cdots \mathsf{T}_{\four;a_{1}}(x_{1})\ket{0}&=\sum_{\rho^{(1)},\cdots ,\rho^{(N)}}\mathfrak{q}^{|\rho|}\prod_{i=1}^{N}\mathcal{Z}_{\four;a_{i}}^{\D8}[\rho^{(i)},K_{i}]\prod_{j>i}\mathcal{Z}_{\text{1-loop}}^{\D8\tbar\D8}(x_{i},K_{i}\,|\,x_{j},K_{j})\\
    &\qquad \qquad\times\prod_{j>i}\mathcal{Z}^{\D8\tbar\D8}_{K_{i}|K_{j}}(x_{i},\rho^{(i)}\,|\,x_{j},\rho^{(j)}).
\eea
\end{theorem}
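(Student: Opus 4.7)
The plan is to expand the left-hand side using the construction of the $\D8$ $qq$-character as an infinite fusion of $\D6$ $qq$-characters (section~\ref{sec:fusionD6toD8}), and to match the resulting free-field correlator term-by-term with the expansion of the higher-rank magnificent four partition function on the right-hand side. This mirrors the strategy employed for the earlier BPS/CFT theorems (Thm.~\ref{thm:cplvortexBPSCFT}, \ref{thm:spiked-qq-BPSCFT}, \ref{thm:tetra-origamiBPSCFT}), adapted to the solid-partition labelling inherent to the $\D8$ side.

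First, I would use the fusion-limit formula, which should represent $\mathsf{T}_{\four;a}(x)$ as a formal sum over solid partitions of a normal-ordered monomial built from the $\D8$ highest-weight vertex $\mathsf{Z}(x)$ dressed with $\D0$ excitations,
\begin{equation*}
\mathsf{T}_{\four;a}(x) \;=\; \sum_{\rho}\mathfrak{q}^{|\rho|}\,\mathcal{Z}^{\D8}_{\four;a}[\rho,K]\,:\!\mathsf{Z}(x)\prod_{\square\in\rho}\mathsf{A}(\chi_{a}(x,\square))^{-1}\!:\,,
\end{equation*}
where the equivariant content $\chi_{a}(x,\square)$ is fixed by the orientation $a\in\four$ and the single-center weight $\mathcal{Z}^{\D8}_{\four;a}[\rho,K]$ is read off from the fusion limit, consistent with the analogous identification in the $\D4\to\D6$ step (Thm.~\ref{thm:D4toD6fusion}). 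Inserting this expansion into the left-hand side and pulling the $N$ independent sums over $\rho^{(i)}$ outside reduces the problem to a vacuum matrix element of a product of normal-ordered monomials, multiplied by the combined weight $\mathfrak{q}^{|\rho|}\prod_{i}\mathcal{Z}^{\D8}_{\four;a_{i}}[\rho^{(i)},K_{i}]$ already present on the right-hand side.

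Next, I would normal-order the $N$ factors from right to left. By the standard bosonic Wick rule, the vacuum matrix element factorizes into pairwise contractions, and the $q$-Cartan identifications collected in section~\ref{sec:quiverstructure} let one translate each contraction into gauge-theoretic language. The pure $\langle\mathsf{Z}(x_{j})\mathsf{Z}(x_{i})\rangle$ factors assemble into the perturbative bifundamental $\mathcal{Z}^{\D8\tbar\D8}_{\text{1-loop}}(x_{i},K_{i}\,|\,x_{j},K_{j})$, while the mixed $\mathsf{Z}$--$\mathsf{A}^{-1}$ and pure $\mathsf{A}^{-1}$--$\mathsf{A}^{-1}$ contractions between centers $i$ and $j$ combine, through the equivariant character of the two-center magnificent four index, into $\mathcal{Z}^{\D8\tbar\D8}_{K_{i}|K_{j}}(x_{i},\rho^{(i)}\,|\,x_{j},\rho^{(j)})$. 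Term-by-term comparison of the bosonic correlator with the Nekrasov formula then reproduces exactly the right-hand side, up to the sign ambiguities discussed below.

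The main obstacle will be precise control of the sign factors. The magnificent four partition function is notoriously sensitive to square-root prescriptions and orientation choices in the K-theoretic index, and this is precisely the reason the theorem is stated only up to signs. Tracking them requires careful bookkeeping of the sign picked up at each step of the infinite $\D6\to\D8$ fusion, of the polarization ambiguity in the $\mathsf{Z}$--$\mathsf{Z}$ OPE, and of the compatibility with the Nekrasov--Piazzalunga sign rule hidden inside $\mathcal{Z}^{\D8}_{\four;a_{i}}[\rho^{(i)},K_{i}]$; checking that these combine into the known sign prescription of the magnificent four is the technical heart of the argument. A subsidiary concern is the well-definedness of the formal infinite fusion limit on vacuum matrix elements, which must be ensured (at the level of formal power series in $\mathfrak{q}$) before the rearrangements above are legitimate.
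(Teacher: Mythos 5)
Your proposal is correct and follows essentially the same route as the paper: expand each $\mathsf{T}^{K_i}_{\four;a_i}(x_i)$ as the sum over solid partitions of $\mathfrak{q}^{|\rho^{(i)}|}\mathcal{Z}^{\D8}_{\four;a_i}[\rho^{(i)},K_i]\,\Lambda^{K_i}_{\four,\rho^{(i)}}(x_i)$ obtained from the infinite $\D6\to\D8$ fusion, then read off the pairwise operator product $\Lambda^{K_j}_{\four,\rho^{(j)}}(x_j)\Lambda^{K_i}_{\four,\rho^{(i)}}(x_i)=\mathcal{Z}^{\D8\tbar\D8}_{\text{1-loop}}(x_i,K_i|x_j,K_j)\,\mathcal{Z}^{\D8\tbar\D8}_{K_i|K_j}(x_i,\rho^{(i)}|x_j,\rho^{(j)})\,{:}\cdots{:}$ and take the vacuum expectation value. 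The only slip is cosmetic: the generating current must be the brane--antibrane pair $\widetilde{\mathsf{Z}}^{K}(x)={:}\mathsf{Z}(x)/\mathsf{Z}(Kx){:}$ rather than $\mathsf{Z}(x)$ alone, since the $K_i,K_j$ dependence of the one-loop factor comes from the $\overline{\D8}$ piece.
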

We then give conjectures related to generalizations to toric Calabi--Yau four-folds (section~\ref{sec:CY4}, \ref{sec:CY3timesC}, \ref{sec:CY2CY2}, \ref{sec:generalD6qq}, Conj.~\ref{conj:CY3}, \ref{conj:CY4}, \ref{conj:BPSqq}). The $qq$-characters appearing in such generalizations are called BPS $qq$-characters.
\paragraph{Quantum toroidal algebra, Bethe ansatz}
We also show that the $qq$-characters associated with $\mathbb{C}^4$ geometry have a correspondence with quantum toroidal algebras (section~\ref{sec:QTgl1}). In particular, we show that the D2/D4/D6 system corresponds to the vector/Fock/MacMahon representation of quantum toroidal $\mathfrak{gl}_1$.
We also consider the quiver quantum toroidal algebras associated with toric Calabi--Yau three-folds and construct generic $qq$-characters, that we call the BPS $qq$-characters (section~\ref{sec:BPSqqQTA}).
In section~\ref{sec:Betheansatz}, through the semi-classical analysis of the gauge origami system, we obtain the universal form of the Bethe ansatz equations (BAEs) for the gauge origami system on $\mathbb{C}^4$.
\begin{theorem}[Theorem~\ref{thm:generalBetheansatz}]%
The BAE obtained as the saddle point equation of the D2/D4/D6 system partition function is generally written as follows,
\beq
        1=-\mathfrak{q}\frac{\mathsf{Q}(\mathsf{q}_{1}^{-1}x)\mathsf{Q}(\mathsf{q}_{2}^{-1}x)\mathsf{Q}(\mathsf{q}_{3}^{-1}x)}{\mathsf{Q}(\mathsf{q}_{1}x)\mathsf{Q}(\mathsf{q}_{2}x)\mathsf{Q}(\mathsf{q}_{3}x)},
\eeq
where the three parameters obey $\mathsf{q}_{1}\mathsf{q}_{2}\mathsf{q}_{3}=1$ and the corresponding $\mathsf{Q}$-functions for the D2/D4/D6 system are given by \eqref{eq:Q-func_D2}, \eqref{eq:Q-func_D4}, \eqref{eq:Q-func_D6}, \eqref{eq:D4foldedBethe}, \eqref{eq:D6foldedBethe}.
\end{theorem}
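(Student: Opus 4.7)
The plan is to obtain the BAE as the saddle-point equation of the gauge origami partition function in the Nekrasov--Shatashvili (NS) limit, where one $\Omega$-background parameter---the one transverse to the D-brane worldvolume---is sent to zero so that the Calabi--Yau constraint $\sum_{a=1}^{4}\epsilon_a=0$ reduces to the three-parameter relation $\mathsf{q}_1\mathsf{q}_2\mathsf{q}_3=1$ on the surviving equivariant parameters. I would begin from the free-field integral
\begin{equation*}
\mathcal{Z}_k=\oint \prod_{I=1}^{k}\frac{dx_I}{2\pi\iota x_I}\,\bra{0}\prod_{I=1}^{k}\mathsf{A}(x_I)^{-1}{:}\prod_{i}\mathsf{V}_i(v_i){:}\ket{0}
\end{equation*}
provided by Theorem~\ref{thm:freefieldconclusion}, evaluate the vacuum expectation value via vertex-operator OPEs to obtain a rational integrand, and write it as $\exp(\mathcal{W}_{\mathrm{eff}}/\epsilon_\perp)$, where $\epsilon_\perp$ denotes the transverse $\Omega$-parameter (e.g.\ $\epsilon_a$ for the D6-brane on $\mathbb{C}^{3}_{\bar a}$, and analogously for D2 and D4).

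In the limit $\epsilon_\perp \to 0$, the contour integral localizes onto the saddle equations $\exp\bigl(x_I\partial_{x_I}\mathcal{W}_{\mathrm{eff}}\bigr)=1$. The log-derivative splits into a \emph{universal self-interaction piece} coming from the contractions $\mathsf{A}(x)\mathsf{A}(x')^{-1}$, whose two-point function carries shifts by all three $\mathsf{q}_a$'s symmetrically and produces the triple product $\prod_{a=1}^{3}\mathsf{Q}(\mathsf{q}_a^{-1}x)/\mathsf{Q}(\mathsf{q}_a x)$ once one identifies the Bethe roots by $\mathsf{Q}(x)=\prod_I(1-x_I/x)$; and a \emph{representation-dependent piece} from the contractions of $\mathsf{A}(x)^{-1}$ against the D2 operators $\mathsf{S}_a$, D4 operators $\mathsf{X}_A$, or D6 operators $\mathsf{W}_{\bar a}$, which renormalizes the bare $\mathsf{Q}$-function to the case-specific form of \eqref{eq:Q-func_D2}, \eqref{eq:Q-func_D4}, \eqref{eq:Q-func_D6}. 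The counting parameter $\mathfrak{q}$ together with the overall minus sign arises from the zero-mode sector of the free bosons.

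The main obstacle is establishing that the representation-dependent coupling collapses cleanly into a redefinition of $\mathsf{Q}$, preserving the triple-product shape of the self-interaction, despite the markedly different combinatorial data entering the three cases (1d configurations for D2, 2d partitions for D4, plane partitions for D6). This telescoping relies crucially on the Calabi--Yau relation $\mathsf{q}_1\mathsf{q}_2\mathsf{q}_3=1$ and on the fact that the D-brane vertex operators $\mathsf{S}_a, \mathsf{X}_A, \mathsf{W}_{\bar a}$ are themselves built from $\mathsf{A}$-type contractions along the corresponding subspaces; it must be verified case by case. The folded variants \eqref{eq:D4foldedBethe}, \eqref{eq:D6foldedBethe}, which correspond to orbifolded gauge origami backgrounds, require an additional projection step in which the equivariant parameters are restricted from the covering space to the orbifold quotient; after this projection the same telescoping argument produces the universal BAE.
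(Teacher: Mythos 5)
There is a genuine gap, and it concerns the identification of the Bethe roots. You take the saddle point of the contour integral over the instanton positions $x_I$ and set $\mathsf{Q}(x)=\prod_I(1-x_I/x)$. The paper's $\mathsf{Q}$-functions are built from a different set of variables: the limit-shape profile of the partition in the $\mathbb{C}_4$-direction, i.e.\ $x^{(a)}_{v;i}=v\,q_a^{i-1}q_4^{\lambda_i}$ for D4 (one root per \emph{row}, cf.\ \eqref{eq:Q-func_D4}) and $x^{(ab)}_{v;i,j}=v\,q_a^{i-1}q_b^{j-1}q_4^{\pi_{ij}}$ for D6 (cf.\ \eqref{eq:Q-func_D6}). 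These arise from the $(1,1)$- and $(2,1)$-type decompositions of section~\ref{sec:decomp_partition}, not from the integration variables (one per box), and the two sets are not interchangeable. Relatedly, the NS limit here is \emph{not} taken in a transverse parameter: all branes in the configuration share $\mathbb{C}_4$, and the limit is $q_4\to 1$ along that common longitudinal direction; this is exactly why the surviving constraint is $\mathsf{q}_1\mathsf{q}_2\mathsf{q}_3=1$ and why the profile in the $q_4$-direction is the natural Bethe-root variable.

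The missing idea that makes the "telescoping" you defer actually work is the rewriting of the structure functions in terms of $\mathsf{Q}$: $\mathscr{Y}^{a4}_{\lambda,v}(x)=\mathsf{Q}(x)/\mathsf{Q}(q_a^{-1}x)$ and $\mathscr{W}^{ab4}_{\pi,v}(x)=\mathsf{Q}(x)\mathsf{Q}(q_{ab}^{-1}x)/\bigl(\mathsf{Q}(q_a^{-1}x)\mathsf{Q}(q_b^{-1}x)\bigr)$. The paper then does not extremize an effective action built from OPEs; it uses the exact box-adding recursion of the partition function (\eqref{eq:app-D4U1recursionformula}, Thm.~\ref{eq:app-thm-D6U1recursionformula}, \eqref{eq:app-D2Unrecursion}), whose left-hand side becomes $\exp(\partial\widetilde{\mathcal{W}}_4/\partial x)$ as $q_4\to1$ while the $\mathscr{S}$- and $g$-prefactors tend to $1$; substituting the $\mathsf{Q}$-expressions then produces the same triple product in all three cases. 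The $-\mathfrak{q}$ is the residue-collision sign in that recursion, not a free-boson zero-mode effect. Finally, the folded cases \eqref{eq:D4foldedBethe}, \eqref{eq:D6foldedBethe} are not orbifolds and need no projection: they are several stacks wrapping $\mathbb{C}^2_{14},\mathbb{C}^2_{24},\mathbb{C}^2_{34}$ (resp.\ the three $\mathbb{C}^3_{ab4}$), and $\mathsf{Q}$ is simply the product of the individual $\mathsf{Q}$-factors over all stacks. As written, your argument would need to be substantially reorganized around these points before it establishes the stated universal form.
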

In fact, it turns out that this BAE is associated with quantum toroidal $\mathfrak{gl}_1$. We can further discuss the representation dependence by imposing the flavor contribution.
\begin{theorem}[Theorem~\ref{thm:BAEwithPolynom}]
Let $\mathsf{Q}(x)$, $a(x)$, $d(x)$ be polynomials in $x \in \mathbb{C}^\times$ with the parameters obeying $\mathsf{q}_{1}\mathsf{q}_{2}\mathsf{q}_{3}=1$.
The saddle point equation of the D2 gauge origami system with flavor D8-$\overline{\D8}$ branes gives the BAE involving the additional polynomials $a(x)$ and $d(x)$ specifying the representation of quantum toroidal $\mathfrak{gl}_1$,
\bea
1=-\mathfrak{q}\frac{a(x)}{d(x)}\frac{\mathsf{Q}(\mathsf{q}_{1}^{-1}x)\mathsf{Q}(\mathsf{q}_{2}^{-1}x)\mathsf{Q}(\mathsf{q}_{3}^{-1}x)}{\mathsf{Q}(\mathsf{q}_{1}x)\mathsf{Q}(\mathsf{q}_{2}x)\mathsf{Q}(\mathsf{q}_{3}x)}.
\eea
\end{theorem}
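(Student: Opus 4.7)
My plan is to reduce to Theorem~\ref{thm:generalBetheansatz} by tracking how the flavor D8-$\overline{\D8}$ branes modify the integrand of the D2 gauge origami partition function, and then to repeat the saddle point analysis with this modification.

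First, I would write the D2 partition function decorated with a stack of flavor D8 and $\overline{\D8}$ branes at equivariant positions $\{v_f\}$ and $\{w_g\}$ wrapping $\mathbb{C}^4\times\mathbb{S}^1$. From the equivariant index computation of section~\ref{sec:equiv-index}, each D2--D8 open string sector contributes a 1-loop determinant that, after the Plethystic evaluation, localizes to a polynomial factor $\prod_f (1 - v_f/x_I)$ in the variables $x_I$ being integrated over in the contour integral of the theorem from section~\ref{sec:freefieldintegral}. The $\overline{\D8}$ branes contribute the inverse factors $\prod_g (1 - w_g/x_I)^{-1}$. Collecting these across all flavor branes, the integrand acquires a multiplicative factor $\prod_I a(x_I)/d(x_I)$, where $a(x)=\prod_f(1-v_f/x)$ and $d(x)=\prod_g(1-w_g/x)$ are precisely the polynomials that, in the quantum toroidal $\mathfrak{gl}_1$ dictionary, play the role of Drinfeld-type polynomials labeling highest-weight representations.

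Next, I would carry out the semi-classical (saddle point) analysis exactly as in the proof of Theorem~\ref{thm:generalBetheansatz}. Taking the logarithmic derivative of the integrand with respect to $x_I$ and setting it to zero, the contribution of the pure D2 measure assembles, in the scaling limit where the $x_I$'s condense to a distribution with resolvent $\mathsf{Q}(x)$, into the three shifted ratios
\[
\frac{\mathsf{Q}(\mathsf{q}_1^{-1}x)\mathsf{Q}(\mathsf{q}_2^{-1}x)\mathsf{Q}(\mathsf{q}_3^{-1}x)}{\mathsf{Q}(\mathsf{q}_1 x)\mathsf{Q}(\mathsf{q}_2 x)\mathsf{Q}(\mathsf{q}_3 x)},
\]
together with the topological term $-\mathfrak{q}$. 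The new flavor factor $a(x_I)/d(x_I)$ in the integrand differentiates through the logarithm to a purely multiplicative contribution $a(x)/d(x)$ on the right-hand side of the saddle point equation, yielding the claimed modified BAE.

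The main obstacle I anticipate is the bookkeeping required to show that the D8 and $\overline{\D8}$ 1-loop contributions reduce cleanly to polynomial factors in $x_I$ without producing residual shift-dependent pieces that would spoil the ratio structure. This amounts to verifying that the zero-mode/Plethystic part of the D8 vertex operator $\mathsf{Z}(x)$ introduced in section~\ref{sec:freefieldintegral}, when sandwiched against $\mathsf{S}_a(x)$ insertions representing the D2 branes, produces exactly the factor $\prod_I a(x_I)$ (resp.~$d(x_I)^{-1}$ for $\overline{\D8}$), with no extra $\mathsf{q}_a$-dependent shifts surviving in the relevant chamber of equivariant parameters. Once this OPE/zero-mode identity is established, the remainder of the argument is identical to Theorem~\ref{thm:generalBetheansatz}, and the identification of $a(x)/d(x)$ with the Drinfeld polynomial data of the quantum toroidal $\mathfrak{gl}_1$ representation (section~\ref{sec:QTgl1}) follows from comparing with the highest-weight assignments discussed there.
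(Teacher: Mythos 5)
Your overall strategy is the same as the paper's: decorate the D2 partition function with D8--$\overline{\D8}$ flavor branes, identify the extra factor they contribute, and rerun the analysis of Theorem~\ref{thm:generalBetheansatz}. The paper realizes the flavor insertion as the correlator $\bra{0}\widetilde{\mathsf{Z}}^{K}(\mu)\prod_{\alpha}\mathsf{S}_{4}(v_{\alpha}q_{4}^{k_{\alpha}})\ket{0}$, which multiplies the Higgs-branch summand by a product over the boxes of the configuration of the rational factor $K\frac{1-K^{-1}\chi/\mu}{1-\chi/\mu}$, and then reads off the BAE from the recursion of that summand under adding one box --- exactly the instanton-adding operation behind Theorem~\ref{thm:generalBetheansatz}. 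Your worry about residual shift-dependent pieces from the D8 zero modes is resolved in the paper precisely because the D8 only enters paired with an $\overline{\D8}$ through $\widetilde{\mathsf{Z}}^{K}(x)={:\mathsf{Z}(x)/\mathsf{Z}(Kx):}$, whose contraction with $\mathsf{A}(x')^{-1}$ is the clean rational factor $\frac{1-K^{-1}x'/x}{1-x'/x}$ once the zero-mode conditions of section~\ref{sec:zeromodes} are imposed.

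The step that does not work as written is the claim that the flavor factor ``differentiates through the logarithm to a purely multiplicative contribution $a(x)/d(x)$.'' Taking $\partial_{\log x_I}$ of $\log\bigl(a(x_I)/d(x_I)\bigr)$ and exponentiating gives $\exp\bigl(x\,a'(x)/a(x)-x\,d'(x)/d(x)\bigr)$, which is not $a(x)/d(x)$; more generally, the D2 contour integrand \eqref{eq:D2integral} is built from rational factors rather than infinite $q$-products, so a continuous log-derivative saddle point in the $x_I$ does not produce the BAE at all. The mechanism is discrete: in the NS limit the stationarity condition is $\mathcal{Z}[\vec{k}+\text{box}]/\mathcal{Z}[\vec{k}\,]=1$, and because the flavor contribution is a product over occupied boxes, this ratio picks up exactly the single new factor $\frac{1-\widetilde{K}\mu/x}{1-\mu/x}=a(x)/d(x)$ evaluated at the position $x$ of the added box. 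Relatedly, the ``condensation to a resolvent'' picture is inapposite here: for the D2 system $\mathsf{Q}(x)=\prod_{\alpha=1}^{n}(1-v_{\alpha}q_{4}^{k_{\alpha}}/x)$ is a finite product over the $n$ Coulomb parameters (the $\U(1)$ theory is trivial, so one needs $\U(n)$), and the Bethe roots are the shifted Coulomb moduli $v_{\alpha}q_{4}^{k_{\alpha}}$, not a continuum density of the $x_I$. Once the logarithmic derivative is replaced by the box-adding ratio, your argument coincides with the paper's proof.
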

This structure is analogous to Yangian and quantum affine algebra, where the BAE has a universal form, which does not depend on the representation, while the (highest-weight) representation data appear only in the $a$ and $d$ polynomials.

In addition to the saddle point equation, that gives rise to the BEA, we also study the $q$-character, the semi-classical reduction of the $qq$-character.
For generic $q$-characters, we have the following relation.
\begin{theorem}[Theorem~\ref{thm:q-ch_commute}]
For any $q$-characters obtained in the semi-classical limit, $\mathcal{T}(x),\mathcal{T}'(x)\in\{\hat{\mathscr{Q}}_{1,2,3}(x),\hat{\mathsf{T}}_{12,23,13}(x),\hat{\mathsf{T}}^{K}_{123}(x)\}$, we have
\beq
[\mathcal{T}(x),\mathcal{T}'(x')]=0.
\eeq    
\end{theorem}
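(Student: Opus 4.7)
The plan is to promote the weak commutativity of $qq$-characters established in Theorem~\ref{thm:tetrascreening} to an exact commutativity in the semi-classical limit. Recall that for $i,j \in \four \oplus \six \oplus \four^{\vee}$, the $qq$-characters satisfy $\mathsf{T}_i(x)\mathsf{T}_j(x') = f_{ij}(x,x')\,\mathsf{T}_j(x')\mathsf{T}_i(x)$ modulo trivial zero modes, where the structure function $f_{ij}(x,x')$ comes from the normal ordering of the underlying vertex operators $\mathsf{S}_a$, $\mathsf{X}_A$, $\mathsf{W}_{\bar a}$. Since the $q$-characters $\hat{\mathscr{Q}}$, $\hat{\mathsf{T}}$, $\hat{\mathsf{T}}^K$ are defined as the semi-classical reductions of the corresponding $qq$-characters, the strategy is to show that (a) the structure factor $f_{ij}(x,x')$ becomes trivial in the limit and (b) the cases that were excluded by transversality in Theorem~\ref{thm:tetrascreening} (i.e.\ non-transverse pairs) also yield commuting objects once the quantum deformation is switched off.

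First, I would fix a precise semi-classical limit, sending one of the parameters (say $\mathsf{q}_2 \to 1$ in the $\epsilon$-expansion $\mathsf{q}_a = e^{\epsilon_a}$, keeping $\mathsf{q}_1\mathsf{q}_2\mathsf{q}_3=1$) and identifying the $q$-characters with the leading pieces. Using the free field realization of section~\ref{sec:freefieldintegral}, each vertex operator is an exponential in the modes $\mathsf{a}_n$ whose commutator $[\mathsf{a}_n,\mathsf{a}_{-n}]$ carries a factor $(1-\mathsf{q}_2^n)$. Consequently, any zero-mode factor $f_{ij}(x,x')$, which is extracted from the operator product of two such exponentials, develops a zero in its defining contribution as $\mathsf{q}_2 \to 1$, and the quantum anomaly disappears, so $f_{ij}(x,x') \to 1$. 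This handles the transverse cases.

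For the non-transverse cases, I would argue directly at the level of the classical generating functions. The $q$-characters $\hat{\mathscr{Q}}_a$, $\hat{\mathsf{T}}_A$, $\hat{\mathsf{T}}^K_{\bar a}$ can be written as Laurent series in the classical $\mathsf{Y}$-variables (the semi-classical limits of the $\mathsf{Y}$-operators entering the $qq$-characters), with coefficients in commuting functions of $x$; because all the non-commutativity in the operator $qq$-character came from the exponentiated free-field modes, whose commutators vanish to the relevant order in $\epsilon_2$, the resulting $\hat{\mathscr{Q}}$, $\hat{\mathsf{T}}$, $\hat{\mathsf{T}}^K$ lie in the commutative Poisson algebra generated by the classical $\mathsf{Y}$-variables. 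In particular, they pairwise commute regardless of whether the underlying subspaces are transverse.

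The main obstacle I anticipate is controlling the limit uniformly across the three families: the D6 $qq$-character $\mathsf{T}_{\bar a}(x)$ is defined via fusion/infinite products of lower-dimensional $qq$-characters, and in the truncated MacMahon case $\hat{\mathsf{T}}^K_{123}(x)$ one must verify that the truncation parameter $K$ is preserved by the limit and that the infinite fusion used in section~\ref{sec:fusionD6toD8} survives the $\epsilon_2 \to 0$ specialization (i.e.\ no poles accumulate). Once this analytic control is in place, the commutativity follows by combining the two arguments above and the closure of the $q$-character ring under multiplication.
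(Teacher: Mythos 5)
Your strategy has the right general shape --- reduce commutativity of the $q$-characters to the vanishing of the mode commutators of the underlying free fields --- but you have identified the wrong parameter for the semi-classical limit, and this breaks the argument. The $q$-characters $\hat{\mathscr{Q}}_{1,2,3}$, $\hat{\mathsf{T}}_{12,23,13}$, $\hat{\mathsf{T}}^{K}_{123}$ are obtained in the limit $q_4\to 1$, with the surviving parameters $q_{1,2,3}\to\mathsf{q}_{1,2,3}$ kept \emph{generic} subject to $\mathsf{q}_1\mathsf{q}_2\mathsf{q}_3=1$; they are not obtained by a degeneration $\mathsf{q}_2\to 1$. With your choice, the claim that ``the commutator of the modes carries a factor $(1-\mathsf{q}_2^n)$ and hence vanishes'' fails for exactly the operators that matter: the screening current $\mathsf{S}_2$ has modes $\mathsf{s}_{2,n}=\mathsf{a}_n/\mathbf{P}_2^{[-n]}$, so
\beq
[\mathsf{s}_{2,n},\mathsf{s}_{2,m}]=-\frac{1}{n}\,\frac{(1-q_1^n)(1-q_3^n)(1-q_4^n)}{1-q_2^{-n}}\,\delta_{n+m,0},
\eeq
which \emph{diverges} as $q_2\to 1$ rather than vanishing; the same problem afflicts $\mathsf{X}_{12}$, $\mathsf{X}_{23}$ and $\mathsf{W}_{123}$, whose $q$-Cartan denominators all contain the index $2$.

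The mechanism that actually works is the following uniform observation: every generating operator of the $q$-characters in the statement --- $\mathsf{S}_{1,2,3}$, $\mathsf{X}_{12,13,23}$, $\mathsf{W}_{123}$, together with the root current $\mathsf{A}$ --- has modes of the form $\mathsf{a}_n/\mathbf{P}_{\mathcal{S}}^{[-n]}$ with $\mathcal{S}\subseteq\{1,2,3\}$, i.e.\ the index $4$ never appears in the denominator. Hence any pairwise commutator equals $-\tfrac{1}{n}\delta_{n+m,0}\,\mathbf{P}_{\four}^{[n]}/\bigl(\mathbf{P}_{\mathcal{S}}^{[-n]}\mathbf{P}_{\mathcal{S}'}^{[n]}\bigr)$, in which the factor $(1-q_4^n)$ coming from $\mathbf{P}_{\four}^{[n]}$ survives uncancelled and sends the commutator to zero as $q_4\to 1$. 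This makes all the operators themselves (not only their on-shell values) commute in the limit, so the case split you propose --- transverse pairs via Theorem~\ref{thm:tetrascreening}, non-transverse pairs via a Poisson-algebra argument --- is unnecessary, and the analytic worries about infinite fusion are moot because $\hat{\mathsf{T}}^{K}_{123}$ is built directly from $\mathsf{W}_{123}$ and $\mathsf{A}$. Note also that this explains why the theorem restricts to exactly the $q$-characters supported inside $\mathbb{C}^3_{123}$: for, say, $\mathscr{Q}_4$ or $\mathsf{T}_{14}$ the index $4$ would enter the denominator and the $q_4\to 1$ limit of the commutator would be singular, so the restriction is essential and any correct proof must use it.
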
    
From this point of view, one may obtain a wide class of commuting operators from the $q$-character, that would be identified with the commuting Hamiltonians of the corresponding quantum integrable system.
We also mention generalizations of these arguments to other Calabi--Yau geometries and discuss the corresponding BAE.

\paragraph{Geometric realization of $qq$-characters}
Although we have focused on the algebraic aspects so far, the gauge origami construction also provides geometric insights on $qq$-character and the underlying quantum algebraic structures. For example, for the Fock module of quantum toroidal $\mathfrak{gl}_1$, the $qq$-character is given by the equivariant integral over the Hilbert scheme of points on $\mathbb{C}^2$, which is given as a quiver variety associated with $\widehat{A}_0$ quiver (Prop.~\ref{prop:qq_ch_geom1}). Considering the higher rank framing space, the quiver variety gives rise to the Quot scheme, and one can then obtain the tensor product module. Generalizing this construction, we have the geometric realization of the $qq$-character of the MacMahon module of quantum toroidal $\mathfrak{gl}_1$.
\begin{theorem}[Theorem~\ref{thm:qq_ch_geom2}]
    Let $x \in \mathbb{C}^\times$.
    The $qq$-character of the MacMahon module of quantum toroidal $\mathfrak{gl}_1$ is given as follows,
    \begin{equation}
    \mathsf{T}_{1;x}[\mathbf{Y}] = \sum_{v \ge 0} \mathsf{T}_{1,v;x}[\mathbf{Y}]
    \, , \qquad
    \mathsf{T}_{1,v;x}[\mathbf{Y}] =  \int_{[\mathfrak{M}_{1,v}]^\text{vir}} \operatorname{ch} \wedge^\bullet \mathbf{Y}_{1,v}^\vee \mathbf{Y} \operatorname{td} \left( T \mathfrak{M}_{1,v} \right)
    \, ,
    \end{equation}
    where the integral is equivariantly taken over the virtual fundamental cycle of the moduli space of $v$ D0 branes on a single D6 brane, isomorphic to the Hilbert scheme of points on $\mathbb{C}^3$, $\mathfrak{M}_{1,v} \cong \operatorname{Hilb}^v(\mathbb{C}^3)$.
    We denote the observable bundle and the formal bundle over $\mathfrak{M}_{1,v}$ by $\mathbf{Y}_{1,v}$ and $\mathbf{Y}$.    
\end{theorem}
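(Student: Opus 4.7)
The plan is to reduce the geometric statement to the combinatorial plane-partition expansion of the D6 $qq$-character $\mathsf{T}_{\bar{a}}(x)$ derived in section~\ref{sec:qqpartitionfunct} by equivariant virtual localization on $\operatorname{Hilb}^{v}(\mathbb{C}^{3})$. Both sides of the claimed equality are graded by $v$, so it suffices, for each $v \geq 0$, to identify the integral $\mathsf{T}_{1,v;x}[\mathbf{Y}]$ with the degree-$v$ contribution to the plane-partition expansion of $\mathsf{T}_{1;x}[\mathbf{Y}]$ attached to the MacMahon module of quantum toroidal $\mathfrak{gl}_{1}$. Conceptually, this is a direct generalization of Proposition~\ref{prop:qq_ch_geom1}, which treats the smooth case $\operatorname{Hilb}^{v}(\mathbb{C}^{2})$ and yields the Fock-module $qq$-character; one should expect the argument to go through verbatim except for the new virtual structure.

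First, I would use the natural $T = (\mathbb{C}^{\times})^{3}$ action on $\mathbb{C}^{3}$ to induce an equivariant structure on $\operatorname{Hilb}^{v}(\mathbb{C}^{3})$, whose fixed locus consists of isolated reduced points in bijection with plane partitions $\pi$ of size $v$. At each such $\pi$ the tautological/observable bundle $\mathbf{Y}_{1,v}$ restricts to the character $x \sum_{s \in \pi} \chi(s)$ built out of the box weights of $\pi$, while the virtual tangent space $T^{\text{vir}}_{\pi}\mathfrak{M}_{1,v}$ has the well-known symmetric three-term character of Maulik--Nekrasov--Okounkov--Pandharipande, arising from the Behrend--Fantechi symmetric obstruction theory. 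The K-theoretic virtual localization theorem in its Nekrasov--Okounkov symmetrized form then expresses $\mathsf{T}_{1,v;x}[\mathbf{Y}]$ as a sum over $\pi$ of ratios of equivariant characters attached to these fixed points.

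Next, I would apply the classical Grothendieck--Riemann--Roch identity $\operatorname{ch}(\wedge^{\bullet} V^{\vee}) \operatorname{td}(V) = c_{\text{top}}(V)$ fibrewise: the integrand $\operatorname{ch}(\wedge^{\bullet} \mathbf{Y}_{1,v}^{\vee} \mathbf{Y}) \operatorname{td}(T\mathfrak{M}_{1,v})$ collapses at each $\pi$ to a monomial in the box weights $\chi(s)$ and in the variables carried by the formal bundle $\mathbf{Y}$. By direct inspection, this monomial coincides with the one attached to $\pi$ in the D6 $qq$-character formula of section~\ref{sec:qqpartitionfunct} once $\mathbf{Y}$ is identified with the $\mathsf{W}_{\bar a}$-content prescribed by the MacMahon highest-weight normalization of section~\ref{sec:QTgl1}. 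Summing over $v$ then reassembles $\mathsf{T}_{1;x}[\mathbf{Y}]$.

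The main obstacle is the choice of orientation on the virtual class: $\operatorname{Hilb}^{v}(\mathbb{C}^{3})$ is generically singular and its symmetric obstruction theory admits a sign ambiguity in the square-root Euler class / symmetrized virtual structure sheaf. I would fix this ambiguity by invoking compatibility with the infinite fusion relation of Theorem~\ref{thm:D4toD6fusion}, which writes the D6 $qq$-character as a limit of compositions of D4 $qq$-characters whose geometry on $\operatorname{Hilb}(\mathbb{C}^{2})$ is unambiguous by Proposition~\ref{prop:qq_ch_geom1}: transporting the Nekrasov--Okounkov sign prescription along this fusion pins down the orientation on $\operatorname{Hilb}^{v}(\mathbb{C}^{3})$ and matches the MacMahon-module signs of the $qq$-character.
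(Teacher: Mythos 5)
Your proposal is correct and follows essentially the same route as the paper's proof: equivariant localization at the plane-partition fixed points of $\operatorname{Hilb}^v(\mathbb{C}^3)$, identifying the fixed-point weight (with $\mathbf{Y}$ trivialized) with the $\U(1)$ tetrahedron instanton coefficient $\widetilde{\mathcal{Z}}^{\D6}_{\bar{4}}[\pi]$ of Def.~\ref{def:D6_qq-ch}, and the insertion $\operatorname{ch}\wedge^\bullet\mathbf{Y}_{1,v}^\vee\mathbf{Y}$ with the monomial term $\Lambda_{\bar{4},\pi}(x)$ via Lem.~\ref{lemma:D6_iWeyl_ref}. The only divergence is your sign-fixing detour through the fusion Theorem~\ref{thm:D4toD6fusion}, which is not needed here: the D0--D6 moduli problem has virtual dimension zero and carries no sign ambiguity (see the remark following the tetrahedron instanton partition function, citing Fasola et al., and the algebraic argument in section~\ref{sec:D6qqchdef}), in contrast to the magnificent four case.
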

The $qq$-character of the tensor product of the MacMahon module can be similarly given by the equivariant integral over the moduli space of the higher rank D6-brane system, which is isomorphic to the Quot scheme, $\mathfrak{M}_{w,v} \cong \operatorname{Quot}^v_{\mathbb{C}^3}(\mathcal{O}^{\oplus w})$.
Physically, this integral computes the codimension six defect partition function of the tetrahedron instanton system.
This formalism should be straightforwardly extended to generic representations of quantum toroidal $\mathfrak{gl}_1$ by replacing the moduli space with the corresponding one.
    

\subsection*{Organization of this paper}

The paper is organized as follows. In section~\ref{sec:multi-dim-part}, we introduce the notations related to 1d, 2d, 3d, and 4d partitions, highlighting their relationships with lower dimensional partitions and defining $q$-coordinates for boxes of the partitions. In section~\ref{sec:gaugeorigamipartitionfunction}, we delve into explicit formulas and properties of the instanton partition functions of the gauge origami system. This section covers various aspects, including the physical setup, contour integral formulas, partition functions, their decompositions, and the concept of $qq$-characters. Free field realizations of the contour integral formulas for each system are given in section~\ref{sec:freefieldintegral}. Vertex operators associated with the D0, D2, D4, D6, D8-branes are introduced. We also discuss the quiver structure and generalizations to toric Calabi--Yau four-folds. Sections~\ref{sec:D2qqcharacter}, \ref{sec:D4qqcharacter}, \ref{sec:D6qqcharacter}, and \ref{sec:D8qq} are dedicated to the study of D2, D4, D6, and D8-brane $qq$-characters, examining their relationships with various gauge origami systems. We discuss the fusion process, generalizations, quadratic relations, and each system's BPS/CFT correspondence. In section~\ref{sec:toroidal_alg}, we delve into quantum toroidal algebras and discuss the relation with BPS $qq$-characters. Section~\ref{sec:Betheansatz} explores applications of the D-brane $qq$-characters, including their role in semi-classical analysis and the Bethe ansatz equations of the gauge origami system. Geometric realization of the $qq$-characters and elliptic generalizations are discussed in section~\ref{sec:geometryqq} and section~\ref{sec:ellipticWorigami}, respectively. In section~\ref{sec:conclusion}, we finally conclude our findings and engage in discussions about the implications of our work. Additionally, we provide some supplementary formulas used in the main sections of the paper in the Appendix.


\section{Multi-dimensional partitions}\label{sec:multi-dim-part}
Let us summarize the notation we use for multi-dimensional partitions. We follow the convention and the terminology of \cite{Nekrasov:2017cih,Nekrasov:2018xsb,Nekrasov:2023nai}.

\subsection{Partitions}

\paragraph{One-dimensional partition} A one-dimensional partition is a row of boxes stacked only in one direction as 
\begin{equation}\label{eq:1dpartition_positive}
\adjustbox{valign=c}{{\scalebox{1.5}{\yng(5)}}}
\end{equation}
Such a kind of partition is specified by a non-negative integer $k\in\mathbb{Z}_{\geq0}$. 

We can also introduce one-dimensional partitions where the boxes are extended semi-infinitely to the left of a border as 
\begin{equation}\label{eq:1dpartition_integer}
    \begin{tikzpicture}
        \draw[->] (-2,0)--(4,0);
        \draw[thick]   (-0.5,-0.5)--(-0.5,1);
        \draw (-2,0.7)--(3,0.7);
        \draw (3,0)--(3,0.7);
        \draw (0.2,0)--(0.2,0.7);
        \draw (0.9,0)--(0.9,0.7);
         \draw (1.6,0)--(1.6,0.7);
          \draw (2.3,0)--(2.3,0.7);
          \draw (-1.2,0)--(-1.2,0.7);
          \node at (-1.6,0.35) {$\cdots$};
    \end{tikzpicture}
\end{equation}
For these kinds of partitions, an integer $k\in\mathbb{Z}$ specifies it. The integer $k$ denotes the number of boxes counted from the border to the right. If no boxes are on the right and the rightest box is left to the border, then $k$ will be negative.

\paragraph{Two-dimensional partition (Young diagram)}
A two-dimensional partition is a non-decreasing sequence of non-negative integers 
\begin{equation}
\begin{split}
    \lambda=(\lambda_{1}\geq \lambda_{2}\geq\ldots\geq\lambda_{\ell(\lambda)}\geq 0),\quad
    |\lambda|=\lambda_{1}+\cdots+\lambda_{\ell(\lambda)}\label{eq:Youngcond}
\end{split}
\end{equation}
where $|\lambda|$ is the size, and $\ell(\lambda)$ is the length of the partition. A partition will be drawn as a collection of square boxes $\Bbox$, and this collection is called the Young diagram.\footnote{In this paper, Young diagrams and 2d partitions are not distinguished. Sometimes, we will also omit the 2d and just say partitions for 2d partitions when it is obvious. Similarly, plane partitions and 3d partitions, solid partitions, and 4d partitions are not distinguished.} For example, a partition $\lambda=(5,3,2,1)$ will look like
\begin{equation}
\adjustbox{valign=c}{{\scalebox{1.5}{\yng(1,2,3,5)}}}
\end{equation}
A box $\Bbox$ positioned in the $i'$th row counted from the bottom 
and $j'$th column counted from the left in the Young diagram is assigned a coordinate $(i,j)$, $1\leq i\leq \ell(\lambda),\,1\leq j\leq \lambda_{i}$. We denote the set of all possible 2d partitions/Young diagrams as $\mathcal{P}$. The transpose is denoted as $\lambda^{\rmT}$ in the usual sense. We call this type of description the $(1,1)$-type description where we followed the terminology in \cite{Nekrasov:2017cih}. This is because we are assigning a one-dimensional partition with only non-negative integers $\lambda_{i}\geq 0$ to each $i=1,\ldots,$. Note that there are two $(1,1)$-type descriptions of the 2d partition, depending on whether we are using $\lambda$ or the transpose $\lambda^{\rmT}$. 
 
\paragraph{Three-dimensional partition (plane partition)}
The plane partition is a stack of cubes that obeys a generalization of the condition \eqref{eq:Youngcond}. We denote the set of all possible plane partitions as $\mathcal{PP}$. There are two descriptions of the plane partition which we call $(2,1)$-type and $(1,2)$-type.
\begin{figure}
    \centering
    \includegraphics[width=8cm]{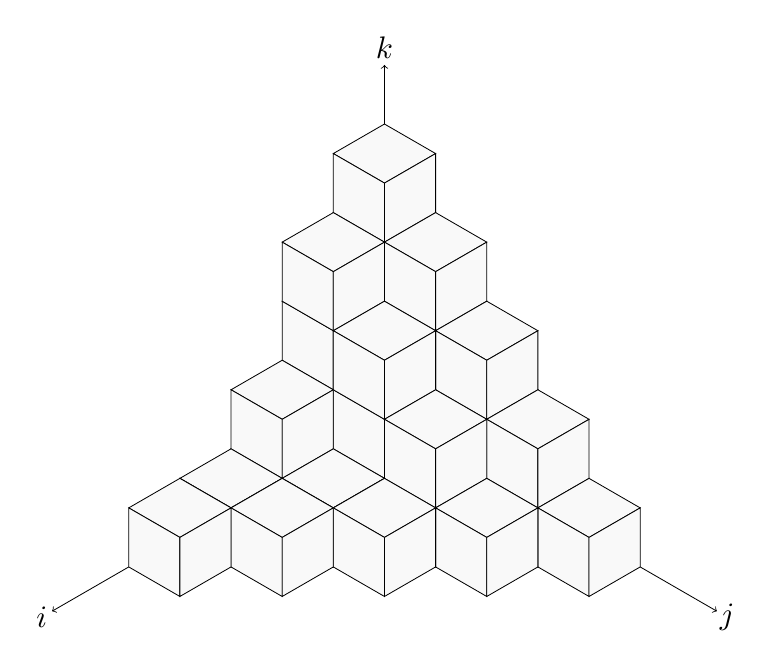}
    \caption{Plane partition}
    \label{fig:pp}
\end{figure}

The $(2,1)$-type description understands the plane partition $\pi$ as a 2d partition $\lambda_{\pi}$ where there is a map mapping each box $\Bbox=(i,j)\in\lambda_{\pi}$ a number $\pi_{i,j}$ obeying the following condition (see Figure \ref{fig:pp} and \ref{fig:pp2112})
\begin{equation}
\pi_{i,j}\geq \pi_{i+1,j},\quad \pi_{i,j}\geq \pi_{i,j+1}.\label{eq:planepartcond}
\end{equation}
The size of the plane partition is defined as the number of cubes included in the configuration 
\begin{equation}
    |\pi|=\sum_{i,j}\pi_{i,j}.
\end{equation}
We have three descriptions depending on which two-dimensional plane we project the plane partition to get the Young diagram. Given three axes $1,2,3$, we use the symbol $\pi$ when the plane partition is projected to a Young diagram in the 12-plane, $\check{\pi}$ when the Young diagram is projected onto the 13-plane, and $\check{\check{\pi}}$ when it is projected onto the 23-plane. This is the 3d analog of the transpose of the Young diagram. Like the Young diagram case, a cube $\cube$ in the plane partition $\pi$ is assigned a coordinate $(i,j,k)\,(i,j,k\geq 1)$ as Figure \ref{fig:pp}. In this description, we have 
\begin{equation}
    (i,j,k)\in\pi\,\,\Leftrightarrow \,\, 1\leq k\leq \pi_{i,j}.
\end{equation}

\begin{figure}
\begin{minipage}{0.45\linewidth}
    \centering
    \includegraphics[width=7cm]{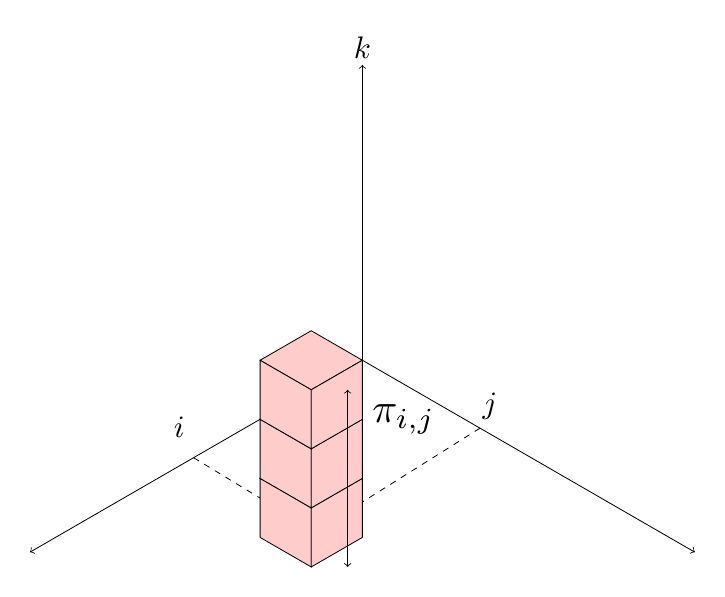}
\end{minipage}
\begin{minipage}
{0.45\linewidth}
    \centering
    \includegraphics[width=7cm]{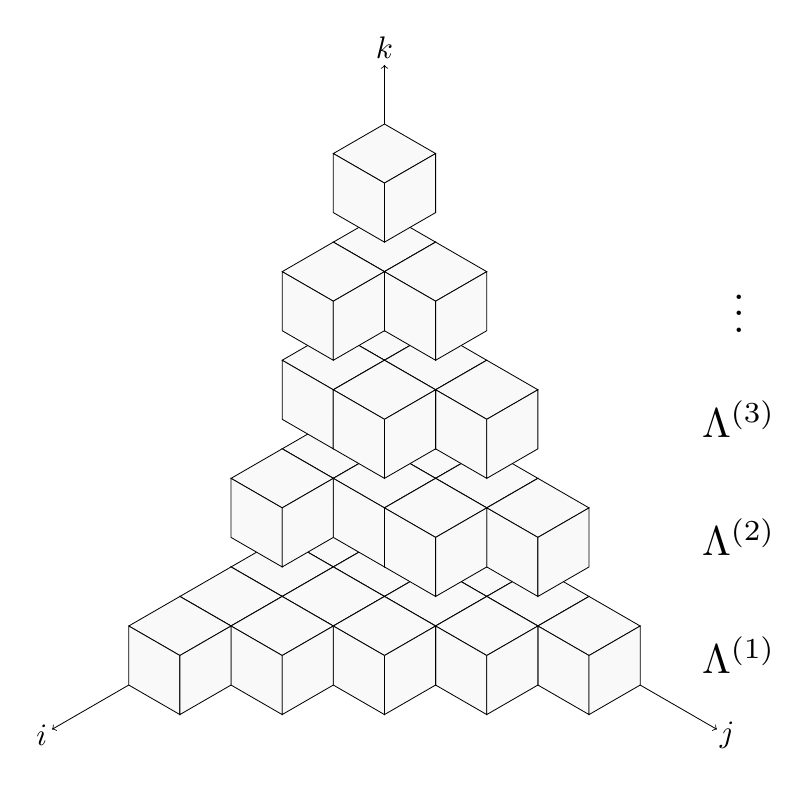}
\end{minipage}
\caption{$(2,1)$-type and $(1,2)$-type description of the plane partition}\label{fig:pp2112}
\end{figure}
The other $(1,2)$-type description understands the plane partition as a non-increasing sequence of Young diagrams (see Figure \ref{fig:pp2112}):
\bea
&\pi=(\Lambda^{(1)},\Lambda^{(2)},\ldots,\Lambda^{(h(\pi))}),\quad \Lambda^{(k)}=(\Lambda_{1}^{(k)},\ldots,\Lambda_{i}^{(k)}\ldots)\\
&\Lambda^{(k)}\succeq\Lambda^{(k+1)},\,\,\forall k
\eea
where $\Lambda^{(k)}\succeq \Lambda^{(k+1)}$ means $\forall (i,j)\in\Lambda^{(k+1)}\Rightarrow (i,j)\in\Lambda^{(k)}$ and $h(\pi)$ is the height of the plane partition which is defined as 
\begin{equation}
    h(\pi)=\min\{k\geq 0\,|\,(1,1,k+1)\not\in\pi\}.
\end{equation}
The size is defined as
\begin{equation}
    |\pi|=\sum_{k}|\Lambda^{(k)}|
\end{equation}
in this description. Similarly, the $\cube=(i,j,k)$ in the plane partition obeys the condition
\begin{equation}
    (i,j,k)\in\pi\,\,\Leftrightarrow \,\, 1\leq j\leq \Lambda_{i}^{(k)}
\end{equation}
Note that we also have three possible descriptions of this type depending on which axis we define the height of the plane partition. 

\paragraph{Four-dimensional partition (solid partition)}
\begin{figure}[ht]
\centering
\scalebox{0.3}{
\begin{tikzpicture}[scale=3]
    \tikzstyle{vertex}=[circle,fill]
	 \node[vertex] (v0) at (0,0) {};
	 \node[vertex] (v1) at (0,1) {};
	 \node[vertex] (v2) at (1,0) {};
	 \node[vertex] (v3) at (1,1) {};
	 \node[vertex] (v4) at (0.23, 0.4) {};
	 \node[vertex] (v5) at (0.23,1.4) {};
	 \node[vertex] (v6) at (1.23,0.4) {};
	 \node[vertex] (v7) at (1.23,1.4) {};
	 \node[vertex] (v8) at (-1,-1) {};
	 \node[vertex] (v9) at (-1,2) {};
	 \node[vertex] (v13) at (-0.66,2.7){ };
	 \node[vertex] (v12) at (-0.66,-0.3){ };
	 \node[vertex] (v10) at (2,-1) {};
	 \node[vertex] (v14) at (2.34,-0.3) {};
	 \node[vertex] (v11) at (2,2) {};
	 \node[vertex] (v15) at (2.34,2.7) {};
  \draw[] (v9) -- (v11) -- (v15) -- (v13) -- (v9);
   \draw[] (v8) -- (v10) -- (v14) ;
    \draw[] (v8) -- (v9);
     \draw[] (v10) -- (v11) ;
      \draw[] (v14) -- (v15);
       \draw[] (v1) -- (v3) -- (v7) -- (v5) -- (v1);
        \draw[] (v0) -- (v2) -- (v6);
        \draw[] (v0) -- (v1);
        \draw[dashed] (v4) -- (v5);
        \draw[dashed] (v4) -- (v6);
        \draw[dashed] (v4) -- (v0);
        \draw[] (v2) -- (v3);
        \draw[] (v6) -- (v7);
\draw[dashed] (v12) -- (v8);
        \draw[dashed] (v12) -- (v14);
        \draw[dashed] (v12) -- (v13);
        \draw[dashed] (v0) -- (v8);
        \draw[dashed] (v2) -- (v10);
        \draw[dashed] (v4) -- (v12);
        \draw[dashed] (v6) -- (v14);
        \draw[dashed] (v1) -- (v9);
        \draw[dashed] (v3) -- (v11);
        \draw[dashed] (v5) -- (v13);
        \draw[dashed] (v7) -- (v15);
        \end{tikzpicture}}
    \caption{4-cube}
    \label{fig:4-cube}
\end{figure}
A solid partition is a four-dimensional analog of the Young diagram and plane partition. It is a stack of 4-cubes (see Figure \ref{fig:4-cube}) obeying similar conditions to \eqref{eq:Youngcond} and \eqref{eq:planepartcond}. We denote the set of all possible solid partitions as $\mathcal{SP}$. We have three ways to describe the solid partition: $(3,1)$, $(2,2)$, and $(1,3)$-types.
\begin{enumerate}
    \item $(3,1)$-type:
    This description is similar to the plane partition's $(2,1)$-type description. We project the solid partition to a plane partition $\pi_{\rho}$ and for each cube $(i,j,k)\in\pi_{\rho}$, a height function $\rho_{i,j,k}$ is defined. The height function obeys the condition
    \begin{equation}
        \rho_{i,j,k}\geq \rho_{i+1,j,k},\quad \rho_{i,j,k}\geq \rho_{i,j+1,k},\quad \rho_{i,j,k}\geq \rho_{i,j,k+1}.
    \end{equation}
    The size is defined as
    \begin{equation}
        |\rho|=\sum_{(i,j,k)\in\pi_{\rho}}\rho_{i,j,k}.
    \end{equation}
    4-cubes $\hcube$ in the solid partition are assigned coordinates in a natural way $(i,j,k,l)$ and obey
    \begin{equation}
    (i,j,k,l)\in \rho \,\,\Leftrightarrow\,\, 1\leq l\leq \rho_{i,j,k}.
    \end{equation}
    Depending on which three axes we project the solid partition to obtain a plane partition, we have four choices for this description. 
    \item
    \begin{figure}
        \centering
        \includegraphics[width=7cm]{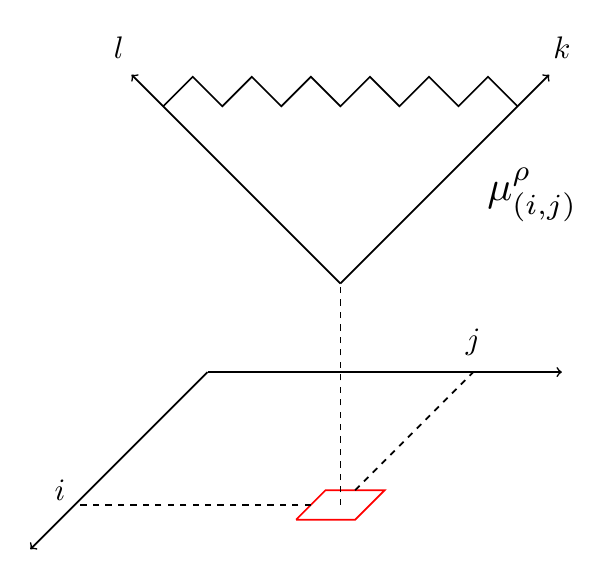}
        \caption{$(2,2)$-type description of solid partition}
        \label{fig:sp22}
    \end{figure}
    $(2,2)$-type: Another way to describe the solid partition is to understand it as a Young diagram $\lambda_{\rho}$ where on each $(i,j)$ there is another Young diagram $\mu^{\rho}_{(i,j)}$ obeying the conditions (see Figure \ref{fig:sp22}): 
    \begin{equation}
        \mu^{\rho}_{(i,j)}\succeq\mu^{\rho}_{(i,j+1)},\quad \mu^{\rho}_{(i,j)}\succeq\mu^{\rho}_{(i+1,j)},
    \end{equation}
    where $\mu^{\rho}_{(i,j)}\succeq\mu^{\rho}_{(i',j')}$ means 
    \begin{equation}
        (k,l)\in \mu^{\rho}_{(i',j')}\,\,\Rightarrow \,\,(k,l)\in\mu^{\rho}_{(i,j)}.
    \end{equation}
    Under this description, the size is defined as
    \begin{equation}
        |\rho|=\sum_{(i,j)\in\lambda_{\rho}}|\mu^{\rho}|
    \end{equation}
    and 
    \begin{equation}
        (i,j,k,l)\in\rho \Leftrightarrow (k,l)\in\mu^{\rho}_{(i,j)}.
    \end{equation}
    We have six possible descriptions of this type depending on the choice of the two axes.
    \item $(1,3)$-type: 
    \begin{figure}
        \centering
        \includegraphics[width=13cm]{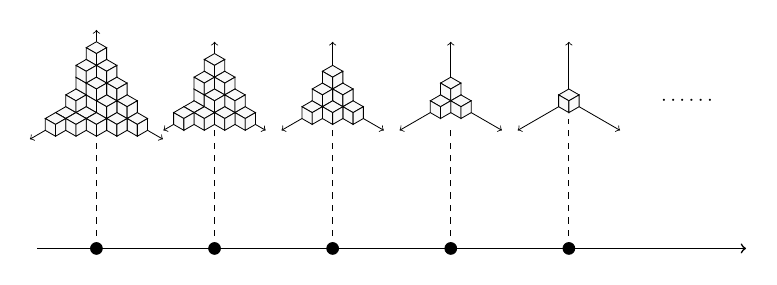}
        \caption{$(1,3)$-type description of the solid partition. The horizontal axis is one of the four axes of the solid partition. The solid partition is decomposed into multiple plane partitions $\Pi^{(1)},\Pi^{(2)},\ldots$.}
        \label{fig:solidplanedecomp}
    \end{figure}
    This description resembles the $(1,2)$-type description of plane partition. The solid partition is understood as non-increasing sequences of plane partitions (see Figure \ref{fig:solidplanedecomp}):
    \begin{equation}
        \rho=(\Pi^{(1)},\Pi^{(2)},\ldots,),\quad \Pi^{(l)}\succeq\Pi^{(l+1)}
    \end{equation}
    where $\Pi^{(l)}\succeq \Pi^{(l+1)}$ means
    \begin{equation}
    (i,j,k)\in \Pi^{(l+1)}\Rightarrow (i,j,k)\in\Pi^{(l)}.
    \end{equation}
    Under this description, the size is defined as
    \begin{equation}
        |\rho|=\sum_{l}|\Pi^{(l)}|
    \end{equation}
    and obviously,
    \begin{equation}
        (i,j,k,l)\in\rho\,\,\Leftrightarrow \,\, (i,j,k)\in\Pi^{(l)}.
    \end{equation}
    We again have four possible descriptions for this type.
\end{enumerate}

\subsection{Coordinates and \texorpdfstring{$q$}{q}-contents}\label{sec:qcoordinate}
For later use, we introduce the $q$-contents ($q$-coordinates) of the multi-dimensional partitions. The boxes (cubes, 4-cubes) of the multi-dimensional partitions are described by at most four components: $(i,j,k,l)$. We introduce four independent parameters\footnote{Later, these parameters are identified as the $\Omega$-background parameters after imposing the Calabi--Yau condition $\sum_{a=1}^{4}\epsilon_{a}=0$.} $\epsilon_{a}\,(a=1,2,3,4)$ and define the additive coordinates of the boxes in the multi-dimensional partitions as 
\begin{equation}
    c(\Bbox)=\begin{dcases}
    \mathfrak{a}+(i-1)\epsilon_{a}\quad(i\in\mathbb{Z}_{\geq 1}\,\,\text{or}\,\,i\in\mathbb{Z})\quad &\text{1d partition}\\
    \mathfrak{a}+(i-1)\epsilon_{a}+(j-1)\epsilon_{b}\quad (i,j\in\mathbb{Z}_{\geq 1})\quad&\text{2d partition}\\
    \mathfrak{a}+(i-1)\epsilon_{a}+(j-1)\epsilon_{b}+(k-1)\epsilon_{c}\quad (i,j,k\in\mathbb{Z}_{\geq} 1)\quad &\text{3d partition}\\
    \mathfrak{a}+(i-1)\epsilon_{a}+(j-1)\epsilon_{b}+(k-1)\epsilon_{c}+(l-1)\epsilon_{d}\quad (i,j,k,l\in\mathbb{Z}_{\geq 1})\quad &\text{4d partition}
    \end{dcases}
\end{equation}
where $\mathfrak{a}$ is the coordinate of the origin of the multi-dimensional partitions and $a,b,c,d$ are all different elements of $\{1,2,3,4\}$. For low dimensions, we have the following figures.
\begin{itemize}
    \item 1d partitions: depending on whether $i\in \mathbb{Z}_{\geq 1}$ or $\mathbb{Z}$ gives the coordinates of the two types in \eqref{eq:1dpartition_positive},\eqref{eq:1dpartition_integer}
    \begin{equation}
        \adjustbox{valign=c}{\begin{tikzpicture}
        \fill[red!20!white] (0.9,0)--(0.9,0.7)--(1.6,0.7)--(1.6,0)--(0.9,0);
        \draw[->] (-2,0)--(4,0);
        \node [right] at (4,0){$\epsilon_{a}$};
        \node[below] at (-0.15,0) {$1$};
        \node [below] at (-0.85,0){$0$};
        \node [below] at (0.55,0){$\cdots$};
        \node [below] at (1.25,0){$i$};
         \draw[thick]   (-0.5,-0.5)--(-0.5,1);
        \draw (-2,0.7)--(3,0.7);
        \draw (3,0)--(3,0.7);
        \draw (0.2,0)--(0.2,0.7);
        \draw (0.9,0)--(0.9,0.7);
         \draw (1.6,0)--(1.6,0.7);
          \draw (2.3,0)--(2.3,0.7);
          \draw (-1.2,0)--(-1.2,0.7);
          \node at (-1.6,0.35) {$\cdots$};
          \draw (1.25,0.35)--++(60:0.8);
          \node at (2.4,1.3) {$\mathfrak{a}+(i-1)\epsilon_{a}$};
          \draw (-0.15,0.35)--(-0.15,0.9);
          \node at (-0.15,1.3){$\mathfrak{a}$};
    \end{tikzpicture}}
    \end{equation}
    \item 2d partitions: 
    \begin{equation}
        \adjustbox{valign=c}{\begin{tikzpicture}
         \fill[red!20!white] (0.9,1.4)--(1.6,1.4)--(1.6,2.1)--(0.9,2.1)--(0.9,1.4);
        \draw[->] (-1,0)--(4,0);
        \node[above] at (-0.5,4){$\epsilon_{b}$};
        \node [right] at (4,0){$\epsilon_{a}$};
        \node[below] at (-0.15,0) {$1$};
        \node [below] at (0.55,0){$\cdots$};
        \node [below] at (1.25,0){$i$};
         \draw[->]   (-0.5,-0.5)--(-0.5,4);
         \draw (0.2,3.5)--(0.2,0.7);
         \draw (0.9,2.8)--(0.9,0.7);
         \draw (1.6,2.1)--(1.6,0.7);
         \draw (2.3,1.4)--(2.3,0.7);
         \draw (2.3,0.7)--(-0.5,0.7);
         \draw (2.3,1.4)--(-0.5,1.4);
         \draw (1.6,2.1)--(-0.5,2.1);
         \draw (0.9,2.8)--(-0.5,2.8);
         \draw (0.2,3.5)--(-0.5,3.5);
        \draw (-0.5,0.7)--(3,0.7);
        \draw (3,0)--(3,0.7);
        \draw (0.2,0)--(0.2,0.7);
        \draw (0.9,0)--(0.9,0.7);
         \draw (1.6,0)--(1.6,0.7);
          \draw (2.3,0)--(2.3,0.7);
          \draw (-0.15,0.35)--++(-0.7,-1);
          \node[left] at (-0.85,-0.65){$\mathfrak{a}$};
          \node [left] at (-0.5,0.35) {$1$};
          \node [left] at (-0.5,1.05){$\vdots$};
          \node [left] at (-0.5,1.75){$j$};
          \draw  (1.25,1.75)--++(0.9,0.9);
          \node[right] at (2.2,2.65) {$\mathfrak{a}+(i-1)\epsilon_{a}+(j-1)\epsilon_{b}$};
        \end{tikzpicture}
        }
    \end{equation}
    \item 3d partitions:
    \begin{equation}
        \adjustbox{valign=c}{\includegraphics[width=11cm]{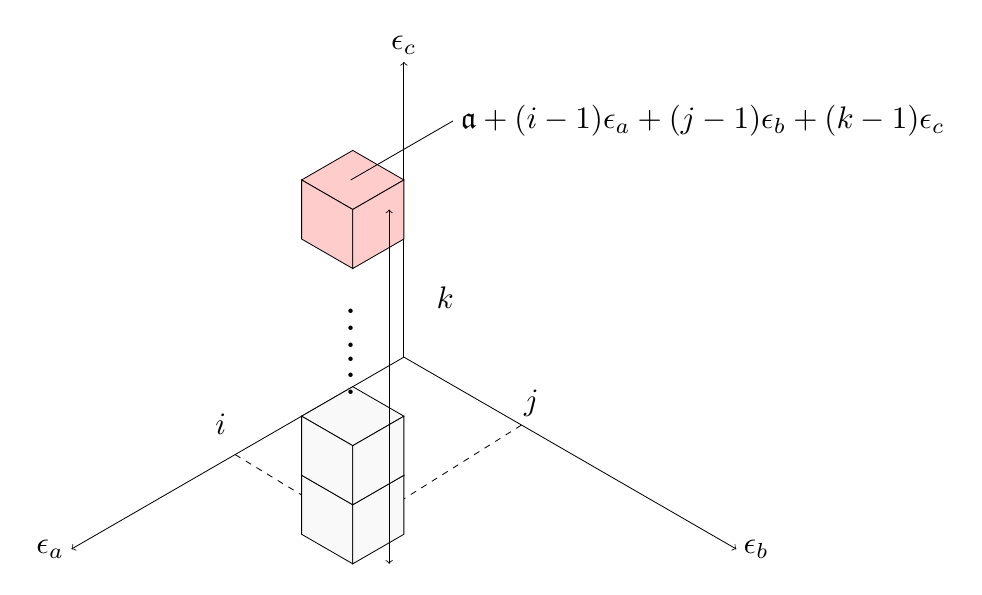}}
    \end{equation}
\end{itemize}

We also define the multiplicative coordinates of the boxes called $q$-contents or $q$-coordinates by taking the exponential:
\begin{equation}
    q(\Bbox)=e^{c(\Abox)}=\begin{dcases}
    uq_{a}^{i-1}\quad(i\in\mathbb{Z}_{\geq 1}\,\,\text{or}\,\,i\in\mathbb{Z})\quad &\text{1d partition}\\
    uq_{a}^{i-1}q_{b}^{j-1}\quad (i,j\in\mathbb{Z}_{\geq 1})\quad&\text{2d partition}\\
    uq_{a}^{i-1}q_{b}^{j-1}q_{c}^{k-1}\quad (i,j,k\in\mathbb{Z}_{\geq} 1)\quad &\text{3d partition}\\
    uq_{1}^{i-1}q_{2}^{j-1}q_{3}^{k-1}q_{4}^{l-1}\quad (i,j,k,l\in\mathbb{Z}_{\geq 1})\quad &\text{4d partition}
    \end{dcases}
\end{equation}
where $q_{a}=e^{\epsilon_{a}}\,(a=1,2,3,4)$ and $u=e^{\mathfrak{a}}$.

\paragraph{Other types of description}
Looking at the $q$-coordinates, we can also see how the higher-dimensional partitions can be decomposed into multiple lower-dimensional partitions.
\begin{itemize}
\item 1d partitions: The 1d partitions can be understood as a collection of the $q$-coordinates as $\{uq_{a}^{i-1}\mid 1\leq i\leq k ,k\in\mathbb{Z}_{\geq 0} \}$ or $\{uq_{a}^{i-1}\mid i\leq k,\,k\in\mathbb{Z}\}$.
    \item 2d partition: The boxes included in the Young diagram are labeled by $(i,j),\,(1\leq j\leq \lambda_{i},\,1\leq i\leq \ell(\lambda))$ with coordinates $\{uq_{a}^{i-1}q_{b}^{j-1}\}$. We can decompose this set as 
    \begin{equation}
        \left\{uq_{a}^{i-1}q_{b}^{j-1}\mid \substack{1\leq j\leq \lambda_{i}\\1\leq i\leq \ell(\lambda)}\right\}=\bigcup_{i=1}^{\ell(\lambda)}\left\{(uq_{a}^{i-1})q_{b}^{j-1}\mid 1\leq j\leq \lambda_{i}\right\}.
    \end{equation}
    This means the 2d partition is a collection of 1d partitions where the origins of the 1d partitions are shifted as $uq_{a}^{i-1}$. This gives the (1,1)-type decomposition of the 2d partitions.
    \item 3d partition: It is understood as a collection of $q$-coordinates $\{uq_{a}^{i-1}q_{b}^{j-1}q_{c}^{k-1}\}$ where $(i,j,k)\in\pi$. The $(2,1)$-type description comes from the decomposition
    \begin{equation}
        \left\{uq_{a}^{i-1}q_{b}^{j-1}q_{c}^{k-1}\mid (i,j,k)\in\pi\right\}=\bigcup_{(i,j)\in\lambda_{\pi}}\left\{(uq_{a}^{i-1}q_{b}^{j-1})\,q_{c}^{k-1}\mid 1\leq k\leq \pi_{ij}\right\},
    \end{equation}
    which means the plane partition is decomposed into 1d partitions with length $\pi_{ij}$ and coordinates of the origins as $uq_{a}^{i-1}q_{b}^{j-1}$. The $(1,2)$-type description comes from
    \begin{align}
        \left\{uq_{a}^{i-1}q_{b}^{j-1}q_{c}^{k-1}\mid (i,j,k)\in\pi\right\}=\bigcup_{k}\left\{(uq_{c}^{k-1})q_{a}^{i-1}q_{b}^{j-1}\mid (i,j)\in\Lambda^{(k)}\right\}.
    \end{align}
    The plane partition is then decomposed into multiple 2d partitions $\Lambda^{(k)}$ whose coordinates of the origins are $uq_{c}^{k-1}$.
    \item 4d partition: The solid partition is a collection of $q$-coordinates $\{uq_{a}^{i-1}q_{b}^{j-1}q_{c}^{k-1}q_{d}^{l-1}\}$ where $(i,j,k,l)\in\rho$. 
    \begin{enumerate}
        \item $(3,1)$-type: It is decomposed into multiple 1d partitions with length $\rho_{i,j,k}$ and origins at $uq_{a}^{i-1}q_{b}^{j-1}q_{c}^{k-1}$.
        \begin{equation}
            \left\{uq_{a}^{i-1}q_{b}^{j-1}q_{c}^{k-1}q_{d}^{l-1}\mid (i,j,k,l)\in\rho\right\}=\bigcup_{(i,j,k)\in\pi_{\rho}}\left\{(uq_{a}^{i-1}q_{b}^{j-1}q_{c}^{k-1})q_{d}^{l-1}\mid 1\leq l\leq\rho_{i,j,k}\right\}
        \end{equation}
        \item $(2,2)$-type: It is decomposed into multiple 2d partitions $\mu^{\rho}_{(i,j)}$ with origins with origins at $uq_{a}^{i-1}q_{b}^{j-1}$. 
        \begin{equation}
            \left\{uq_{a}^{i-1}q_{b}^{j-1}q_{c}^{k-1}q_{d}^{l-1}\mid (i,j,k,l)\in\rho\right\}=\bigcup_{(i,j)\in\lambda_{\rho}}\left\{(uq_{a}^{i-1}q_{b}^{j-1})q_{c}^{k-1}q_{d}^{l-1}\mid (k,l)\in\mu^{\rho}_{(i,j)}\right\}
        \end{equation}
        \item $(1,3)$-type: It is decomposed into multiple 3d partitions $\Pi^{(l)}$ with origins at $uq_{d}^{l-1}$.
        \begin{equation}
            \left\{uq_{a}^{i-1}q_{b}^{j-1}q_{c}^{k-1}q_{d}^{l-1}\mid (i,j,k,l)\in\rho\right\}=\bigcup_{l}\left\{(uq_{d}^{l-1})q_{a}^{i-1}q_{b}^{j-1}q_{c}^{k-1}\mid (i,j,k)\in\Pi^{(l)}\right\}
        \end{equation}
    \end{enumerate}
\end{itemize}


\section{Instanton partition functions of the gauge origami system}\label{sec:gaugeorigamipartitionfunction}
In this section, we review the physical setup (section~\ref{sec:physicalsetup}), the equivariant index formalism (section~\ref{sec:equiv-index}), contour integrals and explicit instanton partition functions (section~\ref{sec:M4partitionfunction}, \ref{sec:Tetrahedron_inst}, \ref{sec:spiked_partitionfunct}, \ref{sec:cplvortex_partitionfunct}) of the gauge origami system introduced first by Nekrasov in the context of BPS/CFT correspondence \cite{Nekrasov:2015wsu,Nekrasov:2016ydq,Nekrasov:2016qym,Nekrasov:2017cih,Nekrasov:2018xsb,Nekrasov:2017gzb,Nekrasov:2017rqy}. We are, in particular, interested in the following three setups: spiked instantons \cite{Nekrasov:2016ydq,Nekrasov:2016qym}, tetrahedron instantons \cite{Pomoni:2021hkn,Pomoni:2023nlf,Fasola:2023ypx}, and the magnificent four \cite{Nekrasov:2017cih,Nekrasov:2018xsb,Nekrasov:2023nai}. We additionally introduce a system that we call a \emph{coupled vortex system} whose characteristics are yet to be studied.

We also study the decompositions of the partition functions and show that partition functions of higher dimensional theory are obtained by infinite products of the partition functions of lower dimensional theories in section~\ref{sec:decomp_partition}. The decomposition depends on how the multi-dimensional partitions are described as discussed in detail in section~\ref{sec:multi-dim-part}. We finally introduce the $qq$-characters in terms of partition functions in section~\ref{sec:qqpartitionfunct}. 

\subsection{Physical setup}\label{sec:physicalsetup}

In this section, we review the gauge origami construction which is a generalization of the ADHM construction of instantons of supersymmetric gauge theories. Roughly speaking, the gauge origami system is a \emph{generalized gauge theory} whose space-time $\mathcal{S}$ contains several intersecting components as
\bea
\mathcal{S}=\bigcup_{i}\mathcal{S}_{i}.
\eea
We have a gauge group $G_{i}$ for each component $\mathcal{S}_{i}$ and matter fields in the intersection $\mathcal{S}_{i}\cap \mathcal{S}_{j}$ transforming under the gauge group $G_{i}\times G_{j}$ and thus are bifundamental multiplets. In this sense, we can understand such theories as generalized quiver gauge theories.

The gauge origami system is described as a low-energy-limit of the world volume theory on multiple intersecting D-branes in type II string theory. Let $Z\times \mathbb{R}^{2}$ be a ten-dimensional space-time where $Z=\mathbb{C}^{4}$. Generally, $Z$ can be a toric Calabi--Yau four-fold. We denote the four complex coordinates of $\mathbb{C}^{4}$ as $\{z_{a}\}_{a\in\four}$ where $\four=\{1,2,3,4\}$ (see section \ref{sec:equiv-index} for the notations). We have four types of subspaces: $\mathbb{C}$, $\mathbb{C}^{2}$, $\mathbb{C}^{3}$, $\mathbb{C}^{4}$. There are four possible $\mathbb{C}$ and $\mathbb{C}^{3}$ subspaces:
\bea
\mathbb{C}_{a},\qquad \mathbb{C}^{3}_{\bar{a}}\quad (a\in\four),
\eea
where $\bar{a}$ is the complement of $a\in\four$: $\bar{a}\in\{123,124,134,234\}$ (see section \ref{sec:equiv-index} for details of the notations). For $\mathbb{C}^{2}$, there are six possible subspaces:
\bea
\mathbb{C}^{2}_{A}\subset \mathbb{C}^{4},\quad A\in \six=\{12,13,14,23,24,34\}.
\eea

Depending on which type of subspaces the D-branes wrap different gauge origami systems appear. In this paper, we are interested in the following four setups: spiked instantons, tetrahedron instantons, magnificent four, and the coupled vortex system. The spiked instanton setup is a gauge origami system whose gauge theory is described by multiple D-branes wrapping the six possible subspaces $\mathbb{C}^{2}_{A}\,(A\in\six)$. On the other hand, the tetrahedron instanton setup is a setup appearing on D-branes wrapping the four possible subspaces $\mathbb{C}^{3}_{\bar{a}}\,(a\in\four)$. We can also consider a system where D-branes wrap the entire $\mathbb{C}^{4}$ and this is the magnificent four system. These three setups were previously studied, but for future use, we introduce a system we call the coupled vortex system. We expect such theories to be described by D-branes wrapping the four possible subspaces $\mathbb{C}_{a}\,(a\in \four)$ but a detailed analysis of this system is postponed for future work.

Let us review each system in more detail.
\subsection*{Spiked instanton}
The spiked instanton system comes from intersecting D-brane configurations\footnote{Note that this is a system obtained by taking T-duality of the setup where D1-branes probe $\D(2p+1)$-branes as explained in section~\ref{sec:introduction}.} \cite{Nekrasov:2015wsu,Nekrasov:2016qym,Nekrasov:2016ydq,Nekrasov:2016gud}:
\bea\renewcommand{\arraystretch}{1.2}\label{eq:spikedhierarchy}
    \begin{tabular}{c|c|c}
      Type IIB $\D(-1)$-D3-$\overline{\D3}$   & $\mathbb{C}^{4}\times \mathbb{R}^{2}$& rational\\
    \hline Type IIA $\D0$-$\D4$-$\overline{\D4}$ & $\mathbb{C}^{4}\times \mathbb{R}\times \mathbb{S}^{1}$& trigonometric\\
    \hline Type IIB $\D1$-$\D5$-$\overline{\D5}$ & $\mathbb{C}^{4}\times \mathbb{T}^{2}$& elliptic
    \end{tabular}
\eea
Depending on the total space-time $\mathbb{C}^{4}\times \mathcal{C}$ ($\mathcal{C}=\mathbb{R}^{2}$, $\mathbb{R}\times \mathbb{S}^{1}$, $\mathbb{T}^{2}$), the generalized gauge theory will be a combination of 4d ($\mathcal{C}=\mathbb{R}^{2}$), 5d ($\mathcal{C}=\mathbb{R}\times\mathbb{S}^{1}$), 6d ($\mathcal{C}=\mathbb{T}^{2}$) theories and the arising instanton partition function will be rational, trigonometric, and elliptic.

We focus on the type IIA setup as in Table \ref{t:spikedinstanton}. To do the supersymmetric localization, the $\mathbb{C}^{4}$ part of the system is placed under an $\Omega$-background. We label each stack of $n_{A}$ D4 ($\overline{\D4}$)-branes by $\D4_{A}$ ($\overline{\D4}_{A}$) where the subindex indicates which $\mathbb{C}^{2}_{A}$ subspace they wrap. For each stack of $\D4$ ($\overline{\D4}$)-branes wrapping $\mathbb{C}^{2}_{A}\times \mathbb{S}^{1}$, we have a 5d $\mathcal{N}=1^{\ast}$ $\U(n_{A})$ gauge theory. On the junctions of $\D4_{A}$ and $\D4_{B}$ defined on $\mathbb{C}_{A}\cap \mathbb{C}_{B}\,(A\neq B\in \six)$, the open strings connecting them give bifundamental contributions of $\U(n_{A})\times \U(n_{B})$. Note that for the intersecting brane configuration to preserve two supercharges, we need to include two stacks of anti D-branes in the system \cite{Nekrasov:2016qym,Nekrasov:2016gud}. In later sections, we will not distinguish the D4-branes and anti D4-branes.

The D0-branes in the system play the roles of instantons. There are three types of instantons in the spiked instanton system. The first type is the instantons in the $\hat{A}_{0}$ quiver gauge theory coming from each stack of $\D4_{A}\,(A\in \six)$-branes. The instanton contribution coming from the bifundamental multiples connecting 5d theories such as $\D4_{ab}$ and $\D4_{ac}$ are called folded instantons. The last type is the crossed instantons coming from two orthogonal stacks of D-branes such as $\D4_{12}$ and $\D4_{34}$. Namely, we have 
\bea
\text{spiked instantons}=\begin{dcases}
\text{instantons of $\hat{A}_{0}$ quiver theory},\\
\text{folded instantons},\\
\text{crossed instantons}.
\end{dcases}
\eea

\begin{table}[t]
\centering
\begin{tabular}{|c|c|c|c|c|c|c|c|c|c|c|}
\hline
& \multicolumn{2}{c|}{$\mathbb{C}_{1}$} & \multicolumn{2}{c|}{$\mathbb{C}_{2}$} & \multicolumn{2}{c|}{$\mathbb{C}_{3}$} & \multicolumn{2}{c|}{$\mathbb{C}_{4}$} & \multicolumn{2}{c|}{$\mathbb{R}\times \mathbb{S}^{1}$} \\
\cline{2-11}  & 1 & 2 & 3 & 4& 5 & 6 & 7 & 8 & 9& 0\\
\hline $\D0$& $\bullet$ & $\bullet$  & $\bullet$  & $\bullet$  & $\bullet$  & $\bullet$   & $\bullet$  & $\bullet$  & $\bullet$   & $-$\\
\hline
$\D4_{12} $& $-$ & $-$ & $-$ & $-$ & $\bullet$ & $\bullet$ & $\bullet$ & $\bullet$ & $\bullet$ & $-$ \\
\hline
\raisebox{-0.6mm}{$\overline{\D4}_{13}$} & $-$ & $-$& $\bullet$ & $\bullet$  & $-$ & $-$ & $\bullet$ & $\bullet$ & $\bullet$ & $-$ \\
\hline
$\D4_{14} $& $-$ & $-$  & $\bullet$ & $\bullet$ & $\bullet$ & $\bullet$& $-$ & $-$ & $\bullet$ & $-$ \\
\hline $\D4_{23}$ & $\bullet$ & $\bullet$ & $-$ & $-$ & $-$ & $-$ & $\bullet$ & $\bullet$ & $\bullet$ & $-$ \\
\hline
    \raisebox{-0.6mm}{$\overline{\D4}_{24}$} & $\bullet$ & $\bullet$ & $-$ & $-$ & $\bullet$ & $\bullet$ & $-$ & $-$ & $\bullet$ & $-$ \\
\hline
$\D4_{34} $& $\bullet$ & $\bullet$ & $\bullet$ & $\bullet$& $-$ & $-$ & $-$ & $-$  & $\bullet$ & $-$ \\
\hline
\end{tabular}
\caption{Brane configuration of gauge origami of spiked instanton. Point-like directions of the D-branes are denoted $\bullet$, while the directions where the D-branes are extending are denoted $-$.}
\label{t:spikedinstanton}
\end{table}

Let us briefly review the generalized ADHM construction of the spiked instantons. For each gauge group we associate a vector space $\bfN_{A}=\mathbb{C}^{n_{A}}\,(A\in \six)$. Another vector space $\bfK=\mathbb{C}^{k}$ is associated to the $k$ $\D0$-branes corresponding to the instantons. Similar to the ADHM construction, there are maps acting on $\{\bfN_{A}\}_{A\in \six}$ and $\bfK$:
\bea
I_{A}:\bfN_{A}\rightarrow \bfK,\quad J_{A}: \bfK\rightarrow \bfN_{A},\quad B_{a}:\bfK\rightarrow \bfK
\eea
where $A\in\six,\,\,a\in\four$. From the world-volume theory of the instantons, the maps $I_{A},J_{A}$ are understood as open strings connecting the $\D4_{A}$ and $\D0$-branes. The maps $B_{a}$ correspond to the four transverse directions of the $\D0$ theory which are the four complex coordinates of $\mathbb{C}^{4}$. 

We then introduce the following moment maps:
\bea\label{eq:spikedmomentdef}
&\mu_{A}=[B_{a},B_{b}]+I_{A}J_{A},\quad A=ab,\,\,(a<b),\\
&s_{A}=\mu_{A}+\epsilon_{A\bar{A}}\mu_{\bar{A}}^{\dagger},\quad A\in\six,\\
&\mu_{\mathbb{R}}=\sum_{a\in\four}[B_{a},B_{a}^{\dagger}]+\sum_{A\in\six}(I_{A}I_{A}^{\dagger}-J_{A}^{\dagger}J_{A})
\eea
where $\epsilon_{A\bar{A}}=\epsilon_{abcd}\,\,(A=ab,\,\bar{A}=cd)$ for $a<b,\,c<d$ is the total antisymmetric tensor with $\epsilon_{1234}=1$. We also have the property $s_{A}^{\dagger}=\epsilon_{A\bar{A}}s_{\bar{A}}$ which implies $s_{A}$ is a real map giving six real conditions. Note also that $\mu_{A},s_{A},\mu_{\mathbb{R}}\in\operatorname{Hom}\,(\bfK,\bfK)$. The moment map equations are then given as 
\bea\label{eq:spikedKK}
\{s_{A}=0\}/\U(k),\quad (A\in\six),\qquad 
\{\mu_{\mathbb{R}}=\zeta\cdot1_{k}\}/\U(k),\quad (\zeta>0)
\eea
where we turned on the FI parameter. We additionally have the following condition
\bea\label{eq:spikedNK}
\{\sigma_{aA}=0\}/\U(k),\quad \sigma_{aA}=B_{a}I_{A}+\epsilon_{abA}B_{b}^{\dagger}J_{A}^{\dagger}
\eea
where $a,b\in\bar{A},\,(a\neq b)$ and $\sigma_{aA}\in\operatorname{Hom}\,(\bfN_{A},\bfK)$. Note that these conditions do not appear in the original ADHM construction. They are conditions appearing only when we consider D0-branes and intersecting $\D4$-branes. We also impose the condition 
\beq\label{eq:spikedNN}
\{\Upsilon_{A}=0\}/\U(k),\quad \Upsilon_{A}=J_{\bar{A}}I_{A}-I^{\dagger}_{\bar{A}}J_{A}^{\dagger}:\bfN_{A}\rightarrow \bfN_{\bar{A}}
\eeq
obeying $\Upsilon_{A}=-\Upsilon_{\bar{A}}$. This condition comes from the open strings connecting the $\D4_{A}$ and $\D4_{\bar{A}}$-branes. The instanton moduli space is then defined as\footnote{In order to apply the equivariant localization formalism, the real moment map equation would be replaced with the stability condition, and the quotient is then accordingly replaced with the geometric invariant theory (GIT) quotient with the automorphism group $\mathrm{GL}(\mathbf{K})$, that is a complexification of the unitary group $\U(k)$.} 
\bea\label{eq:spikedinstantonmoduli}
\mathfrak{M}_{\vec{n},k}=\left.\{(\vec{B},\vec{I},\vec{J})\mid s_{A}=0,\,\,\mu_{\mathbb{R}}=\zeta\cdot 1_{k},\,\, \sigma_{aA}=0,\,\,\Upsilon_{A}=0\}\right/\U(k).
\eea
Using the following identity \cite[eq.~(54)]{Nekrasov:2016qym}:
\bea
&\sum_{A\in\six}\text{Tr}\,s_{A}s_{A}^{\dagger}+\sum_{A\in\six,\,a\in\bar{A}}\Tr\sigma_{aA}\sigma_{aA}^{\dagger}+\sum_{A\in\six}\Tr\Upsilon_{A}\Upsilon_{A}^{\dagger} \\
&\quad =2\sum_{A\in\six}(||\mu_{A}||^{2}+||J_{\bar{A}}I_{A}||^{2})+\sum_{A\in\six,\,a\in\bar{A}}(||B_{a}I_{A}||^{2}+||J_{A}B_{a}||^{2})
\eea
the conditions \eqref{eq:spikedKK}, \eqref{eq:spikedNK}, \eqref{eq:spikedNN}, we have
\bea\label{eq:spikedJEterm}
s_{A}=0&\longrightarrow \mu_{A}=0\,\,(A\in\six),\\
\sigma_{aA}=0 &\longrightarrow  B_{a}I_{A}=0,\quad J_{A}B_{a}=0\,\,(A\in\six,\,a\in\bar{A}),\\
\Upsilon_{A}=0&\longrightarrow J_{\bar{A}}I_{A}=0\,\,(A\in\six).
\eea
From the world-volume theory of the instantons viewpoint, the condition $\mu_{\mathbb{R}}=0$ is equivalent to the D-term condition and the conditions are equivalent to the $J$-term and E-term conditions. 

When there is only one stack of D4-branes, say $\D4_{12}$-branes, we have $\bfN_{A}=0\,(A\neq 12)$ and we can simply set $I_{A}=J_{A}=0\,(A\neq 12)$. From \eqref{eq:spikedJEterm}, we have $B_{3},B_{4}=0$ on the solutions, and then the moduli space indeed reduces to the standard one \cite{Nekrasov:2016qym}:
\bea
\left.\left\{(B_{12},I_{12},J_{12})\left|\begin{array}{c}
     [B_{1},B_{2}]+I_{12}J_{12}=0, \\
     \sum\limits_{a=1,2}[B_{a},B_{a}^{\dagger}]+(I_{12}I_{12}^{\dagger}-J_{12}^{\dagger}J_{12})=\zeta\cdot 1_{k}
\end{array}\right.\right\}\right/\U(k)
\eea

Let us discuss the symmetries of the ADHM variables $(\vec{B},\vec{I},\vec{J})$. We first have the $\U(k)$ symmetry coming from the gauge symmetries of the $\D0$-branes:
\beq\label{eq:instantongaugesymmetry}
(B_{a},I_{A},J_{A})_{a\in\four,A\in\six}\longmapsto (g^{-1}B_{a}g,g^{-1}I_{A},J_{A}g),\quad g\in\U(k). 
\eeq
We also have the $\prod_{A\in\six}\U(n_{A})$ symmetry acting as
\beq\label{eq:D4flavorsymmetry}
(B_{a},I_{A},J_{A})\longmapsto (B_{a},I_{A}h_{A},h_{A}^{-1}J_{A}),\quad \underline{h}=(h_{A})_{A\in\six}\in \prod_{A\in\six}\U(n_{A}).
\eeq
This symmetry is called the framing rotation in \cite{Nekrasov:2016qym} because they are flavor symmetries from the world-volume theory of the $\D0$-branes viewpoint. We finally have a symmetry corresponding to the spatial rotations of $\mathbb{C}^{4}$:
\beq\label{eq:D4ADHMvariablerotation}
(B_{a},I_{A},J_{A})\longmapsto (q_{a}B_{a}, I_{A},q_{A}J_{A})
\eeq
where $q_{a}\,(a=1,2,3,4)$ and $q_{A}=q_{a}q_{b}$. Obviously, this symmetry preserves the conditions coming from $\mu_{\mathbb{R}}$. For the conditions coming from $s_{A}, \sigma_{aA},\Upsilon_{A}$ to have this symmetry we need
\beq\label{eq:CY4cond}
q_{1}q_{2}q_{3}q_{4}=1.
\eeq
For example, we have $\mu_{A}\rightarrow q_{A}\mu_{A}$ and for $A=12$, $s_{12}\rightarrow q_{12}\mu_{12}+q_{34}^{-1}\mu_{34}^{\dagger}$. Thus, with the condition \eqref{eq:CY4cond}, we have $s_{12}\rightarrow q_{12}s_{12}$. Other conditions can be checked similarly and the symmetry transformation of the generalized ADHM constraints are
\bea\label{eq:D4ADHMconstraintsymmetry}
s_{A}&\longmapsto q_{A}g^{-1}s_{A}g,\quad \sigma_{aA}\longmapsto q_{a}g^{-1}\sigma_{aA}h_{A},\quad \Upsilon_{A}\longmapsto q_{A}^{-1}h_{\bar{A}}^{-1}\Upsilon_{A}h_{A},
\eea
where $g\in\U(k),\,h_{A}\in\U(n_{A})$.
The reason why it is called the spatial rotation is because $\{B_{a}\}_{a\in\four}$ correspond to the four complex coordinates of $\mathbb{C}^{4}$ and $\{q_{a}\}_{a\in\four}$ act as the maximal torus $\U(1)^{3}$ of the group $\SU(4) \subset \mathrm{Spin}(8)$ rotating the $\mathbb{C}^{4}$.

\subsubsection*{Tetrahedron instanton}
The spiked instanton system is a system where D-branes wrapping $\mathbb{C}^{2}\subset \mathbb{C}^{4}$ appeared. Recently, the gauge origami system was generalized to stacks of D-branes wrapping $\mathbb{C}^{3}\subset\mathbb{C}^{4}$ and is called the tetrahedron instanton system \cite{Pomoni:2021hkn,Pomoni:2023nlf,Fasola:2023ypx}. The intersecting D-brane configuration is given by
\bea\label{eq:tetrahedronhierarchy}
\renewcommand{\arraystretch}{1.2}
    \begin{tabular}{c|c|c}
      Type IIB $\D(-1)$-D5  & $\mathbb{C}^{4}\times \mathbb{R}^{2}$& rational\\
    \hline Type IIA $\D0$-$\D6$ & $\mathbb{C}^{4}\times \mathbb{R}\times \mathbb{S}^{1}$& trigonometric\\
    \hline Type IIB $\D1$-$\D7$ & $\mathbb{C}^{4}\times \mathbb{T}^{2}$& elliptic
    \end{tabular}
\eea

We focus on the type IIA theory as in Table \ref{t:tetrahedroninstanton} with the $\Omega$-background in $\mathbb{C}^{4}$. Similar to the spiked instanton case, we label each stack of $n_{\bar{a}}$ D6-branes as $\D6_{\bar{a}}(a\in\four)$ depending on which $\mathbb{C}^{3}_{\bar{a}}$ subspace they wrap. For each D6$_{\bar{a}}$-brane, we have a 7d $\mathcal{N}=1$ $\U(n_{\bar{a}})$ theory. We also have bifundamental contributions coming from the junctions of $\D6_{\bar{a}}$ and $\D6_{\bar{b}}$, for $a\neq b$. The D0-branes play the roles of instantons and are called the tetrahedron instantons. In this case, there are only two types of instantons. The first type of instantons are the instantons on $\mathbb{C}^{3}$ coming from the $\D6_{\bar{a}}$ theory. The second type is a folded instanton contribution coming from $\D6_{\bar{a}}$ and $\D6_{\bar{b}}$, where $a\neq b$. In this case, we only have this type because any two stacks of D6-branes spanning different subspaces will share 4+1 dimensions. Therefore, focusing on one of the stacks of D6-branes, other stacks of D6-branes are codimension-two defects.
\begin{table}[t]
\centering
\begin{tabular}{|c|c|c|c|c|c|c|c|c|c|c|}
\hline
& \multicolumn{2}{c|}{$\mathbb{C}_{1}$} & \multicolumn{2}{c|}{$\mathbb{C}_{2}$} & \multicolumn{2}{c|}{$\mathbb{C}_{3}$} & \multicolumn{2}{c|}{$\mathbb{C}_{4}$} & \multicolumn{2}{c|}{$\mathbb{R}\times \mathbb{S}^{1}$} \\
\cline{2-11}  & 1 & 2 & 3 & 4& 5 & 6 & 7 & 8 & 9& 0\\
\hline $\D0$& $\bullet$ & $\bullet$  & $\bullet$  & $\bullet$  & $\bullet$  & $\bullet$   & $\bullet$  & $\bullet$  & $\bullet$   & $-$\\
\hline
$\D6_{123} $& $-$ & $-$ & $-$ & $-$ & $-$ & $-$ & $\bullet$ & $\bullet$ & $\bullet$ & $-$ \\
\hline
$\D6_{124} $& $-$ & $-$& $-$ & $-$  & $\bullet$ & $\bullet$ & $-$ & $-$ & $\bullet$ & $-$ \\
\hline
$\D6_{134} $& $-$ & $-$  & $\bullet$ & $\bullet$ & $-$ & $-$& $-$ & $-$ & $\bullet$ & $-$ \\
\hline $\D6_{234}$ & $\bullet$ & $\bullet$ & $-$ & $-$ & $-$ & $-$ & $-$ & $-$ & $\bullet$ & $-$ \\
\hline
\end{tabular}
\caption{Brane configuration of gauge origami of tetrahedron instanton.}
\label{t:tetrahedroninstanton}
\end{table}

Let us review the instanton moduli space of the tetrahedron instanton system discussed in \cite{Pomoni:2021hkn} (see also \cite{Pomoni:2023nlf,Fasola:2023ypx}). For each gauge group $\U(n_{\bar{a}})\,(a\in\four)$ coming from the $\D6_{\bar{a}}$-branes, we associate a vector space $\bfN_{\bar{a}}=\mathbb{C}^{n_{\bar{a}}}\,(a\in\four)$. We also associate the $k$ D0-branes with a vector space $\bfK=\mathbb{C}^{k}$. The open strings connecting the D0-branes give $B_{a}\in\operatorname{Hom}\,(\bfK,\bfK)$ which is the same with the spiked instanton setup. Maps $I_{\bar{a}}\in\operatorname{Hom}\,(\bfN_{\bar{a}},\bfK)$ correspond to the open strings connecting the D0 and D6-branes. We then introduce the following moment maps:
\bea
\mu_{A}&=[B_{a},B_{b}],\quad A=ab\,(a<b)\\
s_{A}&=\mu_{A}+\epsilon_{A\bar{A}}\mu_{\bar{A}}^{\dagger}=[B_{a},B_{b}]+\frac{1}{2}\epsilon_{abcd}[B_{c}^{\dagger},B_{d}^{\dagger}],\quad A=ab\\
\mu_{\mathbb{R}}&=\sum_{a\in\four}[B_{a},B_{a}^{\dagger}]+\sum_{a\in\four}I_{\bar{a}}I_{\bar{a}}^{\dagger}
\eea
where all of them belong to $\operatorname{Hom}\,(\bfK,\bfK)$. The moment map equations are given 
\bea\label{eq:D6KKcondition}
\{\mu_{\mathbb{R}}=\zeta\cdot 1_{k}\}/\U(k)\,\, (\zeta>0),\quad \{s_{A}=0\}/\U(k). 
\eea
We also have the contributions coming from the open strings connecting the D0 and D6-branes:
\bea\label{eq:D6NKcondition}
\{\sigma_{\bar{a}}=0\}_{a\in\four}/\U(k),\quad \sigma_{\bar{a}}=B_{a}I_{\bar{a}}\in\operatorname{Hom}\,(\bfN_{\bar{a}},\bfK).
\eea
The instanton moduli space of the tetrahedron instanton system is defined as 
\bea
\mathfrak{M}_{\vec{n},k}=\left.\left\{(\vec{B},\vec{I})\mid \mu_{\mathbb{R}}-\zeta\cdot 1_{k}=s_{A}=\sigma_{\bar{a}}=0\right\}\right/\U(k).
\eea
Note that similar to the spiked instanton case, using 
\bea\label{eq:D6Ftermconverse}
\sum_{a<b}\Tr[B_{a},B_{b}][B_{a},B_{b}]^{\dagger}=\frac{1}{2}\sum_{a<b}\Tr s_{ab}s_{ab}^{\dagger},
\eea
the condition $s_{A}=0$ is replaced with a stronger condition $\mu_{A}=0$, which coincides with the F-term condition of the world-volume theory of the $\D0$-branes. When there is only one stack of D6-branes, say $\D6_{123}$, we can set $I_{\bar{a}}=0\,(a\neq 4)$ and indeed obtain the standard moduli space of the D0-D6 theory \cite{Nekrasov:2009JJM,Jafferis:2007sg,Cirafici:2008sn,Kanno:2020ybd}.

We also have similar gauge symmetries of $\D0$ and flavor symmetries coming from $\D6_{\bar{a}}$. The gauge symmetry is 
\bea\label{eq:D6gaugesymmetry}
(B_{a},I_{\bar{b}})_{a,b\in\four}\longmapsto(g^{-1}B_{a}g,g^{-1}I_{\bar{b}}),\quad g\in\U(k),
\eea
while the flavor symmetries are given 
\bea\label{eq:D6flavorsymmetry}
(B_{a},I_{\bar{b}})\longmapsto (B_{a},I_{\bar{b}}h_{\bar{b}}),\quad \underline{h}=(h_{\bar{a}})_{a\in\four}\in\prod_{a\in\four}\U(n_{\bar{a}}).
\eea
The rotational symmetry
is given as
\bea\label{eq:D6rotationalsymmetry}
(B_{a},I_{\bar{b}})\longmapsto (q_{a}B_{a},I_{\bar{b}}).
\eea
Note that the condition \eqref{eq:CY4cond} comes from the invariance of $s_{A}=0$. The symmetry transformation of the generalized ADHM constraints are
\bea\label{eq:D6ADHMconstrsymmetry}
s_{A}\longmapsto q_{A}g^{-1}s_{A}g,\quad \sigma_{\bar{a}}\longmapsto q_{a}g^{-1}\sigma_{\bar{a}}h_{\bar{a}}
\eea
where $g\in\U(k),\,h_{\bar{a}}\in\U(n_{\bar{a}})$.

\subsubsection*{Magnificent four}
Instead of considering D-branes wrapping subspaces of $\mathbb{C}^{4}$, we can consider D-branes wrapping the entire $\mathbb{C}^{4}$ \cite{Nekrasov:2017cih,Nekrasov:2018xsb,Nekrasov:2023nai}. It is called the magnificent four system. The brane configuration is given
\bea\label{eq:m4hierarchy}
\renewcommand{\arraystretch}{1.2}
    \begin{tabular}{c|c|c}
      Type IIB $\D(-1)$-D7-$\overline{\D7}$   & $\mathbb{C}^{4}\times \mathbb{R}^{2}$& rational\\
    \hline Type IIA $\D0$-$\D8$-$\overline{\D8}$ & $\mathbb{C}^{4}\times \mathbb{R}\times \mathbb{S}^{1}$& trigonometric\\
    \hline Type IIB $\D1$-$\D9$-$\overline{\D9}$ & $\mathbb{C}^{4}\times \mathbb{T}^{2}$& elliptic
    \end{tabular}
\eea
Again, we focus on the type IIA case. The D8 and $\overline{\D8}$ branes wrap the $\mathbb{C}^{4}\times \mathbb{S}^{1}$. In order to stabilize the configuration, the D8-branes need to appear in pair with $\overline{\D8}$-branes. Considering $n$ D8-$\overline{\D8}$ branes, we get a $\U(n|n)$ supergroup gauge theory. Similar to other cases, the $\D0$-branes play the roles of $\mathbb{C}^{4}$-instantons. 

\begin{table}[t]
\centering
\begin{tabular}{|c|c|c|c|c|c|c|c|c|c|c|}
\hline
& \multicolumn{2}{c|}{$\mathbb{C}_{1}$} & \multicolumn{2}{c|}{$\mathbb{C}_{2}$} & \multicolumn{2}{c|}{$\mathbb{C}_{3}$} & \multicolumn{2}{c|}{$\mathbb{C}_{4}$} & \multicolumn{2}{c|}{$\mathbb{R}\times \mathbb{S}^{1}$} \\
\cline{2-11}  & 1 & 2 & 3 & 4& 5 & 6 & 7 & 8 & 9& 0\\
\hline $\D0$& $\bullet$ & $\bullet$  & $\bullet$  & $\bullet$  & $\bullet$  & $\bullet$   & $\bullet$  & $\bullet$  & $\bullet$   & $-$\\
\hline
$\D8 $& $-$ & $-$ & $-$ & $-$ & $-$ & $-$ & $-$ & $-$ & $\bullet$ & $-$ \\
\hline
\raisebox{-0.6mm}{$\overline{\D8}$}& $-$ & $-$& $-$ & $-$  & $-$ & $-$ & $-$ & $-$ & $\bullet$ & $-$ \\
\hline
\end{tabular}
\caption{Brane configuration of magnificent four.}
\label{t:magnificentfour}
\end{table}

We denote the vector space coming from the D8-branes as $\mathbf{n}=\mathbb{C}^{n}$ and the vector space coming from the instantons as $\bfK=\mathbb{C}^{k}$. We have maps $B_{a}\in\operatorname{Hom}\,(\bfK,\bfK)$ and $I\in\operatorname{Hom}\,(\mathbf{n},\bfK)$ corresponding to the D0-D0 and D0-D8 strings. We define the moment maps as 
\bea
\mu_{\mathbb{R}}&=\sum_{a\in\four}[B_{a},B_{a}^{\dagger}]+II^{\dagger},\\
\mu_{A}&=[B_{a},B_{b}],\quad A=ab \,\,(a<b),\\
s_{ab}&=[B_{a},B_{b}]+\frac{1}{2}\epsilon_{abcd}[B_{c}^{\dagger},B_{d}^{\dagger}]
\eea
and the instanton moduli space is defined as 
\bea
\mathfrak{M}_{n,k}=\left.\left\{(\vec{B},I)\mid \mu_{\mathbb{R}}-\zeta\cdot 1_{k}=s_{A}=0\right\}\right/\U(k).
\eea
Similarly, using \eqref{eq:D6Ftermconverse}, the condition $s_{A}=0$ is replaced with $\mu_{A}=0$. The difference with the tetrahedron system is that there are no conditions coming from $\sigma_{\bar{a}}=0$.

The symmetries of the ADHM variables and ADHM constraints are similar to the tetrahedron case:
\bea\label{eq:D8ADHMvariablesymmetry}
(B_{a},I)_{a\in\four}&\longmapsto(g^{-1}B_{a}g,g^{-1}I),\quad g\in\U(k),\\
(B_{a},I)&\longmapsto (B_{a},Ih),\quad h\in\U(n),\\
(B_{a},I)&\longmapsto (q_{a}B_{a},I)
\eea
and
\bea\label{eq:D8ADHMconstrsymmetry}
s_{A}\longmapsto q_{A}g^{-1}s_{A}g,\quad g\in\U(k).
\eea

\subsubsection*{Coupled vortex system}
Analogous to the setup introduced before, one would like to consider a system where D-branes wrapping $\mathbb{C}\subset\mathbb{C}^{4}$ appear. For the moment, we do not know what kind of system it would be but based on the previous discussions, we expect it will be a theory of intersecting 2d/3d/4d gauge theories coming from the following:
\bea\label{eq:cplvorthierarchy}
\renewcommand{\arraystretch}{1.2}
    \begin{tabular}{c|c|c}
      Type IIB $\D(-1)$-D1  & $\mathbb{C}^{4}\times \mathbb{R}^{2}$& rational\\
    \hline Type IIA $\D0$-$\D2$ & $\mathbb{C}^{4}\times \mathbb{R}\times \mathbb{S}^{1}$& trigonometric\\
    \hline Type IIB $\D1$-$\D3$ & $\mathbb{C}^{4}\times \mathbb{T}^{2}$& elliptic
    \end{tabular}
\eea
We expect that vortex-like objects appear and we call it the coupled vortex system. Focusing on the $\D0\tbar\D2$ setup, the 
brane set-up should look like Table~\ref{t:cplvortex}. We will have a 3d $\U(n_{a})$ theory on each $\mathbb{C}_{a}\times \mathbb{S}^{1}$ and there should be bifundamentals connecting such 3d theories. Note that in this case, any two stack of D2-branes $\D2_{a}$ and $\D2_{b}$, where $a\neq b$ will share a one-dimensional space $\mathbb{S}^{1}$. 

For the moment, we do not know how to define the moduli space of this system. The discussion here is still a conjecture and a detailed analysis is postponed for future work(see~\cite[eq.~(60)]{Shadchin:2006yz}, \cite{Nekrasov:2009JJM,Rapcak:2021hdh} for related works). 

We denote the vector space coming from the D2-branes as $\bfN_{a}=\mathbb{C}^{n_{a}}$ and the vector space coming from the instantons as $\bfK=\mathbb{C}^{k}$. The contributions coming from the $\D0\tbar \D0$ and $\D0\tbar \D2$ strings give $B_{a}\in\operatorname{Hom}\,(\bfK,\bfK)$ and $I_{a}\in\operatorname{Hom}\,(\bfN_{a},\bfK)$. We also have $J_{a}^{(b)}\in\operatorname{Hom}\,(\bfK,\bfN_{a})$ where $b\neq a$ corresponding to open strings in the opposite direction.

We conjecture that the D-term condition for this system is\footnote{Like other cases, $\mu_{\mathbb{R}}$ is a sum of the D-term conditions of the vortex moduli space after setting $J_{a}^{(b)}=0$ on the moduli space. Recall that the standard vortex theory comes from the moduli space $\{(B,I)\mid[B,B^{\dagger}]+II^{\dagger}=\zeta\cdot 1_{k}\}/\U(k)$ (see \cite[eq.~(60)]{Shadchin:2006yz}). }
\bea
\mu_{\mathbb{R}}=\sum_{a\in\four}[B_{a},B_{a}^{\dagger}]+\sum_{a\in\four}I_{a}I_{a}^{\dagger}-\sum_{a\in\four}\sum_{b\neq a}J_{a}^{(b)\dagger}J_{a}^{(b)}\in\operatorname{Hom}\,(\bfK,\bfK)
\eea
and
\bea
\{\mu_{\mathbb{R}}=\zeta\cdot 1_{k}\}/\U(k)\,\,\,(\zeta>0).
\eea
Additional contributions coming from the F-term are necessary. We conjecture that there is a condition generalizing \eqref{eq:spikedmomentdef} and \eqref{eq:spikedKK}:
\bea
\{s_{A}=0\}/\U(k),
\eea
where
\bea
s_{A}=\mu_{A}+\epsilon_{A\bar{A}}\mu_{\bar{A}}^{\dagger}\in\operatorname{Hom}\,(\bfK,\bfK),\quad \mu_{A}=[B_{a},B_{b}]+\cdots.
\eea
The explicit form of the right-hand side of $\mu_{A}$ is unknown. Moreover, we expect that we have two more sets of conditions
\bea
\{\sigma_{b;a}=0\}/\U(k)\,\,(b\neq a),\qquad \{\tilde{\sigma}_{a}=0\}/\U(k)
\eea
where $\sigma_{b;a},\tilde{\sigma}_{a}\in\operatorname{Hom}\,(\bfN_{a},\bfK)$ are generalizations of \eqref{eq:spikedNK}. The conditions above should also be invariant under the following symmetries
\bea
(B_{a},I_{a},J_{a}^{(b)})_{a\in\four,b\in\bar{a}}&\longmapsto(g^{-1}B_{a}g,g^{-1}I_{a},J_{a}^{(b)}g),\quad g\in\U(k),\\
(B_{a},I_{a},J_{a}^{(b)})&\longmapsto (B_{a},I_{a}h_{a},h_{a}^{-1}J_{a}^{(b)}),\quad \underline{h}=(h_{a})_{a\in\four}\in\prod_{a\in\four}\U(n_{a}),\\
(B_{a},I_{a},J_{a}^{(b)})&\longmapsto (q_{a}B_{a},I_{a},q_{ab}J_{a}^{(b)}),
\eea
where the $\U(1)^{3}$ transformation of $J_{a}^{(b)}$ is still a conjecture. The transformation of the generalized ADHM constraints is expected to be\footnote{\label{footnote:D2ADHMintegral}The charges were assigned so that we can reproduce the contour integral formula that will be proposed in section~\ref{sec:cplvortex_partitionfunct}. The contour integral formula in \eqref{eq:D2integral} is rewritten as
\bea
\mathcal{Z}_{k}^{\D2}\propto \oint \prod_{I=1}^{k}\frac{dx_{I}}{2\pi\iota x_{I}}\prod_{a\in\four}\prod_{\alpha=1}^{n_{a}}\prod_{I=1}^{k}g_{\bar{a}}\left(\frac{v_{a,\alpha}}{x_{I}}\right)\prod_{I<J}\mathcal{A}_{\mathbb{C}^{4}}\left(\frac{x_{I}}{x_{J}}\right)^{-1}.
\eea
The ADHM variables $(B_{a},I_{a},J_{a}^{(b)})\,(b\neq a)$ give the denominators
\bea
B_{a}:\prod_{a\in\four}\prod_{I\neq J}(1-q_{a}x_{I}/x_{J})^{-1},\quad I_{a}:\prod_{a\in\four}\prod_{\alpha=1}^{n_{a}}\prod_{I=1}^{k}(1-v_{a,\alpha}/x_{I})^{-1},\quad J_{a}^{(b)}:\prod_{a\in\four}\prod_{b\neq a}\prod_{\alpha=1}^{n_{a}}\prod_{I=1}^{k}(1-q_{ab}x_{I}/v_{a,\alpha})^{-1}
\eea
while the ADHM constraints $s_{A}=0,\,\sigma_{b;a}=0,\,\tilde{\sigma}_{a}=0$ give the numerators
\bea
s_{A}=0:\prod_{I<J}\prod_{A\in\six}(1-q_{A}x_{I}/x_{J}),\quad \sigma_{b;a}=0\,(b\neq a):\prod_{I=1}^{k}\prod_{a\in\four}\prod_{\alpha=1}^{n_{a}}\prod_{b\neq a}(1-q_{b}v_{a,\alpha}/x_{I}),\quad \tilde{\sigma}_{a}:\prod_{a\in\four}\prod_{I=1}^{k}\prod_{\alpha=1}^{n_{a}}(1-q_{a}^{-1}v_{a,\alpha}/x_{I}).
\eea
Including the Haar measure contribution
\bea
\U(k):\prod_{I\neq J}(1-x_{I}/x_{J}),
\eea
we obtain the contour integral formula. See~\cite{Kanno:2020ybd,Kimura:2020jxl} for example for a review on how to relate the ADHM variables and constraints with the contour integral formula.} 
\bea
s_{A}\longmapsto q_{A}g^{-1}s_{A}g,\quad \sigma_{b;a}\longmapsto q_{b}g^{-1}\sigma_{b;a}h_{a},\quad
\tilde{\sigma}_{a}\longmapsto q_{a}^{-1}g^{-1}\tilde{\sigma}_{a}h_{a},
\eea
where $g\in\U(k),h_{a}\in\U(n_{a})$.

\begin{table}[t]
\centering
\begin{tabular}{|c|c|c|c|c|c|c|c|c|c|c|}
\hline
& \multicolumn{2}{c|}{$\mathbb{C}_{1}$} & \multicolumn{2}{c|}{$\mathbb{C}_{2}$} & \multicolumn{2}{c|}{$\mathbb{C}_{3}$} & \multicolumn{2}{c|}{$\mathbb{C}_{4}$} & \multicolumn{2}{c|}{$\mathbb{R}\times \mathbb{S}^{1}$} \\
\cline{2-11}  & 1 & 2 & 3 & 4& 5 & 6 & 7 & 8 & 9& 0\\
\hline $\D0$& $\bullet$ & $\bullet$  & $\bullet$  & $\bullet$  & $\bullet$  & $\bullet$   & $\bullet$  & $\bullet$  & $\bullet$   & $-$\\
\hline
$\D2_{1} $& $-$ & $-$ & $\bullet$  & $\bullet$  & $\bullet$ & $\bullet$ & $\bullet$ & $\bullet$ & $\bullet$ & $-$ \\
\hline
$\D2_{2} $& $\bullet$ & $\bullet$ & $-$ & $-$  & $\bullet$  & $\bullet$  & $\bullet$ & $\bullet$ & $\bullet$ & $-$ \\
\hline
$\D2_{3} $& $\bullet$ & $\bullet$   & $\bullet$ & $\bullet$ & $-$ & $-$& $\bullet$ & $\bullet$  & $\bullet$ & $-$ \\
\hline 
$\D2_{4} $& $\bullet$ & $\bullet$ & $\bullet$ & $\bullet$  & $\bullet$ & $\bullet$ & $-$ & $-$ & $\bullet$ & $-$ \\
\hline
\end{tabular}
\caption{Brane configuration of coupled vortex system.}
\label{t:cplvortex}
\end{table}

\subsection{Equivariant index formalism}\label{sec:equiv-index}
In this section, we briefly review the equivariant index formalism and introduce the notation we use in this paper. Some of the notations were already used in the previous section. We basically follow the notations of \cite{Nekrasov:2016ydq}. Explicit applications to the gauge origami system introduced in the previous section will be discussed in later sections. 
\paragraph{Index functor}
For a vector bundle with the virtual character
\bea
    \text{ch}\,\bfX=\sum_{i}n_{i}e^{x_{i}},
\eea
where $n_{i}\in\mathbb{Z}$ here is the multiplicity and $x_{i}$'s are the Chern roots, the index functor to convert the additive components to multiplicative components is defined as 
\bea
    \mathbb{I}\left[\bfX\right]=\prod_{i}\llbracket x_{i}\rrbracket ^{n_{i}},\qquad \llbracket x\rrbracket=\begin{dcases}
        x,\quad &(\text{4d})\\
        1-e^{-x},\quad &(\text{5d})\\
        \theta(e^{-x};p),\quad &(\text{6d})
    \end{dcases}\label{eq:rat_trig_ell}
\eea
The theta function here is defined in Appendix \ref{sec:ellipticformula}. The hierarchical structure between rational, trigonometric, and elliptic functions appears here by taking the limit as 
\bea
    \theta(e^{-x};p)\xrightarrow{p\rightarrow0} 1-e^{-x}=x+\mathcal{O}(x^{2}).
\eea
Depending on the type $Z\times \mathcal{C}$ where $Z$ is a toric Calabi--Yau four-fold and $\mathcal{C}=\mathbb{C},\mathbb{C}^{\times },\mathbb{T}^{2}$, the index is obtained from \eqref{eq:rat_trig_ell}.

Most of the computations in this paper will be done explicitly using the trigonometric notation, so when not mentioned we are using the following convention:
\bea
    \mathbb{I}[x]=(1-x^{-1})=\exp\left(-\sum_{n=1}^{\infty}\frac{1}{n}x^{-n}\right)
\eea
but using the formula in \eqref{eq:rat_trig_ell} one can convert the results to rational and elliptic ones. For the elliptic case, we distinguish the index as $\mathbb{I}_{p}[x]$ and a brief computation will be done in section~\ref{sec:ellipticWorigami}.

Note that the character of the dual of bundle $\bfX$ is defined as
\bea
    \text{ch}\,\bfX^{\vee}=\sum_{i}n_{i}e^{-x_{i}}
\eea
and we have the reflection property
\bea
    \mathbb{I}\left[\bfX^{\vee}\right]=(-1)^{\text{rk}\bfX}\det\bfX\,\,\mathbb{I}[\bfX]\label{eq:index_reflecprop}
\eea
where $\text{rk}\bfX=\sum_{i}n_{i}$ and $\det\bfX=\prod_{i}e^{n_{i}x_{i}}$. The index in the 5d notation can be written using the character of the anti-symmetrization,
\begin{equation}
        \wedge^\bullet \mathbf{X} = \sum_{n=0}^\infty (-1)^n \wedge^n \mathbf{X}
        \, ,\quad \mathbb{I}[\mathbf{X}] = \operatorname{ch} \wedge^\bullet \mathbf{X}^\vee. 
\end{equation}
Since $\wedge^n \bfX = 0$ for $n > \text{rk}\bfX$, the summation is finite if $\text{rk}\bfX < \infty$.
Then, the twisted index gives rise to the characteristic polynomial
\bea
\mathbb{I}[\bfX y^{\vee}]=\prod_{i}\left(1-y/y_{i}\right)=\sum_{k=0}^{\infty}(-y)^{k}\operatorname{ch}\wedge^k \mathbf{X}^\vee=:\operatorname{ch}\wedge^\bullet_y \mathbf{X} ,\quad \operatorname{ch} \mathbf{X}=\sum_{i}y_{i}.
\eea
We also have the $p$-th Adams operation on $\bfX$ defined as
\beq
    \text{ch}\,\bfX^{[p]}=\sum_{i}n_{i}e^{px_{i}}.
\eeq
From now on, the characters and the bundles will be identified and we omit the $\text{ch}$ and simply write it as
\beq
    \bfX=\sum_{i}n_{i}e^{x_{i}}.
\eeq

\paragraph{Four and Six}
The set of non-negative integers are denoted as $[n]=\{1,2,\ldots,n\}$ where $n\in\mathbb{N}$. Let $\underline{\textbf{4}}$ denote the set $[4]$, $\underline{\textbf{6}}$ denote the set of 2-element subsets of $\four$, and $\four^{\vee}$ denote the set of 3-element subsets of $\four$ as
\bea
    \four=\{1,2,3,4\},\quad \six=\{12,13,14,23,24,34\},\quad \four^{\vee}=\{123,124,134,234\}.
\eea
The order is defined in the lexicographic order as $12<13<14<23<24<34$. The complement $\bar{A}$ of $A\in\six$ is defined for example as
\bea
    A=12,\quad \bar{A}=34.
\eea
Note that $\four\simeq \four^{\vee}$ under the map $a\in\four\leftrightarrow\bar{a}\in\four^{\vee} $. We introduce the set $\three$ as the quotient $\six/\sim$ where $A\sim\bar{A}$ is
\bea
    (12)\sim (34),\quad (13)\sim(24),\quad (23)\sim(14)
\eea
and choose the representative as $A=a4,\,\, a\in[3]$. We also use 
\bea
    A=(ab),\quad \text{sup}(A)=b,\quad \text{inf}(A)=a
\eea
for $a<b$.

From the geometric viewpoint, the set $\four$ denotes the complex dimension 1 and 3 submanifolds of the Calabi--Yau four-fold, while the set $\six$ denotes the complex dimension 2 submanifolds. We can summarize the data of this in a tetrahedron (see Figure \ref{fig:complex}).
\begin{figure}[t]
    \centering
    \includegraphics[width=7cm]{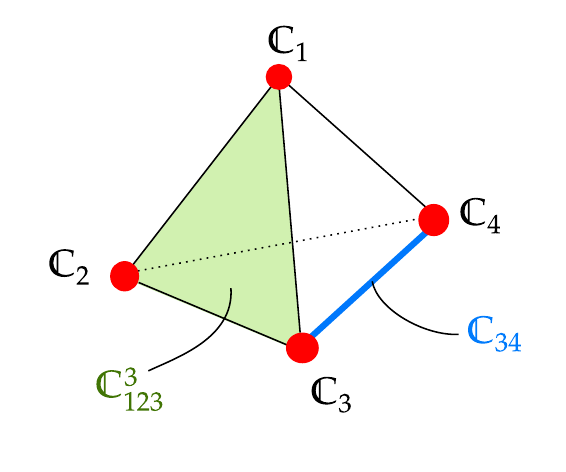}
    \caption{The four vertices of the tetrahedron correspond to the $\mathbb{C}_{a}\,\,(a\in\four)$, the six edges connecting two vertices of the tetrahedron correspond to the $\mathbb{C}^{2}_{A}\,\,(A\in\six)$, the four faces surrounded by three vertices and the three edges connecting them correspond to the complex 3d subspace $\mathbb{C}^{3}_{\bar{a}}\,\,(a\in\four)$, and the whole tetrahedron correspond to the $\mathbb{C}^{4}_{\four}$. }
    \label{fig:complex}
\end{figure}

\paragraph{Data}
The coordinates of $\mathbb{C}^4$ 
are denoted $z_{i}\,\,(i=1,2,3,4)$. The $\Omega$-deformation parameters are 
\bea
    q_{a}=e^{\epsilon_{a}},\quad \sum_{a\in\four}\epsilon_{a}=0,
\eea
where we simply omit the $\mathbb{S}^{1}$ radius. This is the Cartan torus of $\SU(4)$, which acts on the four complex coordinates as
\bea
  (z_{1},z_{2},z_{3},z_{4})\mapsto (q_{1}z_{1},q_{2}z_{2},q_{3}z_{3},q_{4}z_{4})  
\eea
with the condition $q_{1}q_{2}q_{3}q_{4}=1$.

We also use the notations: 
\bea
    \bfQ_{a}=q_{a}=e^{\epsilon_{a}},\quad \bfP_{a}=\wedge \bfQ_{a}=1-q_{a},\quad \bfP_{a}^{\vee}=1-q_{a}^{-1},\quad a\in\four
\eea
and for any subset $S\subseteq\four$
\bea
    \bfQ_{S}=\prod_{a\in S}\bfQ_{a},\quad \bfP_{S}=\prod_{a\in S}\bfP_{a}.
\eea
For example, we have
\bea
    &\bfP_{A}=(1-q_{1})(1-q_{2}),\quad A=12,\\
    &\bfP_{abc}=(1-q_{a})(1-q_{b})(1-q_{c}),\quad a,b,c\in\four.
\eea
Some properties of the index are
\bea
    q_{\four}=q_{\emptyset}=1,\quad \bfP_{\four}=\bfP_{1}\bfP_{2}\bfP_{3}\bfP_{4},\quad q_{\bar{S}}=q_{S}^{-1},\quad
    \bfP^{\vee}_{S}=(-1)^{|S|}q_{S}^{-1}\bfP_{S},\label{eq:dualorigamiprop}
\eea
for any subset $S\subseteq\four$. Note here, we denote $\bar{S}$ as the complement of the subset $S$. For later use, we also define $\bfP_{\bar{S}}=\prod_{a\in\bar{S}}\bfP_{a}$.

For example, we have 
\bea
    \bfP_{\bar{4}}=\bfP_{123}=(1-q_{1})(1-q_{2})(1-q_{3}).
\eea
For the subsets $S\subset\four$, we use the following notation
\bea
    S=\begin{dcases}
        a,\quad a=1,2,3,4,\\
        A,\quad A\in\six,\\
        \bar{a},\quad \bar{a}\in\{123,124,134,234\}
    \end{dcases}
\eea
We can also decompose $\bfP_{\four}$ as 
\bea
\bfP_{\four}=\bfP_{\three}+\bfP_{\three}^{\vee},\quad \bfP_{\four}=\bfP_{\four}^{\vee}.
\eea
Using the reflection property of the index in \eqref{eq:index_reflecprop}, we obtain the reflection properties of the index of the bundles $\bfP_{a}$ as 
\bea
    \mathbb{I}[\bfP_{a}x]=q_{a}^{-1}\mathbb{I}[\bfP_{a}^{\vee}x^{\vee}],\quad \mathbb{I}[\bfP_{ab}x]=\mathbb{I}[\bfP_{ab}^{\vee}x^{\vee}],\quad \mathbb{I}[\bfP_{abc}x]=\mathbb{I}[\bfP_{abc}^{\vee}x^{\vee}],\quad \mathbb{I}[\bfP_{\four}x]=\mathbb{I}[\bfP_{\four}^{\vee}x^{\vee}] \label{eq:reflec_origamiindex}
\eea
where $a,b,c\in\four$ and $a\neq b, b\neq c, a\neq c$.

\paragraph{Structure functions}
Let us introduce special functions which we call structure functions that will be used frequently later:\footnote{\label{footnote:structure-function}The terminology \emph{structure functions} comes from an observation that if we take $\bar{a}=\bar{4}=123$, and take the limit $q_{4}\rightarrow 1$, the function $g_{\bar{a}}(x)$ will be \bea
    g_{123}(x)=\frac{(1-q_{1}x)(1-q_{2}x)(1-q_{3}x)(1-q_{4}^{-1}x)}{(1-x)(1-q_{4}^{-1}q_{1}^{-1}x)(q_{4}^{-1}q_{2}^{-1}x)(1-q_{4}^{-1}q_{3}^{-1}x)}\xrightarrow{q_{4}\rightarrow 1} \frac{(1-\sfq_{1}x)(1-\sfq_{2}x)(1-\sfq_{3}x)}{(1-\sfq_{1}^{-1}x)(1-\sfq_{2}^{-1}x)(1-\sfq_{3}^{-1}x)}
\eea where $q_{1,2,3}\rightarrow \sfq_{1,2,3}$ and $\sfq_{1}\sfq_{2}\sfq_{3}=1$. It resembles the structure function of the quantum toroidal $\mathfrak{gl}_{1}$ (see section \ref{sec:QTgl1}).}
\bea\label{eq:struct_funct}
\mathscr{V}_{a}(x)&=\mathbb{I}[-\bfP_{a}^{\vee}x^{\vee}]=\frac{1-q_{a}x}{1-x},\quad a\in\four,\\
\mathscr{S}_{ab}(x)&=\mathbb{I}[-\bfP_{ab}^{\vee}x^{\vee}]=\frac{(1-q_{a}x)(1-q_{b}x)}{(1-x)(1-q_{a}q_{b}x)},\quad ab\in\six,\\
g_{\bar{a}}(x)&=\mathbb{I}[-\bfP_{\bar{a}}^{\vee}x^{\vee}]=\frac{\prod_{i\neq a}(1-q_{i}x)(1-q_{a}^{-1}x)}{(1-x)\prod_{i\neq a}(1-q_{a}^{-1}q_{i}^{-1}x)},\quad a\in\four,\\
\mathcal{A}_{\mathbb{C}^{4}}(x)&=\mathbb{I}[-\bfP_{\four}^{\vee}x^{\vee}]=\frac{\prod_{i=1}^{4}(1-q_{i}x)\prod^{4}_{i=1}(1-q_{i}^{-1}x)}{(1-x)^{2}\prod_{i\neq j}(1-q_{i}q_{j}x)}.
\eea
Obviously, we have the following properties:
\bea
&g_{abc}(x)=\frac{\mathscr{S}_{ab}(x)}{\mathscr{S}_{ab}(q_{c}x)}=\frac{\mathscr{S}_{bc}(x)}{\mathscr{S}_{bc}(q_{a}x)}=\frac{\mathscr{S}_{ac}(x)}{\mathscr{S}_{ac}(q_{b}x)},\\
&\mathscr{S}_{ab}(x)=\frac{\mathscr{V}_{a}(x)}{\mathscr{V}_{a}(q_{b}x)}=\frac{\mathscr{V}_{b}(x)}{\mathscr{V}_{b}(q_{a}x)},\quad \mathcal{A}_{\mathbb{C}^{4}}(x)=\frac{g_{\bar{a}}(x)}{g_{\bar{a}}(q_{a}x)},\qquad a,b,c\in\four.
\eea
Using \eqref{eq:index_reflecprop}, \eqref{eq:dualorigamiprop}, and \eqref{eq:reflec_origamiindex}, we also have the following reflection formulas:
\bea\label{eq:reflec_structfunc}
\mathscr{V}_{a}(x)&=\mathbb{I}[-\bfP_{a}^{\vee}x^{\vee}]=q_{a}\mathbb{I}[-\bfP_{a}x]=q_{a}\mathscr{V}_{a}(q_{a}^{-1}x^{-1})^{-1},\\
\mathscr{S}_{ab}(x)&=\mathbb{I}[-\bfP_{ab}^{\vee}x^{\vee}]=\mathbb{I}[-\bfP_{ab}x]=\mathscr{S}_{ab}(q_{a}^{-1}q_{b}^{-1}x^{-1}) \\
g_{\bar{a}}(x)&=\mathbb{I}\left[-\bfP_{\bar{a}}^{\vee}x^{\vee}\right]=\mathbb{I}\left[-\bfP_{\bar{a}}x\right]=g_{\bar{a}}(q_{a}x^{-1})^{-1},\\
\mathcal{A}_{\mathbb{C}^{4}}(x)&=\mathbb{I}\left[-\bfP_{\four}^{\vee}x^{\vee}\right]=\mathbb{I}[-\bfP_{\four}x]=\mathcal{A}_{\mathbb{C}^{4}}(x^{-1}).
\eea

\paragraph{Example: $\widehat{A}_0$ quiver gauge theory on $\mathbb{C}^{2}_{12}\times \mathbb{S}^{1}$ (5d $\mathcal{N}=1^*$ theory)}\mbox{}

The $k$-instanton moduli space of $\U(n)$ gauge theory on $\mathbb{C}^{2}_{12}$ denoted by $\mathfrak{M}_{n,k}$ is parametrized by the topological data $(n,k)$ where $k$ is the instanton number. The framing space and instanton space are 
\bea
\bfN=\mathbb{C}^{n},\quad \bfK=\mathbb{C}^{k}
\eea
and the automorphism groups $\GL(\bfN)$, $\GL(\bfK)$ give the characters
\bea
\bfN=\sum_{\alpha=1}^{n}e^{a_{\alpha}}=\sum_{\alpha=1}^{n}v_{\alpha},\qquad \bfK=\sum_{I=1}^{k}e^{\phi_{I}}=\sum_{I=1}^{k}x_{I}.
\eea
The observable sheaf, which is obtained from the universal sheaf via localization, is defined as
\bea
    \bfY\equiv\bfN-\bfP_{12}\bfK
\eea
and the vector multiplet contribution, which is obtained by the tangent bundle over the instanton moduli space $T\mathfrak{M}_{n,k}$, is given as 
\bea
    \mathbf{v}_{\text{vect.}}=\frac{\bfY^{\vee}\bfY}{\bfP_{12}}=\frac{\bfN^{\vee}\bfN}{\bfP_{12}}-\bfN^{\vee}\bfK-\bfQ_{12}^{\vee}\bfK^{\vee}\bfN+\bfP_{12}^{\vee}\bfK^{\vee}\bfK.
\eea
Without adding any other multiplets, the index of this contribution computes the partition function of the pure super Yang--Mills theory. To add an adjoint hypermultiplet with mass $m$, we need to add the following term to the vector multiplet contribution
\begin{equation}
\bfH_{\text{adj}}=-\mathbf{M}_{\text{adj}}\frac{\bfY^{\vee}\bfY}{\bfP_{12}},\quad \mathbf{M}_{\text{adj}}=e^{m}\eqqcolon\mu. 
\end{equation}
Then, the 5d $\mathcal{N}=1^{\ast}$ theory partition function will be given by the index of
\begin{equation}
    \mathbf{v}=\frac{\bfY^{\vee}(1-\bfM_{\text{adj}})\bfY}{\bfP_{12}}.
\end{equation}
Using the convention $q_{3}=\mu^{-1},\,q_{4}=\mu q_{12}^{-1}$ obeying $q_1 q_2 q_3 q_4 = 1$, we have 

\bea
\label{eq:affinequiver}
    &\mathbf{v}=\frac{\bfY^{\vee}\bfP_{3}^{\vee}\bfY}{\bfP_{12}}=\frac{\bfN^{\vee}\bfP_{3}^{\vee}\bfN}{\bfP_{12}}-\bfP_{3}^{\vee}\bfN^{\vee}\bfK-\bfQ_{12}^{\vee}\bfP_{3}^{\vee}\bfK^{\vee}\bfN+\bfP_{123}^{\vee}\bfK^{\vee}\bfK=\mathring{\mathbf{v}}+\mathbf{v}_{\text{inst}},\\
    &\mathring{\mathbf{v}}=\frac{\bfN^{\vee}\bfP_{3}^{\vee}\bfN}{\bfP_{12}},\quad \mathbf{v}_{\text{inst}}=-\bfP_{3}^{\vee}\bfN^{\vee}\bfK-\bfQ_{12}^{\vee}\bfP_{3}^{\vee}\bfK^{\vee}\bfN+\bfP_{123}^{\vee}\bfK^{\vee}\bfK.
\eea
The instanton partition function will be $\mathcal{Z}_{(n,k)}=\mathbb{I}[\mathbf{v}_{\text{inst}}]$ which gives the contour integral formula.
\begin{proposition}[Losev--Moore--Nekrasov--Shatashvili (LMNS) formula~\cite{Losev:1997tp,Moore:1997dj,Lossev:1997bz}]
The $k$-instanton partition function of 5d $\mathcal{N}=1^*$ $\U(n)$ gauge theory defined on $\mathbb{C}^2 \times \mathbb{S}^1$ is given by the following contour integral,
\bea\label{eq:affinecontour}
    \mathcal{Z}_{(n,k)}=&\frac{1}{k!}\left(\frac{(1-q_{12})}{(1-q_{1})(1-q_{2})}\mathscr{S}_{12}(q_{3})\right)^{k}\\
    &\quad \times\oint\prod_{I=1}^{k}\frac{\text{d}x_{I}}{2\pi\iota x_{I}}\prod_{I=1}^{k}\frac{P(q_{3}^{-1}x_{I})\widetilde{P}(q_{4}^{-1}x_{I})}{P(x_{I})\widetilde{P}(q_{34}^{-1}x_{I})}\prod_{I\neq J}^{k}\frac{\mathscr{S}_{12}(q_{3}x_{I}/x_{J})}{\mathscr{S}_{12}(x_{I}/x_{J})}
\eea
where $\iota=\sqrt{-1}$ and the gauge polynomials
\begin{equation}
    P(x)=\mathbb{I}[\bfN^{\vee}x]=\prod_{\alpha=1}^{n}\left(1-\frac{v_{\alpha}}{x}\right),\quad \widetilde{P}(x)=\mathbb{I}[x^{-1}\mathbf{N}]=\prod_{\alpha=1}^{n}\left(1-\frac{x}{v_{\alpha}}\right).
\end{equation}
This integral is taken over the Cartan torus of $\mathrm{GL}(\mathbf{K})$, which is interpreted as the Jeffrey--Kirwan residue~\cite{Jeffrey1993LocalizationFN} of the integrand.
\end{proposition}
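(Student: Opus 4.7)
The plan is to derive the formula by equivariant $K$-theoretic localization on the ADHM moduli space $\mathfrak{M}_{n,k}$, viewed as a GIT quotient $\{(\vec{B},I,J)\}/\!/\mathrm{GL}(\mathbf{K})$ of the stable ADHM data. Localizing before taking the quotient, the equivariant integral of the virtual characteristic class associated with the 5d $\mathcal{N}=1^*$ vector multiplet reduces to a contour integral over the Cartan torus of $\mathrm{GL}(\mathbf{K})$ with coordinates $x_I = e^{\phi_I}$. The integrand is the equivariant index $\mathbb{I}[\mathbf{v}_{\text{inst}}]$ of the character \eqref{eq:affinequiver}, weighted by the Haar measure $\tfrac{1}{k!}\prod_I \tfrac{dx_I}{2\pi\iota x_I}$ of the Cartan torus, and the contour is prescribed by the Jeffrey--Kirwan residue selecting the $\mathrm{GL}(\mathbf{K})$-semistable fixed loci.

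I would then evaluate $\mathbb{I}[\mathbf{v}_{\text{inst}}]$ term by term. The two pieces linear in $\mathbf{N}$ produce the gauge-polynomial ratio: using $\mathbf{P}_3^\vee = 1 - q_3^{-1}$ one obtains $\mathbb{I}[-\mathbf{P}_3^\vee \mathbf{N}^\vee \mathbf{K}] = \prod_I P(q_3^{-1} x_I)/P(x_I)$, while using $\mathbf{Q}_{12}^\vee = q_{12}^{-1}$ together with the Calabi--Yau relation $q_1 q_2 q_3 q_4 = 1$ (hence $q_{12} q_3 = q_4^{-1}$ and $q_{12}=q_{34}^{-1}$) turns the second linear piece into $\prod_I \widetilde{P}(q_4^{-1} x_I)/\widetilde{P}(q_{34}^{-1} x_I)$. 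For the quadratic piece $\mathbf{P}_{123}^\vee \mathbf{K}^\vee \mathbf{K}$, I separate off-diagonal from diagonal contributions. The off-diagonal part is computed via the identity $g_{\bar 4}(x) = \mathscr{S}_{12}(x)/\mathscr{S}_{12}(q_3 x)$, yielding
\begin{equation*}
\prod_{I \neq J} \mathbb{I}[\mathbf{P}_{123}^\vee x_J/x_I] \;=\; \prod_{I \neq J} g_{\bar 4}(x_I/x_J)^{-1} \;=\; \prod_{I \neq J} \frac{\mathscr{S}_{12}(q_3 x_I/x_J)}{\mathscr{S}_{12}(x_I/x_J)},
\end{equation*}
which matches the last factor of the proposed integrand (note that the Vandermonde $(1-x_I/x_J)$ is already contained in $1/\mathscr{S}_{12}(x_I/x_J)$, so no additional measure factor is needed). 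The $I=J$ diagonal contribution evaluates $\mathbb{I}[\mathbf{P}_{123}^\vee]$ at the unit Chern root; after regularization against the matching singular factor in the Vandermonde, it collapses to $k$ copies of $(1-q_{12})\mathscr{S}_{12}(q_3)/((1-q_1)(1-q_2))$, producing the stated prefactor, while the $1/k!$ originates from the Weyl group of $\mathrm{GL}(\mathbf{K})$.

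The main obstacle is this zero-mode bookkeeping for $\mathbf{P}_{123}^\vee \mathbf{K}^\vee \mathbf{K}$: since $\mathbf{P}_{123}^\vee = (1-q_1^{-1})(1-q_2^{-1})(1-q_3^{-1})$ contains the singular factor $\mathbb{I}[0]$ when one Chern root vanishes, one must carefully regularize so that the residual finite combination reproduces the prefactor $(1-q_{12})\mathscr{S}_{12}(q_3)/((1-q_1)(1-q_2))$ exactly, without spurious powers of $q_a$ and with the correct symmetry in $q_1,q_2$. This requires consistent use of the reflection identities \eqref{eq:reflec_origamiindex}--\eqref{eq:reflec_structfunc} together with the Calabi--Yau constraint $q_\four=1$. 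As a final consistency check, one would verify via the Jeffrey--Kirwan prescription that the enclosed poles are only those of $P(x_I)$ and $\mathscr{S}_{12}(x_I/x_J)$, so that the residues localize onto the ADHM fixed points indexed by $n$-tuples of Young diagrams $\vec\lambda$ with $|\vec\lambda|=k$, reproducing Nekrasov's combinatorial formula.
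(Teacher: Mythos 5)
Your proposal is correct and follows essentially the same route as the paper: the formula is obtained by taking the equivariant index $\mathbb{I}[\mathbf{v}_{\text{inst}}]$ of the character \eqref{eq:affinequiver} term by term, with the two $\mathbf{N}$-linear pieces giving the gauge-polynomial ratios (via $q_{12}q_3=q_4^{-1}$, $q_{12}=q_{34}^{-1}$), the off-diagonal part of $\mathbf{P}_{123}^\vee\mathbf{K}^\vee\mathbf{K}$ giving $\prod_{I\neq J}g_{\bar 4}(x_I/x_J)^{-1}=\prod_{I\neq J}\mathscr{S}_{12}(q_3x_I/x_J)/\mathscr{S}_{12}(x_I/x_J)$, and the diagonal part (with the $k$ singular $\llbracket 0\rrbracket$ modes traded for the Cartan integration variables rather than "cancelled against the Vandermonde" as you phrase it) giving exactly $\left((1-q_{12})(1-q_{13})(1-q_{23})/\bigl((1-q_1)(1-q_2)(1-q_3)(1-q_{123})\bigr)\right)^k$, i.e.\ the stated prefactor. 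Your version is in fact more explicit than the paper's, which simply asserts $\mathcal{Z}_{(n,k)}=\mathbb{I}[\mathbf{v}_{\text{inst}}]$.
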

The total instanton partition function is then given as
\begin{equation}
    \mathcal{Z}_{\text{inst.}}=\sum_{k=0}^{\infty}\mathfrak{q}^{k}\mathcal{Z}_{(n,k)},
\end{equation}
where $\mathfrak{q}^{k} = \exp(2 k \pi \iota \tau)$ is the topological term with the complexified coupling constant, $\tau = \frac{\theta}{2\pi} + \frac{4 \pi \iota}{g^2}$, with $\theta$ the theta angle and $g^2$ the gauge coupling.

The contour integral will actually localize on the fixed points classified by $n$-tuples of Young diagrams~\cite{Nakajima:1999,Nekrasov:2002qd,Nekrasov:2003rj}:
\begin{equation}
\begin{split}
&\vec{v}=(v_{\alpha})_{\alpha=1,\ldots,n},\quad  \vec{\lambda}=(\lambda^{(\alpha)})_{\alpha=1,\ldots,n},\quad |\vec{\lambda}|=\sum_{\alpha=1}^{n}|\lambda^{(\alpha)}|=k,\\
&\{x_{I}\}_{I=1,\ldots,k}\longrightarrow \{\chi_{12,v_{\alpha}}(\Bbox)=v_{\alpha}q_{1}^{i-1}q_{2}^{j-1}\}_{\alpha=1,\ldots,n,\,\Abox=(i,j)\in\lambda^{(\alpha)}}.
\end{split}
\end{equation}
The instanton partition function is then obtained by summing the residues coming from each pole. The character of the instanton bundle $\bfK$ at each fixed point $\vec{\lambda}$ is given by 
\bea
\bfK|_{\vec{\lambda}}=\sum_{\alpha=1}^{n}\sum_{\Abox\in\lambda^{(\alpha)}}\chi_{12,v_{\alpha}}(\Bbox).
\eea
Inserting this in \eqref{eq:affinequiver} and taking the index, we obtain the contribution to the instanton partition function of 5d $\mathcal{N}=1^*$ theory from each fixed point $\vec{\lambda}$ as
\bea
\mathcal{Z}_{12}^{\D4}[\vec{v},\vec{\lambda}\,;q_{3}]=\prod_{\alpha,\beta=1}^{n}\frac{\mathsf{N}_{12}(q_{3}v_{\alpha},\lambda^{(\alpha)}\,|\,v_{\beta},\lambda^{(\beta)})}{\mathsf{N}_{12}(v_{\alpha},\lambda^{(\alpha)}\,|\,v_{\beta},\lambda^{(\beta)})}
\eea
where we defined the Nekrasov factor as 
\bea
    \mathsf{N}_{12}(v_{1},\lambda^{(1)}\,|\,v_{2},\lambda^{(2)})&=\prod_{\Abox\in\lambda^{(1)}}\left(1-\frac{q_{12}\chi_{12,v_{1}}(\Bbox)}{v_{2}}\right)\prod_{\Abox\in\lambda^{(2)}}\left(1-\frac{v_{1}}{\chi_{12,v_{2}}(\Bbox)}\right)\prod_{\substack{\Abox\in\lambda^{(1)}\\\AboxF\in\lambda^{(2)}}}\mathscr{S}_{12}\left(\frac{\chi_{12,v_{1}}(\Bbox)}{\chi_{12,v_{2}}(\BboxF)}\right).
\eea
The instanton partition function will then be given as follows.
\begin{proposition}[\cite{Nekrasov:2002qd,Nekrasov:2003rj}]
The instanton partition function of 5d $\mathcal{N}=1^*$ $\U(n)$ gauge theory is given by summation over $n$-tuple partitions,
\bea
    \mathcal{Z}_{\text{inst.}}=\sum_{\vec{\lambda} }\mathfrak{q}^{|\vec{\lambda}|}\mathcal{Z}_{12}^{\D4}[\vec{v},\vec{\lambda}\,;q_{3}].
\eea
The index $\D4$ comes from the fact that this theory comes from $n$ $\D4$-branes in the $\mathbb{C}^{4}$ set-up with $\Omega$-background.     
\end{proposition}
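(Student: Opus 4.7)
The strategy is standard equivariant localization applied to the LMNS contour integral stated just above. The plan is to evaluate \eqref{eq:affinecontour} by the Jeffrey--Kirwan residue prescription, classify the contributing poles by $n$-tuples of Young diagrams, and match the resulting residue with the Nekrasov factor $\mathcal{Z}^{\D4}_{12}$.

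First I would analyze the pole structure of the integrand. The factor $P(x_I)^{-1} = \prod_\alpha (1 - v_\alpha/x_I)^{-1}$ provides ``seed'' poles at $x_I = v_\alpha$, which one can think of as placing the box $(1,1)$ of some Young diagram $\lambda^{(\alpha)}$ at the Coulomb parameter $v_\alpha$. The ratio $\prod_{I\neq J}\mathscr{S}_{12}(q_3 x_I/x_J)/\mathscr{S}_{12}(x_I/x_J)$ introduces further poles at $x_I/x_J \in \{q_1, q_2\}$, and the numerator factors $(1-q_{12} x_I/x_J)$ simultaneously cancel the potentially dangerous poles at $x_I = q_{12} x_J$. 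By an inductive argument on the labels $I$, one shows the surviving JK poles are exactly those of the form $x_I = v_\alpha q_1^{i-1} q_2^{j-1}$ with $(i,j)$ ranging over the boxes of some partition $\lambda^{(\alpha)}$ such that $\sum_\alpha |\lambda^{(\alpha)}| = k$; the Young-diagram shape condition $\lambda^{(\alpha)}_i \ge \lambda^{(\alpha)}_{i+1}$ is enforced precisely by the cancellation rule between the numerator zeros at $x_I = q_{12} x_J$ and the poles at $x_I = q_{1,2} x_J$. This step is the main technical obstacle; the cleanest way to carry it out is to organize the poles into ``towers'' rooted at each $v_\alpha$ and verify the vanishing of residues associated to non-Young-diagram configurations.

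Having identified the fixed loci $\{\vec v,\vec\lambda\}$, I would plug the corresponding character
\begin{equation}
\bfK|_{\vec\lambda} \;=\; \sum_{\alpha=1}^{n}\sum_{\Abox\in\lambda^{(\alpha)}}\chi_{12,v_\alpha}(\Bbox)
\end{equation}
into the character $\mathbf{v}_{\text{inst}}$ of \eqref{eq:affinequiver} and apply the index functor. The bifundamental piece $\bfP_3^\vee \bfN^\vee \bfK + \bfQ_{12}^\vee \bfP_3^\vee \bfK^\vee \bfN$ and the ``$\bfK^\vee \bfK$'' piece combine, after the standard manipulations with $\mathscr{S}_{12}$ and the reflection property \eqref{eq:reflec_structfunc}, to
\begin{equation}
\mathbb{I}[\mathbf{v}_{\text{inst}}]\big|_{\vec\lambda} \;=\; \prod_{\alpha,\beta}\frac{\mathsf{N}_{12}(q_3 v_\alpha,\lambda^{(\alpha)}\,|\,v_\beta,\lambda^{(\beta)})}{\mathsf{N}_{12}(v_\alpha,\lambda^{(\alpha)}\,|\,v_\beta,\lambda^{(\beta)})},
\end{equation}
which by definition equals $\mathcal{Z}^{\D4}_{12}[\vec v,\vec\lambda\,;q_3]$. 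The key combinatorial lemma used here is the standard identity that $\mathbb{I}[-\bfP_{12}^\vee \bfK^\vee \bfK]|_{\vec\lambda}$ reorganizes into a product of $\mathscr{S}_{12}$-factors over pairs of boxes, together with edge contributions that match the $(q_{12}\chi/v)$ and $(v/\chi)$ products in the definition of $\mathsf{N}_{12}$; this is by now a routine computation but needs the cancellations on the hook/arm legs to be tracked carefully.

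Finally, summing over $k$ with the topological weight $\mathfrak{q}^k$ and using $\sum_\alpha |\lambda^{(\alpha)}| = k$ converts $\sum_{k\ge 0}\mathfrak{q}^k\mathcal{Z}_{(n,k)}$ into $\sum_{\vec\lambda}\mathfrak{q}^{|\vec\lambda|}\mathcal{Z}^{\D4}_{12}[\vec v,\vec\lambda\,;q_3]$, which is the claim. The normalization prefactor $\bigl(\tfrac{1-q_{12}}{(1-q_1)(1-q_2)}\mathscr{S}_{12}(q_3)\bigr)^k/k!$ is consumed by the Jacobian of the change of variables $\{x_I\} \to \{\chi_{12,v_\alpha}(\Bbox)\}$ together with the residues of $P(x_I)^{-1}$ and the Weyl-group symmetrization, which is the other place where care is needed but which is standard in the ADHM-localization literature.
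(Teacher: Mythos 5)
Your proposal is correct and follows essentially the same route the paper takes (and that the cited references \cite{Nekrasov:2002qd,Nekrasov:2003rj} carry out in detail): evaluate the LMNS contour integral by Jeffrey--Kirwan residues, identify the poles with $n$-tuples of Young diagrams via the character $\bfK|_{\vec\lambda}$, and recognize the resulting index as the ratio of Nekrasov factors defining $\mathcal{Z}^{\D4}_{12}[\vec v,\vec\lambda\,;q_3]$. No gaps beyond the standard technical steps you already flag.
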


\paragraph{Nekrasov factors for 5d theory}
Using the quadrality in $q_{a}\,(a\in\four)$, we define similar Nekrasov factors\footnote{The Nekrasov factors can be written using the arm and leg length of each box in the Young diagrams, $a_\lambda(\Abox) = \lambda_{i}-j$ and $\ell_\lambda(\Abox) = \lambda_j^{\rmT} - i$ for $\Abox = (i,j)$, such as \begin{equation}\label{eq:D4armlegformula}
    \mathsf{N}_{ab}(v_{1},\lambda\,|\,v_{2},\nu)=\prod_{\Abox\in\lambda}\left(1-Qq_{a}^{\ell_{\lambda}(\Abox)}q_{b}^{-a_{\nu}(\Abox)}\right)\prod_{\Abox\in\nu}\left(1-Qq_{a}^{-\ell_{\nu}(\Abox)}q_{b}^{a_{\lambda}(\Abox)+1}\right),\quad Q=v_{1}/v_{2}.
\end{equation} From the algebraic viewpoint, using the product form in \eqref{eq:D4Nekrasovfactor} is much useful so we will use it rather than the simplified form in \eqref{eq:D4armlegformula}. } which can be used to discuss gauge theories on $\mathbb{C}^{2}_{A}\times \mathbb{S}^{1}\,(A\in\six)$:
\bea
 \mathsf{N}_{A}(v_{1},\lambda^{(1)}\,|\,v_{2},\lambda^{(2)})&=\prod_{\Abox\in\lambda^{(1)}}\left(1-\frac{q_{A}\chi_{A,v_{1}}(\Bbox)}{v_{2}}\right)\prod_{\Abox\in\lambda^{(2)}}\left(1-\frac{v_{1}}{\chi_{A,v_{2}}(\Bbox)}\right)\prod_{\substack{\Abox\in\lambda^{(1)}\\\AboxF\in\lambda^{(2)}}}\mathscr{S}_{A}\left(\frac{\chi_{A,v_{1}}(\Bbox)}{\chi_{A,v_{2}}(\BboxF)}\right),\label{eq:D4Nekrasovfactor}
\eea
where we defined the box content as
\bea
    \chi_{ab,v}(\Bbox)=vq_{a}^{i-1}q_{b}^{j-1},\quad A=ab\in\six,\quad \Bbox=(i,j)\in\lambda.
\eea
\begin{lemma}
The recursion formulas of the 5d Nekrasov factors are given as follows,
\bea\label{eq:5dNekrasovrecursion}
    &\frac{\mathsf{N}_{A}(v_{1},\lambda^{(1)}+\Bbox\,|\,v_{2},\lambda^{(2)})}{\mathsf{N}_{A}(v_{1},\lambda^{(1)}\,|\,v_{2},\lambda^{(2)})}=\mathscr{Y}^{A\,\vee}_{\lambda^{(2)},v_{2}}(q_{A}\chi_{A,v_{1}}(\Bbox))\\
   &\frac{\mathsf{N}_{A}(v_{1},\lambda^{(1)}\,|\,v_{2},\lambda^{(2)}+\BboxF)}{\mathsf{N}_{A}(v_{1},\lambda^{(1)}\,|\,v_{2},\lambda^{(2)})}=\mathscr{Y}_{\lambda^{(1)},v_{1}}^{A}(\chi_{A,v_{2}}(\BboxF)),
\eea
where we define the $\mathscr{Y}$-functions,
\bea
   \mathscr{Y}_{\lambda,v}^{A}(x)&=\left(1-\frac{v}{x}\right)\prod_{\Abox\in\lambda}\mathscr{S}_{A}\left(\frac{\chi_{A,v}(\Bbox)}{x}\right)=\frac{\prod\limits_{\Abox\in A(\lambda)}\left(1-\chi_{A,v}(\Bbox)/x\right)}{\prod\limits_{\Abox\in R(\lambda)}\left(1-q_{A}\chi_{A,v}(\Bbox)/x\right)},\\
   \mathscr{Y}^{A\,\vee}_{\lambda,v}(x)&=\left(1-\frac{x}{v}\right)\prod_{\Abox\in \lambda}\mathscr{S}_{A}\left(q_{A}^{-1}\frac{x}{\chi_{A,v}(\Bbox)}\right)=\frac{\prod\limits_{\Abox\in A(\lambda)}\left(1-x/\chi_{A,v}(\Bbox)\right)}{\prod\limits_{\Abox\in R(\lambda)}\left(1-q_{A}^{-1}x/\chi_{A,v}(\Bbox)\right)}
\eea    
\end{lemma}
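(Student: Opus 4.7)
The plan is to verify each recursion by a direct computation from the product formula for $\mathsf{N}_{A}$, and then to reconcile the resulting expression with the two equivalent forms of the $\mathscr{Y}$-function.

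First I would treat the second identity, since it involves only an insertion on the right-hand partition. Writing $\mathsf{N}_A$ as the product of the three pieces in \eqref{eq:D4Nekrasovfactor}, the first piece does not depend on $\lambda^{(2)}$, while adding the box $\BboxF$ to $\lambda^{(2)}$ introduces a single new factor $(1 - v_1/\chi_{A,v_2}(\BboxF))$ in the second product and, in the third product, a new factor $\prod_{\Abox\in\lambda^{(1)}}\mathscr{S}_A(\chi_{A,v_1}(\Abox)/\chi_{A,v_2}(\BboxF))$. Hence the ratio equals
\begin{equation*}
\left(1-\frac{v_1}{\chi_{A,v_2}(\BboxF)}\right)\prod_{\Abox\in\lambda^{(1)}}\mathscr{S}_A\!\left(\frac{\chi_{A,v_1}(\Abox)}{\chi_{A,v_2}(\BboxF)}\right),
\end{equation*}
which is exactly the first (manifestly product) form of $\mathscr{Y}^{A}_{\lambda^{(1)},v_1}(x)$ evaluated at $x=\chi_{A,v_2}(\BboxF)$.

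For the first identity, adding $\Bbox$ to $\lambda^{(1)}$ adds one factor $(1-q_A\chi_{A,v_1}(\Bbox)/v_2)$ to the first product, leaves the second product unchanged, and inserts $\prod_{\AboxF\in\lambda^{(2)}}\mathscr{S}_A(\chi_{A,v_1}(\Bbox)/\chi_{A,v_2}(\AboxF))$ in the third. Using the reflection property $\mathscr{S}_A(x)=\mathscr{S}_A(q_A^{-1}x^{-1})$ of \eqref{eq:reflec_structfunc} (equivalently, comparing with the definition of $\mathscr{Y}^{A\,\vee}$), this is precisely $\mathscr{Y}^{A\,\vee}_{\lambda^{(2)},v_2}(q_A\chi_{A,v_1}(\Bbox))$. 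More concretely, in the definition of $\mathscr{Y}^{A\,\vee}_{\lambda,v}(x)$ the replacement $x\mapsto q_A\chi_{A,v_1}(\Bbox)$ turns $q_A^{-1}x/\chi_{A,v}(\AboxF)$ into $\chi_{A,v_1}(\Bbox)/\chi_{A,v}(\AboxF)$, matching the ratio computed above term by term. Together the two recursions follow directly.

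The only nontrivial step is the second equality in each $\mathscr{Y}$-function, i.e.\ rewriting the product over all $\Abox\in\lambda$ as a ratio indexed by addable boxes $A(\lambda)$ and removable boxes $R(\lambda)$. This is the classical telescoping identity for Young diagrams: writing $\mathscr{S}_A(y)=\frac{(1-q_a y)(1-q_b y)}{(1-y)(1-q_a q_b y)}$ and tracking poles and zeros column by column (or row by row), all interior contributions cancel and the surviving factors are in bijection with $A(\lambda)$ and $R(\lambda)$ after including the overall $(1-v/x)$. I expect this bookkeeping step to be the main technical obstacle, but it is purely combinatorial and entirely standard; once it is established, the recursion formulas follow at once from the ratio computation above.
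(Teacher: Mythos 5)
Your computation is correct and is the natural (essentially unique) way to establish these recursions; the paper itself states the lemma without proof, treating it as the standard bookkeeping you carry out. Both ratio computations check out — in particular your observation that substituting $x=q_A\chi_{A,v_1}(\Bbox)$ into $\mathscr{S}_A(q_A^{-1}x/\chi_{A,v_2}(\AboxF))$ directly yields $\mathscr{S}_A(\chi_{A,v_1}(\Bbox)/\chi_{A,v_2}(\AboxF))$ makes the first identity immediate without even invoking the reflection property — and the remaining addable/removable rewriting of $\mathscr{Y}$ is, as you say, the standard telescoping shell formula, which is independent of the recursion itself.
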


This $\mathscr{Y}$-functions have been introduced to describe the Seiberg--Witten curve of generic quiver gauge theory~\cite{Nekrasov:2012xe}.
Physically, it can be understood as the contribution of the codimension four defect operator (Wilson loop along $\mathbb{S}^1$ in the current setup) under the instanton background~\cite{Kim:2016qqs}.
See also section~\ref{sec:qqpartitionfunct}.

\subsection{Magnificent four}\label{sec:M4partitionfunction}
Let us first consider the magnificent four setup introduced and discussed in \cite{Nekrasov:2017cih,Nekrasov:2018xsb} (see also section~\ref{sec:physicalsetup}). Mathematically, the partition function is interpreted as the generating function the Donaldson-Thomas invariants of the Calabi--Yau four-fold $\mathbb{C}^{4}$~\cite{Cao:2017swr,Cao:2019tvv} and physically, it corresponds to the setup discussed in \cite{Billo:2009gc,Billo:2009di,Bonelli:2020gku,Billo:2021xzh}. Let us review how to derive the instanton partition function. 

The associated cotangent bundle of $\mathbb{C}^{4}$ is decomposed into 
\bea
   \mathbf{Q}=\bigoplus_{i=1}^{4}\mathbf{Q}_{i},\quad \text{ch}\,\mathbf{Q}_{i}=q_{i}
\eea
where the Calabi--Yau condition is imposed $q_{1}q_{2}q_{3}q_{4}=1$. The framing and instanton bundles $ \bfN=\mathbf{n}-\bar{\mathbf{n}}$, $ \mathbf{K}$ with the characters are given by
\bea
    \mathbf{N}=\mathbf{n}-\bar{\mathbf{n}}=\sum_{\alpha=1}^{n}(e^{a_{\alpha}}-e^{b_{\alpha}})=\sum_{\alpha=1}^{n}(v_{\alpha}-\bar{v}_{\alpha}),\quad \mathbf{K}=\sum_{I=1}^{k}e^{\phi_{I}}=\sum_{I=1}^{k}x_{I}.
\eea
Note that the contribution of $\bar{\mathbf{n}}$ in $\mathbf{N}$ corresponds to $\overline{\D8}$-branes, that are indispensable to stabilize the brane configuration. The observable sheaf is then defined as
\bea\label{eq:D8Ybundle}
    \mathbf{Y}=\mathbf{N}-\mathbf{P}_{\four}\mathbf{K}
\eea
and the vector multiplet contribution is given as 
\bea
    \mathbf{V}&=\frac{\mathbf{Y}^{\vee}\mathbf{Y}}{\mathbf{P}_{\four}}=\mathring{\mathbf{V}}+\mathbf{V}_{\text{inst}},\quad\mathring{\mathbf{V}}=\frac{\mathbf{N}^{\vee}\mathbf{N}}{\mathbf{P}_{\four}},\quad
    \mathbf{V}_{\text{inst}}=-\mathbf{N}^{\vee}\mathbf{K}-\mathbf{K}^{\vee}\mathbf{N}+\mathbf{P}_{\four}^{\vee}\mathbf{K}^{\vee}\mathbf{K}.
\eea
Due to the lack of a perfect obstruction theory for the CY four-folds, we need to take the ``square root" part of the total bundle $\bfV$ (see e.g.,~\cite{Borisov:2017GT,Oh:2020rnj}), which we choose
\bea
\mathbf{V}_{\text{inst}}&=\mathbf{v}_{\text{inst}}+\mathbf{v}_{\text{inst}}^{\vee},\quad
\mathbf{v}_{\text{inst}}=-\mathbf{N}^{\vee}\mathbf{K}+\mathbf{P}_{123}^{\vee}\mathbf{K}^{\vee}\mathbf{K}\label{eq:mag4ch}
\eea
where we used 
\bea
    \mathbf{P}_{\four}=\mathbf{P}_{123}+\mathbf{P}_{123}^{\vee}.
\eea
We note that other decompositions using $\mathbf{P}_{234},\,\mathbf{P}_{134}$, and $\mathbf{P}_{124}$ gives the same result up to sign factors \cite{Nekrasov:2017cih,Nekrasov:2018xsb}.

Following the procedure in the previous section, we have the following expression for the instanton partition function.
\begin{proposition}[\cite{Nekrasov:2017cih,Nekrasov:2018xsb}]
The total partition function of the magnificent four system is given by
\bea
\mathcal{Z}^{\D8}_{\text{inst.}}=\sum_{k=0}^{\infty}\mathfrak{q}^{k}\mathcal{Z}^{\D8}_{k} .
\eea    
Each contribution is given by the contour integral over the Cartan torus of $\mathrm{GL}(\mathbf{K})$, which is interpreted as the Jeffrey-Kirwan residue of the integrand,
\bea
    \mathcal{Z}^{\D8}_{k}=\mathbb{I}[\mathbf{v}_{\text{inst}}]=\frac{\mathcal{G}_{\bar{4}}^{k}}{k!}\oint \prod_{I=1}^{k}\frac{dx_{I}}{2\pi\iota x_{I}}\prod_{I=1}^{k}\frac{\overline{P}(x_{I})}{P(x_{I})}\prod_{I\neq J}g_{\bar{4}}\left(\frac{x_{I}}{x_{J}}\right)^{-1}\label{eq:D8integral}
\eea
where we define
\bea
&\mathcal{G}_{\bar{a}}=\frac{\prod_{i\neq a}(1-q_{a}^{-1}q_{i}^{-1})}{(1-q_{a}^{-1})\prod_{i\neq a}(1-q_{i})},\quad a\in\four\\
&P(x)=\prod_{\alpha=1}^{n}\left(1-\frac{v_{\alpha}}{x}\right),\quad \overline{P}(x)=\prod_{\alpha=1}^{n}\left(1-\frac{\bar{v}_{\alpha}}{x}\right).
\eea
\end{proposition}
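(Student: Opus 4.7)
The plan is to derive the formula by a direct application of the index functor $\mathbb{I}$ to the square-root instanton character $\mathbf{v}_{\text{inst}} = -\mathbf{N}^\vee \mathbf{K} + \mathbf{P}_{123}^\vee \mathbf{K}^\vee \mathbf{K}$ from \eqref{eq:mag4ch}, combined with the standard equivariant localization prescription on the $\mathrm{GL}(\mathbf{K})$-quotient. Using the multiplicativity of $\mathbb{I}$, the computation splits into a matter factor $\mathbb{I}[-\mathbf{N}^\vee\mathbf{K}]$ and an instanton--instanton factor $\mathbb{I}[\mathbf{P}_{123}^\vee \mathbf{K}^\vee \mathbf{K}]$. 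The matter factor is elementary: writing $\mathbf{N} = \sum_\alpha(v_\alpha - \bar v_\alpha)$ and $\mathbf{K} = \sum_I x_I$ and applying $\mathbb{I}[x] = 1-x^{-1}$ term by term, one immediately obtains $\prod_{I=1}^k \overline{P}(x_I)/P(x_I)$, reproducing the matter-type factor in the integrand.

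For the instanton--instanton factor I would split $\mathbf{K}^\vee \mathbf{K} = k\cdot 1 + \sum_{I\neq J} x_J/x_I$ and treat the diagonal and off-diagonal pieces separately. For the off-diagonal part, expanding $\mathbf{P}_{123}^\vee = \prod_{i=1}^{3}(1-q_i^{-1})$ as an alternating sum of weights and using the Calabi--Yau constraint $q_1 q_2 q_3 q_4 = 1$ to rewrite products $q_i^{-1} q_j^{-1}$ as $q_{4}q_{k}$ for $\{i,j,k\}=\{1,2,3\}$, the resulting expression can be matched termwise against the definition of $g_{\bar{4}}$ in \eqref{eq:struct_funct}. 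A direct verification yields $\mathbb{I}[\mathbf{P}_{123}^\vee\cdot x_J/x_I] = g_{\bar{4}}(x_I/x_J)^{-1}$, and multiplying over $I \neq J$ produces the instanton kernel $\prod_{I\neq J}g_{\bar{4}}(x_I/x_J)^{-1}$ in the integrand.

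The diagonal contribution is the subtle step: the unit summand of $\mathbf{P}_{123}^\vee$ tensored with the trivial $k$-fold weight from the diagonal of $\mathbf{K}^\vee\mathbf{K}$ formally yields a vanishing factor $(1-1)^k$. The standard resolution is that these $k$ zero modes parametrize the Cartan torus of $\mathrm{GL}(\mathbf{K})$ and are reinterpreted as the Haar measure $\prod_I dx_I/(2\pi\iota x_I)$ together with the Weyl factor $1/k!$, while the remaining finite piece evaluates to the residue $\mathcal{G}_{\bar{4}} = \lim_{x\to 1}(1-x)\,g_{\bar{4}}(x)^{-1}$. A short calculation using $q_1 q_2 q_3 q_4 = 1$ confirms that this limit agrees with the closed form for $\mathcal{G}_{\bar{4}}$ in the statement, so the diagonal part assembles into the prefactor $\mathcal{G}_{\bar{4}}^k/k!$. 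Combining the three contributions reproduces the stated integrand; the $\oint$ is then interpreted as a Jeffrey--Kirwan residue selecting the torus-fixed points of $\mathfrak{M}_{n,k}$, and summing over $k$ with the topological weight $\mathfrak{q}^k$ yields the total partition function $\mathcal{Z}^{\D8}_{\text{inst.}}$. The principal obstacle is this diagonal regularization, because the absence of a perfect obstruction theory on a Calabi--Yau four-fold forces the ``square-root'' choice $\mathbf{V}_{\text{inst}} = \mathbf{v}_{\text{inst}} + \mathbf{v}_{\text{inst}}^\vee$ through the decomposition $\mathbf{P}_{\four} = \mathbf{P}_{123} + \mathbf{P}_{123}^\vee$; one must then verify that the other three decompositions (using $\mathbf{P}_{234}$, $\mathbf{P}_{134}$, $\mathbf{P}_{124}$) give the same answer up to the sign factors noted just after \eqref{eq:mag4ch}.
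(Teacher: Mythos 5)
Your derivation is essentially the paper's own (very terse) argument: apply the index functor $\mathbb{I}$ to the square-root character $\mathbf{v}_{\text{inst}}=-\mathbf{N}^{\vee}\mathbf{K}+\mathbf{P}_{123}^{\vee}\mathbf{K}^{\vee}\mathbf{K}$, read off $\prod_I \overline{P}(x_I)/P(x_I)$ from the matter piece, $\prod_{I\neq J}g_{\bar{4}}(x_I/x_J)^{-1}$ from the off-diagonal piece of $\mathbf{K}^{\vee}\mathbf{K}$, and the prefactor $\mathcal{G}_{\bar{4}}^{k}/k!$ together with the Haar measure from the diagonal. All of these identifications are correct, including the key check that $\mathbb{I}\left[(\mathbf{P}_{123}^{\vee}-1)\right]=\mathcal{G}_{\bar{4}}$ after using $q_1q_2q_3q_4=1$.

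One auxiliary formula in your diagonal step is written backwards: since $g_{\bar{4}}(x)^{-1}$ already carries a simple \emph{zero} at $x=1$ (the factor $(1-x)$ sits in the numerator of $g_{\bar{4}}(x)^{-1}$, being exactly the $\mathbb{I}[1]=0$ zero mode you want to strip), the limit $\lim_{x\to 1}(1-x)\,g_{\bar{4}}(x)^{-1}$ vanishes. The correct statement is
\begin{equation*}
\mathcal{G}_{\bar{4}}=\lim_{x\to 1}\frac{g_{\bar{4}}(x)^{-1}}{1-x}
=\frac{(1-q_1q_2)(1-q_1q_3)(1-q_2q_3)}{(1-q_1)(1-q_2)(1-q_3)(1-q_4^{-1})},
\end{equation*}
which is what your subsequent assembly actually uses. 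With that sign of the regularization fixed, the proof is complete and coincides with the derivation the paper imports from the LMNS-type computation of section~2.2; the quadrality check for the other square roots $\mathbf{P}_{234},\mathbf{P}_{134},\mathbf{P}_{124}$ is, as you note, only needed up to the sign ambiguity already acknowledged after the definition of $\mathbf{v}_{\text{inst}}$ and is not required for the stated formula.
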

Note that this contour integral formula is compatible with the symmetries of the moduli space given in \eqref{eq:D8ADHMvariablesymmetry}, \eqref{eq:D8ADHMconstrsymmetry} (see also the discussion in footnote~\ref{footnote:D2ADHMintegral}).

Actually, the poles (the equivariant fixed point in the moduli space) of the multi-integral are classified by $n$-tuples of solid partitions and we can rewrite the instanton partition function in a sum of solid partitions:
\begin{align}
\begin{split}
&\vec{v}=(v_{\alpha})_{\alpha=1,\ldots,n},\quad \vec{\rho}=(\rho^{(\alpha)})_{\alpha=1,\ldots,n},\quad |\rho|=\sum_{\alpha=1}^{n}|\rho^{(\alpha)}|=k,\\
&\{x_{I}\}_{I=1,\ldots,k}\longrightarrow\{\chi_{\four,v_{\alpha}}(\hcube)=v_{\alpha}q_{1}^{i-1}q_{2}^{j-1}q_{3}^{k-1}q_{4}^{l-1}\}_{\alpha=1,\ldots,n,\,\shcube=(i,j,k,l)}.
\end{split}
\end{align}
Note that the hyper cubes in the origin of the solid partitions have coordinates $v_{\alpha}\,(\alpha=1,\ldots,n)$ but not $\bar{v}_{\alpha}\,(\alpha=1,\ldots,n)$. This is because in the contour integral formula \eqref{eq:D8integral}, only the factors coming from $P(x)$ give poles but $\overline{P}(x)$ will not. At the fixed point $\vec{\rho}$, the character of the bundle $\bfK$ will be
\bea
\bfK|_{\vec{\rho}}=\sum_{\alpha=1}^{n}\sum_{\shcube\in\rho^{(\alpha)}}\chi_{\four,v_{\alpha}}(\hcube).
\eea
Since a $\D8$-brane will appear with a $\overline{\D8}$ as a pair, it is natural to use the following notation
\bea
    \bar{v}_{\alpha}=K_{\alpha}v_{\alpha},\quad \alpha=1,\ldots,n,\quad 
    \bfN=\sum_{\alpha=1}^{n}(1-K_{\alpha})v_{\alpha}
\label{eq:antiD8parameter}
\eea
where\footnote{Later in section \ref{sec:QTgl1}, we will see that this notation of $\{K_\alpha\}_{\alpha}$ is motivated from the central charge of the MacMahon representation of quantum toroidal $\mathfrak{gl}_{1}$. } $K_{\alpha}$ physically represents the distance between the $\D8_{\alpha}$ and $\overline{\D8}_{\alpha}$ branes. Inserting this to \eqref{eq:mag4ch} and taking the index, we obtain
\bea
    \mathcal{Z}^{\D8}_{\four;4}[\vec{v},\vec{\rho}\,;\vv{K}]&=\prod_{\alpha=1}^{n}\mathcal{Z}_{\four;4}^{\D8}[\rho^{(\alpha)};K_{\alpha}]\prod_{\beta>\alpha}\mathcal{Z}^{\D8\tbar\D8}_{K_{\alpha}|K_{\beta}}(v_{\alpha},\rho^{(\alpha)}\,|\,v_{\beta},\rho^{(\beta)}),\\
    \mathcal{Z}^{\D8}_{\four;4}[\rho\,;K]&=\prod_{\shcube\in\rho}\frac{(1-Kv/\chi_{\four,v}(\shcube))}{(1-v/\chi_{\four,v}(\shcube))}\prod_{\shcube,\shcube'\in\rho}g_{\bar{4}}\left(\frac{\chi_{\four,v}(\hcube)}{\chi_{\four,v}(\hcube')}\right)^{-1},\\
    \mathcal{Z}^{\D8\tbar\D8}_{K_{1}|K_{2}}(x_{1},\rho^{(1)}\,|\,x_{2},\rho^{(2)})&=\prod_{\shcube'\in\rho^{(2)}}\left(\frac{1-K_{1}x_{1}/\chi_{\four,x_{2}}(\hcube')}{1-x_{1}/\chi_{\four,x_{2}}(\hcube')}\right)\prod_{\shcube\in\rho^{(1)}}\left(K_{2}\frac{1-K_{2}^{-1}\chi_{\four,x_{1}}(\hcube)/x_{2}}{1-\chi_{\four,x_{1}}(\hcube)/x_{2}}\right)\\
    &\qquad\qquad\times \prod_{\substack{\shcube\in\rho^{(1)}\\\shcube'\in\rho^{(2)}}}\mathcal{A}_{\mathbb{C}^{4}}\left(\frac{\chi_{\four,x_{1}}(\hcube)}{\chi_{\four,x_{2}}(\hcube')}\right)^{-1}.
\label{eq:mag4Nekrasovfact}
\eea
The factor $\mathcal{Z}_{\four;4}^{\D8}[\rho\,;K]$ comes from the $\U(1|1)$ gauge theory for each $\D8\tbar\overline{\D8}$ stack. The subindex $4$ comes from the fact we are using $\bfP_{\bar{4}}=\bfP_{123}$ for the square root part. For later use, we introduce 
\bea
\mathcal{Z}^{\D8}_{\four;a}[\rho\,;K]&=\prod_{\shcube\in\rho}\frac{(1-Kv/\chi_{\four,v}(\shcube))}{(1-v/\chi_{\four,v}(\shcube))}\prod_{\shcube,\shcube'\in\rho}g_{\bar{a}}\left(\frac{\chi_{\four,v}(\hcube)}{\chi_{\four,v}(\hcube')}\right)^{-1},\quad a\in\four.\label{eq:mag4Nekrasovfact2}
\eea
The factor $\mathcal{Z}^{\D8\tbar\D8}_{K_{1}|K_{2}}(x_{1},\rho^{(1)}\,|\,x_{2},\rho^{(2)})$ comes from the open strings connecting the two different stacks $\D8_{\alpha}\tbar\overline{\D8}_{\alpha}$ and $\D8_{\beta}\tbar\overline{\D8}_{\beta}$.

\begin{proposition}[\cite{Nekrasov:2017cih,Nekrasov:2018xsb}]
The total instanton partition function of K-theoretic $\U(n|n)$ magnificent four system is given by summation over $n$-tuple solid partitions, 
\bea
\mathcal{Z}^{\D8}_{\text{inst.}}=\sum_{\vec{\rho}\in\mathcal{SP}}\mathfrak{q}^{|\vec{\rho}|}(-1)^{\sigma_{a}(\vec{\rho})}\mathcal{Z}_{\four;a}^{\D8}[\vec{v},\vec{\rho}\,;\vv{K}],\quad a\in\four\label{eq:mag4inst}
\eea
where $\sigma_{a}(\vec{\rho})$ is the sign factor depending on the choice of the orientation associated with the square root bundle.     
\end{proposition}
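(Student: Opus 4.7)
The plan is to derive the sum-over-solid-partitions formula by applying Jeffrey--Kirwan residue evaluation to the contour integral \eqref{eq:D8integral}, then simplify the residue at each fixed point using the character $\mathbf{v}_{\text{inst}}$ from \eqref{eq:mag4ch}. First I would classify the poles. The integrand contains simple poles from $P(x_I)^{-1} = \prod_\alpha (1 - v_\alpha/x_I)^{-1}$ and from the denominators of $g_{\bar{a}}(x_I/x_J)^{-1}$, which by the definition of $g_{\bar a}$ in \eqref{eq:struct_funct} provide poles whenever $x_J/x_I \in \{q_b, q_c, q_d, q_a^{-1}\}$ for $\bar a = \{b,c,d\}$. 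Following the standard JK argument (as used in the tetrahedron and spiked instanton cases), the admissible residues are parametrized by assignments $x_I \mapsto v_\alpha q_1^{i-1}q_2^{j-1}q_3^{k-1}q_4^{l-1}$ with $(i,j,k,l)$ ranging over an $n$-tuple of solid partitions $\vec{\rho} = (\rho^{(\alpha)})_\alpha$ of total size $k$, i.e.\ the admissible poles are exactly the $q$-coordinates introduced in section~\ref{sec:qcoordinate} for 4d partitions. The $\overline{P}(x_I)$ factor in the numerator has no poles and merely inserts the anti-$\D8$ dependence through the parameters $\bar v_\alpha = K_\alpha v_\alpha$.

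Next, I would evaluate the residue at a fixed point by using the character formula directly. Substituting $\bfK|_{\vec\rho} = \sum_\alpha \sum_{\shcube \in \rho^{(\alpha)}} \chi_{\four,v_\alpha}(\hcube)$ and $\bfN = \sum_\alpha (1-K_\alpha) v_\alpha$ into $\mathbf{v}_{\text{inst}} = -\bfN^\vee \bfK + \bfP_{123}^\vee \bfK^\vee \bfK$ and applying the index functor $\mathbb{I}[\cdot]$ term by term, the bifundamental-type contribution $\bfK^\vee \bfK$ splits into the diagonal blocks (giving $\mathcal{Z}^{\D8}_{\four;a}[\rho^{(\alpha)};K_\alpha]$) and off-diagonal blocks (giving $\mathcal{Z}^{\D8\tbar\D8}_{K_\alpha|K_\beta}$). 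The $\bfN^\vee\bfK$ piece, combined with the diagonal $\bfK^\vee\bfK$ block, produces the prefactor $\prod_{\shcube}(1-Kv/\chi)/(1-v/\chi)$ after converting $-\bfP_{\bar a}^\vee$ characters into $g_{\bar a}^{-1}$ via \eqref{eq:struct_funct}. Checking consistency with the Haar and Vandermonde factors amounts to verifying that residue-theorem normalizations match the combinatorial prefactor $\mathcal{G}_{\bar a}^k/k!$; this is a direct application of the reflection identities \eqref{eq:reflec_structfunc} and the factorization $\mathcal{A}_{\mathbb C^4}(x) = g_{\bar a}(x)/g_{\bar a}(q_a x)$, so that the pairwise structure functions between poles in the same brane reproduce the $g_{\bar a}$-products inside $\mathcal{Z}^{\D8}_{\four;a}[\rho^{(\alpha)};K_\alpha]$ and between different branes reproduce $\mathcal{A}_{\mathbb C^4}$ in $\mathcal{Z}^{\D8\tbar\D8}_{K_\alpha|K_\beta}$.

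The last step is the sign factor $(-1)^{\sigma_a(\vec\rho)}$. This is where the genuine subtlety lies: because $\mathbf{V}_{\text{inst}}$ is self-dual and we must take a square-root class $\mathbf{v}_{\text{inst}}$, any two choices (labeled by $a \in \four$) differ by a well-defined sign determined by the Koszul rule applied to $\bfP_{\bar a}^\vee \bfK^\vee \bfK$ versus its dual. Using the reflection identity \eqref{eq:reflec_origamiindex} for $\bfP_{abc}$ and the formula $\mathbb{I}[\bfX^\vee] = (-1)^{\mathrm{rk}\bfX}(\det \bfX)\,\mathbb{I}[\bfX]$, the sign at the fixed point $\vec\rho$ collapses to a product of signs over the boxes of $\vec\rho$ that are independent of $\vec v$, which is what is denoted $\sigma_a(\vec\rho)$. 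Independence of the final partition function from $a$ (up to these signs) is the content of the orientation ambiguity discussed after \eqref{eq:mag4ch}.

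I expect the main obstacle to be this sign. The combinatorial evaluation of residues is essentially a computation, and the character-to-Nekrasov-factor translation is mechanical once one organizes $\bfK|_{\vec\rho}$ by $\alpha$. By contrast, fixing $\sigma_a(\vec\rho)$ requires a careful choice of orientation data on the equivariant $K$-theoretic virtual structure sheaf of the Hilbert/Quot scheme of $\mathbb C^4$ (Oh--Thomas type), matching the conventions of \cite{Nekrasov:2017cih,Nekrasov:2018xsb}. I would address this last by specializing to small $|\vec\rho|$, comparing with the known $\U(1|1)$ magnificent four formula, and then fixing the general sign by the multiplicative compatibility under box-addition inherited from the recursion \eqref{eq:5dNekrasovrecursion} applied in the four-dimensional analogue.
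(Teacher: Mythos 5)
Your proposal follows essentially the same route as the paper: the contour integral \eqref{eq:D8integral} is evaluated on poles classified by $n$-tuples of solid partitions (with only $P(x)$, not $\overline{P}(x)$, contributing poles), the character $\bfK|_{\vec\rho}$ is substituted into $\mathbf{v}_{\text{inst}}$ from \eqref{eq:mag4ch}, and the index yields \eqref{eq:mag4Nekrasovfact}, with the sign $(-1)^{\sigma_a(\vec\rho)}$ attributed to the orientation of the square-root bundle exactly as in \cite{Nekrasov:2017cih,Nekrasov:2018xsb}. You correctly identify the sign as the only genuinely delicate point — the paper likewise does not derive it but imports it from those references, and its later algebraic construction of the D8 $qq$-character (section~\ref{sec:fusionD6toD8}) reproduces the partition function only up to this sign.
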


\paragraph{One-loop perturbative part}
For the one-loop perturbative part, we need to take the index of the square root part of $\mathring{\bfV}$. To do this, we first omit the singular part, specify an ordering in the parameters $\{v_{\alpha}\}$ and then take half of them: 
\bea
    &\mathring{\bfV}=\frac{\bfN^{\vee}\bfN}{\bfP_{\four}}=\frac{1}{\bfP_{\four}}\sum_{\alpha,\beta}(1-K_{\alpha}^{\vee})(1-K_{\beta})v_{\beta}/v_{\alpha}\rightarrow \mathring{\mathbf{v}}+\mathring{\mathbf{v}}^{\vee},\\
    &\mathring{\mathbf{v}}=\frac{1}{\mathbf{P}_{\four}}\sum_{v_{\beta}>v_{\alpha}}(1-K_{\alpha}^{\vee})(1-K_{\beta})v_{\beta}/v_{\alpha}.
\eea
The index will then be
\bea\label{eq:D8oneloop}
    \mathbb{I}[\mathring{\mathbf{v}}]&=\prod_{\alpha<\beta}\mathcal{Z}_{\text{1-loop}}^{\D8\tbar\D8}(v_{\alpha},K_{\alpha}\,|\,v_{\beta},K_{\beta})\eqqcolon\mathcal{Z}_{\text{1-loop}}^{\D8},\\
    \mathcal{Z}_{\text{1-loop}}^{\D8\tbar\D8}(v_{1},K_{1}\,|\,v_{2},K_{2})&=\exp\left(-\sum_{n>0}\frac{1}{n}\frac{(1-K_{2}^{-n})(1-K_{1}^{n})}{\bfP_{\four}^{[n]}}\left(\frac{v_{1}}{v_{2}}\right)^{n}\right).
\eea
The one-loop perturbative part can be written using the $q$-shifted factorial or $q$-deformed multi-gamma functions in Appendix \ref{sec:q-functions} and \eqref{eq:D8oneloop-qgamma}.

\subsection{Tetrahedron instanton }\label{sec:Tetrahedron_inst}
The instanton partition function of the tetrahedron instanton was first computed in \cite{Pomoni:2021hkn}. We also note that the contour integral formula of the $\mathbb{C}^3$-partition function was given in~\cite{Cirafici:2008sn,Kanno:2020ybd}. Let us review the explicit formulas. 

The total index of the tetrahedron instanton system resembles the magnificent four system. The different part is that the total observable sheaf $\bfY$ will be a sum of the observable sheaves\footnote{Note here that the subindex $\bar{a}$ of $\bfY_{\bar{a}},\bfN_{\bar{a}},\bfK_{\bar{a}}$ does not mean that these bundles are products of bundles with index $a$ such as $\bfN_{123}=\bfN_{1}\bfN_{2}\bfN_{3}$. } of $\bfY_{\bar{a}}$ corresponding to the theory on $\mathbb{C}^{3}_{\bar{a}}\times \mathbb{S}^{1}$:
\bea\label{eq:D6Ybundle}
\bfV=\frac{\bfY^{\vee}\bfY}{\bfP_{\four}},\quad \bfY=\sum_{a\in\four}\bfP_{a}\bfY_{\bar{a}},\quad \bfY_{\bar{a}}=\bfN_{\bar{a}}-\bfP_{\bar{a}}\bfK_{\bar{a}}
\eea
which gives 
\bea
    \bfV&=\mathring{\bfV}+\bfV_{\text{inst.}},\quad 
    \mathring{\bfV}=\sum_{a,b\in\four}\frac{\bfP_{a}^{\vee}\bfP_{b}}{\bfP_{\four}}\bfN_{\bar{a}}^{\vee}\bfN_{\bar{b}},\\
    \bfV_{\text{inst.}}&=\sum_{a,b\in\four}\left(-\bfP_{b}\bfN_{\bar{b}}\bfK_{\bar{a}}^{\vee}-\bfP_{a}^{\vee}\bfN_{\bar{a}}^{\vee}\bfK_{\bar{b}}+\bfP_{\four}\bfK_{\bar{a}}^{\vee}\bfK_{\bar{b}}\right),
\eea
where 
\bea
\bfN_{\bar{a}}=\sum_{\alpha=1}^{n_{\bar{a}}}e^{a_{\bar{a},\alpha}}=\sum_{\alpha=1}^{n_{\bar{a}}}v_{\bar{a},\alpha},\quad \bfK_{\bar{a}}=\sum_{I=1}^{k_{\bar{a}}}e^{\phi_{\bar{a},I}}=\sum_{I=1}^{k_{\bar{a}}}x_{\bar{a},I},\quad a\in\four. \label{eq:D6chdef}
\eea
Note that the observable sheaves can be rewritten as 
\bea\label{eq:tetratotalinst}
\bfY=\sum_{a\in\four}\bfP_{a}\bfN_{\bar{a}}-\bfP_{\four}\sum_{a\in\four}\bfK_{\bar{a}}=\sum_{a\in\four}\bfP_{a}\bfN_{\bar{a}}-\bfP_{\four}\bfK
\eea
where $\bfK=\sum_{a\in\four}\bfK_{\bar{a}}$ is the vector space introduced in section~\ref{sec:physicalsetup}. For later use, we decomposed the instanton bundle into components $\bfK_{\bar{a}}$ so that the instanton partition function after localization has a nice decomposition formula.
Similar to the magnificent four setup, we need to take the square root part of $\bfV$. For the instanton part, we choose the following square root part:
\bea
    \mathbf{v}_{\text{inst.}}&=\sum_{a\in\four}\left(-\mathbf{P}_{a}^{\vee}\bfN_{\bar{a}}^{\vee}\bfK_{\bar{a}}+\frac{\bfP_{\four}}{\bfP^{\vee}_{
a}}\bfK_{\bar{a}}^{\vee}\bfK_{\bar{a}}\right)-\sum_{a\in\four}\sum_{b\neq a}\bfP_{a}^{\vee}\bfN_{\bar{a}}^{\vee}\bfK_{\bar{b}}+\sum_{a<b}\bfP_{\four}\bfK_{\bar{a}}^{\vee}\bfK_{\bar{b}},\\
\bfV_{\text{inst.}}&=\mathbf{v}_{\text{inst.}}+\mathbf{v}_{\text{inst.}}^{\vee}
\label{eq:D6tetrainstch}
\eea
where the first two terms give the partition function of the theory on $\mathbb{C}^{3}_{\bar{a}}\times \mathbb{S}^{1}$ and the last two terms give the bifundamental contributions coming from open strings connecting $\mathbb{C}^{3}_{\bar{a}}\times \mathbb{S}^{1}$ and $\mathbb{C}^{3}_{\bar{b}}\times \mathbb{S}^{1}$.

The contour integral formula is obtained by inserting \eqref{eq:D6chdef} into \eqref{eq:D6tetrainstch} and taking the index.
\begin{proposition}[\cite{Pomoni:2021hkn}]
The $\underline{k}$-instanton partition function of the tetrahedron instanton system is given by the contour integral,
\bea
    \mathcal{Z}^{\D6}_{\underline{k}}=\mathbb{I}[\mathbf{v}_{\text{inst.}}]&=\frac{\underline{\mathcal{G}}^{\underline{k}}}{\underline{k}!}\oint\prod_{a\in\four}\prod_{I=1}^{k_{\bar{a}}}\frac{dx_{\bar{a},I}}{2\pi\iota x_{\bar{a},I}}\prod_{a,b\in\four}\prod_{\alpha=1}^{n_{\bar{a}}}\prod_{I=1}^{k_{\bar{b}}}\mathscr{V}_{a}\left(\frac{v_{\bar{a},\alpha}}{x_{\bar{b},I}}\right)\\
    &\qquad \times\prod_{a\in\four}\prod_{I\neq J}^{k_{\bar{a}}}g_{\bar{a}}\left(\frac{x_{\bar{a},J}}{x_{\bar{a},I}}\right)^{-1}\prod_{a<b}\prod_{I=1}^{k_{\bar{a}}}\prod_{J=1}^{k_{\bar{b}}}\mathcal{A}_{\mathbb{C}^{4}}\left(\frac{x_{\bar{a},I}}{x_{\bar{b},J}}\right)^{-1} 
\label{eq:D6integral}
\eea
where 
\bea
\underline{\mathcal{G}}^{\underline{k}}=\prod_{a\in\four}\mathcal{G}_{\bar{a}}^{k_{\bar{a}}},\qquad\underline{k}!=\prod_{a\in\four}k_{\bar{a}}!.
\eea
The total partition function is then given by
\bea
\mathcal{Z}^{\D6}_{\text{inst.}}=\sum_{k=0}^{\infty}\mathfrak{q}^{k}\sum_{\substack{(k_{\bar{a}})_{a\in\four},\\\sum_{a}k_{\bar{a}}=k}}\mathcal{Z}^{\D6}_{\underline{k}}.
\eea    
\end{proposition}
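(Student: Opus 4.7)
The strategy is to compute $\mathbb{I}[\mathbf{v}_{\text{inst.}}]$ directly by substituting the Cartan decompositions
\[
\mathbf{N}_{\bar a} = \sum_{\alpha=1}^{n_{\bar a}} v_{\bar a,\alpha},\qquad \mathbf{K}_{\bar a} = \sum_{I=1}^{k_{\bar a}} x_{\bar a,I},
\]
into the expression
\[
\mathbf{v}_{\text{inst.}} = \sum_{a\in\four}\Bigl(-\mathbf{P}_a^{\vee}\mathbf{N}_{\bar a}^\vee\mathbf{K}_{\bar a} + \tfrac{\mathbf{P}_{\four}}{\mathbf{P}_a^\vee}\mathbf{K}_{\bar a}^\vee\mathbf{K}_{\bar a}\Bigr) -\sum_{a}\sum_{b\ne a}\mathbf{P}_a^\vee\mathbf{N}_{\bar a}^\vee\mathbf{K}_{\bar b} +\sum_{a<b}\mathbf{P}_{\four}\mathbf{K}_{\bar a}^\vee\mathbf{K}_{\bar b},
\]
and then applying the index functor $\mathbb{I}$ term by term. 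The key identities I will invoke are the definitions and reflection properties already established in section~\ref{sec:equiv-index}: $\mathbb{I}[-\mathbf{P}_a^\vee y^\vee]=\mathscr{V}_a(y)$, $\mathbb{I}[-\mathbf{P}_{\bar a}y]=g_{\bar a}(y)$ (which uses $\mathbb{I}[\mathbf{P}_{abc}x]=\mathbb{I}[\mathbf{P}_{abc}^\vee x^\vee]$), the reflection $g_{\bar a}(x)=g_{\bar a}(q_a x^{-1})^{-1}$, and $\mathcal{A}_{\mathbb{C}^4}(x)=\mathbb{I}[-\mathbf{P}_{\four}x]$.

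First I would handle the two $\mathbf{N}^\vee\mathbf{K}$ contributions (diagonal $a=b$ and off-diagonal $a\ne b$), which together give $\prod_{a,b\in\four}\prod_{\alpha,I}\mathscr{V}_a(v_{\bar a,\alpha}/x_{\bar b,I})$. Next the cross $\mathbf{K}^\vee\mathbf{K}$ term with $a<b$: since $\mathbf{P}_{\four}^\vee=\mathbf{P}_{\four}$ there is no zero-mode issue, and the computation $\mathbb{I}[\mathbf{P}_{\four}y]^{-1}=\mathcal{A}_{\mathbb{C}^4}(y)$ yields $\prod_{a<b}\prod_{I,J}\mathcal{A}_{\mathbb{C}^4}(x_{\bar a,I}/x_{\bar b,J})^{-1}$. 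For the adjoint-type term I first rewrite $\mathbf{P}_{\four}/\mathbf{P}_a^\vee=-q_a\mathbf{P}_{\bar a}$, so that the off-diagonal contributions $I\ne J$ produce $\mathbb{I}[-q_a\mathbf{P}_{\bar a}x_{\bar a,I}/x_{\bar a,J}]=g_{\bar a}(q_a x_{\bar a,I}/x_{\bar a,J})=g_{\bar a}(x_{\bar a,J}/x_{\bar a,I})^{-1}$ via the reflection formula, reproducing the $\prod_{I\ne J}g_{\bar a}^{-1}$ factor.

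The main obstacle, and the technically interesting step, is the treatment of the diagonal $I=J$ part of $-q_a\mathbf{P}_{\bar a}\mathbf{K}_{\bar a}^\vee\mathbf{K}_{\bar a}$, which evaluates formally to $\mathbb{I}[-q_a\mathbf{P}_{\bar a}]$: the weight-zero monomial in $-q_a\mathbf{P}_{\bar a}$ gives a factor of $\mathbb{I}[1]=0$, signaling a genuine zero mode. I would regulate this by stripping out the zero-mode monomial and identifying its contribution with the Haar volume of the Cartan torus of $\mathrm{GL}(\mathbf{K}_{\bar a})$; equivalently, one can read off the finite residue $\lim_{y\to 1}(1-y)^{-1}g_{\bar a}(y)^{-1}$, which a short calculation identifies with the constant $\mathcal{G}_{\bar a}$ defined in the statement. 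This produces the prefactor $\underline{\mathcal{G}}^{\underline k}=\prod_a\mathcal{G}_{\bar a}^{k_{\bar a}}$ and simultaneously converts the formal sum over $\mathbf{K}$-weights into the contour integral $\oint\prod_{a,I}\tfrac{dx_{\bar a,I}}{2\pi\iota x_{\bar a,I}}$, with the contour fixed by the Jeffrey--Kirwan prescription associated with the stability parameter $\zeta>0$ of \eqref{eq:D6KKcondition}.

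Finally, the factor $1/\underline k!=\prod_a 1/k_{\bar a}!$ is the residual Weyl-group symmetry of the unbroken gauge group $\prod_{a\in\four}\U(k_{\bar a})$ after the Cartan localization, implementing the permutation symmetry among the fixed points labelled by tuples of plane partitions supported on each $\mathbb{C}^3_{\bar a}$ component. Assembling the four factors produced in the preceding paragraphs and weighting by $\mathfrak{q}^k$ with $k=\sum_a k_{\bar a}$ yields the total partition function $\mathcal{Z}^{\D6}_{\text{inst.}}$. Consistency with the expected symmetries \eqref{eq:D6gaugesymmetry}--\eqref{eq:D6ADHMconstrsymmetry} of the ADHM data then serves as a cross-check, exactly along the lines sketched in footnote~\ref{footnote:D2ADHMintegral} for the coupled-vortex system.
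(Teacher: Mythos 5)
Your proposal is correct and follows essentially the same route the paper takes (which it leaves as a one-line remark): substitute the Cartan characters \eqref{eq:D6chdef} into the square-root character \eqref{eq:D6tetrainstch}, apply $\mathbb{I}$ monomial by monomial using $\mathbb{I}[-\mathbf{P}_a^\vee y^\vee]=\mathscr{V}_a(y)$, the reflection $g_{\bar a}(q_a y)=g_{\bar a}(y^{-1})^{-1}$ and $\mathbb{I}[\mathbf{P}_{\four}y]=\mathcal{A}_{\mathbb{C}^4}(y)^{-1}$, and trade the diagonal zero mode $\llbracket 0\rrbracket$ for the measure $\oint dx/(2\pi\iota x)$ with finite residue $\mathcal{G}_{\bar a}=\lim_{y\to 1}(1-y)^{-1}g_{\bar a}(y)^{-1}$ and Weyl factor $1/\underline{k}!$. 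The only cosmetic slip is attributing the absence of zero modes in the cross term $a<b$ to $\mathbf{P}_{\four}^\vee=\mathbf{P}_{\four}$ — the actual reason is that $x_{\bar a,I}$ and $x_{\bar b,J}$ are independent so no weight-zero monomials occur there; the self-duality only guarantees $\mathcal{A}_{\mathbb{C}^4}(x)=\mathcal{A}_{\mathbb{C}^4}(x^{-1})$.
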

We also note that this contour integral formula is compatible with the symmetries given in \eqref{eq:D6gaugesymmetry}, \eqref{eq:D6flavorsymmetry}, \eqref{eq:D6rotationalsymmetry}, \eqref{eq:D6ADHMconstrsymmetry}.

The poles of the contour integral are classified by plane partitions:
\bea
&\underline{\vec{v}}=(\vec{v}_{\bar{a}})_{a\in\four}=(v_{\bar{a},\alpha})_{a\in\four}^{\alpha=1,\ldots,n_{\bar{a}}},\quad \underline{\vec{\pi}}=(\vec{\pi}_{\bar{a}})_{a\in\four}=(\pi^{(\alpha)}_{\bar{a}})_{a\in\four}^{\alpha=1,\ldots, n_{\bar{a}}},\quad |\underline{\vec{\pi}}|=\sum_{a\in\four}\sum_{\alpha=1}^{n_{\bar{a}}}|\pi_{\bar{a}}^{(\alpha)}|,\\
&\{x_{\bar{a},I}\}_{a\in\four}^{I=1,\ldots,k_{\bar{a}}}\longrightarrow \{\chi_{\bar{a},v_{\bar{a},\alpha}}(\cube)\}_{a\in\four,\,\,\scube\,\in \pi_{\bar{a}}^{(\alpha)}}^{\alpha=1,\ldots,n_{\bar{a}}},\quad \chi_{abc,v}(\cube)=vq_{a}^{i-1}q_{b}^{j-1}q_{c}^{k-1}.
\eea
At the fixed points, the character of $\bfK_{\bar{a}}$ will be 
\bea
\left.\bfK_{\bar{a}}\right|_{\vec{\pi}_{\bar{a}}}=\sum_{\alpha=1}^{n_{\bar{a}}}\sum_{\scube\in\pi_{\bar{a}}^{(\alpha)}}\chi_{\bar{a},v_{\bar{a},\alpha}}(\cube),\quad a\in\four.
\eea
Inserting this in \eqref{eq:D6tetrainstch} and taking the index, we have 
\bea\label{eq:D6tetinst_partfunct}
    \mathcal{Z}_{\text{tet.inst.}}^{\D6}[\vec{\underline{v}},\vec{\underline{\pi}}]&=\prod_{a\in\four}\prod_{\alpha=1}^{n_{\bar{a}}}\widetilde{\mathcal{Z}}^{\D6}_{\bar{a}}[\pi_{\bar{a}}^{(\alpha)}]\prod_{a\in\four}\prod_{1\leq\alpha<\beta\leq n_{\bar{a}}}\mathcal{Z}^{\D6\tbar\D6}_{\bar{a}|\bar{a}}(v_{\bar{a},\alpha},\pi_{\bar{a}}^{(\alpha)}\,|\,v_{\bar{a},\beta},\pi_{\bar{a}}^{(\beta)}) \\
    &\qquad \times \prod_{a<b}\prod_{\alpha=1}^{n_{\bar{a}}}\prod_{\beta=1}^{n_{\bar{b}}}\mathcal{Z}^{\D6\tbar\D6}_{\bar{a}|\bar{b}}(v_{\bar{a},\alpha},\pi_{\bar{a}}^{(\alpha)}\,|\,v_{\bar{b},\beta},\pi_{\bar{b}}^{(\beta)}),
\eea
where
\bea
    \widetilde{Z}^{\D6}_{\bar{a}}[\pi]&=\prod_{\scube\in\pi}\frac{1-q_{a}v/\chi_{\bar{a},v}(\cube)}{1-v/\chi_{\bar{a},v}(\cube)}\prod_{\substack{\scube\in\pi\\\scubeF\in\pi}}g_{\bar{a}}\left(\frac{\chi_{\bar{a},v}(\cube)}{\chi_{\bar{a},v}(\cubeF)}\right)^{-1},\\
    \mathcal{Z}_{\bar{a}|\bar{b}}^{\D6\tbar\D6}(v_{1},\pi^{(1)}\,|\,v_{2},\pi^{(2)})&=\prod_{\scube\in\pi^{(1)}}\left(q_{b}\frac{1-q_{b}^{-1}\chi_{\bar{a},v_{1}}(\cube)/v_{2}}{1-\chi_{\bar{a},v_{1}}(\cube)/v_{2}}\right)\prod_{\scubeF\in\pi^{(2)}}\left(\frac{1-q_{a}v_{1}/\chi_{\bar{b},v_{2}}(\cubeF)}{1-v_{1}/\chi_{\bar{b},v_{2}}(\cubeF)}\right) \\
    &\qquad \times\prod_{\substack{\scube\in\pi^{(1)}\\\scubeF\in\pi^{(2)}}}\mathcal{A}_{\mathbb{C}^{4}}\left(\frac{\chi_{\bar{a},v_{1}}(\cube)}{\chi_{\bar{b},v_{2}}(\cubeF)}\right)^{-1}.
\eea
\begin{proposition}
The total instanton partition function of the tetrahedron instanton system is given by summation over plane partitions,
\bea
\mathcal{Z}_{\text{inst.}}^{\D6}=\sum_{\vec{\underline{\pi}}}\mathfrak{q}^{|\vec{\underline{\pi}}|}\mathcal{Z}^{\D6}_{\text{tet.inst.}}[\underline{\vec{v}},\vec{\underline{\pi}}]
\eea 
\end{proposition}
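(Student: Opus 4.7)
\medskip

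\noindent\textbf{Proposal.} The statement is a localization/JK-residue evaluation: the contour integral formula \eqref{eq:D6integral} for $\mathcal{Z}^{\D6}_{\underline{k}}$ must be evaluated pole by pole, and the sum over residues repackaged as a sum over tuples of plane partitions labeled by $(a,\alpha)$. The plan is to feed the characters \eqref{eq:D6chdef} of $\bfN_{\bar{a}}$ and $\bfK_{\bar{a}}$ into the square-root bundle $\mathbf{v}_{\text{inst.}}$ of \eqref{eq:D6tetrainstch}, localize the $x_{\bar{a},I}$-integrals, and observe that the set of Cartan-torus fixed points of the tetrahedron moduli space \eqref{eq:D6KKcondition}--\eqref{eq:D6NKcondition} is in bijection with tuples $\underline{\vec{\pi}} = (\pi^{(\alpha)}_{\bar{a}})$. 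Once the fixed-point character $\bfK_{\bar{a}}|_{\vec{\pi}_{\bar{a}}} = \sum_{\alpha,\,\scube\in\pi_{\bar{a}}^{(\alpha)}} \chi_{\bar{a},v_{\bar{a},\alpha}}(\cube)$ is substituted into $\mathbf{v}_{\text{inst.}}$, the index factorizes into the single-stack pieces $\widetilde{\mathcal{Z}}^{\D6}_{\bar{a}}$ and the bifundamental pieces $\mathcal{Z}^{\D6\tbar\D6}_{\bar{a}|\bar{b}}$ displayed in \eqref{eq:D6tetinst_partfunct}.

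\medskip

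\noindent\textbf{Step 1: pole structure.} First I would analyze which poles of \eqref{eq:D6integral} are selected by the JK prescription. The factors $\mathscr{V}_a(v_{\bar{a},\alpha}/x_{\bar{b},I})$ supply initial poles at $x_{\bar{b},I} = v_{\bar{a},\alpha}$ only when $b=a$ (so that $\mathscr{V}_a$ contributes a $1/(1-v_{\bar{a},\alpha}/x_{\bar{a},I})$ with the correct orientation after choosing the JK chamber), fixing each integration variable $x_{\bar{a},I}$ to belong to a cascade rooted at some $v_{\bar{a},\alpha}$. The structure function $g_{\bar{a}}(x_{\bar{a},J}/x_{\bar{a},I})^{-1}$ has the three poles $x_{\bar{a},J} = q_c x_{\bar{a},I}$ for $c\in\bar{a}$; iterating this cascade generates exactly the box content $\chi_{\bar{a},v}(\cube) = v q_a^{i-1}q_b^{j-1}q_c^{k-1}$ of a 3d partition in the $\mathbb{C}^3_{\bar{a}}$ directions. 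The numerator zeroes of $g_{\bar{a}}$ at $q_a^{-1} x_{\bar{a},I}$ and at $q_{ab}^{-1}$-type locations are precisely what enforces the non-increasing-stacking condition of a plane partition. The cross-factors $\mathcal{A}_{\mathbb{C}^4}(x_{\bar{a},I}/x_{\bar{b},J})^{-1}$ carry no poles that survive JK (by symmetry in $a\leftrightarrow b$ using the reflection property \eqref{eq:reflec_structfunc}), so different stacks do not talk to each other at the level of pole enumeration.

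\medskip

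\noindent\textbf{Step 2: residue evaluation and factorization.} Once the fixed-point bijection $\{x_{\bar{a},I}\} \leftrightarrow \underline{\vec{\pi}}$ is established, I would substitute the fixed-point character $\bfK_{\bar{a}}|$ into \eqref{eq:D6tetrainstch} and take the index. The diagonal piece $-\bfP^\vee_a \bfN^\vee_{\bar{a}} \bfK_{\bar{a}} + \bfP_{\four}\bfP^{\vee\,-1}_a \bfK_{\bar{a}}^\vee \bfK_{\bar{a}}$ reproduces the single-stack factors in $\widetilde{\mathcal{Z}}^{\D6}_{\bar{a}}$ and the same-$\bar{a}$-different-$\alpha$ factor $\mathcal{Z}^{\D6\tbar\D6}_{\bar{a}|\bar{a}}$; the off-diagonal pieces $-\bfP^\vee_a \bfN^\vee_{\bar{a}}\bfK_{\bar{b}}$ and $\bfP_{\four}\bfK^\vee_{\bar{a}}\bfK_{\bar{b}}$ for $a<b$ reproduce the bifundamentals $\mathcal{Z}^{\D6\tbar\D6}_{\bar{a}|\bar{b}}$. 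The cancellation of spurious $q_a$-prefactors between numerators and denominators, and the $\mathcal{G}_{\bar{a}}^{k_{\bar{a}}}/k_{\bar{a}}!$ Weyl-normalization in \eqref{eq:D6integral}, collapse correctly against the Jacobian of the residue. Finally, summing over $\underline{k}$ with the instanton-counting parameter $\mathfrak{q}^{|\underline{\vec{\pi}}|}$ and over all partitions gives the claimed formula.

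\medskip

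\noindent\textbf{Main obstacle.} The routine calculations (factorization of the index into the $\widetilde{\mathcal{Z}}^{\D6}_{\bar{a}}$ and $\mathcal{Z}^{\D6\tbar\D6}_{\bar{a}|\bar{b}}$ blocks) are standard once the fixed-point classification is in hand. The genuinely delicate step is the classification itself: proving that the JK-selected poles of \eqref{eq:D6integral} are in bijection with $(n_{\bar{a}})_{a\in\four}$-tuples of plane partitions, with no spurious contributions from cross-stack $\mathcal{A}_{\mathbb{C}^4}$-poles and no missing contributions from the boundary of the cascade. Concretely, one must check that the vanishing of $g_{\bar{a}}$ numerator factors at $x_{\bar{a},J} = q_a^{-1} x_{\bar{a},I}$ and $x_{\bar{a},J} = q_a^{-1}q_c^{-1} x_{\bar{a},I}$ precisely excises the sub-diagrams that violate the plane-partition stacking condition, and that the square-root choice \eqref{eq:D6tetrainstch} makes all cross-stack residues non-singular. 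This is the tetrahedron analogue of Nakajima--Yoshioka's tangent-space analysis for ADHM and can be carried out by adapting the arm/leg bookkeeping to the three-axis geometry of $\mathbb{C}^3_{\bar{a}}$, combined with the identity \eqref{eq:D6Ftermconverse} relating $s_A = 0$ to the F-term $\mu_A = 0$.
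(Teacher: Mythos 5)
Your proposal follows the same route as the paper: localize the contour integral \eqref{eq:D6integral} on poles classified by tuples of plane partitions, substitute the fixed-point character of $\bfK_{\bar{a}}$ into the square-root bundle \eqref{eq:D6tetrainstch}, and read off the factorized index \eqref{eq:D6tetinst_partfunct}. Note that the paper does not itself carry out the JK pole classification you correctly single out as the main obstacle — it asserts the fixed-point structure and relies on \cite{Pomoni:2021hkn} (and \cite{Fasola:2023ypx} for the absence of sign factors) — so your outline is, if anything, more explicit about where the real work lies, modulo some loose bookkeeping of which factors of $g_{\bar{a}}^{\pm 1}$ supply the selected poles versus the zeros that enforce the stacking condition.
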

We remark that, for this case, compared to the magnificent four system, we do not need any sign factor as proved by~\cite{Fasola:2023ypx}. Later we will also see this property in the algebraic formalism.   

\paragraph{Nekrasov factors for 7d theory}
Following the 5d Nekrasov factors in \eqref{eq:D4Nekrasovfactor}, we may define the Nekrasov factors for the tetrahedron instanton system as 
\bea
    \mathsf{N}_{\bar{a}}(v_{1},\pi_{1}\,|\,v_{2},\pi_{2})=\frac{\prod\limits_{\scube\in\pi_{2}}\left(1-v_{1}/\chi_{\bar{a},v_{2}}(\cube)\right)}{\prod\limits_{\scube\in\pi_{1}}\left(1-q_{a}^{-1}\chi_{\bar{a},v_{1}}(\cube)/v_{2}\right)}\prod_{\substack{\scube\in\pi_{1}\\\scubeF\in\pi_{2}}}g_{\bar{a}}\left(\frac{\chi_{\bar{a},v_{1}}(\cube)}{\chi_{\bar{a},v_{2}}(\cubeF)}\right).
\eea
It is a part of the following index
\bea
    \mathbb{I}\left[\bfN_{\bar{a},1}^{\vee}\bfK_{\bar{a},2}-q_{a}\bfK_{\bar{a},1}^{\vee}\bfN_{\bar{a},2}-\bfP_{\bar{a}}^{\vee}\bfK_{\bar{a},1}^{\vee}\bfK_{\bar{a},2}\right]
\eea
which represents a different square root from what we used in \eqref{eq:D6tetrainstch}. The square root we use here is symmetric in the sense that we have both $\bfN_{\bar{a},i}\,(i=1,2)$ and $\bfK_{\bar{a},i}\,(i=1,2)$.

\begin{lemma}
We have the following recursion relations for the 7d Nekrasov factor:
\bea\label{eq:D6Nekrasovrecursion}
    \frac{\mathsf{N}_{\bar{a}}(v_{1},\pi_{1}+\cube\,|\,v_{2},\pi_{2})}{\mathsf{N}_{\bar{a}}(v_{1},\pi_{1}\,|\,v_{2},\pi_{2})}&=\mathscr{W}^{\bar{a}\,\vee}_{\pi_{2},v_{2}}(q_{a}^{-1}\chi_{\bar{a},v_{1}}(\cube))^{-1}\\
    \frac{\mathsf{N}_{\bar{a}}(v_{1},\pi_{1}\,|\,v_{2},\pi_{2}+\cubeF)}{\mathsf{N}_{\bar{a}}(v_{1},\pi_{1}\,|\,v_{2},\pi_{2})}&=\mathscr{W}^{\bar{a}}_{\pi_{1},v_{1}}(\chi_{\bar{a},v_{2}}(\cubeF)),
\eea
where 
\bea\label{eq:D6Nekrasov-shell}
    \mathscr{W}_{\pi,v}^{\bar{a}}(x)=(1-v/x)\prod_{\scube\in\pi}g_{\bar{a}}\left(\chi_{\bar{a},v}(\cube)/x\right)\propto \prod_{\cube\in A(\pi)}\left(1-\chi_{\bar{a},v}(\cube)/x\right)\prod_{\scube\in R(\pi)}\left(1-q_{a}^{-1}\chi_{\bar{a},v}(\cube)/x\right)
\eea
and 
\bea
\mathscr{W}^{\bar{a}\,\vee}_{\pi,v}(x)=\left(1-\frac{x}{v}\right)\prod_{\scube\in\pi}g_{\bar{a}}\left(q_{a}\frac{x}{\chi_{\bar{a},v}(\cube)}\right)^{-1}.
\eea
Moreover, using the results in Appendix \ref{app:D6U1partitionfunction}, we have 
\bea
    \frac{\mathsf{N}_{\bar{a}}(v,\pi+\cube\,|\,v,\pi+\cube)}{\mathsf{N}_{\bar{a}}(v,\pi\,|\,v,\pi)}=\left(\frac{q_{a}v}{\chi_{\bar{a},v}(\cube)}\right)\frac{\underset{x=\chi_{\bar{4},v}(\scube)}{\Res}x^{-1}\mathscr{W}^{\bar{4}}_{\pi+\scube,v}(q_{4}^{-1}x)^{-1}}{\underset{x=\chi_{\bar{4},v}(\scube)}{\Res}x^{-1}\mathscr{W}^{\bar{4}}_{\pi,v}(x)^{-1}}.
\eea    
\end{lemma}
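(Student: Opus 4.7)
The first two identities follow by direct expansion of the defining product for $\mathsf{N}_{\bar{a}}$. For the second recursion, I would fix $\pi_1,v_1,v_2$ and add a box $\cubeF$ to $\pi_2$: only the numerator product over $\pi_2$ and the double product over $\pi_1\times\pi_2$ are affected, gaining exactly $(1-v_1/\chi_{\bar{a},v_2}(\cubeF))$ and $\prod_{\scube\in\pi_1}g_{\bar{a}}(\chi_{\bar{a},v_1}(\cube)/\chi_{\bar{a},v_2}(\cubeF))$, respectively. Comparing with the defining product of $\mathscr{W}^{\bar{a}}_{\pi_1,v_1}(x)$ at $x=\chi_{\bar{a},v_2}(\cubeF)$ gives the claim. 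For the first recursion, the analogous bookkeeping with a box $\cube$ added to $\pi_1$ produces the factor
\[
(1-q_a^{-1}\chi_{\bar{a},v_1}(\cube)/v_2)^{-1}\prod_{\scubeF\in\pi_2}g_{\bar{a}}\!\left(\chi_{\bar{a},v_1}(\cube)/\chi_{\bar{a},v_2}(\cubeF)\right),
\]
and I would then apply the reflection identity $g_{\bar{a}}(y)=g_{\bar{a}}(q_a/y)^{-1}$ from \eqref{eq:reflec_structfunc} to the second piece to recognise the whole expression as $\mathscr{W}^{\bar{a}\,\vee}_{\pi_2,v_2}(q_a^{-1}\chi_{\bar{a},v_1}(\cube))^{-1}$. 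This is the content of the first two lines of \eqref{eq:D6Nekrasovrecursion}.

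The diagonal identity is the delicate step, because $\mathsf{N}_{\bar{a}}(v,\pi\,|\,v,\pi)$ contains divergent diagonal contributions $g_{\bar{a}}(1)$ coming from the strict-diagonal terms $\cube=\cubeF$ in the double product. My plan is to regularise it as the leading Laurent coefficient of $\mathsf{N}_{\bar{a}}(v,\pi\,|\,v',\pi)$ as $v'\to v$, following the convention used in Appendix~\ref{app:D6U1partitionfunction}, and to compute the ratio $\mathsf{N}_{\bar{a}}(v,\pi+\cube\,|\,v,\pi+\cube)/\mathsf{N}_{\bar{a}}(v,\pi\,|\,v,\pi)$ by iterating the two non-diagonal recursions already proved, adding $\cube$ first to one slot and then to the other. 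After these two applications the ratio collapses to a product of $\mathscr{W}$-type factors evaluated near $x=\chi_{\bar{a},v}(\cube)$, where both of them are singular: $\mathscr{W}^{\bar{a}}_{\pi,v}(x)^{-1}$ has a simple pole there because $\cube\in A(\pi)$, and $\mathscr{W}^{\bar{a}}_{\pi+\scube,v}(q_a^{-1}x)^{-1}$ has one because $\cube\in R(\pi+\cube)$. The prescription extracting the regularised finite part on the left then coincides with the ratio of residues on the right, while the explicit kinematic prefactor $q_a v/\chi_{\bar{a},v}(\cube)$ will emerge as the Jacobian of the change of variable from the regulator $v'/v$ to the $x$-coordinate at the pole.

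The main obstacle I anticipate is the careful bookkeeping of the regularisation: keeping track of which divergent pieces are absorbed into the residue operations and verifying that the finite multiplicative constant is exactly $q_a v/\chi_{\bar{a},v}(\cube)$ and not some other unit. I expect the single-box $\U(1)$ identities assembled in Appendix~\ref{app:D6U1partitionfunction} to handle the incremental step directly, after which the full statement follows by iterating box by box, because on both sides the relevant pole is pinned to the same content $\chi_{\bar{a},v}(\cube)$.
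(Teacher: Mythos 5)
Your verification of the two off-diagonal recursions is correct and is the same direct bookkeeping any proof must do: adding a box to $\pi_2$ changes only the numerator product and the column of the double product, which reassembles into $\mathscr{W}^{\bar a}_{\pi_1,v_1}$ evaluated at the new box's content, and symmetrically for $\pi_1$ (in fact the reflection identity you invoke is not even needed — substituting $x=q_a^{-1}\chi_{\bar a,v_1}(\cube)$ directly into the definition of $\mathscr{W}^{\bar a\,\vee}_{\pi_2,v_2}$ already produces $g_{\bar a}(\chi_{\bar a,v_1}(\cube)/\chi_{\bar a,v_2}(\cubeF))^{-1}$).

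Where you genuinely diverge from the paper is the diagonal identity. The paper obtains it by combining $\mathcal{Z}^{\D6}_{\bar a}[\pi]=1/\mathsf{N}_{\bar a}(v,\pi|v,\pi)$ with the relation $\mathcal{Z}^{\D6}_{\bar a}[\pi]=\prod_{\scube\in\pi}(-\chi_{\bar a,v}(\cube)/q_a v)\,\widetilde{\mathcal{Z}}^{\D6}_{\bar a}[\pi]$ and the recursion of $\widetilde{\mathcal{Z}}^{\D6}_{\bar a}$ from Appendix D.3, whose delicate overall sign is not proved there but deferred to a collision-term analysis in an external reference and checked against low-level tables. Your route — regularise the diagonal by $v'\to v$ and compose the two off-diagonal recursions, resolving the resulting $0\times\infty$ as a ratio of residues — is more self-contained and does work: writing $t=v'/v$ and using $\mathscr{W}_{\pi,v'}(y)=\mathscr{W}_{\pi,v}(y/t)$ together with $\mathscr{W}^{\bar a\,\vee}_{\pi,v}(x)=(-x/v)\mathscr{W}^{\bar a}_{\pi,v}(x)$, the factor from slot~2 behaves as $(t-1)/\operatorname{Res}_{x=\chi}x^{-1}\mathscr{W}^{\bar a}_{\pi,v}(x)^{-1}$ while the factor from slot~1 behaves as $-(q_a v/\chi)\,\operatorname{Res}_{x=\chi}x^{-1}\mathscr{W}^{\bar a}_{\pi+\scube,v}(q_a^{-1}x)^{-1}/(t-1)$ up to a sign from the $y\mapsto y/t$ inversion; the two signs cancel and the prefactor $q_a v/\chi_{\bar a,v}(\cube)$ comes out exactly. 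This buys you a clean derivation of the sign that the paper only asserts. Two things you should still nail down to make it airtight: (i) that the $v'\to v$ limit of the product formula agrees with the index-regularised diagonal value used in the appendix — this holds because the character underlying $1/\mathsf{N}_{\bar a}(v,\pi|v',\pi)$ has vanishing net coefficient of the zero weight, so the divergent diagonal $g_{\bar a}$ factors pair off against the zeros and the limit is continuous; and (ii) that $\mathscr{W}^{\bar a}_{\pi,v}$ has a \emph{simple} zero at $\chi_{\bar a,v}(\cube)$ for $\cube\in A(\pi)$ and $\mathscr{W}^{\bar a}_{\pi+\scube,v}(q_a^{-1}x)$ a simple zero at $x=\chi_{\bar a,v}(\cube)$, which requires the denominator hidden in the ``$\propto$'' of the shell formula to be regular and nonvanishing there.
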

We can define the vector $\U(1)$ contribution of the $\D6$ theory as 
\bea
    \mathcal{Z}^{\D6}_{\bar{a}}[\pi]=\frac{1}{\mathsf{N}_{\bar{a}}(v,\pi\,|\,v,\pi)}
\eea
which resembles the partition function of the pure SYM in the 5d theory. The two factors $\mathcal{Z}_{\bar{a}}^{\D6}[\pi]$ and $\widetilde{\mathcal{Z}}_{\bar{a}}^{\D6}[\pi]$ differ by extra Chern--Simons like term and topological term
\bea
    \mathcal{Z}_{\bar{a}}^{\D6}[\pi]=\prod_{\scube\in\pi}\left(-\frac{\chi_{\bar{a},v}(\cube)}{q_{a}v}\right)\widetilde{\mathcal{Z}}_{\bar{a}}^{\D6}[\pi].
\eea
The recursion relation of $\widetilde{\mathcal{Z}}^{\D6}_{\bar{a}}[\pi]$ is then (see Thm.~\ref{eq:app-thm-D6U1recursionformula})
\bea
    \frac{\widetilde{\mathcal{Z}}^{\D6}_{\bar{a}}[\,\pi+\cube\,]}{\widetilde{\mathcal{Z}}^{\D6}_{\bar{a}}[\pi]}=-\frac{\underset{x=\chi_{\bar{4},v}(\scube)}{\Res}x^{-1}\mathscr{W}^{\bar{4}}_{\pi,v}(x)^{-1}}{\underset{x=\chi_{\bar{4},v}(\scube)}{\Res}x^{-1}\mathscr{W}^{\bar{4}}_{\pi+\scube,v}(q_{4}^{-1}x)^{-1}}.
\eea

\paragraph{One-loop perturbative part}
Similar to the magnificent four system, we omit the singular part and choose the following square root part for the perturbative part:
\bea
    \mathring{\mathbf{v}}=\sum_{(b,\beta)>(a,\alpha)}\frac{\bfP_{a}^{\vee}\bfP_{b}}{\bfP_{\four}}v_{\bar{b},\beta}/v_{\bar{a},\alpha}
\eea
where we specified an order in the pair of indices $(\bar{a},\alpha)_{a\in\four}^{\alpha=1,\ldots,n_{\bar{a}}}$. Taking the index gives 
\begin{equation}
\begin{split}
&\mathbb{I}[\mathring{\mathbf{v}}]=\prod_{(b,\beta)>(a,\alpha)}\mathcal{Z}^{\D6\tbar\D6}_{\text{1-loop}}(v_{\bar{a},\alpha},\bar{a}\,|\,v_{\bar{b},\beta},\bar{b})\eqqcolon\mathcal{Z}^{\D6}_{\text{1-loop}},\\
&\mathcal{Z}^{\D6\tbar\D6}_{\text{1-loop}}(x_{1},\bar{a}\,|\,x_{2},\bar{b})=\exp\left(-\sum_{n>0}\frac{1}{n}\frac{\bfP_{a}^{[n]}\bfP_{b}^{[-n]}}{\bfP_{\four}^{[n]}}\left(\frac{x_{1}}{x_{2}}\right)^{n}\right).
\label{eq:D6oneloop}
\end{split}
\end{equation}
For $a=b$, the one-loop factor is written using the multi $q$-shifted factorial or the $q$-deformed triple gamma function. See Appendix \ref{sec:q-functions} and \eqref{eq:D6oneloop-qgamma}.

\subsection{Spiked instanton}\label{sec:spiked_partitionfunct}
The spiked instanton system introduced in \cite{Nekrasov:2016ydq,Nekrasov:2016qym} can be understood similar to the previous two setups. The total observable sheaf $\bfY$ will be a sum of the observable sheaves of $\bfY_{A}\,(A\in\six)$ corresponding to the theory of $\D4$-branes on $\mathbb{C}^{2}_{A}\times \mathbb{S}^{1}$:
\begin{equation}\label{eq:D4Ybundle}
    \bfV=\frac{\bfY^{\vee}\bfY}{\bfP_{\four}},\quad \bfY=\sum_{A\in\six}\bfP_{\bar{A}}\bfY_{A},\quad \bfY_{A}=\bfN_{A}-\bfP_{A}\bfK_{A}\quad A\in\six
\end{equation}
which gives $\bfV=\mathring{\bfV}+\bfV_{\text{inst}}$ where%
\footnote{%
One of the defining conditions of the spiked instanton moduli space $\{\Upsilon_A = 0\}_{A \in \underline{\mathbf{6}}}$ gives rise to the contribution $\sum_{A \in \underline{\mathbf{6}}}q_{A} \mathbf{N}_A^\vee \mathbf{N}_{\bar{A}}$, which is now incorporated in the perturbative part.}
\begin{equation}
\mathring{\bfV}=\sum_{A,B\in\six}\bfN_{A}^{\vee}\frac{\bfP_{\bar{A}}^{\vee}\bfP_{\bar{B}}}{\bfP_{\four}}\bfN_{B},\quad \mathbf{V}_{\text{inst.}}=\sum_{A,B\in\six}\left(-\bfN_{A}^{\vee}\bfP_{\bar{A}}^{\vee}\bfK_{B}-\bfK_{A}^{\vee}\bfP_{\bar{B}}\bfN_{B}+\bfP_{\four}\bfK_{A}^{\vee}\bfK_{B}\right),
\end{equation}
and
\begin{equation}
\bfN_{A}=\sum_{\alpha=1}^{n_{A}}e^{a_{A,\alpha}}=\sum_{\alpha=1}^{n_{A}}v_{A,\alpha},\quad \bfK_{A}=\sum_{I=1}^{k_{A}}e^{\phi_{A,I}}=\sum_{I=1}^{k_{A}}x_{A,I},\quad A\in\six.
\end{equation}
Note again that the total observable sheaf can be rewritten as 
\bea\label{eq:spikedtotalinst}
\bfY=\sum_{A\in\six}\bfP_{\bar{A}}\bfN_{A}-\bfP_{\four}\bfK
\eea
where $\bfK=\sum_{A\in\six}\bfK_{A}$ is the total instanton bundle introduced in section~\ref{sec:physicalsetup}. This is a similar situation as \eqref{eq:tetratotalinst}.

Similar to the previous cases, we actually have to extract the square root part. We choose the following decomposition
\bea\label{eq:D4squareroot}
\mathbf{v}_{\text{inst.}}&=-\sum_{A\in\six}\left(\bfP_{\bar{A}}^{\vee}\bfN_{A}^{\vee}\bfK_{A}-\bfP_{\text{inf}(\bar{A})}^{\vee}\bfP^{\vee}_{A}\bfK_{A}^{\vee}\bfK_{A}\right)\\
&-\sum_{A\in\six}\sum_{B\neq A}\bfN_{A}^{\vee}\bfP_{\bar{A}}^{\vee}\bfK_{B}+\sum_{A<B}\bfP_{\four}\bfK_{A}^{\vee}\bfK_{B}
\eea
where the first line gives the contribution of the affine quiver gauge theory with adjoint mass $q_{\text{inf}(\bar{A})}$ on each $\mathbb{C}^{2}_{A}\times \mathbb{S}^{1}$ and the second lines gives the folded and crossed instantons contributions.

\begin{proposition}
The $\underline{k}$-instanton partition function of the spiked instanton system is given by the following contour integral,
\bea
    \mathcal{Z}^{\D4}_{\underline{k}}=\mathbb{I}[\mathbf{v}_{\text{inst.}}]&=\frac{\underline{\mathcal{G}}^{\underline{k}}}{\underline{k}!}\oint \prod_{A\in\six}\prod_{I=1}^{k_{A}}\frac{dx_{A,I}}{2\pi\iota x_{A,I}}\prod_{A\in\six}\prod_{\alpha=1}^{n_{A}}\prod_{I=1}^{k_{A}}\mathscr{S}_{\bar{A}}\left(\frac{v_{A,\alpha}}{x_{A,I}}\right)
    \prod_{A\in\six}\prod_{I\neq J}g_{\overbar{\scalebox{0.7}{sup$(\bar{A})$}}}\left(\frac{x_{A,I}}{x_{A,J}}\right)^{-1}\\
    &\qquad\times\prod_{A\in\six}\prod_{B\neq A}\prod_{\alpha=1}^{n_{A}}\prod_{I=1}^{k_{B}}\mathscr{S}_{\bar{A}}\left(\frac{v_{A,\alpha}}{x_{B,I}}\right)\prod_{A<B}\prod_{I=1}^{k_{A}}\prod_{J=1}^{k_{B}}\mathcal{A}_{\mathbb{C}^{4}}\left(\frac{x_{A,I}}{x_{B,J}}\right)^{-1}
    \label{eq:D4integral}
\eea
where we define
\bea
\underline{\mathcal{G}}^{\underline{k}}=\prod_{A\in\six}\mathcal{G}_{\text{sup}(\bar{A})}^{k_{A}},\quad \underline{k}!=\prod_{A\in\six}k_{A}!.
\eea    
The total instanton partition function is
\bea
\mathcal{Z}_{\text{inst.}}^{\D4}=\sum_{k=0}^{\infty}\mathfrak{q}^{k}\sum_{\substack{(k_{A})_{A\in\six}\\ \sum_{A}k_{A}=k}}\mathcal{Z}^{\D4}_{\underline{k}}.
\eea
\end{proposition}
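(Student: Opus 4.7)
The plan is to evaluate $\mathbb{I}[\mathbf{v}_{\text{inst.}}]$ directly from the square-root decomposition in \eqref{eq:D4squareroot}, by substituting $\bfN_A=\sum_{\alpha=1}^{n_A} v_{A,\alpha}$ and $\bfK_A=\sum_{I=1}^{k_A} x_{A,I}$ and applying the index functor term by term with the help of the structure-function definitions in \eqref{eq:struct_funct}. The derivation mirrors the D8 and D6 computations around \eqref{eq:D8integral} and \eqref{eq:D6integral}; the task is to identify which block of $\mathbf{v}_{\text{inst.}}$ produces which factor of the stated integrand.

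First, I would combine the two flavor--instanton blocks into a single sum $-\sum_{A,B\in\six}\bfN_A^{\vee}\bfP_{\bar{A}}^{\vee}\bfK_B$ by absorbing $-\sum_A\bfP_{\bar{A}}^{\vee}\bfN_A^{\vee}\bfK_A$ as the $B=A$ part. Using $\mathscr{S}_{\bar{A}}(x)=\mathbb{I}[-\bfP_{\bar{A}}^{\vee}x^{\vee}]$, the index of this sum is $\prod_{A,B\in\six}\prod_{\alpha,I}\mathscr{S}_{\bar{A}}(v_{A,\alpha}/x_{B,I})$, which splits into the $B=A$ and $B\neq A$ factors appearing in the proposition. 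The crossed instanton--instanton block $\sum_{A<B}\bfP_{\four}\bfK_A^{\vee}\bfK_B$, together with $\bfP_{\four}=\bfP_{\four}^{\vee}$ from \eqref{eq:dualorigamiprop}, immediately yields $\prod_{A<B}\prod_{I,J}\mathcal{A}_{\mathbb{C}^{4}}(x_{A,I}/x_{B,J})^{-1}$. The self-interaction block requires the key identity $\bfP_{\text{inf}(\bar{A})}^{\vee}\bfP_A^{\vee}=\bfP_{\overline{\text{sup}(\bar{A})}}^{\vee}$, which follows from the disjoint decomposition $\four=\{\text{inf}(\bar{A})\}\sqcup A\sqcup\{\text{sup}(\bar{A})\}$ and the multiplicativity of $\bfP$; splitting $\bfK_A^{\vee}\bfK_A$ into off-diagonal and diagonal contributions, the off-diagonal part produces exactly $\prod_{I\neq J}g_{\overline{\text{sup}(\bar{A})}}(x_{A,I}/x_{A,J})^{-1}$ via $g_{\bar{a}}(x)=\mathbb{I}[-\bfP_{\bar{a}}^{\vee}x^{\vee}]$.

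The hard part will be regularizing the remaining $I=J$ diagonal piece, which is formally $\mathbb{I}[\bfP_{\overline{\text{sup}(\bar{A})}}^{\vee}]^{k_A}$ and carries a $\mathbb{I}[1]=0$ singularity from the constant $1$ in the expansion of $\bfP_{\bar{a}}^{\vee}$. Following the procedure used for the D6 and D8 systems, this zero must be absorbed into the Vandermonde-type Haar measure on the Cartan torus of $\mathrm{GL}(\bfK_A)$: the net finite residue per instanton is $\mathcal{G}_{\overline{\text{sup}(\bar{A})}}$, while the Weyl symmetry contributes $1/k_A!$. Taking the product over $A\in\six$ then yields the prefactor $\underline{\mathcal{G}}^{\underline{k}}/\underline{k}!$ and identifies the Cartan-torus equivariant measure $\prod dx_{A,I}/(2\pi\iota x_{A,I})$ with the Jeffrey--Kirwan prescription. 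The main obstacle is to verify that the particular asymmetric choice of square root in \eqref{eq:D4squareroot} (using $\bfP_{\text{inf}(\bar{A})}^{\vee}\bfP_A^{\vee}$ rather than a symmetric combination) is compatible with this regularization so that no spurious signs or overall factors arise. Finally, summing over $k\ge 0$ weighted by $\mathfrak{q}^k$ and over compositions $\sum_A k_A=k$ gives the total partition function $\mathcal{Z}^{\D4}_{\text{inst.}}$.
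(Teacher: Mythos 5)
Your proposal is correct and follows essentially the same route the paper takes: substitute the localized characters into the square-root combination \eqref{eq:D4squareroot} and apply the index functor block by block, with the $B=A$ and $B\neq A$ flavor terms giving the $\mathscr{S}_{\bar{A}}$ factors, the crossed term giving $\mathcal{A}_{\mathbb{C}^4}^{-1}$ via $\bfP_{\four}=\bfP_{\four}^{\vee}$, and the identity $\bfP_{\text{inf}(\bar{A})}^{\vee}\bfP_{A}^{\vee}=\bfP_{\overline{\text{sup}(\bar{A})}}^{\vee}$ producing the $g^{-1}$ measure, while the diagonal $I=J$ piece yields $\mathbb{I}[\bfP_{\overline{\text{sup}(\bar{A})}}^{\vee}-1]=\mathcal{G}_{\overline{\text{sup}(\bar{A})}}$ per instanton after the Cartan directions are traded for the integration measure and the Weyl group gives $1/k_A!$. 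The worry about the asymmetric square-root choice is unfounded as a gap: as the paper notes, that choice only shifts the result by an overall factor absorbed into the topological term.
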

The contour integral formula is compatible with the symmetries given in \eqref{eq:instantongaugesymmetry}, \eqref{eq:D4flavorsymmetry}, \eqref{eq:D4ADHMvariablerotation}, \eqref{eq:D4ADHMconstraintsymmetry}.

Note here that the index in \eqref{eq:D4squareroot} is slightly different from the one used in \eqref{eq:affinequiver}. The term we used is $\bfP_{\bar{A}}^{\vee}\bfN_{A}^{\vee}\bfK_{A}$ while a straightforward generalization of \eqref{eq:affinequiver} will be 
\begin{equation}
\bfP_{\text{inf}(\bar{A})}^{\vee}\left(\bfN_{A}^{\vee}\bfK_{A}+q_{A}^{-1}\bfK_{A}^{\vee}\bfN_{A}\right).
\end{equation}
After taking the index, the contour integral formula differs with \eqref{eq:affinecontour}:
\bea
\prod_{A\in\six}\prod_{\alpha=1}^{n_{A}}\prod_{I=1}^{k_{A}}\mathscr{S}_{\bar{A}}\left(\frac{v_{A,\alpha}}{x_{A,I}}\right)=\prod_{A\in\six}q_{\text{inf}(\bar{A})}^{-n_{A}k_{A}}\prod_{A\in\six}\prod_{\alpha=1}^{n_{A}}\prod_{I=1}^{k_{A}}\mathscr{V}_{\text{inf}(\bar{A})}\left(\frac{v_{A,\alpha}}{x_{A,I}}\right)\mathscr{V}_{\text{inf}(\bar{A})}\left(\frac{q_{A}x_{A,I}}{v_{A,\alpha}}\right).
\eea
This overall factor will be eventually related to the topological term for each affine quiver gauge theory. It seems that using the index in \eqref{eq:D4squareroot} is natural from the quantum algebraic viewpoint so we will use this. Note that in the limit $\mathbb{S}^1 \to \text{pt}$, such terms will disappear and have no effect.

The poles of the contour integral~\eqref{eq:D4integral} are classified by 2d partitions:
\bea
    &\vec{\underline{v}}=(v_{A,\alpha})^{\alpha=1,\ldots, n_{A}}_{A\in\six},\quad \vec{\underline{\lambda}}=(\vec{\lambda}_{A})_{A\in\six}=(\lambda_{A}^{(\alpha)})^{\alpha=1,\ldots,n_{A}}_{A\in\six},\quad |\vec{\underline{\lambda}}|=\sum_{A\in\six}\sum_{\alpha=1}^{n_{A}}|\lambda_{A}^{(\alpha)}|,\\
    &\{x_{A,I}\}_{A\in\six}^{I=1,\ldots,k_{A}}\longrightarrow \{\chi_{A,v_{A,\alpha}}(\Bbox)\}^{\alpha=1,\ldots,n_{A}}_{A\in\six,\,\Abox\in\lambda_{A,\alpha}},\quad \chi_{ab,v}(\Bbox)=vq_{a}^{i-1}q_{b}^{j-1}.
\eea
At the fixed points, the character of $\bfK_{A}$ will be 
\bea
\left.\bfK_{A}\right|_{\vec{\lambda}_{A}}=\sum_{\alpha=1}^{n_{A}}\sum_{\Abox\in\lambda_{A}^{(\alpha)}}\chi_{A,v_{A,\alpha}}(\Bbox),\quad A\in\six.
\eea
Inserting this and taking the index, we have
\bea\label{eq:D4spikedpartition1}
    \mathcal{Z}^{\D4}_{\text{spk.inst.}}[\underline{\vec{v}},\vec{\underline{\lambda}}]&=\prod_{A\in\six}\prod_{\alpha=1}^{n_{A}}\widetilde{\mathcal{Z}}^{\D4}_{A}[\lambda_{A}^{(\alpha)}]\prod_{A\in\six}\prod_{\alpha<\beta}\mathcal{Z}^{\D4\tbar\D4}_{A|A}(v_{A,\alpha},\lambda_{A}^{(\alpha)}\,|\,v_{A,\beta},\lambda_{A}^{(\beta)}) \\
    &\qquad \times \prod_{A<B}\prod_{\alpha=1}^{n_{A}}\prod_{\beta=1}^{n_{B}}\mathcal{Z}_{A|B}^{\D4\tbar\D4}(v_{A,\alpha},\lambda_{A}^{(\alpha)}\,|\,v_{B,\beta},\lambda_{B}^{(\beta)}),
\eea
where
\bea\label{eq:D4spikedpartition2}
    \widetilde{\mathcal{Z}}^{\D4}_{A}[\lambda]&=q_{\text{inf}(\bar{A})}^{-|\lambda|}\mathcal{Z}^{\D4}_{A}[\lambda;q_{\text{inf}(\bar{A})}]=q_{\text{inf}(\bar{A})}^{-|\lambda|}\frac{\mathsf{N}_{A}(q_{\text{inf}(\bar{A})}v,\lambda\,|\,v,\lambda)}{\mathsf{N}_{A}(v,\lambda\,|\,v,\lambda)},\\
    \mathcal{Z}_{A|B}^{\D4\tbar\D4}(v_{1},\lambda^{(1)}\,|\,v_{2},\lambda^{(2)})&=\prod_{\Abox\in\lambda^{(1)}}\mathscr{S}_{\bar{B}}\left(q_{B}\frac{\chi_{A,v_{1}}(\Bbox)}{v_{2}}\right)\prod_{\AboxF\in\lambda^{(2)}}\mathscr{S}_{\bar{A}}\left(\frac{v_{1}}{\chi_{B,v_{2}}(\BboxF)}\right)\prod_{\substack{\Abox\in\lambda^{(1)}\\\AboxF\in\lambda^{(2)}}}\mathcal{A}_{\mathbb{C}^{4}}\left(\frac{\chi_{A,v_{1}}(\Bbox)}{\chi_{B,v_{2}}(\BboxF)}\right)^{-1}.
\eea
The property and low levels of the $\U(1)$ contribution are given in Appendix \ref{app:D4U1partitionfunction}. 

\begin{proposition}
The total instanton partition function of the spiked instanton system is given by summation over 2d partitions,
\bea
\mathcal{Z}_{\text{inst.}}^{\D4}=\sum_{\vec{\underline{\lambda}}}\mathfrak{q}^{|\vec{\underline{\lambda}}|}\mathcal{Z}^{\D4}_{\text{spk.inst.}}[\vec{\underline{v}},\vec{\underline{\lambda}}].
\eea
\end{proposition}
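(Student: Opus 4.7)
The approach is equivariant localization of the contour integral~\eqref{eq:D4integral}, interpreted as a Jeffrey--Kirwan residue compatible with the ADHM symmetries~\eqref{eq:instantongaugesymmetry}--\eqref{eq:D4ADHMconstraintsymmetry}. The plan has three steps: (i) classify the contributing simultaneous poles of the integrand in terms of $6$-tuples of vectors of Young diagrams $\vec{\underline{\lambda}}$; (ii) evaluate the residue at each such pole in closed form; (iii) match the result against the claimed summand $\mathcal{Z}^{\D4}_{\text{spk.inst.}}[\vec{\underline{v}},\vec{\underline{\lambda}}]$ in~\eqref{eq:D4spikedpartition1}--\eqref{eq:D4spikedpartition2}.

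For step (i), I would proceed inductively in the integration variables $\{x_{A,I}\}$. A simple pole in $x_{A,I}$ can arise from $\mathscr{S}_{\bar{A}}(v_{A,\alpha}/x_{A,I})$ at $x_{A,I}=v_{A,\alpha}$, or recursively from $g_{\overline{\text{sup}(\bar{A})}}(x_{A,I}/x_{A,J})^{-1}$ at $x_{A,I}=q_{a}x_{A,J}$ and $x_{A,I}=q_{b}x_{A,J}$ with $A=(ab)$. Together with the zeros in the numerator of $g_{\overline{\text{sup}(\bar{A})}}$, this is the standard mechanism which, as in~\cite{Nakajima:1999,Nekrasov:2002qd,Nekrasov:2003rj}, forces the poles of the $\bfK_{A}$-integrals to be labelled by an $n_{A}$-tuple of 2d partitions $\vec{\lambda}_{A}$ via $\{x_{A,I}\}\mapsto\{\chi_{A,v_{A,\alpha}}(\Bbox)\}_{\Abox\in\lambda_{A}^{(\alpha)}}$. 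The crossed factors $\mathcal{A}_{\mathbb{C}^{4}}(x_{A,I}/x_{B,J})^{-1}$ and the mixed structure functions $\mathscr{S}_{\bar{A}}(v_{A,\alpha}/x_{B,I})$ for $A\ne B$ do not generate independent new fixed loci, because the Calabi--Yau condition $q_{1}q_{2}q_{3}q_{4}=1$ built into $\mathcal{A}_{\mathbb{C}^{4}}$ cancels would-be crossed poles against its own zeros, and the remaining ones are excluded by the Jeffrey--Kirwan chamber.

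For step (ii), at a fixed point $\vec{\underline{\lambda}}$ the characters of the instanton bundles read $\bfK_{A}|_{\vec{\lambda}_{A}}=\sum_{\alpha}\sum_{\Abox\in\lambda_{A}^{(\alpha)}}\chi_{A,v_{A,\alpha}}(\Bbox)$. Substituting into the square-root vector multiplet~\eqref{eq:D4squareroot} and applying the index functor of section~\ref{sec:equiv-index}, the diagonal-in-$A$ piece reproduces $\prod_{\alpha}\widetilde{\mathcal{Z}}^{\D4}_{A}[\lambda_{A}^{(\alpha)}]$ together with $\prod_{\alpha<\beta}\mathcal{Z}^{\D4\tbar\D4}_{A|A}$ through the Nekrasov-factor identity~\eqref{eq:D4Nekrasovfactor} with adjoint mass $q_{\text{inf}(\bar{A})}$, while the off-diagonal piece produces $\mathcal{Z}^{\D4\tbar\D4}_{A|B}$ via $\mathscr{S}_{\bar{A}}$ and $\mathcal{A}_{\mathbb{C}^{4}}$. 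The combinatorial prefactor $\underline{\mathcal{G}}^{\underline{k}}/\underline{k}!$ exactly absorbs the Weyl group $\prod_{A}\mathfrak{S}_{k_{A}}$ of $\mathrm{GL}(\bfK)=\prod_{A}\mathrm{GL}(\bfK_{A})$ and the normalization of the zero modes, yielding~\eqref{eq:D4spikedpartition1} term by term.

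The main obstacle is step (i). The naive residue count picks up spurious crossed-brane poles from $\mathcal{A}_{\mathbb{C}^{4}}(x_{A,I}/x_{B,J})^{-1}$, and one must genuinely show that these do not contribute. The cleanest route is the compactness/Dyson--Schwinger argument of~\cite{Nekrasov:2015wsu,Nekrasov:2016qym,Nekrasov:2016gud}: the generalized ADHM constraints $\{\sigma_{aA}=0\}$ and $\{\Upsilon_{A}=0\}$ in~\eqref{eq:spikedNK}--\eqref{eq:spikedNN}, whose equivariant weights in~\eqref{eq:D4ADHMconstraintsymmetry} are precisely those of the missing numerator zeros, force the $T$-fixed locus of $\mathfrak{M}_{\vec{\underline{n}},k}$ in~\eqref{eq:spikedinstantonmoduli} to decompose as $\bigsqcup_{\vec{\underline{\lambda}}}\{\mathrm{pt}\}$ with $\vec{\underline{\lambda}}$ the announced $6$-tuples and nothing else. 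Once this restriction of the fixed locus is in place, steps (ii) and (iii) are routine bookkeeping, and the identity $\mathcal{Z}_{\text{inst.}}^{\D4}=\sum_{\vec{\underline{\lambda}}}\mathfrak{q}^{|\vec{\underline{\lambda}}|}\mathcal{Z}^{\D4}_{\text{spk.inst.}}[\vec{\underline{v}},\vec{\underline{\lambda}}]$ follows.
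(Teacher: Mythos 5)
Your proposal is correct and follows essentially the same route as the paper: evaluate the Jeffrey--Kirwan residues of \eqref{eq:D4integral}, classify the contributing poles by $6$-tuples of vectors of Young diagrams, substitute $\bfK_{A}|_{\vec{\lambda}_{A}}$ into the square-root character \eqref{eq:D4squareroot}, and take the index to recover \eqref{eq:D4spikedpartition1}--\eqref{eq:D4spikedpartition2}. The only difference is that you explicitly flag and address the exclusion of spurious crossed-brane poles via the compactness argument of the original spiked-instanton papers, a point the paper simply asserts by citing the same references.
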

\paragraph{One-loop perturbative part}
Omitting the singular part, we use the following square root part for the one-loop perturbative part:
\bea
    \mathring{\mathbf{v}}=\sum_{(B,\beta)>(A,\alpha)}\frac{\bfP_{\bar{A}}^{\vee}\bfP_{\bar{B}}}{\bfP_{\four}}v_{B,\beta}/v_{A,\alpha},
\eea
where we specified an order in $(A,\alpha)^{\alpha=1\ldots,n_{A}}_{A\in\six}$. Taking the index, we have 
\bea\label{eq:D4oneloop}
&\mathbb{I}[\mathring{\mathbf{v}}]=\prod_{(B,\beta)>(A,\alpha)}\mathcal{Z}^{\D4\tbar\D4}_{\text{1-loop}}(v_{A,\alpha},A\,|\,v_{B,\beta},B)\eqqcolon\mathcal{Z}^{\D4}_{\text{1-loop}},\\
&\mathcal{Z}_{\text{1-loop}}^{\D4\tbar\D4}(x_{1},A\,|\,x_{2},B)=\exp\left(-\sum_{n=1}^{\infty}\frac{1}{n}\frac{\bfP_{\bar{A}}^{[n]}\bfP_{\bar{B}}^{[-n]}}{\bfP_{\four}^{[n]}}\left(\frac{x_{1}}{x_{2}}\right)^{n}\right).
\eea
For $a=b$, the one-loop factor is written using the multi $q$-shifted factorial or the $q$-deformed double gamma function. See Appendix \ref{sec:q-functions} and \eqref{eq:D4oneloop-qgamma}.

\paragraph{Supergroup analogue}
Actually, the contour integral formulas and instanton partition functions for the 5d gauge theories have a supergroup generalization introduced in \cite{Kimura:2019msw} (see also \cite{Vafa:2001qf,Okuda:2006fb,Dijkgraaf:2016lym,Kimura:2023iup,Kimura:2023ndz} for related works). The explicit contour integral formula for the 5d $\mathcal{N}=1^{\ast}$ $\U(n_{+}|n_{-})$ gauge theory is given in Appendix~\ref{app:supergroup} and \eqref{eq:app-supergroupLMNS}. The strategy to derive it is to change all of the characters of $\bfN,\bfK$ to supercharacters:
\bea
\operatorname{sch}\bfN=\operatorname{ch}\bfN^{+}-\operatorname{ch}\bfN^{-},\quad \operatorname{sch}\bfK=\operatorname{ch}\bfK^{+}-\operatorname{ch}\bfK^{-}.
\eea
This generalization gives extra vector-like and bifundamental-like contributions which enables us to understand the supergroup gauge theory as a type of quiver gauge theory (see \cite{Dijkgraaf:2016lym,Kimura:2019msw,Kimura:2023ndz}). To obtain the supergroup gauge theory using D-branes, one needs to include \emph{ghost}/\emph{negative} $\D4$-branes, denoted as $\D4^{-}$, to the system additional to the normal\footnote{We also call it positive D-branes.} $\D4$-branes, which we denote $\D4^{+}$. For such cases, $\D0^{+}$-branes play the role of positive instantons while $\D0^{-}$-branes play the role of negative instantons (see \cite{Kimura:2023iup}). 

After changing all the characters to supercharacters, one can obtain the supergroup analog of the gauge origami of the spiked instanton system. For the simplest case, where there are only D-branes on $\mathbb{C}^{2}_{12}\times \mathbb{S}^{1}$, see \eqref{eq:app-supergroupLMNS} for the contour integral formula. See for example \cite{Kimura:2019msw} or \cite[section 2]{Noshita:2022dxv} for the explicit formulas for general quiver gauge theories. The generalization to the gauge origami system is straightforward so we omit it.

\subsection{Coupled vortex system}\label{sec:cplvortex_partitionfunct}
Following the previous sections, it is natural to consider a theory where $\D2$-branes intersect. We propose a $\D2$-analogue partition function of the gauge origami system. The total index is written similarly with $\bfY$, but this time it is a sum of $\bfY_{a}\,(a\in\four)$:
\bea\label{eq:D2Ybundle}
    \bfV&=\frac{\bfY^{\vee}\bfY}{\bfP_{\four}}\quad \bfY=\sum_{a\in\four}\bfP_{\bar{a}}\bfY_{a},\quad \bfY_{a}=\bfN_{a}-\bfP_{a}\bfK_{a},
\eea
which gives
\bea
\bfV&=\mathring{\bfV}+\bfV_{\text{inst}},\quad \mathring{\mathbf{V}}=\sum_{a,b\in\four}\bfN_{a}^{\vee}\frac{\bfP_{\bar{a}}^{\vee}\bfP_{\bar{b}}}{\bfP_{\four}}\bfN_{b}\\
    \bfV_{\text{inst.}}=&\sum_{a,b\in\four}(-\bfN_{a}^{\vee}\bfP_{\bar{a}}^{\vee}\bfK_{b}-\bfK_{a}^{\vee}\bfP_{\bar{b}}\bfN_{b}+\bfP_{\four}\bfK_{a}^{\vee}\bfK_{b})
\eea
where 
\bea  \bfN_{a}=\sum_{\alpha=1}^{n_{a}}e^{a_{a,\alpha}}=\sum_{\alpha=1}^{n_{a}}v_{a,\alpha},\quad \bfK_{a}=\sum_{I=1}^{k_{a}}e^{\phi_{a,I}}=\sum_{I=1}^{k_{a}}x_{a,I}.
\eea
Similar to the tetrahedron instanton \eqref{eq:tetratotalinst} and spiked instanton \eqref{eq:spikedtotalinst} cases, the total observable sheaf can be rewritten using the total instanton bundle
\bea
\bfY=\sum_{a\in\four}\bfP_{\bar{a}}\bfN_{a}-\bfP_{\four}\bfK
\eea
where $\bfK=\sum_{a\in\four}\bfK_{a}$.

Mimicking the other cases, we decompose the instanton part as $\bfV_{\text{inst}}=\mathbf{v}_{\text{inst}}+\mathbf{v}_{\text{inst}}^{\vee}$ and choose the square root part as 
\bea
    \mathbf{v}_{\text{inst}}&=-\sum_{a\in\four}\bfP^{\vee}_{\text{i}(a)\text{j}(a)}\left(\bfN_{a}^{\vee}\bfK_{a}-q_{a}^{-1}\bfN_{a}\bfK_{a}^{\vee}-\bfP_{a}^{\vee}\bfK_{a}^{\vee}\bfK_{a}\right)\\
    &-\sum_{a\in\four}\sum_{ b\neq a}\bfN_{a}^{\vee}\bfP_{\bar{a}}^{\vee}\bfK_{b}+\sum_{a<b}\bfP_{\four}\bfK_{a}^{\vee}\bfK_{b}.
\eea
Here, we denote the three indices in $\four\setminus\{a\}$ as $\text{i}(a),\text{j}(a),\text{k}(a)$ and chose two of them for each $a\in\four$ to define the index. The first line gives the 3d theory on each plane $\mathbb{C}_{a}\times \mathbb{S}^{1}$ and the other terms give the bifundamental interactions between the 3d theories.

\begin{proposition}
The total instanton partition function of the coupled vortex system is given by
\bea
\mathcal{Z}_{\text{inst.}}^{\D2}=\sum_{k=0}^{\infty}\mathfrak{q}^{k}\sum_{\sum_{a}k_{a}=k}\mathcal{Z}_{\underline{k}}^{\D2},
\eea    
where each contribution is given by the following contour integral,
\bea
\mathcal{Z}_{\underline{k}}^{\D2}=\mathbb{I}[\mathbf{v}_{\text{inst.}}]&=\frac{\underline{\mathcal{G}}^{\underline{k}}}{\underline{k}!}\oint\prod_{a\in\four}\prod_{I=1}^{k_{a}}\frac{dx_{a,I}}{2\pi\iota x_{a,I}}\prod_{a,b\in\four}\prod_{\alpha=1}^{n_{a}}\prod_{I=1}^{k_{b}}g_{\bar{a}}\left(\frac{v_{a,\alpha}}{x_{b,I}}\right)
 \prod_{I\neq J}g_{a\,\text{i}(a)\text{j}(a)}\left(\frac{x_{a,I}}{x_{a,J}}\right)^{-1},\\
&\qquad \times \prod_{a<b}\prod_{I=1}^{k_{a}}\prod_{J=1}^{k_{b}}\mathcal{A}_{\mathbb{C}^{4}}\left(\frac{x_{a,I}}{x_{b,J}}\right)^{-1}
\label{eq:D2integral}
\eea
with
\bea
\underline{\mathcal{G}}^{\underline{k}}=\prod_{a\in\four}\mathcal{G}_{a\,\text{i}(a)\text{j}(a)}^{k_{a}}\,,\quad \underline{k}!=\prod_{a\in\four}k_{a}.
\eea
\end{proposition}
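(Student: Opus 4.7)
The plan is to establish the identity $\mathcal{Z}^{\D2}_{\underline{k}}=\mathbb{I}[\mathbf{v}_{\text{inst.}}]$ by direct evaluation of the index functor on the chosen square-root character, and then apply the Jeffrey--Kirwan prescription to express the result as the stated contour integral. The decomposition $\bfK=\sum_{a\in\four}\bfK_a$ of the total instanton bundle immediately yields the outer sum $\sum_k\mathfrak{q}^k\sum_{\sum_a k_a=k}$, so the core task is verifying the integrand. Substituting the weight decompositions $\bfN_a=\sum_\alpha v_{a,\alpha}$ and $\bfK_a=\sum_I x_{a,I}$ into $\mathbf{v}_{\text{inst.}}$ and using multiplicativity $\mathbb{I}[X+Y]=\mathbb{I}[X]\mathbb{I}[Y]$ reduces everything to identifying each elementary weight with one of the structure functions in \eqref{eq:struct_funct}, after which the calculation closely mirrors the tetrahedron and spiked cases treated earlier.

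I would treat $\mathbf{v}_{\text{inst.}}$ block by block. For the diagonal matter piece $-\bfP^{\vee}_{\mathrm{i}(a)\mathrm{j}(a)}\bigl(\bfN_a^{\vee}\bfK_a-q_a^{-1}\bfN_a\bfK_a^{\vee}\bigr)$, the first summand produces $\mathscr{S}_{\mathrm{i}(a)\mathrm{j}(a)}(v_{a,\alpha}/x_{a,I})$ while the second yields $\mathscr{S}_{\mathrm{i}(a)\mathrm{j}(a)}(q_{\mathrm{k}(a)}v_{a,\alpha}/x_{a,I})^{-1}$; here one uses the Calabi--Yau identity $q_a^{-1}\bfP^{\vee}_{\mathrm{i}(a)\mathrm{j}(a)}=q_{\mathrm{k}(a)}\bfP_{\mathrm{i}(a)\mathrm{j}(a)}$ together with the reflection relation $\mathscr{S}_A(x)=\mathscr{S}_A(q_A^{-1}x^{-1})$ from \eqref{eq:reflec_structfunc}. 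The resulting ratio telescopes by way of $g_{abc}(x)=\mathscr{S}_{ab}(x)/\mathscr{S}_{ab}(q_c x)$ recorded just after \eqref{eq:struct_funct}, producing exactly $\prod_{\alpha,I} g_{\bar{a}}(v_{a,\alpha}/x_{a,I})$. The off-diagonal piece $-\bfN_a^{\vee}\bfP^{\vee}_{\bar{a}}\bfK_b$ with $b\neq a$ is immediate from the definition of $g_{\bar{a}}$, and together with the diagonal contribution assembles the full product $\prod_{a,b\in\four}\prod_{\alpha,I} g_{\bar{a}}(v_{a,\alpha}/x_{b,I})$ appearing in \eqref{eq:D2integral}.

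The pure instanton blocks are handled analogously. The self-interaction $\bfP^{\vee}_{\mathrm{i}(a)\mathrm{j}(a)}\bfP^{\vee}_a\bfK_a^{\vee}\bfK_a=\bfP^{\vee}_{\overline{\mathrm{k}(a)}}\bfK_a^{\vee}\bfK_a$ gives $\prod_{I\neq J}g_{a\,\mathrm{i}(a)\mathrm{j}(a)}(x_{a,I}/x_{a,J})^{-1}$ off the diagonal, while the formally divergent $I=J$ terms contribute the regulated constants $\mathcal{G}_{a\,\mathrm{i}(a)\mathrm{j}(a)}^{k_a}$ assembling $\underline{\mathcal{G}}^{\underline{k}}$, exactly as in the tetrahedron computation leading to \eqref{eq:D6integral}. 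The crossed-instanton term $\sum_{a<b}\bfP_{\four}\bfK_a^{\vee}\bfK_b$ produces $\prod_{a<b}\prod_{I,J}\mathcal{A}_{\mathbb{C}^{4}}(x_{a,I}/x_{b,J})^{-1}$ by a single use of $\mathcal{A}_{\mathbb{C}^{4}}(x)=\mathbb{I}[-\bfP_{\four} x]$ from \eqref{eq:reflec_structfunc}. The factor $1/\underline{k}!$ accounts for the residual Weyl symmetry of $\prod_a\U(k_a)$ acting on the Chern roots, and the multi-integral is interpreted as a Jeffrey--Kirwan residue, consistent with the symmetries \eqref{eq:spikedinstantonmoduli}--\eqref{eq:D4ADHMconstraintsymmetry} adapted to the coupled vortex case as in footnote~\ref{footnote:D2ADHMintegral}.

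The principal obstacle is the symmetrization in the diagonal matter block: the choice of $\mathrm{i}(a),\mathrm{j}(a)\in\bar{a}$ in $\mathbf{v}_{\text{inst.}}$ is manifestly asymmetric among the three elements of $\bar{a}$, yet the target function $g_{\bar{a}}$ is totally symmetric in those indices. The reconciliation rests on the Calabi--Yau constraint $q_1q_2q_3q_4=1$ and the reflection identities, and mirrors the square-root ambiguity encountered in the magnificent four and tetrahedron partition functions, where different choices differ only by an overall sign and topological prefactor (compare $\sigma_a(\vec{\rho})$ in \eqref{eq:mag4inst}). Modulo this sign/ordering bookkeeping, which can be verified by comparing the three permissible choices of $\mathrm{k}(a)$, the proposition follows from the character computation above.
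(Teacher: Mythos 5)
Your proposal is correct and follows essentially the route the paper intends: the proposition is just the index functor $\mathbb{I}$ evaluated term by term on the square-root character $\mathbf{v}_{\text{inst.}}$ after substituting the weight decompositions of $\bfN_a$ and $\bfK_a$, with each block matched to a structure function exactly as in the tetrahedron and spiked cases. Your block-by-block bookkeeping (including the use of $q_a^{-1}\bfP^{\vee}_{\mathrm{i}(a)\mathrm{j}(a)}=q_{\mathrm{k}(a)}\bfP_{\mathrm{i}(a)\mathrm{j}(a)}$ and $g_{abc}(x)=\mathscr{S}_{ab}(x)/\mathscr{S}_{ab}(q_c x)$ to recover the fully symmetric $g_{\bar{a}}$ in the matter block) is the same computation, so there is nothing further to add.
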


For simplicity, let us consider the case of a $\U(n)$ theory on $\mathbb{C}_{4}\times \mathbb{S}^{1}$. We choose $a=4$, $\text{i}(4)=q_{2}$, $\text{j}(4)=q_{3}$ to be the masses and then we obtain
\bea
\mathcal{Z}_{k}^{\D2}&=\frac{\mathcal{G}_{\bar{1}}^{k}}{k!}\oint\prod_{I=1}^{k}\frac{dx_{I}}{2\pi\iota x_{I}}\prod_{\alpha=1}^{n}\prod_{I=1}^{k}g_{\bar{4}}\left(\frac{v_{\alpha}}{x_{I}}\right)\prod_{I\neq J}g_{\bar{1}}\left(\frac{x_{I}}{x_{J}}\right)^{-1},\\
&=\frac{\mathcal{G}_{\bar{1}}^{k}}{k!}\oint\prod_{I=1}^{k}\frac{dx_{I}}{2\pi\iota x_{I}}\prod_{\alpha=1}^{n}\prod_{I=1}^{k}\frac{(1-q_{4}^{-1}v_{\alpha}/x_{I})\prod_{i=1}^{3}(1-q_{i}v_{\alpha}/x_{I})}{(1-v_{\alpha}/x_{I})\prod_{i=1}^{3}(1-q_{4}^{-1}q_{i}^{-1}v_{\alpha}/x_{I})}\\
&\qquad \times \prod_{I\neq J}\frac{(1-x_{I}/x_{J})\prod_{i=2}^{4}(1-q_{i}^{-1}q_{1}^{-1}x_{I}/x_{J})}{(1-q_{1}^{-1}x_{I}/x_{J})\prod_{i=2}^{4}(1-q_{i}x_{I}/x_{J})}
\eea
The pole structure implies that the fixed points should be classified by the one-dimensional partitions in section \ref{sec:multi-dim-part}. Looking at the $g_{\bar{4}}(x)$ part, the pole at $x_{I}=v_{\alpha}$ starts the growth of the one-dimensional partition, while the zero at $x_{I}=q_{4}^{-1}v_{\alpha}$ terminates the growth in the opposite direction. The $g_{\bar{1}}(x)^{-1}$ part is decomposed into 
\bea
\frac{(1-x_{I}/x_{J})}{(1-q_{4}x_{I}/x_{J})}\times \frac{(1-q_{4}q_{2}x_{I}/x_{J})(1-q_{4}q_{3}x_{I}/x_{J})(1-q_{1}^{-1}q_{4}^{-1}x_{I}/x_{J})}{(1-q_{2}x_{I}/x_{J})(1-q_{3}x_{I}/x_{J})(1-q_{1}^{-1}x_{I}/x_{J})}
\eea
where the first term gives the contribution of the $\mathcal{N}=2$ vector multiplet and the second term gives the contributions of the three $\mathcal{N}=2$ adjoint chiral multiples with adjoint masses $q_{1},q_{2},q_{3}$ (in the multiplicative notation).

Generally, we assume that the fixed points are classified by one-dimensional partitions labeled by $\mathbb{Z}_{\geq 0}$ as the vortex partition function \cite{Shadchin:2006yz,Nekrasov:2009JJM,Yoshida:2011au,Fujitsuka:2013fga,Yoshida:2014ssa}:
\bea
    &\vec{\underline{v}}=(v_{a,\alpha})_{a\in\four}^{\alpha=1,\ldots,n_{a}},\quad \underline{\vec{k}}=(\vec{k}_{a})_{a\in\four}=(k_{a}^{(\alpha)})_{a\in\four}^{\alpha=1,\ldots,n_{a}},\quad |\vec{\underline{k}}|=\sum_{a\in\four}\sum_{\alpha=1}^{n_{a}}k_{a}^{(\alpha)},\\
    &\{x_{a,I}\}_{a\in\four}^{I=1,\ldots,k_{a}}\longrightarrow \{\chi_{a,v_{a,\alpha}}(\Bbox)\}_{a\in\four}^{\alpha=1,\ldots,n_{a}},\quad \chi_{a,v_{a,\alpha}}(\Bbox)=v_{a,\alpha}q_{a}^{i-1}\,\,(i=1,\ldots,k_{a}^{(\alpha)}).
\eea
The character of $\bfK_{a}$ at the fixed points will then be 
\bea
\left.\bfK_{a}\right|_{\vec{k}_{a}}=\sum_{\alpha=1}^{n_{a}}\sum_{\Abox\in k_{a}^{(\alpha)}}\chi_{a,v_{a,\alpha}}(\Bbox)=\sum_{\alpha=1}^{n_{a}}v_{a,\alpha}\frac{1-q_{a}^{k_{a}^{(\alpha)}}}{1-q_{a}}.
\eea
To evaluate the partition function, we need to insert this and take the index. Before doing that let us introduce the $\D2$-version of the Nekrasov factors.

\paragraph{Nekrasov factors for 3d theory}We define the $\D2$-version of the Nekrasov factors as
\bea
    \mathsf{N}_{a}(v_{1},k_{1}|v_{2},k_{2})&=\frac{\prod\limits_{\AboxF\in k_{2}}\left(1-v_{1}/\chi_{a,v_{2}}(\BboxF)\right)}{\prod\limits_{\Abox\in k_{1}}\left(1-q_{a}\chi_{a,v_{1}}(\Bbox)/v_{2}\right)}\prod_{\substack{\Abox\in k_{1}\\\AboxF\in k_{2}}}\mathscr{V}_{a}\left(\frac{\chi_{a,v_{1}}(\Bbox)}{\chi_{a,v_{2}}(\BboxF)}\right).
\eea

\begin{lemma}
The following recursion formulas hold for the 3d Nekrasov factors,
\bea
\frac{\mathsf{N}_{a}(v_{1},k_{1}+\Bbox\,|\,v_{2},k_{2})}{\mathsf{N}_{a}(v_{1},k_{1}\,|\,v_{2},k_{2})}&=\mathscr{U}^{a\vee}_{k_{2},v_{2}}\left(q_{a}\chi_{a,v_{1}}(\Bbox)\right)^{-1},\quad
\frac{\mathsf{N}_{a}(v_{1},k_{1}\,|\,v_{2},k_{2}+\BboxF)}{\mathsf{N}_{a}(v_{1},k_{1}\,|\,v_{2},k_{2})}=\mathscr{U}^{a}_{k_{1},v_{1}}\left(\chi_{a,v_{2}}(\BboxF)\right)
\eea
with 
\bea\label{eq:D2Ufunctiondef}
    \mathscr{U}^{a}_{k,v}(x)&=\left(1-\frac{v}{x}\right)\prod_{\Abox\in k}\mathscr{V}_{a}\left(\frac{\chi_{a,v}(\Bbox)}{x}\right)=\left(1-\frac{vq_{a}^{k}}{x}\right),\\
    \mathscr{U}^{a\vee}_{k,v}(x)&=\left(1-\frac{x}{v}\right)\prod_{\Abox\in k}\mathscr{V}_{a}\left(q_{a}^{-1}\frac{x}{\chi_{a,v}(\Bbox)}\right)^{-1}=\left(1-\frac{x}{vq_{a}^{k}}\right)
\eea
where note that $k+\Bbox =k+1$.    
\end{lemma}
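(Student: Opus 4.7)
The proof is a direct verification using telescoping, exploiting the especially simple structure of 1d partitions (where a partition is just a non-negative integer $k$, and $\chi_{a,v}(\Bbox) = v q_a^{i-1}$ for $1 \le i \le k$). My plan is to track which factors in $\mathsf{N}_{a}(v_{1},k_{1}|v_{2},k_{2})$ change when a single box is added and to show that the bulk of the changes telescope, leaving a single linear factor that matches $\mathscr{U}^{a\vee}$ or $\mathscr{U}^{a}$.

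For the first identity, I add a box $\Bbox$ at position $i=k_{1}+1$, so $\chi_{a,v_{1}}(\Bbox)=v_{1}q_{a}^{k_{1}}$. Only two structures in the definition of $\mathsf{N}_a$ are affected: the denominator acquires the single factor $(1-q_{a}\chi_{a,v_{1}}(\Bbox)/v_{2})=(1-v_{1}q_{a}^{k_{1}+1}/v_{2})$, and the double product gains the row $\prod_{j=1}^{k_{2}}\mathscr{V}_{a}\bigl(v_{1}q_{a}^{k_{1}}/(v_{2}q_{a}^{j-1})\bigr)$. Writing $\mathscr{V}_{a}(x)=(1-q_{a}x)/(1-x)$ turns this row into a telescoping product
\begin{equation*}
\prod_{j=1}^{k_{2}}\frac{1-v_{1}q_{a}^{k_{1}-j+2}/v_{2}}{1-v_{1}q_{a}^{k_{1}-j+1}/v_{2}}=\frac{1-v_{1}q_{a}^{k_{1}+1}/v_{2}}{1-v_{1}q_{a}^{k_{1}-k_{2}+1}/v_{2}}.
\end{equation*}
The new denominator factor cancels the new numerator of the telescoped expression, leaving $\bigl(1-v_{1}q_{a}^{k_{1}-k_{2}+1}/v_{2}\bigr)^{-1}$. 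This coincides with $\mathscr{U}^{a\vee}_{k_{2},v_{2}}(q_{a}\chi_{a,v_{1}}(\Bbox))^{-1}$ by the closed form $\mathscr{U}^{a\vee}_{k,v}(x)=1-x/(v q_{a}^{k})$.

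For the second identity I add a box $\BboxF$ at position $j=k_{2}+1$, so $\chi_{a,v_{2}}(\BboxF)=v_{2}q_{a}^{k_{2}}$. The numerator acquires $(1-v_{1}/(v_{2}q_{a}^{k_{2}}))$, and the double product gains a column that telescopes to $(1-v_{1}q_{a}^{k_{1}-k_{2}}/v_{2})/(1-v_{1}q_{a}^{-k_{2}}/v_{2})$. Multiplying the new numerator factor with this telescoped product produces $1-v_{1}q_{a}^{k_{1}-k_{2}}/v_{2}=\mathscr{U}^{a}_{k_{1},v_{1}}(v_{2}q_{a}^{k_{2}})$, as claimed.

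The only potential pitfall is bookkeeping — keeping straight which exponent belongs to the added box versus the pre-existing ones, and checking that the collapsed product of $\mathscr{V}_a$ ratios indeed cancels the appropriate single factors from the boundary products; no deep obstruction arises because the 1d setting makes the arm/leg considerations present for 2d (D4) and 3d (D6) Nekrasov factors completely trivial. Once the telescoping is set up, both equalities are immediate from the closed-form evaluations $\mathscr{U}^{a}_{k,v}(x)=1-vq_{a}^{k}/x$ and $\mathscr{U}^{a\vee}_{k,v}(x)=1-x/(vq_{a}^{k})$ already recorded in \eqref{eq:D2Ufunctiondef}.
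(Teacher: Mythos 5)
Your computation is correct: both telescoping products collapse exactly as you describe, and the residual linear factors match $\mathscr{U}^{a\vee}_{k_{2},v_{2}}(q_{a}\chi_{a,v_{1}}(\Bbox))^{-1}=\bigl(1-v_{1}q_{a}^{k_{1}-k_{2}+1}/v_{2}\bigr)^{-1}$ and $\mathscr{U}^{a}_{k_{1},v_{1}}(\chi_{a,v_{2}}(\BboxF))=1-v_{1}q_{a}^{k_{1}-k_{2}}/v_{2}$. The paper states this lemma without proof, so there is nothing to compare against; the only alternative route worth mentioning is to first establish the closed form $\mathsf{N}_{a}(v_{1},k_{1}|v_{2},k_{2})=\mathbb{I}\bigl[\tfrac{v_{2}}{v_{1}}\tfrac{1-q_{a}^{k_{2}-k_{1}}}{1-q_{a}}\bigr]$ recorded in Appendix~\ref{app:D2U1partitionfunction} (which is itself proved by the same telescoping) and then read off both ratios at once, but this buys nothing beyond repackaging your argument.
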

We note that the $\U(1)$ contributions coming from $\mathsf{N}_{a}(v,k\,|\,v,k)$ is trivial (see Appendix~\ref{app:D2U1partitionfunction}): \beq\mathsf{N}_{a}(v,k\,|\,v,k)=1.\eeq

\begin{proposition}
The coupled vortex partition function is given as follows,
\bea
\mathcal{Z}_{\text{vort.}}^{\D2}=\sum_{\vec{\underline{k}}}\mathfrak{q}^{|\vec{\underline{k}}|}\mathcal{Z}^{\D2}_{\text{cpl.vort.}}[\vec{\underline{v}},\vec{\underline{k}}],
\eea
where
\bea \label{eq:D2cplvortex}
    \mathcal{Z}^{\D2}_{\text{cpl.vort.}}[\vec{\underline{v}},\vec{\underline{k}}]&=\prod_{a\in\four}\prod_{\alpha=1}^{n_{a}}\mathcal{Z}_{a}^{\D2}[k_{a}^{(\alpha)};q_{\text{i}(a)},q_{\text{j}(a)}]\prod_{a\in\four}\prod_{\alpha<\beta}\mathcal{Z}_{a\,|\,a}^{\D2\tbar\D2}(v_{a,\alpha},k_{a}^{(\alpha)}\,|\,v_{a,\beta},k_{a}^{(\beta)})\\
    &\qquad \times \prod_{a<b}\prod_{\alpha=1}^{n_{a}}\prod_{\beta=1}^{n_{b}}\mathcal{Z}_{a\,|\,b}^{\D2\tbar\D2}(v_{a,\alpha},k_{a}^{(\alpha)}\,|\,v_{b,\beta},k_{b}^{(\beta)}),\\
    \mathcal{Z}_{a}^{\D2}[k\,;q_{i},q_{j}]&=\frac{\mathsf{N}_{a}(q_{i}v,k\,|\,v,k)\mathsf{N}_{a}(q_{j}v,k\,|\,v,k)}{\mathsf{N}_{a}(v,k\,|\,v,k)\mathsf{N}_{a}(q_{i}q_{j}v,k\,|\,v,k)},\quad a\in\four,\,\, i,j\in\four\setminus \{a\},\\
\mathcal{Z}^{\D2\tbar\D2}_{a|b}\left(v_{1},k_{1}\,|\,v_{2},k_{2}\right)&=\prod_{\AboxF\in k_{2}}g_{\bar{a}}\left(\frac{v_{1}}{\chi_{b,v_{2}}(\BboxF)}\right)\prod_{\Abox\in k_{1}}g_{\bar{b}}\left(\frac{q_{b}\chi_{a,v_{1}}(\Bbox)}{v_{2}}\right)^{-1}\prod_{\substack{\Abox\in k_{1}\\\AboxF\in k_{2}}}\mathcal{A}_{\mathbb{C}^{4}}\left(\frac{\chi_{a,v_{1}}(\Bbox)}{\chi_{b,v_{2}}(\BboxF)}\right)^{-1}.
\eea
\end{proposition}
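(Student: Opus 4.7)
The plan is to evaluate the contour integral \eqref{eq:D2integral} via the Jeffrey--Kirwan residue prescription, following the same strategy used above for the D4, D6 and D8 cases. First I would identify the equivariant fixed points as tuples of one-dimensional partitions $\vec{\underline{k}}=(k_a^{(\alpha)})$, with the pole assignments $x_{a,I}\mapsto \chi_{a,v_{a,\alpha}}(\Bbox)=v_{a,\alpha}q_a^{i-1}$ for $i=1,\dots,k_a^{(\alpha)}$. At each such fixed point, the character of the instanton bundle reduces to
\[
\bfK_a\big|_{\vec{k}_a}\;=\;\sum_{\alpha=1}^{n_a}\sum_{\Abox\in k_a^{(\alpha)}}\chi_{a,v_{a,\alpha}}(\Bbox)\;=\;\sum_{\alpha=1}^{n_a}v_{a,\alpha}\frac{1-q_a^{k_a^{(\alpha)}}}{1-q_a},
\]
so that the partition function becomes $\sum_{\vec{\underline{k}}}\mathfrak{q}^{|\vec{\underline{k}}|}\mathbb{I}[\mathbf{v}_{\text{inst}}]\big|_{\vec{\underline{k}}}$, and it remains to reorganize this index into the stated factorized form.

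The pole classification is the standard cascade argument. The factor $g_{\bar{a}}(v_{a,\alpha}/x_{a,I})$ in \eqref{eq:D2integral} contributes a simple seeding pole at $x_{a,I}=v_{a,\alpha}$. Once this pole is selected, the denominator $(1-q_a\,x_{a,J}/x_{a,I})$ appearing inside $g_{a\,\text{i}(a)\text{j}(a)}(x_{a,I}/x_{a,J})^{-1}$ opens a next-generation pole at $x_{a,J}=q_a v_{a,\alpha}$, and iterating produces a one-dimensional tower of length $k_a^{(\alpha)}$ along the $q_a$-direction. Potential cross-stack poles for $b\neq a$, as well as spurious poles at $x_{a,I}=q_a^{-1}q_i^{-1}v_{a,\alpha}$ coming from other factors in $g_{\bar{a}}$, are killed by the explicit zeros of $\mathcal{A}_{\mathbb{C}^4}(x_{a,I}/x_{b,J})^{-1}$ and by the numerator of $g_{a\,\text{i}(a)\text{j}(a)}$, so the only surviving JK poles are the one-dimensional towers labelled by $\vec{\underline{k}}$.

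The index evaluation then splits according to the decomposition of $\mathbf{v}_{\text{inst}}$. The diagonal block $-\bfP_{\text{i}(a)\text{j}(a)}^{\vee}\bigl(\bfN_a^\vee\bfK_a-q_a^{-1}\bfN_a\bfK_a^\vee-\bfP_a^\vee\bfK_a^\vee\bfK_a\bigr)$ specializes for each single $\alpha$ to $\mathcal{Z}_a^{\D2}[k_a^{(\alpha)};q_{\text{i}(a)},q_{\text{j}(a)}]$; the identification uses the definition of $\mathsf{N}_a$ together with the telescoping collapse $\mathscr{U}^a_{k,v}(x)=1-vq_a^k/x$ from \eqref{eq:D2Ufunctiondef}, and the trivial $\U(1)$ factor $\mathsf{N}_a(v,k\mid v,k)=1$ noted in Appendix~\ref{app:D2U1partitionfunction} ensures no leftover prefactors. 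The cross terms with $a=b$, $\alpha\neq\beta$ give $\mathcal{Z}_{a|a}^{\D2\tbar\D2}$, while the two summands $-\sum_{a\neq b}\bfN_a^\vee\bfP_{\bar a}^\vee\bfK_b+\sum_{a<b}\bfP_{\four}\bfK_a^\vee\bfK_b$ directly yield $\mathcal{Z}_{a|b}^{\D2\tbar\D2}$ after applying the index and recognizing $\mathcal{A}_{\mathbb{C}^4}$, $g_{\bar a}$ and $g_{\bar b}$.

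The main obstacles I anticipate are twofold. First, a careful JK analysis is needed to confirm that the putative cross-stack poles and the subleading poles of $g_{\bar a}$ indeed cancel against the zeros of $\mathcal{A}_{\mathbb{C}^4}$ and of the structure functions, since the coupled vortex moduli space has only been conjecturally defined in section~\ref{sec:cplvortex_partitionfunct} and one cannot invoke an a priori geometric model to constrain the poles. Second, one must verify that the normalization $\underline{\mathcal{G}}^{\underline{k}}/\underline{k}!$ in \eqref{eq:D2integral} precisely absorbs the permutation symmetry and the leading residue factors so as to reproduce the clean Nekrasov form without stray mass or topological prefactors; unlike the $\D4$ case \eqref{eq:D4spikedpartition2} where a $q_{\text{inf}(\bar A)}^{-|\lambda|}$ prefactor survives, the triviality of the one-dimensional $\U(1)$ factor is what makes the present simplification possible.
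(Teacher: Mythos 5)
Your proposal is correct and follows essentially the same route as the paper: localize the contour integral \eqref{eq:D2integral} on poles labelled by tuples of one-dimensional partitions, substitute $\bfK_a|_{\vec{k}_a}=\sum_\alpha v_{a,\alpha}(1-q_a^{k_a^{(\alpha)}})/(1-q_a)$ into $\mathbf{v}_{\text{inst}}$, and take the index term by term to recover the $\mathcal{Z}_a^{\D2}$ and $\mathcal{Z}_{a|b}^{\D2\tbar\D2}$ factors. The one caveat you raise — that the absence of a rigorously defined moduli space prevents a fully a priori JK pole classification — is precisely the point at which the paper itself retreats to an assumption ("we assume that the fixed points are classified by one-dimensional partitions"), supported only by the pole-structure analysis of the single-stack $\mathbb{C}_4\times\mathbb{S}^1$ example; your cascade argument is, if anything, slightly more explicit than what the paper provides.
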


\paragraph{One-loop perturbative part}
We choose the following square root part of the perturbative part $\mathring{\mathbf{V}}$ as 
\bea
    \mathring{\mathbf{v}}=\sum_{(b,\beta)>(a,\alpha)}\frac{\bfP_{\bar{a}}^{\vee}\bfP_{\bar{b}}}{\bfP_{\four}}v_{b,\beta}/v_{a,\alpha}
\eea
which gives 

\bea\label{eq:D2oneloop}
    &\mathbb{I}[\mathring{\mathbf{v}}]=\prod_{(b,\beta)>(a,\alpha)}\mathcal{Z}_{\text{1-loop}}^{\D2\tbar\D2}(v_{a,\alpha},a\,|\,v_{b,\beta},b)\eqqcolon\mathcal{Z}^{\D2}_{\text{1-loop}},\\
    &\mathcal{Z}_{\text{1-loop}}^{\D2\tbar\D2}(x,a\,|\,x',b)=\exp\left(-\sum_{n=1}^{\infty}\frac{1}{n}\frac{\bfP_{\bar{a}}^{[n]}\bfP_{\bar{b}}^{[-n]}}{\bfP_{\four}^{[n]}}\left(\frac{x}{x'}\right)^{n}\right).
\eea
When $a=b$, the one-loop factor is written using the $q$-shifted factorial or $q$-deformed gamma functions. See Appendix \ref{sec:q-functions} and \eqref{eq:D2oneloop-qgamma} for the explicit formulas.

\subsection{Decomposition of partition functions}\label{sec:decomp_partition}
 The partition functions of the 3d, 5d, 7d, and 9d theories explained in the previous section can be written in different forms. For each setup, the fixed points are labeled by multi-dimensional partitions introduced in section \ref{sec:multi-dim-part}. Each multi-dimensional partition has different descriptions (see section \ref{sec:multi-dim-part}), which eventually leads to different decompositions of the partition functions. A similar discussion decomposing the partition functions of 5d theories into 3d theories was done in \cite{Nieri:2017ntx}. 
\subsubsection{3d theory}\label{sec:decomp3d}
Let us focus on the case when we only have $n$-stacks of $\D2$-branes on $\mathbb{C}_{a}\times \mathbb{S}^{1}$ leading to a $\U(n)$ theory. After localization, we obtain
\bea
   &\bfY_{a}=\bfN_{a}-\bfP_{a}\bfK_{a}=\sum_{\alpha=1}^{n_{a}}v_{a,\alpha}-(1-q_{a})\sum_{\alpha=1}^{n_{a}}v_{a,\alpha}\frac{1-q_{a}^{k_{a}^{(\alpha)}}}{1-q_{a}}=\sum_{\alpha=1}^{n_{a}}v_{a,\alpha}q_{a}^{k_{a}^{(\alpha)}}\eqqcolon\bfX_{a},\\
    &\bfX_{a}=\sum_{x\in\mathcal{X}_{a}}x,\quad \mathcal{X}_{a}=\{v_{a,\alpha}q_{a}^{k_{a}^{(\alpha)}}\,|\,\alpha=1,\ldots,n_{a}\}
\eea
The total index of the coupled vortex system will be rewritten as
\bea
    \bfV=\frac{\bfY^{\vee}\bfY}{\bfP_{\four}}=\sum_{a,b\in\four}\frac{\bfP_{\bar{a}}^{\vee}\bfP_{\bar{b}}}{\bfP_{\four}}\bfX_{a}^{\vee}\bfX_{b}.
\eea
Choosing a square root part and taking the index gives the representation using the $\bfX_{a}$-variables. Note that the partition function obtained here will be the same as the partition function in \eqref{eq:D2cplvortex} up to perturbative factors and topological terms.

\subsubsection{5d theory}\label{sec:decomp5d}
The two-dimensional partition (Young diagram) has two $(1,1)$-type descriptions depending on which axis we project the Young diagram. We can rewrite the character of $\bfY_{ab}\,(a<b)$ as
\bea
    &\bfY_{ab}=\bfN_{ab}-\bfP_{a}\bfP_{b}\bfK_{ab}=\bfP_{a}\bfX_{ab}=\bfP_{b}\check{\bfX}_{ab},\\
    &\bfX_{ab}=\sum_{x\in\mathcal{X}_{ab}}x,\quad \mathcal{X}_{ab}=\left\{v_{ab,\alpha}q_{a}^{i-1}\left.q_{b}^{\lambda^{(\alpha)}_{ab,i}}\,\right|\,\substack{\alpha=1,\ldots,n_{ab}\\i=1,\ldots,\infty}\right\},\\
    &\check{\bfX}_{ab}=\sum_{x\in\check{\mathcal{X}}_{ab}}x,\quad \check{\mathcal{X}}_{ab}=\left\{ v_{ab,\alpha}q_{b}^{j-1}\left.q_{a}^{\lambda^{(\alpha)\rmT}_{ab,j}}\,\right|\,\substack{\alpha=1,\ldots,n_{ab}\\j=1,\ldots,\infty} \right\}.
\eea
This comes from the following identities
\bea
    v-(1-q_{a})(1-q_{b})\sum_{i=1}^{\infty}\sum_{j=1}^{\lambda_{i}}vq_{a}^{i-1}q_{b}^{j-1}=(1-q_{a})\sum_{i=1}^{\infty}vq_{a}^{i-1}q_{b}^{\lambda_{i}},\quad |q_{a}|<1,\\
v-(1-q_{a})(1-q_{b})\sum_{j=1}^{\infty}\sum_{i=1}^{\lambda^{\rmT}_{j}}vq_{a}^{i-1}q_{b}^{j-1}=(1-q_{b})\sum_{j=1}^{\infty}vq_{b}^{j-1}q_{a}^{\lambda^{\rmT}_{j}},\quad |q_{b}|<1.
\eea
Note that depending on the analytic region of the $q$-parameters, the decomposition will differ. Since we have the condition $q_{1}q_{2}q_{3}q_{4}=1$, we can not keep all of the parameters as $|q_{a}|<1$, but at least one of the parameters should be $|q_{a}|>1$. For example, if we choose an analytic region\footnote{In this paper, we assume in most cases that only one of the $q$-parameters will be $|q_{a}|>1$.} $|q_{1,2,3}|<1,\,|q_{4}|>1$, then we can rewrite $\bfY_{A}\,(A\in\six)$ as
\bea
    \bfY_{i4}=\bfP_{i}\bfX_{i4},\quad \bfY_{ij}=\bfP_{i}\bfX_{ij}=\bfP_{j}\check{\bfX}_{ij},\quad i,j=1,2,3,
\eea
where for $\bfY_{ij}$, we can either use the normal Young diagram or the transpose of it.

For $A\in\six$, assume we have the decomposition
\bea
    \bfY_{A}=\bfP_{s(A)}\bfX_{A},
\eea
where $s(A)$ is one of the indices in $A$ and $\bfX_{A}$ is the corresponding $\bfX$-bundle after the decomposition, then the total index of the spiked instanton is
\bea
    \bfV=\frac{\bfY^{\vee}\bfY}{\bfP_{\four}}=\sum_{A,B\in\six}\frac{(\bfP_{\bar{A}}\bfP_{s(A)})^{\vee}(\bfP_{\bar{B}}\bfP_{s(B)})}{\bfP_{\four}}\bfX_{A}^{\vee}\bfX_{B}=\sum_{A,B\in\six}\frac{\bfP_{\four}}{\bfP_{\bar{s}(A)}^{\vee}\bfP_{\bar{s}(B)}}\bfX_{A}^{\vee}\bfX_{B}\label{eq:spiked_D4toD2}
\eea
where $\bar{s}(A)$ is the other index in $A$, namely $A=s(A)\bar{s}(A)$. Taking the square root part and the index gives the $(1,1)$-type description of the spiked instanton partition function.

\subsubsection{7d theory}\label{sec:decomp7d}
For the tetrahedron instanton system, since the fixed points are classified by plane partitions, we have two types of descriptions of the partition functions: $(2,1)$ and $(1,2)$-type.

\paragraph{$(2,1)$-type}
We can rewrite the character of $\bfY_{abc}\,(a<b<c)$ as 
\bea &\bfY_{abc}=\bfP_{a}\bfP_{b}\bfX_{abc}=\bfP_{a}\bfP_{c}\check{\bfX}_{abc}=\bfP_{b}\bfP_{c}\check{\check{\bfX}}_{abc},\\
&\bfX_{abc}=\sum_{x\in\mathcal{X}_{abc}}x,\quad \check{\bfX}_{abc}=\sum_{x\in\check{\mathcal{X}}_{abc}}x,\quad \check{\check{\bfX}}_{abc}=\sum_{x\in\check{\check{\mathcal{X}}}_{abc}}x,\\
&\mathcal{X}_{abc}=\{v_{abc,\alpha}q_{a}^{i-1}q_{b}^{j-1}q_{c}^{\pi^{(\alpha)}_{abc,ij}}\,|\,\substack{\alpha=1,\ldots,n_{abc}\\
    i,j=1,\ldots,\infty}\},\quad \check{\mathcal{X}}_{abc}=\{v_{abc,\alpha}q_{a}^{i-1}q_{c}^{k-1}q_{b}^{\check{\pi}^{(\alpha)}_{abc,ik}}\,|\,\substack{\alpha=1,\ldots,n_{abc}\\
    i,k=1,\ldots,\infty}\},\\
&\check{\check{\mathcal{X}}}_{abc}=\{v_{abc,\alpha}q_{b}^{j-1}q_{c}^{k-1}q_{a}^{\check{\check{\pi}}^{(\alpha)}_{abc,jk}}\,|\,\substack{\alpha=1,\ldots,n_{abc}\\
    j,k=1,\ldots,\infty}\},
\eea
where $\pi,\check{\pi},\check{\check{\pi}}$ are the three possible $(2,1)$-type descriptions depending on which 2d plane the plane partition is projected. The above decomposition comes from the following identity
\bea
    v-(1-q_{a})(1-q_{b})(1-q_{c})\sum_{i=1}^{\infty}\sum_{j=1}^{\infty}\sum_{k=1}^{\pi_{i,j}}vq_{a}^{i-1}q_{b}^{j-1}q_{c}^{k-1}=(1-q_{a})(1-q_{b})\sum_{i,j=1}^{\infty}vq_{a}^{i-1}q_{b}^{j-1}q_{c}^{\pi_{i,j}}
\eea
where we assumed $|q_{a}|,|q_{b}|<1$. Under the condition that only one of the $q$-parameters obey $|q_{a}|>1\,(\exists!\,a)$, we can always do this $(2,1)$-description. 

For $a\in\four$, let $\text{i}(a)$, $\text{j}(a)$, $\text{k}(a)$ be the three indices of $\bar{a}$ and assume we have the following decomposition
\bea
    \bfY_{\bar{a}}=\bfP_{\text{i}(a)}\bfP_{\text{j}(a)}\bfX_{\bar{a}}
\eea
where we omit the check mark on $\bfX_{\bar{a}}$ for simplicity. Then, the total index of the tetrahedron instanton system will be 
\bea
    \bfV=\frac{\bfY^{\vee}\bfY}{\bfP_{\four}}=\sum_{a,b\in\four}\frac{(\bfP_{a}\bfP_{\text{i}(a)}\bfP_{\text{j}(a)}\bfX_{\bar{a}})^{\vee}(\bfP_{b}\bfP_{\text{i}(b)}\bfP_{\text{j}(b)}\bfX_{\bar{b}})}{\bfP_{\four}}=\sum_{a,b\in\four}\frac{\bfP_{\four}}{\bfP_{\text{k}(a)}^{\vee}\bfP_{\text{k}(b)}}\bfX_{\bar{a}}^{\vee}\bfX_{\bar{b}}.\label{eq:7dscreeningdecomp}
\eea

\paragraph{$(1,2)$-type} Under this description, the plane partition is decomposed into multiple Young diagrams extending in one of the three directions. Let us focus on a $\U(1)$ theory on $\mathbb{C}^{3}_{123}\times \mathbb{S}^{1}$ for simplicity. For $k\in\mathbb{Z}_{>0}$ we have a Young diagram $\lambda^{(k)}$ and the instanton bundle is written as 
\bea
\bfK_{123}=\sum_{k=1}^{\infty}\sum_{\Abox\in\lambda^{(k)}}\chi_{12,vq_{3}^{k-1}}(\Bbox),\quad |q_{3}|<1
\eea
Then, we obtain
\bea
    \bfY_{123}&=v-\bfP_{123}\sum_{k=1}^{\infty}\sum_{\Abox\in\lambda^{(k)}}\chi_{12,vq_{3}^{k-1}}(\Bbox)=\bfP_{3}\left(\frac{v}{1-q_{3}}-\bfP_{12}\sum_{k=1}^{\infty}\sum_{\Abox\in\lambda^{(k)}}\chi_{12,vq_{3}^{k-1}}(\Bbox)\right)\\
    &=\bfP_{3}\left(\sum_{k=1}^{\infty}vq_{3}^{k-1}-\bfP_{12}\sum_{\Abox\in\lambda^{(k)}}\chi_{12,vq_{3}^{k-1}}(\Bbox)\right)=\bfP_{3}\sum_{k=1}^{\infty}\bfY_{12}[vq_{3}^{k-1},\lambda^{(k)}],\label{eq:7dYoungdiagramdecomp}
\eea
where we used 
\bea
    \bfY_{ab}[v,\lambda]=v-\bfP_{ab}\sum_{\Abox\in\lambda}\chi_{ab,v}(\Bbox).
\eea
This identity shows that the 7d $\U(1)$ theory can be understood as a 5d $\U(\infty)$-theory stacked in one of the transverse direction.\footnote{Combining with the discussion in section~\ref{sec:QTgl1}, one will see that this property is related to the fact that the MacMahon representation of quantum toroidal $\mathfrak{gl}_{1}$ is obtained as infinite tensor products of the Fock representations.} Doing this decomposition for other cases and inserting it in the gauge origami index, we obtain the description in lower dimensional partitions.

\subsubsection{9d theory}\label{sec:decomp9d}
For simplicity, let us consider the $\U(1|1)$ theory of the magnificent four system: 
\bea
    \bfY=\bfN-\bfP_{\four}\bfK,\quad \bfN=v-\bar{v},\quad \bfK=\sum_{\shcube \in\rho}\chi_{\four,v}(\hcube),
\eea
where $\rho$ is the solid partition. We have three possible descriptions: $(3,1),\, (2,2),\, (1,3)$.
\paragraph{$(3,1)$-type}
We choose the fourth direction to be the direction where the height function is defined and then obtain for $|q_{1,2,3}|<1$
\begin{equation}
\begin{split}
\bfK&=\sum_{i,j,k=1}^{\infty}\sum_{l=1}^{\rho_{i,j,k}}vq_{1}^{i-1}q_{2}^{j-1}q_{3}^{k-1}q_{4}^{l-1}=\frac{v}{\prod_{a\in\four}(1-q_{a})}-\frac{1}{1-q_{4}}\sum_{i,j,k=1}^{\infty}vq_{1}^{i-1}q_{2}^{j-1}q_{3}^{k-1}q_{4}^{\rho_{i,j,k}},\\
\bfY&=\bfP_{123}\sum_{i,j,k=1}^{\infty}vq_{1}^{i-1}q_{2}^{j-1}q_{3}^{k-1}q_{4}^{\rho_{i,j,k}}-\bar{v}=\bfP_{123}\bfX-\bar{v}.
\end{split}
\end{equation}
For higher rank theories such as $\U(n|n)$, we have 
\bea
&\bfY=\bfP_{123}\bfX-\bar{\mathbf{n}},\quad \bfX=\sum_{x\in\mathcal{X}_{\four}}x,\quad \bar{\mathbf{n}}=\sum_{\alpha=1}^{n}\bar{v}_{\alpha}\quad \mathcal{X}_{\four}=\left\{v_{\alpha}q_{1}^{i-1}q_{2}^{j-1}\left.q_{3}^{k-1}q_{4}^{\rho^{(\alpha)}_{i,j,k}}\right| \substack{i,j,k=1,\ldots,\infty\\\alpha=1,\ldots,n} \right\}.
\eea

\paragraph{$(2,2)$-type}
For example, in this description, we can decompose the solid partition into two Young diagrams $\rho=(\lambda_{12};\mu_{34})$. The $\lambda_{12},\mu_{34}$ will be Young diagrams extending in the 12 and 34 directions respectively. We choose this decomposition for simplicity but of course, we can use the quadrality to choose other decompositions. For each box in the Young diagram $(i,j)\in\lambda_{12}$, we have a Young diagram $\mu^{(i,j)}_{34}$ ($\mu^{\Abox}_{34}$ for $\Bbox\in\lambda_{12}$) and thus 
\bea
\bfK=\sum_{(i,j)\in\lambda_{12}}\sum_{\Abox\in\mu^{(i,j)}_{34}}\chi_{34,vq_{1}^{i-1}q_{2}^{j-1}}(\Bbox).
\eea
We then have 
\bea
    \bfY_{\four}+\bar{\mathbf{n}}&=v-\bfP_{12}\bfP_{34}\sum_{(i,j)\in\lambda_{12}}\sum_{\Abox\in\mu^{(i,j)}_{34}}\chi_{34,vq_{1}^{i-1}q_{2}^{j-1}}(\Bbox)\\
&=\bfY_{12}[v,\lambda_{12}]+\bfP_{12}\sum_{\Abox\in\lambda_{12}}\bfY_{34}[\chi_{12,v}(\Bbox),\mu_{34}^{\Abox}].
\eea

\paragraph{$(1,3)$-type}
We decompose the solid partition into sequences of plane partitions as
\bea
    \rho=(\pi^{(1)},\ldots,\pi^{(l)},\ldots)
\eea
after choosing the specific direction to be 4. Namely, multiple plane partitions are piled up in the fourth direction. We then have
\bea
\bfK_{\four}=\sum_{l=1}^{\infty}\sum_{\scube\in\pi^{(l)}}\chi_{\bar{4},vq_{4}^{l-1}}(\cube),\quad |q_{4}|<1,
\eea
and the character of $\bfY_{\four}$ is rewritten as 
\bea
    \bfY+\bar{\mathbf{n}}&=\bfN-\bfP_{\four}\bfK=\bfP_{4}\left(\frac{v}{1-q_{4}}-\sum_{l=1}^{\infty}\sum_{\scube\in\pi^{(l)}}\chi_{\bar{4},vq_{4}^{l-1}}(\cube)\right)\\
&=\bfP_{4}\left(\sum_{l=1}^{\infty}\left(vq_{4}^{l-1}-\sum_{l=1}\sum_{\scube\in\pi^{(l)}}\chi_{\bar{4},vq_{4}^{l-1}}(\cube)\right)\right)=\bfP_{4}\sum_{l=1}^{\infty}\bfY_{123}[vq_{4}^{l-1},\pi^{(l)}].
\eea
This description resembles the 7d $(1,2)$-type description where 2d partitions were piled up to get 3d partitions.

\subsection{\texorpdfstring{$qq$}{qq}-characters from instanton partition functions}\label{sec:qqpartitionfunct}
\subsubsection{5d theory}
Let us review the $qq$-character of the 5d $\mathcal{N}=1^{\ast}$ theory on $\mathbb{C}^{2}_{12}\times \mathbb{S}^{1}$. Using the recursion formulas of the 5d Nekrasov factors in \eqref{eq:5dNekrasovrecursion}, the recursion relation of the $\U(1)$ partition function $\widetilde{\mathcal{Z}}^{\D4}_{12}[\lambda]$ is (see the derivation and the discussion in \eqref{eq:app-D4U1recursionformula})
\bea\label{eq:D4qqrecursion}
    \frac{\widetilde{\mathcal{Z}}^{\D4}_{12}[\lambda+\Bbox]}{\widetilde{\mathcal{Z}}
^{\D4}_{12}[\lambda]}&=-\frac{\mathscr{Y}_{\lambda,v}^{12}(q_{3}^{-1}\chi_{12,v}(\Bbox))\mathscr{Y}^{12}_{\lambda+\Abox,v}(q_{4}^{-1}\chi_{12,v}(\Bbox))}{\mathscr{Y}^{12}_{\lambda,v}(\chi_{12,v}(\Bbox))\mathscr{Y}^{12}_{\lambda+\Abox,v}(q_{34}^{-1}\chi_{12,v}(\Bbox))}\\
&=-\mathscr{S}_{12}\left(q_{4}\right)\frac{\mathscr{Y}_{\lambda,v}^{12}(q_{3}^{-1}\chi_{12,v}(\Bbox))\mathscr{Y}^{12}_{\lambda,v}(q_{4}^{-1}\chi_{12,v}(\Bbox))}{\mathscr{Y}^{12}_{\lambda,v}(\chi_{12,v}(\Bbox))\mathscr{Y}^{12}_{\lambda+\Abox,v}(q_{34}^{-1}\chi_{12,v}(\Bbox))}
\eea
where $\Bbox\in A(\lambda)$ and we used $\mathscr{Y}_{\lambda+\Abox,v}^{12}(x)=\mathscr{S}_{12}\left(\chi_{12,v}(\Bbox)/x\right)\mathscr{Y}_{\lambda,v}^{12}(x)$. Strictly speaking, the $\mathscr{Y}$-functions in the denominator are singular. The term $\mathscr{Y}^{12}_{\lambda,v}(x)$ has a zero in $x=\chi_{12,v}(\Bbox)$ while the term $\mathscr{Y}^{12}_{\lambda+\Abox}(q_{12}x)$ has  a pole at $x=\chi_{12,v}(\Bbox)$. However, since the zero and pole will cancel with each other, we can write in the above way.

We can also write the recursion formula using the dual $\mathscr{Y}$-functions as
\bea\label{eq:D4qqdualrecursion}
    \frac{\widetilde{\mathcal{Z}}^{\D4}_{12}[\lambda+\Bbox]}{\widetilde{\mathcal{Z}}
^{\D4}_{12}[\lambda]}&=-\frac{\mathscr{Y}_{\lambda,v}^{12\,\vee}(q_{3}^{-1}\chi_{12,v}(\Bbox))\mathscr{Y}^{12\,\vee}_{\lambda+\Abox,v}(q_{4}^{-1}\chi_{12,v}(\Bbox))}{\mathscr{Y}^{12\,\vee}_{\lambda,v}(\chi_{12,v}(\Bbox))\mathscr{Y}^{12\,\vee}_{\lambda+\Abox,v}(q_{34}^{-1}\chi_{12,v}(\Bbox))}\\
&=-\mathscr{S}_{12}\left(q_{3}\right)\frac{\mathscr{Y}_{\lambda,v}^{12\,\vee}(q_{3}^{-1}\chi_{12,v}(\Bbox))\mathscr{Y}^{12\,\vee}_{\lambda,v}(q_{4}^{-1}\chi_{12,v}(\Bbox))}{\mathscr{Y}^{12\,\vee}_{\lambda,v}(\chi_{12,v}(\Bbox))\mathscr{Y}^{12\,\vee}_{\lambda+\Abox,v}(q_{34}^{-1}\chi_{12,v}(\Bbox))}
\eea
Note here that we have the property $\mathscr{S}_{12}(q_{3})=\mathscr{S}_{12}(q_{4})$. 

The recursion relations are then rewritten as
\bea\label{eq:D4qqanalytic}
\underset{x=\chi_{12,v}(\Bbox)}{\Res}\left[\mathscr{Y}^{12}_{\lambda+\Abox,v}(q_{12}x)\widetilde{\mathcal{Z}}^{\D4}_{12}[\lambda+\Bbox]+\mathscr{S}_{12}(q_{4})\frac{\mathscr{Y}^{12}_{\lambda,v}(q_{3}^{-1}x)\mathscr{Y}^{12}_{\lambda,v}(q_{4}^{-1}x)}{\mathscr{Y}^{12}_{\lambda,v}(x)}\widetilde{\mathcal{Z}}^{\D4}_{12}[\lambda]\right]=0,\\
\underset{x=\chi_{12,v}(\Bbox)}{\Res}\left[\mathscr{Y}^{12\,\vee}_{\lambda+\Abox,v}(q_{12}x)\widetilde{\mathcal{Z}}^{\D4}_{12}[\lambda+\Bbox]+\mathscr{S}_{12}(q_{3})\frac{\mathscr{Y}^{12\,\vee}_{\lambda,v}(q_{3}^{-1}x)\mathscr{Y}^{12\,\vee}_{\lambda,v}(q_{4}^{-1}x)}{\mathscr{Y}^{12\,\vee}_{\lambda,v}(x)}\widetilde{\mathcal{Z}}^{\D4}_{12}[\lambda]\right]=0
\eea
which means that the singularity of the first term at $x=\chi_{12,v}(\Bbox)$ is cancelled by the second term. Including the topological term and taking the instanton summation, we obtain
\bea
&\left\langle\widehat{\mathscr{Y}}^{12}(q_{12}x)\right\rangle_{12}+\mathfrak{q}\mathscr{S}_{12}(q_{4})\left\langle\frac{\widehat{\mathscr{Y}}^{12}(q_{3}^{-1}x)\widehat{\mathscr{Y}}^{12}(q_{4}^{-1}x)}{\widehat{\mathscr{Y}}^{12\,\vee}(x)}\right\rangle_{12},\\
&\left\langle\widehat{\mathscr{Y}}^{12\,\vee}(q_{12}x)\right\rangle_{12}+\mathfrak{q}\mathscr{S}_{12}(q_{4})\left\langle\frac{\widehat{\mathscr{Y}}^{12\,\vee}(q_{3}^{-1}x)\widehat{\mathscr{Y}}^{12\,\vee}(q_{4}^{-1}x)}{\widehat{\mathscr{Y}}^{12\,\vee}(x)}\right\rangle_{12}
\eea
where
\bea
\langle\widehat{\mathcal{O}}\rangle_{12}=\frac{1}{\mathcal{Z}_{12}}\sum_{\lambda}\mathfrak{q}^{|\lambda|}\widetilde{\mathcal{Z}}^{\D4}_{12}[\lambda]\mathcal{O}_{\lambda,v},\quad \mathcal{Z}_{12}=\sum_{\lambda}\mathfrak{q}^{|\lambda|}\widetilde{\mathcal{Z}}_{12}^{\D4}[\lambda].
\eea
Note here that we are understanding $\widehat{\mathcal{O}}$ as an operator acting on the fixed points labeled by $(v,\lambda)$, where $v$ is the exponentiated Coulomb branch parameter of the $\U(1)$ gauge theory on $\mathbb{C}^{2}_{12}\times \mathbb{S}^{1}$.  The subindex $12$ denotes on which gauge theory the instanton expectation value is taken. Although we were considering when the bulk theory is the $\U(1)$ theory, we can generalize it to $\U(n)$ theories. In such cases, the $\widehat{\mathscr{Y}}^{12}(x)$ operator will act as
\bea
\widehat{\mathscr{Y}}^{12}(x)\longrightarrow \prod_{\alpha=1}^{n}\mathscr{Y}^{12}_{\lambda^{(\alpha)},v_{\alpha}}(x),\quad \widehat{\mathscr{Y}}^{12\,\vee}(x)\longrightarrow \prod_{\alpha=1}^{n}\mathscr{Y}^{12\,\vee}_{\lambda^{(\alpha)},v_{\alpha}}(x)
\eea
under an instanton background labeled by $(\vec{v},\vec{\lambda})$.

Due to the property in \eqref{eq:D4qqanalytic}, the singularities coming from the first term are canceled by the second term. This procedure to add a term and compensate the pole singularity is called the \textbf{iWeyl reflection} \cite{Nekrasov:2012xe} and is defined as
\bea
\widehat{\mathscr{Y}}^{12}(q_{12}x)&\longmapsto \mathfrak{q}\mathscr{S}_{12}(q_{4})\widehat{\mathscr{Y}}^{12}(q_{12}x)\widehat{\mathscr{A}}(x)^{-1},\\
\widehat{\mathscr{Y}}^{12\,\vee}(q_{12}x)&\longmapsto \mathfrak{q}\mathscr{S}_{12}(q_{3})\widehat{\mathscr{Y}}^{12\,\vee}(q_{12}x){\widehat{\mathscr{A}}}^{\,\vee}(x)^{-1},
\eea
where 
\bea
\widehat{\mathscr{A}}(x)=\frac{\widehat{\mathscr{Y}}^{12}(x)\widehat{\mathscr{Y}}^{12}(q_{3}^{-1}q_{4}^{-1}x)}{\widehat{\mathscr{Y}}^{12}(q_{3}^{-1}x)\widehat{\mathscr{Y}}^{12}(q_{4}^{-1}x)},\quad {\widehat{\mathscr{A}}}^{\,\vee}(x)=\frac{\widehat{\mathscr{Y}}^{12\,\vee}(x)\widehat{\mathscr{Y}}^{12\,\vee}(q_{3}^{-1}q_{4}^{-1}x)}{\widehat{\mathscr{Y}}^{12\,\vee}(q_{3}^{-1}x)\widehat{\mathscr{Y}}^{12\,\vee}(q_{4}^{-1}x)}
\eea
After this iWeyl reflection, new singularities will appear from the new $\mathscr{Y}$-functions coming from $\widehat{\mathscr{A}}(x)$ and thus, we need to recursively do this procedure. After summing up all the terms appearing after the iWeyl reflection, we will obtain a pole-free function in $x$ which is called the \textbf{$qq$-character}:
\bea
\widehat{\mathscr{T}}^{12}(x)&=\widehat{\mathscr{Y}}^{12}(q_{12}x)+\mathfrak{q}\mathscr{S}_{12}(q_{4})\widehat{\mathscr{Y}}^{12}(q_{12}x)\widehat{\mathscr{A}}(x)^{-1}+\cdots \\
\widehat{\mathscr{T}}^{12\,\vee}(x)&=\widehat{\mathscr{Y}}^{12}(q_{12}x)+\mathfrak{q}\mathscr{S}_{12}(q_{3})\widehat{\mathscr{Y}}^{12\,\vee}(q_{12}x){\widehat{\mathscr{A}}}^{\,\vee}(x)^{-1}+\cdots
\eea
Doing the iWeyl reflection recursively, one can show that the full formula of this $qq$-character is  
\bea
\widehat{\mathscr{T}}^{12}(x)&=\sum_{\mu\in\mathcal{P}}\mathfrak{q}^{|\mu|}\widetilde{\mathcal{Z}}^{\D4}_{34}[\mu]\widehat{\mathscr{Y}}^{12}(q_{12}x)\prod_{\Abox\in\mu}\widehat{\mathscr{A}}(\chi_{34,x}(\Bbox))^{-1}\\
&=\sum_{\mu\in\mathcal{P}}\mathfrak{q}^{|\mu|}\widetilde{\mathcal{Z}}^{\D4}_{34}[\mu]\frac{\prod_{\Abox\in A(\mu)}\widehat{\mathscr{Y}}^{12}(\chi_{34,x}(\Bbox))}{\prod_{\Abox\in R(\mu)}\widehat{\mathscr{Y}}^{12}(q_{34}\chi_{34,x}(\Bbox))},\\
\widehat{\mathscr{T}}^{12\,\vee}(x)&=\sum_{\mu\in\mathcal{P}}\mathfrak{q}^{|\mu|}\widetilde{\mathcal{Z}}^{\D4}_{34}[\mu]\widehat{\mathscr{Y}}^{12\,\vee}(q_{12}x)\prod_{\Abox\in\mu}{\widehat{\mathscr{A}}}^{\,\vee}(\chi_{34,x}(\Bbox))^{-1}\\
&=\sum_{\mu\in\mathcal{P}}\mathfrak{q}^{|\mu|}\widetilde{\mathcal{Z}}^{\D4}_{34}[\mu]\frac{\prod_{\Abox\in A(\mu)}\widehat{\mathscr{Y}}^{12\,\vee}(\chi_{34,x}(\Bbox))}{\prod_{\Abox\in R(\mu)}\widehat{\mathscr{Y}}^{12\,\vee}(q_{34}\chi_{34,x}(\Bbox))}.
\eea
Obviously, $\mathscr{S}_{12}(q_{3})=\mathscr{S}_{12}(q_{4})$ is the one-instanton contribution of the $\U(1)$ affine quiver gauge theory on $\mathbb{C}^{2}_{34}\times \mathbb{S}^{1}$ (see \eqref{app-eq:one-instanton}).

Generally, we can do the same calculation for other 5d theories on $\mathbb{C}^{2}_{A}\times \mathbb{S}^{1}\,(A\in\six)$. 
\begin{proposition}
    The $qq$-characters with respect with the 5d $\mathcal{N}=1^{\ast}$ gauge theory on $\mathbb{C}_{A}^{2}\times \mathbb{S}^{1}$ where $A\in\six$ are given by
    \bea\label{eq:D4qqpart}
    \widehat{\mathscr{T}}^{A}(x)&=\sum_{\lambda\in\mathcal{P}}\mathfrak{q}^{|\mu|}\widetilde{\mathcal{Z}}^{\D4}_{\bar{A}}[\lambda]\widehat{\mathscr{Y}}^{A}(q_{A}x)\prod_{\Abox \in \lambda}\widehat{\mathscr{A}}(\chi_{\bar{A},x}(\Bbox))^{-1},\\
     \widehat{\mathscr{T}}^{A\,\vee}(x)&=\sum_{\lambda\in\mathcal{P}}\mathfrak{q}^{|\mu|}\widetilde{\mathcal{Z}}^{\D4}_{\bar{A}}[\lambda]\widehat{\mathscr{Y}}^{A\,\vee}(q_{A}x)\prod_{\Abox \in \lambda}{\widehat{\mathscr{A}}}^{\,\vee}(\chi_{\bar{A},x}(\Bbox))^{-1}.\\
    \eea
    The iWeyl reflection of $\widehat{\mathscr{Y}}^{A}(x)$, $\widehat{\mathscr{Y}}^{A\,\vee}(x)$ is defined as 
   \bea\label{eq:D4iWeylpartition}
\widehat{\mathscr{Y}}^{A}(q_{A}x)&\longmapsto \mathfrak{q}\mathscr{S}_{A}(q_{c})\widehat{\mathscr{Y}}^{A}(q_{A}x)\widehat{\mathscr{A}}(x)^{-1},\\
\widehat{\mathscr{Y}}^{A\,\vee}(q_{A}x)&\longmapsto \mathfrak{q}\mathscr{S}_{A}(q_{d})\widehat{\mathscr{Y}}^{A\,\vee}(q_{A}x){\widehat{\mathscr{A}}}^{\,\vee}(x)^{-1},
\eea
where $\{c,d\}=\bar{A}$ and 
\bea
\widehat{\mathscr{A}}(x)=\frac{\widehat{\mathscr{Y}}^{A}(x)\widehat{\mathscr{Y}}^{A}(q_{A}x)}{\widehat{\mathscr{Y}}^{A}(q_{c}^{-1}x)\widehat{\mathscr{Y}}^{A}(q_{d}^{-1}x)},\quad {\widehat{\mathscr{A}}}^{\,\vee}(x)=\frac{\widehat{\mathscr{Y}}^{A\,\vee}(x)\widehat{\mathscr{Y}}^{A\vee}(q_{A}x)}{\widehat{\mathscr{Y}}^{A\,\vee}(q_{c}^{-1}x)\widehat{\mathscr{Y}}^{A\,\vee}(q_{d}^{-1}x)}.
\eea
\end{proposition}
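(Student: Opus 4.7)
The plan is to reduce the general $A \in \six$ case to the explicit $A=12$ derivation carried out immediately before the statement by invoking the quadrality symmetry of the four $\Omega$-background parameters subject to $q_{1}q_{2}q_{3}q_{4}=1$. Given $A \in \six$ with complement $\bar{A} = \{c,d\}$, the permutation $(1,2,3,4) \mapsto (\inf(A),\sup(A),c,d)$ maps the recursion \eqref{eq:D4qqrecursion}, the dual recursion \eqref{eq:D4qqdualrecursion} and the residue-cancellation statement \eqref{eq:D4qqanalytic} verbatim into their $\mathbb{C}^{2}_{A}$-versions, producing the base iWeyl step \eqref{eq:D4iWeylpartition}. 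Thus the first iWeyl reflection is immediate; the content of the proposition is that iterating this reflection produces a sum indexed by $\lambda \in \mathcal{P}$ with coefficient exactly $\mathfrak{q}^{|\lambda|}\widetilde{\mathcal{Z}}^{\D4}_{\bar{A}}[\lambda]$.

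First, I would make the iteration explicit. Writing the one-step reflection $\widehat{\mathscr{Y}}^{A}(q_{A}x) \mapsto \widehat{\mathscr{Y}}^{A}(q_{A}x) + \mathfrak{q}\,\mathscr{S}_{A}(q_{c})\,\widehat{\mathscr{Y}}^{A}(q_{A}x)\widehat{\mathscr{A}}(x)^{-1}$, the factor $\widehat{\mathscr{A}}(x)^{-1}$ introduces new simple poles in $x$ only through the $\widehat{\mathscr{Y}}^{A}$'s in its denominator, located at equivariant $q$-contents shifted by $q_c$ or $q_d$. These are precisely the arguments on which the generalized \eqref{eq:D4qqanalytic} applies again, and cancelling each such pole adds a new box in the $\bar{A}$-plane. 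Keeping track of which boxes are addable forces the accumulated configuration to be, at every stage, an honest Young diagram $\lambda$ in the $\bar{A}$-plane with origin at $x$; the $\mathscr{Y}^A$-factors in $\prod_{\Abox\in\lambda}\widehat{\mathscr{A}}(\chi_{\bar{A},x}(\Bbox))^{-1}$ telescope on the interior cells and leave exactly the addable/removable boundary structure that cancels successive poles.

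The remaining step is to identify the scalar prefactor built up from $\mathfrak{q}$ and the one-instanton factors $\mathscr{S}_{A}(q_{c})=\mathscr{S}_{\bar{A}}(q_{\inf(A)})$ at each iWeyl step with $\mathfrak{q}^{|\lambda|}\widetilde{\mathcal{Z}}^{\D4}_{\bar{A}}[\lambda]$. For this, I would compare with the recursion for the $\U(1)$ affine quiver partition function $\widetilde{\mathcal{Z}}^{\D4}_{\bar{A}}[\lambda]$ with adjoint mass $q_{\inf(A)}$, derived from \eqref{eq:D4qqrecursion} after the quadrality relabeling and collected in Appendix~\ref{app:D4U1partitionfunction}. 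Both sides satisfy the same single-box recursion with the same initial value $1$ at $\lambda = \emptyset$, so they coincide. The dual statement for $\widehat{\mathscr{T}}^{A\,\vee}(x)$ follows by an identical argument swapping the roles of $q_c$ and $q_d$ and invoking the dual recursion \eqref{eq:D4qqdualrecursion}.

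The main obstacle is the combinatorial bookkeeping of the recursive iWeyl reflections: at each step one must verify that all new poles introduced by $\widehat{\mathscr{A}}(\chi_{\bar{A},x}(\Bbox))^{-1}$ at addable corners are cancelled by the next iWeyl term, while poles at non-addable positions are automatically absent due to zero/pole cancellations inside the $\mathscr{Y}$-products. Once this pole-freeness is established, the expressions in \eqref{eq:D4qqpart} are determined uniquely as the minimal pole-free completion, so agreement of initial conditions and one-step recursion suffices.
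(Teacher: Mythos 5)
Your proposal is correct in substance and follows essentially the same route as the paper: the paper derives the $A=12$ case from the residue form \eqref{eq:D4qqanalytic} of the $\U(1)$ recursion, defines the iWeyl reflection, and asserts the general case by "the same calculation," which is precisely your quadrality relabeling plus the identification of the accumulated coefficient with $\widetilde{\mathcal{Z}}^{\D4}_{\bar{A}}[\lambda]$ through the single-box recursion of Appendix~\ref{app:D4U1partitionfunction}. One parenthetical identity you assert is false, however: $\mathscr{S}_{A}(q_{c})\neq\mathscr{S}_{\bar{A}}(q_{\inf(A)})$ in general (the left side is the one-instanton factor of the theory on $\mathbb{C}^2_{\bar{A}}$, the right side that of the theory on $\mathbb{C}^2_{A}$); the identity you actually need, and which suffices for your coefficient matching, is $\mathscr{S}_{A}(q_{c})=\mathscr{S}_{A}(q_{d})=\widetilde{\mathcal{Z}}^{\D4}_{\bar{A}}[\{1\}]$ from \eqref{app-eq:one-instanton}.
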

Note that the explicit form of the operator $\widehat{\mathscr{A}}(x)$ depends on the theory we consider.\footnote{Later, we will see in the operator formalism that it will be a unique operator (see \eqref{eq:oprelationwithD0}).} We call the $qq$-characters $\widehat{\mathscr{T}}^{A}(x),\widehat{\mathscr{T}}^{A\,\vee}(x)$ the $\D4_{\bar{A}}$-brane $qq$-characters (shortly $\D4_{\bar{A}}$ $qq$-characters). Namely, the $qq$-characters associated with the 5d theory on the $\D4_{A}$-brane are the $\D4_{\bar{A}}$ $qq$-characters.

\paragraph{D4 $qq$-characters as codimension four defects}
The $qq$-character is understood as a codimension four defect with respect to the corresponding bulk theory. Let us consider $\mathbb{C}^{2}_{12}\times \mathbb{S}^{1}$ as the bulk theory. The $qq$-character is understood as a $\D4$-brane spanning the codimension four subspace $\mathbb{C}^{2}_{34}\times\mathbb{S}^{1}$ in the transverse direction:
\bea
\includegraphics[width=8cm]{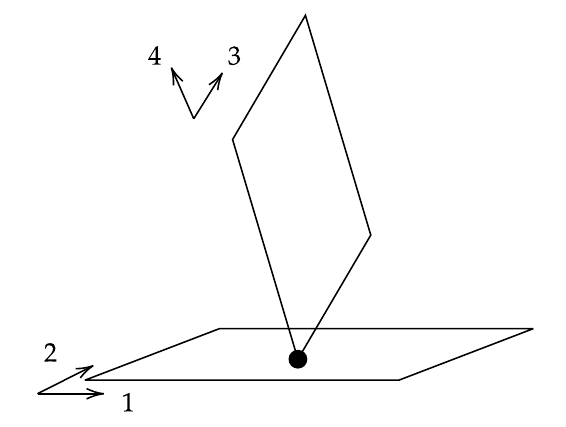}
\eea
\begin{table}[t]
\begin{center}
\begin{tabular}{|c|c|c|c|c|c|c|c|c|c|c|}
\hline
& \multicolumn{2}{c|}{$\mathbb{C}_{1}$} & \multicolumn{2}{c|}{$\mathbb{C}_{2}$} & \multicolumn{2}{c|}{$\mathbb{C}_{3}$} & \multicolumn{2}{c|}{$\mathbb{C}_{4}$} & \multicolumn{2}{c|}{$\mathbb{R}\times \mathbb{S}^{1}$} \\
\cline{2-11}  & 1 & 2 & 3 & 4& 5 & 6 & 7 & 8 & 9& 0\\
\hline
$\D4_{12} $& $-$ & $-$ & $-$ & $-$ & $\bullet$ & $\bullet$ & $\bullet$ & $\bullet$ & $\bullet$ & $-$ \\
\hline
$\D4_{34}$ & $\bullet$ & $\bullet$ & $\bullet$ & $\bullet$ & $-$ & $-$ & $-$ & $-$ & $\bullet$ & $-$ \\
\hline
\end{tabular}
\caption{The configuration of the two sets of $\D4$-branes wrapping the $12$-plane and $34$-plane. From the $\D4_{12}$ view point, the $\D4_{34}$-brane plays the role of the $qq$-character.}
\label{t:D4defectbraneweb}
\end{center}
\end{table}
The brane web construction is given in Table \ref{t:D4defectbraneweb}. This is the reason why we called the $qq$-character of the D4 theory on $\mathbb{C}^{2}_{A}\times \mathbb{S}^{1}$ the $\D4_{\bar{A}}$ $qq$-character.

The contour integral formula of the $qq$-character is intuitive. The contour integral comes from the following character:
\bea
&\bfN_{34}^{\vee}\frac{\bfP_{12}^{\vee}\bfP_{34}}{\bfP_{\four}}\bfN_{12}-\bfP_{34}^{\vee}\bfN_{12}^{\vee}\bfK-\bfN_{34}^{\vee}\bfP_{12}^{\vee}\bfK+\bfP_{123}^{\vee}\bfK_{12}^{\vee}\bfK_{12}+\bfP_{134}^{\vee}\bfK^{\vee}_{34}\bfK_{34}+\bfP_{\four}\bfK_{12}^{\vee}\bfK_{34}\\
=&\bfN_{34}^{\vee}q_{12}^{-1}\left(\bfN_{12}-\bfP_{12}\bfK\right)-\bfP_{34}^{\vee}\bfN_{12}^{\vee}\bfK+\bfP_{123}^{\vee}\bfK_{12}^{\vee}\bfK_{12}+\bfP_{134}^{\vee}\bfK^{\vee}_{34}\bfK_{34}+\bfP_{\four}\bfK_{12}^{\vee}\bfK_{34}
\eea
where $\bfK=\bfK_{12}+\bfK_{34}$ and we included the perturbative part for convention. Note that this character is just a part of \eqref{eq:D4squareroot} where the terms including $\bfN_{34},\,\bfK_{34}$ are kept, namely the crossed instanton configuration.

Consider the case when there is only one transverse D4-brane with $\bfN_{34}=x$. The contour integral is then written as
\bea
\frac{\mathcal{G}_{4}^{k}}{k!}\oint\prod_{I=1}^{k}\frac{dx_{I}}{2\pi\iota x_{I}}\textcolor{red}{\prod_{\alpha=1}^{n}\left(1-\frac{q_{12}x}{v_{\alpha}}\right)\prod_{I=1}^{k}\mathscr{S}_{12}\left(\frac{x}{x_{I}}\right)}\textcolor{blue}{
\prod_{\alpha=1}^{n}\prod_{I=1}^{k}\mathscr{S}_{34}\left(\frac{v_{\alpha}}{x_{I}}\right)\prod_{I<J}\mathcal{A}_{\mathbb{C}^{4}}\left(\frac{x_{I}}{x_{J}}\right)^{-1}}.
\eea
The red term comes from the transverse $\D4_{34}$-brane while the blue term comes from the $\D4_{12}$-brane. One can see that after localization the red term gives the contribution coming from the $\mathscr{Y}$-functions. When $n=1$, after evaluating the poles, one can see we have
\bea
\left\langle \widehat{\mathscr{T}}^{12}(x)\right\rangle _{12}=\mathcal{Z}^{12|34}_{\text{cross. inst.}},\quad \left\langle \widehat{\mathscr{T}}^{12\,\vee}(x)\right\rangle _{12}=\mathcal{Z}^{12|34}_{\text{cross. inst.}}
\eea
Namely, integrating the transverse $\D4_{34}$-brane first, the operators $\widehat{\mathscr{T}}^{12}(x),\widehat{\mathscr{T}}^{12\vee}(x)$ representing the defect brane arises in the bulk theory and the expectation value of it gives the partition function of the gauge origami system.

\subsubsection{7d theory}
Let us do a similar analysis for the 7d theory. We focus first on the 7d theory on $\mathbb{C}^{3}_{123}\times \mathbb{S}^{1}$. The recursion formula of $\widetilde{\mathcal{Z}}^{\D6}_{123}[\pi]$ is given from \eqref{eq:app-D6U1recursionformula}:
\bea
\frac{\widetilde{\mathcal{Z}}^{\D6}_{123}[\pi+\cube]}{\widetilde{\mathcal{Z}}^{\D6}_{123}[\pi]}=-\frac{\mathscr{W}^{\bar{4}}_{\pi+\scube,v}(q_{123}\chi_{123,v}(\cube))}{\mathscr{W}_{\pi,v}^{\bar{4}}(\chi_{123,v}(\cube))}=-q_{4}\frac{\mathscr{W}^{\bar{4}\,\vee}_{\pi+\scube,v}(q_{123}\chi_{123,v}(\cube))}{\mathscr{W}_{\pi,v}^{\bar{4}\,\vee}(\chi_{123,v}(\cube))},
\eea
where $\cube\in A(\pi)$. It is then rewritten as 
\bea
\underset{x=\chi_{123,v}(\scube)}{\Res}\left[ \mathscr{W}_{\pi+\scube,v}^{\bar{4}}(q_{123}x)^{-1}\widetilde{\mathcal{Z}}^{\D6}_{123}[\pi+\cube]+\mathscr{W}_{\pi,v}^{\bar{4}}(x)^{-1}\widetilde{\mathcal{Z}}^{\D6}_{123}[\pi] \right]=0,\\
\underset{x=\chi_{123,v}(\scube)}{\Res}\left[ \mathscr{W}_{\pi+\scube,v}^{\bar{4}\,\vee}(q_{123}x)^{-1}\widetilde{\mathcal{Z}}^{\D6}_{123}[\pi+\cube]+q_{4}\mathscr{W}_{\pi,v}^{\bar{4}\,\vee}(x)^{-1}\widetilde{\mathcal{Z}}^{\D6}_{123}[\pi] \right]=0
\eea
which shows that the second term cancels the singularity coming from the pole at $x=\chi_{123,v}(\scube)$ of the first term. Including the topological term and taking the instanton summation, we obtain 
\bea
&\left\langle\widehat{\mathscr{W}}^{\bar{4}}(q_{123}x)^{-1}\right\rangle_{123}+\mathfrak{q}\left\langle\widehat{\mathscr{W}}^{\bar{4}}(x)^{-1}\right\rangle_{123},\\
&\left\langle\widehat{\mathscr{W}}^{\bar{4}\,\vee}(q_{123}x)^{-1}\right\rangle_{123}+\mathfrak{q}q_{4}\left\langle\widehat{\mathscr{W}}^{\bar{4}\,\vee}(x)^{-1}\right\rangle_{123}
\eea
where 
\bea
\left\langle \mathcal{O}   \right\rangle_{123}=\frac{1}{\mathcal{Z}_{123}}\sum_{\pi}\mathfrak{q}^{|\pi|}\widetilde{\mathcal{Z}}^{\D6}_{123}[\pi]\mathcal{O}_{\pi,v},\quad \mathcal{Z}_{123}=\sum_{\pi}\mathfrak{q}^{|\pi|}\widetilde{\mathcal{Z}}^{\D6}_{123}[\pi].
\eea
This combination is a pole-free function at $x=\chi_{123,v}(\cube)$ because the singularity coming from the first term is canceled by the second term. We can define the iWeyl reflection analogous to the D4 case as
\bea\label{eq:D2iWeylpartition}
\widehat{\mathscr{W}}^{\bar{4}}(q_{123}x)^{-1}&\longmapsto \mathfrak{q}\widehat{\mathscr{W}}^{\bar{4}}(q_{123}x)^{-1}\widehat{\mathscr{A}}(x)^{-1},\\
\widehat{\mathscr{W}}^{\bar{4}\,\vee}(q_{123}x)^{-1}&\longmapsto \mathfrak{q}q_{4}\widehat{\mathscr{W}}^{\bar{4}\,\vee}(q_{123}x)^{-1}{\widehat{\mathscr{A}}}^{\vee}(x)^{-1}
\eea
where 
\bea
\widehat{\mathscr{A}}(x)=\frac{\widehat{\mathscr{W}}^{\bar{4}}(x)}{\widehat{\mathscr{W}}^{\bar{4}}(q_{4}^{-1}x)},\quad {\widehat{\mathscr{A}}}^{\vee}(x)=\frac{\widehat{\mathscr{W}}^{\bar{4}\,\vee}(x)}{\widehat{\mathscr{W}}^{\bar{4}\,\vee}(q_{4}^{-1}x)} .
\eea
We can do this procedure recursively and obtain a pole-free function in $x$ which is a generalization of the 5d case:
\bea
\widehat{\mathscr{T}}^{123}(x)&=\widehat{\mathscr{W}}^{\bar{4}}(q_{123}x)^{-1}+\mathfrak{q}\widehat{\mathscr{W}}^{\bar{4}}(q_{123}x)^{-1}\widehat{\mathscr{A}}(x)^{-1}+\cdots\\
&=\widehat{\mathscr{W}}^{\bar{4}}(q_{123}x)^{-1}+\mathfrak{q}\widehat{\mathscr{W}}^{\bar{4}}(x)^{-1}+\cdots,\\
\widehat{\mathscr{T}}^{123\,\vee}(x)&=\widehat{\mathscr{W}}^{\bar{4}\,\vee}(q_{123}x)^{-1}+\mathfrak{q}q_{4}\widehat{\mathscr{W}}^{\bar{4}}(q_{123}x)^{-1}{\widehat{\mathscr{A}}}^{\,\vee}(x)^{-1}+\cdots\\
&=\widehat{\mathscr{W}}^{\bar{4}\,\vee}(q_{123}x)^{-1}+\mathfrak{q}q_{4}\widehat{\mathscr{W}}^{\bar{4}\,\vee}(x)^{-1}+\cdots.
\eea
The full formula is given as
\bea
\widehat{\mathscr{T}}^{123}(x)&=\sum_{k=0}^{\infty}\mathfrak{q}^{k}\,\widehat{\mathscr{W}}^{\bar{4}}(q_{123}x)^{-1}\prod_{i=1}^{k}\widehat{\mathscr{A}}(q_{4}^{i-1}x)^{-1}=\sum_{k=0}^{\infty}\mathfrak{q}^{k}\, \widehat{\mathscr{W}}^{\bar{4}}(q_{4}^{k-1}x)^{-1},\\
\widehat{\mathscr{T}}^{123\,\vee}(x)&=\sum_{k=0}^{\infty}(\mathfrak{q}q_{4})^{k}\,\widehat{\mathscr{W}}^{\bar{4}\,\vee}(q_{123}x)^{-1}\prod_{i=1}^{k}{\widehat{\mathscr{A}}}^{\,\vee}(q_{4}^{i-1}x)^{-1}=\sum_{k=0}^{\infty}(\mathfrak{q}q_{4})^{k}\, \widehat{\mathscr{W}}^{\bar{4}\,\vee}(q_{4}^{k-1}x)^{-1}
\eea

A similar discussion for other 7d theories on $\mathbb{C}^{3}_{\bar{a}}\times \mathbb{S}^{1}$ can be done and we have the following statement.
\begin{proposition}
    The $qq$-characters with respect to the 7d $\mathcal{N}=1$ theory on $\mathbb{C}^{3}_{\bar{a}}\times \mathbb{S}^{1}$ where $a\in\four$ are
    \bea\label{eq:D2qqpart}
\widehat{\mathscr{T}}^{\bar{a}}(x)&=\sum_{k=0}^{\infty}\mathfrak{q}^{k}\,\widehat{\mathscr{W}}^{\bar{a}}(q_{\bar{a}}x)^{-1}\prod_{i=1}^{k}\widehat{\mathscr{A}}(q_{a}^{i-1}x)^{-1}=\sum_{k=0}^{\infty}\mathfrak{q}^{k}\, \widehat{\mathscr{W}}^{\bar{a}}(q_{a}^{k-1}x)^{-1},\\
\widehat{\mathscr{T}}^{\bar{a}\,\vee}(x)&=\sum_{k=0}^{\infty}(\mathfrak{q}q_{a})^{k}\,\widehat{\mathscr{W}}^{\bar{a}\,\vee}(q_{\bar{a}}x)^{-1}\prod_{i=1}^{k}{\widehat{\mathscr{A}}}^{\,\vee}(q_{a}^{i-1}x)^{-1}=\sum_{k=0}^{\infty}(\mathfrak{q}q_{a})^{k}\, \widehat{\mathscr{W}}^{\bar{a}\,\vee}(q_{a}^{k-1}x)^{-1},
    \eea
    where the iWeyl reflection is defined as 
    \bea\label{eq:D2iWeylpartgeneral}
\widehat{\mathscr{W}}^{\bar{a}}(q_{\bar{a}}x)^{-1}&\longmapsto \mathfrak{q}\widehat{\mathscr{W}}^{\bar{a}}(q_{\bar{a}}x)^{-1}\widehat{\mathscr{A}}(x)^{-1},\\
\widehat{\mathscr{W}}^{\bar{a}\,\vee}(q_{\bar{a}}x)^{-1}&\longmapsto \mathfrak{q}q_{a}\widehat{\mathscr{W}}^{\bar{a}\,\vee}(q_{\bar{a}}x)^{-1}{\widehat{\mathscr{A}}}^{\,\vee}(x)^{-1}
\eea
and
\bea
\widehat{\mathscr{A}}(x)=\frac{\widehat{\mathscr{W}}^{\bar{a}}(x)}{\widehat{\mathscr{W}}^{\bar{a}}(q_{a}^{-1}x)},\quad {\widehat{\mathscr{A}}}^{\,\vee}(x)=\frac{\widehat{\mathscr{W}}^{\bar{a}\,\vee}(x)}{\widehat{\mathscr{W}}^{\bar{a}\,\vee}(q_{a}^{-1}x)}.
\eea  
\end{proposition}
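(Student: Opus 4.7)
The plan is to imitate exactly the derivation given in the excerpt for the specific case $\bar{a}=\bar{4}$ and promote it to arbitrary $a\in\four$ by using the quadrality symmetry in $q_{a}$.

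First, I would generalize the recursion relation for the $\U(1)$ partition function. The Nekrasov factor recursion \eqref{eq:D6Nekrasovrecursion} and the identity of Theorem~\ref{eq:app-thm-D6U1recursionformula} (stated in the excerpt for $\bar{a}=\bar{4}$) extend verbatim to arbitrary $\bar{a}$ after relabeling, giving
\begin{equation*}
\frac{\widetilde{\mathcal{Z}}^{\D6}_{\bar{a}}[\pi+\cube]}{\widetilde{\mathcal{Z}}^{\D6}_{\bar{a}}[\pi]} = -\frac{\mathscr{W}^{\bar{a}}_{\pi+\scube,v}(q_{\bar{a}}\chi_{\bar{a},v}(\cube))}{\mathscr{W}^{\bar{a}}_{\pi,v}(\chi_{\bar{a},v}(\cube))} = -q_{a}\frac{\mathscr{W}^{\bar{a}\,\vee}_{\pi+\scube,v}(q_{\bar{a}}\chi_{\bar{a},v}(\cube))}{\mathscr{W}^{\bar{a}\,\vee}_{\pi,v}(\chi_{\bar{a},v}(\cube))},
\end{equation*}
for every $\cube \in A(\pi)$. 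Rewriting this as a residue-vanishing identity at $x=\chi_{\bar{a},v}(\cube)$ and summing over all addable boxes with the instanton weight $\mathfrak{q}^{|\pi|}$, I obtain that the combination $\widehat{\mathscr{W}}^{\bar{a}}(q_{\bar{a}}x)^{-1}+\mathfrak{q}\,\widehat{\mathscr{W}}^{\bar{a}}(q_{\bar{a}}x)^{-1}\widehat{\mathscr{A}}(x)^{-1}$ (resp.\ its dual with an extra $q_{a}$) is regular at $x=\chi_{\bar{a},v}(\cube)$. This is exactly the iWeyl reflection rule \eqref{eq:D2iWeylpartgeneral}.

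Next, I would iterate the iWeyl reflection. Each application produces a new term with an extra factor $\widehat{\mathscr{A}}^{(\vee)}(q_{a}^{i-1}x)^{-1}$, whose new pole (from the $\widehat{\mathscr{W}}^{\bar{a}}(q_{a}^{i-1}x)$ factor in the denominator of $\widehat{\mathscr{A}}$) is then cancelled by the next iWeyl step. After $k$ steps, the accumulated prefactor is $\mathfrak{q}^{k}$ (resp.\ $(\mathfrak{q} q_{a})^{k}$), yielding the first equality in each line of the proposition. The termination/convergence is purely formal: the sum over $k\geq 0$ is the enumeration of $1$-row partitions growing along the $q_{a}$-direction, which is precisely the $(1,2)$-type decomposition of a height-one plane partition along the $a$-axis discussed in \eqref{eq:7dYoungdiagramdecomp}.

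The final step is the telescoping simplification, which is the only real computation. Using $q_{\bar{a}}=q_{a}^{-1}$ (a consequence of $q_{1}q_{2}q_{3}q_{4}=1$) and the definition $\widehat{\mathscr{A}}(x)=\widehat{\mathscr{W}}^{\bar{a}}(x)/\widehat{\mathscr{W}}^{\bar{a}}(q_{a}^{-1}x)$, one has
\begin{equation*}
\prod_{i=1}^{k}\widehat{\mathscr{A}}(q_{a}^{i-1}x)^{-1} = \prod_{i=1}^{k}\frac{\widehat{\mathscr{W}}^{\bar{a}}(q_{a}^{i-2}x)}{\widehat{\mathscr{W}}^{\bar{a}}(q_{a}^{i-1}x)} = \frac{\widehat{\mathscr{W}}^{\bar{a}}(q_{a}^{-1}x)}{\widehat{\mathscr{W}}^{\bar{a}}(q_{a}^{k-1}x)},
\end{equation*}
and multiplying by $\widehat{\mathscr{W}}^{\bar{a}}(q_{\bar{a}}x)^{-1}=\widehat{\mathscr{W}}^{\bar{a}}(q_{a}^{-1}x)^{-1}$ collapses the product to $\widehat{\mathscr{W}}^{\bar{a}}(q_{a}^{k-1}x)^{-1}$. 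The dual case is identical with $\mathscr{W}^{\vee}$ replacing $\mathscr{W}$. The main obstacle, if any, is really the first step: one must check that the $\mathscr{W}$-function recursion of Lemma in section~\ref{sec:Tetrahedron_inst} indeed holds for all $a\in\four$ and not only $a=4$; but this follows from the manifest $S_{4}$-symmetry of the full gauge origami construction under permutation of $\{q_{1},q_{2},q_{3},q_{4}\}$.
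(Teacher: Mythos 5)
Your proposal is correct and follows essentially the same route as the paper: the paper derives the $\bar{a}=\bar{4}$ case explicitly from the recursion formula of $\widetilde{\mathcal{Z}}^{\D6}_{\bar{4}}[\pi]$ (Thm.~\ref{eq:app-thm-D6U1recursionformula}), recasts it as a residue-cancellation condition defining the iWeyl reflection, iterates, and telescopes, then extends to general $a\in\four$ by the same quadrality argument you invoke. Your explicit check of the telescoping product and of the relative factor $q_{a}$ between the $\mathscr{W}$ and $\mathscr{W}^{\vee}$ versions (via the reflection formula $\mathscr{W}^{\bar{a}\,\vee}_{\pi,v}(x)=(-x/v)\mathscr{W}^{\bar{a}}_{\pi,v}(x)$) is consistent with what the paper leaves implicit.
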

We call these $qq$-characters the $\D2$-brane $qq$-characters, shortly D2 $qq$-characters. Namely, the $qq$-character with respect to the $\D6_{\bar{a}}$ theory on $\mathbb{C}^{3}_{\bar{a}}\times \mathbb{S}^{1}$ is the $\D2_{a}$ $qq$-character.

\begin{figure}[t]
    \centering
    \includegraphics[width=9cm]{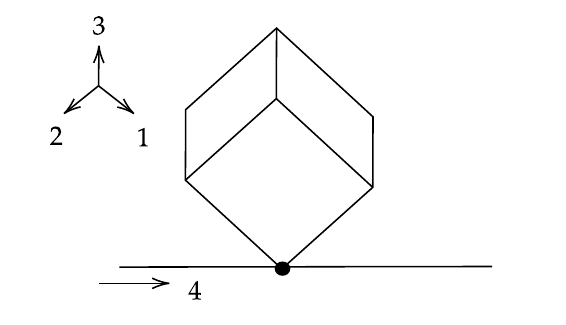}
    \caption{$\D2_{4}$-brane and $\D6_{123}$-brane. The $\D2_{4}$-brane wraps the $\mathbb{C}_{4}$ inside $\mathbb{C}^{4}$ while the $\D6_{123}$-brane wraps the $\mathbb{C}^{3}_{123}$. From the $\D2$-brane viewpoint, the $\D6$ shares the $\mathbb{S}^{1}$ part and is a 1d line defect which is a codimension two defect. On the other hand, from the $\D6$-brane viewpoint, the $\D2$-brane is a 1d line defect sharing the $\mathbb{S}^{1}$ part and thus is a codimension six defect.}
    \label{fig:D2D6defectqq}
\end{figure}
\begin{table}[t]
\begin{center}
\begin{tabular}{|c|c|c|c|c|c|c|c|c|c|c|}
\hline
& \multicolumn{2}{c|}{$\mathbb{C}_{1}$} & \multicolumn{2}{c|}{$\mathbb{C}_{2}$} & \multicolumn{2}{c|}{$\mathbb{C}_{3}$} & \multicolumn{2}{c|}{$\mathbb{C}_{4}$} & \multicolumn{2}{c|}{$\mathbb{R}\times \mathbb{S}^{1}$} \\
\cline{2-11}  & 1 & 2 & 3 & 4& 5 & 6 & 7 & 8 & 9& 0\\
\hline
$\D2_{4} $& $\bullet$& $\bullet$ & $\bullet$ & $\bullet$ & $\bullet$ & $\bullet$ & $-$ & $-$ & $\bullet$ & $-$ \\
\hline
$\D6_{123}$ & $-$  & $-$  & $-$  & $-$  & $-$ & $-$ & $\bullet$ & $\bullet$ & $\bullet$ & $-$ \\
\hline
\end{tabular}
\caption{The $\D2_{4}$-brane and $\D6_{123}$ are transverse to each other. The $\D6$ ($\D2$)-brane plays the role of the $qq$-character of the $\D2$ ($\D6$) theory.}
\label{t:D2D6defectbraneweb}
\end{center}
\end{table}

\paragraph{D2 $qq$-characters as codimension six defects}
Similar to the $\D4$ theories, we can understand the codimension six defects coming from $\D2$-branes in the transverse direction as $\D2$ $qq$-characters of the $\D6$ theory (see Figure~\ref{fig:D2D6defectqq}). The brane web configuration is given in Table~\ref{t:D2D6defectbraneweb}. The D2-brane in the transverse direction shares the $\mathbb{S}^{1}$ and thus is a line defect, which is a codimension six defect from the D6-brane theory viewpoint.

Mimicking the construction of the $\D4$ case, we propose that the contour integral formula comes from the character
\bea\label{eq:D6bulkD2defect}
&\bfN_{4}^{\vee}\frac{\bfP_{123}^{\vee}\bfP_{4}}{\bfP_{\four}}\bfN_{123}-\bfP_{4}^{\vee}\bfN_{123}^{\vee}\bfK-\bfP_{123}^{\vee}\bfN_{4}^{\vee}\bfK+\bfP_{123}^{\vee}\bfK_{123}^{\vee}\bfK_{123}+\bfP_{124}^{\vee}\bfK_{4}^{\vee}\bfK_{4}+\bfP_{\four}\bfK_{123}^{\vee}\bfK_{4}\\
=&-q_{4}\bfN^{\vee}_{4}\left(\bfN_{123}-\bfP_{123}\bfK\right)-\bfP_{4}^{\vee}\bfN_{123}^{\vee}\bfK+\bfP_{123}^{\vee}\bfK_{123}^{\vee}\bfK_{123}+\bfP_{124}^{\vee}\bfK_{4}^{\vee}\bfK_{4}+\bfP_{\four}\bfK_{123}^{\vee}\bfK_{4},
\eea
where $\bfK=\bfK_{123}+\bfK_{4}$. Note that although considering a subpart of the gauge origami system of the spiked instantons gives the $\D4$ $qq$-characters, for the $\D6$ case, we need to consider a gauge origami system where both $\D2$ and $\D6$-branes appear together. Such generalization is straightforward by using 
\bea
\bfY=\sum_{a\in\four}\bfP_{\bar{a}}\bfY_{a}+\sum_{a\in\four}\bfP_{a}\bfY_{\bar{a}},
\eea
inserting into $\bfY^{\vee}\bfY/\bfP_{\four}$, and doing the similar analysis in sections \ref{sec:M4partitionfunction}, \ref{sec:Tetrahedron_inst}, \ref{sec:spiked_partitionfunct}, \ref{sec:cplvortex_partitionfunct}. We omit such discussions and only focus only on \eqref{eq:D6bulkD2defect}.

Let us consider the case when we have one $\D2_{4}$-brane $\bfN_{4}=x$ and $n$ $\D6$-branes $\bfN_{123}=\sum_{\alpha=1}^{n}v_{\alpha}$. The contour integral is written as 
\bea
\frac{\mathcal{G}_{123}^{k}}{k!}\oint \prod_{I=1}^{k}\frac{dx_{I}}{2\pi\iota x_{I}}\textcolor{red}{\prod_{\alpha=1}^{n}\frac{1}{\left(1-q_{123}x/v_{\alpha}\right)}\prod_{I=1}^{k}g_{123}\left(\frac{x}{x_{I}}\right)}\textcolor{blue}{
\prod_{\alpha=1}^{n}\prod_{I=1}^{k}\mathscr{V}_{4}\left(\frac{v_{\alpha}}{x_{I}}\right)\prod_{I< J}^{k}\mathcal{A}_{\mathbb{C}^{4}}\left(\frac{x_{J}}{x_{I}}\right)^{-1}}.
\eea
The red term comes from the transverse $\D2$-brane, while the blue term comes from the bulk $\D6$-branes. After localization, the red term gives the $\mathscr{W}$-functions. Similar to the D4-case, integrating out the $\D2_{4}$-brane leads to a codimension six defect operator $\widehat{\mathscr{T}}^{123}(x),\widehat{\mathscr{T}}^{123\,\vee}(x)$ and the expectation value of it with respect to the bulk $\D6_{123}$-theory gives the partition function of the gauge origami system where there are $\D2_{4}$ and $\D6_{123}$-branes spanning transversely.

\subsubsection{3d theory}
Let us consider 3d theories on $\mathbb{C}_{a}\times \mathbb{S}^{1}$. We focus on the $a=4$ case. The recursion formula of the 3d partition function is \eqref{eq:app-D2U1recursionformula} (see also \eqref{eq:app-D2Unrecursion}):
\bea
\frac{\mathcal{Z}^{\D2}_{4}[k+\Bbox;q_{1},q_{2}]}{\mathcal{Z}^{\D2}_{4}[k;q_{1},q_{2}]}
&=-\frac{\mathscr{V}_{4}(q_{3})}{\mathscr{V}_{4}(q_{23})\mathscr{V}_{4}(q_{13})}\frac{\mathscr{U}^{4}_{k,v}(q_{1}^{-1}\chi_{4,v}(\Bbox))}{\mathscr{U}^{4}_{k,v}(q_{23}^{-1}\chi_{4,v}(\Bbox))}\frac{\mathscr{U}^{4}_{k,v}(q_{2}^{-1}\chi_{4,v}(\Bbox))}{\mathscr{U}^{4}_{k,v}(q_{13}^{-1}\chi_{4,v}(\Bbox))}\frac{\mathscr{U}^{4}_{k,v}(q_{3}^{-1}\chi_{4,v}(\Bbox))}{\mathscr{U}^{4}_{k,v}(q_{12}^{-1}\chi_{4,v}(\Bbox))}\frac{\mathscr{U}^{4}_{k+\Abox,v}(q_{123}^{-1}\chi_{4,v}(\Bbox))}{\mathscr{U}^{4}_{k,v}(\chi_{4,v}(\Bbox))},\\
\frac{\mathcal{Z}^{\D2}_{4}[k+\Bbox;q_{1},q_{2}]}{\mathcal{Z}^{\D2}_{4}[k;q_{1},q_{2}]}&=-\frac{\mathscr{V}_{4}(q_{1})\mathscr{V}_{4}(q_{2})}{\mathscr{V}_{4}(q_{12})}\frac{\mathscr{U}^{4\,\vee}_{k,v}(q_{1}^{-1}\chi_{4,v}(\Bbox))}{\mathscr{U}^{4\,\vee}_{k,v}(q_{23}^{-1}\chi_{4,v}(\Bbox))}\frac{\mathscr{U}^{4\,\vee}_{k,v}(q_{2}^{-1}\chi_{4,v}(\Bbox))}{\mathscr{U}^{4\,\vee}_{k,v}(q_{13}^{-1}\chi_{4,v}(\Bbox))}\frac{\mathscr{U}^{4\,\vee}_{k,v}(q_{3}^{-1}\chi_{4,v}(\Bbox))}{\mathscr{U}^{4\,\vee}_{k,v}(q_{12}^{-1}\chi_{4,v}(\Bbox))}\frac{\mathscr{U}^{4\,\vee}_{k+\Abox,v}(q_{123}^{-1}\chi_{4,v}(\Bbox))}{\mathscr{U}^{4\,\vee}_{k,v}(\chi_{4,v}(\Bbox))}
\eea
where we used $\mathscr{U}^{4}_{k+1,v}(x)=\mathscr{V}_{4}\left(\chi_{4,v}(\Bbox)/x\right)\mathscr{U}^{4}_{k,v}(x)$ and $\mathscr{U}^{4\,\vee}_{k+1,v}(x)=\mathscr{V}_{4}\left(q_{4}^{-1}x/\chi_{4,v}(\Bbox)\right)^{-1}\mathscr{U}^{4\,\vee}_{k,v}(x)$. This is rewritten as
\bea
&\underset{x=\chi_{4,v}(\Abox)}{\Res}\left[\mathscr{U}_{k+\Abox}^{4}(q_{4}x)^{-1}\mathcal{Z}^{\D2}_{4}[k+\Bbox\,;q_{1},q_{2}]\right.\\
&\qquad \left.+\frac{\mathscr{V}_{4}(q_{3})}{\mathscr{V}_{4}(q_{23})\mathscr{V}_{4}(q_{13})}\frac{\mathscr{U}^{4}_{k,v}(q_{1}^{-1}x)}{\mathscr{U}^{4}_{k,v}(q_{23}^{-1}x)}\frac{\mathscr{U}^{4}_{k,v}(q_{2}^{-1}x)}{\mathscr{U}^{4}_{k,v}(q_{13}^{-1}x)}\frac{\mathscr{U}^{4}_{k,v}(q_{3}^{-1}x)}{\mathscr{U}^{4}_{k,v}(q_{12}^{-1}x)}\frac{1}{\mathscr{U}^{4}_{k,v}(x)}\mathcal{Z}^{\D2}_{4}[k;q_{1},q_{2}]\right]=0,\\
&\underset{x=\chi_{4,v}(\Abox)}{\Res}\left[\mathscr{U}_{k+\Abox}^{4\,\vee}(q_{4}x)^{-1}\mathcal{Z}^{\D2}_{4}[k+\Bbox\,;q_{1},q_{2}]\right.\\
&\qquad \left.+\frac{\mathscr{V}_{4}(q_{1})\mathscr{V}_{4}(q_{2})}{\mathscr{V}_{4}(q_{12})}\frac{\mathscr{U}^{4\,\vee}_{k,v}(q_{1}^{-1}x)}{\mathscr{U}^{4\,\vee}_{k,v}(q_{23}^{-1}x)}\frac{\mathscr{U}^{4\,\vee}_{k,v}(q_{2}^{-1}x)}{\mathscr{U}^{4\,\vee}_{k,v}(q_{13}^{-1}x)}\frac{\mathscr{U}^{4\,\vee}_{k,v}(q_{3}^{-1}x)}{\mathscr{U}^{4\,\vee}_{k,v}(q_{12}^{-1}x)}\frac{1}{\mathscr{U}^{4\,\vee}_{k,v}(x)}\mathcal{Z}^{\D2}_{4}[k;q_{1},q_{2}]\right]=0
\eea
Including the topological term and taking the instanton summation, the following function is pole-free in $x=\chi_{4,v}(\cube)$:
\bea
&\left\langle \widehat{\mathscr{U}}^{4}(q_{4}x)^{-1} \right\rangle_{4}+\mathfrak{q}\frac{\mathscr{V}_{4}(q_{3})}{\mathscr{V}_{4}(q_{23})\mathscr{V}_{4}(q_{13})}\left\langle \frac{\widehat{\mathscr{U}}^{4}(q_{1}^{-1}x)\widehat{\mathscr{U}}^{4}(q_{2}^{-1}x)\widehat{\mathscr{U}}^{4}(q_{3}^{-1}x)}{\widehat{\mathscr{U}}^{4}(q_{12}^{-1}x)\widehat{\mathscr{U}}^{4}(q_{13}^{-1}x)\widehat{\mathscr{U}}^{4}(q_{23}^{-1}x)\widehat{\mathscr{U}}^{4}(x)}\right\rangle_{4},\\
&\left\langle \widehat{\mathscr{U}}^{4\vee}(q_{4}x)^{-1} \right\rangle_{4}+\mathfrak{q}\frac{\mathscr{V}_{4}(q_{2})\mathscr{V}_{4}(q_{1})}{\mathscr{V}_{4}(q_{12})}\left\langle \frac{\widehat{\mathscr{U}}^{4\vee}(q_{1}^{-1}x)\widehat{\mathscr{U}}^{4\vee}(q_{2}^{-1}x)\widehat{\mathscr{U}}^{4\vee}(q_{3}^{-1}x)}{\widehat{\mathscr{U}}^{4\vee}(q_{12}^{-1}x)\widehat{\mathscr{U}}^{4\vee}(q_{13}^{-1}x)\widehat{\mathscr{U}}^{4\vee}(q_{23}^{-1}x)\widehat{\mathscr{U}}^{4\vee}(x)}\right\rangle_{4},
\eea
where 
\bea
\left\langle \mathcal{O}   \right\rangle_{4}=\frac{1}{\mathcal{Z}_{4}}\sum_{k\geq 0}\mathfrak{q}^{k}\widetilde{\mathcal{Z}}^{\D2}_{4}[k\,;q_{1},q_{2}]\mathcal{O}_{k,v},\quad \mathcal{Z}_{4}=\sum_{k}\mathfrak{q}^{k}\widetilde{\mathcal{Z}}^{\D2}_{4}[k;q_{1},q_{2}].
\eea
Hence, using 
\bea
\frac{\mathscr{V}_{4}(q_{3})}{\mathscr{V}_{4}(q_{23})\mathscr{V}_{4}(q_{13})}=-\prod_{i=1}^{3}\frac{1-q_{4}^{-1}q_{i}^{-1}}{1-q_{i}},\quad \frac{\mathscr{V}_{4}(q_{1})\mathscr{V}_{4}(q_{2})}{\mathscr{V}_{4}(q_{12})}=-q_{4}\prod_{i=1}^{3}\frac{1-q_{4}^{-1}q_{i}^{-1}}{1-q_{i}}
\eea
the iWeyl reflection is analogously defined as 
\bea\label{eq:D6iWeylpartition}
\widehat{\mathscr{U}}^{4}(q_{4}x)^{-1}&\longmapsto -\mathfrak{q}\prod_{i=1}^{3}\frac{1-q_{4}^{-1}q_{i}^{-1}}{1-q_{i}}\widehat{\mathscr{U}}^{4}(q_{4}x)^{-1}\widehat{\mathscr{A}}(x)^{-1},\\
\widehat{\mathscr{U}}^{4\vee}(q_{4}x)^{-1}&\longmapsto -q_{4}\mathfrak{q}\prod_{i=1}^{3}\frac{1-q_{4}^{-1}q_{i}^{-1}}{1-q_{i}}\widehat{\mathscr{U}}^{4\vee}(q_{4}x)^{-1}{\widehat{\mathscr{A}}}^{\,\vee}(x)^{-1}
\eea
where 
\bea
\widehat{\mathscr{A}}(x)&=\frac{\widehat{\mathscr{U}}^{4}(q_{12}^{-1}x)\widehat{\mathscr{U}}^{4}(q_{13}^{-1}x)\widehat{\mathscr{U}}^{4}(q_{23}^{-1}x)\widehat{\mathscr{U}}^{4}(x)}{\widehat{\mathscr{U}}^{4}(q_{1}^{-1}x)\widehat{\mathscr{U}}^{4}(q_{2}^{-1}x)\widehat{\mathscr{U}}^{4}(q_{3}^{-1}x)\widehat{\mathscr{U}}^{4}(q_{4}x)},\\{\widehat{\mathscr{A}}}^{\,\vee}(x)&=\frac{\widehat{\mathscr{U}}^{4\vee}(q_{12}^{-1}x)\widehat{\mathscr{U}}^{4\vee}(q_{13}^{-1}x)\widehat{\mathscr{U}}^{4\vee}(q_{23}^{-1}x)\widehat{\mathscr{U}}^{4\vee}(x)}{\widehat{\mathscr{U}}^{4\vee}(q_{1}^{-1}x)\widehat{\mathscr{U}}^{4\vee}(q_{2}^{-1}x)\widehat{\mathscr{U}}^{4\vee}(q_{3}^{-1}x)\widehat{\mathscr{U}}^{4\vee}(q_{4}x)}
\eea
Using the iWeyl reflection recursively, we can construct the $qq$-characters as
\bea
\widehat{\mathscr{T}}^{4}(x)&=\widehat{\mathscr{U}}^{4}(q_{4}x)^{-1}+\mathfrak{q}\left(-\prod_{i=1}^{3}\frac{1-q_{4}^{-1}q_{i}^{-1}}{1-q_{i}}\right)\widehat{\mathscr{U}}^{4}(q_{4}x)^{-1}\widehat{\mathscr{A}}(x)^{-1}+\cdots,\\
\widehat{\mathscr{T}}^{4\vee}(x)&=\widehat{\mathscr{U}}^{4\vee}(q_{4}x)^{-1}+\mathfrak{q}\left(-q_{4}\prod_{i=1}^{3}\frac{1-q_{4}^{-1}q_{i}^{-1}}{1-q_{i}}\right)\widehat{\mathscr{U}}^{4\vee}(q_{4}x)^{-1}{\widehat{\mathscr{A}}}^{\,\vee}(x)^{-1}+\cdots,\\
\eea
Generally, the complete formula will be 
\bea
\widehat{\mathscr{T}}^{4}(x)&=\sum_{\pi\in\mathcal{PP}}(\mathfrak{q}q_{4}^{-1})^{|\pi|}\widetilde{\mathcal{Z}}^{\D6}_{123}[\pi]\widehat{\mathscr{U}}^{4}(q_{4}x)^{-1}\prod_{\scube\in \pi}\widehat{\mathscr{A}}(\chi_{\bar{4},x}(\cube))^{-1},\\
\widehat{\mathscr{T}}^{4\vee}(x)&=\sum_{\pi\in\mathcal{PP}}\mathfrak{q}^{|\pi|}\widetilde{\mathcal{Z}}^{\D6}_{123}[\pi]\widehat{\mathscr{U}}^{4\vee}(q_{4}x)^{-1}\prod_{\scube\in \pi}{\widehat{\mathscr{A}}}^{\,\vee}(\chi_{\bar{4},x}(\cube))^{-1}.
\eea
A derivation of this will be done using the operator formalism in section \ref{sec:D6qqcharacter}.

We can do the same analysis for other 3d theories and obtain the following statement.
\begin{proposition}
    The $qq$-characters with respect with the 3d theory on $\mathbb{C}_{a}\times \mathbb{S}^{1}$ where $a\in\four$ are
    \bea\label{eq:D6qqchpartition}
    \widehat{\mathscr{T}}^{a}(x)&=\sum_{\pi\in\mathcal{PP}}(\mathfrak{q}q_{a}^{-1})^{|\pi|}\widetilde{\mathcal{Z}}^{\D6}_{\bar{a}}[\pi]\widehat{\mathscr{U}}^{a}(q_{a}x)^{-1}\prod_{\scube\in \pi}\widehat{\mathscr{A}}(\chi_{\bar{a},x}(\cube))^{-1},\\
\widehat{\mathscr{T}}^{a\vee}(x)&=\sum_{\pi\in\mathcal{PP}}\mathfrak{q}^{|\pi|}\widetilde{\mathcal{Z}}^{\D6}_{\bar{a}}[\pi]\widehat{\mathscr{U}}^{a\vee}(q_{a}x)^{-1}\prod_{\scube\in \pi}{\widehat{\mathscr{A}}}^{\,\vee}(\chi_{\bar{a},x}(\cube))^{-1},
    \eea
    where the iWeyl reflection is defined as
    \bea\label{eq:D6iWeylgeneral}
\widehat{\mathscr{U}}^{a}(q_{a}x)^{-1}&\longmapsto -\mathfrak{q}\prod_{i=1}^{3}\frac{1-q_{a}^{-1}q_{i}^{-1}}{1-q_{i}}\widehat{\mathscr{U}}^{a}(q_{a}x)^{-1}\widehat{\mathscr{A}}(x)^{-1},\\
\widehat{\mathscr{U}}^{a\vee}(q_{a}x)^{-1}&\longmapsto -q_{a}\mathfrak{q}\prod_{i=1}^{3}\frac{1-q_{a}^{-1}q_{i}^{-1}}{1-q_{i}}\widehat{\mathscr{U}}^{a\vee}(q_{a}x)^{-1}{\widehat{\mathscr{A}}}^{\,\vee}(x)^{-1}
\eea
where 
\bea
\widehat{\mathscr{A}}(x)&=\frac{\widehat{\mathscr{U}}^{a}(x)}{\widehat{\mathscr{U}}^{a}(q_{a}x)}\prod_{i=1}^{3}\frac{\widehat{\mathscr{U}}^{a}(q_{a}q_{i}x)}{\widehat{\mathscr{U}}^{a}(q_{i}^{-1}x)},\quad 
{\widehat{\mathscr{A}}}^{\,\vee}(x)=\frac{\widehat{\mathscr{U}}^{a\vee}(x)}{\widehat{\mathscr{U}}^{a\vee}(q_{a}x)}\prod_{i=1}^{3}\frac{\widehat{\mathscr{U}}^{a\vee}(q_{a}q_{i}x)}{\widehat{\mathscr{U}}^{a\vee}(q_{i}^{-1}x)}.
\eea
\end{proposition}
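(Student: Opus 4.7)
My plan is to follow exactly the template established in the excerpt for the $a=4$ case, but now phrase everything in a quadrality-covariant way using the recursion relations of Appendix~A. The statement is really a two-part claim: (i) the iWeyl map in~\eqref{eq:D6iWeylgeneral} produces, upon iteration, a pole-free operator-valued function of $x$, and (ii) the resulting $qq$-character admits the closed-form expansion over plane partitions with coefficients given by $\widetilde{\mathcal{Z}}^{\D6}_{\bar{a}}[\pi]$.

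First I would obtain the analog of the recursion formula for $\mathcal{Z}^{\D2}_{a}[k;q_i,q_j]$ for arbitrary $a\in\four$, which exists by the same derivation as the $a=4$ case already displayed; the role of $q_4$ is played by $q_a$ and the three masses become $\{q_i\}_{i\in\bar a}$. Rewriting this recursion as a vanishing residue condition at $x=\chi_{a,v}(\Abox)$ identifies the one-instanton correction needed to cancel the pole of $\widehat{\mathscr U}^{a}(q_a x)^{-1}$, which after extracting the constant in $\lambda$ gives precisely the factor $-\mathfrak q\prod_{i=1}^{3}\frac{1-q_a^{-1}q_i^{-1}}{1-q_i}$ and the operator ratio $\widehat{\mathscr A}(x)^{-1}$; the dual case proceeds identically with the extra overall $q_a$ coming from the ${}^{\vee}$-reflection identity in \eqref{eq:reflec_structfunc}. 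This pins down the iWeyl rule \eqref{eq:D6iWeylgeneral}.

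Next I would run the iteration. The key geometric point is that $\widehat{\mathscr A}(x)$ contains shifts only by $q_i,q_a q_i$ with $i\in\bar a$, so the new $\widehat{\mathscr U}^a$ poles produced at each step sit at positions $\chi_{\bar a,x}(\cube)$ with $\cube$ in the $\bar a$-directions. An inductive argument identifies the multiset of available poles after $n$ applications of iWeyl with the addable boxes of a plane partition $\pi\subset\mathbb{Z}_{\ge1}^{3}$ of size $n$, and cancellation of the newly generated pole by the subsequent iWeyl step is exactly the analog of~\eqref{eq:D4qqanalytic}/\eqref{eq:D2qqpart} lifted one dimension. The coefficient accumulated along a growth path $\emptyset\to\cube_{1}\to\cube_{1}+\cube_{2}\to\cdots\to\pi$ is a product of prefactors from \eqref{eq:D6iWeylgeneral} and a telescoping product of $\widehat{\mathscr A}$'s, which after reorganization using the recursion formulas in Thm.~\ref{eq:app-thm-D6U1recursionformula} (the $\widetilde{\mathcal Z}^{\D6}_{\bar a}$ recursion) collapses to $(\mathfrak{q}q_a^{-1})^{|\pi|}\widetilde{\mathcal Z}^{\D6}_{\bar a}[\pi]$ in the untilded case and $\mathfrak q^{|\pi|}\widetilde{\mathcal Z}^{\D6}_{\bar a}[\pi]$ in the dual case.

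The main obstacle is checking path-independence: different orders of adding boxes to build $\pi$ must give the same accumulated coefficient, which is the cocycle condition on the iWeyl map. Here I expect the argument to reduce to the elementary commutativity $[\widehat{\mathscr A}(\chi_{\bar a,x}(\cube)),\widehat{\mathscr A}(\chi_{\bar a,x}(\cubeF))]$-type identity once the zero modes are tracked, plus matching the residue identity in the last line of the Nekrasov-factor recursion block at each pair of adjacent boxes. Concretely, the two ways of going $\pi\to\pi+\cube\to\pi+\cube+\cubeF$ and $\pi\to\pi+\cubeF\to\pi+\cube+\cubeF$ produce ratios differing by the $\mathscr W^{\bar a}$-residue identity \eqref{eq:D6Nekrasov-shell}, which, combined with \eqref{eq:dualorigamiprop}--\eqref{eq:reflec_structfunc}, is precisely what guarantees consistency. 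Once this is established, the proposition follows; a cleaner operator-level proof using the vertex operator formalism of Section~\ref{sec:freefieldintegral} is deferred to Section~\ref{sec:D6qqcharacter}, as announced in the text below the statement.
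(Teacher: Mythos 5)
Your proposal is correct and follows essentially the same route as the paper: the proposition is established there by deriving the iWeyl reflection for $a=4$ from the vanishing-residue form of the $\U(1)$ D2 recursion relation (using the identities $\mathscr{V}_{4}(q_{3})/\mathscr{V}_{4}(q_{23})\mathscr{V}_{4}(q_{13})=-\prod_{i}\tfrac{1-q_{4}^{-1}q_{i}^{-1}}{1-q_{i}}$ and its dual, which carry the extra $q_{a}$), extending to general $a$ by quadrality, and then deferring the full plane-partition expansion to the operator formalism of section~\ref{sec:D6qqcharacter}, exactly as you note. Your sketched direct combinatorial completion (path-independence of the accumulated residue ratios, reduced to the recursion of Thm.~\ref{eq:app-thm-D6U1recursionformula}) is a sound substitute for that deferred step, since $\widetilde{\mathcal{Z}}^{\D6}_{\bar a}[\pi]$ is manifestly a function of $\pi$ alone and the one-step residue ratio reproduces its recursion; the paper instead obtains the same coefficients from commutativity with the screening charge in Thm.~\ref{thm:D6qq-commute}.
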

We call these $qq$-characters the $\D6$-brane $qq$-characters, shortly D6 $qq$-characters. Namely, the $qq$-character with respect to the $\D2_{a}$ theory on $\mathbb{C}_{a}\times \mathbb{S}^{1}$ is the $\D6_{\bar{a}}$ $qq$-character.

\paragraph{D6 $qq$-characters as codimension two defects}
We can switch the roles of the D-branes for the 7d theory case and consider the $\D2$ theory as the bulk theory and the $\D6$ theory as the defect theory (see Figure \ref{fig:D2D6defectqq} and Table \ref{t:D2D6defectbraneweb}). This setup gives rise to a codimension two defect, a line defect in the 3d theory. The $\D6_{\bar{a}}$-brane will then give the $qq$-character of the $\D2_{a}$-theory. Let us focus on the case $a=4$ and $\bfN_{123}^{\vee}=x$. The contour integral formula comes from
\bea
&\bfN_{123}^{\vee}\frac{\bfP_{4}^{\vee}\bfP_{123}}{\bfP_{\four}}\bfN_{4}-\bfP_{4}^{\vee}\bfN_{123}^{\vee}\bfK-\bfP_{123}^{\vee}\bfN_{4}^{\vee}\bfK+\bfP_{123}^{\vee}\bfK_{123}^{\vee}\bfK_{123}+\bfP_{124}^{\vee}\bfK_{4}^{\vee}\bfK_{4}+\bfP_{\four}\bfK_{123}^{\vee}\bfK_{4}\\
=&-q_{123}\bfN^{\vee}_{123}\left(\bfN_{4}-\bfP_{4}\bfK\right)-\bfP_{123}^{\vee}\bfN_{4}^{\vee}\bfK+\bfP_{123}^{\vee}\bfK_{123}^{\vee}\bfK_{123}+\bfP_{124}^{\vee}\bfK_{4}^{\vee}\bfK_{4}+\bfP_{\four}\bfK_{123}^{\vee}\bfK_{4},
\eea
Similar to the $\D2$ case, this will come from a gauge origami system where both $\D2$ and $\D6$-branes are included. The contour integral is written as
\bea
\frac{\mathcal{G}_{\bar{1}}^{k}}{k!}\oint\prod_{I=1}^{k}\frac{dx_{I}}{2\pi\iota x_{I}}\textcolor{red}{\prod_{\alpha=1}^{n}\frac{1}{(1-q_{4}x/v_{\alpha})}\prod_{I=1}^{k}\mathscr{V}_{4}\left(\frac{x}{x_{I}}\right)}\textcolor{blue}{\prod_{\alpha=1}^{n}\prod_{I=1}^{k}g_{\bar{4}}\left(\frac{v_{\alpha}}{x_{I}}\right)\prod_{I< J}\mathcal{A}_{\mathbb{C}^{4}}\left(\frac{x_{I}}{x_{J}}\right)^{-1}}.
\eea
The red term comes from the transverse $\D6$-brane, while the blue term comes from the bulk $\D2$-brane. After localization, the red term gives the $\mathscr{U}$-functions. Similar to other cases, integrating out the $\D6_{\bar{4}}$-brane leads to a codimension two defect operator $\widehat{\mathscr{T}}^{4}(x),\widehat{\mathscr{T}}^{4\,\vee}(x)$ and the expectation value of it with respect to the bulk $\D2_{4}$-theory gives the partition function of the gauge origami system where there are $\D2_{4}$ and $\D6_{123}$-branes spanning transversely.

\section{Free field realizations of contour integral formulas}\label{sec:freefieldintegral}
In the previous section, we introduced the contour integral formulas and non-perturbative partition functions of the magnificent four \eqref{eq:D8integral}, tetrahedron instantons \eqref{eq:D6integral}, spiked instantons \eqref{eq:D6integral}, and the coupled vortex system \eqref{eq:D2integral}. Since we expect BPS/CFT correspondence of these systems, the contour integrals themselves should be related to vertex operators. In this section, we determine the explicit forms of the vertex operators and show that their compositions give free field realizations of the contour integral formulas. The main strategy is to relate D-branes in the physical setup with vertex operators. We discuss the operator formalism of the magnificent four in section~\ref{sec:M4LMNS}, the tetrahedron instanton in section~\ref{sec:tetraLMNS}, the spiked instanton in section~\ref{sec:spikedLMNS}, and the coupled vortex system in section~\ref{sec:cplvortLMNS}. Zero modes of the vertex operators are discussed in section~\ref{sec:zeromodes}. In section \ref{sec:quiverstructure}, we discuss the relation with graded quivers and that the vertex operators are defined from the quiver $q$-Cartan matrices. We also generalize the discussion to toric Calabi--Yau four-folds and give some conjectures. 

The main statement is as follows.
\begin{theorem}\label{thm:freefieldconclusion}
    For each D-brane ($\D0,\D2,\D4,\D6,\D8$), we can define the corresponding vertex operators as
    \begin{align}
    \renewcommand\arraystretch{1.2}{
        \begin{tabular}{|c|c|c|c|}\hline
            D-brane & space-time & vertex operator & reference\\
           \hline\hline  D0-brane  & $\mathbb{S}^{1}$ & $\mathsf{A}(x)$ & \eqref{eq:D0op}\\
           \hline D2-brane   &  $\mathbb{C}_{a}\times \mathbb{S}^{1}$ ($a\in\four$)  & $\mathsf{S}_{a}(x)$&\eqref{eq:D2op}\\
           \hline D4-brane &  $\mathbb{C}^{2}_{A}\times \mathbb{S}^{1}$ ($A\in\six$) & $\mathsf{X}_{A}(x)$&\eqref{eq:D4op}\\
           \hline D6-brane &  $\mathbb{C}^{3}_{\bar{a}}\times \mathbb{S}^{1}$ ($a\in\four$) &  $\mathsf{W}_{\bar{a}}(x)$& \eqref{eq:D6op}\\
           \hline D8-brane & $\mathbb{C}^{4}\times \mathbb{S}^{1}$  & $\mathsf{Z}(x)$& \eqref{eq:D8op} \\\hline
         \end{tabular}}
    \end{align}
    In the contour integral formula, the D0-branes giving instanton contributions arise from\footnote{In this paper, when we write $\mathsf{V}(x)^{-1}$ for a vertex operator $\mathsf{V}(x)$, we are meaning $:\mathsf{V}(x)^{-1}:$. The normal ordering of them is implicitly imposed.} $\mathsf{A}^{-1}(x)$, while other D-branes arise from $\mathsf{S}_{a}(x),\mathsf{X}_{A}(x),\mathsf{W}_{\bar{a}}(x),\mathsf{Z}(x)$. To include anti D-branes, we need to reverse the power of the operators as $\mathsf{S}_{a}(x)^{-1},\mathsf{X}_{A}(x)^{-1},\mathsf{W}_{\bar{a}}(x)^{-1},\mathsf{Z}(x)^{-1}$. 
\end{theorem}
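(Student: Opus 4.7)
The overall strategy is to set up a Heisenberg algebra of free boson modes $\{a_n\}_{n\in\mathbb{Z}\setminus\{0\}}$ (together with appropriate zero modes) whose commutation relations are dictated by the $q$-deformed Cartan matrix associated with $\mathbb{C}^{4}$, and then to realize each D-brane as a normal-ordered exponential of these modes weighted by the bundle $\bfP_{S}^{[n]}$ associated with its transverse directions. Concretely, I would declare the D0-D0 self-contraction to produce $\mathcal{A}_{\mathbb{C}^{4}}(x/x')$ by choosing $[a_n,a_m]\propto n\,\bfP_{\four}^{[n]}\delta_{n+m,0}$, and then define $\mathsf{A}(x)$, $\mathsf{S}_a(x)$, $\mathsf{X}_A(x)$, $\mathsf{W}_{\bar{a}}(x)$, $\mathsf{Z}(x)$ so that their mode coefficients carry the factors $\bfP_\four^{[n]}$, $\bfP_{\bar a}^{[n]}$, $\bfP_{\bar A}^{[n]}$, $\bfP_a^{[n]}$ and $1$ respectively, which reflects the fact that a D$(2p)$-brane wraps $\mathbb{C}_S$ and so ``feels'' only the transverse directions $\bar S$. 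Once these definitions are in place, the theorem is proved by a direct Wick-theorem computation matching the vacuum expectation value of $\prod_I \mathsf{A}(x_I)^{-1}\!:\!\prod_i \mathsf{V}_i(v_i)\!:$ to the integrand of the relevant contour integral.

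In more detail, the key step is to verify two families of contractions. The first is the D0-D0 contraction $\mathsf{A}(x)^{-1}\mathsf{A}(x')^{-1}= \mathcal{A}_{\mathbb{C}^{4}}(x/x')^{-1}:\mathsf{A}(x)^{-1}\mathsf{A}(x')^{-1}:$, which reproduces the universal off-diagonal factor appearing in \eqref{eq:D2integral}, \eqref{eq:D4integral}, \eqref{eq:D6integral} and \eqref{eq:D8integral}. The second is the mixed D0-D$(2p)$ contraction, which should yield:
\begin{align*}
\mathsf{A}(x)^{-1}\mathsf{S}_a(v) &\;\longrightarrow\; g_{\bar a}(v/x) \,, \\
\mathsf{A}(x)^{-1}\mathsf{X}_A(v) &\;\longrightarrow\; \mathscr{S}_{\bar A}(v/x) \,, \\
\mathsf{A}(x)^{-1}\mathsf{W}_{\bar a}(v) &\;\longrightarrow\; \mathscr{V}_a(v/x) \,, \\
\mathsf{A}(x)^{-1}\mathsf{Z}(v) &\;\longrightarrow\; 1/P(x),
\end{align*}
exactly reproducing the D0-D$(2p)$ factors in each of the four integrands. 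These contractions follow from the Grothendieck-ring factorizations $\bfP_\four=\bfP_a\bfP_{\bar a}=\bfP_A\bfP_{\bar A}$ and the reflection identities \eqref{eq:reflec_structfunc}, together with the standard $\exp/\log$ identity $e^{A}e^{B}=e^{[A,B]}e^{B}e^{A}$ for free-field modes. An essentially identical computation gives the anti-brane versions by inverting the power of the vertex operators.

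I expect the main obstacle to be the bookkeeping of zero modes and overall prefactors. One must (i) absorb the measure factors $\mathcal{G}_{\bar a}^{k_{\bar a}}/k_{\bar a}!$ appearing in each contour integral into either the vertex operator normalization or the topological term, (ii) ensure that the D-brane/D-brane contractions $\mathsf{V}_i(v_i)\mathsf{V}_j(v_j)$ reproduce the one-loop perturbative factors \eqref{eq:D2oneloop}, \eqref{eq:D4oneloop}, \eqref{eq:D6oneloop}, \eqref{eq:D8oneloop} with the correct exponent (rather than spurious singular contributions coming from the diagonal part of $\bfN_S^\vee \bfN_S/\bfP_\four$), and (iii) pick a consistent ordering prescription so that the anti D-brane sign factors arising in the magnificent four and tetrahedron partition functions fit into a uniform operator identity. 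This zero-mode analysis is precisely what is deferred to section~\ref{sec:zeromodes}, and handling it correctly (using the quiver $q$-Cartan matrix structure of section~\ref{sec:quiverstructure}) is what allows the five families of vertex operators to be constructed uniformly rather than case-by-case.

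Once both families of contractions and the zero-mode sector are fixed, the theorem is essentially automatic: Wick's theorem expands the vacuum expectation value $\bra{0}\prod_I \mathsf{A}(x_I)^{-1}:\prod_i \mathsf{V}_i(v_i):\ket{0}$ as a product of two-point contractions, which by the above computations is precisely the integrand of \eqref{eq:D2integral}/\eqref{eq:D4integral}/\eqref{eq:D6integral}/\eqref{eq:D8integral} divided by the one-loop factor. The resulting identity is verified separately for each of the four gauge origami systems---magnificent four, tetrahedron, spiked, and coupled vortex---in the case-by-case treatments that constitute the body of section~\ref{sec:freefieldintegral}.
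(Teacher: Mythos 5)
Your proposal follows essentially the same route as the paper: define the D0 root current with $[\mathsf{a}_n,\mathsf{a}_m]=-\tfrac{1}{n}\bfP_{\four}^{[n]}\delta_{n+m,0}$, obtain each D$(2p)$-brane operator by rescaling the modes by the corresponding $\bfP_{\mathcal{S}}^{[-n]}$ (cf.~\eqref{eq:relationwithD0}), and verify by Wick contraction that the D0--D$(2p)$ pairings reproduce $g_{\bar a}$, $\mathscr{S}_{\bar A}$, $\mathscr{V}_a$, $1/P$ while the brane--brane pairings give the one-loop factors, with the zero-mode consistency conditions fixed exactly as in section~\ref{sec:zeromodes}. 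No substantive difference from the paper's case-by-case verification in sections~\ref{sec:M4LMNS}--\ref{sec:cplvortLMNS}.
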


\subsection{Magnificent four}\label{sec:M4LMNS}
Let us start with the magnificent four system. Actually, the vertex operator of this system was already introduced in \cite{Kimura:2022zsm} in the context of double quiver gauge theory. See also \cite{Kimura:2019hnw}.

We first introduce the vertex operator corresponding to the $\D0$-brane which represents the instantons and vortices in the physical setup (see section \ref{sec:physicalsetup}). In the algebraic context, it is called the \emph{root current}:
\beq
    \mathsf{A}(x)=\mathsf{a}_{0}(x):\exp\left(\sum_{n\neq 0}\mathsf{a}_{n}x^{-n}\right):,\quad [\mathsf{a}_{n},\mathsf{a}_{m}]=-\frac{1}{n}\bfP_{\four}^{[n]}\delta_{n+m,0},\label{eq:D0op}
\eeq
where $\mathsf{a}_{0}(x)$ is the zero mode that will be determined later. The right-hand side of the commutation relation of the root current represents the CY$_{4}$ geometry $\mathbb{C}^{4}$. In the context of quiver W-algebra \cite{Kimura:2015rgi} and double quiver gauge theory \cite{Kimura:2022zsm}, this is just the root current of the affine quiver W-algebra which is denoted by the $(\widehat{A}_{0},\widehat{A}_{0})$ theory in the terminology of \cite{Kimura:2022zsm}. The operator product of the $\mathsf{A}$-operator is
\begin{equation}
    \mathsf{A}(x)\mathsf{A}(x')=\wick{\c{\mathsf{a}_{0}(x)}\c{\mathsf{a}_{0}(x')}}\mathcal{A}_{\mathbb{C}^{4}}\left(\frac{x'}{x}\right)^{-1}:\mathsf{A}(x)\mathsf{A}(x'):,\label{eq:D0ope}
\end{equation}
where we used the convention
\begin{equation}
   \mathsf{a}_{0}(x)\mathsf{a}_{0}(x')=\wick{\c{\mathsf{a}_{0}(x)}\c{\mathsf{a}_{0}(x')}}:\mathsf{a}_{0}(x)\mathsf{a}_{0}(x'):.
\end{equation}
For the zero modes $\mathsf{a}_{0}(z)$, we impose the condition that the OPE factor will be the same rational function. Namely, using \eqref{eq:reflec_structfunc}, we impose 
\begin{equation}
    \wick{\c{\mathsf{a}_{0}(x)}\c{\mathsf{a}_{0}(x')}}=\wick{\c{\mathsf{a}_{0}(x')}\c{\mathsf{a}_{0}(x)}}\label{eq:D0D0zero}
\end{equation}
and then we have the OPE factor symmetric in $x$ and $x'$. We set $\wick{\c{\mathsf{a}_{0}(x)}\c{\mathsf{a}_{0}(x')}}=1$ (see section~\ref{sec:zeromodes} for the derivation and explicit form of the zero modes).

To discuss the operator formalism of the magnificent four system, we need to introduce a vertex operator corresponding to the $\D8$-brane (and $\overbar{\D8}$).
We denote this vertex operator as $\mathsf{Z}(x)$:
\begin{equation}
    \mathsf{Z}(x)=\mathsf{z}_{0}(x):\exp\left(\sum_{n\neq 0}\mathsf{z}_{n}x^{-n}\right):,\quad [\mathsf{z}_{n},\mathsf{z}_{m}]=-\frac{1}{n}\frac{1}{\bfP_{\four}^{[n]}}\delta_{n+m,0},\label{eq:D8op}
\end{equation}
where the relation with the root current is
\begin{equation}
    [\mathsf{a}_{n},\mathsf{z}_{m}]=-\frac{1}{n}\delta_{n+m,0},\quad \mathsf{z}_{n}=\frac{\mathsf{a}_{n}}{\bfP_{\four}^{[n]}}.
\end{equation}
Explicitly, we have
\bea
    \mathsf{Z}(x)\mathsf{A}(x')&=\wick{\c{\mathsf{z}_{0}(x)}\c{\mathsf{a}_{0}(x')}}\left(1-x'/x\right):\mathsf{Z}(x)\mathsf{A}(x'):\\
    \mathsf{A}(x')\mathsf{Z}(x)&=\wick{\c{\mathsf{a}_{0}(x')}\c{\mathsf{z}_{0}(x)}}\left(1-x/x'\right):\mathsf{A}(x')\mathsf{Z}(x):
\eea
Since, in the magnificent four, the $\D8$-brane appears with the $\overbar{\D8}$-brane, we introduce a vertex operator corresponding to the brane-antibrane coupled system as 
\begin{equation}
    \widetilde{\mathsf{Z}}^{K}(x)\coloneqq{:\frac{\mathsf{Z}(x)}{\mathsf{Z}(K x)}:}=\tilde{\mathsf{z}}^{K}_{0}(x):\exp\left(\sum_{n\neq 0}\tilde{\mathsf{z}}^{K}_{n}x^{-n}\right):,\quad \tilde{\mathsf{z}}^{K}_{n}=(1-K^{-n})\mathsf{z}_{n}=\frac{1-K^{-n}}{\bfP_{\four}^{[n]}}\mathsf{a}_{n},\label{eq:D8D8barop}
\end{equation}
where $K\in\mathbb{C}^{\times}$ is a generic parameter. This parameter is the parameter introduced in \eqref{eq:antiD8parameter} and corresponds to the distance between the $\D8$ and $\overbar{\D8}$ branes physically. This gives 
\bea
    \widetilde{\mathsf{Z}}^{K}(x)\mathsf{A}(x')&=\wick{\c{\widetilde{\mathsf{z}}^{K}_{0}(x)}\c{\mathsf{a}_{0}(x')}}\frac{1-x'/x}{1-K^{-1}x'/x}:\widetilde{\mathsf{Z}}^{K}(x)\mathsf{A}(x'):,\\
    \mathsf{A}(x')\widetilde{\mathsf{Z}}^{K}(x)&=\wick{\c{\mathsf{a}_{0}(x')}\c{\widetilde{\mathsf{z}}^{K}_{0}(x)}}\frac{1-x/x'}{1-Kx/x'}:\widetilde{\mathsf{Z}}^{K}(x)\mathsf{A}(x'):
\eea
For the zero modes, we impose that the contraction with the root current will be the same rational function after analytic continuation:
\begin{equation}
    \wick{\c{\tilde{\mathsf{z}}^{K}_{0}(x)}\c{\mathsf{a}_{0}(x')}}=K^{-1}\wick{\c{\mathsf{a}_{0}(x')}\c{\tilde{\mathsf{z}}^{K}_{0}(x)}}.\label{eq:D0D8zero}
\end{equation}
We impose $\wick{\c{\mathsf{a}_{0}(x')}\c{\tilde{\mathsf{z}}^{K}_{0}(x)}}=1$ (see section \ref{sec:zeromodes}). Note that we are relaxing the conditions and only imposing conditions on the zero modes of the brane anti-brane coupled vertex operator $\widetilde{\mathsf{Z}}^{K}$ but not the $\mathsf{Z}$-operator. 

\begin{proposition}
The contour integral formula of the magnificent four system shown in \eqref{eq:D8integral} is given as
\begin{equation}\label{eq:D8op_integral}
    \mathfrak{q}^{k}\mathcal{Z}_{k}^{\D8}=\frac{\mathfrak{q}^{k}\mathcal{G}_{\bar{a}}^{k}}{k!}\oint \prod_{I=1}^{k}\frac{dx_{I}}{2\pi\iota x_{I}}\langle \mathsf{A}_{k}^{-1}\widetilde{\mathsf{Z}}^{\underline{K}}_{\underline{n}}\rangle
\end{equation}
where we used
\begin{equation}
\begin{split}
    &\mathsf{A}_{k}^{-1}=\prod_{I=1}^{k}\mathsf{A}(x_{I})^{-1},\quad \widetilde{\mathsf{Z}}^{\underline{K}}_{\underline{n}}=:\prod_{\alpha=1}^{n}\widetilde{\mathsf{Z}}^{K_{\alpha}}(v_{\alpha}):,\quad \langle \mathsf{A}_{k}^{-1}\widetilde{\mathsf{Z}}^{\underline{K}}_{\underline{n}}\rangle=\prod_{\alpha=1}^{n}\prod_{I=1}^{k}\frac{1-K_{\alpha}v_{\alpha}/x_{I}}{1-v_{\alpha}/x_{I}}\prod_{I<J}\mathcal{A}_{\mathbb{C}^{4}}\left(\frac{x_{I}}{x_{J}}\right)^{-1}
\end{split}
\end{equation}
and $\langle \mathcal{O}\rangle=\bra{0}\mathcal{O}\ket{0}$.     
\end{proposition}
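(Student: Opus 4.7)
The plan is to verify the claimed identity by matching the free-field correlator on the right to the contour integrand of \eqref{eq:D8integral} factor by factor. Since the integrand is a rational function of the variables $x_I, v_\alpha, K_\alpha$, it suffices to reproduce each rational factor from the OPEs of the vertex operators $\mathsf{A}^{-1}(x)$ and $\widetilde{\mathsf{Z}}^K(v)$.

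First, I would compute the OPE among the D0 vertex operators. From the commutation relation $[\mathsf{a}_n, \mathsf{a}_m] = -\tfrac{1}{n}\mathbf{P}_{\underline{\mathbf{4}}}^{[n]}\delta_{n+m,0}$ and the Baker--Campbell--Hausdorff formula, the positive/negative mode contraction gives
\begin{equation*}
    \exp\!\left(-\sum_{n>0}\frac{1}{n}\mathbf{P}_{\underline{\mathbf{4}}}^{[n]}(x'/x)^n\right)
    \;=\; \mathcal{A}_{\mathbb{C}^4}(x'/x)^{-1},
\end{equation*}
by the plethystic exponential identity $\mathbb{I}[-\mathbf{P}_{\underline{\mathbf{4}}}^\vee x^\vee]=\mathcal{A}_{\mathbb{C}^4}(x)$. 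Since inverting the operator flips the signs of both $\mathsf{a}_+$ and $\mathsf{a}_-$, the contraction factor is unchanged, giving $\mathsf{A}^{-1}(x)\mathsf{A}^{-1}(x') = \mathcal{A}_{\mathbb{C}^4}(x'/x)^{-1} :\mathsf{A}^{-1}(x)\mathsf{A}^{-1}(x'):$. The zero mode condition \eqref{eq:D0D0zero} ensures that the factor is symmetric in $x \leftrightarrow x'$, consistent with the reflection property $\mathcal{A}_{\mathbb{C}^4}(x)=\mathcal{A}_{\mathbb{C}^4}(x^{-1})$ in \eqref{eq:reflec_structfunc}, so the product $\prod_I \mathsf{A}^{-1}(x_I)$ produces $\prod_{I<J}\mathcal{A}_{\mathbb{C}^4}(x_I/x_J)^{-1}$ times a normal-ordered exponential independent of ordering.

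Next, I would compute the mixed OPE. From $\tilde{\mathsf{z}}^K_n = (1-K^{-n})\mathbf{P}_{\underline{\mathbf{4}}}^{[n]\,-1}\mathsf{a}_n$ and the Calabi--Yau identity $\mathbf{P}_{\underline{\mathbf{4}}}^{[n]} = \mathbf{P}_{\underline{\mathbf{4}}}^{[-n]}$, one finds $[\mathsf{a}_n, \tilde{\mathsf{z}}^K_{-n}] = -\tfrac{1-K^n}{n}$, from which
\begin{equation*}
    \mathsf{A}^{-1}(x)\widetilde{\mathsf{Z}}^K(v) \;=\; \frac{1-Kv/x}{1-v/x}\,:\!\mathsf{A}^{-1}(x)\widetilde{\mathsf{Z}}^K(v)\!: .
\end{equation*}
The zero mode constraint \eqref{eq:D0D8zero} guarantees that swapping the order only introduces a $K^{-1}$ factor whose cumulative effect is absorbed consistently. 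Taking the vacuum expectation value kills all normal-ordered pieces, yielding
\begin{equation*}
    \bigl\langle \mathsf{A}_{k}^{-1}\widetilde{\mathsf{Z}}_{\underline{n}}^{\underline{K}}\bigr\rangle
    \;=\; \prod_{\alpha=1}^{n}\prod_{I=1}^{k}\frac{1-K_\alpha v_\alpha/x_I}{1-v_\alpha/x_I}\,\prod_{I<J}\mathcal{A}_{\mathbb{C}^4}(x_I/x_J)^{-1}.
\end{equation*}

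Finally, I would match this to \eqref{eq:D8integral}. The identification $\bar{v}_\alpha = K_\alpha v_\alpha$ (see \eqref{eq:antiD8parameter}) immediately recovers the factor $\prod_I \overline{P}(x_I)/P(x_I)$. For the remaining factor I would use the reflection property $g_{\bar{a}}(x) = g_{\bar{a}}(q_a x^{-1})^{-1}$ from \eqref{eq:reflec_structfunc} together with $\mathcal{A}_{\mathbb{C}^4}(x) = g_{\bar{a}}(x)/g_{\bar{a}}(q_a x)$ to establish the identity
\begin{equation*}
    \prod_{I\neq J} g_{\bar{4}}(x_I/x_J)^{-1}
    \;=\; \prod_{I<J} g_{\bar{4}}(x_I/x_J)^{-1} g_{\bar{4}}(x_J/x_I)^{-1}
    \;=\; \prod_{I<J}\mathcal{A}_{\mathbb{C}^4}(x_I/x_J)^{-1}.
\end{equation*}
Combining these observations reproduces the integrand of \eqref{eq:D8integral} exactly, including the prefactor $\mathfrak{q}^k \mathcal{G}_{\bar{a}}^k/k!$. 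The main subtlety I anticipate is the bookkeeping for the zero modes and their normalization, together with checking that the formal-series OPEs coincide with the rational functions after analytic continuation, so that the residue prescription (Jeffrey--Kirwan) on both sides picks up the same poles; once one identifies the rational functions explicitly, the equality is manifest and no further work is needed.
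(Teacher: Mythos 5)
Your proposal is correct and follows essentially the same route as the paper: the proposition is a direct consequence of the OPE/contraction formulas \eqref{eq:D0ope}--\eqref{eq:D0D8zero} stated immediately before it, together with the zero-mode conventions of section~\ref{sec:zeromodes}, the identification $\bar v_\alpha = K_\alpha v_\alpha$, and the identity $\prod_{I\neq J}g_{\bar a}(x_I/x_J)^{-1}=\prod_{I<J}\mathcal{A}_{\mathbb{C}^4}(x_I/x_J)^{-1}$ following from \eqref{eq:reflec_structfunc}. All the individual computations you outline (the mode commutators, the use of $\mathbf{P}_{\underline{\mathbf{4}}}^{[n]}=\mathbf{P}_{\underline{\mathbf{4}}}^{[-n]}$, and the ordering independence after analytic continuation) check out against the paper's conventions.
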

We can generalize the correlation function by including an extra parameter $p=e^{2\pi\iota\tau}$ as $\tr(p^{L_{0}}\mathcal{O})$ which will give an elliptic deformation (a torus correlator). $L_0$ is the degree counting operator. Taking the limit $p\rightarrow 0$ from this elliptic deformed correlator gives the vacuum expectation value: $\tr(p^{L_{0}}\mathcal{O})\xrightarrow{p\rightarrow \infty}\bra{0}\mathcal{O}\ket{0}$. Under this elliptic deformation, the arising algebra will be quiver elliptic W-algebra and will be discussed in section \ref{sec:ellipticWorigami}. For the trigonometric case, we simply use the vacuum expectation value of the operators.    

Note that we need to determine an order in the operators when doing explicit computations. In most of the cases, we simply assume $\prod_{I=1}^{k}\mathcal{O}(x_{I})=\mathcal{O}(x_{k})\cdots\mathcal{O}(x_{2}) \mathcal{O}(x_{1})$ for an operator $\mathcal{O}(x)$. Actually, because of the zero modes conditions \eqref{eq:D0D0zero} and \eqref{eq:D0D8zero}, the order of the operators is not relevant after analytic continuation. Note also that the OPE of the $\widetilde{\mathsf{Z}}^{K}(x)$ operators give the perturbative factor introduced in \eqref{eq:D8oneloop}:
\begin{equation}
    \widetilde{\mathsf{Z}}^{K_{n}}(v_{n})\cdots \widetilde{\mathsf{Z}}^{K_{1}}(v_{1})=\prod_{\beta>\alpha}\mathcal{Z}_{\text{1-loop}}^{\D8\tbar\D8}(v_{\alpha},K_{\alpha}\,|\,v_{\beta},K_{\beta}):\prod_{\alpha=1}^{n}\widetilde{\mathsf{Z}}^{K_{\alpha}}(v_{\alpha}):.
\end{equation}

\subsection{Tetrahedron instanton}\label{sec:tetraLMNS}
The tetrahedron system is obtained by adding $\D6$-branes and $\D0$-branes. To reproduce the free field realization of the integral formula in \eqref{eq:D6integral}, we introduce the following vertex operator that represents the $\D6$-brane wrapping $\mathbb{C}^{3}_{\bar{a}}\times \mathbb{S}^{1}$:
\begin{equation}
\mathsf{W}_{\bar{a}}(x)=\mathsf{w}_{\bar{a},0}(x):\exp\left(\sum_{n\neq 0}\mathsf{w}_{\bar{a},n}x^{-n}\right):
,\quad[\mathsf{w}_{\bar{a},n},\mathsf{w}_{\bar{b},m}]=-\frac{1}{n}\frac{\bfP_{\four}^{[n]}}{\bfP_{\bar{a}}^{[-n]}\bfP_{\bar{b}}^{[n]}}\delta_{n+m,0},\label{eq:D6op}
\end{equation}
where $a\in\four$ and $\mathsf{w}_{\bar{a},0}(x)$ are zero modes. The relation with the root current is
\begin{equation}
[\mathsf{a}_{n},\mathsf{w}_{\bar{a},m}]=-\frac{1}{n}\bfP_{a}^{[n]}\delta_{n+m,0},\quad \mathsf{w}_{\bar{a},n}=\frac{\mathsf{a}_{n}}{\bfP_{\bar{a}}^{[-n]}},
\end{equation}
which gives the operator product
\begin{equation}
\begin{split}
    \mathsf{A}(x)\mathsf{W}_{\bar{a}}(x')&=\wick{\c{\mathsf{a}_{0}(x)}\c{\mathsf{w}_{\bar{a},0}(x')}}\mathscr{V}_{a}(x'/x)^{-1}:\mathsf{A}(x)\mathsf{W}_{\bar{a}}(x'):\\
    \mathsf{W}_{\bar{a}}(x')\mathsf{A}(x)&=\wick{\c{\mathsf{w}_{\bar{a},0}(x')}\c{\mathsf{a}_{0}(x)}}\mathscr{V}_{a}(q_{a}^{-1}x/x'):\mathsf{W}_{\bar{a}}(x')\mathsf{A}(x):.
\end{split}
\end{equation}
Using \eqref{eq:reflec_structfunc}, we impose the zero mode conditions as
\begin{equation}
    \wick{\c{\mathsf{a}_{0}(x)}\c{\mathsf{w}_{\bar{a},0}(x')}}=q_{a}\wick{\c{\mathsf{w}_{\bar{a},0}(x')}\c{\mathsf{a}_{0}(x)}}.\label{eq:D0D6zero}
\end{equation}
Explicitly, we impose $\wick{\c{\mathsf{a}_{0}(x)}\c{\mathsf{w}_{\bar{a},0}(x')}}=1$ (see section \ref{sec:zeromodes}). 
The tetrahedron instanton system is then given as follows.
\begin{proposition}
The tetrahedron instanton partition function~\eqref{eq:D6integral} is equivalent to the following vertex operator correlation function after analytic continuation,
\begin{equation}\label{eq:D6op_integral}
    \mathfrak{q}^{\underline{k}}\mathcal{Z}_{\underline{k}}^{\D6}=\frac{\mathfrak{q}^{\underline{k}}\underline{\mathcal{G}}^{\underline{k}}}{\underline{k}!}\oint\prod_{a\in\four}\prod_{I=1}^{k_{\bar{a}}}\frac{dx_{\bar{a},I}}{2\pi\iota x_{\bar{a},I}}\langle \mathsf{A}_{\underline{k}}^{-1}\mathsf{W}_{\underline{n}}\rangle
\end{equation}
where
\begin{subequations}
\begin{align}
&\mathfrak{q}^{\underline{k}}=\prod_{a\in\four}\mathfrak{q}^{k_{\bar{a}}},\quad\mathsf{A}_{\underline{k}}=\prod_{a\in\four}\prod_{I=1}^{k_{\bar{a}}}\mathsf{A}(x_{\bar{a},I}),\quad \mathsf{W}_{\underline{n}}={:\prod_{a\in\four}\prod_{\alpha=1}^{n_{\bar{a}}}\mathsf{W}_{\bar{a}}(v_{\bar{a},\alpha}):},\\
&\langle\mathsf{A}_{\underline{k}}^{-1}\mathsf{W}_{\underline{n}}\rangle=\prod_{a,b\in\four}\prod_{\alpha=1}^{n_{\bar{a}}}\prod_{I=1}^{k_{\bar{b}}}\mathscr{V}_{a}\left(\frac{v_{\bar{a},\alpha}}{x_{\bar{b},I}}\right)\prod_{(a,I)<(b,J)}\mathcal{A}_{\mathbb{C}^{4}}\left(\frac{x_{\bar{a},I}}{x_{\bar{b},J}}\right)^{-1}
\end{align}
\end{subequations}
\end{proposition}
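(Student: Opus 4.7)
The plan is to evaluate the vacuum expectation value $\langle \mathsf{A}_{\underline{k}}^{-1}\mathsf{W}_{\underline{n}}\rangle$ by iterated OPE contractions dictated by the commutation relations of $\mathsf{a}_n$ with itself and with $\mathsf{w}_{\bar a,m}$, and then to identify the resulting rational function with the integrand of~\eqref{eq:D6integral}. First I would record the two elementary OPEs that control everything: from $[\mathsf{a}_n,\mathsf{a}_m]=-\tfrac{1}{n}\bfP_{\four}^{[n]}\delta_{n+m,0}$ together with $\mathcal{A}_{\mathbb{C}^4}(x)=\mathbb{I}[-\bfP_{\four}^{\vee}x^{\vee}]$ one obtains $\mathsf{A}(x)^{-1}\mathsf{A}(x')^{-1} = \mathcal{A}_{\mathbb{C}^4}(x'/x)^{-1}\,:\mathsf{A}(x)^{-1}\mathsf{A}(x')^{-1}:$ up to a zero-mode factor, while from $[\mathsf{a}_n,\mathsf{w}_{\bar a,m}]=-\tfrac{1}{n}\bfP_a^{[n]}\delta_{n+m,0}$ one obtains $\mathsf{A}(x)^{-1}\mathsf{W}_{\bar a}(v) = \mathscr{V}_a(v/x)\,:\mathsf{A}(x)^{-1}\mathsf{W}_{\bar a}(v):$. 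Because $\mathsf{W}_{\underline{n}}$ is already displayed in normal-ordered form, there are no $\mathsf{W}$–$\mathsf{W}$ contractions and the only scalar factors in the correlator come from these two OPEs.

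Fixing an ordering on the double index $(a,I)$ and pushing positive modes to the right, the contractions collapse the correlator to
\begin{equation*}
\langle \mathsf{A}_{\underline{k}}^{-1}\mathsf{W}_{\underline{n}}\rangle
 = \prod_{(a,I)<(b,J)}\mathcal{A}_{\mathbb{C}^4}\!\left(\frac{x_{\bar a,I}}{x_{\bar b,J}}\right)^{-1}\,\prod_{a,b\in\four}\prod_{\alpha=1}^{n_{\bar a}}\prod_{I=1}^{k_{\bar b}}\mathscr{V}_a\!\left(\frac{v_{\bar a,\alpha}}{x_{\bar b,I}}\right),
\end{equation*}
provided the zero-mode prescriptions~\eqref{eq:D0D0zero} and~\eqref{eq:D0D6zero} (along with the normalisation fixed in section~\ref{sec:zeromodes}) ensure that the scalar prefactors produced when swapping operator orderings coincide with the rational functions produced when swapping the corresponding arguments. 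I would also invoke the reflection properties $\mathcal{A}_{\mathbb{C}^4}(x)=\mathcal{A}_{\mathbb{C}^4}(x^{-1})$ and those of $\mathscr{V}_a$ recorded in~\eqref{eq:reflec_structfunc} to guarantee that the expression is well defined after analytic continuation, independently of the chosen ordering.

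The technical heart of the argument is reconciling the ordered VEV above with the integrand of~\eqref{eq:D6integral}, which contains the unordered factor $\prod_{a}\prod_{I\neq J}g_{\bar a}(x_{\bar a,J}/x_{\bar a,I})^{-1}$ in the same-sector block rather than the ordered $\mathcal{A}_{\mathbb{C}^4}^{-1}$'s. The key identity is
\begin{equation*}
g_{\bar a}(x)\,g_{\bar a}(x^{-1}) = \frac{g_{\bar a}(x)}{g_{\bar a}(q_a x)} = \mathcal{A}_{\mathbb{C}^4}(x),
\end{equation*}
which follows from $g_{\bar a}(x)=g_{\bar a}(q_ax^{-1})^{-1}$ together with $\mathcal{A}_{\mathbb{C}^4}(x)=g_{\bar a}(x)/g_{\bar a}(q_ax)$ recorded in~\eqref{eq:struct_funct} and~\eqref{eq:reflec_structfunc}; it converts the unordered product $\prod_{I\neq J}g_{\bar a}^{-1}$ in the $a=b$ block into the ordered $\prod_{I<J}\mathcal{A}_{\mathbb{C}^4}^{-1}$ directly, while for $a<b$ the two expressions match term by term. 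The main obstacle I anticipate is the careful bookkeeping of zero modes: one must verify that the prescriptions~\eqref{eq:D0D0zero} and~\eqref{eq:D0D6zero} are simultaneously consistent with every reordering needed to match the VEV to the symmetric integrand, so that no spurious scalar or sign is introduced when the $x_{\bar a,I}$'s are treated as independent integration variables.
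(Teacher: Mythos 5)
Your proposal is correct and follows essentially the same route the paper takes: the correlator is evaluated by Wick contractions using the $\mathsf{A}$--$\mathsf{A}$ and $\mathsf{A}$--$\mathsf{W}$ OPEs (with the zero-mode prescriptions \eqref{eq:D0D0zero}, \eqref{eq:D0D6zero} guaranteeing ordering independence after analytic continuation), and the match with the integrand of \eqref{eq:D6integral} rests precisely on the reflection identity $g_{\bar a}(x)\,g_{\bar a}(x^{-1})=\mathcal{A}_{\mathbb{C}^4}(x)$, which converts $\prod_{I\neq J}g_{\bar a}^{-1}$ into the ordered product $\prod_{I<J}\mathcal{A}_{\mathbb{C}^4}^{-1}$ together with $\mathcal{A}_{\mathbb{C}^4}(x)=\mathcal{A}_{\mathbb{C}^4}(x^{-1})$.
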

Note that the order of the operator product in $\mathsf{A}_{\bar{k}}$ does not change the result. The OPE of the $\mathsf{W}$-operators give the one loop perturbative part in \eqref{eq:D6oneloop}:
\begin{equation}
    \mathsf{W}_{\bar{b}}(v_{\bar{b},\beta})\mathsf{W}_{\bar{a}}(v_{\bar{a},\alpha})=\mathcal{Z}_{\text{1-loop}}^{\D6\tbar\D6}(v_{\bar{a},\alpha},\bar{a}\,|\,v_{\bar{b},\beta},\bar{b}):\mathsf{W}_{\bar{b}}(v_{\bar{b},\beta})\mathsf{W}_{\bar{a}}(v_{\bar{a},\alpha}):.
\end{equation}

\paragraph{Relation with magnificent four}
Let us focus on the 7d $\U(1)$ theory on $\mathbb{C}^{3}_{\bar{a}}\times \mathbb{S}^{1}$. Starting from the 9d $\U(1|1)$ theory of the magnificent four and tuning the parameter $K=q_{a}$, we have 
\begin{equation}\label{eq:D8D6reduction}
    \widetilde{\mathsf{Z}}^{q_{a}}(x)={:\frac{\mathsf{Z}(x)}{\mathsf{Z}(q_{a}x)}:}\simeq\mathsf{W}_{\bar{a}}(x),
\end{equation}
where the equality $\simeq$ is up to extra zero modes depending on the explicit form. In our notation in section \ref{sec:zeromodes}, this becomes an exact identity. Generally, starting from a 9d $\U(n|n)$ magnificent four theory with parameters $(K_{\alpha})_{\alpha=1}^{n}$, $n=\sum_{a\in\four}n_{\bar{a}}$ and tuning 
\begin{equation}
   (K_{\alpha})_{\alpha=1}^{n}\longrightarrow (K_{\bar{a},\alpha})_{\alpha=1}^{n_{\bar{a}}},\quad K_{\bar{a},\alpha}=q_{a}
\end{equation}
we have
\begin{equation}
    \langle \mathsf{A}_{k}^{-1}\widetilde{\mathsf{Z}}^{\underline{K}}_{\underline{n}}\rangle=\langle \mathsf{A}_{\underline{k}}^{-1}\mathsf{W}_{\underline{n}}\rangle,
\end{equation}
and thus, we obtain the tetrahedron instanton system. Note also that setting $K=q_{a}^{-1}$ gives 
\bea
\widetilde{\mathsf{Z}}^{q_{a}^{-1}}(x)={:\frac{\mathsf{Z}(x)}{\mathsf{Z}(q_{a}^{-1}x)}:}=\mathsf{W}_{\bar{a}}(q_{a}^{-1}x)^{-1}
\eea
which allows $\mathsf{W}_{\bar{a}}(x)$ to appear in the denominator. 

Physically, this property suggests that the $\D8$-branes and anti-$\D8$-branes annihilate in a specific distance under the $\Omega$-background and eventually reproduce the, generally intersecting, $\D6$-branes system \cite{Nekrasov:2017cih,Nekrasov:2018xsb,Pomoni:2021hkn,Pomoni:2023nlf}, which is also interpreted as a tachyon condensation~\cite{Sen:1998sm}.

\paragraph{Supergroup generalization}
Following the construction of the magnificent four system in~\eqref{eq:D8D8barop}, we can write down the contour integral formula with D6 operators appearing in the denominator:
\beq\label{eq:D6supergroupdef}
{:\frac{\mathsf{W}_{\bar{a}}(v_{1})\cdots \mathsf{W}_{\bar{a}}(v_{n})}{\mathsf{W}_{\bar{a}}(u_{1})\cdots \mathsf{W}_{\bar{a}}(u_{m})}:}
\eeq
The contour integral formula is proportional to
\beq\label{eq:D6supergroupLMNS1}
\oint \prod_{a\in\four}\prod_{I=1}^{k_{\bar{a}}}\frac{dx_{\bar{a},I}}{2\pi\iota x_{\bar{a},I}}\langle\mathsf{A}_{\underline{k}}^{-1}\mathsf{W}_{\underline{n}|\underline{m}}\rangle
\eeq
where 
\begin{subequations}\label{eq:D6supergroupLMNS2}
\begin{align}
\mathsf{A}_{\underline{k}}&=\prod_{a\in\four}\prod_{I=1}^{k_{\bar{a}}}\mathsf{A}(x_{\bar{a},I}),\quad \mathsf{W}_{\underline{n}|\underline{m}}={:\prod_{a\in\four}\frac{\prod_{\alpha=1}^{n_{\bar{a}}}\mathsf{W}_{\bar{a}}(v_{\bar{a},\alpha})}{\prod_{\beta=1}^{m_{\bar{a}}}\mathsf{W}_{\bar{a}}(u_{\bar{a},\beta})}:},\\
\langle \mathsf{A}_{\underline{k}}^{-1}\mathsf{W}_{\underline{n}|\underline{m}}\rangle&=\prod_{a,b\in\four}\prod_{\alpha=1}^{n_{\bar{a}}}\prod_{I=1}^{k_{\bar{b}}}\mathscr{V}_{a}\left(\frac{v_{\bar{a},\alpha}}{x_{\bar{b},I}}\right)\prod_{a,b\in\four}\prod_{\beta=1}^{m_{\bar{a}}}\prod_{J=1}^{k_{\bar{b}}}\mathscr{V}_{a}\left(\frac{u_{\bar{a},\alpha}}{x_{\bar{b},J}}\right)^{-1}\prod_{(a,I)<(b,J)}\mathcal{A}_{\mathbb{C}^{4}}\left(\frac{x_{\bar{a},I}}{x_{\bar{b},J}}\right)^{-1}.
\end{align}
\end{subequations}
We expect that the operators in the denominators of \eqref{eq:D6supergroupdef} correspond to $\overbar{\D6}$-branes similar to the situation of the magnificent four system. We leave a detailed analysis of the evaluation of this contour integral formula and its relation with the 7d supergroup gauge theory for future work. We will see in later sections, that after tuning the parameters $\{u_{\bar{a},\beta}\}$ to special values, we can further reduce the system and obtain the contour integral formula of the spiked instanton system.

\subsection{Spiked instanton}\label{sec:spikedLMNS}
Let us next consider the spiked instanton system where $\D4$-branes wrapping $\mathbb{C}^{2}_{A}\times \mathbb{S}^{1},\,(A\in\six)$ appear. We introduce the following vertex operators for $A\in\six$:
\begin{equation}
\begin{split}
    &\mathsf{X}_{A}(x)=\mathsf{x}_{A,0}(x):\exp\left(\sum_{n\neq 0}\mathsf{x}_{A,n}x^{-n}\right):,\quad [\mathsf{x}_{A,n},\mathsf{x}_{B,m}]=-\frac{1}{n}\frac{\bfP_{\four}^{[n]}}{\bfP_{A}^{[-n]}\bfP_{B}^{[n]}}\delta_{n+m,0},
\end{split}\label{eq:D4op}
\end{equation}
where $\mathsf{x}_{A,0}(x)$ is the zero mode. The relation with the $\mathsf{A}$-operator is
\begin{equation}
    \mathsf{x}_{A,n}=\frac{\mathsf{a}_{n}}{\bfP_{A}^{[-n]}},\qquad [\mathsf{a}_{n},\mathsf{x}_{A,m}]=-\frac{1}{n}\bfP_{\bar{A}}^{[n]}\delta_{n+m,0}.
\end{equation}
Explicitly, the contraction formulas are 
\begin{equation}
\begin{split}
    \mathsf{A}(x)\mathsf{X}_{A}(\nu)&=\mathscr{S}_{\bar{A}}(\nu/x)^{-1}\wick{\c{\mathsf{a}_{0}(x)}\c{\mathsf{x}_{A,0}(x')}}:\mathsf{A}(x)\mathsf{X}_{A}(\nu):,\\
    \mathsf{X}_{A}(\nu)\mathsf{A}(x)&=\mathscr{S}_{\bar{A}}(q_{A}x/\nu)^{-1}\wick{\c{\mathsf{x}_{A,0}(x')}\c{\mathsf{a}_{0}(x)}}:\mathsf{X}_{A}(\nu)\mathsf{A}(x):.
\end{split}
\end{equation}
We impose the following condition on the zero modes so that the operator product of the right-hand side is the same rational function after analytic continuation \eqref{eq:reflec_structfunc}:
\begin{equation}
\begin{split}
    \wick{\c{\mathsf{a}_{0}(x)}\c{\mathsf{x}_{A,0}(x')}}=\wick{\c{\mathsf{x}_{A,0}(x')}\c{\mathsf{a}_{0}(x)}}.
\end{split}\label{eq:D0D4zero}
\end{equation}
We will use $\wick{\c{\mathsf{a}_{0}(x)}\c{\mathsf{x}_{A,0}(x')}}=1$ (see section \ref{sec:zeromodes} for explicit forms).

The vertex operators $\mathsf{X}_{A}(x)$ we introduced here are just the $\mathsf{Y}$-operators in \cite{Kimura:2015rgi} up to shift of variables
\begin{equation}
   \mathsf{X}_{A}(x)=\mathsf{Y}_{A}(q_{A}^{-1}x).
\end{equation}
Explicitly, if we consider the affine quiver $\widehat{A}_{0}$ gauge theory on $\mathbb{C}^{2}_{12}\times \mathbb{S}^{1}$, the $\mathsf{Y}$-operator\footnote{The $\mathsf{Y}$-operators are operators representing defects on the bulk gauge theory we are focusing on. For the gauge theory on the $\D4$-brane wrapping $\mathbb{C}^{2}_{12}\times \mathbb{S}^{1}$, the point defect is the transverse $\D4$-brane wrapping $\mathbb{C}^{2}_{34}\times \mathbb{S}^{1}$. Thus, the $\mathsf{Y}$-operator is related with the operator $\mathsf{X}_{34}(x)$. } in~\cite{Kimura:2015rgi} will be described as $\mathsf{X}_{34}(q_{12}^{-1}x)$. 

\begin{proposition}
The contour integral formula~\eqref{eq:D4integral} of the spiked instanton configuration is rewritten as
\begin{equation}\label{eq:D4op_integral}
\mathfrak{q}^{\underline{k}}\mathcal{Z}_{\underline{k}}^{\D4}=\frac{\mathfrak{q}^{\underline{k}}\underline{\mathcal{G}}^{\underline{k}}}{\underline{k}!}\oint\prod_{A\in\six}\prod_{I=1}^{k_{A}}\frac{dx_{A,I}}{2\pi\iota x_{A,I}}\langle\mathsf{A}^{-1}_{\underline{k}}\mathsf{X}_{\underline{n}}\rangle
\end{equation}
where
\begin{subequations}
\begin{align}
&\mathfrak{q}^{\underline{k}}=\prod_{A\in\six}\mathfrak{q}^{k_{A}},\quad \underline{\mathcal{G}}^{\underline{k}}=\prod_{A\in\six}\left(\mathcal{G}_{\text{sup}(\bar{A})}\right)^{k_{A}},\quad 
\mathsf{A}_{\underline{k}}=\prod_{A\in\six}\prod_{I=1}^{k_{A}}\mathsf{A}(x_{A,I}),\quad \mathsf{X}_{\underline{n}}={:\prod_{A\in\six}\prod_{\alpha=1}^{n_{A}}\mathsf{X}_{A}(v_{A,\alpha}):},\\
&\langle\mathsf{A}^{-1}_{\underline{k}}\mathsf{X}_{\underline{n}}\rangle=\prod_{A,B\in\six}\prod_{I=1}^{k_{B}}\prod_{\alpha=1}^{n_{A}}\mathscr{S}_{\bar{A}}\left(\frac{v_{A,\alpha}}{x_{B,I}}\right)\prod_{(A,I)<(B,J)}\mathcal{A}_{\mathbb{C}^{4}}\left(\frac{x_{A,I}}{x_{B,J}}\right)^{-1}.
\end{align}
\end{subequations}
The one-loop perturbative part in \eqref{eq:D4oneloop} is obtained by the OPE of the $\mathsf{X}$ operators:
\begin{equation}
    \mathsf{X}_{B}(v_{B,\beta})\mathsf{X}_{A}(v_{A,\alpha})=\mathcal{Z}^{\D4\tbar\D4}_{\text{1-loop}}(v_{A,\alpha},A\,|\,v_{B,\beta},B):\mathsf{X}_{B}(v_{B,\beta})\mathsf{X}_{A}(v_{A,\alpha}):.
\end{equation}
\end{proposition}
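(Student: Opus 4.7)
The proof is a direct free-field calculation: one evaluates $\langle \mathsf{A}^{-1}_{\underline{k}}\mathsf{X}_{\underline{n}}\rangle$ via the Wick expansion dictated by the commutators in \eqref{eq:D0op} and \eqref{eq:D4op}, and then matches the result term by term against the integrand of \eqref{eq:D4integral}. First I would record the pairwise OPEs needed. Since inverting both oscillator exponentials preserves the commutator, the $\mathsf{A}^{-1}\mathsf{A}^{-1}$ pairing reproduces the $\mathsf{A}\mathsf{A}$ OPE of \eqref{eq:D0ope} and yields $\mathcal{A}_{\mathbb{C}^{4}}(x'/x)^{-1}$; the mixed $\mathsf{A}^{-1}\mathsf{X}_{A}$ pairing inverts the sign of $[\mathsf{a}_n,\mathsf{x}_{A,m}]$ and yields $\mathscr{S}_{\bar{A}}(v/x)$; the $\mathsf{X}_A\mathsf{X}_B$ pairings are absorbed into the outer normal ordering $:\prod_{A,\alpha}\mathsf{X}_A(v_{A,\alpha}):$ and therefore do not enter the integrand, contributing only to the already-identified perturbative one-loop factor $\mathcal{Z}^{\D4}_{\text{1-loop}}$ of \eqref{eq:D4oneloop}. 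Combined with the zero-mode normalizations \eqref{eq:D0D0zero} and \eqref{eq:D0D4zero}, chosen so that every OPE factor is the same rational function of its arguments regardless of operator ordering after analytic continuation, this bookkeeping produces the claimed closed form of $\langle \mathsf{A}^{-1}_{\underline{k}}\mathsf{X}_{\underline{n}}\rangle$.

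Next I would identify the output of this free-field calculation with the integrand of \eqref{eq:D4integral}. The $\mathscr{S}_{\bar A}$ contributions match directly, because the double product over $(A,B)\in\six\times\six$ arising from the correlator is exactly the union of the $A=B$ vector-like term and the $A\neq B$ folded-instanton term of \eqref{eq:D4integral}. The crossed $\mathcal{A}_{\mathbb{C}^{4}}$ factors for $A\neq B$ also match the crossed-instanton term on the nose. The only non-trivial rewriting concerns the diagonal $\mathsf{A}^{-1}\mathsf{A}^{-1}$ contractions within a single $A$-sector: the correlator produces $\prod_{I<J}\mathcal{A}_{\mathbb{C}^{4}}(x_{A,I}/x_{A,J})^{-1}$, whereas the integrand of \eqref{eq:D4integral} has $\prod_{I\neq J} g_{\overline{\text{sup}(\bar A)}}(x_{A,I}/x_{A,J})^{-1}$. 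These two expressions agree by the identity $\mathcal{A}_{\mathbb{C}^{4}}(x) = g_{\bar a}(x)\,g_{\bar a}(x^{-1})$, valid for any $a\in\four$, which is an immediate corollary of $\mathcal{A}_{\mathbb{C}^{4}}(x) = g_{\bar a}(x)/g_{\bar a}(q_a x)$ together with the reflection $g_{\bar a}(q_a x^{-1})\,g_{\bar a}(x)=1$ from \eqref{eq:reflec_structfunc}. Choosing $a=\text{sup}(\bar A)$ in each $A$-sector completes the identification of the two integrands.

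The overall normalization is then transparent: the Weyl factor $1/\underline{k}!$ and the constant prefactor $\underline{\mathcal{G}}^{\underline{k}}$ appear identically on both sides of the stated equality, so no further matching is required beyond the integrand check above. The main obstacle is the careful handling of the zero modes: one must verify that the reflection conditions \eqref{eq:D0D0zero} and \eqref{eq:D0D4zero} can be simultaneously realized by a single family of zero-mode operators $\mathsf{a}_0(x),\mathsf{x}_{A,0}(x)$ whose contractions are the constants assumed in the proposition, which is precisely the content of section~\ref{sec:zeromodes}. Once this is in place the remainder of the argument reduces to the structure-function identities \eqref{eq:struct_funct}--\eqref{eq:reflec_structfunc} already collected earlier in the paper, so the nontrivial mathematical input is minimal and the main work is keeping track of signs, orderings, and analytic continuations in a consistent way.
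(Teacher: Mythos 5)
Your proposal is correct and follows essentially the same route the paper takes (the paper leaves the verification implicit after stating the contraction formulas): Wick-contract $\mathsf{A}^{-1}_{\underline{k}}$ against $\mathsf{X}_{\underline{n}}$ using \eqref{eq:D0op}, \eqref{eq:D4op} and the zero-mode conditions, then match the result against the integrand of \eqref{eq:D4integral}. The one detail you make explicit that the paper only implies — the identity $\mathcal{A}_{\mathbb{C}^{4}}(x)=g_{\bar a}(x)\,g_{\bar a}(x^{-1})$, which converts $\prod_{I<J}\mathcal{A}_{\mathbb{C}^{4}}^{-1}$ within a single $A$-sector into $\prod_{I\neq J}g_{\overline{\mathrm{sup}(\bar A)}}^{-1}$ — is verified correctly from \eqref{eq:reflec_structfunc}, so no gap remains.
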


\paragraph{Relation with tetrahedron instanton}
Similar to the situation in \eqref{eq:D8D6reduction}, where the $\D8\tbar\overbar{\D8}$ coupled system is reduced to the tetrahedron system, using \eqref{eq:D6supergroupdef} and specializing the parameters, we can obtain the contour integral formula of the spiked instanton system. This comes from the following relation:
\begin{equation}
    \mathsf{X}_{ab}(x)\simeq {:\frac{\mathsf{W}_{abc}(x)}{\mathsf{W}_{abc}(q_{c}x)}:}
\end{equation}
where $\simeq $ means they are equivalent up to zero modes. In our notation of the zero modes, it will become an exact identity (see \eqref{eq:oprelation1}). For example, we have 
\bea
{:\prod_{\alpha=1}^{n_{12}}\mathsf{X}_{12}(v_{12,\alpha}):}={:\prod_{\alpha=1}^{n_{12}}\frac{\mathsf{W}_{123}(v_{12,\alpha})}{\mathsf{W}_{123}(q_{3}v_{12,\alpha})}:}
\eea
which means by considering the $\D6\tbar \overbar{\D6}$ system spanning $\mathbb{C}^{3}_{123}\times \mathbb{S}^{1}$ and tuning the positions of them with the parameter $q_{3}$, we can obtain the 5d theory on $\mathbb{C}_{12}^{2}\times \mathbb{S}^{1}$.

\paragraph{Supergroup generalization}
As mentioned in section \ref{sec:spiked_partitionfunct}, to obtain supergroup analogs of the gauge origami system of the spiked instanton, we need to include negative D-branes $\D0^{-},\D4^{-}$ to the system. Due to the fact that the corresponding vertex operators of $\D0^{+}$ and $\D4^{+}$ are $\mathsf{A}(x)^{-1}$ and $\mathsf{X}_{A}(x)$ in the contour integral formula, it is natural to relate the $\D0^{-}$ and $\D4^{-}$ branes with $\mathsf{A}(x)$ and $\mathsf{X}_{A}(x)^{-1}$. Therefore, the contour integral formula for the supergroup analog of the spiked instanton system should be proportional to 
\bea
\oint\prod_{A\in\six}\prod_{I=1}^{k_{A}^{+}}\prod_{J=1}^{k_{A}^{-}}\frac{dx_{A,I}^{+}}{2\pi\iota x_{A,I}^{+}}\frac{dx_{A,J}^{-}}{2\pi\iota x_{A,J}^{-}}\langle\mathsf{A}_{\underline{k}_{+}}^{-1}\mathsf{A}_{\underline{k}_{-}}\mathsf{X}_{\underline{n}_{+}|\underline{n}_{-}} \rangle
\eea
where 
\bea
\mathsf{A}_{\underline{k}_{+}}^{-1}=\prod_{A\in\six}\prod_{I=1}^{k_{A,+}}\mathsf{A}(x^{+}_{A,I})^{-1},\quad \mathsf{A}_{\underline{k}_{-}}=\prod_{A\in\six}\prod_{J=1}^{k_{A,-}}\mathsf{A}(x^{-}_{A,J}),\quad \mathsf{X}_{\underline{n}_{+}|\underline{n}_{-}}={:\prod_{A\in\six}\frac{\prod_{\alpha=1}^{n_{A}^{+}}\mathsf{X}_{A}(v_{A,\alpha}^{+})}{\prod_{\beta=1}^{n_{A}^{-}}\mathsf{X}_{A}(v_{A,\beta}^{-})}:}.
\eea
This complete formula indeed reproduces the contour integral formula in \eqref{eq:app-supergroupLMNS} for the affine quiver supergroup gauge theory case. By evaluating the poles properly, we expect this gives the supergroup analog of the gauge origami system.

\subsection{Coupled vortex system}\label{sec:cplvortLMNS}
Similar to the previous cases, let us consider the $\D2$ coupled vortex system in \eqref{eq:D2integral}. We introduce an operator corresponding to the $\D2$-brane wrapping $\mathbb{C}_{a}\times \mathbb{S}^{1}(a\in\four)$ as
\bea
    \mathsf{S}_{a}(x)=\mathsf{s}_{a,0}(x):\exp\left(\sum_{n\neq 0}\mathsf{s}_{a,n}x^{-n}\right):,\quad [\mathsf{s}_{a,n},\mathsf{s}_{b,m}]=-\frac{1}{n}\frac{\bfP_{\four}^{[n]}}{\bfP_{a}^{[-n]}\bfP_{b}^{[n]}}\delta_{n+m,0},\label{eq:D2op}
\eea
where $\mathsf{s}_{a,0}(x)$ is the zero modes and the relation with the $\mathsf{A}$-operator is
\bea
\relax[\mathsf{a}_{n},\mathsf{s}_{a,m}]=-\frac{1}{n}\bfP_{\bar{a}}^{[n]}\delta_{n+m,0},\quad \mathsf{s}_{a,n}=\frac{\mathsf{a}_{n}}{\bfP_{a}^{[-n]}}.
\eea
The operator product formulas are
\bea
    \mathsf{A}(x)\mathsf{S}_{a}(x')&=\wick{\c{\mathsf{a}_{0}(x)}\c{\mathsf{s}_{a,0}(x')}}g_{\bar{a}}(x'/x)^{-1}: \mathsf{A}(x)\mathsf{S}_{a}(x'):,\\
    \mathsf{S}_{a}(x')\mathsf{A}(x)&=\wick{\c{\mathsf{s}_{a,0}(x')}\c{\mathsf{a}_{0}(x)}}g_{\bar{a}}(q_{a}x/x'): \mathsf{A}(x)\mathsf{S}_{a}(x'):,
\eea
where we impose the zero modes condition as
\bea
    \wick{\c{\mathsf{a}_{0}(x)}\c{\mathsf{s}_{a,0}(x')}}=\wick{\c{\mathsf{s}_{a,0}(x')}\c{\mathsf{a}_{0}(x)}}.\label{eq:D0D2zero}
\eea
so that the rational function arising on the right-hand side will be the same after using \eqref{eq:reflec_structfunc}. Explicitly, we have $\wick{\c{\mathsf{a}_{0}(x)}\c{\mathsf{s}_{a,0}(x')}}=1$ (see section \ref{sec:zeromodes}). Under this condition, the contour integral in \eqref{eq:D2integral} is written as follows.
\begin{proposition}
The contour integral formula for the coupled vortex system shown in \eqref{eq:D2integral} is given by the following correlation function of the vertex operators,
\bea\label{eq:D2op_integral}
\mathfrak{q}^{\underline{k}}\mathcal{Z}_{\underline{k}}^{\D2}=\frac{\mathfrak{q}^{\underline{k}}\underline{\mathcal{G}}^{\underline{k}}}{\underline{k}!}\oint\prod_{a\in\four}\prod_{I=1}^{k_{a}}\frac{dx_{a,I}}{2\pi\iota x_{a,I}}\langle\mathsf{A}_{\underline{k}}^{-1}\mathsf{S}_{\underline{n}}\rangle,
\eea
where
\begin{subequations}
\begin{align}
&\mathfrak{q}^{\underline{k}}=\prod_{a\in\four}\mathfrak{q}^{k_{a}},\quad \mathsf{A}_{\underline{k}}=\prod_{a\in\four}\prod_{I=1}^{k_{a}}\mathsf{A}(x_{a,I}),\quad \mathsf{S}_{\underline{n}}={:\prod_{a\in\four}\prod_{\alpha=1}^{n_{a}}\mathsf{S}_{a}(v_{a,\alpha}):},\\
    &\langle\mathsf{A}_{\underline{k}}^{-1}\mathsf{S}_{\underline{n}}\rangle=\prod_{a,b\in\four}\prod_{\alpha=1}^{n_{a}}\prod_{I=1}^{k_{b}}g_{\bar{a}}\left(\frac{v_{a,\alpha}}{x_{b,I}}\right)\prod_{(a,I)<(b,J)}\mathcal{A}_{\mathbb{C}^{4}}\left(\frac{x_{a,I}}{x_{b,J}}\right)^{-1}.
\end{align}
\end{subequations}
The one-loop perturbative part in \eqref{eq:D2oneloop} is obtained from the $\mathsf{S}$-operators:
\bea
    \mathsf{S}_{b}(v_{b,\beta})\mathsf{S}_{a}(x_{a,\alpha})=\mathcal{Z}_{\text{1-loop}}^{\D2\tbar\D2}(v_{a,\alpha},a\,|\,v_{b,\beta},b):\mathsf{S}_{b}(v_{b,\beta})\mathsf{S}_{a}(x_{a,\alpha}):.
\eea
\end{proposition}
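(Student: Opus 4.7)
The plan is to evaluate the vacuum matrix element $\langle \mathsf{A}_{\underline{k}}^{-1}\mathsf{S}_{\underline{n}}\rangle$ by standard Wick contraction and match the resulting rational function term-by-term against the integrand of~\eqref{eq:D2integral}, in direct analogy with the proofs of~\eqref{eq:D8op_integral}, \eqref{eq:D6op_integral} and~\eqref{eq:D4op_integral}. Since $\mathsf{S}_{\underline{n}}$ is already normal-ordered inside the correlator, only the $\mathsf{A}^{-1}$--$\mathsf{A}^{-1}$ and $\mathsf{A}^{-1}$--$\mathsf{S}_a$ pairings survive, and the vacuum expectation value of the residual normal-ordered operator is one (the zero-mode pieces are also fixed to $1$ by the conventions~\eqref{eq:D0D0zero} and~\eqref{eq:D0D2zero}).

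From the commutator $[\mathsf{a}_n,\mathsf{a}_m]=-\tfrac{1}{n}\bfP_\four^{[n]}\delta_{n+m,0}$ and BCH one reads off $\mathsf{A}(x)^{-1}\mathsf{A}(x')^{-1} = \mathcal{A}_{\mathbb{C}^4}(x'/x)^{-1}\,{:\mathsf{A}(x)^{-1}\mathsf{A}(x')^{-1}:}$, while from $[\mathsf{a}_n,\mathsf{s}_{a,m}]=-\tfrac{1}{n}\bfP_{\bar a}^{[n]}\delta_{n+m,0}$ one obtains $\mathsf{A}(x)^{-1}\mathsf{S}_a(v) = g_{\bar a}(v/x)\,{:\mathsf{A}(x)^{-1}\mathsf{S}_a(v):}$. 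The reflection identities $\mathcal{A}_{\mathbb{C}^4}(x)=\mathcal{A}_{\mathbb{C}^4}(x^{-1})$ and $g_{\bar a}(x)=g_{\bar a}(q_a x^{-1})^{-1}$ from~\eqref{eq:reflec_structfunc}, combined with the zero-mode conventions above, imply that the contracted rational factor is invariant after analytic continuation under the swap of any pair of operators, so the choice of ordering in $\mathsf{A}_{\underline{k}}^{-1}$ is immaterial. Collecting one factor $g_{\bar a}(v_{a,\alpha}/x_{b,I})$ per $\mathsf{A}^{-1}$--$\mathsf{S}_a$ pair and one factor $\mathcal{A}_{\mathbb{C}^4}(x_{a,I}/x_{b,J})^{-1}$ per $\mathsf{A}^{-1}$--$\mathsf{A}^{-1}$ pair with $(a,I)<(b,J)$ reproduces precisely the integrand of~\eqref{eq:D2integral}; multiplying by $\mathfrak{q}^{\underline{k}}\underline{\mathcal{G}}^{\underline{k}}/\underline{k}!$ and integrating gives $\mathfrak{q}^{\underline k}\mathcal{Z}_{\underline k}^{\D2}$.

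For the second statement, the $\mathsf{S}$--$\mathsf{S}$ OPE follows from $[\mathsf{s}_{b,n},\mathsf{s}_{a,m}] = -\tfrac{1}{n}\tfrac{\bfP_\four^{[n]}}{\bfP_b^{[-n]}\bfP_a^{[n]}}\delta_{n+m,0}$; the resulting plethystic exponential matches $\mathcal{Z}^{\D2\tbar\D2}_{\text{1-loop}}$ in~\eqref{eq:D2oneloop} after one invokes the Calabi--Yau identity $\bfP_\four^{[-n]} = \bfP_\four^{[n]}$ (which follows from~\eqref{eq:dualorigamiprop} because $|\four|$ is even and $q_\four = 1$) to rewrite $\bfP_\four^{[n]}/(\bfP_b^{[-n]}\bfP_a^{[n]}) = \bfP_{\bar a}^{[n]}\bfP_{\bar b}^{[-n]}/\bfP_\four^{[n]}$. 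The main obstacle is nothing deeper than this kind of algebraic manipulation of the $\bfP$-brackets together with careful bookkeeping of the zero modes; once the reflection and CY identities are in place, every step is a direct analog of the earlier D8/D6/D4 cases, so no new technical difficulty arises.
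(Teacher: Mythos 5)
Your proposal is correct and follows exactly the route the paper intends: the proposition is a direct Wick-contraction computation using the OPEs $\mathsf{A}(x)\mathsf{S}_a(x')=g_{\bar a}(x'/x)^{-1}{:}\cdots{:}$ and $\mathsf{A}(x)\mathsf{A}(x')=\mathcal{A}_{\mathbb{C}^4}(x'/x)^{-1}{:}\cdots{:}$ (with the inverse powers flipping the sign of the cross-contraction exponent once for $\mathsf{A}^{-1}$--$\mathsf{S}_a$ and twice for $\mathsf{A}^{-1}$--$\mathsf{A}^{-1}$), the zero-mode conditions \eqref{eq:D0D0zero}, \eqref{eq:D0D2zero} to make the ordering immaterial, and the Calabi--Yau identity $\bfP_\four^{[-n]}=\bfP_\four^{[n]}$ for the one-loop factor, precisely as in the D8/D6/D4 cases. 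The only step worth making explicit is that the diagonal part $\prod_{I\neq J}g_{a\,\mathrm{i}(a)\mathrm{j}(a)}(x_{a,I}/x_{a,J})^{-1}$ of the integrand of \eqref{eq:D2integral} equals $\prod_{I<J}\mathcal{A}_{\mathbb{C}^4}(x_{a,I}/x_{a,J})^{-1}$ via $\mathcal{A}_{\mathbb{C}^4}(x)=g_{\bar c}(x)\,g_{\bar c}(x^{-1})$, which already follows from the reflection identity you quote.
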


The $\D2$-brane $\mathsf{S}$-operators are related to the screening currents of quiver W-algebras~\cite{Kimura:2015rgi}. Let us consider the two screening currents $\mathsf{S}_{1}(x)$ and $\mathsf{S}_{2}(x)$. Focusing on $\mathsf{S}_{2}(x)$, we have 
\bea
\relax    [\mathsf{s}_{2,n},\mathsf{s}_{2,m}]=-\frac{1}{n}\frac{1-q_{1}^{n}}{1-q_{2}^{-n}}(1-q_{3}^{n})(1-q_{4}^{n})\delta_{n+m,0}
\eea
which gives one of the screening currents of the affine quiver W-algebra \cite[eq.~(3.33)]{Kimura:2015rgi}. The screening current $\mathsf{S}_{1}(x)$ gives the other screening current. The other two screening currents $\mathsf{S}_{3}(x),\,\mathsf{S}_{4}(x)$ are introduced in a symmetric way using the quadrality. Thus, using two screening currents $\mathsf{S}_{a}(x),\,\mathsf{S}_{b}(x')\,(a\neq b)$ we will obtain six affine quiver W-algebras.

\begin{remark}
Observing the operators \eqref{eq:D8op}, \eqref{eq:D6op}, \eqref{eq:D4op}, \eqref{eq:D2op}, one can see that the $\D2$, $\D4$, $\D6$, $\D8$ vertex operators $\mathsf{S}_{a}(x),\mathsf{X}_{A}(x),\mathsf{W}_{\bar{a}}(x),\mathsf{Z}(x)$ are all related with the D0-brane operator in a similar way as
\bea
    \mathcal{A}_{n}=\frac{\mathsf{a}_{n}}{\bfP_{\mathcal{S}}^{[-n]}},\quad \mathcal{A}_{n}=\mathsf{s}_{a,n},\,\,\mathsf{x}_{A,n},\,\,\mathsf{w}_{\bar{a},n},\,\,\mathsf{z}_{n}\label{eq:relationwithD0}
\eea
where $\mathcal{S}$ is a subset of $\four$ depending on the subspace the D-branes are spanning. For example, we have the following formal expansions:
\bea\label{eq:expansioninrootcurrent}
\mathsf{s}_{a,n}&=-q_{a}^{n}\sum_{m=0}^{\infty}q_{a}^{nm}\mathsf{a}_{n},\quad |q_{a}|<1,\\
\mathsf{x}_{ab,n}&=q_{ab}^{n}\sum_{l,m=0}^{\infty}q_{a}^{ln}q_{b}^{mn}\mathsf{a}_{n},\quad |q_{a,b}|<1,\\
\mathsf{w}_{abc,n}&=-q_{abc}^{n}\sum_{k,l,m=0}^{\infty}q_{a}^{kn}q_{b}^{ln}q_{c}^{mn}\mathsf{a}_{n},\quad |q_{a,b,c}|<1,
\eea
where we used 
\bea\label{eq:qexp-analyticcont}
\frac{1}{1-q_{a}}=\begin{dcases}
    \sum_{l=0}^{\infty}q_{a}^{l},\quad |q_{a}|<1,\\
    -q_{a}^{-1}\sum_{l=0}^{\infty}q_{a}^{-l},\quad |q_{a}|>1.
\end{dcases}
\eea
Note that we need to be careful of the analytic region of the $q$-parameters when doing such formal expansions. Due to the condition $q_{1}q_{2}q_{3}q_{4}=1$, we can only impose at most three parameters as $|q_{a}|<1$ and thus, the $\mathsf{z}_{n}$ will be expanded as 
\bea
\mathsf{z}_{n}=-q_{123}^{n}\sum_{j,k,l,m=0}^{\infty}q_{1}^{jn}q_{2}^{kn}q_{3}^{ln}q_{4}^{-mn}\mathsf{a}_{n},\quad |q_{1,2,3}|<1,\quad |q_{4}|>1.
\eea
Other formal expansions in different analytic regions are obtained using \eqref{eq:qexp-analyticcont}.
\end{remark}

\subsection{Zero modes conditions}\label{sec:zeromodes}
Let us impose some conditions on the zero modes $\mathsf{a}_{0}(x),\widetilde{\mathsf{z}}_{0}^{K}(x),\mathsf{w}_{\bar{a},0}(x),\mathsf{x}_{A,0}(x),\mathsf{s}_{a,0}(x)$ and determine the free field realizations of them. Using the observation \eqref{eq:relationwithD0}, we can see that the operator product with operators associated with D-branes intersecting only at a point will give rational functions. We impose the zero modes so that the operator product will be the same rational functions after the analytic continuation. For the cases, when the $\mathsf{A}$-operator involves, the conditions are given in \eqref{eq:D0D0zero}, \eqref{eq:D0D8zero}, \eqref{eq:D0D6zero}, \eqref{eq:D0D4zero}, \eqref{eq:D0D2zero}. For the cases when the $\mathsf{S}$-operators involve, we impose the following conditions:
\begin{itemize}
    \item D2$_{a}$-D2$_{b}$ $(a\neq b)$: 
    \begin{subequations}
    \begin{align}
        \mathsf{S}_{a}(x)\mathsf{S}_{b}(x')&=\wick{\c{\mathsf{s}_{a,0}(x)}\c{\mathsf{s}_{b,0}(x')}}\mathscr{S}_{\overbar{ab}}(q_{a}x'/x):\mathsf{S}_{a}(x)\mathsf{S}_{b}(x'):,\\
        \mathsf{S}_{b}(x')\mathsf{S}_{a}(x)&=\wick{\c{\mathsf{s}_{b,0}(x')}\c{\mathsf{s}_{a,0}(x)}}\mathscr{S}_{\overbar{ab}}(q_{b}x'/x):\mathsf{S}_{a}(x)\mathsf{S}_{b}(x'):
    \end{align}
    \end{subequations}
    which gives the zero mode condition
    \bea
        \wick{\c{\mathsf{s}_{a,0}(x)}\c{\mathsf{s}_{b,0}(x')}}=\wick{\c{\mathsf{s}_{b,0}(x')}\c{\mathsf{s}_{a,0}(x)}},\quad a\neq b\label{eq:D2D2zero}
    \eea
    \item D4$_{A}$-D2$_{c}$ $(c,d\in\bar{A})$:
    \begin{subequations}
    \begin{align}
        \mathsf{X}_{A}(x)\mathsf{S}_{c}(x')&=\wick{\c{\mathsf{x}_{A,0}(x)}\c{\mathsf{s}_{c,0}(x')}}\frac{1-q_{A}x'/x}{1-q_{A}q_{d}x'/x}: \mathsf{X}_{A}(x)\mathsf{S}_{c}(x'):,\\
        \mathsf{S}_{c}(x')\mathsf{X}_{A}(x)&=\wick{\c{\mathsf{s}_{c,0}(x')}\c{\mathsf{x}_{A,0}(x)}}\frac{1-q_{A}^{-1}x/x'}{1-q_{A}^{-1}q_{d}^{-1}x/x'}:\mathsf{X}_{A}(x)\mathsf{S}_{c}(x'):
    \end{align}
    \end{subequations}
    which gives the zero mode condition
    \bea
    \wick{\c{\mathsf{x}_{A,0}(x)}\c{\mathsf{s}_{c,0}(x')}}=q_{A}^{-1}q_{c}^{-1}\wick{\c{\mathsf{s}_{c,0}(x')}\c{\mathsf{x}_{A,0}(x)}}\label{eq:D2D4zero}
    \eea
    \item D2$_{a}$-D6$_{\bar{a}}$:
    \begin{subequations}
    \begin{align}
        \mathsf{W}_{\bar{a}}(x)\mathsf{S}_{a}(x')&=\wick{\c{\mathsf{w}_{\bar{a},0}(x)}\c{\mathsf{s}_{a,0}(x')}}\frac{1}{1-q_{a}^{-1}x'/x}:\mathsf{W}_{\bar{a}}(x)\mathsf{S}_{a}(x'):,\\
        \mathsf{S}_{a}(x')\mathsf{W}_{\bar{a}}(x)&=\wick{\c{\mathsf{s}_{a,0}(x')}\c{\mathsf{w}_{\bar{a},0}(x)}}\frac{1}{1-q_{a}x/x'}:\mathsf{W}_{\bar{a}}(x)\mathsf{S}_{a}(x'):
    \end{align}
    \end{subequations}
    which gives the zero mode condition
    \begin{equation}
        \wick{\c{\mathsf{w}_{\bar{a},0}(x)}\c{\mathsf{s}_{a,0}(x')}}=\left(-\frac{x'}{q_{a}x}\right)\wick{\c{\mathsf{s}_{a,0}(x')}\c{\mathsf{w}_{\bar{a},0}(x)}}\label{eq:D2D6zero}
    \end{equation}
\end{itemize}

We can do the same analysis for the $\D4$-brane operators as
\begin{subequations}
\begin{align}
    \mathsf{X}_{A}(x)\mathsf{X}_{\bar{A}}(x')=\wick{\c{\mathsf{x}_{A,0}(x)}\c{\mathsf{x}_{\bar{A},0}(x')}}\left(1-q_{A}\frac{x'}{x}\right):\mathsf{X}_{A}(x)\mathsf{X}_{\bar{A}}(x'):,\\
    \mathsf{X}_{\bar{A}}(x')\mathsf{X}_{A}(x')=\wick{\c{\mathsf{x}_{\bar{A},0}(x')}\c{\mathsf{x}_{A,0}(x)}}\left(1-q_{\bar{A}}\frac{x}{x'}\right):\mathsf{X}_{A}(x)\mathsf{X}_{\bar{A}}(x'):
\end{align}
\end{subequations}
and impose the condition
\beq
   \mathsf{x}_{A,0}(x)\mathsf{x}_{\bar{A},0}(x')=\left(-q_{A}^{-1}\frac{x}{x'}\right)\mathsf{x}_{\bar{A},0}(x')\mathsf{x}_{A,0}(x)\label{eq:D4D4zero}
\eeq
but to make the discussion simple, we do not impose this condition\footnote{\label{note:D4footnote}This condition only affects the zero modes when we are considering the quadratic relations of the $qq$-characters which will be derived in section \ref{sec:D4quadraticrelation}.}.

Under these conditions, the free field realizations of the zero modes are given as 
\begin{subequations}\label{eq:zeromodes1}
\begin{align}
    &\mathsf{a}_{0}(x)=e^{\mathsf{a}_{0}},\quad \mathsf{s}_{a,0}(x)=x^{\mathsf{s}_{a,0}}e^{\widetilde{\mathsf{s}}_{a,0}},\quad \mathsf{w}_{\bar{a}}(x)=x^{\mathsf{w}_{\bar{a},0}}e^{\widetilde{\mathsf{w}}_{\bar{a},0}}e^{\widetilde{\widetilde{\mathsf{w}}}_{\bar{a},0}},\\
    &\mathsf{x}_{A,0}(x)=e^{\mathsf{x}_{A,0}}, \quad \widetilde{\mathsf{z}}_{0}^{K}(x)=x^{\mathsf{z}^{K}_{0}}e^{\widetilde{\mathsf{z}}^{K}_{0}}e^{\widetilde{\widetilde{\mathsf{z}}}^{K}_{0}}
\end{align}
\end{subequations}
with
\begin{subequations}\label{eq:zeromodes2}
\begin{align}
& \mathsf{a}_{0}=\mathsf{t}_{0},\quad \mathsf{s}_{a,0}=-(\log q_{a})^{-1}\mathsf{t}_{0},\quad \widetilde{\mathsf{s}}_{a,0}=-(\log q_{a})^{-1}\widetilde{\partial}_{\mathsf{t}},\\
    &\mathsf{w}_{\bar{a},0}=-\log q_{a}\,\widetilde{\mathsf{t}}_{0},\quad \widetilde{\mathsf{w}}_{\bar{a},0}=-\log q_{a}\log (-q_{a})\,\widetilde{\mathsf{t}}_{0},\quad \widetilde{\widetilde{\mathsf{w}}}_{\bar{a},0}=-\log q_{a}\partial_{\mathsf{t}},\\
    &\mathsf{x}_{A,0}=\log q_{c}\log q_{d}\,\widetilde{\mathsf{t}}_{0},\,\,\quad (\bar{A}=cd),\\
    &\mathsf{z}^{K}_{0}=-\log K \tilde{\mathsf{t}}_{0},\quad \widetilde{\mathsf{z}}^{K}_{0}=-\log K\log(-K)\widetilde{\mathsf{t}}_{0},\quad \tilde{\tilde{\mathsf{z}}}^{K}_{0}=-\log K\partial_{\mathsf{t}}
\end{align}
\end{subequations}
where we introduced two independent sets of zero modes
\beq
    [\partial_{\mathsf{t}},\mathsf{t}_{0}]=[\widetilde{\partial}_{\mathsf{t}},\tilde{\mathsf{t}}_{0}]=1,\quad [\mathsf{t}_{0},\tilde{\mathsf{t}}_{0}]=[\partial_{\mathsf{t}},\widetilde{\partial}]=[\mathsf{t}_{0},\widetilde{\partial}_{\mathsf{t}}]=[\tilde{\mathsf{t}}_{0},\partial_{\mathsf{t}}]=0.\label{eq:zeromodesdef}
\eeq
The normal ordering is defined as 
\beq
{:\partial_{\mathsf{t}}\,\mathsf{t}_{0}:}=\mathsf{t}_{0}\partial_{\mathsf{t}},\quad {:\tilde{\partial}_{\mathsf{t}}\,\tilde{\mathsf{t}}_{0}:}=\tilde{\mathsf{t}}_{0}\tilde{\partial}_{\mathsf{t}}.
\eeq
See Lem.~\ref{app:lem:zero-modes-prf1} for the proof that the above explicit form of zero modes obeys the zero modes conditions.

Under the above conditions, we actually have extra relations 
\beq
    \mathsf{a}_{0}(x)={:\frac{\mathsf{s}_{a,0}(x)}{\mathsf{s}_{a,0}(q_{a}x)}:},\quad \mathsf{x}_{ab,0}(x)={:\frac{\mathsf{w}_{abc,0}(x)}{\mathsf{w}_{abc,0}(q_{c}x)}:},\quad \mathsf{w}_{\bar{a},0}(x)=\tilde{\mathsf{z}}^{q_{a}}_{0}(x)\label{eq:zeromodesrelation}
\eeq
which imply
\beq
    \mathsf{A}(x)={:\frac{\mathsf{S}_{a}(x)}{\mathsf{S}_{a}(q_{a}x)}:},\quad \mathsf{X}_{ab}(x)={:\frac{\mathsf{W}_{abc}(x)}{\mathsf{W}_{abc}(q_{c}x)}:},\quad \mathsf{W}_{\bar{a}}(x)={:\frac{\mathsf{Z}(x)}{\mathsf{Z}(q_{a}x)}:}=\widetilde{\mathsf{Z}}^{q_{a}}(x).\label{eq:oprelation1}
\eeq
See Cor.~\ref{app:cor:zero-modes-prf2} for the derivation of these relations. Note that in our notation, the relation between the $\D2$ and $\D4$ operators are 
\begin{align}
    \mathsf{S}_{a}(x)=\mathsf{s}_{a,0}(x):\frac{\mathsf{X}_{ab}(x)}{\mathsf{X}_{ab}(q_{b}x)}:
\end{align}
where extra zero modes appear in front. Using these relations, we also have 
\begin{subequations}
\begin{align}
\mathsf{A}(x)&=\mathsf{a}_{0}(x):\frac{\mathsf{X}_{ab}(x)\mathsf{X}_{ab}(q_{ab}x)}{\mathsf{X}_{ab}(q_{a}x)\mathsf{X}_{ab}(q_{b}x)}:,\label{eq:oprelationwithD0-D4}\\
&=\mathsf{a}_{0}(x):\frac{\mathsf{W}_{abc}(x)\mathsf{W}_{abc}(q_{ab}x)\mathsf{W}_{abc}(q_{ac}x)\mathsf{W}_{abc}(q_{bc}x)}{\mathsf{W}_{abc}(q_{a}x)\mathsf{W}_{abc}(q_{b}x)\mathsf{W}_{abc}(q_{c}x)\mathsf{W}_{abc}(q_{abc}x)}:\label{eq:oprelationwithD0-D6}\\
&=\mathsf{a}_{0}(x):\frac{\mathsf{Z}(x)^{2}\prod_{a<b}\mathsf{Z}(q_{ab}x)}{\prod_{a\in\four}\mathsf{Z}(q_{a}x)\prod_{a\in\four}\mathsf{Z}(q_{a}^{-1}x)}:\label{eq:oprelationwithD0-D8}
\end{align}\label{eq:oprelationwithD0}
\end{subequations}
The contraction formulas of the D-brane vertex operators obtained after using the explicit form of the zero modes are summarized in Prop.~\ref{app:prop:contraction_formula}

\subsection{Quiver structure and generalizations}\label{sec:quiverstructure}
The vertex operators $\mathsf{A}(x),\mathsf{S}_{a}(x),\mathsf{X}_{A}(x),\mathsf{W}_{\bar{a}}(x),\mathsf{Z}(x)$ are all defined by the commutation relations with the root current $\mathsf{A}(x)$. Let us show that we can understand the commutation relations using \emph{graded quivers} \cite{Oppermann2015QuiversFS, Buan2008ColouredQM,Franco:2017lpa,Closset:2018axq} (see also the references therein) and that the commutation relations have a $q$-Cartan matrix understanding. We also briefly discuss how to generalize our construction to other complicated geometries.

\subsubsection{\texorpdfstring{$\mathbb{C}^{4}$}{C4} geometry}
We first denote $\bfP_{a},\bfP_{A},\bfP_{\bar{a}},\bfP_{\four}\,(a\in\four, A\in\six)$ as $c_{a},c_{A},c_{\bar{a}},c_{1234}$ and call them the \emph{full $q$-Cartan matrix}:
\bea
&c_{a}=1-q_{a},\quad c_{ab}=(1-q_{a})(1-q_{b}),\quad c_{abc}=(1-q_{a})(1-q_{b})(1-q_{c}),\\ &c_{1234}=(1-q_{1})(1-q_{2})(1-q_{3})(1-q_{4}).
\eea
The full $q$-Cartan matrix can be decomposed into the \emph{half $q$-Cartan matrix} $c_{a,A,\bar{a},1234}^{+}$ as 
\bea\label{eq:halfCartan}
&c_{a}=c_{a}^{+}+c_{a}^{-},\quad c_{a}^{+}=1,\quad c_{a}^{-}=-q_{a}c_{a}^{+\vee},\\
&c_{ab}=c_{ab}^{+}+c_{ab}^{-},\quad  c_{ab}^{+}=1-q_{a},\quad c_{ab}^{-}=q_{ab}c_{ab}^{+\vee},\\
&c_{abc}=c_{abc}^{+}+c_{abc}^{-},\quad c_{abc}^{+}=1-q_{a}-q_{b}-q_{c},\quad c_{abc}^{-}=-q_{a}q_{b}q_{c}c_{abc}^{+\vee},\\
&c_{1234}=c_{1234}^{+}+c_{1234}^{-},\quad c_{1234}^{+}=1-\sum_{a\in\four}q_{a}+q_{4}\sum_{i=1}^{3}q_{i},\quad c_{1234}^{-}=c_{1234}^{+\vee}.
\eea
We can write this decomposition in a compact form as 
\bea
c_{\mathcal{S}}=\bfP_{\mathcal{S}}=c_{\mathcal{S}}^{+}+c_{\mathcal{S}}^{-},\quad c_{\mathcal{S}}^{-}=\prod_{i\in\mathcal{S}}(-q_{i})c_{\mathcal{S}}^{+\vee},\quad \mathcal{S}\subseteq \four.
\eea
Note that the decomposition is not unique and is related to how we choose the square root part of the index as what we did in section \ref{sec:equiv-index}. Since each vertex operator corresponds to the D-brane wrapping subspaces $\mathbb{C}_{a},\mathbb{C}^{2}_{A},\mathbb{C}^{3}_{\bar{a}},\mathbb{C}^{4}$, the $q$-Cartan matrices should include the information of these geometries. Let us show that the $q$-Cartan matrices can be read from quivers dual to each of the geometry $\mathbb{C}_{a},\mathbb{C}^{2}_{A},\mathbb{C}^{3}_{\bar{a}},\mathbb{C}^{4}$. To show this, let us first introduce the concept of graded quivers following the convention of \cite[section 2]{Franco:2017lpa,Closset:2018axq}.

\paragraph{Graded quivers} A quiver is denoted $Q=(Q_{0},Q_{1})$, where $Q_{0}$ is a set of nodes and $Q_{1}$ is a set of arrows. For each arrow $I:i\rightarrow j,\,(i,j\in Q_{0})$, we have the source node $s(I)=i$ and the target node $t(I)=j$. 

We fix $m$ to be a nonnegative integer and denote the graded quiver as $\overbar{Q}=(\overbar{Q}_{0},\overbar{Q}_{1})$. The set of nodes does not change $\overbar{Q}_{0}=Q_{0}$ but the set of arrows changes to a set of \emph{graded} arrows. A graded arrow $I:i\rightarrow j$ has a source node $s(I)=i\in Q_{0}$, a target node $t(I)=j\in Q_{0}$ and additionally a degree $|I|$ which is an integer in $\{0,1,\ldots,m\}$. For every graded arrow $I\in \overbar{Q}_{1}$, we also introduce its opposite $I_{\text{op}}:j\rightarrow i$, where the source node and target node are reversed and the degree is given $|I_{\text{op}}|=m-|I|$. Namely, we will focus on a particular kind of graded quiver, such that every arrow has an opposite arrow.

For every node $i\in \overbar{Q}_{0}$, additionally, there is a loop $\ell_{i}$ with $s(\ell_{i})=i$, $t(\ell_{i})=i$ and degree~$-1$. We also introduce the opposite of them as $\bar{\ell}_{i}$ with degree $m+1$. These are the only arrows with degrees not in $\{0,1,\ldots,m\}$ and will be not drawn explicitly in the quiver diagram.

For later use, we denote an arrow from $i$ to $j$ with degree $c$ as 
\beq
\Phi^{(c)}_{ij}:i\rightarrow j
\eeq
and the opposite of it as 
\beq
(\Phi^{(c)}_{ij})_{\text{op}}\equiv \bar{\Phi}^{(m-c)}_{ji}.
\eeq
We then can pair the double arrows as $(\Phi^{(c)}_{ij},\bar{\Phi}_{ji}^{(m-c)})$ and illustrate them as 
\bea\label{eq:doublearrows}
\newcommand{\arrowHeadPosition}{0.7}
\begin{tikzpicture}[decoration={markings,mark=at position \arrowHeadPosition with {\arrow{latex}}}]
    \draw[thick, postaction={decorate}](0,0.2)--(3,0.2);
    \draw[thick, postaction={decorate}](3,-0.2)--(0,-0.2);
    \draw[fill=black!20!white,thick](0,0) circle(0.4cm);
    \node[thick] at (0,0){$i$};
    \draw[fill=black!20!white,thick](3,0) circle (0.4cm);
    \node[thick] at (3,0){$j$};
    \node[above] at (1.5, 0.3){$\Phi^{(c)}_{ij}$};
    \node [below] at (1.5,-0.3){$\bar{\Phi}^{(m-c)}_{ji}$};
\end{tikzpicture}
\eea
Following the terminology in \cite{Franco:2017lpa}, we refer to this pair of arrows as $(c,m-c)$ arrow. The different types of arrows are then labeled by $0\leq c\leq m/2$. 

From the physical viewpoint, the graded quiver contains information of the components included in a quiver gauge theory. The quiver nodes are identified with the gauge groups\footnote{We may assign integers $N_{i}$ to each quiver node and then the gauge group is identified as $\U(N_{i})$.}, the loops with degree $-1,m+1$ are identified with vector multiples, and the arrows with degrees $\{0,1,\ldots,m\}$ are matter fields. Different degrees of arrows represent different matter fields, and thus the graded quiver enables us to describe a larger class of quiver gauge theories in various dimensions. Each arrow connecting quiver nodes $i,j$ will change under the (anti-)fundamental representation of the source and target nodes. Note that a self-loop arrow $\Phi_{ii}^{(c)}$ will be a matter multiplet transforming in the adjoint representation.

We may choose an orientation of the double arrow by drawing either $\Phi_{ij}^{(c)}$ or $\bar{\Phi}^{(m-c)}_{ji}$. In the quiver diagram, only one of them will be explicitly drawn.\footnote{For a detailed discussion of this choice see \cite[section 2]{Franco:2017lpa}.} When $m$ is even, we have a pair of degree $c=m/2$ arrows $(\Phi^{(m/2)}_{ij},\bar{\Phi}^{(m/2)}_{ji})$. These arrows will be drawn in unoriented arrows. Note again that the self-loops $\ell_{i}$ are not explicitly drawn in the quiver diagram.\footnote{In addition to the quiver diagram, we can assign \emph{graded quiver superpotentials} which imposes relations on the path algebra obtained from the graded quiver. When relating the quiver arrows with the $q$-deformation parameters of the $q$-Cartan matrix, they will play important roles, but in this paper, we omit the discussion of it and postpone it for future work.} 

For later use, let us briefly explain the connection between graded quivers and supersymmetric gauge theories. Graded quivers of degree $m$ correspond to $(6-2m)$-dimensional gauge theories with $2^{3-m}$ supercharges. Consider a Type $\IIB$ string theory where $\D(5-2m)$-branes probe the CY $(m+2)$-folds. The Calabi--Yau manifold arises as the classical moduli spaces of the gauge theories \cite{Douglas:1996sw, Franco:2005sm,Franco:2005rj,Franco:2015tya}. Depending on $m$, the arising gauge theory is given as 
\bea\renewcommand{\arraystretch}{1.2}
\begin{array}{|c|c|c|}\hline
    \,m\, & \text{gauge theory} & \text{geometry}  \\
\hline    \,0\, & \text{ 6d $\mathcal{N}=(1,0)$ theory }   &  \text{ D5-branes  probing $\CY_{2}$ }  \\
 \hline   \,1\, &   \text{ 4d $\mathcal{N}=1$ theory } & \text{ D3-branes probing $\CY_{3}$ } \\
 \hline   \,2\, &  \text{ 2d $\mathcal{N}=(0,2)$ theory 
 }&  \text{ D1-branes probing $\CY_{4}$ }\\\hline
\end{array}
\eea

\begin{itemize}
    \item $m=2$ case: The $m=2$ graded quivers correspond to 2d $\mathcal{N}=(0,2)$ quiver gauge theories. The graded arrows are described in $(c,m-c)$ double arrows with $0\leq c\leq m/2$. In this case, we have $(0,2)$ and $(1,1)$ double arrows. From the gauge theoretic viewpoint, we have vector superfields, chiral superfields, and the Fermi superfields. The vector superfields are associated with each node of the quiver. The chiral (Fermi) superfields correspond with the degree $c=0$ ($c=1$) arrows of the graded quiver. We denote the chiral superfields as $X_{ij}$ and they are identified with the arrows as $(X_{ij},\overline{X}_{ij})\leftrightarrow (0,2)$ arrow. The Fermi superfields $\Lambda_{ij}$ are identified with the arrows as $(\Lambda_{ij},\bar{\Lambda}_{ij})\leftrightarrow (1,1)$ arrow. The quiver diagram is then described as 
    \bea
\newcommand{\arrowHeadPosition}{0.7}
\begin{tikzpicture}[decoration={markings,mark=at position \arrowHeadPosition with {\arrow{latex}}}]
\begin{scope}
    \draw[thick, postaction={decorate}](0,0.2)--(3,0.2);
    \draw[thick, postaction={decorate}](3,-0.2)--(0,-0.2);
    \draw[fill=black!20!white,thick](0,0) circle(0.4cm);
    \node[thick] at (0,0){$i$};
    \draw[fill=black!20!white,thick](3,0) circle (0.4cm);
    \node[thick] at (3,0){$j$};
    \node[above] at (1.5, 0.3){$(0)$};
    \node [below] at (1.5,-0.3){$(2)$};
\end{scope}
\begin{scope}[yshift=-2cm]
    \draw[thick, postaction={decorate}](0,0.2)--(3,0.2);
    \draw[thick, postaction={decorate}](3,-0.2)--(0,-0.2);
    \draw[fill=black!20!white,thick](0,0) circle(0.4cm);
    \node[thick] at (0,0){$i$};
    \draw[fill=black!20!white,thick](3,0) circle (0.4cm);
    \node[thick] at (3,0){$j$};
    \node[above] at (1.5, 0.3){$(1)$};
    \node [below] at (1.5,-0.3){$(1)$};
\end{scope}
\begin{scope}[xshift=4cm]
\draw[ultra thick,<->] (0,0)--(1,0);
\end{scope}
\begin{scope}[xshift=4cm, yshift=-2cm]
\draw[ultra thick,<->] (0,0)--(1,0);
\end{scope}
\begin{scope}[xshift=6cm]
    \draw[thick, postaction={decorate}](0,0)--(3,0);
    \draw[fill=black!20!white,thick](0,0) circle(0.4cm);
    \node[thick] at (0,0){$i$};
    \draw[fill=black!20!white,thick](3,0) circle (0.4cm);
    \node[thick] at (3,0){$j$};
    \node[above] at (1.5, 0){$X_{ij}$};
\end{scope}
\begin{scope}[xshift=6cm,yshift=-2cm]
    \draw[thick,red](0,0)--(3,0);
    \draw[fill=black!20!white,thick](0,0) circle(0.4cm);
    \node[thick] at (0,0){$i$};
    \draw[fill=black!20!white,thick](3,0) circle (0.4cm);
    \node[thick] at (3,0){$j$};
    \node[above] at (1.5, 0){$\Lambda_{ij}$};
\end{scope}
\end{tikzpicture}
\eea
The left diagram is the quiver diagram with double arrows as \eqref{eq:doublearrows}, while the right diagram is the quiver diagram with single arrows. The Fermi superfields are drawn in unoriented arrows because the degrees are self-conjugate. For examples of quivers having multiple nodes, see \cite{Franco:2015tya,Franco:2015tna}. Quivers of these types are denoted as $\overbar{Q}=(\overbar{Q}_{0},\overbar{Q}_{1})$ in later discussions.

\item $m=1$ case: The $m=1$ graded quivers correspond to 4d $\mathcal{N}=1$ quiver gauge theories. We only have one type of double arrows which is $(0,1)$. They correspond to the chiral superfields of the gauge theory which we denote $X_{ij}$. The quiver diagram is illustrated as 
\bea
\newcommand{\arrowHeadPosition}{0.7}
\begin{tikzpicture}[decoration={markings,mark=at position \arrowHeadPosition with {\arrow{latex}}}]
\begin{scope}
    \draw[thick, postaction={decorate}](0,0.2)--(3,0.2);
    \draw[thick, postaction={decorate}](3,-0.2)--(0,-0.2);
    \draw[fill=black!20!white,thick](0,0) circle(0.4cm);
    \node[thick] at (0,0){$i$};
    \draw[fill=black!20!white,thick](3,0) circle (0.4cm);
    \node[thick] at (3,0){$j$};
    \node[above] at (1.5, 0.3){$(0)$};
    \node [below] at (1.5,-0.3){$(1)$};
\end{scope}
\begin{scope}[xshift=4cm]
\draw[ultra thick,<->] (0,0)--(1,0);
\end{scope}
\begin{scope}[xshift=6cm]
    \draw[thick, postaction={decorate}](0,0)--(3,0);
    \draw[fill=black!20!white,thick](0,0) circle(0.4cm);
    \node[thick] at (0,0){$i$};
    \draw[fill=black!20!white,thick](3,0) circle (0.4cm);
    \node[thick] at (3,0){$j$};
    \node[above] at (1.5, 0){$X_{ij}$};
\end{scope}
\end{tikzpicture}
\eea    
Some examples are
\bea
\newcommand{\arrowHeadPosition}{0.7}
\begin{tikzpicture}[decoration={markings,mark=at position \arrowHeadPosition with {\arrow{latex}}}]
\begin{scope}
    \draw[postaction={decorate}, black,thick,scale=1.3] (0,0) arc(270:-90:0.20 and 0.6) ;
    \draw[postaction={decorate}, black,thick,scale=1.3,rotate=-120] (0,0) arc(270:-90:0.20 and 0.6) ;
    \draw[postaction={decorate}, black,thick,scale=1.3,rotate=120] (0,0) arc(270:-90:0.20 and 0.6) ;
    \draw[fill=black!20!white,thick](0,0) circle(0.3cm);
    \node at (0,-1.5){$\mathbb{C}^{3}$};
\end{scope}
\begin{scope}[xshift=4cm]
    \draw[postaction={decorate}, black,thick,scale=1.3] (0.65,0) arc(0:-180:0.65 and 0.3) ;
    \draw[postaction={decorate}, black,thick,scale=1.3] (0.75,0) arc(0:-180:0.75 and 0.4) ;
    \draw[postaction={decorate}, black,thick,scale=1.3] (-0.65,0) arc(180:0:0.65 and 0.3) ;
    \draw[postaction={decorate}, black,thick,scale=1.3] (-0.75,0) arc(180:0:0.75 and 0.4) ;
    \draw[postaction={decorate}, black,thick,scale=1.3] (-0.8,0) arc(360:0:0.4 and 0.3) ;
    \draw[postaction={decorate}, black,thick,scale=1.3] (0.8,0) arc(-180:180:0.4 and 0.3) ;
    \draw[fill=black!20!white,thick](-0.95,0) circle(0.3cm);
    \draw[fill=black!20!white,thick](0.9,0) circle(0.3cm);
    \node at (0,-1.5){$\mathbb{C}^{2}/\mathbb{Z}_{2}\times \mathbb{C}$};
\end{scope}
\begin{scope}[xshift=8cm]
    \draw[postaction={decorate}, black,thick,scale=1.3] (0.65,0) arc(0:-180:0.65 and 0.3) ;
    \draw[postaction={decorate}, black,thick,scale=1.3] (0.75,0) arc(0:-180:0.75 and 0.4) ;
    \draw[postaction={decorate}, black,thick,scale=1.3] (-0.65,0) arc(180:0:0.65 and 0.3) ;
    \draw[postaction={decorate}, black,thick,scale=1.3] (-0.75,0) arc(180:0:0.75 and 0.4) ;
    \draw[fill=black!20!white,thick](-0.95,0) circle(0.3cm);
    \draw[fill=black!20!white,thick](0.9,0) circle(0.3cm);
    \node at (0,-1.5){conifold};
\end{scope}
\end{tikzpicture}
\eea
Such kind of quivers are denoted as $Q=(Q_{0},Q_{1})$ in later sections.

\item  $m=0$ case: The $m=0$ graded quivers correspond to 6d $\mathcal{N}=(1,0)$ theories. The nodes correspond to the vector supermultiplets, while the $(0,0)$ double arrow corresponds to the hypermultiplets. The discussion for the tensor multiplets is omitted. The quiver diagram is described as
\bea
\newcommand{\arrowHeadPosition}{0.7}
\begin{tikzpicture}[decoration={markings,mark=at position \arrowHeadPosition with {\arrow{latex}}}]
\begin{scope}
    \draw[thick, postaction={decorate}](0,0.2)--(3,0.2);
    \draw[thick, postaction={decorate}](3,-0.2)--(0,-0.2);
    \draw[fill=black!20!white,thick](0,0) circle(0.4cm);
    \node[thick] at (0,0){$i$};
    \draw[fill=black!20!white,thick](3,0) circle (0.4cm);
    \node[thick] at (3,0){$j$};
    \node[above] at (1.5, 0.3){$(0)$};
    \node [below] at (1.5,-0.3){$(0)$};
\end{scope}
\begin{scope}[xshift=4cm]
\draw[ultra thick,<->] (0,0)--(1,0);
\end{scope}
\begin{scope}[xshift=6cm]
    \draw[thick](0,0)--(3,0);
    \draw[fill=black!20!white,thick](0,0) circle(0.4cm);
    \node[thick] at (0,0){$i$};
    \draw[fill=black!20!white,thick](3,0) circle (0.4cm);
    \node[thick] at (3,0){$j$};
    \node[above] at (1.5, 0){$X_{ij}$};
\end{scope}
\end{tikzpicture}
\eea    
where we denoted the hypermultiplets using $X_{ij}$.
\end{itemize}

Some examples of these quivers are  
\bea
\begin{tikzpicture}[scale=0.8]
\begin{scope}[scale=0.9]
    \draw[black,thick] (0,0) arc(-90:270:0.6 and 1.0) ;
    \draw[fill=black!20!white,thick](0,0) circle(0.3cm);
    \node at (0,-5){$\mathbb{C}^{2}$ ($\widehat{A}_{0}$)};
\end{scope}
\begin{scope}[scale=0.9,xshift=8cm]
\foreach \ang\lab in {90/1,45/2,180/{n-1},135/n}{
  \draw[-,thick] ($(0,0)+(\ang:2.9)$) arc (\ang:\ang-45:2.9);
}
\draw[-,shorten <=7pt,thick] ($(0,0)+(320:2.9)$) arc (320:355:2.9);
\draw[-,shorten >=7pt,thick] ($(0,0)+(305:2.9)$) arc (305:270:2.9);
\draw[-,shorten <=7pt,thick] ($(0,0)+(270:2.9)$) arc (270:235:2.9);
\draw[-,shorten >=7pt,thick] ($(0,0)+(215:2.9)$) arc (215:180:2.9);
\foreach \ang in {310,315,320,220,225,230}{
  \draw[fill=black] ($(0,0)+(\ang:2.9)$) circle (.02);
}
\foreach \ang\lab\anch in {90/1/north, 45/2/{north east}, 0/3/east, 270/i/south, 180/{n-1}/west, 135/n/{north west}}{
  \draw[fill=black!20!white,thick] ($(0,0)+(\ang:2.9)$) circle (.28cm);
}
\node at (0,-5){$\mathbb{C}^{2}/\mathbb{Z}_{n}$ ($\widehat{A}_{n-1}$)};
\end{scope}
\end{tikzpicture}
\eea
The quivers coming from the toric Calabi--Yau two-fold are the affine A-type Dynkin quivers. Note that we can also consider affine D and E-type quivers though they are not toric. Moreover, by decoupling one of the arrows of the quivers, we can also consider finite-type A, D, and E Dynkin quivers. 

These types of quivers coming from $m=0$ are denoted as $\Gamma=(\Gamma_{0},\Gamma_{1})$ or $\Upsilon=(\Upsilon_{0},\Upsilon_{1})$ in later sections.

\paragraph{Gauge origami viewpoint} The gauge origami system can be understood a 2d $\mathcal{N}=(0,2)$ quiver gauge theory with flavor branes. Consider the type IIB theory on $Z\times \mathcal{C}$ where $Z$ is a toric $\CY_{4}$ and $\mathcal{C}=\mathbb{C},\,\mathbb{C}^{\times},\,\mathbb{T}^{2}$. Consider a situation where there are multiple D1-branes\footnote{We may add fractional D1-branes to make the gauge group for each quiver node differ.} wrapping $\mathcal{C}$ probing $Z$. The arising theory is a 2d $\mathcal{N}=(0,2)$ quiver gauge theory. We then consider D3, D5, D7, D9-branes wrapping non-compact cycles of $Z$ while preserving two supersymmetries. These branes will play the role of flavor branes from the D1-brane theory viewpoint. On the other hand, from the viewpoint of the theory inside $Z$, we have a generalized gauge theory where the D-branes inside $Z$ intersect with each other, and the D1-branes play the role of the instantons of the theory. Depending on $\mathcal{C}$, we can study the dimensional reduction (matrix models/supersymmetric quantum mechanics/field theory), and the gauge origami partition function is the index of the system. When $Z=\mathbb{C}^{4}$, this is obvious from the discussion in section~\ref{sec:physicalsetup}. Taking T-duality of the setups in \eqref{eq:spikedhierarchy}, \eqref{eq:tetrahedronhierarchy}, \eqref{eq:m4hierarchy}, \eqref{eq:cplvorthierarchy}, for all cases, we indeed have a D1-brane theory with flavor branes wrapping the subspaces of $\mathbb{C}^{4}$.

\paragraph{$\mathbb{C}^{4}$-case} 
Let us consider the case when $\CY_{4}=\mathbb{C}^{4}_{1234}$. The dual gauge theory is just the maximally supersymmetric Yang--Mills in 2d which is the 2d $\mathcal{N}=(8,8)$ SYM. This is because placing the D1-brane in the transverse direction of $\mathbb{C}^{4}$ will give a SYM theory with 16 supercharges. Note that it can be obtained by a dimensional reduction of the 4d $\mathcal{N}=4$ SYM. In the 2d $\mathcal{N}=(0,2)$ language, we have one vector superfield, four adjoint chiral superfields $X_{1},X_{2},X_{3},X_{4}$, and three Fermi superfields $\Lambda_{1},\Lambda_{2},\Lambda_{3}$. The quiver diagram is then described as 
\bea
\newcommand{\arrowHeadPosition}{0.7}
\begin{tikzpicture}[decoration={markings,mark=at position \arrowHeadPosition with {\arrow{latex}}}]
\begin{scope}
    \draw[postaction={decorate}, black,thick,scale=1.3] (0,0) arc(360:0:0.80 and 0.4) ;
    \draw[postaction={decorate}, black,thick,scale=1.3] (0,0) arc(360:0:0.70 and 0.3) ;
    \draw[postaction={decorate}, black,thick,scale=1.3] (0,0) arc(360:0:0.60 and 0.2) ;
    \draw[postaction={decorate}, black,thick,scale=1.3] (0,0) arc(360:0:0.50 and 0.1) ;
    \draw[red,thick,scale=1.3] (0,0) arc(180:-180:0.80 and 0.4) ;
    \draw[red,thick,scale=1.3] (0,0) arc(180:-180:0.70 and 0.3) ;
    \draw[red,thick,scale=1.3] (0,0) arc(180:-180:0.60 and 0.2) ;
    \draw[fill=black!20!white,thick](0,0) circle(0.4cm);
    \node [thick,left] at (-2.3,0) {$X_{1},X_{2},X_{3},X_{4}$};
    \node [thick,right,red] at (2.3,0) {$\Lambda_{1},\Lambda_{2},\Lambda_{3}$};
\end{scope}
\end{tikzpicture}
\eea
Note that the chiral superfields $X_{1},X_{2},X_{3},X_{4}$ are identified with the four complex coordinates of the transverse $\mathbb{C}_{1234}^{4}$.

After identifying the superfields with the commutative parameters in the $q$-Cartan matrix, we can define the half $q$-Cartan matrix of this quiver as
\bea
c_{1234}^{+}=1-(X_{1}+X_{2}+X_{3}+X_{4})+\textcolor{red}{(\Lambda_{1}+\Lambda_{2}+\Lambda_{3})}.
\eea
The term $1$ corresponds with the vector superfield, the terms $X_{i}\,(i=1,2,3,4)$ correspond with the chiral superfields, and the red term $\Lambda_{i}\,(i=1,2,3)$ corresponds with the Fermi superfields. 

To relate with \eqref{eq:halfCartan}, we need the identification
\bea\label{eq:2dquiverident}
&X_{1}\leftrightarrow q_{1},\quad X_{2}\leftrightarrow q_{2},\quad X_{3}\leftrightarrow q_{3},\quad X_{4}\leftrightarrow q_{4},\\
&\Lambda_{1}\leftrightarrow q_{1}q_{4},\quad \Lambda_{2}\leftrightarrow q_{2}q_{4},\quad \Lambda_{3}\leftrightarrow q_{3}q_{4}.
\eea
For the conjugated fields such as $\bar{\Lambda}_{1}$, the parameter $q_{1}^{-1}q_{4}^{-1}$ is assigned.

Actually, this identification can be understood from the superpotential of the theory. The potential of the 2d $\mathcal{N}=(0,2)$ theories takes the form 
\bea
W=\sum_{a}\left(\Lambda_{a}J_{a}(X)+\bar{\Lambda}_{a}E_{a}(x)\right)
\eea
where $a$ runs all the Fermi fields. For the $\mathbb{C}^{4}$ case, the $J$ and $E$-terms are 
\bea
J \qquad \qquad &\qquad\qquad  E\\
\Lambda_{1}:X_{2}X_{3}-X_{3}X_{2}\quad  & \quad  X_{4}X_{1}-X_{1}X_{4}\\
\Lambda_{2}:X_{3}X_{1}-X_{1}X_{3}\quad  & \quad  X_{4}X_{2}-X_{2}X_{4}\\
\Lambda_{3}:X_{1}X_{2}-X_{2}X_{1}\quad  & \quad X_{4}X_{3}-X_{3}X_{4}.
\eea
The vacuum of this potential is obtained from the vanishing $J$ and $E$-terms which come from $\partial W/\partial \Lambda_{a}=0$. This is called the F-term condition.

Physically, the parameters $q_{i}\,(i=1,2,3,4)$ are identified with the $\U(1)^{4} \subset \mathrm{Spin}(8)$ rotational symmetries of the $\mathbb{C}^{4}$. For the vacuum defined from $J_{a}=E_{a}=0$ to be invariant under the rotational symmetry, we need $q_{i}q_{j}=q_{j}q_{i}$. Each monomial term of the $W$ also needs to be invariant under this symmetry. For example, from $\Lambda_{1}X_{2}X_{3}$ we can see that the charge of $\Lambda_{1}$ should be $q_{2}^{-1}q_{3}^{-1}$. Then, the charge of $\bar{\Lambda}_{1}$ will be $q_{2}q_{3}$ and for the term $\bar{\Lambda}_{1}X_{4}X_{1}$ coming from the $E$-term to be invariant, we need $q_{2}q_{3}q_{1}q_{4}=1$. Other terms will give the same condition and this is the Calabi--Yau condition imposed on the $q$-deformation parameters.

The other half $q$-Cartan is obtained as 
\bea
c^{-}_{1234}=1-\sum_{a\in\four}q_{a}^{-1}+q_{4}^{-1}\sum_{i=1}^{3}q_{i}^{-1}.
\eea
Obviously, looking at the parameters assigned, we can see that the second term corresponds with the conjugate of the chiral superfields, while the third term corresponds with the conjugate of the Fermi superfields. Therefore, we can say that the total $q$-Cartan matrix $c_{1234}=c^{+}_{1234}+c^{-}_{1234}$ is associated with the graded quiver illustrated in the double arrows notation as in \eqref{eq:doublearrows}.

\paragraph{$\mathbb{C}^{3}$-case}
Let us move on to the $\mathbb{C}^{3}$-case. The dual gauge theory is the maximally supersymmetric Yang--Mills in 4d which is the 4d $\mathcal{N}=4$ theory. In the 4d $\mathcal{N}=1$ language, there are one vector superfield and three adjoint chiral superfields $X'_{1},X'_{2},X'_{3}$. The quiver diagram is described as 
\bea
\newcommand{\arrowHeadPosition}{0.7}
\begin{tikzpicture}[decoration={markings,mark=at position \arrowHeadPosition with {\arrow{latex}}}]
\begin{scope}
    \draw[postaction={decorate}, black,thick,scale=1.3] (0,0) arc(360:0:0.80 and 0.4) ;
    \draw[postaction={decorate}, black,thick,scale=1.3] (0,0) arc(360:0:0.70 and 0.3) ;
    \draw[postaction={decorate}, black,thick,scale=1.3] (0,0) arc(360:0:0.60 and 0.2) ;
    \draw[fill=black!20!white,thick](0,0) circle(0.4cm);
    \node [thick,left] at (-2.3,0) {$X'_{1},X'_{2},X'_{3}$};
\end{scope}
\end{tikzpicture}
\eea
Given this quiver diagram, we can read the half $q$-Cartan matrix as 
\bea
c_{\mathbb{C}^{3}}^{+}=1-(X'_{1}+X'_{2}+X'_{3}).
\eea
The three chiral superfields are identified with the three complex coordinates of the transverse $\mathbb{C}^{3}$. Depending on which subspace $\mathbb{C}^{3}_{\bar{a}}\,(a=1,2,3,4)$ we are considering, the identification with the $q$-deformation parameters will be different:
\bea
\mathbb{C}^{3}=\mathbb{C}^{3}_{123}:\quad c_{123}^{+}=1-(q_{1}+q_{2}+q_{3}),\\
\mathbb{C}^{3}=\mathbb{C}^{3}_{124}:\quad c_{124}^{+}=1-(q_{1}+q_{2}+q_{4}),\\
\mathbb{C}^{3}=\mathbb{C}^{3}_{134}:\quad c_{134}^{+}=1-(q_{1}+q_{3}+q_{4}),\\
\mathbb{C}^{3}=\mathbb{C}^{3}_{234}:\quad c_{234}^{+}=1-(q_{2}+q_{3}+q_{4}).
\eea

Since, in the gauge origami system, we are not imposing the Calabi--Yau condition on $\mathbb{C}^{3}$ but only on $\mathbb{C}^{4}$, the three $q$-deformation parameters of the half $q$-Cartan matrix are independent. In other words, we start from $\mathbb{C}^{4}$ and impose the CY condition only on the 2d $\mathcal{N}=(0,2)$ graded quiver. The 4d $\mathcal{N}=1$ quivers are understood as a subquiver of the CY$_{4}$ quiver and with $q$-parameters associated. We then read the $q$-Cartan matrix from it. 

The other half $q$-Cartan matrix and the total $q$-Cartan matrix can be understood similarly but we need to be careful about how we assign the $\U(1)$ charges since we are not imposing the CY condition on $\mathbb{C}^{3}$. We will not discuss it in this paper. A detailed discussion of this part will be done in \cite{Kimura-Noshita}.

Let us comment on what will happen when we impose the CY condition on $\mathbb{C}^{3}_{123}$. The superpotential of the theory is given by 
\bea
W=X'_{1}[X'_{2},X'_{3}].
\eea
After identifying the chiral superfields as $X'_{i}\leftrightarrow \mathsf{q}_{i}\,\,(i=1,2,3)$, a similar analysis as the 2d case shows that we need the conditions $\mathsf{q}_{i}\mathsf{q}_{j}=\mathsf{q}_{j}\mathsf{q}_{i}$ and $\mathsf{q}_{1}\mathsf{q}_{2}\mathsf{q}_{3}=1$. Therefore, we only have two independent parameters. Actually, such kind of parameter assignment was done similarly in \cite{Li:2020rij,Galakhov:2020vyb} where the authors defined the quiver Yangian using melting crystal methods \cite{Ooguri:2009ijd,Iqbal:2003ds,Iqbal:2007ii,Aganagic:2003db,Okounkov:2003sp} and brane tiling techniques \cite{Douglas:1996sw, Kennaway:2007tq,Hanany:1997tb,Franco:2005rj,Hanany:2005ss,Ooguri:2009ijd}. Discussion on the relationship with such algebras will be discussed in section~\ref{sec:toroidal_alg}.

\paragraph{$\mathbb{C}^{2}$-case}
The dual gauge theory is the 6d $\mathcal{N}=(1,0)$ SYM. The quiver is a single node with a single unoriented arrow. The node corresponds to the vector multiplet, while the unoriented arrow corresponds to the hypermultiplet denoted as $\Phi$:
\bea
\newcommand{\arrowHeadPosition}{0.7}
\begin{tikzpicture}[decoration={markings,mark=at position \arrowHeadPosition with {\arrow{latex}}}]
\begin{scope}
    \draw[ black,thick,scale=1.3] (0,0) arc(360:0:0.80 and 0.4) ;
    \draw[fill=black!20!white,thick](0,0) circle(0.4cm);
    \node [thick,left] at (-2.3,0) {$\Phi$};
\end{scope}
\end{tikzpicture}
\eea
The hypermultiplet corresponds to either the two complex coordinates of the transverse $\mathbb{C}^{2}$. Let us consider the case that the transverse geometry is $\mathbb{C}_{12}^{2}$. We then can identify the half $q$-Cartan matrix as 
\bea
c_{12}^{+}=1-q_{1}.
\eea
Note for this case, since the arrow is unoriented, it is arbitrary to choose either $c_{12}^{+}=1-q_{1}$ or $c_{12}^{+}=1-q_{2}$ to be the half $q$-Cartan matrix. 

Depending on which subspace $\mathbb{C}^{2}_{A}\,(A\in \six)$ we are considering, we have the following six types of $q$-Cartan matrices:
\bea
\mathbb{C}^{2}=\mathbb{C}^{2}_{ab}:\quad c_{ab}^{+}=1-q_{a},
\eea
where we chose one of the indices to define the $q$-Cartan matrix.

We also note that when we impose the Calabi--Yau condition on the $\mathbb{C}^{2}$ part, we have only one deformation parameter and the total $q$-Cartan matrix is $c=(1-\mathsf{q})(1-\mathsf{q}^{-1})$. This corresponds to the unrefined limit in the context of 4d (5d) gauge theory.

\paragraph{$\mathbb{C}$-case} For this case, we do not know the corresponding gauge theory and the gauge theoretic origin. However, from the analogy of the discussions before, we expect the quiver should be drawn with one node and no arrows:
\bea
\adjustbox{valign=c}{\begin{tikzpicture}
\begin{scope}
    \draw[fill=black!20!white,thick](0,0) circle(0.4cm);
\end{scope}
\end{tikzpicture}}
\eea
We assign the half $q$-Cartan matrix to this quiver as $c^{+}=1$.

\paragraph{Quiver to algebra}
Given the quiver structure and the corresponding $q$-Cartan matrix of each subspace of $\mathbb{C}^{4}$, we can construct the vertex operators introduced in sections \ref{sec:M4LMNS}, \ref{sec:tetraLMNS}, \ref{sec:spikedLMNS}, \ref{sec:cplvortLMNS}. The $q$-Cartan matrix dual to the entire $\mathbb{C}^{4}$ geometry gives the commutation relations of the root current:
\bea
\relax [\mathsf{a}_{n},\mathsf{a}_{m}]=-\frac{1}{n}\delta_{n+m,0}c_{1234}^{[n]}.
\eea
For general toric CY$_{4}$, the modes of the root currents will be modified as $\mathsf{a}_{n}\rightarrow \{\mathsf{a}_{i,n}\}_{i\in \overbar{Q}_{0}}$ where $\overbar{Q}_{0}$ is the set of nodes of the corresponding quiver (see the next subsection) and the right-hand side of the commutation relations will be the $q$-Cartan matrix. Modes of other vertex operators corresponding to a subspace $\mathcal{S}$ comes from 
\bea
\mathsf{a}_{n}=\mathcal{A}_{n}c_{\mathcal{S}}^{[-n]}
\eea
where $c_{\mathcal{S}}$ is the corresponding $q$-Cartan matrix. In this sense, the definition of all of the vertex operators just comes from the corresponding graded quiver $q$-Cartan matrix, and thus it is a generalization of the quiver W-algebra introduced in \cite{Kimura:2015rgi}.

\subsubsection{CY$_{4}$}\label{sec:CY4}
Let us study what will happen for toric Calabi--Yau four-folds. Let $Z$ be a toric CY$_{4}$ and $\overbar{Q}=(\overbar{Q}_{0},\overbar{Q}_{1})$ the corresponding graded quiver. The dual quiver is classified\footnote{Such theories can be compactly summarized in a 3d model called brane brick models \cite{Franco:2015tna,Franco:2015tya,Franco:2016nwv,Franco:2017cjj,Franco:2017lpa} which are generalizations of brane box models \cite{Garcia-Compean:1998sla,Garcia-Compean:1998sla}. These brane brick models are generalizations of the brane tilings \cite{Douglas:1996sw, Kennaway:2007tq,Hanany:1997tb,Franco:2005rj,Hanany:2005ss,Ooguri:2009ijd,Franco:2005sm} used to describe 4d $\mathcal{N}=1$ quiver gauge theories dual to toric Calabi--Yau three-folds (see \cite{Yamazaki:2008bt} for a review and references).} by 2d $\mathcal{N}=(0,2)$ quiver gauge theories. The oriented arrows of this graded quiver correspond to chiral superfields while the unoriented arrows correspond to Fermi superfields. The isometry group of a toric CY$_{4}$ contains $\U(1)^{4}$ and the superfields have charges of them. For each matter superfield, we can assign the $\U(1)^{4}$ charges of the toric CY$_{4}$. One linear combination of them corresponds to the $\U(1)$ R-symmetry of the 2d theory while the left non-R $\U(1)^{3}$ symmetries are called the \emph{mesonic flavor symmetry} \cite{Franco:2015tna,Franco:2015tya,Franco:2016nwv,Franco:2017cjj,Franco:2017lpa}. For each superfield, we associate the $q$-deformation parameters corresponding to this mesonic flavor symmetry. Namely, if we have a superfield with a $\U(1)^{3}$ charge $(a,b,c)$, we associate $q_{1}^{a}q_{2}^{b}q_{3}^{c}$, where $q_{1},q_{2},q_{3}$ represent the three independent $\U(1)$ charges. 

These $q$-deformation parameters associated with the arrows are denoted as $\{q_{I}\}_{I\in \overbar{Q}_{1}}$. We denote the set of oriented arrows (chiral superfields) as $\overbar{Q}_{1}^{(0)}$ and the set of unoriented arrows (Fermi superfields) as $\overbar{Q}_{1}^{(1)}$. Under this decomposition, the $q$-deformation parameters are decomposed into $\{q_{I}\}_{\overbar{Q}_{1}}=\{q_{I}^{(0)}\}_{I\in \overbar{Q}_{1}^{(0)}}\cup \{q_{I}^{(1)}\}_{I\in \overbar{Q}_{1}^{(1)}}$.

The half $q$-Cartan matrix is then given as 
\bea
c_{Z,ij}^{+}=\delta_{ij}-\sum_{I\in\{j\rightarrow i\}}q_{I}^{(0)}+\sum_{I\in \{j\rightarrow i\}}q_{I}^{(1)}.
\eea
The total $q$-Cartan matrix will be 
\bea
c_{Z,ij}&=2\delta_{ij}-\sum_{I\in\{j\rightarrow i\}}q_{I}^{(0)}+\sum_{I\in \{j\rightarrow i\}}q_{I}^{(1)}\\
&\quad -\sum_{I\in\{i\rightarrow j\}}{q_{I}^{(0)}}^{-1}+\sum_{I\in \{i\rightarrow j\}}{q_{I}^{(1)}}^{-1}
\eea
which obeys $c_{Z,ij}^{\vee}=c_{Z,ji}$. We can also obtain the structure function as 
\bea
\mathcal{A}_{Z,ij}(x)=\mathbb{I}[-c_{Z,ij}^{\vee}x^{\vee}]=\frac{\prod\limits_{I\in\{j\rightarrow i\}}(1-q_{I}^{(0)}x)\prod\limits_{I\in\{i\rightarrow j\}}(1-{q_{I}^{(0)}}^{-1}x)}{(1-x)^{2\delta_{ij}}\prod\limits_{I\in\{j\rightarrow i\}}(1-q_{I}^{(1)}x)\prod\limits_{I\in\{i\rightarrow j\}}(1-{q_{I}^{(1)}}^{-1}x)}.
\eea

Let us move on to the operator formalism. When considering general toric CY$_{4}$, we have multiple quiver nodes, and thus the operators will be labeled by the quiver nodes. Based on the discussion of the $\mathbb{C}^{4}$ case, the D0-brane operator (root current) is defined as
\bea\label{eq:CY4D0vertexop}
\mathsf{A}_{i}(x)=\mathsf{a}_{i,0}(x):\exp\left(\sum_{n\neq 0}\mathsf{a}_{i,n}x^{-n}\right):,\quad [\mathsf{a}_{i,n},\mathsf{a}_{j,m}]=-\frac{1}{n}\delta_{n+m,0}c_{Z,ij}^{[n]},
\eea
where $i\in \overbar{Q}_{0}$.
 We can also introduce a $\D8$-brane wrapping CY$_{4}$ whose corresponding vertex operator is defined as 
 \bea\label{eq:CY4D8vertexop}
\mathsf{Z}_{i}(x)=\mathsf{z}_{i,0}(x):\exp\left(\sum_{n\neq 0}\mathsf{z}_{i,n}x^{-n}\right):,\quad \mathsf{a}_{i,n}=\sum_{j\in\overbar{Q}_{0}}\mathsf{z}_{j,n}c_{Z,ji}^{[-n]}.
 \eea
We omit the discussion of the zero-modes. A detailed discussion will be done in a future publication.

The operator product formulas are given as
\bea
\mathsf{A}_{i}(x)\mathsf{A}_{j}(x')&=\wick{\c{\mathsf{a}_{i,0}(x)}\c{\mathsf{a}_{j,0}(x')}}\mathcal{A}_{Z,ij}\left(x'/x\right)^{-1}:\mathsf{A}_{i}(x)\mathsf{A}_{j}(x'):,\\
\mathsf{Z}_{i}(x)\mathsf{A}_{j}(x')&=\wick{\c{\mathsf{z}_{i,0}(x)}\c{\mathsf{a}_{j,0}(x')}}\left(1-x'/x\right)^{\delta_{ij}}:\mathsf{Z}_{i}(x)\mathsf{A}_{j}(x'):,\\
\mathsf{A}_{j}(x')\mathsf{Z}_{i}(x)&=\wick{\c{\mathsf{a}_{j,0}(x')}\c{\mathsf{z}_{i,0}(x)}}\left(1-x/x'\right)^{\delta_{ij}}:\mathsf{Z}_{i}(x)\mathsf{A}_{j}(x'):.
\eea
Using them, we can define a similar contour integral formula as the magnificent four system which we propose to be the generalization of the magnificent four system to general toric CY$_{4}$.
\begin{conjecture}[\cite{Kimura-Noshita}]\label{conj:CY4}
    Let $Z$ be a toric Calabi--Yau four-fold. We denote the corresponding graded quiver as $\overbar{Q}=(\overbar{Q}_{0},\overbar{Q}_{1})$ and the associated $q$-deformation parameters as $\{q_{I}\}_{I\in \overbar{Q}_{1}}$. We then have the following.
    \begin{enumerate}[topsep=1pt,itemsep=-1ex,partopsep=1ex,parsep=1ex]
        \item After imposing suitable conditions, the $q$-deformation parameters reduce up to three independent parameters.
        \item The corresponding D0 and D8-brane operators are defined as in \eqref{eq:CY4D0vertexop} and \eqref{eq:CY4D8vertexop}.
        \item The partition function where there are D8-branes wrapping the entire $Z$ is proportional to
        \bea
        \frac{1}{\underline{k}!}\oint \prod_{i\in \overbar{Q}_{0}}\prod_{I=1}^{k_{i}}\frac{dx_{i,I}}{2\pi\iota}\langle\mathsf{A}_{\underline{k}}^{-1}\mathsf{Z}_{\underline{n}}\rangle
        \eea
        where 
        \bea
\mathsf{A}_{\underline{k}}=\prod_{i\in\overbar{Q}_{0}}\prod_{I=1}^{k_{i}}\mathsf{A}_{i}(x_{i,I}),\quad \mathsf{Z}_{\underline{n}}={:\prod_{i\in \overbar{Q}_{0}}\prod_{\alpha=1}^{n_{i}}\mathsf{Z}_{i}(v_{i,\alpha}):}.
        \eea
        This gives the BPS/CFT correspondence of this setup.
        \item The poles are classified by 4d analogs of the 3d BPS crystals \cite{Ooguri:2009ijd}.
    \end{enumerate}
\end{conjecture}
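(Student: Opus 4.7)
The plan is to promote the single-node analysis of $\mathbb{C}^4$ (sections~\ref{sec:M4partitionfunction} and~\ref{sec:M4LMNS}) to the multi-node graded quiver $\overbar{Q}=(\overbar{Q}_0,\overbar{Q}_1)$ dual to $Z$, treating each of the four assertions in turn. For part (1), I would start from the $q$-parameters $\{q_I\}_{I\in\overbar{Q}_1}$ attached to chiral and Fermi superfields of the 2d $\mathcal{N}=(0,2)$ theory on the D1-brane worldvolume, and impose that every monomial in the superpotential $W=\sum_a(\Lambda_a J_a(X)+\bar{\Lambda}_a E_a(X))$ carries trivial total weight under the mesonic flavor $\mathrm{U}(1)^3\subset\mathrm{U}(1)^4$. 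This is the direct generalization of the condition $q_1 q_2 q_3 q_4=1$ derived around \eqref{eq:2dquiverident}. Solving these linear relations should cut the number of independent parameters down to three, in agreement with the rank of the mesonic flavor torus of a toric CY$_4$.

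For parts (2) and (3), I would first produce a generalized ADHM-type moduli space by following the D0-D8 construction of section~\ref{sec:M4partitionfunction}, attaching an instanton bundle $\mathbf{K}_i$ and a framing bundle $\mathbf{N}_i$ to each node $i\in\overbar{Q}_0$. The observable sheaf would read $\mathbf{Y}_i=\mathbf{N}_i-\sum_j c_{Z,ij}\mathbf{K}_j$, and the bilinear $\sum_{i,j}\mathbf{Y}_i^\vee c_{Z,ij}^{-1}\mathbf{Y}_j/\cdots$ has to be decomposed into $\mathbf{v}_{\text{inst}}+\mathbf{v}_{\text{inst}}^\vee$ using the half-Cartan splitting $c_Z=c_Z^++c_Z^-$ introduced in \eqref{eq:halfCartan}. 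Taking $\mathbb{I}[\mathbf{v}_{\text{inst}}]$ and applying the contraction formulas coming from \eqref{eq:CY4D0vertexop}, \eqref{eq:CY4D8vertexop} would then reproduce, after fixing the zero modes in parallel with section~\ref{sec:zeromodes}, the claimed correlation function $\langle\mathsf{A}_{\underline{k}}^{-1}\mathsf{Z}_{\underline{n}}\rangle$. The key compatibility to verify is that the $\mathcal{A}_{Z,ij}(x)$ factor arising from D0-D0 OPEs matches the expression obtained from the $c_{Z,ij}$ data after imposing the reflection symmetry analogous to \eqref{eq:reflec_structfunc}.

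For part (4), I would extract the poles of the contour integral by a Jeffrey--Kirwan analysis, exactly as was done for $\mathbb{C}^4$, and organize the residues combinatorially. The natural conjecture is that, just as solid partitions arise from the $q$-content $\{v\,q_1^{i-1}q_2^{j-1}q_3^{k-1}q_4^{l-1}\}$ filling a 4-cube, the residues for toric $Z$ should be indexed by crystal melting configurations of the periodic quiver lift of $\overbar{Q}$, namely the brane brick model dual to $Z$. Matching these residues with the expected 4d crystals would use the same mechanism as the D0-brane crystal melting counts of~\cite{Ooguri:2009ijd}, generalized one dimension up.

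The main obstacle will be part (4), together with the choice of square root needed in (3). For general $Z$ there is no canonical choice of the decomposition $c_Z=c_Z^++c_Z^-$, and different decompositions lead to different sign factors $\sigma(\vec{\rho})$ analogous to those in \eqref{eq:mag4inst}; controlling these signs and proving that the final combinatorial sum reproduces an enumerative invariant of $Z$ (e.g.\ a Donaldson--Thomas-type count of ideal sheaves) is already nontrivial even when $Z=\mathbb{C}^4$. The combinatorial model of 4d BPS crystals, in particular the precise stability/melting rule compatible with the $J$- and $E$-term ideal of the brane brick model, is not yet established in generality, so I expect the bulk of the work to lie in setting up this crystal-theoretic framework and verifying its agreement with the residues of the contour integral.
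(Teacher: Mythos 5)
This statement is a conjecture that the paper does not prove: it is explicitly attributed to the future work \cite{Kimura-Noshita}, and the surrounding text only motivates it by analogy with the single-node $\mathbb{C}^4$ case (sections~\ref{sec:M4partitionfunction}, \ref{sec:M4LMNS}) and the $\CY_3\times\mathbb{C}$ and $\CY_2\times\CY_2$ discussions. Your plan follows exactly the route the paper itself gestures at --- read the $q$-Cartan matrix off the graded quiver, split it into half-Cartan pieces, build the root and D8 currents from \eqref{eq:CY4D0vertexop}--\eqref{eq:CY4D8vertexop}, and match $\mathbb{I}[\mathbf{v}_{\text{inst}}]$ against $\langle\mathsf{A}_{\underline{k}}^{-1}\mathsf{Z}_{\underline{n}}\rangle$ --- so there is no divergence of method to report. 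Your identification of the consistency check (that $\mathcal{A}_{Z,ij}$ from the D0--D0 OPE reproduce $\mathbb{I}[-c_{Z,ij}^\vee x^\vee]$) is the right one, and for part (1) the superpotential-weight argument is the correct generalization of the derivation of $q_1q_2q_3q_4=1$ around \eqref{eq:2dquiverident}.

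That said, what you have written is a research program, not a proof, and the places where it is incomplete are precisely the reasons the paper leaves this as a conjecture. Concretely: (a) the splitting $c_Z=c_Z^++c_Z^-$ is non-canonical, and the resulting sign ambiguity is unresolved in the paper even for $Z=\mathbb{C}^4$ (the D8 $qq$-character of section~\ref{sec:D8qq} only reproduces the magnificent four partition function up to the sign $(-1)^{\sigma_a(\vec\rho)}$ of \eqref{eq:mag4inst}); your plan notes this but offers no mechanism to fix it. (b) The correlator in part (3) is only defined once the zero modes of $\mathsf{A}_i$ and $\mathsf{Z}_i$ are specified, and the paper explicitly omits this for the multi-node case; without an analogue of the conditions in section~\ref{sec:zeromodes} the claimed proportionality cannot even be stated precisely. (c) Part (4) presupposes a combinatorial model of four-dimensional BPS crystals compatible with the $J$- and $E$-term ideal of the brane brick model, which does not yet exist; asserting that the JK residues "should" organize into such crystals is the conjecture restated, not an argument. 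Finally, a smaller technical caveat: your bilinear $\sum_{i,j}\mathbf{Y}_i^\vee c_{Z,ij}^{-1}\mathbf{Y}_j/\cdots$ hides the fact that for a multi-node quiver the inverse $q$-Cartan matrix is a matrix of rational functions, so the extraction of $\mathbf{v}_{\text{inst}}$ and of the perturbative part is genuinely more delicate than the single-node $\mathbf{Y}^\vee\mathbf{Y}/\mathbf{P}_{\underline{\mathbf{4}}}$ you are modelling it on.
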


Following the discussion of the $\mathbb{C}^{4}$-case, one would like to study the gauge origami system where lower dimensional D-branes appear. To do this practically, we need to study the subspaces of the toric CY$_{4}$ case-by-case. However, for special cases when the total CY$_{4}$ is decomposed into a product of smaller spaces, we can give a general discussion. These will be studied in the following subsections.

We note that there is another way to construct the magnificent four system where $Z$ is general toric $\CY_{4}$. One can understand $Z$ as a combination of $\mathbb{C}^{4}$ patches. The partition function of $Z$ is understood by gluing the results of $\mathbb{C}^{4}$. Studies from this approach were done in~\cite{Cao:2019tvv,Nekrasov:2023nai} where the authors introduced a vertex formalism. The relation with our construction here is not so clear for the moment.

\begin{remark}
    We expect the above conjecture is also applicable to \emph{non-toric} Calabi--Yau four-folds such as the non-Abelian orbifolds. Given the quiver and proper parameter association, we can define the $q$-Cartan matrix and the contour integral formula, but the evaluation of the poles might be non-trivial. 
\end{remark}

\subsubsection{CY$_{3}$ $\times$ $\mathbb{C}$}\label{sec:CY3timesC}
Consider the Calabi--Yau four-fold with the form $Z=X\times \mathbb{C}$, where $X$ is a toric CY three-fold (see also~\cite{Cao:2019tvv,Cao:2023lon,Piazzalunga:2023qik}). We denote the corresponding quiver\footnote{Strictly speaking, the quiver here should be the quiver of the entire CY four-fold $Z$. The quiver of $X$ is understood as a subquiver of $Z$. However, at the level of $q$-Cartan matrix, we only need to take the product of the $q$-Cartan matrix associated with $X$ and $\mathbb{C}$, so we use the quiver of $X$ which is associated with 4d $\mathcal{N}=1$ theories. } of $X$ as $Q=(Q_{0},Q_{1})$. For each arrow $I\in Q_{1}$ of the quiver diagram, we assign a $q$-deformation parameter $q_{I}$. Given the geometry, the F-term condition of the superpotential determines the relations between the $q$-deformation parameters. After this condition, the number of independent parameters will be three. In this paper, we will not give the procedure to determine them. Practically, we need to study case by case.  

We denote the corresponding $q$-Cartan matrix as 
\bea
&c_{Z,ij}=c_{X,ij}(1-q_{4}),\quad c_{X,ij}=c_{X,ij}^{+}+c_{X,ij}^{-},\quad c_{X,ij}^{-}=-q_{4}^{-1}c^{+\vee}_{X,ji},
\eea
where 
\beq
c_{X,ij}^{+}=\delta_{ij}-\sum_{I\in\{j\rightarrow i\}}q_{I}.
\eeq
We are assuming that the $q$-deformation parameters satisfy the F-term conditions. Note also that we have the property $c_{Z,ij}=c_{Z,ji}^{\vee}$ which is the reality condition of the $q$-Cartan matrix. The structure functions associated with $Z$ and $X$ are
\bea
&\mathcal{A}_{Z,ij}(x)=\mathbb{I}[-c_{Z,ij}^{\vee}x^{\vee}]=\frac{(1-q_{4}x)^{\delta_{ij}}(1-q_{4}^{-1}x)^{\delta_{ij}}\prod\limits_{I\in\{j\rightarrow i\}}(1-q_{I}x)\prod\limits_{I\in\{i\rightarrow j\}}(1-q_{I}^{-1}x)}{(1-x)^{2\delta_{ij}}\prod\limits_{I\in\{i\rightarrow j\}}(1-q_{I}^{-1}q_{4}^{-1}x)\prod\limits_{I\in\{j\rightarrow i\}}(1-q_{4}q_{I}x)},\\
&\varphi_{X,ij}(x)=\mathbb{I}[-c_{X,ij}^{\vee}x^{\vee}]=\frac{(1-q_{4}^{-1}x)^{\delta_{ij}}\prod\limits_{I\in\{j\rightarrow i \}}(1-q_{I}x)}{(1-x)^{\delta_{ij}}\prod\limits_{I\in\{i\rightarrow j\}}(1-q_{4}^{-1}q_{I}^{-1}x)}.
\eea

We then can introduce the D0-brane operators (root currents) and the D6-brane operator associated with $X$. The operators are labeled by the quiver nodes $i\in Q_{0}$ as
\bea\label{eq:CY3D0D6vertexoperator}
\mathsf{A}_{i}(x)=\mathsf{a}_{i,0}(x):\exp\left(\sum_{n\neq 0}\mathsf{a}_{i,n}x^{-n}\right):,\quad 
\mathsf{W}_{i}(x)=\mathsf{w}_{i,0}(x):\exp\left(\sum_{n\neq 0}\mathsf{w}_{i,n}x^{-n}\right):
\eea
where 
\bea
\relax[\mathsf{a}_{i,n},\mathsf{a}_{j,m}]=-\frac{1}{n}\delta_{n+m,0}c_{Z,ij}^{[n]},\quad \mathsf{a}_{i,n}=\sum_{j\in Q_{0}}\mathsf{w}_{j,n}c_{X,ji}^{[-n]}.
\eea
For later use, we also define the screening current corresponding to the $\mathbb{C}$ part of $Z$:
\bea
\mathsf{S}_{i}(x)=\mathsf{s}_{i,0}(x):\exp\left(\sum_{n\neq 0}\mathsf{s}_{i,n}x^{-n}\right):,\quad \mathsf{s}_{i,n}=\frac{\mathsf{a}_{i,n}}{1-q_{4}^{-n}}.
\eea
The zero-modes need to obey similar conditions as what we did for the $\mathbb{C}^{4}$ case in section~\ref{sec:zeromodes}, but we do not discuss them in this paper. A detailed derivation will be done in \cite{Kimura-Noshita}.

The operator products of the operators are
\bea
\mathsf{W}_{i}(x)\mathsf{S}_{j}(x')&=\wick{\c{\mathsf{w}_{i,0}(x)}\c{\mathsf{s}_{j}(x')}}\frac{1}{(1-q_{4}^{-1}x'/x)^{\delta_{ij}}}:\mathsf{W}_{i}(x)\mathsf{S}_{j}(x'):,\\
\mathsf{S}_{j}(x')\mathsf{W}_{i}(x)&=\wick{\c{\mathsf{s}_{j,0}(x')}\c{\mathsf{w}_{i,0}(x)}}\frac{1}{(1-q_{4}x/x')^{\delta_{ij}}}:\mathsf{W}_{i}(x)\mathsf{S}_{j}(x'):,\\
\mathsf{W}_{i}(x)\mathsf{A}_{j}(x')&=\wick{\c{\mathsf{w}_{i,0}(x)}\c{\mathsf{a}_{j,0}(x')}}\left(\frac{1-x'/x}{1-q_{4}^{-1}x'/x}\right)^{\delta_{ij}}:\mathsf{W}_{i}(x)\mathsf{A}_{j}(x'): \\
\mathsf{A}_{j}(x')\mathsf{W}_{i}(x)&=\wick{\c{\mathsf{a}_{j,0}(x')}\c{\mathsf{w}_{i,0}(x)}}\left(\frac{1-x/x'}{1-q_{4}x/x'}\right)^{\delta_{ij}}:\mathsf{W}_{i}(x)\mathsf{A}_{j}(x'):,\\
\mathsf{A}_{i}(x)\mathsf{A}_{j}(x')&=\wick{\c{\mathsf{a}_{i,0}(x')}\c{\mathsf{a}_{j,0}(x')}}\mathcal{A}_{Z,ij}\left(x'/x\right)^{-1}:\mathsf{A}_{i}(x)\mathsf{A}_{j}(x'):,\\
\mathsf{A}_{i}(x)\mathsf{S}_{j}(x')&=\wick{\c{\mathsf{a}_{i,0}(x')}\c{\mathsf{s}_{j,0}(x')}}\varphi_{X,ij}\left(x'/x\right)^{-1}:\mathsf{A}_{i}(x)\mathsf{S}_{j}(x'):,\\
\mathsf{S}_{j}(x')\mathsf{A}_{i}(x)&=\wick{\c{\mathsf{s}_{j,0}(x')}\c{\mathsf{a}_{i,0}(x')}}\varphi_{X,ji}\left(q_{4}x/x'\right):\mathsf{S}_{j}(x')\mathsf{A}_{i}(x):.
\eea
Using this, we can define the contour integral formula that computes the partition function of the system where multiple D6-branes wrap the subspace $X$ of the total CY$_{4}$ $Z$. Let us summarize what we have obtained as a conjecture. 

\begin{conjecture}[\cite{Kimura-Noshita}]\label{conj:CY3}
Let $Z$ be the Calabi--Yau four-fold with the form $Z=X\times \mathbb{C}$, where $X$ is a toric CY$_{3}$. Denote the corresponding quiver as $Q=(Q_{0},Q_{1})$ and the associated $q$-deformation parameters as $\{q_{I}\}_{I\in Q_{1}}$. We then have the following.
\begin{enumerate}[topsep=1pt,itemsep=-1ex,partopsep=1ex,parsep=1ex]
    \item After imposing suitable conditions, we can reparametrize the parameters and reduce them up to three independent $q$-deformation parameters.
\item The corresponding D0 and D6-brane operators are $\mathsf{A}_{i}(x)$ and $\mathsf{W}_{i}(x)$ defined in \eqref{eq:CY3D0D6vertexoperator}.
\item The partition function of the tetrahedron instanton system where multiple D6-branes wrap the $X$ subspace is proportional to
\bea
\frac{1}{\underline{k}!}\oint \prod_{i\in Q_{0}}
\prod_{I=1}^{k_{i}}\frac{dx_{i,I}}{2\pi\iota x_{i,I}}\langle\mathsf{A}_{\underline{k}}^{-1}\mathsf{W}_{\underline{n}}\rangle
\eea
where 
\bea 
\mathsf{A}_{\underline{k}}=\prod_{i\in Q_{0}}\prod_{I=1}^{k_{i}}\mathsf{A}_{i}(x_{i,I}),\quad \mathsf{W}_{\underline{n}}={:\prod_{i\in Q_{0}}\prod_{\alpha=1}^{n_{i}}\mathsf{W}_{i}(v_{i,\alpha}):}.
\eea
Moreover, this gives the BPS/CFT correspondence of this system.
\item The poles of this contour integral formula are labeled by the 3d BPS crystals of \cite{Ooguri:2009ijd,Yamazaki:2008bt}. 
\end{enumerate}
\end{conjecture}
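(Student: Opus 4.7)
}
The strategy is to reduce the statement to the combinatorial and algebraic structures already established in the $\mathbb{C}^4$ case, using the brane brick / brane tiling dictionary that relates toric Calabi--Yau geometries to graded quivers with superpotential. First, for part~(1), I would work out the F-term ideal associated with the 2d $\mathcal{N}=(0,2)$ superpotential $W=\sum_{a}(\Lambda_a J_a(X)+\bar\Lambda_a E_a(X))$ induced by the product structure $Z=X\times\mathbb{C}$, where the Fermi sector splits as $\{\Lambda_I^{X}\}\sqcup\{\Lambda_4\}$ with $\Lambda_4$ dual to the $\mathbb{C}$ factor. Imposing $\U(1)^3$ invariance of every $J$- and $E$-monomial translates into multiplicative relations $\prod_{I\in\partial W_a}q_I=1$ among the arrow fugacities, whose solution space is three-dimensional: two independent mesonic parameters for the CY$_3$ factor $X$ (as in the quiver Yangian / brane tiling literature) together with the fugacity $q_4$ for $\mathbb{C}$. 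Part~(2) is then bookkeeping: once the parameters are fixed, the commutation relations follow from $c_{Z,ij}=c_{X,ij}(1-q_4)$, and the reality condition $c_{Z,ij}^\vee=c_{Z,ji}$ together with the compatibility $\mathsf{a}_{i,n}=\sum_{j}\mathsf{w}_{j,n}c_{X,ji}^{[-n]}$ determines the Heisenberg subalgebra generated by $\mathsf{A}_i$ and $\mathsf{W}_i$ uniquely up to zero modes, just as in sections~\ref{sec:M4LMNS}--\ref{sec:cplvortLMNS}.

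For part~(3), I would expand the vacuum matrix element $\langle\mathsf{A}_{\underline{k}}^{-1}\mathsf{W}_{\underline{n}}\rangle$ using the contraction formulas recorded above, yielding
\begin{equation*}
\langle\mathsf{A}_{\underline{k}}^{-1}\mathsf{W}_{\underline{n}}\rangle
= \prod_{i,j\in Q_{0}}\prod_{\alpha,I}\varphi_{X,ij}(v_{i,\alpha}/x_{j,I})\,\prod_{(i,I)<(j,J)}\mathcal{A}_{Z,ij}(x_{i,I}/x_{j,J})^{-1},
\end{equation*}
up to zero-mode prefactors. The next step is to match this integrand, term by arrow, with the equivariant index $\mathbb{I}[\mathbf{v}_{\text{inst.}}^{X\times\mathbb{C}}]$ of the square-root part of the ADHM-type bundle attached to the moduli of $\D0$-branes inside stacks of $\D6$-branes wrapping $X\subset Z$. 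Here the observable bundle is the obvious $Q$-graded generalization of \eqref{eq:D6Ybundle}, namely $\mathbf{Y}=\bigoplus_{i\in Q_{0}}\mathbf{Y}_i$ with $\mathbf{Y}_i=\mathbf{N}_i-\mathbf{P}_{X,i}\mathbf{K}_i$, and the $\mathbb{C}$-direction factor $\mathbf{P}_4^\vee$ multiplies the whole character. Modulo one-loop perturbative pieces (which arise from the $\mathsf{W}_i$--$\mathsf{W}_j$ contractions) and the topological counting term, the two integrands agree by construction of $c_{Z,ij}$.

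Part~(4), that the residues reproduce a sum over 3d BPS crystals, is where I expect the main obstacle. The plan is to identify the poles of the $\mathsf{A}_i$ integrations with the atoms of the crystal melting model associated with $(X,Q,W)$ in the sense of~\cite{Ooguri:2009ijd,Yamazaki:2008bt}: the OPE zeros of $\mathsf{W}_i(v_{i,\alpha})\mathsf{A}_i(x)^{-1}$ seed the crystal at the colored vertex $i$, while the poles of $\mathcal{A}_{Z,ij}^{-1}$ enforce the growth rule dictated by the F-term ideal. For $X=\mathbb{C}^3$ this is exactly the recursion worked out in section~\ref{sec:Tetrahedron_inst}. For generic toric $X$ the hard point is twofold: to prove the Jeffrey--Kirwan prescription selects exactly the stable BPS crystals (equivalently, that other pole configurations contribute zero residue after accounting for $\mathbb{C}^4$-type sign cancellations of the magnificent four), and to handle the combinatorial subtleties when $X$ has compact divisors, where wall-crossing in the DT/PT chambers must be reconciled with the operator formalism. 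I would attack this last step by first proving it for conifold and orbifold examples ($\mathbb{C}^3/\Gamma$ with $\Gamma\subset\SU(3)$) where both the crystal picture and the equivariant DT computation are under independent control, then bootstrapping to general toric $X$ via a toric gluing argument analogous to the one sketched in~\cite{Cao:2019tvv,Nekrasov:2023nai}.
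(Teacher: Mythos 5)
First, a point of calibration: the paper does not prove this statement. It is stated explicitly as a conjecture, with the zero-mode analysis, the derivation of the reduced parameter space, and the pole classification all deferred to the future work \cite{Kimura-Noshita}. So there is no in-paper argument to compare yours against; what you have written is a research plan, and it should be judged as such. As a plan it is broadly aligned with the paper's intent (F-term ideal for part~(1), the factorized $q$-Cartan matrix $c_{Z,ij}=c_{X,ij}(1-q_4)$ for part~(2), expansion of the correlator for part~(3), crystal-melting residues for part~(4)), but two concrete issues remain.

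The first is that your proposed integrand in part~(3) is inconsistent with the vertex operators the conjecture itself refers to. You write the $\mathsf{W}$--$\mathsf{A}$ contraction as $\varphi_{X,ij}(v_{i,\alpha}/x_{j,I})$, but $\varphi_{X,ij}$ is the structure function of the $\mathsf{A}_i$--$\mathsf{S}_j$ pairing; the paper's OPE for $\mathsf{W}_i(x)\mathsf{A}_j(x')$ in \eqref{eq:CY3D0D6vertexoperator} and the surrounding formulas produces only the \emph{diagonal} factor $\bigl((1-x'/x)/(1-q_4^{-1}x'/x)\bigr)^{\delta_{ij}}$, i.e.\ the D6 framing couples to D0-branes of the same quiver color through a single $\mathscr{V}$-type ratio, exactly as in the $\mathbb{C}^4$ tetrahedron case \eqref{eq:D6op_integral}. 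Using $\varphi_{X,ij}$ instead introduces spurious poles at $x_{j,I}=q_4 q_I v_{i,\alpha}$ for every arrow $I:i\to j$, which would seed the crystal at the wrong atoms and invalidate the classification you are trying to establish in part~(4). The second issue is that part~(4) — which you correctly identify as the crux — is not argued but only itemized: you would still need to show that the Jeffrey--Kirwan prescription selects precisely the molten crystals of the $(Q,W)$ model and that all other pole configurations cancel (the analogue of the sign/cancellation problem the paper only resolves algebraically for $\mathbb{C}^3$ via the screening-charge commutativity of Thm.~\ref{thm:D6qq-commute}). A genuinely workable route, suggested by the paper's own logic, would be to bypass direct residue counting and instead construct the screening charges $\mathscr{Q}_i(x)=\sum_k \mathsf{S}_i(q_4^k x)$ for the $\mathbb{C}$ factor and prove that the BPS $qq$-character with highest weight $\mathsf{W}_i(x)$ commutes with all $\mathscr{Q}_j(x')$; commutativity then forces the monomial expansion over exactly the 3d BPS crystals (Conj.~\ref{conj:BPSqq}), from which part~(4) follows without a case-by-case JK analysis.
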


\begin{remark}
After taking the limit $q_{4}\rightarrow 1$, assume the $q$-deformation parameters are transformed to $\{q_{I}\}_{I\in Q_{1}}\rightarrow \{\mathsf{q}_{I}\}_{I\in Q_{1}}$. There will be only two independent parameters in $\{\mathsf{q}_{I}\}_{I\in Q_{1}}$. Under this limit, we have 
\beq\label{eq:CY3timesCstructurefunct}
\varphi_{X,ij}(x)\xrightarrow{q_{4}\rightarrow 1} \varphi^{i\Rightarrow j}(x)\coloneqq\frac{\prod_{I\in\{j\rightarrow i\}}(1-\mathsf{q}_{I}x)}{\prod_{I\in\{i\rightarrow j\}}(1-\mathsf{q}_{I}^{-1}x)},
\eeq
where the right-hand side is the structure functions\footnote{The notation of the structure functions differs from the original papers. They come from how we define the explicit zero-modes of the operators.} of the toroidal quiver algebra/quiver quantum toroidal algebra introduced in \cite{Li:2020rij,Galakhov:2020vyb,Galakhov:2021vbo,Galakhov:2021xum,Noshita:2021ldl,Noshita:2021dgj}.
\end{remark}

\subsubsection{\texorpdfstring{$\CY_2\times \CY_2$}{CY2 CY2} }\label{sec:CY2CY2}
Let us next consider the case when the CY$_{4}$ $Z$ takes the form of $\CY_2\times \CY_2$. The corresponding quiver structure for such geometry is decomposed into a product of two quivers: $\overbar{Q}=\Upsilon\times \Gamma $. We denote the nodes and arrows of $\Gamma,\Upsilon$ as $\Gamma_{0,1},\Upsilon_{0,1}$. Each node of $\overbar{Q}$ is written as $(i,j)\in \Upsilon_{0}\times \Gamma_{0}$. Each arrow of $\overbar{Q}$ is decomposed into the following two types:
\bea
I:(i,j)\rightarrow (i',j'),\,\,(j=j'), \qquad I:(i,j)\rightarrow (i',j'),\,\,(i=i').
\eea
The first arrow is identified with an element of $\Upsilon_{1}$ while the second arrow is identified with an element of $\Gamma_{1}$. Note that the quivers $\Upsilon,\Gamma$ are dual to 6d $\mathcal{N}=(1,0)$ theories. We focus on the case when the quiver structures are affine A, D, E type and the CY$_{4}$ is denoted as $\mathbb{C}^{2}/\Upsilon\times \mathbb{C}^{2}/\Gamma$. See \cite{Nekrasov:2016ydq} for the formulas of the gauge origami partition function.

We denote the $q$-deformation parameters associated with the arrows of $\Upsilon_{1}, \Gamma_{1}$ as $\{q_{I}\}_{I\in \Upsilon_{1}\cup\Gamma_{1}}$. Following the discussions of previous subsections, the $q$-Cartan matrix can be written as
\bea
&c_{Z,ab}=c_{\Upsilon,ii'}c_{\Gamma,jj'},\quad a=(i,j),\quad b=({i'},{j'}),\\
&c_{\Upsilon,ii'}=(1+q_{12})\delta_{ii'}-\sum_{I\in \Upsilon_{1}:i'\rightarrow i}q_{I}-\sum_{I\in\Upsilon_{1}:i\rightarrow i'}q_{12}q_{I}^{-1},\\
&c_{\Gamma,jj'}=(1+q_{34})\delta_{jj'}-\sum_{I\in \Gamma_{1}:j'\rightarrow j}q_{I}-\sum_{I\in \Gamma_{1}:j\rightarrow j'} q_{34}q_{I}^{-1}
\eea
where the CY condition is imposed as $q_{12}q_{34}=1$. Explicitly, we have
\bea
c_{Z,ab}&=(2+q_{12}+q_{34})\delta_{ii'}\delta_{jj'}\\
&+\sum_{\substack{I:i'\rightarrow i\\J:j'\rightarrow j}}q_{I}q_{J}+\sum_{\substack{I:i'\rightarrow i\\J:j\rightarrow j'}}q_{I}q_{J}^{-1}q_{34}+\sum_{\substack{I:i\rightarrow i'\\J:j'\rightarrow j}}q_{12}q_{I}^{-1}q_{J}+\sum_{\substack{I:i\rightarrow i'\\J:j\rightarrow j'}}q_{I}^{-1}q_{J}^{-1}\\
&-\sum_{I:i'\rightarrow i}q_{I}(1+q_{34})\delta_{jj'}-\sum_{I:i\rightarrow i'}(1+q_{12})q_{I}^{-1}\delta_{jj'}-\sum_{I:j'\rightarrow j}q_{I}(1+q_{12})\delta_{ii'}-\sum_{I:j\rightarrow j'}(1+q_{34})q_{I}^{-1}\delta_{ii'}
\eea
Note that we have the property $c_{Z,ab}^{\vee}=c_{Z,ba}$. We also assume the other parameters $q_{I}$ obey the F-term conditions. The $q$-Cartan matrices arising here are the $q$-Cartan matrices of the double quiver gauge theory recently introduced in \cite{Kimura:2022zsm}. 

The structure functions are given as 
\bea
\mathcal{A}_{Z,ab}\left(x\right)&=\mathbb{I}[-c_{Z,ab}^{\vee}x^{\vee}],\\
\mathscr{S}_{\Upsilon,ii'}(x)&=\mathbb{I}[-c_{\Upsilon,ii'}^{\vee}x^{\vee}]=\frac{\prod\limits_{I:i'\rightarrow i}(1-q_{I}x)\prod\limits_{I:i\rightarrow i'}(1-q_{12}q_{I}^{-1}x)}{(1-x)^{\delta_{ii'}}(1-q_{12}x)^{\delta_{ii'}}},\\
\mathscr{S}_{\Gamma,jj'}(x)&=\mathbb{I}[-c_{\Gamma,jj'}^{\vee}x^{\vee}]=\frac{\prod\limits_{I:j'\rightarrow j}(1-q_{I}x)\prod\limits_{I:j\rightarrow j'}(1-q_{34}q_{I}^{-1}x)}{(1-x)^{\delta_{jj'}}(1-q_{34}x)^{\delta_{jj'}}}.
\eea

Let us introduce the D0 and D4-brane operators. This time, the operators will have a double index labeled by $\Upsilon_{0}$ and $\Gamma_{0}$. The D0-brane operators are introduced as 
\bea
\mathsf{A}_{j}^{i}(x)=\mathsf{a}_{j,0}^{i}(x):\exp\left(\sum_{n\neq 0}\mathsf{a}_{j,n}^{i}x^{-n}\right):,\quad [\mathsf{a}_{j,n}^{i},\mathsf{a}_{j',m}^{i'}]=-\frac{1}{n}\delta_{n+m,0}c_{\Upsilon,ii'}^{[n]}c_{\Gamma,jj'}^{[n]}
\eea
for $i,i'\in \Upsilon_{0},\,\,j,j'\in\Gamma_{0}$. For the D4-brane operators, we can introduce two types depending on whether they wrap either $\mathbb{C}^{2}/\Upsilon$ or $\mathbb{C}^{2}/\Gamma$:
\bea\label{eq:D4doublequiverdef}
\mathsf{X}_{\Upsilon,j}^{i}(x)&=\mathsf{x}_{\Upsilon,j,0}^{i}(x):\exp\left(\sum_{n\neq 0}\mathsf{x}_{\Upsilon,j,n}^{i}x^{-n}\right):,\quad \mathsf{a}_{i,n}^{j}=\sum_{k\in\Upsilon_{0}}\mathsf{x}_{\Upsilon,i,n}^{k}c_{\Upsilon,kj}^{[-n]},\\
\mathsf{X}_{\Gamma,j}^{i}(x)&=\mathsf{x}_{\Gamma,j,0}^{i}(x):\exp\left(\sum_{n\neq 0}\mathsf{x}_{\Gamma,j,n}^{i}x^{-n}\right):,\quad \mathsf{a}_{i,n}^{j}=\sum_{k\in\Gamma_{0}}\mathsf{x}_{\Gamma,k,n}^{j}c_{\Gamma,ki}^{[-n]}.
\eea
The operator products are given as
\bea
\mathsf{A}^{i}_{j}(x)\mathsf{A}^{i'}_{j'}(x')&=\wick[sep=5pt,offset=1.2em]{\c{\mathsf{a}_{j,0}^{i}(x)}\c{\mathsf{a}_{j',0}^{i'}(x')}}\mathcal{A}_{Z,ab}(x'/x)^{-1}:\mathsf{A}^{i}_{j}(x)\mathsf{A}^{i'}_{j'}(x'):,\quad a=(i,j),\,\,b=(i',j'),\\
\mathsf{A}^{i}_{j}(x)\mathsf{X}^{i'}_{\Upsilon,j'}(x')&=\wick[sep=5pt,offset=1.2em]{\c{\mathsf{a}_{j,0}^{i}(x)}\c{\mathsf{x}_{\Upsilon,j',0}^{i'}(x')}}\mathscr{S}_{\Gamma,jj'}(x'/x)^{-\delta_{ii'}}:\mathsf{A}^{i}_{j}(x)\mathsf{X}^{i'}_{\Upsilon,j'}(x'):,\\
\mathsf{A}^{i}_{j}(x)\mathsf{X}^{i'}_{\Gamma,j'}(x')&=\wick[sep=5pt,offset=1.2em]{\c{\mathsf{a}_{j,0}^{i}(x)}\c{\mathsf{x}_{\Gamma, j',0}^{i'}(x')}}\mathscr{S}_{\Upsilon,ii'}(x'/x)^{-\delta_{jj'}}:\mathsf{A}^{i}_{j}(x)\mathsf{X}^{i'}_{\Gamma,j'}(x'):.
\eea
The gauge origami system of this setup was studied for some examples in \cite{Kimura:2022zsm} and the free field realization of the contour integral formula is
\bea
\frac{1}{\underline{k}!}\oint \prod_{\substack{j\in \Gamma_{0}\\i\in \Upsilon_{0}}}\prod_{I=1}^{k_{j}^{i}}\frac{dx_{j,I}^{i}}{2\pi\iota x^{i}_{j,I}}\langle \mathsf{A}_{\underline{k}}^{-1}\mathsf{X}_{\Upsilon,\underline{n}}\mathsf{X}_{\Gamma,\underline{n'}}\rangle,
\eea
where 
\bea
\mathsf{A}_{\underline{k}}=\prod_{\substack{i\in \Upsilon_{0}\\j\in \Gamma_{0}}}\prod_{I=1}^{k^{i}_{j}}\mathsf{A}^{i}_{j}(x^{i}_{j,I}),\quad \mathsf{X}_{\Upsilon,\underline{n}}={:\prod_{\substack{i\in \Upsilon_{0}\\j\in \Gamma_{0}}}\prod_{\alpha=1}^{n^{i}_{j}}\mathsf{X}^{i}_{j}(\nu^{i}_{j,\alpha}):},\quad \mathsf{X}_{\Gamma,\underline{n}'}={:\prod_{\substack{i\in \Upsilon_{0}\\j\in \Gamma_{0}}}\prod_{\alpha=1}^{{n'}^{i}_{j}}\mathsf{X}^{i}_{j}({\nu'}^{i}_{j,\alpha}):}.
\eea

\begin{remark}
Denoting the above system as $\mathbb{C}_{12}^{2}/\Upsilon\times \mathbb{C}^{2}_{34}/\Gamma$, since we are assigning stacks of D-branes in a transverse way, the arising instantons are crossed instantons. Let us focus on the case $\Upsilon=\mathbb{Z}_{p}$ and $\Gamma=\mathbb{Z}_{q}$ whose orbifold action on the complex four coordinates are
\bea
(z_{1},z_{2},z_{3},z_{4})\mapsto (\omega^{r_{1}} z_{1},\omega^{r_{2}}z_{2}, \sigma ^{s_{1}}z_{3},\sigma^{s_{2}}z_{4}),\quad \omega=e^{2\pi\iota/p},\quad \sigma=e^{2\pi\iota/q}
\eea
where $r_{1}+r_{2}\equiv 0\,\,(\operatorname{mod}p)$ and $s_{1}+s_{2}\equiv0\,\,(\operatorname{mod}q)$. When there are only D4-branes on the $\mathbb{C}^{2}_{12}/\mathbb{Z}_{p}$ surface, the arising instanton partition function is the $\mathbb{Z}_{p}$ invariant part of the partition function of the $\widehat{A}_{0}$ quiver gauge theory. This is the same situation for the D4-branes wrapping $\mathbb{C}^{2}_{34}/\mathbb{Z}_{q}$, where the $\mathbb{Z}_{q}$ invariant part appears as the result. Generally, we can also consider the case where D4-branes wrap other complex two-dimensional surfaces such as $\mathbb{C}_{2}/\mathbb{Z}_{p}\times \mathbb{C}_{3}/\mathbb{Z}_{q}$, where the orbifold action mixes with each other. For example when $q=1$ ($\mathbb{C}_{2}/\mathbb{Z}_{p}\times \mathbb{C}_{3}$), the arising partition function corresponds to instantons with surface operators \cite{Kanno:2011fw}. Including such D-branes enables us to consider the most general spiked instanton setup of the entire $\mathbb{C}_{12}^{2}/\mathbb{Z}_{p}\times \mathbb{C}^{2}_{34}/\mathbb{Z}_{q}$. The D4 vertex operators appearing for this case can not be written as \eqref{eq:D4doublequiverdef} but both the $q$-Cartan matrices $c_{\Upsilon,ij},\,c_{\Gamma,ij}$ mix with each other. A detailed analysis of this will be done in \cite{Kimura-Noshita}. We also note that starting from affine type $q$-Cartan matrices and taking a specific limit of the $q$-parameters, one can obtain the free field realizations of the contour integral formulas for finite type $q$-Cartan matrices (see \cite{Kimura:2022zsm} for examples and details).
\end{remark}


\section{D2-brane \texorpdfstring{$qq$}{qq}-characters}\label{sec:D2qqcharacter}
We introduce the screening charges and discuss the relation with the D2 $qq$-characters in section~\ref{sec:D2qqscreening}. We then use these D2 $qq$-characters to show that they reproduce the partition function of the gauge origami system of D2-branes in section~\ref{sec:cplvortD2qq}.

\subsection{D2 \texorpdfstring{$qq$}{qq}-characters and screening charges}\label{sec:D2qqscreening}
\begin{definition}
    We define the \textbf{screening charges} as 
    \beq
        \mathscr{Q}_{a}(x)=\sum_{k\in\mathbb{Z}}\mathsf{S}_{a}(q_{a}^{k}x)
    \eeq
    for $a\in\four$. We also call these screening charges the \textbf{$\D2$ $qq$-characters} whose motivation will be discussed later. 
\end{definition}
\begin{theorem}\label{thm:D2qq-commute}
    The screening charges with different indices commute with each other: 
    \beq
        \relax[\mathscr{Q}_{a}(x),\mathscr{Q}_{b}(x')]=0,\quad a\neq b,
    \eeq
    for $a,b\in\four$.
\end{theorem}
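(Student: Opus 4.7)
The plan is to compute the commutator of two screening currents with $a\neq b$ and then show that summing over the $q_a^{\mathbb{Z}}$- and $q_b^{\mathbb{Z}}$-shifts produces a perfect telescoping cancellation.

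First, I would compute the OPE. Using $\mathbf{P}_{\four} = \mathbf{P}_a \mathbf{P}_b \mathbf{P}_{\overline{ab}}$ and $\mathbf{P}_a^{[-n]} = -q_a^{-n}\mathbf{P}_a^{[n]}$, the mode commutator becomes $[\mathsf{s}_{a,n},\mathsf{s}_{b,-n}] = \frac{q_a^n}{n}\mathbf{P}_{\overline{ab}}^{[n]}$, which exponentiates to
\[
\mathsf{S}_a(y)\mathsf{S}_b(w) = \mathscr{S}_{\overline{ab}}(q_a w/y)\,{:\mathsf{S}_a(y)\mathsf{S}_b(w):},
\]
with trivial zero-mode contraction by \eqref{eq:D2D2zero}. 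The reflection $\mathscr{S}_{\overline{ab}}(q_a w/y) = \mathscr{S}_{\overline{ab}}(q_b y/w)$, which uses $q_c q_d = (q_a q_b)^{-1}$, shows that both orderings give the same rational function, so the commutator is a distribution supported at the poles. Partial-fractioning
\[
\mathscr{S}_{\overline{ab}}(z) = 1 + \frac{(1-q_c)(1-q_d)}{1-q_c q_d}\left[\frac{1}{1-z} - \frac{1}{1-q_c q_d z}\right]
\]
and using the standard formal identity for the difference of the two Laurent expansions of $(1-cw/y)^{-1}$, which gives $\delta(cw/y)$, I would obtain
\[
[\mathsf{S}_a(y),\mathsf{S}_b(w)] = C\bigl[\delta(q_a w/y)\,{:\mathsf{S}_a(y)\mathsf{S}_b(q_a^{-1}y):} - \delta(q_b^{-1}w/y)\,{:\mathsf{S}_a(y)\mathsf{S}_b(q_b y):}\bigr],
\]
with $C = (1-q_c)(1-q_d)/(1-q_c q_d)$ and the normal-ordered product evaluated on the support of the respective delta function.

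The crucial ingredient is the identity
\[
{:\mathsf{S}_a(q_a y)\mathsf{S}_b(y):} = {:\mathsf{S}_a(y)\mathsf{S}_b(q_b y):},
\]
which I would verify directly: at the level of modes both sides reduce to the common generator proportional to $\mathsf{a}_n(1-q_a^{-n}q_b^{-n})/((1-q_a^{-n})(1-q_b^{-n}))$ after substituting $\mathsf{s}_{a,n} = \mathsf{a}_n/\mathbf{P}_a^{[-n]}$ and $\mathsf{s}_{b,n} = \mathsf{a}_n/\mathbf{P}_b^{[-n]}$, while the zero modes match because \eqref{eq:zeromodes1}--\eqref{eq:zeromodes2} give $q_a^{\mathsf{s}_{a,0}} = e^{-\mathsf{t}_0} = q_b^{\mathsf{s}_{b,0}}$. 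Equivalently, both sides equal $\mathsf{A}(y)^{-1}\,{:\mathsf{S}_a(y)\mathsf{S}_b(y):}$ via \eqref{eq:oprelation1}.

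Finally, substituting $y=q_a^k x$, $w=q_b^l x'$ and summing over $k,l\in\mathbb{Z}$, the first piece of the commutator, after reindexing $k\mapsto k+1$, becomes
\[
C\sum_{k,l\in\mathbb{Z}}\delta(q_a^{-k}q_b^l x'/x)\,{:\mathsf{S}_a(q_a^{k+1}x)\mathsf{S}_b(q_a^k x):},
\]
while the second piece, after reindexing $l\mapsto l+1$, becomes
\[
C\sum_{k,l\in\mathbb{Z}}\delta(q_a^{-k}q_b^l x'/x)\,{:\mathsf{S}_a(q_a^k x)\mathsf{S}_b(q_a^k q_b x):}.
\]
Setting $y=q_a^k x$ in the key identity shows that the two normal-ordered coefficients coincide term by term, so the two sums cancel and $[\mathscr{Q}_a(x),\mathscr{Q}_b(x')]=0$. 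The main subtlety is that the commutator of individual screening currents does \emph{not} vanish as operators---only as rational functions---so the argument genuinely relies on the $\mathbb{Z}$-sum structure of the screening charges together with the key normal-ordering identity above.
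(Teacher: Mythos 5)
Your proposal is correct and follows essentially the same route as the paper's proof: compute the OPE factor $\mathscr{S}_{\overline{ab}}$, extract the two delta-function terms from the difference of its expansions, use the identity ${:\mathsf{S}_a(q_a y)\mathsf{S}_b(y):}={:\mathsf{S}_a(y)\mathsf{S}_b(q_b y):}$ (which the paper also obtains from $\mathsf{A}(y)={:\mathsf{S}_a(y)/\mathsf{S}_a(q_a y):}={:\mathsf{S}_b(y)/\mathsf{S}_b(q_b y):}$), and let the $\mathbb{Z}$-sums telescope. Your extra checks (partial fractions for the delta extraction, mode-level verification of the key identity) are just more explicit versions of steps the paper states directly.
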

 \begin{proof}
     Let us focus on the commutation relation between $\mathscr{Q}_{1}(x)$ and $\mathscr{Q}_{4}(x)$. Other cases are obtained by using the quadrality. The operator products between $\mathsf{S}_{1}(x)$ and $\mathsf{S}_{4}(x)$ are 
     \begin{align}
     \begin{split}
    \mathsf{S}_{4}(x)\mathsf{S}_{1}(x')&=\mathscr{S}_{23}\left(q_{4}x'/x\right):\mathsf{S}_{4}(x)\mathsf{S}_{1}(x'):,\\
    \mathsf{S}_{1}(x')\mathsf{S}_{4}(x)&=\mathscr{S}_{23}(q_{1}x/x'):\mathsf{S}_{1}(x')\mathsf{S}_{4}(x):.
    \end{split}
\end{align}
We then have 
\begin{align}
\begin{split}
    &[\mathsf{S}_{4}(q_{4}^{k}x),\mathsf{S}_{1}(x')]\\
    =&\frac{(1-q_{2})(1-q_{3})}{(1-q_{2}q_{3})}\left(\delta\left(\frac{x'}{q_{4}^{k-1}x}\right)\textcolor{red}{:\mathsf{S}_{4}(q_{4}^{k}x)\mathsf{S}_{1}(q_{4}^{k-1}x):}-\delta\left(\frac{x'}{q_{1}q_{4}^{k}x}\right):\mathsf{S}_{4}(q_{4}^{k}x)\mathsf{S}_{1}(q_{1}q_{4}^{k}x):\right)\\
    =&\frac{(1-q_{2})(1-q_{3})}{(1-q_{2}q_{3})}\left(\delta\left(\frac{x'}{q_{4}^{k-1}x}\right)\textcolor{red}{:\mathsf{S}_{4}(q_{4}^{k-1}x)\mathsf{A}^{-1}(q_{4}^{k-1}x)\mathsf{S}_{1}(q_{4}^{k-1}x):}-\delta\left(\frac{x'}{q_{1}q_{4}^{k}x}\right):\mathsf{S}_{4}(q_{4}^{k}x)\mathsf{S}_{1}(q_{1}q_{4}^{k}x):\right)\\
    =&\frac{(1-q_{2})(1-q_{3})}{(1-q_{2}q_{3})}\left(\delta\left(\frac{x'}{q_{4}^{k-1}x}\right)\textcolor{red}{:\mathsf{S}_{4}(q_{4}^{k-1}x)\mathsf{S}_{1}(q_{1}q_{4}^{k-1}x):}-\delta\left(\frac{x'}{q_{1}q_{4}^{k}x}\right):\mathsf{S}_{4}(q_{4}^{k}x)\mathsf{S}_{1}(q_{1}q_{4}^{k}x):\right)
\end{split}
\end{align}
where we used 
\beq
    \mathsf{A}(x)={:\frac{\mathsf{S}_{1}(x)}{\mathsf{S}_{1}(q_{1}x)}:}={:\frac{\mathsf{S}_{4}(x)}{\mathsf{S}_{4}(q_{4}x)}:}.
\eeq
The commutation relation between the screening charge $\mathscr{Q}_{4}(x)$ and $\mathsf{S}_{1}(x')$ is 
\beq
    \relax[\mathscr{Q}_{4}(x),\mathsf{S}_{1}(x')]=\frac{(1-q_{2})(1-q_{3})}{(1-q_{2}q_{3})}\sum_{k\in\mathbb{Z}}\left(\delta\left(\frac{x'}{q_{4}^{k}x}\right)-\delta\left(\frac{x'}{q_{1}q_{4}^{k}x}\right)\right):\mathsf{S}_{4}(q_{4}^{k}x)\mathsf{S}_{1}(q_{1}q_{4}^{k}x):
\eeq
which is a total difference. Thus, we finally have 
\beq
   \relax [\mathscr{Q}_{4}(x),\mathscr{Q}_{1}(x')]=0.
\eeq
\end{proof}
We call the screening charges D2-brane $qq$-characters because each term is related by the iWeyl reflection (see \eqref{eq:D2iWeylpartition} for the iWeyl reflection in terms of partition functions).  We define the operator version of the iWeyl reflection in \eqref{eq:D2iWeylpartition} of the $\D2$-brane operator as the following.
\begin{definition}
    The \textbf{iWeyl reflection} of the $\D2$-brane operator is 
    \beq
        \mathsf{S}_{a}(x)\rightarrow {:\mathsf{S}_{a}(x)\mathsf{A}^{-1}(x):}=\mathsf{S}_{a}(q_{a}x)
    \eeq
    where we used \eqref{eq:oprelation1} in the second identitiy.
\end{definition}
Starting from the screening current $\mathsf{S}_{a}(x)$ and using the iWeyl reflection sequentially, we get a sequence 

\beq
    \mathsf{S}_{a}(x)\xrightarrow{\mathsf{A}^{-1}(x)} \mathsf{S}_{a}(q_{a}x)\xrightarrow{\mathsf{A}^{-1}(q_{a}x)} \cdots\cdots\mathsf{S}_{a}(q_{a}^{k}x)\xrightarrow{\mathsf{A}^{-1}(q_{a}^{k}x)} \cdots
\eeq
 which generates half of the screening charge:
 \beq
     \mathsf{T}^{+}_{a}(x)=\sum_{k\geq 0}\mathsf{S}_{a}(q_{a}^{k}x),\quad a\in\four.
 \eeq

To define the screening charge, the above $\mathsf{T}_{a}^{+}(x)$ is not enough and we need the other half part. It is generated by the iWeyl reflection in the opposite direction as
 \bea
     \cdots \xleftarrow{\mathsf{A}(q_{a}^{-l+1}x)}\mathsf{S}_{a}(q_{a}^{-l}x)\cdots\cdots\xleftarrow{\mathsf{A}(q_{a}^{-2}x)}\mathsf{S}_{a}(q_{a}^{-1}x)\xleftarrow{\mathsf{A}(q_{a}x)}\mathsf{S}_{a}(x)
 \eea
and gives the other half of the screening charge as
\bea
    \mathsf{T}_{a}^{-}(x)=\sum_{k<0}\mathsf{S}_{a}(q_{a}^{k}x),\qquad 
    \mathscr{Q}_{a}(x)&=\mathsf{T}_{a}^{+}(x)+\mathsf{T}^{-}_{a}(x).
\eea
We still call $\mathsf{T}_{a}^{+}(x)$ and $\mathsf{T}_{a}^{-}(x)$ D2 $qq$-characters. We may rescale the root currents as $\mathsf{A}(x)\rightarrow \mathfrak{q}^{-1}\mathsf{A}(x)$ and then obtain 
\bea
\mathscr{Q}_{a}(x)=\sum_{k\in\mathbb{Z}}\mathfrak{q}^{k}\mathsf{S}_{a}(q_{a}^{k}x),\quad \mathsf{T}_{a}^{+}(x)=\sum_{k\geq 0}\mathfrak{q}^{k}\mathsf{S}_{a}(q_{a}^{k}x),\quad \mathsf{T}_{a}^{-}(x)=\sum_{k<0}\mathfrak{q}^{k}\mathsf{S}_{a}(q_{a}^{k}x).
\eea

Each term of the screening charge has a nice pictorial interpretation using the one-dimensional partition labeled by $k\in\mathbb{Z}$ in \eqref{eq:1dpartition_integer}:
\bea
\mathsf{S}_{a}(q_{a}^{k}x),\,\,\,k\in\mathbb{Z} \quad \Longleftrightarrow \quad 
   \adjustbox{valign=c}{ \begin{tikzpicture}
        \draw[->] (-2,0)--(4,0);
        \draw[thick]   (-0.5,-0.5)--(-0.5,1);
        \draw (-2,0.7)--(3,0.7);
        \draw (3,0)--(3,0.7);
        \draw (0.2,0)--(0.2,0.7);
        \draw (0.9,0)--(0.9,0.7);
         \draw (1.6,0)--(1.6,0.7);
          \draw (2.3,0)--(2.3,0.7);
          \draw (-1.2,0)--(-1.2,0.7);
          \node at (-1.6,0.35) {$\cdots$};
          \node at (4,0) [right]{$q_{a}$};
          \node at (-0.15,0)[below]{1};
          \node at (0.55,0)[below]{2};
          \node at (1.25,0)[below]{$\cdots$};
          \node at (1.95,0)[below]{$\cdots$};
          \node at (2.65,0)[below]{$k$};
    \end{tikzpicture}}\label{eq:D2vectorfigure1}
\eea
The screening charge $\mathscr{Q}_{a}(x)$ is understood as a collection of possible one-dimensional partitions labeled by $k\in\mathbb{Z}$ with the $q$-coordinate of the origin $x$:
\bea
\mathscr{Q}_{a}(x) \quad \Longleftrightarrow \left\{\quad\left. 
   \adjustbox{valign=c}{ \begin{tikzpicture}
        \draw[->] (-2,0)--(4,0);
        \draw[thick]   (-0.5,-0.5)--(-0.5,1);
        \draw (-2,0.7)--(3,0.7);
        \draw (3,0)--(3,0.7);
        \draw (0.2,0)--(0.2,0.7);
        \draw (0.9,0)--(0.9,0.7);
         \draw (1.6,0)--(1.6,0.7);
          \draw (2.3,0)--(2.3,0.7);
          \draw (-1.2,0)--(-1.2,0.7);
          \node at (-0.15,0.35) {$x$};
          \node at (-1.6,0.35) {$\cdots$};
          \node at (4,0) [right]{$q_{a}$};
          \node at (-0.15,0)[below]{1};
          \node at (0.55,0)[below]{2};
          \node at (1.25,0)[below]{$\cdots$};
          \node at (1.95,0)[below]{$\cdots$};
          \node at (2.65,0)[below]{$k$};
    \end{tikzpicture}}\quad \right|\,\, k\in\mathbb{Z} \right\}\label{eq:D2vectorfigure2}
\eea
Using \eqref{eq:oprelation1}, we have 
\bea
    \mathsf{S}_{a}(q_{a}^{k}x)={:\prod_{i=-\infty}^{k}\mathsf{A}^{-1}(q_{a}^{i-1}x):}
\eea
which means we can interpret each box of the one-dimensional partition with the $q$-coordinates $xq_{a}^{i-1}$ as the operator $\mathsf{A}^{-1}(xq_{a}^{i-1})$. The screening charge $\mathscr{Q}_{a}(x)$ is then understood as a collection of the possible one-dimensional partitions labeled by $k\in\mathbb{Z}$. 

We can do the same discussion for the D2 $qq$-characters $\mathsf{T}_{a}^{\pm}(x)$. We omit the discussion for $\mathsf{T}^{-}_{a}(x)$ and focus on $\mathsf{T}_{a}^{+}(x)$. In this case, each term of $\mathsf{T}^{+}_{a}(x)$ is represented as a one-dimensional partition labeled by $k\in\mathbb{Z}_{\geq 0}$ in \eqref{eq:1dpartition_positive} and $\mathsf{T}_{a}^{+}(x)$ is interpreted as a collection of the possible one-dimensional partitions labeled by $k\in\mathbb{Z}_{\geq 0}$ with the $q$-coordinate of the origin $x$:
\bea
    \mathsf{T}^{+}_{a}(x) \quad \Longleftrightarrow \quad \left\{\left.\quad
   \adjustbox{valign=c}{ \begin{tikzpicture}
        \draw[->] (-0.5,0)--(4,0);
        \draw[thick]   (-0.5,-0.5)--(-0.5,1);
        \draw (-0.5,0.7)--(3,0.7);
        \draw (3,0)--(3,0.7);
        \draw (0.2,0)--(0.2,0.7);
        \draw (0.9,0)--(0.9,0.7);
         \draw (1.6,0)--(1.6,0.7);
          \draw (2.3,0)--(2.3,0.7);
          \node at (-0.15,0.35) {$x$};
          \node at (4,0) [right]{$q_{a}$};
          \node at (-0.15,0)[below]{1};
          \node at (0.55,0)[below]{2};
          \node at (1.25,0)[below]{$\cdots$};
          \node at (1.95,0)[below]{$\cdots$};
          \node at (2.65,0)[below]{$k$};
    \end{tikzpicture}}\quad \right| \quad k\in\mathbb{Z}_{\geq 0}\quad \right\}.
\eea

In the context of quiver W-algebra, the iWeyl reflection is usually generated from the highest weight only in one direction. For the screening charge, actually, the $qq$-character is associated with the vector representation of quantum toroidal $\mathfrak{gl}_{1}$ (see section~\ref{sec:vectorrep}) which is not a highest weight representation, and thus, we need to generate the monomial terms in two directions. 

We also note that the coefficient factors of the screening charges are 1, which is identical to the $\U(1)$ partition function of the $\D2$ theory introduced in section~\ref{sec:cplvortex_partitionfunct} (see Appendix \ref{app:D2U1partitionfunction} and \eqref{eq:D2U1partition2}): $\mathcal{Z}^{\D2}_{a}[k;q_{i},q_{j}]=1$.

\begin{remark}
Note that the D2 $qq$-characters $\mathsf{T}^{+}_{a}(x)$ are the operator version of the $\D2$ $qq$-characters introduced in \eqref{eq:D2qqpart}:
\bea
\mathsf{T}_{a}^{+}(x)\quad \longleftrightarrow  \quad\widehat{\mathscr{T}}^{\bar{a}}(x),\,\widehat{\mathscr{T}}^{\bar{a}\vee}(x),\quad a\in \four.
\eea
The highest weight and the root current have the following correspondence:
\bea
\mathsf{S}_{a}(x)\quad &\longleftrightarrow \quad \widehat{\mathscr{W}}^{\bar{a}}(q_{\bar{a}}x)^{-1},\widehat{\mathscr{W}}^{\bar{a}\vee}(q_{\bar{a}}x)^{-1},\quad a\in\four\\
\mathsf{A}(x)^{-1}\quad &\longleftrightarrow \quad  \widehat{\mathscr{A}}(x)^{-1},\widehat{\mathscr{A}}^{\,\vee}(x)^{-1}.
\eea
Note also that the $qq$-characters $\mathsf{T}^{\pm}_{a}(x)\,(a\in\four)$ do not commute with the screening charges $\mathscr{Q}_{a}(x)$ nor themselves: 
    \beq[\mathsf{T}_{a}^{\pm}(x),\mathscr{Q}_{b}(x')]\neq 0,\quad [\mathsf{T}_{a}^{\pm}(x),\mathsf{T}_{b}^{\pm}(x')]\neq 0,\quad a,b\in\four.
    \eeq
    For the commutativity of the screening charges, we need the sum of both $\mathsf{T}_{a}^{\pm}(x)$.
\end{remark}
\begin{remark}
Although the above D2-brane $qq$-character (screening charge) is a sum of monomial terms labeled by an integer $k\in\mathbb{Z}$, using the results of section~\ref{sec:D6qqcharacter}, we can introduce a D2 and $\overline{\D6}$ coupled $qq$-character where the monomial terms are labeled by $k\in\mathbb{Z}_{\geq 0}$. In this case, the generating current is 
\begin{equation}
    :\frac{\mathsf{S}_{a}(x)}{\mathsf{W}_{\bar{b}}(q_{b}^{-1}q_{a}^{-1}x)}:,\quad b\neq a\in \four.
\end{equation}
Focusing on $a=1,\,b=4$, one can show that the $qq$-character that commutes with the screening charge $\mathscr{Q}_{4}(x')$ is
\begin{equation}
    \mathsf{T}_{\D2_{1}\text{-}\overline{\D6}_{\bar{4}}}(x)=\sum_{k\geq 0} \mathcal{Z}^{\D2_{1}\tbar\overline{\D6}_{\bar{4}}}[k]:\frac{\mathsf{S}_{1}(q_{1}^{k}x)}{\mathsf{W}_{\bar{4}}(q_{1}^{-1}q_{4}^{-1}x)}:,\quad \mathcal{Z}^{\D2_{1}\tbar\overline{\D6}_{\bar{4}}}[k]=\prod_{l=1}^{k}\frac{1-q_{4}^{-1}q_{1}^{-l}}{1-q_{1}^{-l}},
\end{equation}
with the property
\beq
    \left[\mathsf{T}_{\D2_{1}\text{-}\overline{\D6}_{\bar{4}}}(x),\mathscr{Q}_{4}(x')\right]=0.
\eeq
The coefficient factor here gives the partition function of a 3d $\mathcal{N}=2^{\ast}$ theory with an adjoint matter whose mass is $q_{4}^{-1}$ described in \cite[eq.~(3.10)]{Nekrasov:2009JJM}. A detailed analysis of such systems is postponed for future work.
\end{remark}

\subsection{Coupled vortex system and D2 \texorpdfstring{$qq$}{qq}-characters} \label{sec:cplvortD2qq} 
The contour integral formula in \eqref{eq:D2integral} can be expressed using vertex operators as in \eqref{eq:D2op_integral}. Let us see that the expanded form of the partition function \eqref{eq:D2cplvortex} can be expressed using the $\D2$ $qq$-characters.

Let us consider finite products of the screening charges $\mathsf{S}_{a}(v_{a,\alpha}),\,(a\in\four,\,\alpha=1,\ldots, n_{a})$. The operator product of two screening currents is
\begin{align}
\begin{split}
&\mathsf{S}_{b}\left(v_{b,\beta}q_{b}^{k_{b}^{(\beta)}}\right)\mathsf{S}_{a}\left(v_{a,\alpha}q_{a}^{k^{(\alpha)}_{a}}\right)\\
=&{:\mathsf{S}_{b}(v_{b,\beta})\prod_{\AboxF\in k^{(\beta)}_{b}}\mathsf{A}^{-1}(\chi_{b,v_{b,\beta}}(\BboxF)):}{:\mathsf{S}_{a}(v_{a,\alpha})\prod_{\Abox\in k_{a}^{(\alpha)}}\mathsf{A}^{-1}(\chi_{a,v_{a,\alpha}}(\Bbox)):}\\
    =&\mathcal{Z}_{\text{1-loop}}^{\D2\tbar\D2}(v_{a,\alpha},a\,|\,v_{b,\beta},b)\mathcal{Z}^{\D2\tbar\D2}_{a|b}[v_{a,\alpha},k^{(\alpha)}_{a}\,|\,v_{b,\beta},k^{(\beta)}_{b}],
\end{split}
\end{align}
where $a\neq b$. Generally, for finite products of screening currents, we have 
\bea
    \prod_{a=1}^{\four}\prod_{\alpha=1}^{n_{a}}\mathsf{S}_{a}(v_{a,\alpha}q_{a}^{k^{(\alpha)}_{a}})&=\prod_{(b,\beta)>(a,\alpha)}\mathcal{Z}^{\D2\tbar\D2}_{\text{1-loop}}(v_{a,\alpha},a\,|\,v_{b,\beta},b)\prod_{(b,\beta)>(a,\alpha)}\mathcal{Z}^{\D2\tbar\D2}_{a|b}(v_{a,\alpha},k_{a}^{(\alpha)}\,|\,v_{b,\beta},k_{b}^{(\beta)})\\
    &\qquad \times:\prod_{a=1}^{\four}\prod_{\alpha=1}^{n_{a}}\mathsf{S}_{a}(v_{a,\alpha}q_{a}^{k^{(\alpha)}_{a}}): 
\eea
where the left-hand side is an ordered operator product and the ordering is defined as 
\beq
    \cdots \mathsf{S}_{b}(v_{b,\beta}q_{b}^{k_{b}^{(\beta)}})\cdots \mathsf{S}_{a}(v_{a,\alpha}q_{a}^{k^{(\alpha)}_{a}})\cdots\quad  \Longleftrightarrow\quad  (b,\beta)>(a,\alpha).
\eeq
Therefore, we have the following claim.
\begin{theorem}\label{thm:cplvortexBPSCFT}
The gauge origami partition function involving $\D2$-branes is given by the correlation function of a finite number of screening currents,
\beq\label{eq:D2screening}
    \mathcal{Z}^{\D2}_{\text{1-loop}}\mathcal{Z}^{\D2}_{\text{cpl.vort.}}[\,\underline{\vec{v}},\,\underline{\vec{k}}\,]=\bra{0}\prod_{a=1}^{\four}\prod_{\alpha=1}^{n_{a}}\mathsf{S}_{a}(v_{a,\alpha}q_{a}^{k^{(\alpha)}_{a}})\ket{0}.
\eeq
\end{theorem}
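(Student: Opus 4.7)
The plan is to reduce the statement to the two OPE computations that have already been recorded immediately above the theorem and then to verify that the resulting ordered operator product has the claimed vacuum expectation value. Concretely, I will proceed in four steps.

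First, I will use the relation $\mathsf{A}(x)={:}\mathsf{S}_{a}(x)/\mathsf{S}_{a}(q_{a}x){:}$ from \eqref{eq:oprelation1} to write each shifted screening current as
\begin{equation*}
\mathsf{S}_{a}\bigl(v_{a,\alpha}q_{a}^{k_{a}^{(\alpha)}}\bigr)
={:}\mathsf{S}_{a}(v_{a,\alpha})\prod_{\Abox\in k_{a}^{(\alpha)}}\mathsf{A}^{-1}\bigl(\chi_{a,v_{a,\alpha}}(\Bbox)\bigr){:}\,,
\end{equation*}
thereby separating the framing factor $\mathsf{S}_{a}(v_{a,\alpha})$ (which encodes the flavor) from the ``instanton'' factors $\mathsf{A}^{-1}$ located at the $q$-coordinates of the one-dimensional partition $k_{a}^{(\alpha)}$.

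Second, I will compute the OPE between two such shifted screening currents for $(a,\alpha)\neq(b,\beta)$ by collecting the four types of contractions: (i) $\mathsf{S}_{b}$--$\mathsf{S}_{a}$, which by the zero-mode prescription of Section~\ref{sec:zeromodes} produces exactly $\mathcal{Z}^{\D2\tbar\D2}_{\text{1-loop}}(v_{a,\alpha},a\,|\,v_{b,\beta},b)$ via \eqref{eq:D2oneloop}; (ii) $\mathsf{A}^{-1}$--$\mathsf{A}^{-1}$, which by \eqref{eq:D0ope} gives $\prod\mathcal{A}_{\mathbb{C}^{4}}(\chi_{a,v_{a,\alpha}}(\Bbox)/\chi_{b,v_{b,\beta}}(\BboxF))^{-1}$ over the boxes of $k_{a}^{(\alpha)}$ and $k_{b}^{(\beta)}$; (iii) $\mathsf{S}_{b}$--$\mathsf{A}^{-1}$, which produces $\prod_{\Abox\in k_{a}^{(\alpha)}} g_{\bar{b}}(q_{b}\chi_{a,v_{a,\alpha}}(\Bbox)/v_{b,\beta})^{-1}$; and (iv) $\mathsf{A}^{-1}$--$\mathsf{S}_{a}$, which produces $\prod_{\AboxF\in k_{b}^{(\beta)}} g_{\bar{a}}(v_{a,\alpha}/\chi_{b,v_{b,\beta}}(\BboxF))$. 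Matching these four blocks against the definition of $\mathcal{Z}^{\D2\tbar\D2}_{a|b}$ in \eqref{eq:D2cplvortex} shows that their product equals precisely $\mathcal{Z}^{\D2\tbar\D2}_{a|b}(v_{a,\alpha},k_{a}^{(\alpha)}\,|\,v_{b,\beta},k_{b}^{(\beta)})$ up to the on-diagonal $(a,\alpha)=(b,\beta)$ contribution, which by Appendix~\ref{app:D2U1partitionfunction} is trivial because the $\mathrm{U}(1)$ factor $\mathcal{Z}_{a}^{\D2}[k^{(\alpha)}_{a};q_{i},q_{j}]=1$. Iterating the OPE in the prescribed lexicographic order $(b,\beta)>(a,\alpha)$ and using associativity of contractions then yields
\begin{equation*}
\prod_{a\in\four}\prod_{\alpha=1}^{n_{a}}\mathsf{S}_{a}\bigl(v_{a,\alpha}q_{a}^{k_{a}^{(\alpha)}}\bigr)
=\mathcal{Z}^{\D2}_{\text{1-loop}}\,\mathcal{Z}^{\D2}_{\text{cpl.vort.}}[\underline{\vec{v}},\underline{\vec{k}}]\cdot
{:}\prod_{a\in\four}\prod_{\alpha=1}^{n_{a}}\mathsf{S}_{a}\bigl(v_{a,\alpha}q_{a}^{k_{a}^{(\alpha)}}\bigr){:}\,.
\end{equation*}

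Third, I will verify that $\bra{0}{:}\prod\mathsf{S}_{a}(\cdot){:}\ket{0}=1$. By construction the oscillator parts $\mathsf{s}_{a,n}$ annihilate $\ket{0}$ on the appropriate side, so only the zero modes $\mathsf{s}_{a,0}(x)=x^{\mathsf{s}_{a,0}}e^{\widetilde{\mathsf{s}}_{a,0}}$ of \eqref{eq:zeromodes1}--\eqref{eq:zeromodes2} contribute; with the convention $\bra{0}\ket{0}=1$ and the pairing $[\partial_{\mathsf{t}},\mathsf{t}_{0}]=[\widetilde{\partial}_{\mathsf{t}},\widetilde{\mathsf{t}}_{0}]=1$ acting on the zero-mode vacuum trivially on $\mathsf{S}_{a}$-only products, this vacuum expectation evaluates to $1$.

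The only substantive point is the bookkeeping in Step~2: one must check that the factors from the $\mathsf{S}$--$\mathsf{A}^{-1}$ and $\mathsf{A}^{-1}$--$\mathsf{A}^{-1}$ OPEs, once summed over all ordered pairs and reorganized as products over boxes of the one-dimensional partitions, reproduce exactly the structure-function products appearing in $\mathcal{Z}^{\D2\tbar\D2}_{a|b}$. I expect this combinatorial reassembly---rather than any deeper algebraic input---to be the main obstacle, and it is settled by a direct term-by-term comparison using the identities of \eqref{eq:struct_funct} and \eqref{eq:reflec_structfunc}.
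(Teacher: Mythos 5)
Your proposal is correct and follows essentially the same route as the paper: rewriting each shifted screening current as $:\mathsf{S}_{a}(v_{a,\alpha})\prod_{\Abox\in k_{a}^{(\alpha)}}\mathsf{A}^{-1}(\chi_{a,v_{a,\alpha}}(\Bbox)):$ via \eqref{eq:oprelation1}, computing the pairwise OPE to identify $\mathcal{Z}^{\D2\tbar\D2}_{\text{1-loop}}\cdot\mathcal{Z}^{\D2\tbar\D2}_{a|b}$ (with the diagonal $\mathrm{U}(1)$ factors trivial by Appendix~\ref{app:D2U1partitionfunction}), iterating over the lexicographically ordered product, and taking the vacuum expectation value of the remaining normal-ordered piece. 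The term-by-term matching of the four contraction types against \eqref{eq:D2cplvortex} is exactly the bookkeeping the paper performs.
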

\begin{corollary}
The total partition function can be written using half of the screening charge as
\beq
    \mathcal{Z}^{\D2}_{\text{1-loop}}\mathcal{Z}^{\D2}_{\text{vort.}}=\sum_{\underline{\vec{k}}}\mathfrak{q}^{|\underline{\vec{k}}|}\mathcal{Z}^{\D2}_{\text{1-loop}}\mathcal{Z}^{\D2}_{\text{cpl.vort.}}[\underline{\vec{v}},\underline{\vec{k}}]=\bra{0}\prod_{a\in\four}\prod_{\alpha=1}^{n_{a}}\mathsf{T}^{+}_{a}(v_{a,\alpha})\ket{0}.
\eeq
\end{corollary}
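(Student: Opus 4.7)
The plan is to derive this corollary directly from Theorem~\ref{thm:cplvortexBPSCFT} by expanding each half-screening charge termwise and matching powers of $\mathfrak{q}$. First I would substitute the defining series
\[
\mathsf{T}^{+}_{a}(v_{a,\alpha}) = \sum_{k_{a}^{(\alpha)} \ge 0} \mathfrak{q}^{k_{a}^{(\alpha)}}\, \mathsf{S}_{a}\!\left(v_{a,\alpha}\, q_{a}^{k_{a}^{(\alpha)}}\right)
\]
into the ordered product on the right-hand side, using the same ordering of pairs $(a,\alpha)$ that underlies the operator product in Theorem~\ref{thm:cplvortexBPSCFT}. Since $\bra{0}\cdot\ket{0}$ is linear, pulling the sums outside the correlator gives
\[
\bra{0}\prod_{a\in\four}\prod_{\alpha=1}^{n_{a}}\mathsf{T}^{+}_{a}(v_{a,\alpha})\ket{0}
= \sum_{\underline{\vec{k}}}\mathfrak{q}^{|\underline{\vec{k}}|}\,\bra{0}\prod_{a\in\four}\prod_{\alpha=1}^{n_{a}}\mathsf{S}_{a}\!\left(v_{a,\alpha}\, q_{a}^{k_{a}^{(\alpha)}}\right)\ket{0},
\]
with $|\underline{\vec{k}}| = \sum_{a,\alpha} k_{a}^{(\alpha)}$ and each $k_{a}^{(\alpha)}\ge 0$.

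Next I would apply Theorem~\ref{thm:cplvortexBPSCFT} termwise to rewrite the inner correlator as $\mathcal{Z}^{\D2}_{\text{1-loop}}\,\mathcal{Z}^{\D2}_{\text{cpl.vort.}}[\underline{\vec{v}},\underline{\vec{k}}]$. Factoring out the $\underline{\vec{k}}$-independent one-loop contribution and recognizing the remaining series as the definition of $\mathcal{Z}^{\D2}_{\text{vort.}}$ given in the proposition of Section~\ref{sec:cplvortex_partitionfunct} then yields the first equality of the corollary, while the construction itself gives the second equality.

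The main point to watch is that the operators $\mathsf{S}_{a}(v_{a,\alpha})$ do not commute (their OPE produces the one-loop and bifundamental factors), so the ordering in the product of $\mathsf{T}^{+}_{a}$'s on the left-hand side must coincide with the ordering used in the theorem; this is built into the statement by taking the same lexicographic order on $(a,\alpha)$. No analytic difficulty arises because the identity is an equality of formal power series in $\mathfrak{q}$, so the interchange of summation and vacuum expectation value is purely formal. Hence the only real work is bookkeeping, and no essential obstacle is expected.
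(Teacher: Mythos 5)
Your proposal is correct and is exactly the (essentially immediate) argument the paper intends: expand each $\mathsf{T}^{+}_{a}(v_{a,\alpha})=\sum_{k\ge 0}\mathfrak{q}^{k}\mathsf{S}_{a}(q_{a}^{k}v_{a,\alpha})$, pull the sums out of the vacuum expectation value, and apply Theorem~\ref{thm:cplvortexBPSCFT} term by term, keeping the fixed ordering of the $(a,\alpha)$. Your remark about the ordering of the non-commuting screening currents is the only point of substance, and you handle it correctly.
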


\begin{remark}
    Realization of the 3d partition function on $\mathbb{C}_a \times \mathbb{S}^1$ with finite number insertion of the screening charges has been discussed in~\cite{Aganagic:2013tta,Aganagic:2014oia,Aganagic:2015cta}.
    See also~\cite{Nieri:2017ntx,Kimura:2021ngu} for the case with different types of the screening charges.
\end{remark}

\paragraph{Coulomb branch formula}
The vortex partition function discussed above is given by summation over infinitely many topologically distinct sectors (Higgs branch formula).
There is another description of the 3d partition function, which is given by the contour integral over the Cartan torus of the gauge group (Coulomb branch formula)~\cite{Beem:2012mb,Yoshida:2014ssa}.
Let us show that this Coulomb branch formula is also obtained from the vertex operator formalism.

The screening charges $\mathscr{Q}_{a}(x)$ can be written using the Jackson integral with the base $x$ as 
\begin{equation}
    \mathscr{Q}_{a}(x)=\oint_{\mathcal{C}_{a,x}}d_{q_{a}}z\,\mathsf{S}_{a}(z),
\end{equation}
where the contour integral is denoted $\mathcal{C}_{x}$. Let us first consider the 3d theory on $\mathbb{C}_{a}\times \mathbb{S}^{1}$. Finite products of the screening charges are then given as
\bea
    \mathbf{Z}^{n_{a}}_{a,\vec{x}}\coloneqq \overleftarrow{\prod_{\alpha=1}^{n_{a}}}\mathscr{Q}_{a}(x_{\alpha})&=\oint_{\mathcal{C}_{a,x_{n_{a}}}}\cdots\oint_{\mathcal{C}_{a,x_{2}}}\oint_{\mathcal{C}_{a,x_{1}}}d_{q_{a}}z_{n_{a}}\cdots d_{q_{a}}z_{1}\,\mathsf{S}_{a}(z_{n_{a}})\cdots \mathsf{S}_{a}(z_{1})\\
    &=\oint_{\mathcal{C}_{a,x_{1}},\ldots\mathcal{C}_{a,x_{n_{a}}}}\prod_{\alpha=1}^{n_{a}}d_{q_{a}}z_{i}\,\,{\Delta}_{a}(\vec{z};q_{a}):\prod_{\alpha=1}^{n_{a}}\mathsf{S}_{a}(z_{\alpha}):
\eea
where 
\beq
\Delta_{a}(\vec{z};q_{a})=\prod_{i<j}\Delta_{a}(z_{i}/z_{j}),\quad
    \Delta_{a}(x'/x)=\frac{(x'/x;q_{a})_{\infty}\prod_{i\neq a}(q_{i}q_{a}x'/x;q_{a})_{\infty}}{(q_{a}x'/x;q_{a})_{\infty}\prod_{i\neq a}(q_{i}^{-1}x'/x;q_{a})_{\infty}},
\eeq
and we assumed $|q_{a}|<1$.
Note that $\Delta_{a}(x'/x)$ is just $\mathcal{Z}^{\D2\tbar\D2}_{\text{1-loop}}(x',a\,|\,x,a)$. 

One-loop contributions of 3d $\mathcal{N}=2$ on $\mathbb{C}_a \times \mathbb{S}^1$ are given by
\beq
    z_\text{vec} = \prod_{i \neq j} (x_i/x_j;q_a)_\infty
    \, , \qquad
    z_\text{adj} = \prod_{i\neq j} (\mu_\text{adj} x_i/x_j;q_a)_\infty^{-1} ,
\eeq
where $\mu_{\text{adj}} = e^{m_\text{adj}}$. Meanwhile, the OPE factor of the screening currents is rewritten after using the theta function in \eqref{eq:ellipticthetadef} as
\bea
\Delta_{a}(\vec{z};q_{a})&=\prod_{i<j}\frac{(z_{i}/z_{j};q_{a})_{\infty}\prod_{b\neq a}(q_{b}q_{a}z_{i}/z_{j};q_{a})_{\infty}}{(q_{a}z_{i}/z_{j};q_{a})_{\infty}\prod_{b\neq a}(q_{b}^{-1}z_{i}/z_{j};q_{a})_{\infty}}\\
&=\prod_{i<j}\frac{\prod_{b\neq a}\theta(q_{b}^{-1}z_{j}/z_{i};q_{a})}{\theta(z_{j}/z_{i};q_{a})}\prod_{i\neq j}\frac{(z_{i}/z_{j};q_{a})_{\infty}}{\prod_{b\neq a}(q_{b}^{-1}z_{i}/z_{j};q_{a})_{\infty}}.
\eea
which is thus identified with 1 vector multiplet and 3 adjoint chiral multiples with masses $q_{b}^{-1},\,\,(b\neq a)$.\footnote{Hence, this factor $\Delta_{a}(\vec{z};q_{a})$ agrees with (a trigonometric version of) the measure part of the vertex function associated with the Hilbert scheme of points on $\mathbb{C}^3$ \cite[Prop.~7.12]{Cao:2023lon} up to the boundary contribution.} Additional theta function factors are identified with the boundary contribution $\partial \mathbb{C}_a \times \mathbb{S}^1 = \mathbb{T}^2$. 
In addition to the adjoint matters, we can also add the fundamental matters using the flavor brane vertex operators (see section~\ref{sec:D2_BetheEq}).

Note that due to the condition $q_{1}q_{2}q_{3}q_{4}=1$, we can only choose up to three of the $q$-parameters to be $|q_{a}|<1$. For the analytic region of $|q_{a}|>1$, the above formulas will be modified after using the reflection formula in \eqref{eq:app-qPochreflec} (see also \eqref{eq:D2oneloop-qgamma}).

Let us next consider the contribution between $\mathbb{C}_{a}\times \mathbb{S}^{1}$ and $\mathbb{C}_{b}\times \mathbb{S}^{1}$ where $a \neq b$. For simplicity, let us focus on $a=1,b=2$:
\bea
    \mathscr{Q}_{1}(x)\mathscr{Q}_{2}(x')&=\oint_{\mathcal{C}_{1,x}}\oint_{\mathcal{C}_{2,x'}}d_{q_{1}}z\,d_{q_{2}}z' \mathscr{S}_{34}\left(q_{1}z'/z\right)\\
    &=\oint_{\mathcal{C}_{1,x}}\oint_{\mathcal{C}_{2,x'}}d_{q_{1}}z\,d_{q_{2}}z' \frac{(1-q_{13}z'/z)(1-q_{14}z'/z)}{(1-q_{1}z'/z)(1-q_{2}^{-1}z'/z)}
\eea
The contribution coming from $\mathscr{S}_{34}(z)$ is understood as the chiral and Fermi multiplet living on the 1d intersection $\mathbb{S}^{1}$ \cite{Nieri:2017ntx,Kimura:2021ngu}. 
Therefore, we conclude that the partition function obtained by $\prod_{a\in\four}\mathbf{Z}_{a,\vec{x}_{a}}^{n_{a}}$ is an intersecting gauge theory. Choosing the contour integral properly, we expect that we can reproduce the coupled vortex partition function in~\eqref{eq:D2cplvortex}.
\begin{conjecture}
    By specifying the contour integrals properly, we have the following identity:
    \beq
        \mathcal{Z}^{\D2}_{\text{1-loop}}\mathcal{Z}^{\D2}_{\text{vort.}}=\bra{0}\prod_{a\in\four}\mathbf{Z}_{a,\vec{x}_{a}}^{n_{a}}\ket{0}.
    \eeq
    Namely, the coupled vortex system is obtained by finite products of screening charges.
\end{conjecture}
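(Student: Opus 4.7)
The plan is to realize each screening charge $\mathscr{Q}_{a}(x)$ as a Jackson integral
\[
\mathscr{Q}_{a}(x) \;=\; \oint_{\mathcal{C}_{a,x}} d_{q_{a}}z\,\mathsf{S}_{a}(z),
\]
with the contour $\mathcal{C}_{a,x}$ chosen to be (a deformation of) the $q_{a}$-interval $[0,x]$, so that after the rescaling $\mathsf{A}(x)\mapsto\mathfrak{q}^{-1}\mathsf{A}(x)$ the only residues contributing are $z=xq_{a}^{k}$ with $k\in\mathbb{Z}_{\geq 0}$, each carrying the weight $\mathfrak{q}^{k}$. Under this contour prescription the screening charge effectively reduces to the ``half" $qq$-character $\mathsf{T}^{+}_{a}(x)$, which is exactly the object appearing in the corollary of Theorem~\ref{thm:cplvortexBPSCFT}. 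This reduces the conjecture to a statement about an iterated Jackson integral in which only the $k_{a,\alpha}\geq 0$ contributions survive.

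Concretely I would proceed in four steps. First, expand the ordered product $\overleftarrow{\prod}_{\alpha}\mathscr{Q}_{a}(x_{\alpha})$ and move all Jackson integrals to the outside, obtaining
\[
\mathbf{Z}^{n_{a}}_{a,\vec{x}} \;=\; \oint \prod_{\alpha=1}^{n_{a}} d_{q_{a}}z_{a,\alpha}\; \mathsf{S}_{a}(z_{a,n_{a}})\cdots\mathsf{S}_{a}(z_{a,1}).
\]
Second, normal-order using the OPE formulas of section~\ref{sec:cplvortLMNS}, which produce (i) the measure factor $\Delta_{a}(\vec{z};q_{a})$ for the $a=b$ sector and (ii) the $\mathcal{A}_{\mathbb{C}^{4}}$ factor for the $a\ne b$ sector mediating the junctions between the $\mathbb{C}_{a}$ and $\mathbb{C}_{b}$ vortex theories. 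Third, evaluate the vacuum expectation value of the normal-ordered operator: by the very computation behind Theorem~\ref{thm:cplvortexBPSCFT}, specialising $z_{a,\alpha}=v_{a,\alpha}q_{a}^{k_{a,\alpha}}$ gives
\[
\langle 0 | \, {:}\prod_{a,\alpha}\mathsf{S}_{a}(v_{a,\alpha}q_{a}^{k_{a,\alpha}}){:}\, | 0\rangle \cdot (\text{OPE factors}) \;=\; \mathcal{Z}^{\D2}_{\text{1-loop}}\,\mathcal{Z}^{\D2}_{\text{cpl.vort.}}[\underline{\vec{v}},\underline{\vec{k}}].
\]
Fourth, evaluate the Jackson integral: each $\mathcal{C}_{a,x_{a,\alpha}}$ contributes the discrete sum $\sum_{k_{a,\alpha}\geq 0}\mathfrak{q}^{k_{a,\alpha}}(\cdots)$, and stitching these sums together reconstructs $\sum_{\underline{\vec{k}}}\mathfrak{q}^{|\underline{\vec{k}}|}\mathcal{Z}^{\D2}_{\text{cpl.vort.}}[\underline{\vec{v}},\underline{\vec{k}}] = \mathcal{Z}^{\D2}_{\text{vort.}}$, finishing the identification.

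The hard part will be pinning down the contour $\mathcal{C}_{a,x}$ unambiguously, in two respects. (a) Within the $a=a$ sector, the poles of the OPE factor $\Delta_{a}$ at $z_{i}=q_{a}^{-\ell}z_{j}$ (and the shifted partners dictated by the $\mathscr{V}$-type factors) must lie on the right side of the contour so that only the geometric tower $z=xq_{a}^{k\geq 0}$ is enclosed; this should follow once $\mathcal{C}_{a,x}$ is taken to be the deformation of $[0,x]$ that avoids the descendant cascade of poles generated by the ordered product. (b) For $a\neq b$ the transverse OPE $\mathcal{A}_{\mathbb{C}^{4}}(z_{a}/z_{b})$ has poles shifted by $q_{a}q_{b}$-products, and one has to verify that, for the prescribed ordering of the screening charges and the chosen contours, the residues reproduce the bifundamental factor $\mathcal{Z}^{\D2\tbar\D2}_{a\mid b}$ in~\eqref{eq:D2cplvortex} including its $q_{a,b}$-shifts. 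A clean way to organize this is to introduce a Jeffrey--Kirwan-type prescription $\eta\in\mathfrak{t}^{\ast}$ on the multi-dimensional $z$-torus, read off from the positive cone $\{k_{a,\alpha}\geq 0\}_{a,\alpha}$, and verify that the JK-prescribed residues coincide term-by-term with the Higgs-branch sum on the right-hand side.

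Assuming these contour points, the remainder of the argument is a bookkeeping exercise: the overall normalizations of the Jackson integrals (including the $\prod(1-q_{a})x_{a,\alpha}$ factors) absorb into $\underline{\mathcal{G}}^{\underline{k}}$ and into the redefinition of $\mathfrak{q}$, while the residual OPE factors for $a=b$ and $a\neq b$ package exactly into $\mathcal{Z}^{\D2}_{\text{1-loop}}$. In particular, no new algebraic input beyond Theorem~\ref{thm:cplvortexBPSCFT} and the OPE of the $\mathsf{S}_{a}$-operators is required; the entire content of the conjecture is the compatibility of the JK contour with the Higgs-branch expansion. Once the contour is fixed, the identification with $\mathcal{Z}^{\D2}_{\text{1-loop}}\mathcal{Z}^{\D2}_{\text{vort.}}$ is term-by-term.
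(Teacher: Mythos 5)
The statement you are addressing is left as a \emph{conjecture} in the paper: there is no proof of it there, only the preceding discussion that realizes $\mathscr{Q}_a(x)$ as a Jackson integral, normal-orders the product to extract the measure $\Delta_a(\vec{z};q_a)$ and the cross-sector $\mathcal{A}_{\mathbb{C}^4}$ factors, and then states that ``choosing the contour integral properly'' one expects to recover $\mathcal{Z}^{\D2}_{\text{1-loop}}\mathcal{Z}^{\D2}_{\text{vort.}}$. Your proposal retraces exactly that sketch. Steps one through three of your outline (expansion, normal ordering, evaluation at $z_{a,\alpha}=v_{a,\alpha}q_a^{k_{a,\alpha}}$ via Theorem~\ref{thm:cplvortexBPSCFT}) are already in the paper; the entire content of the conjecture is your step four together with points (a) and (b), and those you do not actually carry out — you only name a plausible JK-type prescription and assert that it should work. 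So this is not a proof; it is a restatement of the conjecture with the open part correctly identified.

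There is also a substantive issue with the way you propose to fix the contour. The object on the right-hand side of the conjectured identity is a product of the \emph{full} screening charges $\mathscr{Q}_a(x)=\sum_{k\in\mathbb{Z}}\mathsf{S}_a(q_a^k x)$, i.e.\ bilateral sums. If you choose $\mathcal{C}_{a,x}$ to be (a deformation of) the $q_a$-interval $[0,x]$ so that only the residues at $z=xq_a^{k}$ with $k\ge 0$ survive, you have by hand replaced $\mathscr{Q}_a(x)$ by the half character $\mathsf{T}^+_a(x)$, and what you then prove is the Corollary of Theorem~\ref{thm:cplvortexBPSCFT}, which the paper has already established — not the conjecture. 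For generic Coulomb parameters $v_{a,\alpha}$ the OPE factors do \emph{not} kill the $k<0$ terms of the bilateral sum (the vanishing mechanism $(z_i/z_j;q_a)_\infty=0$ only operates when the arguments sit on the same $q_a$-tower, as in the fusion argument of Theorem~\ref{thm:D2toD4fusion}), so either you must show that a genuine JK prescription on the multi-dimensional torus discards those contributions for a reason intrinsic to the pole structure of $\Delta_a$ and $\mathcal{A}_{\mathbb{C}^4}$, or you must explain in what sense the truncated object still deserves to be called $\prod_a\mathbf{Z}^{n_a}_{a,\vec{x}_a}$. A smaller bookkeeping point: the literal unilateral Jackson integral $\int_0^x f(z)\,d_{q_a}z=(1-q_a)\sum_{k\ge 0}xq_a^k f(xq_a^k)$ carries an extra weight $q_a^k$ per term relative to the plain sum, which is not absorbed by the rescaling $\mathsf{A}(x)\mapsto\mathfrak{q}^{-1}\mathsf{A}(x)$ and would distort the coefficients unless you integrate against $d_{q_a}z/z$; this needs to be stated precisely if the contour is to do the work you assign to it.
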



\section{D4-brane \texorpdfstring{$qq$}{qq}-characters}\label{sec:D4qqcharacter}
In this section, we introduce the operator version of the D4 $qq$-characters introduced in section~\ref{sec:qqpartitionfunct}. These $qq$-characters are not new $qq$-characters but they were already introduced in~\cite{Nekrasov:2015wsu} and the corresponding algebraic structure is known to be the affine quiver W-algebra~\cite{Kimura:2015rgi}. We introduce six types of $\D4$ $qq$-characters corresponding to the six possible configurations of D4-branes in $\mathbb{C}^{4}$ in an equal footing in section~\ref{sec:D4qqaffinequiver}. We then discuss the relation with the spiked instantons in section~\ref{sec:D4qqandspikedinst} and show that their compositions give the gauge origami partition function of the spiked instantons. We also show that the partition function can be rewritten using the screening currents in section~\ref{sec:spikedandscreening}. We then extend the $\D4$ $qq$-characters to general D4 $qq$-characters which give a supergroup analog of the gauge origami partition function in section~\ref{sec:generalD4qq}. The quadratic relations of the $\D4$ $qq$-characters are discussed in section~\ref{sec:D4quadraticrelation}.

\subsection{D4 \texorpdfstring{$qq$}{qq}-characters and affine quiver W-algebra}\label{sec:D4qqaffinequiver}
Let us consider the $qq$-character generated by the D4 operators $\mathsf{X}_{A}(x)\,(A\in\six)$. The D4 $qq$-character generated here is identified with the generator of the affine quiver W-algebra of $\Gamma = \widehat{A}_0$ in \cite{Kimura:2015rgi}. Let us review the derivation of it.

Each term of the $qq$-character is decomposed into two parts, the operator part and the coefficient part. The operator part is determined by the iWeyl reflection which is defined as the following (see also \eqref{eq:D4iWeylpartition}). 
\begin{definition}
    The iWeyl reflection of the $\D4$ vertex operator $\mathsf{X}_{A}(x)\,\,(A\in\six)$ is 
    \begin{equation}
        \mathsf{X}_{A}(x)\longrightarrow {:\mathsf{X}_{A}(x)\mathsf{A}^{-1}(x):},\quad A\in\six.
    \end{equation}
    Using the property in \eqref{eq:oprelationwithD0-D4}, we have 
\beq
    \mathsf{X}_{ab}(x)\longrightarrow {:\mathsf{a}_{0}(x)^{-1}\frac{\mathsf{X}_{ab}(q_{a}x)\mathsf{X}_{ab}(q_{b}x)}{\mathsf{X}_{ab}(q_{ab}x)}:},\quad a\neq b\in\four.
\eeq
\end{definition}
 The operator part of the $qq$-character is obtained by changing the root current to the $\mathsf{X}$-operators and applying the iWeyl reflection recursively. The operators will be classified by two-dimensional Young diagrams as
\beq\label{eq:D4monomialterm}
    {:\mathsf{X}_{ab}(x)\prod_{\Abox\in\lambda}\mathsf{A}^{-1}(\chi_{ab,x}(\Bbox)):}={:\mathsf{a}_{0}(x)^{-1}\frac{\prod_{\Abox\in A(\lambda)}\mathsf{X}_{ab}(\chi_{ab,x}(\Bbox))}{\prod_{\Abox\in R(\lambda)}\mathsf{X}_{ab}(q_{ab}\chi_{ab,x}(\Bbox))}:}.
\eeq
Following the correspondence in \eqref{eq:D2vectorfigure1} and \eqref{eq:D2vectorfigure2}, we can visualize each monomial terms of the $\D4$ $qq$-character using the Young diagrams:
\bea\label{eq:D4Youngcorrespondence}
{:\mathsf{X}_{ab}(x)\prod_{\Abox \in\lambda}\mathsf{A}^{-1}(\chi_{ab,x}(\Bbox)):}\quad \longleftrightarrow \quad \adjustbox{valign=c}{\begin{tikzpicture}
 \fill[red!20!white] (0.9,1.4)--(1.6,1.4)--(1.6,2.1)--(0.9,2.1)--(0.9,1.4);
        \draw[->] (-1,0)--(4,0);
        \node[above] at (-0.5,4){$q_{b}$};
        \node [right] at (4,0){$q_{a}$};
        \node[below] at (-0.15,0) {$1$};
        \node [below] at (0.55,0){$\cdots$};
        \node [below] at (1.25,0){$i$};
         \draw[->]   (-0.5,-0.5)--(-0.5,4);
         \draw (0.2,3.5)--(0.2,0.7);
         \draw (0.9,2.8)--(0.9,0.7);
         \draw (1.6,2.1)--(1.6,0.7);
         \draw (2.3,1.4)--(2.3,0.7);
         \draw (2.3,0.7)--(-0.5,0.7);
         \draw (2.3,1.4)--(-0.5,1.4);
         \draw (1.6,2.1)--(-0.5,2.1);
         \draw (0.9,2.8)--(-0.5,2.8);
         \draw (0.2,3.5)--(-0.5,3.5);
        \draw (-0.5,0.7)--(3,0.7);
        \draw (3,0)--(3,0.7);
        \draw (0.2,0)--(0.2,0.7);
        \draw (0.9,0)--(0.9,0.7);
         \draw (1.6,0)--(1.6,0.7);
          \draw (2.3,0)--(2.3,0.7);
          \draw (-0.15,0.35)--++(-0.7,-1);
          \node[left] at (-0.85,-0.65){$x$};
          \node [left] at (-0.5,0.35) {$1$};
          \node [left] at (-0.5,1.05){$\vdots$};
          \node [left] at (-0.5,1.75){$j$};
           \draw  (1.25,1.75)--++(0.9,0.9);
          \node[right] at (2.2,2.65) {$xq_{a}^{i-1}q_{b}^{j-1}$};
        \end{tikzpicture}
        }
\eea
Similar to the D2-case, each $\mathsf{A}^{-1}(\chi_{ab,x}(\Bbox))$ corresponds to the box of the Young diagram. The operator $\mathsf{X}_{ab}(x)$ defines the vacuum and physically gives the one-loop perturbative part (see section \ref{sec:D4qqandspikedinst}). In the algebraic context, it is called the \emph{highest weight} of the $qq$-character. Moreover, as will be shown in \ref{thm:D4qq-commute}, it uniquely determines the $qq$-character, we also call it the \emph{generating current} of the $qq$-character.

The $qq$-character is defined by adding the monomial terms in \eqref{eq:D4monomialterm} for all possible Young diagrams with specific coefficients. The coefficients are determined by the commutativity with the screening currents.
\begin{definition}
    We define the $\D4$ $qq$-character for $A\in\six$ as
    \beq
        \mathsf{T}_{A}(x)=\sum_{\lambda\in\mathcal{P}}\widetilde{\mathcal{Z}}_{A}^{\D4}[\lambda]:\Lambda_{A,\lambda}(x):
    \eeq
    where and 
    \beq
       \Lambda_{A,\lambda}(x)={:\mathsf{X}_{A}(x)\prod_{\Abox\in\lambda}\mathsf{A}^{-1}(\chi_{A,x}(\Bbox)):}.
    \eeq
    Note that the coefficients $\widetilde{\mathcal{Z}}^{\D4}_{A}[\lambda]$ do not depend on the choice of $a\in\bar{A}$. 
    One may redefine the zero-modes of the root currents as $\mathsf{A}(x)\rightarrow \mathfrak{q}^{-1}\mathsf{A}(x)$ and obtain
\beq
    \mathsf{T}_{A}(x)=\sum_{\lambda\in\mathcal{P}}\mathfrak{q}^{|\lambda|}\widetilde{\mathcal{Z}}_{A}^{\D4}[\lambda]:\Lambda_{A,\lambda}(x):,\quad A\in\six.
\eeq
\end{definition}

\begin{theorem}\label{thm:D4qq-commute}
The $\D4$ $qq$-characters commute with the screening charges associated with the transverse directions
\beq
    [\mathsf{T}_{A}(x),\mathscr{Q}_{a}(x')]=0,\quad \forall a\in\bar{A}.
\eeq
\end{theorem}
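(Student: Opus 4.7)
The commutativity is the defining characterization of the generator of the affine $\widehat{A}_0$ quiver W-algebra established in~\cite{Kimura:2015rgi}, and the plan is to verify it directly. By quadrality it suffices to prove $[\mathsf{T}_{12}(x),\mathscr{Q}_a(x')]=0$ for $a \in \{3,4\}$. Modeled on the proof of Theorem~\ref{thm:D2qq-commute}, the strategy is to show that the commutator of $\mathsf{T}_A(x)$ with a single screening current $\mathsf{S}_a(y)$ is a total $q_a$-difference in $y$, so that the sum $\mathscr{Q}_a(x') = \sum_{k \in \mathbb{Z}} \mathsf{S}_a(q_a^k x')$ telescopes to zero.

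The first step is to compute the contractions $\Lambda_{A,\lambda}(x)\mathsf{S}_a(y)$ and $\mathsf{S}_a(y)\Lambda_{A,\lambda}(x)$ using the OPEs of section~\ref{sec:zeromodes}. The zero-mode conventions \eqref{eq:D0D2zero} and \eqref{eq:D2D4zero}, combined with the reflection identity $g_{\bar{a}}(x) = g_{\bar{a}}(q_a x^{-1})^{-1}$ from \eqref{eq:reflec_structfunc}, ensure that the two orderings agree as meromorphic rational functions of $y/x$, so the commutator localizes to a sum of delta distributions at the poles of the joint OPE factor
\begin{equation*}
f_{A,\lambda}(x,y) = \frac{1-q_A y/x}{1-q_a^{-1}y/x}\prod_{\Abox \in \lambda} g_{\bar{a}}\bigl(y/\chi_{A,x}(\Abox)\bigr).
\end{equation*}
After the product collapses to its shell form (in analogy with \eqref{eq:D6Nekrasov-shell}), the effective poles in $y$ correspond to the addable and removable boxes of $\lambda$ along the $A$-plane, up to an overall $q_a$-shift coming from the $\mathsf{X}_A(x)$ contraction. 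Summing over $\lambda$ weighted by $\widetilde{\mathcal{Z}}^{\D4}_A[\lambda]$, the residue at $y = \chi_{A,x}(\Abox)$ for $\Abox \in A(\lambda)$ pairs against the residue at $y = q_a \chi_{A,x}(\Abox)$ arising from the term $\lambda+\Abox$ (where $\Abox \in R(\lambda+\Abox)$), and the Nekrasov-factor recursion \eqref{eq:5dNekrasovrecursion}, equivalently the partition-function recursion \eqref{eq:D4qqrecursion}, supplies precisely the ratio needed for the combination to collapse into a total $q_a$-difference $\Phi(q_a y) - \Phi(y)$. This is the algebraic dual of the iWeyl reflection \eqref{eq:D4iWeylpartition}: pole-freeness of $\mathsf{T}_A(x)$ in the spectral variable $x$ is equivalent to the total-difference structure in $y$.

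Pairing the resulting commutator $[\mathsf{T}_A(x),\mathsf{S}_a(q_a^k x')] = \Phi(q_a^{k+1}x') - \Phi(q_a^k x')$ with $\sum_{k\in\mathbb{Z}}$ then telescopes to zero as a formal series in $k$, exactly as in the proof of Theorem~\ref{thm:D2qq-commute}. The main obstacle will be the residue bookkeeping in the middle step: each factor $g_{\bar{a}}(y/\chi)$ carries four families of poles (at $y = \chi$ and at $y = q_a q_j \chi$ for $j \in \bar{a}$), and the shell-form collapse must be organized consistently with the $q_a$- and $q_j$-shifts from the highest-weight factor $\mathsf{X}_A(x)$ before being matched against the recursion identity. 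An alternative route bypasses the explicit residue analysis by invoking the uniqueness result of~\cite{Kimura:2015rgi}, where the quiver W-algebra generator is characterized, up to normalization, by its highest weight together with commutativity with the screening charges; in this formulation the coefficient $\widetilde{\mathcal{Z}}^{\D4}_A[\lambda]$ is automatically fixed by the highest weight $\mathsf{X}_A(x)$.
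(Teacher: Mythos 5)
Your proposal follows essentially the same route as the paper's proof: compute the two orderings of $\Lambda_{A,\lambda}(x)\mathsf{S}_a(x')$, reduce the commutator to delta functions supported at the addable/removable boxes, pair the residue from $\lambda$ with that from $\lambda+\Abox$ using the recursion of $\widetilde{\mathcal{Z}}^{\D4}_A[\lambda]$, and telescope the resulting total $q_a$-difference against $\sum_{k\in\mathbb{Z}}$. The only discrepancy is a minor bookkeeping one (in the paper the addable-box pole sits at $x'=q_4\chi_{A,x}(\Abox)$ and the removable-box pole at $x'=\chi_{A,x}(\Abox)$, the reverse of your assignment), which does not affect the argument.
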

\begin{proof}
    Let us focus on $A=12$ and $a=4$ and derive the $\D4$ $qq$-character. Other cases are obtained using the quadrality. Using the formulas in \eqref{eq:app-contractions}, we have 
    \bea\label{eq:D4-D2contraction}
        \Lambda_{12,\lambda}(x)\mathsf{S}_{4}(x')=\left[q_{3}^{-1}\frac{\mathscr{Y}^{12}_{\lambda,x}(q_{12}x')}{\mathscr{Y}^{12}_{\lambda,x}(q_{123}x')}\right]^{x'}_{-} :\Lambda_{12,\lambda}(x)\mathsf{S}_{4}(x'):,\\
        \mathsf{S}_{4}(x')\Lambda_{12,\lambda}(x)=\left[q_{3}^{-1}\frac{\mathscr{Y}^{12}_{\lambda,x}(q_{12}x')}{\mathscr{Y}^{12}_{\lambda,x}(q_{123}x')}\right]^{x'}_{+}:\Lambda_{12,\lambda}(x)\mathsf{S}_{4}(x'):
    \eea
    where $\left[\,f(x)\,\right]^{x}_{\pm}$ means expansions of $f(x)$ in ${x}^{\mp}$ respectively. Assume that the $qq$-character takes the form of 
    \beq
        \mathsf{T}_{12}(x)=\sum_{\lambda\in\mathcal{P}}F^{\D4}_{12}[\lambda]\Lambda_{12,\lambda}(x),\quad F_{12}[\emptyset]=1
    \eeq
    where $F^{\D4}_{12}[\lambda]$ are some coefficients, then the commutation relation is 
    \begin{align}
    \begin{split}
    &[\mathsf{T}_{12}(x),\mathsf{S}_{4}(x')]\\
    =&-q_{3}^{-1}\left[\sum_{\lambda}F^{\D4}_{12}[\lambda]\sum_{\Abox\in A(\lambda)}\delta\left(\frac{\chi_{12,x}(\Bbox)}{q_{4}^{-1}x'}\right)\underset{x'=q_{3}^{-1}\chi_{12,x}(\Abox)}{\Res}{x'}^{-1}\frac{\mathscr{Y}_{\lambda,x}^{12}(x')}{\mathscr{Y}_{\lambda,x}^{12}(q_{3}x')}:\Lambda_{12,\lambda}(x)\mathsf{S}_{4}(x'):\right.\\
    &+\left.\sum_{\lambda}F^{\D4}_{12}[\lambda]\sum_{\Abox\in R(\lambda)}\delta\left(\frac{\chi_{12,x}(\Bbox)}{x'}\right)\underset{x'=q_{34}^{-1}\chi_{12,x}(\Abox)}{\Res}{x'}^{-1}\frac{\mathscr{Y}_{\lambda,x}^{12}(x')}{\mathscr{Y}_{\lambda,x}^{12}(q_{3}x')}:\Lambda_{12,\lambda}(x)\mathsf{S}_{4}(x'):\right].
\end{split}
\end{align}
Shifting the second term as $\lambda=\lambda'+\Bbox$, it will be
\beq
    -q_{3}^{-1}\sum_{\lambda'}\sum_{\Abox\in A(\lambda')}F^{\D4}_{12}[\lambda'+\Bbox\,]\delta\left(\frac{\chi_{12,x}(\Bbox)}{x'}\right)\underset{x'=q_{34}^{-1}\chi_{12,x}(\Abox)}{\Res}{x'}^{-1}\frac{\mathscr{Y}_{\lambda'+\Abox,x}^{12}(x')}{\mathscr{Y}_{\lambda'+\Abox,x}^{12}(q_{3}x')}:\Lambda_{12,\lambda'}(x)\mathsf{A}^{-1}(\chi_{12,x}(\Bbox))\mathsf{S}_{4}(x').
\eeq
Using 
\beq
    {:\Lambda_{12,\lambda'}(x)\mathsf{A}^{-1}(\chi_{12,x}(\Bbox))\mathsf{S}_{4}(\chi_{12,x}(\Bbox)):}={:\Lambda_{12,\lambda'}(x)\mathsf{S}_{4}(q_{4}\chi_{12,x}(\Bbox)):}
\eeq
and imposing the condition (see the recursion formula in Thm. \ref{app:thm-D4recursion})
\beq
    \frac{F^{\D4}_{12}[\lambda+\Bbox]}{F^{\D4}_{12}[\lambda]}=-\frac{\underset{x'=q_{3}^{-1}\chi_{12,x}(\Abox)}{\Res}{x'}^{-1}\frac{\mathscr{Y}_{\lambda,x}^{12}(x')}{\mathscr{Y}_{\lambda,x}^{12}(q_{3}x')}}{\underset{x'=q_{34}^{-1}\chi_{12,x}(\Abox)}{\Res}{x'}^{-1}\frac{\mathscr{Y}_{\lambda+\Abox,x}^{12}(x')}{\mathscr{Y}_{\lambda+\Abox,x}^{12}(q_{3}x')}}=q_{3}^{-1}\frac{\mathcal{Z}^{\D4}_{12}[\lambda+\Bbox\,;q_{3}]}{\mathcal{Z}_{12}^{\D4}[\lambda\,;q_{3}]}=\frac{\widetilde{\mathcal{Z}}_{12}^{\D4}[\lambda+\Bbox]}{\widetilde{\mathcal{Z}}_{12}^{\D4}[\lambda]}
\eeq
we then obtain
\bea
\relax[\mathsf{T}_{12}(x),\mathsf{S}_{4}(x')]&=-q_{3}^{-1}\sum_{\lambda\in\mathcal{P}}\sum_{\Abox\in A(\lambda)}F^{\D4}_{12}[\lambda]\underset{x'=q_{3}^{-1}\chi_{12,x}(\Abox)}{\Res}{x'}^{-1}\frac{\mathscr{Y}_{\lambda,x}^{12}(x')}{\mathscr{Y}_{\lambda,x}^{12}(q_{3}x')}:\Lambda_{12,\lambda}(x)\mathsf{S}_{4}(q_{4}\chi_{12,x}(\Bbox)):\\
&\qquad\qquad \times \left(\delta\left(\frac{q_{4}\chi_{12,x}(\Bbox)}{x'}\right)-\delta\left(\frac{\chi_{12,x}(\Bbox)}{x'}\right)\right)
\eea
which is a total difference. Therefore, under the condition $F_{12}[\lambda]=\widetilde{\mathcal{Z}}_{12}^{\D4}[\lambda]$, the $qq$-character commutes with the screening charge $\mathscr{Q}_{4}(x')$:
\beq
    [\mathsf{T}_{12}(x),\mathscr{Q}_{4}(x')]=0.
\eeq
\end{proof}

\begin{remark}
Note that these D4 $qq$-characters $\mathsf{T}_{A}(x)\,(A\in\six)$ are the operator version of the $\D4$ $qq$-characters introduced in \eqref{eq:D4qqpart}:
\bea
\mathsf{T}_{\bar{A}}(x)\quad \longleftrightarrow  \quad\widehat{\mathscr{T}}^{A}(x),\,\widehat{\mathscr{T}}^{A\vee}(x),\quad A\in \six.
\eea
The highest weight and the root current have the following correspondence:
\bea
\mathsf{X}_{A}(x)\quad &\longleftrightarrow \quad \widehat{\mathscr{Y}}^{A}(q_{A}x),\widehat{\mathscr{Y}}^{A\vee}(q_{A}x),\quad A\in\six\\
\mathsf{A}(x)^{-1}\quad &\longleftrightarrow \quad  \widehat{\mathscr{A}}(x)^{-1},\widehat{\mathscr{A}}^{\,\vee}(x)^{-1}.
\eea
\end{remark}

\subsection{Spiked instantons and D4 \texorpdfstring{$qq$}{qq}-characters}\label{sec:D4qqandspikedinst}
Similar to the integral formula, the expanded version of the spiked instanton partition function can be expressed using the $\D4$ $qq$-characters. 
\begin{lemma}\label{lemm:D4ope}
The operator product of $\{\Lambda_{A,\lambda}(x)\}_{A\in\six}$ is
\beq
    \Lambda_{B,\mu}(x')\Lambda_{A,\lambda}(x)=\mathcal{Z}^{\D4\text{-}\D4}_{\text{1-loop}}(x,A\,|\,x',B)\mathcal{Z}^{\D4\text{-}\D4}_{A|B}(x,\lambda\,|\,x',\mu):\Lambda_{B,\mu}(x')\Lambda_{A,\lambda}(x):.
\eeq
When $A=B$, it gives the vector and adjoint hypermultiplet contributions, while when $A\neq B$, it gives the bifundamental-like contributions connecting gauge theories defined on different subspaces. 
\end{lemma}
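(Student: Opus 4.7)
The plan is a direct computation of the OPE by decomposing each $\Lambda_{A,\lambda}(x)$ into its constituent factors and collecting the pairwise contractions. Recall
\[
\Lambda_{A,\lambda}(x) \;=\; {:\mathsf{X}_A(x)\prod_{\Abox\in\lambda}\mathsf{A}^{-1}\bigl(\chi_{A,x}(\Bbox)\bigr):} ,
\]
so the product $\Lambda_{B,\mu}(x')\Lambda_{A,\lambda}(x)$ splits into four blocks of contractions: the single $\mathsf{X}_B(x')\mathsf{X}_A(x)$ contraction, the $\mathsf{X}_B(x')$ against each $\mathsf{A}^{-1}(\chi_{A,x}(\Bbox))$, the $\mathsf{A}^{-1}(\chi_{B,x'}(\BboxF))$ against $\mathsf{X}_A(x)$, and finally the $\mathsf{A}^{-1}\cdot\mathsf{A}^{-1}$ contractions indexed by $(\Bbox,\BboxF)\in\lambda\times\mu$.

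The first block is handled by the definition of the $\D4$ $qq$-character vertex operators in section~\ref{sec:spikedLMNS}: the contraction $\mathsf{X}_B(x')\mathsf{X}_A(x)$ produces exactly $\mathcal{Z}^{\D4\tbar\D4}_{\text{1-loop}}(x,A\,|\,x',B)$ of~\eqref{eq:D4oneloop}, by construction of the commutation relations $[\mathsf{x}_{A,n},\mathsf{x}_{B,m}]$. The $\mathsf{A}^{-1}\mathsf{A}^{-1}$ block produces $\prod_{\Abox\in\lambda,\AboxF\in\mu}\mathcal{A}_{\mathbb{C}^4}(\chi_{A,x}(\Bbox)/\chi_{B,x'}(\BboxF))^{-1}$ using \eqref{eq:D0ope}, which matches the last factor in $\mathcal{Z}^{\D4\tbar\D4}_{A|B}$ from~\eqref{eq:D4spikedpartition2}. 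For the two mixed blocks, I would use the contraction formulas $\mathsf{A}(x)\mathsf{X}_A(\nu)=\mathscr{S}_{\bar A}(\nu/x)^{-1}:(\cdots):$ and their reverses from section~\ref{sec:spikedLMNS}, so that
\[
\prod_{\Abox\in\lambda}\mathsf{X}_B(x')\mathsf{A}^{-1}(\chi_{A,x}(\Bbox)) \;\propto\; \prod_{\Abox\in\lambda}\mathscr{S}_{\bar B}\!\bigl(q_B\chi_{A,x}(\Bbox)/x'\bigr) ,
\]
and similarly the other mixed block produces $\prod_{\AboxF\in\mu}\mathscr{S}_{\bar A}(x/\chi_{B,x'}(\BboxF))$ (after properly normalizing zero-modes). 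These are precisely the two $\mathscr{S}$-product factors appearing in $\mathcal{Z}^{\D4\tbar\D4}_{A|B}(x,\lambda\,|\,x',\mu)$.

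The bookkeeping step is then to verify that zero-mode phases cancel consistently: the zero-mode conditions~\eqref{eq:D4D4zero} and \eqref{eq:D0D4zero} were imposed precisely so that the rational functions coming from the two orderings of each pairwise contraction agree after analytic continuation (see \eqref{eq:reflec_structfunc}), so the normal-ordered product on the right-hand side is unambiguous.

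The main obstacle will be the case $A=B$, where the four blocks must combine to reproduce the 5d Nekrasov factor $\mathcal{Z}^{\D4\tbar\D4}_{A|A}$ of \eqref{eq:D4spikedpartition2}; here the $\mathscr{S}_{\bar A}$ factors from the mixed blocks, the $\mathcal{A}_{\mathbb{C}^4}=\mathscr{S}_{\bar A}(x)\mathscr{S}_A(x)$-type identities, and the trivial $\U(1)$ contribution from Appendix~\ref{app:D4U1partitionfunction} all interlock. The quickest way to close this case is to re-express the entire collection of contractions as the index $\mathbb{I}$ of a single character built from the observable sheaves $\bfY_A[x,\lambda]$ and $\bfY_B[x',\mu]$ of~\eqref{eq:D4Ybundle}, and then recognise the resulting character as exactly the one giving $\mathcal{Z}^{\D4\tbar\D4}_{A|B}\cdot\mathcal{Z}^{\D4\tbar\D4}_{\text{1-loop}}$ in section~\ref{sec:spiked_partitionfunct}; this bypasses manipulating the $\mathscr{S}$-functions box-by-box.
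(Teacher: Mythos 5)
Your proposal is correct and is essentially the argument the paper has in mind: the lemma is stated without proof precisely because it follows immediately from the pairwise contraction formulas of section~\ref{sec:spikedLMNS} applied to the four blocks you identify, and each block matches term-by-term the factors in \eqref{eq:D4oneloop} and \eqref{eq:D4spikedpartition2}. Two minor corrections: the paper explicitly does \emph{not} impose the zero-mode condition \eqref{eq:D4D4zero} (see the footnote there — with the explicit zero modes of section~\ref{sec:zeromodes} the $\mathsf{x}_{A,0}\mathsf{x}_{B,0}$ contraction is simply $1$ for the fixed ordering in the lemma, so nothing needs to cancel); and the case $A=B$ requires no special treatment, since $\mathcal{Z}^{\D4\tbar\D4}_{A|A}$ as defined in \eqref{eq:D4spikedpartition2} is already the same combination of $\mathscr{S}_{\bar A}$ and $\mathcal{A}_{\mathbb{C}^4}$ factors produced by the contractions — the identification with the vector and adjoint multiplets is an interpretation, not an extra identity to verify.
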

\begin{theorem}\label{thm:spiked-qq-BPSCFT}
The gauge origami partition function of the spiked instantons is written using the $\D4$ $qq$-characters:
\bea
    \mathcal{Z}^{\D4}_{\text{1-loop}}\mathcal{Z}^{\D4}_{\text{inst.}}=\sum_{\underline{\vec{\lambda}}}\mathfrak{q}^{|\underline{\vec{\lambda}}|}\mathcal{Z}^{\D4}_{\text{1-loop}}\mathcal{Z}_{\text{spk.inst.}}^{\D4}[\underline{\vec{v}},\underline{\vec{\lambda}}]=\bra{0}\prod_{A\in\six}\prod_{\alpha=1}^{n_{A}}\mathsf{T}_{A}(v_{A,\alpha})\ket{0}
\eea    
Depending on the value of $A,B\in\six$, we get different instanton contributions:
\begin{itemize}
    \item $A=B$: instantons in $\widehat{A}_0$ quiver gauge theory
    \item $A\cap B \in\four$: folded instantons
    \item $B=\bar{A}$: crossed instantons
\end{itemize}
To summarize, we have the following table of BPS/CFT correspondence.
\begin{align}
    \renewcommand\arraystretch{1.2}{
    \begin{tabular}{c|c}\toprule
        BPS &  CFT\\
     \hline  5d $\mathcal{N}=1^{\ast}$ U(1) on $\mathbb{C}^{2}_{ab}\times \mathbb{S}^{1}$ ($\D4_{ab}\times 1$) & $\bra{0}\mathsf{T}_{ab}(v)\ket{0}$ \\
      5d $\mathcal{N}=1^{\ast}$ U($n_{ab}$) on $\mathbb{C}^{2}_{ab}\times \mathbb{S}^{1}$ ($\D4_{ab}\times n_{ab}$)   & $\bra{0}\mathsf{T}_{ab}(v_{n_{ab}})\cdots\mathsf{T}_{ab}(v_{2})\mathsf{T}_{ab}(v_{1})\ket{0}$\\
      crossed instanton: D4$_{12}$-D4$_{34}$-D0 & $\bra{0}\mathsf{T}_{12}(v)\mathsf{T}_{34}(v')\ket{0}$\\
      folded instanton: D4$_{12}$-D4$_{13}$-$\D0$ & $\bra{0}\mathsf{T}_{12}(v)\mathsf{T}_{13}(v')\ket{0}$\\
      gauge origami of spiked instantons & $\bra{0}\prod\limits_{A\in\six}\prod\limits_{\alpha=1}^{n_{A}}\mathsf{T}_{A}(v_{A,\alpha})\ket{0}$
      \\ \bottomrule
    \end{tabular}}
\end{align}
\end{theorem}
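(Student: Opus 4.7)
The plan is to expand the right-hand side directly and match the result term by term with the spiked-instanton partition function from equations \eqref{eq:D4spikedpartition1}--\eqref{eq:D4spikedpartition2}. First I would substitute the definition
\begin{equation*}
\mathsf{T}_{A}(v_{A,\alpha}) = \sum_{\lambda_{A}^{(\alpha)} \in \mathcal{P}} \mathfrak{q}^{|\lambda_{A}^{(\alpha)}|}\, \widetilde{\mathcal{Z}}_{A}^{\D4}[\lambda_{A}^{(\alpha)}]\,  :\!\Lambda_{A,\lambda_{A}^{(\alpha)}}(v_{A,\alpha})\!:
\end{equation*}
for every $(A,\alpha)$, fix some total order on the labels $(A,\alpha)$ compatible with the one used in \eqref{eq:D4spikedpartition1}--\eqref{eq:D4spikedpartition2}, and interchange the sums with the operator product to obtain
\begin{equation*}
\bra{0}\prod_{A,\alpha} \mathsf{T}_{A}(v_{A,\alpha})\ket{0} = \sum_{\underline{\vec{\lambda}}} \mathfrak{q}^{|\underline{\vec{\lambda}}|} \Bigl(\prod_{A,\alpha} \widetilde{\mathcal{Z}}_{A}^{\D4}[\lambda_{A}^{(\alpha)}]\Bigr) \bra{0} \prod_{A,\alpha} :\!\Lambda_{A,\lambda_{A}^{(\alpha)}}(v_{A,\alpha})\!: \ket{0}.
\end{equation*}

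The next step is to normal-order the ordered product of composite operators $:\!\Lambda:$ using Lemma~\ref{lemm:D4ope} iteratively. Each pairwise move produces precisely the factor $\mathcal{Z}^{\D4\tbar\D4}_{\text{1-loop}}(v_{A,\alpha},A\,|\,v_{B,\beta},B)\,\mathcal{Z}^{\D4\tbar\D4}_{A|B}(v_{A,\alpha},\lambda_{A}^{(\alpha)}\,|\,v_{B,\beta},\lambda_{B}^{(\beta)})$ for every pair $(B,\beta)>(A,\alpha)$. The totally normal-ordered operator then has vacuum expectation value $1$ once the zero modes determined in section~\ref{sec:zeromodes} are used (the vacuum is annihilated by all positive modes and the relevant zero-mode contractions were normalized to $1$). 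Gathering these OPE factors produces precisely the one-loop product $\mathcal{Z}^{\D4}_{\text{1-loop}}$ of equation~\eqref{eq:D4oneloop} and the bifundamental/folded/crossed contributions $\mathcal{Z}^{\D4\tbar\D4}_{A|B}$ of~\eqref{eq:D4spikedpartition2}.

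It then remains to identify the leftover factors $\prod_{A,\alpha}\widetilde{\mathcal{Z}}^{\D4}_{A}[\lambda_{A}^{(\alpha)}]$ together with the diagonal OPE factors $\mathcal{Z}^{\D4\tbar\D4}_{A|A}(v_{A,\alpha},\lambda_{A}^{(\alpha)}\,|\,v_{A,\beta},\lambda_{A}^{(\beta)})$ with the full $\U(n_{A})$ vector plus adjoint hyper contribution that appears in $\mathcal{Z}^{\D4}_{\text{spk.inst.}}[\underline{\vec{v}},\underline{\vec{\lambda}}]$. This is where the sign/zero-mode conventions matter most: by construction of the $\D4$ $qq$-character in section~\ref{sec:D4qqaffinequiver}, the coefficient $\widetilde{\mathcal{Z}}^{\D4}_{A}[\lambda]$ is exactly the $\U(1)$ affine-quiver partition function, so combining it with the diagonal OPE gives $\widetilde{\mathcal{Z}}^{\D4}_{A}[\lambda]$ times the Nekrasov-like bifundamental inside a single stack, which by standard factorization equals the $\U(n_{A})$ contribution in~\eqref{eq:D4spikedpartition1}. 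Comparing term by term then yields the claimed identity.

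The main obstacle I expect is bookkeeping rather than conceptual: first, verifying that the sum-over-Young-diagrams in each $\mathsf{T}_{A}$ can indeed be interchanged with the operator product (this is a formal power series identity in the instanton parameters $\mathfrak{q}$ and in the Coulomb moduli, and one should check convergence of OPEs in a suitable chamber of $|q_a|$); and second, tracking the zero-mode contributions carefully so that no spurious $q$-shift or sign appears when matching with the prefactors of~\eqref{eq:D4spikedpartition1}--\eqref{eq:D4spikedpartition2}. Once the zero-mode dictionary from Section~\ref{sec:zeromodes} and the ordering convention fixed in Lemma~\ref{lemm:D4ope} are consistently applied, the identity follows.
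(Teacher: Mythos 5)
Your proposal is correct and follows essentially the same route the paper intends: expand each $\mathsf{T}_{A}(v_{A,\alpha})$ via its definition, normal-order pairwise using Lemma~\ref{lemm:D4ope} to extract the one-loop and $\mathcal{Z}^{\D4\tbar\D4}_{A|B}$ factors, and note that the remaining $\U(1)$ coefficients $\widetilde{\mathcal{Z}}^{\D4}_{A}[\lambda_{A}^{(\alpha)}]$ together with the diagonal OPE factors reproduce exactly the factorized form \eqref{eq:D4spikedpartition1}--\eqref{eq:D4spikedpartition2}, which is already written as that product, so no further factorization step is needed.
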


\begin{remark}
    It has been argued in \cite{Konno:2021zvl} that the partition function of the 5d $\mathcal{N}=1^{\ast}$ theory on $\mathbb{C}^{2} \times \mathbb{S}^{1}$ is given by the $qq$-character correlation function, and the 6d theory partition function is given by the corresponding torus correlation function.
\end{remark}

\subsection{Spiked instantons and screening currents}\label{sec:spikedandscreening}
Since the two-dimensional partition can be understood using the $(1,1)$-type description, it is natural to expect that the partition functions have an operator representation using the screening currents. Using the character form in \eqref{eq:spiked_D4toD2}, we omit the singular terms that might occur and extract the square root part of the total index by specifying an order:
\beq
    \mathbf{V}=\sum_{A,B\in\six}\frac{\bfP_{\four}}{\bfP_{\bar{s}(A)}^{\vee}\bfP_{\bar{s}(B)}}\mathbf{X}_{A}^{\vee}\bfX_{B}\rightarrow \mathbf{v}=\sum_{\substack{(x,A)<(x',B)\\x\in\mathcal{X}_{A},x'\in\mathcal{X}_{B}}}\frac{\bfP_{\four}}{\bfP_{\bar{s}(A)}^{\vee}\bfP_{\bar{s}(B)}}\left(\frac{x'}{x}\right).\label{eq:D5toD3halfch}
\eeq
Up to one-loop perturbative factors, $\mathbb{I}[\mathbf{v}]$ is identical to the partition function in \eqref{eq:D4spikedpartition1}, \eqref{eq:D4spikedpartition2}. Using \eqref{eq:D2op}, we then have the following BPS/CFT correspondence.
\begin{proposition}
The index of \eqref{eq:D5toD3halfch} is the vacuum expectation value of the screening currents and is equivalent to the partition function in \eqref{eq:D4spikedpartition1} and \eqref{eq:D4spikedpartition2} up to one-loop perturbative factors.
\beq\label{eq:D4screening}
   \mathbb{I}[\mathbf{v}]=\bra{0}\prod_{A\in\six,x\in\mathcal{X}_{A}}\mathsf{S}_{\bar{s}(A)}(x)\ket{0}\simeq \mathcal{Z}^{\D4}_{\text{1-loop}}\mathcal{Z}^{\D4}_{\text{spk.inst.}}[\underline{\vec{v}},\underline{\vec{\lambda}}].
\eeq
Note that we are implicitly defining an ordering in the products of the screening currents as 
\beq
    \cdots \mathsf{S}_{\bar{s}(B)}(x')\cdots \mathsf{S}_{\bar{s}(A)}(x)\cdots \Longleftrightarrow (x,A)<(x',B).
\eeq
The symbol ``$\simeq$'' means that they are identical up to one-loop perturbative factors.
\end{proposition}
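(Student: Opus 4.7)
The plan is to verify the two claimed identities in \eqref{eq:D4screening} separately and then chain them. First I would compute
\[
\bra{0}\!\!\prod_{(x,A):\,x\in\mathcal{X}_A}\!\!\mathsf{S}_{\bar s(A)}(x)\ket{0},
\]
with operators ordered by the lexicographic order on pairs $(x,A)$ specified in the statement. Since each $\mathsf{S}_b(x)$ is a pure vertex operator, Wick's theorem reduces the vacuum correlator to a product of pairwise contractions, one for each pair $(x,A)<(x',B)$ in $\bigsqcup_A \mathcal{X}_A$, together with a product of zero-mode contractions $\wick{\c{\mathsf{s}}_{\bar s(B),0}(x')\c{\mathsf{s}}_{\bar s(A),0}(x)}$. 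The zero-mode contractions reproduce exactly the perturbative one-loop contribution $\mathcal{Z}^{\D4}_{\text{1-loop}}$ up to topological prefactors, and can thus be absorbed into the symbol $\simeq$.

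The core step is to identify each non-zero-mode contraction with the index of a single summand in $\mathbf{v}$. Using the defining commutation relation
\[
[\mathsf{s}_{a,n},\mathsf{s}_{b,m}]=-\frac{1}{n}\frac{\bfP_{\four}^{[n]}}{\bfP_a^{[-n]}\bfP_b^{[n]}}\delta_{n+m,0}
\]
and the BCH identity, the contraction of $\mathsf{S}_{\bar s(B)}(x')$ past $\mathsf{S}_{\bar s(A)}(x)$ produces the formal series $\exp\!\bigl(-\sum_{n>0}\tfrac{1}{n}\tfrac{\bfP_{\four}^{[n]}}{\bfP_{\bar s(B)}^{[-n]}\bfP_{\bar s(A)}^{[n]}}(x/x')^n\bigr)$. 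Using the self-duality $\bfP_{\four}^{[-n]}=\bfP_{\four}^{[n]}$ (a consequence of $q_1q_2q_3q_4=1$) together with the reflection identity $\mathbb{I}[\bfP_{\four}y]=\mathbb{I}[\bfP_{\four}^{\vee}y^{\vee}]$ established in \eqref{eq:reflec_origamiindex}, this factor is the analytic continuation of
\[
\mathbb{I}\!\left[\frac{\bfP_{\four}}{\bfP_{\bar s(A)}^{\vee}\bfP_{\bar s(B)}}\Bigl(\frac{x'}{x}\Bigr)\right],
\]
which is precisely the index of the summand of $\mathbf{v}$ indexed by the pair $(x,A)<(x',B)$ in \eqref{eq:D5toD3halfch}. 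Taking the product over all ordered pairs then yields the first equality $\mathbb{I}[\mathbf{v}]=\bra{0}\prod_{A,x\in\mathcal{X}_A}\mathsf{S}_{\bar s(A)}(x)\ket{0}$.

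For the second equality it remains to identify $\mathbb{I}[\mathbf{v}]$ with $\mathcal{Z}^{\D4}_{\text{1-loop}}\mathcal{Z}^{\D4}_{\text{spk.inst.}}[\underline{\vec v},\underline{\vec\lambda}]$. This is essentially a bookkeeping exercise: by the $(1,1)$-type decomposition $\bfY_A=\bfP_{s(A)}\bfX_A$ of section~\ref{sec:decomp5d}, the full bundle $\bfV=\bfY^\vee\bfY/\bfP_{\four}$ becomes the unordered sum \eqref{eq:spiked_D4toD2}. The restriction to ordered pairs $(x,A)<(x',B)$ amounts to extracting one of the two square-root halves, just as in \eqref{eq:D4squareroot}; the symmetric diagonal contributions that are dropped are exactly the singular terms alluded to after \eqref{eq:D5toD3halfch}. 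After this restriction, the index of $\mathbf{v}$ factorises into the Nekrasov factors appearing in \eqref{eq:D4spikedpartition1}--\eqref{eq:D4spikedpartition2} (for the cross pairs $A\neq B$) and into the $\U(1)$ vector/adjoint contributions $\widetilde{\mathcal Z}^{\D4}_A[\lambda^{(\alpha)}_A]$ (for the pairs $A=B$, $\alpha\leq\beta$), with the leftover zero-mode and boundary contributions matching the one-loop factor $\mathcal{Z}^{\D4}_{\text{1-loop}}$ computed in \eqref{eq:D4oneloop}.

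The main subtlety I expect is the analytic continuation step: the two natural expansions of each pairwise contraction (the one coming from the VEV in $|x/x'|<1$ and the one coming from $\mathbb{I}[\cdot]$ in $|x'/x|<1$) are only equal as rational functions after regularisation. One has to either work at the level of formal characters (where the equality is automatic once one invokes the self-duality of $\bfP_{\four}$) or carefully specify a common expansion domain determined by the stability parameters of the moduli problem. A second, lesser, subtlety is to track the overall sign/zero-mode prefactors arising from the $\mathsf{s}_{a,0}(x)$ pieces in Prop.~\ref{app:prop:contraction_formula}; these are constant in the summation variables and are precisely the prefactors that make the $\simeq$ relation (rather than an equality) the correct one.
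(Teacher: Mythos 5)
Your proposal is correct and follows essentially the same route the paper takes: identify each ordered pairwise contraction of the screening currents with the index of the corresponding summand of $\mathbf{v}$ in \eqref{eq:D5toD3halfch} (using $\bfP_{\four}^{[-n]}=\bfP_{\four}^{[n]}$), and then recognise $\mathbb{I}[\mathbf{v}]$ as the square-root half of $\bfV$ rewritten via the $(1,1)$-type decomposition, so that it reproduces \eqref{eq:D4spikedpartition1}--\eqref{eq:D4spikedpartition2} up to perturbative factors. The only small inaccuracy is attributing the one-loop factor to the zero-mode contractions — it actually arises from the oscillator contractions of the asymptotic ($\lambda=\emptyset$) part of the sets $\mathcal{X}_A$, while the zero-mode contractions of the $\mathsf{S}$-operators are trivial by \eqref{eq:D2D2zero} — but since the claim is only up to one-loop factors this does not affect the argument.
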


\paragraph{Fusion of $\D2$ $qq$-characters}
Since the gauge origami partition function can be written using both the screening currents and the $\D4$ $qq$-characters, one would like to ask if we have any relation connecting both descriptions. An interesting fact is that up to one-loop perturbative factors, the $\D4$ $qq$-characters are just infinite products of the screening charges.
\begin{theorem}\label{thm:D2toD4fusion}
    The $\D4$ $qq$-characters are infinite products of $\D2$ $qq$-characters (screening charges):
    \beq
        \mathsf{T}_{ab}(x)\simeq  \overrightarrow{\prod_{i=1}^{\infty}}\mathscr{Q}_{b}(xq_{a}^{i-1}),\quad a\neq b\in\four,
    \eeq
    where the product is $\overrightarrow{\prod\limits_{i=1}^{\infty}}f(x_{i})=f(x_{1})f(x_{2})\cdots$.
\end{theorem}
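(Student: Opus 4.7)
The plan is to reduce the statement to the uniqueness of the $qq$-character established in Theorem~\ref{thm:D4qq-commute}. Concretely, I would show that the infinite ordered product on the right-hand side (a) commutes with the two transverse screening charges $\mathscr{Q}_{c}(x')$ for $c \in \bar{ab}$, (b) expands as a sum over 2d partitions in the basis of monomials $\Lambda_{ab,\lambda}(x)$ with normalized highest weight $\mathsf{X}_{ab}(x)$. Once (a) and (b) are in place, the recursion derived in the proof of Theorem~\ref{thm:D4qq-commute} forces the coefficients of the $\Lambda_{ab,\lambda}(x)$ expansion to equal $\widetilde{\mathcal{Z}}^{\D4}_{ab}[\lambda]$ up to an overall one-loop perturbative factor, which is absorbed into ``$\simeq$''.

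Step (a) is essentially automatic: by Theorem~\ref{thm:D2qq-commute}, $[\mathscr{Q}_{b}(y),\mathscr{Q}_{c}(y')]=0$ for every $y,y'$ whenever $c \ne b$, so any ordered product of $\mathscr{Q}_{b}$'s commutes with $\mathscr{Q}_{c}$ for $c \in \bar{ab}$. Step (b) proceeds by expanding
\beq
\overrightarrow{\prod_{i=1}^{\infty}}\mathscr{Q}_{b}(xq_{a}^{i-1}) = \sum_{\underline{k}\in\mathbb{Z}^{\infty}} \overrightarrow{\prod_{i=1}^{\infty}}\mathsf{S}_{b}(xq_{a}^{i-1}q_{b}^{k_{i}}),
\eeq
and then invoking the relations \eqref{eq:oprelation1} together with $\mathsf{S}_{b}(y)=\mathsf{s}_{b,0}(y):\mathsf{X}_{ab}(y)/\mathsf{X}_{ab}(q_{a}y):$. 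The normal-ordered operator part telescopes in the $q_{a}$-direction and, after using $\mathsf{A}(y)=:\mathsf{S}_{b}(y)/\mathsf{S}_{b}(q_{b}y):$ to convert each vertical stack of $k_{i}$ screening-current shifts into $k_{i}$ insertions of $\mathsf{A}^{-1}$ at the $q$-contents $\chi_{ab,x}(i,j)$, one obtains, for every sequence with $k_{1}\ge k_{2}\ge\cdots\ge 0$, precisely $\Lambda_{ab,\lambda}(x)$ with $\lambda_{i}=k_{i}$. This is exactly the $(1,1)$-type decomposition of a Young diagram recalled in section~\ref{sec:decomp5d}.

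The sequences violating the Young-diagram constraint must then be shown to drop out. I would compute the $\mathsf{S}_{b}\tbar\mathsf{S}_{b}$ operator product from the commutator $[\mathsf{s}_{b,n},\mathsf{s}_{b,m}]=-\frac{1}{n}\frac{\bfP_{\four}^{[n]}}{\bfP_{b}^{[-n]}\bfP_{b}^{[n]}}\delta_{n+m,0}$ and identify its explicit zeros. The shifts relating consecutive columns of a non-admissible sequence will land on a zero of the OPE (the ``inadmissibility'' being detected by $k_{i}<k_{i+1}$ forcing a pole/zero cancellation that produces an actual zero in the normal-ordered coefficient), so those terms vanish. Combined with (a) and the matching of the highest weight at $\underline{k}=\underline{0}$, the coefficient of $\Lambda_{ab,\lambda}(x)$ is uniquely fixed by the same residue recursion as in the proof of Theorem~\ref{thm:D4qq-commute}, yielding the claim.

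The main obstacle will be the careful analytic handling of the infinite product: justifying the telescoping of zero-mode prefactors and the normal-ordering reshuffling (which conspire into the one-loop factor hidden behind $\simeq$), and rigorously confirming that the zeros of the $\mathsf{S}_{b}\tbar\mathsf{S}_{b}$ OPE indeed truncate the integer-sequence sum to 2d partitions (and not merely to some weaker condition). A convenient regularization is to work in the analytic region $|q_{a}|<1$ so that the $q_{a}^{i-1}$ tower converges, and to treat the $k_{i}\to 0$ tail at infinity as a boundary condition inherited from the vacuum expectation value in the correlation functions where the identity is ultimately used.
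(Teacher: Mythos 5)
Your proposal is correct in substance, and its operator--level core coincides with what the paper does: the telescoping identification $:\prod_{i=1}^{\infty}\mathsf{S}_{b}(xq_{a}^{i-1}q_{b}^{\lambda_{i}}):\,\sim\Lambda_{ab,\lambda}(x)$ via \eqref{eq:oprelation1} (the $(1,1)$-type decomposition of section~\ref{sec:decomp5d}), and the truncation of the sum over integer sequences to non-increasing ones by the vanishing $\mathsf{S}_{b}(xq_{a}^{i-1}q_{b}^{k_{i}})\mathsf{S}_{b}(xq_{a}^{j-1}q_{b}^{k_{j}})=0$ for $i<j$, $k_{i}<k_{j}$, are exactly the two ingredients of the paper's sketch. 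Where you genuinely depart from the paper is in fixing the coefficients: the paper evaluates the contractions of the ordered screening currents directly and recognizes the $\U(1)$ partition function $\widetilde{\mathcal{Z}}^{\D4}_{ab}[\lambda]$ times one-loop factors (deferring the computation to the quiver W-algebra literature), whereas you bypass that computation by observing that the product of $\mathscr{Q}_{b}$'s commutes with $\mathscr{Q}_{c}$ for $c\in\bar{ab}$ for free (Theorem~\ref{thm:D2qq-commute}) and then invoking the residue recursion from the proof of Theorem~\ref{thm:D4qq-commute}, which pins the coefficients uniquely once the expansion is known to live in the basis $\{\Lambda_{ab,\lambda}(x)\}$ with normalized highest weight. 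This is a legitimate and arguably cleaner route, since it trades an explicit OPE computation for a structural uniqueness statement already established in the paper; the paper's route has the advantage of producing the one-loop normalization explicitly rather than absorbing it into ``$\simeq$''.

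Two points deserve a word of caution, both at the same level of rigor as the paper's own sketch. First, the OPE zeros only enforce $k_{1}\ge k_{2}\ge\cdots$, not $k_{i}\ge 0$; sequences with a nonzero constant tail survive the zeros and correspond to shifted highest weights $\mathsf{X}_{ab}(q_{b}^{c}x)$, so your ``boundary condition at infinity'' is doing real work in isolating the $c=0$ sector. For your uniqueness argument to apply to that sector alone, you should note that the commutator with $\mathsf{S}_{c}(x')$ only relates $\lambda$ and $\lambda\pm\Bbox$ and hence does not mix tail sectors, so commutativity of the full product implies commutativity sector by sector. Second, note that individual terms $\mathsf{S}_{b}(q_{b}^{k}y)$ do \emph{not} commute with $\mathscr{Q}_{c}(x')$ (only the full $q_{b}$-sum does, cf.\ the Remark after Theorem~\ref{thm:D2qq-commute}), so the commutativity you are exploiting is not vacuous and genuinely constrains the coefficients; this is what makes the uniqueness step nontrivial and hence usable.
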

\begin{proof}
We only give a sketch of the proof (see \cite{Kimura:2015rgi} and \cite[eq.~(6.2.23)--(6.2.27)]{Kimura:2020jxl} for a similar discussion). The vertex operator $\Lambda_{ab,\lambda}(x)$ of the $\D4$ $qq$-character satisfies
\beq
    \Lambda_{ab,\lambda}(x)\sim {:\prod_{i=1}^{\infty}\mathsf{S}_{b}(xq_{a}^{i-1}q_{b}^{\lambda_{i}}):}
\eeq
where the equality is up to zero modes. By direct computation, one can show that up to one-loop perturbative factors, we have\footnote{Strictly speaking, we have to be careful of the zero-modes appearing on both hand sides. Moreover, the infinite product of the screening currents needs to be regularized properly.}
\beq
    \mathsf{T}_{ab}(x)\simeq\sum_{\lambda\in\mathcal{P}}\overrightarrow{\prod_{i=1}^{\infty}}\mathsf{S}_{b}(xq_{a}^{i-1}q_{b}^{\lambda_{i}}).\label{eq:D4screeningrep}
\eeq
Using the property that for $i\leq j$ and $\lambda_{i}<\lambda_{j}$
\beq
    \mathsf{S}_{b}(xq_{a}^{i-1}q_{b}^{\lambda_{i}})\mathsf{S}_{b}(xq_{a}^{j-1}q_{b}^{\lambda_{j}})=0,
\eeq
we obtain
\beq
    \mathscr{Q}_{b}(x)\mathscr{Q}_{b}(q_{a}x)=\sum_{k\in\mathbb{Z}}\mathsf{S}_{b}(xq_{b}^{k})\sum_{l\in\mathbb{Z}}\mathsf{S}_{b}(xq_{a}q_{b}^{l})=\sum_{k\geq l}\mathsf{S}_{b}(xq_{b}^{k})\mathsf{S}_{b}(xq_{a}q_{b}^{l}).
\eeq
By computing the nontrivial coefficients appearing after taking the contractions of the screening currents, one obtains the statement.
\end{proof}
\begin{corollary}
    The total partition function is written using infinite products of screening charges
    \beq
        \mathcal{Z}^{\D4}_{\text{1-loop}}\mathcal{Z}^{\D4}_{\text{inst.}}=\sum_{\underline{\vec{\lambda}}}\mathfrak{q}^{|\underline{\vec{\lambda}}|}\mathcal{Z}^{\D4}_{\text{1-loop}}\mathcal{Z}_{\text{spk.inst.}}^{\D4}[\underline{\vec{v}},\underline{\vec{\lambda}}]\simeq \bra{0}\prod_{A\in\six}\prod_{\alpha=1}^{n_{A}}\overrightarrow{\prod_{i=1}^{\infty}}\mathscr{Q}_{\bar{s}(A)}(v_{A,\alpha}q_{s(A)}^{i-1})\ket{0}.
    \eeq
\end{corollary}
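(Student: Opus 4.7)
The plan is to obtain this corollary as a direct consequence of the two theorems just established: Theorem~\ref{thm:spiked-qq-BPSCFT}, which identifies the total partition function as a vacuum expectation value of ordered products of $\D4$ $qq$-characters, and Theorem~\ref{thm:D2toD4fusion}, which expresses each $\D4$ $qq$-character (up to one-loop factors) as an ordered infinite product of screening charges. The main step is to justify that substituting the second identity into the first produces the desired correlation function without altering the one-loop normalization $\mathcal{Z}^{\D4}_{\text{1-loop}}$.

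First, I would invoke Theorem~\ref{thm:spiked-qq-BPSCFT} to write
\beq
\mathcal{Z}^{\D4}_{\text{1-loop}}\mathcal{Z}^{\D4}_{\text{inst.}}=\bra{0}\prod_{A\in\six}\prod_{\alpha=1}^{n_{A}}\mathsf{T}_{A}(v_{A,\alpha})\ket{0}.
\eeq
Then, applying Theorem~\ref{thm:D2toD4fusion} to each factor $\mathsf{T}_{A}(v_{A,\alpha})$ with $A=s(A)\bar{s}(A)$, I substitute
\beq
\mathsf{T}_{A}(v_{A,\alpha})\simeq \overrightarrow{\prod_{i=1}^{\infty}}\mathscr{Q}_{\bar{s}(A)}(v_{A,\alpha}q_{s(A)}^{i-1}),
\eeq
yielding the claimed identity up to the ambiguity encoded in ``$\simeq$'' (one-loop perturbative factors and zero modes). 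The ordering inside each $A$-block is prescribed by the arrow notation $\overrightarrow{\prod}$, and the outer ordering over pairs $(A,\alpha)$ is inherited from Theorem~\ref{thm:spiked-qq-BPSCFT}, which is compatible with Lemma~\ref{lemm:D4ope}.

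The main obstacle is the bookkeeping of one-loop factors: each substitution $\mathsf{T}_{A}\mapsto \overrightarrow{\prod_{i}}\mathscr{Q}_{\bar{s}(A)}$ introduces extra OPE contractions among the screening currents within and across the infinite products, and these must assemble precisely into the perturbative contribution $\mathcal{Z}^{\D4}_{\text{1-loop}}$ already present on the left-hand side. I would handle this by comparing, on both sides, the full OPE of the screening currents $\mathsf{S}_{\bar{s}(A)}(v_{A,\alpha}q_{s(A)}^{i-1}q_{\bar{s}(A)}^{\lambda_i^{(A,\alpha)}})$ appearing in the expansion~\eqref{eq:D4screeningrep} and matching it term-by-term against the screening representation~\eqref{eq:D4screening} of the spiked instanton partition function. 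Since both sides carry the same $\mathcal{X}_{A}$-content in the $(1,1)$-type decomposition of section~\ref{sec:decomp5d}, the matching is structural.

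Finally, I would note that the regularization of the infinite products $\overrightarrow{\prod_{i=1}^{\infty}}\mathscr{Q}_{\bar{s}(A)}(v_{A,\alpha}q_{s(A)}^{i-1})$ requires $|q_{s(A)}|<1$ so that the box contents accumulate properly, consistent with the analytic region assumed in the $(1,1)$-type decomposition. Under this analytic choice the substitution is well-defined, and combining the two theorems yields the corollary.
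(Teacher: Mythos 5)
Your proposal is correct and follows exactly the route the paper intends: the corollary is an immediate consequence of substituting Theorem~\ref{thm:D2toD4fusion} (with $a=s(A)$, $b=\bar{s}(A)$) into the correlation function of Theorem~\ref{thm:spiked-qq-BPSCFT}, with the symbol ``$\simeq$'' absorbing the one-loop and zero-mode ambiguities just as in the theorem itself. Your additional remarks on the one-loop bookkeeping and the analytic region $|q_{s(A)}|<1$ are consistent with the paper's treatment in sections~\ref{sec:decomp5d} and~\ref{sec:spikedandscreening} and do not change the argument.
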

    
\begin{remark}
We can visualize the above procedure using the correspondence in \eqref{eq:D2vectorfigure1} and \eqref{eq:D2vectorfigure2}. The equation \eqref{eq:D4screeningrep} is the consequence of the fact that the 2d partition has a $(1,1)$-type description. Each term $\mathsf{S}_{b}(xq_{a}^{i-1}q_{b}^{\lambda_{i}})$ is understood as a one-dimensional partition extending in the $q_{b}$ direction as mentioned in section \ref{sec:multi-dim-part}. Piling the one-dimensional partitions in the orthogonal direction, we get the two-dimensional partition and the corresponding vertex operator $\Lambda_{ab,\lambda}(x)$. We call this the \textbf{fusion} process. Namely, fusions of $\D2$ $qq$-characters (screening charges) give the $\D4$ $qq$-character.
\begin{equation}
    \adjustbox{valign=c}{\includegraphics[width=14cm]{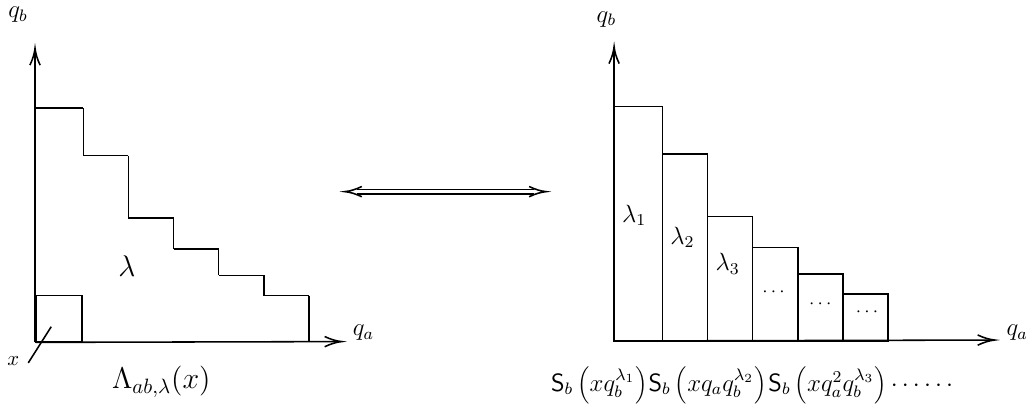}}
\end{equation}

\end{remark}

\subsection{General D4 \texorpdfstring{$qq$}{qq}-characters}\label{sec:generalD4qq}
The highest weight of the $\D4$ $qq$-character discussed in the previous section was $\mathsf{X}_{A}(x)\,(A\in\six)$ and the coefficients appearing are the $\U(1)$ partition function of the affine quiver gauge theory on $\mathbb{C}^{2}_{A}\times \mathbb{S}^{1}$. Higher rank D4 $qq$-characters are obtained as 
\begin{equation}\label{eq:ranknD4qqch}
    \mathsf{T}^{(n)}_{A}(\underline{x})={:\mathsf{X}_{A}(x_{1})\cdots\mathsf{X}_{A}(x_{n}):}+\cdots
\end{equation}
where $\underline{x}=(x_{1},\ldots,x_{n})$ are generic parameters. Imposing the commutativity with the screening charges $\mathscr{Q}_{a}(x')\,(a\in\bar{A})$, the coefficients appearing will be the partition function of the 5d $\mathcal{N}=1^{\ast}$ $\U(n)$ gauge theory on $\mathbb{C}^{2}_{A}\times \mathbb{S}^{1}$.

We can also include negative highest weights as 
\begin{equation}
    \mathsf{T}^{(n|m)}_{A}(\underline{x}|\underline{y})={:\frac{\mathsf{X}_{A}(x_{1})\cdots \mathsf{X}_{A}(x_{n})}{\mathsf{X}_{A}(y_{1})\cdots \mathsf{X}_{A}(y_{m})}:}+\cdots
\end{equation}
where $\underline{x}=(x_{1},\ldots,x_{n})$ and $\underline{y}=(y_{1},\ldots,y_{m})$ are all generic parameters. 
One can show that after imposing the commutativity with the screening charges $\mathscr{Q}_{a}(x')(a\in\bar{A})$, the coefficients will be the partition function of the $\U(n|m)$ supergroup gauge theory. The derivation can be similarly done as the non-supergroup gauge theory (see Appendix \ref{app:supergroup}).
\begin{definition}\label{def:D4negativeiWeyl}
    The iWeyl reflection of the operator $\mathsf{X}_{A}(x)^{-1}$ is defined as
    \begin{equation}
        \mathsf{X}_{A}(x)^{-1}\longrightarrow {:\mathsf{X}_{A}(x)^{-1}\mathsf{A}(q_{A}^{-1}x):}
    \end{equation}
    which we call the \textbf{negative iWeyl reflection}. The normal iWeyl reflection will be called positive iWeyl reflection in contrast\footnote{We omit the \emph{positive} when obvious.}.
\end{definition}
Doing the negative iWeyl reflection recursively, we obtain the monomial terms
\begin{equation}
    \bar{\Lambda}_{ab,\lambda}(x)={:\mathsf{X}_{ab}(x)^{-1}\prod_{\Abox\in\lambda}\mathsf{A}(\bar{\chi}_{ab,x}(\Bbox)):},\quad \bar{\chi}_{ab,x}(\Bbox)=xq_{a}^{-i}q_{b}^{-j},
\end{equation}
which gives the following $\D4$ $qq$-character.
\begin{theorem}\label{thm:D4negativeqqch}
    The $\D4$ $qq$-character generated by $\mathsf{X}_{A}^{-1}(x)$ is 
    \begin{equation}
        \mathsf{T}^{(0|1)}_{A}(x)=\sum_{\lambda\in\mathcal{P}}\widetilde{\mathcal{Z}}^{\D4^{-}}_{A}[\lambda]:\bar{\Lambda}_{A,\lambda}(x):,
    \end{equation}
    where $\widetilde{\mathcal{Z}}^{\D4^{-}}_{A}[\lambda]$ is the $\U(0|1)$ partition function (see Appendix \ref{app:supergroup}). It commutes with the screening charges $\mathscr{Q}_{a}(x')\,(a\in\bar{A})$:
    \begin{equation}
        [\mathsf{T}^{(0|1)}_{A}(x),\mathscr{Q}_{a}(x')]=0,\quad a\in\bar{A}.
    \end{equation}
    By rescaling the root current as $\mathsf{A}(x)\rightarrow \mathfrak{q}^{-1}\mathsf{A}(x)$, the $qq$-character is 
    \begin{equation}
        \mathsf{T}^{(0|1)}_{A}(x)=\sum_{\lambda\in\mathcal{P}}\mathfrak{q}^{-|\lambda|}\widetilde{\mathcal{Z}}^{\D4^{-}}_{A}[\lambda]:\bar{\Lambda}_{A,\lambda}(x):.
    \end{equation}
\end{theorem}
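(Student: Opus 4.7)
The strategy is to mirror the proof of Theorem~\ref{thm:D4qq-commute}, exploiting the fact that inverting $\mathsf{X}_A(x)$ reverses the role of adding and removing boxes in the iWeyl recursion. I will focus on the representative case $A=12$, $a=4$; the other cases follow by the quadrality acting on the indices in $\four$ and $\six$.

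First, I will compute the contraction of $\bar{\Lambda}_{12,\lambda}(x)$ with $\mathsf{S}_4(x')$. Since $\bar{\Lambda}_{12,\lambda}(x)$ differs from $\Lambda_{12,\lambda}(x)$ by inverting the $\mathsf{X}_{12}$-factor and replacing $\mathsf{A}^{-1}$ by $\mathsf{A}$ at the \emph{dual} box coordinates $\bar{\chi}_{12,x}(\Bbox)=xq_{1}^{-i}q_{2}^{-j}$, the contraction formulas \eqref{eq:D4-D2contraction} invert and shift to produce
\[
\bar{\Lambda}_{12,\lambda}(x)\mathsf{S}_{4}(x')
=\left[q_{3}\,\frac{\bar{\mathscr{Y}}^{12\vee}_{\lambda,x}(q_{123}^{-1}x')}{\bar{\mathscr{Y}}^{12\vee}_{\lambda,x}(q_{12}^{-1}x')}\right]^{x'}_{-}:\bar{\Lambda}_{12,\lambda}(x)\mathsf{S}_{4}(x'):,
\]
with an analogous expression for the opposite order. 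Here the dual $\bar{\mathscr{Y}}$-function encodes the box additions and removals at the reflected coordinates. The commutator then localizes via delta functions onto the addable and removable boxes of $\lambda$, exactly as in the proof of Theorem~\ref{thm:D4qq-commute}, but with the roles of $A(\lambda)$ and $R(\lambda)$ swapped relative to the positive case.

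Next, by shifting the summation variable in the removable-box term by $\lambda = \lambda' + \Bbox$ and using the reduction ${:\bar{\Lambda}_{12,\lambda'}(x)\mathsf{A}(\bar{\chi}_{12,x}(\Bbox))\mathsf{S}_{4}(\bar{\chi}_{12,x}(\Bbox)):} = {:\bar{\Lambda}_{12,\lambda'}(x)\mathsf{S}_{4}(q_{4}^{-1}\bar{\chi}_{12,x}(\Bbox)):}$ (from $\mathsf{A}(x)\mathsf{S}_4(x) = :\mathsf{S}_4(q_4^{-1}x):$), the vanishing of the commutator reduces to a single recursion constraint on the coefficients of the form
\[
\frac{F^{\D4^{-}}_{12}[\lambda+\Bbox]}{F^{\D4^{-}}_{12}[\lambda]}
=\frac{\widetilde{\mathcal{Z}}^{\D4^{-}}_{12}[\lambda+\Bbox]}{\widetilde{\mathcal{Z}}^{\D4^{-}}_{12}[\lambda]},
\]
which must be verified against the $\U(0|1)$ partition function computed in Appendix~\ref{app:supergroup}. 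Once this recursion holds with $F^{\D4^-}_{12}[\emptyset]=1$, the remaining sum is manifestly a total difference $\delta(q_4^{-1}\bar{\chi}_{12,x}(\Bbox)/x') - \delta(\bar{\chi}_{12,x}(\Bbox)/x')$, and summing over $k\in\mathbb{Z}$ inside $\mathscr{Q}_4(x')$ telescopes to zero.

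The main obstacle I anticipate is the bookkeeping of signs, powers of $q_a$ in zero modes, and the correct identification of the $\D4^-$ partition function: inverting a vertex operator exchanges the roles of chirals and anti-chirals in the $\U(0|1)$ contribution, and one must confirm that the resulting recursion for $\widetilde{\mathcal{Z}}^{\D4^-}_{A}[\lambda]$ matches (rather than differs by an overall shift or sign from) the recursion derived algebraically. Once this supergroup identification is secure—which is where the input from Appendix~\ref{app:supergroup} becomes essential—the rest of the argument is parallel to the positive-weight case, and the rescaling $\mathsf{A}(x)\to\mathfrak{q}^{-1}\mathsf{A}(x)$ straightforwardly produces the factor $\mathfrak{q}^{-|\lambda|}$ since each box now contributes $\mathsf{A}$ rather than $\mathsf{A}^{-1}$.
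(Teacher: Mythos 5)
Your proposal follows the same route as the paper's proof: contract $\bar{\Lambda}_{12,\lambda}(x)$ with $\mathsf{S}_4(x')$, localize the commutator on the addable and removable boxes of $\lambda$ at the dual coordinates $\bar{\chi}_{12,x}(\Bbox)$, shift the removable-box sum by $\lambda=\lambda'+\Bbox$, and read off a recursion for the coefficients that is then matched to the $\U(0|1)$ partition function of Appendix~\ref{app:supergroup}; the telescoping over $k\in\mathbb{Z}$ inside $\mathscr{Q}_4(x')$ then kills the commutator. That is exactly the paper's argument.

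There is, however, one concrete error in your reduction step. The identity you invoke, $\mathsf{A}(y)\mathsf{S}_4(y)={:\mathsf{S}_4(q_4^{-1}y):}$, is false: from $\mathsf{A}(y)={:\mathsf{S}_4(y)/\mathsf{S}_4(q_4 y):}$ one gets ${:\mathsf{A}(y)\mathsf{S}_4(q_4 y):}=\mathsf{S}_4(y)$, equivalently ${:\mathsf{A}(q_4^{-1}y)\mathsf{S}_4(y):}=\mathsf{S}_4(q_4^{-1}y)$, but ${:\mathsf{A}(y)\mathsf{S}_4(y):}$ does not collapse to a single screening current at all. The correct reduction used in the paper is ${:\bar{\Lambda}_{12,\lambda'+\Bbox}(x)\,\mathsf{S}_4(q_4\bar{\chi}_{12,x}(\Bbox)):}={:\bar{\Lambda}_{12,\lambda'}(x)\,\mathsf{S}_4(\bar{\chi}_{12,x}(\Bbox)):}$, so the two delta functions in the total difference sit at $x'=q_4\bar{\chi}_{12,x}(\Bbox)$ (removable boxes) and $x'=\bar{\chi}_{12,x}(\Bbox)$ (addable boxes) — consistent with the level-one check $[\,{:\mathsf{X}_{12}^{-1}(q_{12}x)\mathsf{A}(x):},\mathsf{S}_4(x')\,]\propto\delta(q_4x/x')$ — rather than at $q_4^{-1}\bar{\chi}$ and $\bar{\chi}$ as you wrote. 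If one follows your version literally, the operators multiplying the two delta functions do not coincide and the sum over $k$ does not telescope; with the corrected $q_4$-shift the argument closes as in the positive-weight case. Your contraction formula, written with the dual $\mathscr{Y}^{\vee}$ at inverted arguments instead of the paper's $\bar{\mathscr{Y}}^{12}_{\lambda,x}(q_{123}x')/\bar{\mathscr{Y}}^{12}_{\lambda,x}(q_{12}x')$, is harmless by comparison, since $\mathscr{Y}^{\vee}$ and $\mathscr{Y}$ differ only by a monomial, but the arguments there also need to be fixed to reproduce the correct pole locations.
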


Note that the higher rank $\D4$ $qq$-characters are simply obtained by taking compositions of $\mathsf{T}_{A}^{(1|0)}(x)$ and $\mathsf{T}_{A}^{(0|1)}(x)$ up to one-loop perturbative factors:
\begin{equation}
    \mathsf{T}_{A}^{(n|m)}(\underline{x}\,|\,\underline{y})\propto\prod_{i=1}^{n}\mathsf{T}^{(1|0)}_{A}(x_{i})\prod_{j=1}^{m}\mathsf{T}^{(0|1)}_{A}(y_{j}).
\end{equation}
Obviously, for fixed $A\in\six$, arbitrary products of $\mathsf{T}_{A}^{(1|0)}(x)$ and $\mathsf{T}_{A}^{(0|1)}(x)$ commute with the screening charges $\mathscr{Q}_{a}(x)\,(a\in\bar{A})$ because each of the $qq$-characters commute with the screening charges. For example, consider the product $\mathsf{T}_{12}(x)\mathsf{T}_{12}(x')$ and the commutation relation with $\mathsf{S}_{4}(x'')$. We then have
\bea
\relax[\mathsf{T}_{12}(x)\mathsf{T}_{12}(x'),\mathscr{Q}_{4}(x'')]&=[\mathsf{T}_{12}(x),\mathscr{Q}_{4}(x'')]\mathsf{T}_{12}(x')+\mathsf{T}_{12}(x)[\mathsf{T}_{12}(x'),\mathscr{Q}_{4}(x'')]=0.
\eea
Using
\bea
\mathsf{T}_{12}(x)\mathsf{T}_{12}(x')=\mathcal{Z}_{\text{1-loop}}^{\D4\tbar\D4}(x,12\,|\,x',12)\left(:\mathsf{X}_{12}(x)\mathsf{X}_{12}(x'')+\cdots \right)
\eea
we can see that $\mathsf{T}_{12}(x)\mathsf{T}_{12}(x')/\mathcal{Z}_{\text{1-loop}}^{\D4\tbar\D4}(x,12|x',12)$ is a $qq$-character whose highest weight is $:\mathsf{X}_{12}(x)\mathsf{X}_{12}(x'):$.

More generally, we can introduce $qq$-characters starting from operators like $:\mathsf{X}_{12}(x)\mathsf{X}_{13}(x'):$ where $\D4$ vertex operators associated with different $\mathbb{C}^{2}$ subspaces appear.  The resulting $qq$-character is proportional to the product $\mathsf{T}_{12}(x)\mathsf{T}_{13}(x')$.

\begin{theorem}\label{thm:D4generalqqcharacter}
    Let $a,b,c,d\in\four$ be the four elements in the set $\four$. We choose one of the screening charges $\mathscr{Q}_{d}(x)$. The highest weight of the $\D4$ $qq$-character commuting with this screening charge is generally written as
    \begin{equation}
        \mathsf{T}_{ab:bc:ac}^{(\vec{n}|\vec{m})}(\underline{\vec{x}}\,|\,\underline{\vec{y}})={:\frac{\prod\limits_{\alpha=1}^{n_{ab}}\mathsf{X}_{ab}(x_{ab,\alpha})\prod\limits_{\beta=1}^{n_{bc}}\mathsf{X}_{bc}(x_{bc,\beta})\prod\limits_{\gamma=1}^{n_{ac}}\mathsf{X}_{ac}(x_{ac,\gamma})}{\prod\limits_{\alpha=1}^{m_{ab}}\mathsf{X}_{ab}(y_{ab,\alpha})\prod\limits_{\beta=1}^{m_{bc}}\mathsf{X}_{bc}(y_{bc,\beta})\prod\limits_{\gamma=1}^{m_{ac}}\mathsf{X}_{ac}(y_{ac,\gamma})}:}+\cdots 
    \end{equation}
    where $\vec{n}=(n_{ab},n_{bc},n_{ac}),\,\vec{m}=(m_{ab},m_{bc},m_{ca})$, $\underline{\vec{x}}=(x_{A,\alpha})_{A\in\{ab,bc,ac\}}^{\alpha=1,\ldots,n_{A}}$, $\underline{\vec{y}}=(y_{A,\alpha})_{A\in\{ab,bc,ac\}}^{\alpha=1,\ldots,m_{A}}$. The monomial terms are obtained by doing the positive and negative iWeyl reflections respectively, while the coefficients are obtained by imposing the commutativity with the screening charge $\mathscr{Q}_{d}(x)$:
    \begin{equation}
        [\mathsf{T}_{ab:bc:ac}^{(\vec{n}|\vec{m})}(\underline{\vec{x}}\,|\,\underline{\vec{y}}),\mathscr{Q}_{d}(x')]=0.
    \end{equation}
    Moreover, the $qq$-character obtained here is proportional to the products of the $\D4$ $qq$-characters associated with $A=ab,bc,ac$:
    \begin{equation}
        \mathsf{T}_{ab:bc:ac}^{(\vec{n}|\vec{m})}(\underline{\vec{x}}\,|\,\underline{\vec{y}})\propto \prod_{A\in\{ab,bc,ac\}}\mathsf{T}_{A}^{(n_{A}|m_{A})}(\underline{x_{A}}|\underline{y_{A}})
    \end{equation}
    where $\underline{x_{A}}=(x_{A,\alpha})_{\alpha=1,\ldots,n_{A}}$, $\underline{y_{A}}=(y_{A,\alpha})_{\alpha=1,\ldots,m_{A}}$. Note that the coefficients appearing are the supergroup analog of the gauge origami partition function of vector and folded instantons.
\end{theorem}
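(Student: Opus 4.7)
The plan is to exhibit the claimed $qq$-character explicitly as the product
$$\prod_{A \in \{ab,bc,ac\}} \mathsf{T}_A^{(n_A|m_A)}(\underline{x_A}\,|\,\underline{y_A})$$
and then verify each of the three stated properties in turn. Since the fourth index $d$ lies in $\bar{A}$ for every $A \in \{ab,bc,ac\}$, Thm.~\ref{thm:D4qq-commute}, Thm.~\ref{thm:D4negativeqqch}, and the rank-$(n|m)$ generalization introduced in \eqref{eq:ranknD4qqch} together imply that each factor already commutes with $\mathscr{Q}_d(x')$. Commutativity with a fixed operator being preserved under multiplication, the product commutes with $\mathscr{Q}_d(x')$ as a whole.

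Next, I would compute the OPEs among the $\mathsf{X}_A$-operators of different indices $A\in\{ab,bc,ac\}$ using Lem.~\ref{lemm:D4ope} and its analogue for the negative generating currents. The key observation is that cross-OPEs between $\mathsf{X}_A$ and $\mathsf{X}_B$ with $A\neq B$ produce only Young-diagram-independent one-loop factors $\mathcal{Z}_{\text{1-loop}}^{\D4\tbar\D4}$ (no partition-valued Nekrasov factors survive because only shared transverse $\mathbb{C}$-directions contribute nontrivial residues), so after pulling out these constants and normal-ordering, the product expands as a sum over six-tuples of Young diagrams $(\lambda_A,\mu_A)_{A}$ whose leading term is exactly the highest weight stated in the theorem. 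Every subleading term is then manifestly obtained by applying the positive iWeyl reflection on each numerator $\mathsf{X}_A$-factor and the negative iWeyl reflection of Def.~\ref{def:D4negativeiWeyl} on each denominator $\mathsf{X}_A^{-1}$-factor, which establishes the first two assertions of the theorem up to the uniqueness of the coefficients.

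The remaining step is the uniqueness argument: the coefficients of all iWeyl descendants of the prescribed highest weight are fixed by commutativity with $\mathscr{Q}_d(x')$. I would repeat the delta-function/residue argument from the proof of Thm.~\ref{thm:D4qq-commute}, applied in the multi-partition setting. Because the positive/negative iWeyl reflection on $\mathsf{X}_A^{\pm 1}$ only modifies the $A$-type Young diagrams, the recursion on the six labels decouples into three independent recursions indexed by $A\in\{ab,bc,ac\}$, each identical to the recursion of Thm.~\ref{app:thm-D4recursion} for the $\widehat{A}_0$ partition function. Solving them simultaneously confirms both the proportionality to $\prod_A \mathsf{T}_A^{(n_A|m_A)}$ and the identification of the coefficients with the $\U(n_A|m_A)$ supergroup partition function on $\mathbb{C}^2_A\times\mathbb{S}^1$. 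The main technical obstacle I anticipate is justifying this decoupling: one must check that mixed poles arising, say, from $\Lambda_{ab,\lambda_{ab}}(x)\Lambda_{bc,\lambda_{bc}}(x')\mathsf{S}_d(x'')$ via the cross OPEs do not introduce new constraints linking distinct $A$-sectors. This reduces to verifying that the cross-contractions between $\mathsf{X}_A$'s and $\mathsf{S}_d$ are pole-free in $x''$ whenever $d\in\bar{A}\cap\bar{B}$, which should follow from the explicit structure of $\mathscr{S}_{\bar A}$ in \eqref{eq:struct_funct} together with the CY$_4$ condition $q_1q_2q_3q_4=1$.
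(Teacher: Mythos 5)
Your proposal is correct and follows essentially the same route as the paper: the paper's (very brief) justification in the paragraph preceding the theorem is exactly your first step — each factor $\mathsf{T}_A^{(n_A|m_A)}$ with $d\in\bar{A}$ commutes with $\mathscr{Q}_d$, so by the Leibniz rule the product does, and its leading term after extracting one-loop factors is the stated highest weight. Your additional discussion of the decoupled recursions and the absence of cross-sector pole constraints is a more detailed elaboration of the same product-based argument rather than a different approach.
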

\begin{remark}
    The above theorem claims that for example if we fix the screening charge $\mathscr{Q}_{4}(x)$, finite products of $\D4$ $qq$-characters of $A=12,23,13$ will give higher rank $qq$-characters. Such kind of $qq$-characters have nice physical meanings because the appearing coefficients are related to the gauge origami partition function of intersecting $\D4$-branes inside the $\mathbb{C}^{3}_{123}$ subspace. However, since the $qq$-characters such as $\mathsf{T}_{A}(x)\,(A=14,24,34)$ do not commute with $\mathscr{Q}_{4}(x)$, we can not use $\mathsf{X}_{A}(x)^{\pm1}(A=14,24,34)$ as highest weights. Thus, the $qq$-character whose coefficients are the most general gauge origami partition (e.g. crossed instantons) does not appear by just imposing the commutativity with one of the screening charges.
\end{remark}
\begin{remark}
    Using the fact that screening charges of different types commute with each other, $\mathsf{S}_{1,2,3}(x)$ can also be the highest weight of the $qq$-characters commuting with $\mathscr{Q}_{4}(x)$. Such kinds of $qq$-characters are expected to be physically related with a $\D2\tbar\D4$ coupled system. We leave the analysis of these cases for future work.
\end{remark}

\subsection{Quadratic relations of crossed instantons}\label{sec:D4quadraticrelation}
The $qq$-characters $\{\mathsf{T}_{A}(x)\}_{A\in\six}$ are expected to generate a larger algebra than the affine quiver W-algebra. As usual deformed W-algebras, studying the quadratic relations of the generators is another way to understand the algebraic structure. Since deriving the complete quadratic relations of the $\{\mathsf{T}_{A}(x)\}_{A\in\six}$ is beyond the scope of this paper, we just give a part of the quadratic relations which can be derived easily.
\begin{theorem}
    The $\D4$ $qq$-characters $\mathsf{T}_{A}(x)$ and $\mathsf{T}_{\bar{A}}(x)$ for $A\in\six$ (anti-)commute with each other up to trivial zero-modes factors:
    \begin{equation}
       x\mathsf{T}_{A}(x)\mathsf{T}_{\bar{A}}(x')+q_{A}x'\mathsf{T}_{\bar{A}}(x')\mathsf{T}_{A}(x)=0.
    \end{equation}
\end{theorem}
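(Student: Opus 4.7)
Since the coefficients $\widetilde{\mathcal{Z}}^{\D4}_A[\lambda]$ and $\widetilde{\mathcal{Z}}^{\D4}_{\bar A}[\mu]$ are c-numbers that commute freely with everything, it suffices to establish the operator identity
\[
x\,\Lambda_{A,\lambda}(x)\,\Lambda_{\bar A,\mu}(x') + q_A\,x'\,\Lambda_{\bar A,\mu}(x')\,\Lambda_{A,\lambda}(x) = 0
\]
for every pair $(\lambda,\mu)\in\mathcal{P}\times\mathcal{P}$, and then sum against the partition-function coefficients. Each $\Lambda$-monomial is a normal-ordered product of one $\mathsf{X}$-operator and finitely many $\mathsf{A}^{-1}$-operators, so the OPE between them factorizes into four families of rational contractions: $\mathsf{X}_A$--$\mathsf{X}_{\bar A}$, $\mathsf{X}_A$--$\mathsf{A}^{-1}$, $\mathsf{A}^{-1}$--$\mathsf{X}_{\bar A}$, and $\mathsf{A}^{-1}$--$\mathsf{A}^{-1}$.

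The central observation is that, upon reversing the operator ordering, all but one of these families contribute a ratio that is identically $1$. This is a direct consequence of the reflection identities recorded in~\eqref{eq:reflec_structfunc}: $\mathcal{A}_{\mathbb{C}^4}(z)=\mathcal{A}_{\mathbb{C}^4}(z^{-1})$ handles the $\mathsf{A}^{-1}$--$\mathsf{A}^{-1}$ family, while $\mathscr{S}_{\bar A}(q_A x/\nu) = \mathscr{S}_{\bar A}(\nu/x)$, which follows from $\mathscr{S}_{\bar A}(z)=\mathscr{S}_{\bar A}(q_{\bar A}^{-1}/z)$ and the Calabi--Yau condition $q_A q_{\bar A}=1$, handles both mixed $\mathsf{X}$--$\mathsf{A}^{-1}$ families. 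The only surviving contribution is the $\mathsf{X}_A(x)$--$\mathsf{X}_{\bar A}(x')$ contraction, whose forward and reverse values are $(1-q_A x'/x)$ and $(1-q_A^{-1}x/x')$ respectively; a one-line manipulation gives the ratio $(1-q_A x'/x)/(1-q_A^{-1}x/x') = -q_A x'/x$, which is precisely the prefactor demanded by the theorem.

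Combining these ingredients yields $\Lambda_{A,\lambda}(x)\Lambda_{\bar A,\mu}(x') = -(q_A x'/x)\,\Lambda_{\bar A,\mu}(x')\Lambda_{A,\lambda}(x)$, and summing over $(\lambda,\mu)$ then upgrades this to the corresponding statement for $\mathsf{T}_A(x)$ and $\mathsf{T}_{\bar A}(x')$. One final check concerns the zero-mode sector: in the explicit realization~\eqref{eq:zeromodes1}--\eqref{eq:zeromodes2}, both $\mathsf{x}_{A,0}$ and $\mathsf{x}_{\bar A,0}$ are scalar multiples of the single generator $\widetilde{\mathsf{t}}_0$ and therefore commute, so no spurious zero-mode factor is produced. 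The strengthened convention~\eqref{eq:D4D4zero} mentioned in footnote~\ref{note:D4footnote} is exactly the choice that would reabsorb the $-q_A x'/x$ prefactor into the zero-modes, which is what the qualification ``up to trivial zero-modes factors'' refers to.

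The only subtle point --- and hence the main obstacle --- is the uniform-in-$(\lambda,\mu)$ verification that the $\mathsf{A}^{-1}$--$\mathsf{A}^{-1}$ and the two mixed $\mathsf{X}$--$\mathsf{A}^{-1}$ families each collapse to unity not just for a single pair of operators but simultaneously across all $|\lambda|\cdot|\mu|$ and $|\lambda|+|\mu|$ pairs respectively. This rests entirely on the $\mathbb{C}^4$ reflection symmetry $\mathcal{A}_{\mathbb{C}^4}(z)=\mathcal{A}_{\mathbb{C}^4}(z^{-1})$ --- itself a manifestation of the Calabi--Yau condition $q_1 q_2 q_3 q_4 = 1$ --- and is purely combinatorial once the reflection formulas have been brought to bear; it should not pose any genuine difficulty.
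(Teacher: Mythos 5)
Your proposal contains a genuine gap: the reduction to a term-by-term identity for each pair $(\lambda,\mu)$ is false. The reflection formulas~\eqref{eq:reflec_structfunc} guarantee that the forward and reversed contractions agree \emph{as rational functions after analytic continuation}, but the operator products $\Lambda_{A,\lambda}(x)\Lambda_{\bar A,\mu}(x')$ and $\Lambda_{\bar A,\mu}(x')\Lambda_{A,\lambda}(x)$ are the expansions of that rational function in the two different regions $|x'/x|\ll 1$ and $|x/x'|\ll 1$. The total contraction $\mathcal{Z}^{\D4\tbar\D4}_{\text{1-loop}}\cdot\mathcal{Z}^{\D4\tbar\D4}_{\bar A|A}$ is \emph{not} pole-free in $x'/x$: the factors $\mathscr{S}_{\bar A}(\cdot)$, $\mathscr{S}_{\bar B}(\cdot)$ and $\mathcal{A}_{\mathbb{C}^4}(\cdot)^{-1}$ leave surviving poles precisely at $\chi_{A,x}(\Bbox)=\chi_{\bar A,x'}(\BboxF)$ for $(\Bbox,\BboxF)\in\bigl(R(\lambda)\times A(\mu)\bigr)\cup\bigl(A(\lambda)\times R(\mu)\bigr)$. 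Consequently the two orderings differ not only by the overall factor $-q_A x'/x$ (which you correctly trace to the $\mathsf{X}_A$--$\mathsf{X}_{\bar A}$ contraction) but also by delta-function contact terms at those poles, whose residues are generically nonzero. A concrete counterexample to your term-by-term claim is $(\lambda,\mu)=(\{1\},\emptyset)$: the cross-contraction contains $\mathscr{S}_{12}(x'/x)$, which has a pole at $x'=x$, so $x\Lambda_{12,\{1\}}(x)\Lambda_{34,\emptyset}(x')+q_{12}x'\Lambda_{34,\emptyset}(x')\Lambda_{12,\{1\}}(x)$ is a nonzero multiple of $\delta(x/x')$.

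The missing mechanism is that these contact terms cancel only in the full sum over partitions: the term labelled by $(\lambda+\Bbox,\mu-\BboxF)$ produces, at the coincident point, the same normal-ordered operator as the term labelled by $(\lambda,\mu)$, and the two residues cancel if and only if the coefficients satisfy $F^{\D4}_{A}[\lambda+\Bbox]F^{\D4}_{\bar A}[\mu-\BboxF]/\bigl(F^{\D4}_{A}[\lambda]F^{\D4}_{\bar A}[\mu]\bigr)$ equals minus the ratio of the corresponding residues --- a relation that holds for the $\U(1)$ partition functions $\widetilde{\mathcal{Z}}^{\D4}_{A}[\lambda]$ but not for arbitrary coefficients. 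Your argument never invokes the specific values of the coefficients beyond noting that they are c-numbers, which is the tell-tale sign of the gap: if the statement held monomial by monomial, the theorem would be true for any choice of coefficients, and it is not. The step you flag as ``purely combinatorial'' and ``not posing any genuine difficulty'' is in fact where the entire content of the proof lives.
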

\begin{proof}
Let us consider the commutation relations between $\mathsf{T}_{12}(x)$ and $\mathsf{T}_{34}(x')$:
\begin{align}
    \mathsf{T}_{12}(x)&=\sum_{\lambda}F^{\D4}_{12}[\lambda]:\Lambda_{12,\lambda}(x):,\\
    \mathsf{T}_{34}(x')&=\sum_{\mu}F^{\D4}_{34}[\mu]:\Lambda_{34,\mu}(x'):.
\end{align}
The contraction formulas give 
\bea
    \Lambda_{12,\lambda}(x)\Lambda_{34,\mu}(x')&=\left[\mathcal{Z}_{34\,|\,12}^{\text{tot}}(x',\mu\,|\,x,\lambda)\right]_{|x'/x|\ll1}:\Lambda_{12,\lambda}(x)\Lambda_{34,\mu}(x'):,\\
    \Lambda_{34,\mu}(x')\Lambda_{12,\lambda}(x)&=\left[\mathcal{Z}_{12\,|\,34}^{\text{tot}}(x,\lambda\,|\,x',\mu)\right]_{|x/x'|\ll1}:\Lambda_{12,\lambda}(x)\Lambda_{34,\mu}(x'):,
\eea
where
\begin{equation}
\mathcal{Z}_{\bar{A}|A}^{\text{tot.}}(x',\mu|x,\lambda)=\mathcal{Z}_{\text{1-loop}}^{\D4\tbar\D4}(x',\bar{A}|x,A)\mathcal{Z}_{\bar{A}|A}^{\D4\tbar\D4}(x',\mu|x,\lambda).
\end{equation}
Noting that the one-loop factors of the crossed instantons are rational functions
\beq
    \mathcal{Z}_{\text{1-loop}}^{\D4\tbar\D4}(x',\bar{X}|x,X)=\left(1-q_{A}x'/x\right),
\eeq
we then obtain
\begin{equation}
\begin{split}
    &\mathsf{T}_{12}(x)\mathsf{T}_{34}(x')-\left(-q_{12}\frac{x'}{x}\right)\mathsf{T}_{34}(x')\mathsf{T}_{12}(x)\\
    =&\sum_{\lambda,\mu}F^{\D4}_{12}[\lambda]F^{\D4}_{34}[\mu]\left\{\sum_{\substack{\AboxF\in A(\mu)\\\Abox\in R(\lambda)}}\underset{\substack{\chi_{12,x}(\Abox)\\=\chi_{34,x'}(\AboxF)}}{\Res}\left(\frac{x}{x'}\right)^{-1}\mathcal{Z}_{34\,|\,12}^{\text{tot}}(x',\mu\,|\,x,\lambda)\delta\left(\frac{\chi_{12,x}(\Bbox)}{\chi_{34,x'}(\BboxF)}\right):\Lambda_{12,\lambda}(x)\Lambda_{34,\mu}(x'):\right.\\
    &\left.+\sum_{\substack{\AboxF\in R(\mu)\\\Abox\in A(\lambda)}}\underset{\substack{\chi_{12,x}(\Abox)\\=\chi_{34,x'}(\AboxF)}}{\Res}\left(\frac{x}{x'}\right)^{-1}\mathcal{Z}_{34\,|\,12}^{\text{tot}}(x',\mu\,|\,x,\lambda)\delta\left(\frac{\chi_{12,x}(\Bbox)}{\chi_{34,x'}(\BboxF)}\right):\Lambda_{12,\lambda}(x)\Lambda_{34,\mu}(x'):\right\}.
\end{split}
\end{equation}
Similar to the proof in Thm.~\ref{thm:D4qq-commute}, we redefine the sum of the first term as $\lambda\rightarrow \lambda=\lambda'+\Bbox$, $\mu\rightarrow \mu=\mu'+\BboxF$. After this, the first term will be 
\begin{equation}
\begin{split}
    &\sum_{\lambda',\mu'}\sum_{\substack{\AboxF\in R(\mu')\\\Abox\in A(\lambda')}} F^{\D4}_{12}[\lambda'+\Bbox]F^{\D4}_{34}[\mu'-\BboxF]\underset{\substack{\chi_{12,x}(\Abox)\\=\chi_{34,x'}(\AboxF)}}{\Res}\left(\frac{x}{x'}\right)^{-1}\mathcal{Z}_{34\,|\,12}^{\text{tot}}(x',\mu'-\BboxF\,|\,x,\lambda'+\Bbox)\\
    &\qquad\times\delta\left(\frac{\chi_{12,x}(\Bbox)}{\chi_{34,x'}(\BboxF)}\right):\Lambda_{12,\lambda'+\Abox}(x)\Lambda_{34,\mu'-\AboxF}(x'):.
\end{split}
\end{equation}
Using 
\begin{equation}
    \frac{F^{\D4}_{12}[\lambda+\Bbox]F^{\D4}_{34}[\mu-\BboxF]}{F^{\D4}_{12}[\lambda]F^{\D4}_{34}[\mu]}=-\frac{\underset{\substack{\chi_{12,x}(\Abox)\\=\chi_{34,x'}(\AboxF)}}{\Res}\left(\frac{x}{x'}\right)^{-1}\mathcal{Z}_{34\,|\,12}^{\text{tot}}(x',\mu\,|\,x,\lambda)}{\underset{\substack{\chi_{12,x}(\Abox)\\=\chi_{34,x'}(\AboxF)}}{\Res}\left(\frac{x}{x'}\right)^{-1}\mathcal{Z}_{34\,|\,12}^{\text{tot}}(x',\mu-\BboxF\,|\,x,\lambda+\Bbox)}.
\end{equation}
and
\begin{equation}
    \delta\left(\frac{\chi_{12,x}(\Bbox)}{\chi_{34,x'}(\BboxF)}\right):\Lambda_{12,\lambda+\Abox}(x)\Lambda_{34,\mu-\AboxF}(x'):=\delta\left(\frac{\chi_{12,x}(\Bbox)}{\chi_{34,x'}(\BboxF)}\right):\Lambda_{12,\lambda}(x)\Lambda_{34,\mu}(x'):
\end{equation}
we obtain the claim.
\end{proof}

\begin{remark}
    The extra factors in front of the $qq$-characters come from the operator product of the zero modes $\mathsf{x}_{A,0}(x)$. We may modify the zero modes so that they obey \eqref{eq:D4D4zero} (see also footnote \ref{note:D4footnote}) and then get an exact commuting relation $[\mathsf{T}_{A}(x),\mathsf{T}_{\bar{A}}(x')]=0$. Instead of doing that, we simply relax the commutativity condition and say that operators $\mathcal{O}(x)$ and $\mathcal{O}'(x')$ commute when they satisfy 
    \begin{align}
        \mathcal{O}(x)-f(x,x')\mathcal{O'}(x')=0\label{eq:weakcommute}
    \end{align}
    where $f(x,x')$ are zero modes factors.
\end{remark}


\section{D6-brane \texorpdfstring{$qq$}{qq}-characters}\label{sec:D6qqcharacter}
We introduce the operator version of the D6 $qq$-characters in section~\ref{sec:D6qqchdef}. We show that monomial terms of the $qq$-character with the highest weight $\mathsf{W}_{\bar{a}}(x)$ are classified by plane partitions and that it commutes with the screening charge $\mathscr{Q}_{a}(x')$. We then show that the D6 $qq$-characters reproduce the tetrahedron instanton partition function in section~\ref{sec:tetrainstD6qq}. The D6 $qq$-characters can be obtained by fusion of an infinite number of D4 $qq$-characters (see section~\ref{sec:D4fusiontoD6}). Description in lower dimensional $qq$-characters are discussed in section~\ref{sec:D6lowerdim}. Finally, in section~\ref{sec:generalD6qq}, we introduce general D6 $qq$-characters where truncations of plane partitions appear. We also give a conjecture of the D6 $qq$-characters associated with toric CY$_{4}$ taking the form toric $\CY_{3}\times\mathbb{C}$. The $qq$-characters we introduce in this section imply the existence of a large class of $qq$-characters which we call BPS $qq$-characters.

\subsection{D6 \texorpdfstring{$qq$}{qq}-characters and plane partition}\label{sec:D6qqchdef}
Following the construction of the $\D4$ $qq$-characters and the affine quiver W-algebra, let us construct the $\D6$ $qq$-characters. The $\D6$ $qq$-characters are $qq$-characters whose generating currents are the $\D6$ vertex operators $\mathsf{W}_{\bar{a}}(x)\,(a\in\four)$ in \eqref{eq:D6op}. Similar to the $\D4$ case, each term of the $\D6$ $qq$-character is decomposed into the vertex operator and the coefficient parts. The operator part is determined by the iWeyl reflection of the operator part defined as the following.
\begin{definition}
    The iWeyl reflection of the $\D6$ vertex operator $\mathsf{W}_{\bar{a}}(x)\,(a\in\four)$ is
    \beq
        \mathsf{W}_{\bar{a}}(x)\longrightarrow {:\mathsf{W}_{\bar{a}}(x)\mathsf{A}^{-1}(x):},\quad a\in\four.
    \eeq
    Using the property in \eqref{eq:oprelationwithD0-D6}, the root current is rewritten in the $\D6$ vertex operators as
\beq
    \mathsf{W}_{\bar{a}}(x)\longrightarrow {:\mathsf{a}_{0}(x)^{-1}\mathsf{W}_{\bar{a}}(q_{a}^{-1}x)\frac{\prod\limits_{i\in\four\setminus \{a\}}\mathsf{W}_{\bar{a}}(q_{i}x)}{\prod\limits_{i\in\four\setminus \{a\}}\mathsf{W}_{\bar{a}}(q_{i}^{-1}q_{a}^{-1}x)}:}
\eeq
\end{definition}
The operator part is obtained by changing the root currents to the $\D6$-operators and applying the iWeyl reflection to the numerators recursively. After iWeyl reflections, the operators are classified by plane partitions. Let us study some terms after applying the iWeyl reflection. The operators obtained after $n$-times of iWeyl reflections are called operators at level $n$. We focus on the case $\mathsf{W}_{\bar{4}}(x)$ and then have
\beq
    \mathsf{W}_{\bar{4}}(x)\longrightarrow {:\mathsf{a}_{0}(x)^{-1}\frac{\mathsf{W}_{\bar{4}}(q_{123}x)\prod_{i=1}^{3}\mathsf{W}_{\bar{4}}(q_{i}x)}{\prod_{1\leq i<j\leq 3}\mathsf{W}_{\bar{4}}(q_{i}q_{j}x)}:}.
\eeq
\begin{itemize}
    \item Level 0: We only have one operator 
\beq
    \mathsf{W}_{\bar{4}}(x).
\eeq
We can associate this operator with a vacuum configuration of the plane partition in the space (1,2,3) where there is no box:
\bea
    \mathsf{W}_{\bar{4}}(x)\quad\longleftrightarrow\quad  \adjustbox{valign=c}{\begin{tikzpicture}[scale=0.5]
    \draw[->] (0,0)--(-30:2); 
    \draw[->] (0,0)--(210:2);
    \draw[->] (0,0)--(90:2);
    \node at (210:2.4){$1$};
    \node at (-30:2.4){$2$};
    \node at (90:2.4){$3$};
\end{tikzpicture}}
\eea
The spectral parameter $x$ corresponds to the $q$-coordinates in the origin. The operator $\mathsf{W}_{\bar{4}}(x)$ represents the addable box of this plane partition configuration which is the box in the origin.
\item Level 1: After applying the iWeyl reflection once, we have 
\beq
    {:\mathsf{W}_{\bar{4}}(x)\mathsf{A}^{-1}(x):}={:\mathsf{a}_{0}(x)^{-1}\frac{\mathsf{W}_{\bar{4}}(q_{123}x)\prod_{i=1}^{3}\mathsf{W}_{\bar{4}}(q_{i}x)}{\prod_{i<j\leq 3}\mathsf{W}_{\bar{4}}(q_{i}q_{j}x)}:}.
\eeq
The level 1 current can be described as 
\bea
:\mathsf{a}_{0}(x)^{-1}\frac{\textcolor{blue}{\mathsf{W}_{\bar{4}}(q_{4}^{-1}x)}\textcolor{red}{\prod_{i=1}^{3}\mathsf{W}_{\bar{4}}(q_{i}x)}}{\prod_{i<j\leq 3}\mathsf{W}_{\bar{4}}(q_{i}q_{j}x)}:\quad \longleftrightarrow \quad
\adjustbox{valign=c}{
\begin{tikzpicture}[scale=0.5]
    \draw[->] (0,0)--(-30:2); 
    \draw[->] (0,0)--(210:2);
    \draw[->] (0,0)--(90:2);
    \planepartition{{1}} 
    \node at (210:2.4){$1$};
    \node at (-30:2.4){$2$};
    \node at (90:2.4){$3$};
\end{tikzpicture}}
\eea
An observation is that the red terms $\mathsf{W}_{\bar{4}}(q_{i}x)\,(i=1,2,3)$ correspond to the addable boxes of this plane partition configuration. The variables $q_{i}x\,(i=1,2,3)$ correspond to the $q$-coordinates of the addable boxes. The blue term $\mathsf{W}_{\bar{4}}(q_{4}^{-1}x)$ corresponds to the removable box of the configuration, which is the box in the origin with coordinate $x$.

\item Level 2: Since we have three numerators in the level 1 operator, we will have three possible level 2 operators depending on which term we do the iWeyl reflection.

\begin{enumerate}
    \item $\pi_{1,1}=1,\,\,\pi_{2,1}=1$
    \bea
        {:\mathsf{W}_{\bar{4}}(x)\mathsf{A}^{-1}(x)\mathsf{A}^{-1}(q_{1}x):}\propto \frac{\textcolor{red}{(2)(3)(1^{2})}\textcolor{blue}{(14^{-1})}}{(1^{2}2)(1^{2}3)}\quad \longleftrightarrow \quad \adjustbox{valign=c}{
\begin{tikzpicture}[scale=0.5]
    \draw[->] (0,0)--(-30:2); 
    \draw[->] (0,0)--(210:2.8);
    \draw[->] (0,0)--(90:2);
    \planepartition{{1},{1}} 
    \node at (210:3.1){$1$};
    \node at (-30:2.4){$2$};
    \node at (90:2.4){$3$};
\end{tikzpicture}}
     \eea
    where we simply denote $\mathsf{W}_{\bar{4}}(q_{1}^{a}q_{2}^{b}q_{3}^{c}q_{4}^{d}x)$ as $(1^{a}2^{b}3^{c}4^{d})$. The $q$-coordinates of the addable boxes of this configuration are $q_{1}^{2}x, q_{2}x,q_{3}x$ and correspond to the red terms in the numerator. The $q$-coordinate of the removable box of this configuration is $xq_{1}$ and correspond to the blue term with variable $q_{4}^{-1}q_{1}x$.
    
    \item $\pi_{1,1}=1,\pi_{1,2}=1$
    \bea
    {:\mathsf{W}_{\bar{4}}(x)\mathsf{A}^{-1}(x)\mathsf{A}^{-1}(q_{2}x):}\propto \frac{\textcolor{red}{(1)(3)(2^{2})}\textcolor{blue}{(12^{2}3)}}{(12^{2})(2^{2}3)}\quad \longleftrightarrow \quad \adjustbox{valign=c}{
\begin{tikzpicture}[scale=0.5]
    \draw[->] (0,0)--(-30:2.8); 
    \draw[->] (0,0)--(210:2);
    \draw[->] (0,0)--(90:2);
    \planepartition{{1,1}} 
    \node at (210:2.4){$1$};
    \node at (-30:3.1){$2$};
    \node at (90:2.4){$3$};
\end{tikzpicture}}
    \eea
    Similarly, the red terms correspond to the addable boxes and the blue terms correspond to the removable boxes of the plane partition configuration.
    \item $\pi_{1,1}=2$
    \bea
    {:\mathsf{W}_{\bar{4}}(x)\mathsf{A}^{-1}(x)\mathsf{A}^{-1}(q_{3}x):}\propto \frac{\textcolor{red}{(1)(2)(3^{2})}\textcolor{blue}{(123^{2})}}{(13^{2})(23^{2})}\quad \longleftrightarrow \quad \adjustbox{valign=c}{
\begin{tikzpicture}[scale=0.5]
    \draw[->] (0,0)--(-30:2); 
    \draw[->] (0,0)--(210:2);
    \draw[->] (0,0)--(90:2.6);
    \planepartition{{2}} 
    \node at (210:2.4){$1$};
    \node at (-30:2.4){$2$};
    \node at (90:3.1){$3$};
\end{tikzpicture}}
    \eea
    Again, the red terms correspond to the addable boxes and the blue terms correspond to the removable boxes of the plane partition configuration.
\end{enumerate}
\end{itemize}
We can do this procedure recursively and then obtain the following statement.
\begin{lemma}\label{lemma:D6_iWeyl_ref}
The operators generated from the iWeyl reflection starting from $\mathsf{W}_{\bar{a}}(x)$ are classified by plane partitions:
\begin{equation}
    \Lambda_{\bar{a},\pi}(x)\coloneqq {:\mathsf{W}_{\bar{a}}(x)\prod_{\scube\in\pi}\mathsf{A}^{-1}(\chi_{\bar{a},x}(\cube)):},\quad a\in\four.
\end{equation}
Converting the root currents into the $\D6$-operators, we have 
\begin{equation}
    \mathsf{W}_{\bar{a}}(x)\prod\limits_{\scube\in\pi}\mathsf{A}^{-1}(\chi_{\overbar{a},x}(\cube))\propto \frac{\prod\limits_{\scube\in A(\pi)}\mathsf{W}_{\bar{a}}(\chi_{\overbar{a},x}(\cube))\prod\limits_{\scube\in R(\pi)}\mathsf{W}_{\bar{a}}(q_{a}^{-1}\chi_{\overbar{a},x}(\cube))}{\#}
\end{equation}
where we have some extra denominators determined recursively.\footnote{The explicit form of the right-hand side is related to the shell formula of the plane partition \cite{Feigin2011plane}. In this note, the information of the denominator is not necessary so we will not write the explicit form. }  
\end{lemma}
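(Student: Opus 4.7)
The statement splits into two interlocking parts: the iWeyl-reachable operators are in bijection with plane partitions, and each is given by the explicit product formula. Since the factors $\mathsf{A}^{-1}$ all commute inside normal ordering, one may unambiguously define
\begin{equation*}
\Lambda_{\bar{a},\pi}(x) \,:=\, :\mathsf{W}_{\bar{a}}(x)\prod_{\scube \in \pi}\mathsf{A}^{-1}(\chi_{\bar{a},x}(\cube)):
\end{equation*}
for every plane partition $\pi$, and the substance of the lemma is twofold: (i) the product formula holds for $\Lambda_{\bar{a},\pi}(x)$, and (ii) an iWeyl reflection applied at a factor $\mathsf{W}_{\bar{a}}(y)$ appearing in the numerator of $\Lambda_{\bar{a},\pi}(x)$ produces exactly $\Lambda_{\bar{a},\pi+\scube}(x)$ for some $\cube \in A(\pi)$ with $y=\chi_{\bar{a},x}(\cube)$. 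Claim (ii) is an immediate consequence of (i) once the positive-multiplicity terms of the numerator are identified with addable boxes, so the lemma reduces to (i).

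My plan is to reduce (i) to a character identity for plane partitions. Substituting \eqref{eq:oprelationwithD0-D6} and using the Calabi--Yau relation $q_{\four}=1$, one rewrites
\begin{equation*}
\mathsf{A}^{-1}(\chi) \,\propto\, :\frac{\mathsf{W}_{\bar{a}}(q_a^{-1}\chi)\prod_{b\in\four\setminus\{a\}}\mathsf{W}_{\bar{a}}(q_b\chi)}{\mathsf{W}_{\bar{a}}(\chi)\prod_{\substack{b<c\\b,c\in\four\setminus\{a\}}}\mathsf{W}_{\bar{a}}(q_{bc}\chi)}:.
\end{equation*}
Writing $[y]$ for the formal symbol representing a factor $\mathsf{W}_{\bar{a}}(y)$, the signed multiplicity of $\mathsf{W}_{\bar{a}}(y)$ in $\Lambda_{\bar{a},\pi}(x)$ is the coefficient of $[y]$ in
\begin{equation*}
\mathbf{Y}_{\bar{a}}(x,\pi) \,=\, [x] + \sum_{\scube\in\pi}\Bigl([q_a^{-1}\chi(\cube)] + \sum_{b\neq a}[q_b\chi(\cube)] - [\chi(\cube)] - \sum_{\substack{b<c\\b,c\neq a}}[q_{bc}\chi(\cube)]\Bigr),
\end{equation*}
where $\chi(\cube)=\chi_{\bar{a},x}(\cube)$. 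The product formula is then equivalent to the combinatorial identity that the strictly positive terms of $\mathbf{Y}_{\bar{a}}(x,\pi)$ are exactly $\{[\chi(\cube)]:\cube\in A(\pi)\}\cup\{[q_a^{-1}\chi(\cube)]:\cube\in R(\pi)\}$, while the remaining terms carry negative multiplicity and assemble into the denominator $\#$.

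I would prove this identity by induction on $|\pi|$. The base case $|\pi|=0$ reduces to $\mathbf{Y}_{\bar{a}}(x,\emptyset)=[x]$, which matches $A(\emptyset)=\{\text{origin}\}$ and $R(\emptyset)=\emptyset$ since the origin has $q$-coordinate $x$. For the inductive step, let $\cube^*\in A(\pi)$ with $\chi^*=\chi_{\bar{a},x}(\cube^*)$. The increment $\mathbf{Y}_{\bar{a}}(x,\pi+\cube^*)-\mathbf{Y}_{\bar{a}}(x,\pi)$ equals $[q_a^{-1}\chi^*]+\sum_{b\neq a}[q_b\chi^*]-[\chi^*]-\sum_{b<c}[q_{bc}\chi^*]$. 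On the combinatorial side, $\cube^*$ migrates from $A(\pi)$ into $R(\pi+\cube^*)$, accounting for the difference $-[\chi^*]+[q_a^{-1}\chi^*]$, while the three shifts $[q_b\chi^*]$ are absorbed either by the appearance of new addable positions of $\pi+\cube^*$ or by cancellation against existing blocker factors, and similarly for the $-[q_{bc}\chi^*]$ terms.

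The main obstacle will be the local case analysis in the inductive step: the three neighbours of $\cube^*$ whose coordinates are smaller by one in each of the directions $b\in\four\setminus\{a\}$ give $2^3=8$ possible configurations of presence or absence in $\pi$, and each produces a different pattern of cancellations among newly-addable boxes, disappearing addable boxes, and blocker shifts. A direct enumeration is feasible but tedious; fortunately the combinatorics is exactly the content of the shell formula of Feigin--Jimbo--Miwa--Mukhin~\cite{Feigin2011plane} underlying the MacMahon representation of quantum toroidal $\mathfrak{gl}_{1}$, and once the parameters are matched the desired identity follows from theirs, simultaneously yielding the explicit form of $\#$ as a product over the blocker boxes at the shifted coordinates $q_{bc}\chi$.
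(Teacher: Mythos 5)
Your proposal is sound and, at its core, rests on the same two pillars as the paper's own treatment: a recursive verification of the iWeyl orbit and a deferral of the full combinatorial bookkeeping to the shell formula of Feigin et al.\ \cite{Feigin2011plane}. The difference is one of organization. The paper proceeds by explicit computation at levels $0$, $1$, $2$, identifying the numerator factors at $\chi_{\bar{a},x}$ of addable boxes and at $q_a^{-1}\chi_{\bar{a},x}$ of removable boxes in each example, and then asserts that the recursion continues, relegating the denominator to a footnote. You instead reduce the operator identity to a signed-multiplicity identity on the arguments of $\mathsf{W}_{\bar{a}}$ (legitimate, since all these operators are exponentials of the same Heisenberg modes $\mathsf{w}_{\bar{a},n}=\mathsf{a}_n/\bfP_{\bar{a}}^{[-n]}$, so the normal-ordered product is determined up to zero modes — absorbed in the $\propto$ — by the formal sum $\mathbf{Y}_{\bar{a}}(x,\pi)$), state precisely what must be proven (that the positive part of $\mathbf{Y}_{\bar{a}}(x,\pi)$ is exactly the addable set plus the $q_a^{-1}$-shifted removable set), and set up an induction on $|\pi|$ whose inductive step is a finite local case analysis. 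This buys a cleaner target statement and localizes the remaining work, at the cost of not actually carrying out the case analysis; the paper's version buys concreteness and pictures. Neither completes the combinatorial verification independently of \cite{Feigin2011plane}, so your level of rigor matches the paper's.

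One small caution on your claim (ii): as literally stated it is false, because the numerator of $\Lambda_{\bar{a},\pi}(x)$ also contains the factors $\mathsf{W}_{\bar{a}}(q_a^{-1}\chi_{\bar{a},x})$ attached to removable boxes, and applying $\mathsf{A}^{-1}$ at such an argument corresponds to a box at position $(i+1,j+1,k+1)$ relative to a removable box $(i,j,k)$, which is never addable (its predecessor $(i+1,j+1,k)$ would force $(i+1,j,k)\in\pi$, contradicting removability). The iWeyl reflection must therefore be restricted to the addable-box factors of the numerator — which is what the paper implicitly does, and what is enforced dynamically by the pole structure of the commutator with $\mathsf{S}_a$ in Theorem 6.3. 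A sentence making that restriction explicit would close the loop.
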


The $\D6$ $qq$-character is a sum of the vertex operators $\Lambda_{\bar{a},\pi}(x)$ with some specific coefficients:
\beq
    \mathsf{T}_{\bar{a}}(x)=\sum_{\pi\in\mathcal{PP}}F^{\D6}_{\bar{a}}[\pi]\Lambda_{\bar{a},\pi}(x).
\eeq
We impose the condition that this $qq$-character commutes with the screening charge $\mathscr{Q}_{a}(x)$. After imposing this condition, the coefficient is determined uniquely.
\begin{definition}\label{def:D6_qq-ch}
    We define the $\D6$ $qq$-character for $a\in\four$ as 
    \beq
        \mathsf{T}_{\bar{a}}(x)=\sum_{\pi\in\mathcal{PP}}\widetilde{\mathcal{Z}}^{\D6}_{\bar{a}}[\pi]\Lambda_{\bar{a},\pi}(x),\quad a\in\four,
    \eeq
    where the coefficients $\widetilde{\mathcal{Z}}^{\D6}_{\bar{a}}[\pi]$ are identified with the $\U(1)$ partition function of the 7d gauge theory on $\mathbb{C}^{3}_{\bar{a}}\times \mathbb{S}^{1}$ in \eqref{eq:D6tetinst_partfunct}. Rescaling the zero-modes $\mathsf{A}(x)\rightarrow \mathfrak{q}^{-1}\mathsf{A}(x)$, we have 
    \beq
        \mathsf{T}_{\bar{a}}(x)=\sum_{\pi\in\mathcal{PP}}\mathfrak{q}^{|\pi|}\widetilde{\mathcal{Z}}^{\D6}_{\bar{a}}[\pi]\Lambda_{\bar{a},\pi}(x).
    \eeq
    
\end{definition}

\begin{theorem}\label{thm:D6qq-commute}
    The $\D6$ $qq$-characters $\mathsf{T}_{\bar{a}}(x)\,(a\in\four)$ commutes with the screening charge $\mathscr{Q}_{a}(x)$:
    \beq
       \relax [\mathsf{T}_{\bar{a}}(x),\mathscr{Q}_{a}(x')]=0.
    \eeq
\end{theorem}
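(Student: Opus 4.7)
The plan is to mirror the proof of Theorem \ref{thm:D4qq-commute} in the D6 setting, exploiting the quadrality symmetry to reduce to the case $a=4$. First I would compute the OPE of the monomial $\Lambda_{\bar{4},\pi}(x)$ with the screening current $\mathsf{S}_{4}(x')$ in both orderings. Using \eqref{eq:oprelationwithD0-D6} to express $\mathsf{A}(x)$ in terms of $\mathsf{W}_{\bar{4}}$ operators, together with the contraction formulas of Proposition \ref{app:prop:contraction_formula}, these OPEs should take the form
\begin{align*}
\Lambda_{\bar{4},\pi}(x)\mathsf{S}_{4}(x') &= \left[\, c_0 \, \mathscr{W}^{\bar{4}}_{\pi,x}(q_{123}x')^{-1}\,\right]^{x'}_{-} :\Lambda_{\bar{4},\pi}(x)\mathsf{S}_{4}(x'):,\\
\mathsf{S}_{4}(x')\Lambda_{\bar{4},\pi}(x) &= \left[\, c_0 \, \mathscr{W}^{\bar{4}}_{\pi,x}(q_{123}x')^{-1}\,\right]^{x'}_{+} :\Lambda_{\bar{4},\pi}(x)\mathsf{S}_{4}(x'):,
\end{align*}
for some harmless prefactor $c_0$, where the shell structure of $\mathscr{W}^{\bar{4}}_{\pi,x}$ recorded in \eqref{eq:D6Nekrasov-shell} realizes exactly the addable/removable cubes of the plane partition $\pi$.

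Next I would take the commutator $[\mathsf{T}_{\bar{4}}(x),\mathsf{S}_{4}(x')]$ and isolate the residues. The difference of the two OPE expansions produces a sum of $\delta$-functions supported at $x'=q_{4}\chi_{\bar{4},x}(\cube)$ for $\cube \in A(\pi)$ (arising from zeros of $\mathscr{W}^{\bar{4}}_{\pi,x}(q_{123}x')^{-1}$ that are really poles of the reciprocal) and at $x'=\chi_{\bar{4},x}(\cube)$ for $\cube \in R(\pi)$. Relabelling the summation in the removable-box term via $\pi = \pi' + \cube$ and using ${:\Lambda_{\bar{4},\pi'}(x)\mathsf{A}^{-1}(\chi_{\bar{4},x}(\cube))\mathsf{S}_{4}(\chi_{\bar{4},x}(\cube)):}={:\Lambda_{\bar{4},\pi'}(x)\mathsf{S}_{4}(q_{4}\chi_{\bar{4},x}(\cube)):}$ brings both contributions onto the same normal-ordered operator $:\Lambda_{\bar{4},\pi}(x)\mathsf{S}_{4}(q_{4}\chi_{\bar{4},x}(\cube)):$.

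The key step is then to invoke the recursion formula for $\widetilde{\mathcal{Z}}^{\D6}_{\bar{a}}[\pi]$ stated in Section \ref{sec:Tetrahedron_inst} (derived from the shell identities of the $\mathscr{W}$-functions):
\[
\frac{\widetilde{\mathcal{Z}}^{\D6}_{\bar{4}}[\pi+\cube\,]}{\widetilde{\mathcal{Z}}^{\D6}_{\bar{4}}[\pi]} = -\,\frac{\underset{x'=\chi_{\bar{4},x}(\scube)}{\Res}{x'}^{-1}\mathscr{W}^{\bar{4}}_{\pi,x}(x')^{-1}}{\underset{x'=\chi_{\bar{4},x}(\scube)}{\Res}{x'}^{-1}\mathscr{W}^{\bar{4}}_{\pi+\scube,x}(q_{4}^{-1}x')^{-1}}.
\]
Precisely this ratio is what is needed for the removable-box residues (after shift) to cancel the addable-box residues up to an overall factor $\delta(q_{4}\chi_{\bar{4},x}(\cube)/x')-\delta(\chi_{\bar{4},x}(\cube)/x')$, a total difference in the instanton variable $q_{4}^{k}$. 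Summing over $k\in\mathbb{Z}$ in $\mathscr{Q}_{4}(x')=\sum_{k}\mathsf{S}_{4}(q_{4}^{k}x')$ then gives a telescoping sum, so $[\mathsf{T}_{\bar{4}}(x),\mathscr{Q}_{4}(x')]=0$.

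The main obstacle I anticipate is bookkeeping the residue prefactors cleanly: unlike the D4 case where the $\mathscr{Y}$-functions are rational and their zeros/poles match transparently with $A(\lambda),R(\lambda)$, here the shell structure of $\mathscr{W}^{\bar{4}}_{\pi,x}$ requires the nontrivial identity relating $\Res \, \mathscr{W}^{\bar{4}}_{\pi,x}(x')^{-1}$ at addable cubes of $\pi$ to $\Res \, \mathscr{W}^{\bar{4}}_{\pi+\scube,x}(q_{4}^{-1}x')^{-1}$ at (the same cube regarded as) removable cubes of $\pi+\cube$. Verifying this identity is essentially the content of the D6 analog of Theorem \ref{eq:app-thm-D6U1recursionformula}, and once it is in hand the cancellation of the commutator is automatic, exactly paralleling the proof of Theorem \ref{thm:D4qq-commute}. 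The quadrality symmetry in the $q_a$ then extends the result from $a=4$ to all $a\in\four$.
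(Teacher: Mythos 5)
Your proposal is correct and follows essentially the same route as the paper's proof: compute the two orderings of the OPE of $\Lambda_{\bar{4},\pi}(x)$ with $\mathsf{S}_{4}(x')$ (the paper's prefactor is $-q_{4}x$ and its argument $q_{4}^{-1}x'$, which coincides with your $q_{123}x'$ by the CY condition), match the resulting $\delta$-function supports with the addable/removable cubes of $\pi$, shift the removable-cube sum, and invoke the recursion formula of Thm.~\ref{eq:app-thm-D6U1recursionformula} for $\widetilde{\mathcal{Z}}^{\D6}_{\bar{4}}[\pi]$ to produce the telescoping total difference. The residue identity you flag as the main obstacle is exactly the content of that appendix theorem, as you anticipate, so no gap remains.
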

\begin{proof}
Let us focus on $\mathsf{T}_{\bar{4}}(x)$ and see how the commutativity appears. Using the formulas in Thm. \ref{eq:app-contractions} and the property in \eqref{eq:oprelation1}, we have
\begin{align}
\begin{split}
    [\mathsf{W}_{\bar{4}}(x),\mathsf{S}_{4}(x')]&=q_{4}x\delta\left(q_{4}x/x'\right):\mathsf{W}_{\bar{4}}(x)\mathsf{S}_{4}(q_{4}x):\\
    [{:\mathsf{W}_{\bar{4}}(x)\mathsf{A}^{-1}(x):},\mathsf{S}_{4}(x')]&=x\delta\left(x'/x\right)\frac{\prod_{i=1}^{3}(1-q_{i})}{\prod_{1\leq i<j\leq 3}(1-q_{i}q_{j})}:\mathsf{W}_{123}(x)\mathsf{S}_{4}(q_{4}x):+\cdots,
    \end{split}
\end{align}
where for the second term, we only extracted the pole coming from $x'=x$. Thus, for the pole coming from $\mathsf{W}_{\bar{4}}(x)$ to disappear up to a total difference, we need the combination
\beq
    \mathsf{W}_{\bar{4}}(x)-q_{4}\frac{\prod_{1\leq i<j\leq 3}(1-q_{i}q_{j})}{\prod_{i=1}^{3}(1-q_{i})}:\mathsf{W}_{\bar{4}}(x)\mathsf{A}^{-1}(x):
\eeq
where the coefficient is $\widetilde{Z}_{\bar{4}}^{\D6}[\,\cube\,]$ (see Appendix \eqref{eq:app-D6level1}). Generally, using 
\begin{align}
\begin{split}
  \Lambda_{\bar{a},\pi}(x)\mathsf{S}_{4}(x') &=-q_{4}x\left[\mathscr{W}^{\overbar{4}}_{\pi,x}(q_{4}^{-1}x')^{-1}\right]^{x'}_{-}:\Lambda_{\bar{a},\pi}(x)\mathsf{S}_{4}(x'):,\\
  \mathsf{S}_{4}(x')\Lambda_{\bar{a},\pi}(x)&=-q_{4}x\left[\mathscr{W}_{\pi,x}^{\overbar{4}}(q_{4}^{-1}x')^{-1}\right]^{x'}_{+}:\Lambda_{\bar{a},\pi}(x)\mathsf{S}_{4}(x'):
\end{split}
\end{align}
and the property in \eqref{eq:D6Nekrasov-shell}, we obtain
\begin{align}
\begin{split}
    &[\mathsf{T}_{\bar{4}}(x),\mathsf{S}_{4}(x')]\\
=&q_{4}x\sum_{\pi\in\mathcal{PP}}\widetilde{\mathcal{Z}}_{\bar{4}}^{\text{D6}}[\pi]\left(\sum_{\scube\,\in A(\pi)}\underset{x'=q_{4}\chi_{\overbar{4},x}(\scube)}{\Res}{x'}^{-1}\mathscr{W}^{\overbar{4}}_{\pi,x}(q_{4}^{-1}x')^{-1}\delta\left(\frac{x'}{q_{4}\chi_{\overbar{4},x}(\cube)}\right):\Lambda_{\bar{4},\pi}(x)\mathsf{S}_{4}(q_{4}\chi_{\overbar{4},x}(\cube)):\right.\\
    &\qquad\left.+\sum_{\scube\in R(\pi)}\underset{x'=\chi_{\overbar{4},x}(\scube)}{\Res}{x'}^{-1}\mathscr{W}^{\overbar{4}}_{\pi,x}(q_{4}^{-1}x')^{-1}\delta\left(\frac{x'}{\chi_{\overbar{4},x}(\cube)}\right):\Lambda_{\bar{4},\pi}(x)\mathsf{S}_{4}(\chi_{\overbar{4},x}(\cube)):\right).
\end{split}
\end{align}
Shifting the second term as $\pi'=\pi-\scube$, the second term will be rewritten as
\bea
\sum_{\pi'\in\mathcal{PP}}\widetilde{\mathcal{Z}}_{\bar{4}}^{\text{D6}}[\pi'+\cube]\sum_{\scube\in A(\pi')}\underset{x'=\chi_{\overbar{4},x}(\scube)}{\Res}{x'}^{-1}\mathscr{W}^{\overbar{4}}_{\pi'+\scube,x}(q_{4}^{-1}x')^{-1}\delta\left(\frac{x'}{\chi_{\overbar{4},x}(\cube)}\right):\Lambda_{\bar{4},\pi'+\scube}(x)\mathsf{S}_{4}(\chi_{\overbar{4},x}(\cube)):.
\eea
Using Thm.~\ref{eq:app-thm-D6U1recursionformula}, we have
\begin{align}
\begin{split}
    {:\Lambda_{\bar{4},\pi'+\scube}(x)\mathsf{S}_{4}(\chi_{\overbar{4},x}(\cube)):}={:\Lambda_{\bar{4},\pi'}(x)\mathsf{S}_{4}(q_{4}\chi_{\bar{4},x}(\cube)):},\\
    \frac{\widetilde{\mathcal{Z}}^{\D6}_{\bar{4}}[\pi+\cube]}{\widetilde{\mathcal{Z}}^{\D6}_{\bar{4}}[\pi]}=-\frac{\underset{x'=q_{4}\chi_{\overbar{4},x}(\scube)}{\Res}{x'}^{-1}\mathscr{W}^{\overbar{4}}_{\pi,x}(q_{4}^{-1}x')^{-1}}{\underset{x'=\chi_{\overbar{4},x}(\scube)}{\Res}{x'}^{-1}\mathscr{W}^{\overbar{4}}_{\pi+\scube,x}(q_{4}^{-1}x')^{-1}}
\end{split}
\end{align}
and then we obtain
\begin{align}
\begin{split}
    &[\mathsf{T}_{123}(x),\mathsf{S}_{4}(x')]\\
    =&q_{4}x\sum_{\pi\in\mathcal{PP}}\widetilde{\mathcal{Z}}_{\bar{4}}^{\text{D6}}[\pi]\sum_{\scube\,\in A(\pi)}\underset{x'=q_{4}\chi_{\overbar{4},x}(\scube)}{\Res}{x'}^{-1}\mathscr{W}^{\overbar{4}}_{\pi,x}(q_{4}^{-1}x')^{-1}:\mathsf{W}_{\bar{4}}(x)\prod_{\scube\in\pi}\mathsf{A}^{-1}(\chi_{\overbar{4},x}(\cube))\mathsf{S}_{4}(q_{4}\chi_{\overbar{4},x}(\cube))\\
    &\times \left(\delta\left(\frac{x'}{\chi_{\overbar{4},x}(\cube)}\right)-\delta\left(\frac{x'}{q_{4}\chi_{\overbar{4},x}(\cube)}\right)\right)
\end{split}
\end{align}
which gives the claim.
\end{proof}

The Thm.~\ref{thm:D2qq-commute}, \ref{thm:D4qq-commute}, \ref{thm:D6qq-commute} are summarized in the following theorem.
\begin{theorem}\label{thm:tetrascreening}
Let $\mathscr{T}$ be the tetrahedron corresponding to the $\mathbb{C}^{4}$ geometry (see Figure \ref{fig:complex}). We denote the set of vertices, edges, and faces of $\mathscr{T}$ as $\mathsf{v}=\{a\mid a\in\four\}$, $\mathsf{e}=\{A\mid A\in\six\}$, $\mathsf{f}=\{\bar{a}\mid a\in\four\}$ respectively. We also introduce the union of them as $\mathscr{S}=\mathsf{v}\cup\mathsf{e}\cup\mathsf{f}$. For each element of $i\in\mathscr{S}$, we can associate a $qq$-character. If $i\in\mathsf{v}$, we associate the $\D2$ $qq$-character (screening charge), if $i\in\mathsf{e}$, we associate the $\D4$ $qq$-character, if $i\in\mathsf{f}$, we associate the $\D6$ $qq$-character. The $qq$-character associated with the elements $i,j\in\mathscr{T}$ commute with each other up to trivial zero modes (see \eqref{eq:weakcommute}) when $i$ and $j$ do not intersect in $\mathscr{T}$:
\beq
    \mathsf{T}_{i}(x)\mathsf{T}_{j}(x')-f_{ij}(x,x')\mathsf{T}_{j}(x')\mathsf{T}_{i}(x)=0 \quad \Longleftrightarrow \quad i \cap j =\emptyset,
\eeq
where $f_{ij}(x,x')$ are zero modes.
\begin{align*}\adjustbox{valign=c}{\includegraphics[width=6cm]{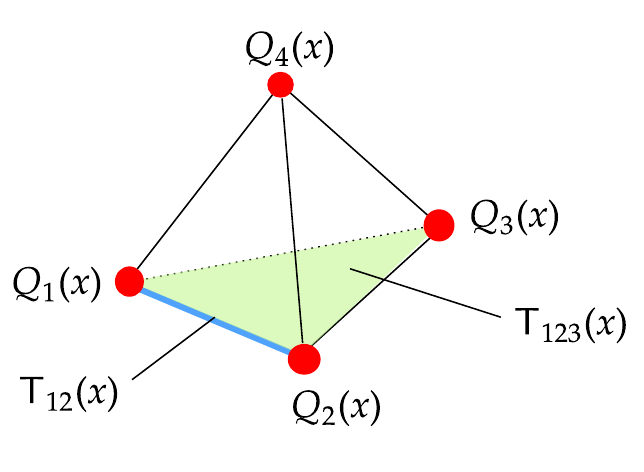}}\end{align*}
\end{theorem}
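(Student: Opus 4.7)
The combinatorial content of the statement reduces to a case-by-case check depending on the dimensions of the simplices $i,j$ in the tetrahedron $\mathscr{T}$. First I would enumerate the five types of transverse pairs $(i,j)$ with $i\cap j=\emptyset$: vertex-vertex ($a\neq b$), vertex-edge ($a\in\bar{A}$), vertex-face ($a=b$ in the pair $\mathscr{Q}_a,\mathsf{T}_{\bar{b}}$), edge-edge ($B=\bar{A}$), and edge-face ($a\in A$ in the pair $\mathsf{T}_A,\mathsf{T}_{\bar{a}}$). Note that face-face pairs are never transverse since two 3-subsets of $\underline{\mathbf{4}}$ share at least two elements; the statement therefore predicts non-commutativity in this case, which falls under the converse direction handled below.

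Four of the five transverse cases are already handled in the text: vertex-vertex is Theorem \ref{thm:D2qq-commute}; vertex-edge is Theorem \ref{thm:D4qq-commute}; vertex-face is Theorem \ref{thm:D6qq-commute}; and edge-edge $(\mathsf{T}_A,\mathsf{T}_{\bar{A}})$ is the crossed-instanton calculation of Section \ref{sec:D4quadraticrelation}. For the remaining edge-face case with $A=ab$ and $a\in A$, the plan is to invoke the fusion statement (Theorem \ref{thm:D2toD4fusion} and its expected D4-to-D6 analogue, referenced as Theorem \ref{thm:D4toD6fusion}) to express $\mathsf{T}_{\bar{a}}(x')$ as an ordered fusion of screening charges $\mathscr{Q}_c(\ast)$ with $c\in\bar{a}$. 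Since $a\in A$, one has $c\in\bar{A}$ for every $c\in\bar{a}\setminus\{b\}$; choosing the fusion direction along such a $c$ makes every building block commute with $\mathsf{T}_A(x)$ by Theorem \ref{thm:D4qq-commute}, and the commutativity then propagates through the regularized infinite product up to the one-loop prefactors and zero-mode twists captured by the weak commutativity \eqref{eq:weakcommute}.

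For the converse implication, my approach is to exhibit a non-vanishing residue in the commutator $\mathsf{T}_i(x)\mathsf{T}_j(x')-f_{ij}(x,x')\mathsf{T}_j(x')\mathsf{T}_i(x)$ whenever $i\cap j\neq\emptyset$. The OPE of the generating currents is governed by the structure functions $\mathscr{V}_a,\mathscr{S}_A,g_{\bar{a}}$ of \eqref{eq:struct_funct}, and whenever $i$ and $j$ share a coordinate direction indexed by a common element $c\in i\cap j$ the factor $(1-q_c\,x'/x)^{\pm 1}$ tied to that direction yields a simple pole whose residue cannot be absorbed into a total-difference structure of the form appearing in the proofs of Theorems \ref{thm:D2qq-commute}--\ref{thm:D6qq-commute}. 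Concretely the iWeyl reflection generating one of the $qq$-characters produces new poles against the other that fail to cancel, leaving a nonzero delta-function source at shared addable/removable boxes of the labeling partitions; this rules out weak commutativity in every non-transverse pair, in particular in the face-face case.

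The main obstacle is the edge-face step: one must confirm that the D4-to-D6 fusion produces exactly the coefficients $\widetilde{\mathcal{Z}}^{\D6}_{\bar{a}}[\pi]$ of Definition \ref{def:D6_qq-ch} and that the regularization of the infinite product is compatible with the zero-mode convention used in \eqref{eq:weakcommute}, so that the zero-mode factor $f_{A,\bar{a}}(x,x')$ is itself a trivial rational function and not a nontrivial operator. Once this is established, commutativity with $\mathsf{T}_A(x)$ follows from the screening-charge commutativity applied factor by factor. The converse direction is more routine but requires keeping careful track of the orientations chosen in the half $q$-Cartan decompositions of Section \ref{sec:quiverstructure} so that the pole-residue computation survives the ordering ambiguities inherent in the square-root part of the index.
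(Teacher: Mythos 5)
Your reduction of the forward direction to Theorems \ref{thm:D2qq-commute}, \ref{thm:D4qq-commute}, \ref{thm:D6qq-commute} together with the crossed-instanton relation of section \ref{sec:D4quadraticrelation} is exactly the paper's argument: the theorem is presented there as a summary of those four results. The problem is your fifth case. An edge $A\in\six$ and a face $\bar{a}\in\four^{\vee}$ are two- and three-element subsets of $\four$, so $|A|+|\bar{a}|=5>4$ forces $A\cap\bar{a}\neq\emptyset$; geometrically every edge of the tetrahedron meets every face, and the subspaces $\mathbb{C}^{2}_{A}$ and $\mathbb{C}^{3}_{\bar{a}}$ of $\mathbb{C}^{4}$ always share at least a complex line. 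There is no transverse edge--face pair, the list of cases is four rather than five, and the theorem in fact asserts that $\mathsf{T}_{A}$ and $\mathsf{T}_{\bar{a}}$ do \emph{not} weakly commute. Your fusion argument for this phantom case would, if sound, contradict the very equivalence you are proving. It fails because factor-by-factor commutativity with screening charges does not survive the regularized infinite product at tuned spectral parameters: the total-difference (telescoping) mechanism behind Theorem \ref{thm:D4qq-commute} leaves boundary terms, and the one-loop prefactors discarded in the ``$\simeq$'' of Theorems \ref{thm:D2toD4fusion} and \ref{thm:D4toD6fusion} accumulate into a nontrivial operator rather than a harmless zero-mode function $f_{ij}(x,x')$.

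The criterion you should be using, for both directions, is the $q$-Cartan pairing controlling the contractions: the nonzero modes of the generating currents attached to $i$ and $j$ have commutator proportional to $\bfP_{\four}^{[n]}/\bigl(\bfP_{i}^{[-n]}\bfP_{j}^{[n]}\bigr)$, and the pole-cancellation arguments of Theorems \ref{thm:D2qq-commute}--\ref{thm:D6qq-commute} require this to be a Laurent polynomial in the $q^{n}$'s, i.e.\ that $\bfP_{i}\bfP_{j}$ divide $\bfP_{\four}=\prod_{a\in\four}\bfP_{a}$, which holds precisely when $i\cap j=\emptyset$. When $i\cap j\neq\emptyset$ (in particular for every edge--face and face--face pair) a denominator factor $\bfP_{c}^{[\mp n]}$ with $c\in i\cap j$ survives, the contraction exponentiates to a $q$-shifted factorial with an infinite lattice of poles, and the commutator cannot collapse to a telescoping sum of delta functions. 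This is the sharp form of the residue obstruction you gesture at in your converse paragraph, and it is what lets you place the edge--face pairs correctly on the non-commuting side of the equivalence.
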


\begin{remark}
Note that the D6 $qq$-characters $\mathsf{T}_{\bar{a}}(x)\,(a\in\four)$ are the operator version of the $\D6$ $qq$-characters introduced in \eqref{eq:D6qqchpartition}:
\bea
\mathsf{T}_{\bar{a}}(x)\quad \longleftrightarrow  \quad\widehat{\mathscr{T}}^{a}(x),\,\widehat{\mathscr{T}}^{a\vee}(x),\quad a\in \four.
\eea
The highest weight and the root current have the following correspondence:
\bea
\mathsf{W}_{\bar{a}}(x)\quad &\longleftrightarrow \quad \widehat{\mathscr{U}}^{a}(q_{a}x)^{-1},\widehat{\mathscr{U}}^{a\vee}(q_{a}x)^{-1},\quad a\in\four\\
\mathsf{A}(x)^{-1}\quad &\longleftrightarrow \quad  \widehat{\mathscr{A}}(x)^{-1},\widehat{\mathscr{A}}^{\,\vee}(x)^{-1}.
\eea
\end{remark}

\begin{remark}
    Choosing the screening charge $\mathscr{Q}_{4}(x)$, Thm.~\ref{thm:tetrascreening} claims that the kernel of $\mathscr{Q}_{4}(x)$ is generated by $\mathscr{Q}_{1,2,3}(x),\mathsf{T}_{12,23,13}(x), \mathsf{T}_{123}(x)$. We expect that this will give a larger algebra compared with the affine quiver W-algebra given by the kernel of two screening charges. We leave a detailed analysis of this for future work.
\end{remark}

\begin{remark}
We have shown here that the $\D6$ $qq$-character $\mathsf{T}_{\bar{a}}(x)$ is uniquely determined by the commutativity with the screening charge $\mathscr{Q}_{a}(x)$ and that the coefficient factor is the $\U(1)$ partition function of the 7d theory on $\mathbb{C}^{3}_{\bar{a}}\times \mathbb{S}^{1}$. Compared with the magnificent four system where we need to take care of the sign problem~\cite{Nekrasov:2017cih,Nekrasov:2018xsb,Nekrasov:2023nai}, our discussion here gives an algebraic proof showing that there will be no sign problem for the 7d case. This is indeed compatible with a mathematical proof given in~\cite{Fasola:2023ypx}.
\end{remark}

\subsection{Tetrahedron instantons and D6 \texorpdfstring{$qq$}{qq}-characters}\label{sec:tetrainstD6qq}
The expanded version of the tetrahedron instanton partition function can be expressed using the $\D6$ $qq$-characters.
\begin{lemma}\label{lem:D6operatorproduct}
    The operator product of $\{\Lambda_{\bar{a},\pi}(x)\}_{a\in\four}$ is 
    \bea
    \Lambda_{\bar{b},\pi^{(2)}}(x_{2})\Lambda_{\bar{a},\pi^{(1)}}(x_{1})&=\mathcal{Z}^{\D6\tbar\D6}_{1\tbar\text{loop}}(x_{1},\bar{a}\,|\,x_{2},\bar{b})\mathcal{Z}^{\D6\tbar\D6}_{\bar{a}\,|\,\bar{b}}(x_{1},\pi^{(1)}\,|\,x_{2},\pi^{(2)}):\Lambda_{\bar{a},\pi^{(1)}}(x_{1})\Lambda_{\bar{b},\pi^{(2)}}(x_{2}):.
\eea
When $a=b$, it gives the vector multiplet contributions, while when $a\neq b$, it gives the bifundamental contribution connecting gauge theories defined on different $\mathbb{C}^{3}$ subspaces.
\end{lemma}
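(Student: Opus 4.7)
The plan is to compute the operator product $\Lambda_{\bar{b},\pi^{(2)}}(x_{2})\Lambda_{\bar{a},\pi^{(1)}}(x_{1})$ by decomposing it into four types of pairwise contractions: $\mathsf{W}_{\bar{b}}\tbar\mathsf{W}_{\bar{a}}$, $\mathsf{W}_{\bar{b}}\tbar\mathsf{A}^{-1}$, $\mathsf{A}^{-1}\tbar\mathsf{W}_{\bar{a}}$, and $\mathsf{A}^{-1}\tbar\mathsf{A}^{-1}$. Each of these contractions is already recorded in the contraction table of section~\ref{sec:tetraLMNS} (and the appendix of explicit contractions it references), so the computation is essentially bookkeeping, with the non-trivial content being the identification of the resulting product with the bifundamental Nekrasov factor $\mathcal{Z}^{\D6\tbar\D6}_{\bar{a}|\bar{b}}$.

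First I would dispatch the $\mathsf{W}_{\bar{b}}(x_{2})\mathsf{W}_{\bar{a}}(x_{1})$ contraction. By the definition of the one-loop factor in \eqref{eq:D6oneloop}, this is precisely $\mathcal{Z}^{\D6\tbar\D6}_{1\tbar\text{loop}}(x_{1},\bar{a}\,|\,x_{2},\bar{b})$, accounting for the perturbative piece. Next I would collect the mixed $\mathsf{W}\tbar\mathsf{A}^{-1}$ contractions: using the OPE $\mathsf{W}_{\bar{a}}(x)\mathsf{A}(x')=\mathscr{V}_{a}(x'/x)^{-1}\,{:\!\mathsf{W}_{\bar{a}}(x)\mathsf{A}(x')\!:}$ (and the mirror formula with reversed ordering), one obtains
\begin{equation*}
\prod_{\scubeF\in\pi^{(2)}}\mathscr{V}_{a}\!\left(\frac{x_{1}}{\chi_{\bar{b},x_{2}}(\cubeF)}\right)\!\prod_{\scube\in\pi^{(1)}}\mathscr{V}_{b}\!\left(\frac{q_{b}\chi_{\bar{a},x_{1}}(\cube)}{x_{2}}\right)^{-1},
\end{equation*}
after inserting the explicit form of $\mathscr{V}$ from \eqref{eq:struct_funct}. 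These reproduce the two single products in $\mathcal{Z}^{\D6\tbar\D6}_{\bar{a}|\bar{b}}$ after a trivial rewrite $\mathscr{V}_{b}(q_{b}y)^{-1}=q_{b}\,(1-q_{b}^{-1}y^{-1})/(1-y^{-1})$. Finally, the $\mathsf{A}^{-1}\tbar\mathsf{A}^{-1}$ contractions, governed by \eqref{eq:D0ope}, immediately deliver
\begin{equation*}
\prod_{\substack{\scube\in\pi^{(1)}\\\scubeF\in\pi^{(2)}}}\mathcal{A}_{\mathbb{C}^{4}}\!\left(\frac{\chi_{\bar{a},x_{1}}(\cube)}{\chi_{\bar{b},x_{2}}(\cubeF)}\right)^{-1},
\end{equation*}
which is the double product appearing in $\mathcal{Z}^{\D6\tbar\D6}_{\bar{a}|\bar{b}}$.

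Assembling the three contributions, and using the zero-mode conditions \eqref{eq:D0D0zero}, \eqref{eq:D0D6zero} to ensure that interchanging operator orders inside the double-product only produces the rational functions quoted above (not additional zero-mode phases), yields the claimed formula. The only mild subtlety, which I expect to be the main obstacle, is the bookkeeping for the case $a=b$, where the $\mathscr{V}$-product must combine with the diagonal of the $\mathcal{A}_{\mathbb{C}^{4}}$-product to reproduce the correct $\widetilde{\mathcal{Z}}^{\D6}_{\bar{a}}$-type factor along with the bifundamental, since the ``self-contraction'' of $\pi^{(1)}=\pi^{(2)}$ and the inter-contraction must not be double counted; this is resolved by carefully using the $\D6$ Nekrasov factor recursion \eqref{eq:D6Nekrasovrecursion} together with the identity $\mathcal{A}_{\mathbb{C}^{4}}(x)=g_{\bar{a}}(x)/g_{\bar{a}}(q_{a}x)$ noted below \eqref{eq:struct_funct}, which telescopes the diagonal contribution into the $\mathscr{W}$-function form appearing in $\widetilde{\mathcal{Z}}^{\D6}_{\bar{a}|\bar{a}}$.
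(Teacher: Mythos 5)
Your decomposition into the four pairwise contractions (the $\mathsf{W}\mathsf{W}$ piece, the two mixed $\mathsf{W}\mathsf{A}^{-1}$ pieces, and the $\mathsf{A}^{-1}\mathsf{A}^{-1}$ piece) is exactly how this lemma is meant to be established — the paper treats it, like its D4 analogue, as an immediate consequence of the recorded contraction formulas — and your identifications of the $\mathsf{W}_{\bar b}\mathsf{W}_{\bar a}$ factor with $\mathcal{Z}^{\D6\tbar\D6}_{\text{1-loop}}$, of the $\mathsf{A}^{-1}\mathsf{W}_{\bar a}$ factors with $\prod_{\scubeF\in\pi^{(2)}}\mathscr{V}_a(x_1/\chi_{\bar b,x_2}(\cubeF))$, and of the $\mathsf{A}^{-1}\mathsf{A}^{-1}$ factors with the $\mathcal{A}_{\mathbb{C}^4}^{-1}$ double product are all correct.

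Two points need repair. First, the contraction of $\mathsf{W}_{\bar b}(x_2)$ against $\mathsf{A}^{-1}(\chi_{\bar a,x_1}(\cube))$ follows from $\mathsf{W}_{\bar b}(x_2)\mathsf{A}(y)=q_b^{-1}\mathscr{V}_b(q_b^{-1}y/x_2):\mathsf{W}_{\bar b}(x_2)\mathsf{A}(y):$ and equals $q_b\,\mathscr{V}_b(q_b^{-1}\chi_{\bar a,x_1}(\cube)/x_2)^{-1}=q_b\,(1-q_b^{-1}\chi_{\bar a,x_1}(\cube)/x_2)/(1-\chi_{\bar a,x_1}(\cube)/x_2)$, which is literally the first product in $\mathcal{Z}^{\D6\tbar\D6}_{\bar a|\bar b}$ with no rewriting needed. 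Your $\mathscr{V}_b(q_b\chi_{\bar a,x_1}(\cube)/x_2)^{-1}$ is not that quantity, and the ``trivial rewrite'' $\mathscr{V}_b(q_by)^{-1}=q_b(1-q_b^{-1}y^{-1})/(1-y^{-1})$ you invoke to bridge the gap is false: the left-hand side is $(1-q_by)/(1-q_b^2y)$ while the right-hand side equals $(q_by-1)/(y-1)$. Second, the $a=b$ case involves no double-counting subtlety: $\Lambda_{\bar a,\pi^{(1)}}(x_1)$ and $\Lambda_{\bar a,\pi^{(2)}}(x_2)$ are two distinct normal-ordered operators, so the only contractions are the cross ones and the general formula specializes directly to $\mathcal{Z}^{\D6\tbar\D6}_{\bar a|\bar a}$; the $\widetilde{\mathcal{Z}}^{\D6}_{\bar a}[\pi]$ factors enter Theorem~\ref{thm:tetra-origamiBPSCFT} only as the prescribed coefficients of the $qq$-character, not from this operator product, so no telescoping via the Nekrasov recursion is required.
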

\begin{theorem}\label{thm:tetra-origamiBPSCFT}
    The gauge origami partition function of the tetrahedron instanton is written using the $\D6$ $qq$-characters:
    \bea
        \mathcal{Z}_{\text{1-loop}}^{\D6}\mathcal{Z}_{\text{inst.}}^{\D6}=\sum_{\underline{\vec{\pi}}}\mathfrak{q}^{|\underline{\vec{\pi}}|}\mathcal{Z}^{\D6}_{\text{1-loop}}\mathcal{Z}^{\D6}_{\text{tet.inst.}}[\underline{\vec{v}},\underline{\vec{\pi}}]=\bra{0}\prod_{a\in\four}\prod_{\alpha=1}^{n_{\bar{a}}}\mathsf{T}_{\bar{a}}(v_{\bar{a},\alpha})\ket{0}.
    \eea
    Explicitly, we have the following table of BPS/CFT correspondence.
    \begin{align}
    \renewcommand\arraystretch{1.2}{
    \begin{tabular}{c|c}\toprule
        BPS &  CFT\\
     \hline  7d U(1) theory on $\mathbb{C}^{3}_{abc}\times \mathbb{S}^{1}$ ($\D6_{abc}\times 1$) & $\bra{0}\mathsf{T}_{abc}(v)\ket{0}$ \\
      7d U($n_{ab}$) theory on $\mathbb{C}^{3}_{abc}\times \mathbb{S}^{1}$ ($\D6_{abc}\times n_{abc}$)   & $\bra{0}\mathsf{T}_{abc}(v_{n_{abc}})\cdots\mathsf{T}_{abc}(v_{2})\mathsf{T}_{abc}(v_{1})\ket{0}$\\
      generalized folded instantons: D6$_{123}$-D6$_{234}$-$\D0$ & $\bra{0}\mathsf{T}_{123}(v)\mathsf{T}_{234}(v')\ket{0}$\\
      gauge origami of tetrahedron instantons & $\bra{0}\prod\limits_{a\in\four}\prod\limits_{\alpha=1}^{n_{\bar{a}}}\mathsf{T}_{\bar{a}}(v_{\bar{a},\alpha})\ket{0}$ \\ \toprule
    \end{tabular}}
\end{align}
\end{theorem}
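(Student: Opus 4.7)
The plan is to follow exactly the strategy used for the D2 case in Theorem~\ref{thm:cplvortexBPSCFT} and the D4 case in Theorem~\ref{thm:spiked-qq-BPSCFT}, namely to expand each $qq$-character as a sum of normal-ordered monomials labeled by plane partitions, then compute the non-normal-ordered product by applying the OPE formula of Lemma~\ref{lem:D6operatorproduct} factor by factor, and finally observe that the vacuum expectation value annihilates all non-zero modes so that only the scalar OPE factors survive.

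First, fix a total ordering on the index set $\{(a,\alpha)\mid a\in\four,\ 1\leq \alpha\leq n_{\bar a}\}$ and write, by Definition~\ref{def:D6_qq-ch},
\begin{equation*}
\prod_{(a,\alpha)}\mathsf{T}_{\bar a}(v_{\bar a,\alpha})
=\sum_{\underline{\vec\pi}}\mathfrak{q}^{|\underline{\vec\pi}|}
\Biggl(\prod_{(a,\alpha)}\widetilde{\mathcal{Z}}^{\D6}_{\bar a}\bigl[\pi_{\bar a}^{(\alpha)}\bigr]\Biggr)
\prod_{(a,\alpha)}\Lambda_{\bar a,\pi_{\bar a}^{(\alpha)}}(v_{\bar a,\alpha}),
\end{equation*}
where the last product on the right is ordered according to the chosen ordering. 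Next, apply Lemma~\ref{lem:D6operatorproduct} iteratively to move all the $\Lambda$-operators into a single normal ordering; each swap produces a scalar factor $\mathcal{Z}^{\D6\tbar\D6}_{1\tbar\text{loop}}(v_{\bar a,\alpha},\bar a\,|\,v_{\bar b,\beta},\bar b)\,\mathcal{Z}^{\D6\tbar\D6}_{\bar a|\bar b}(v_{\bar a,\alpha},\pi_{\bar a}^{(\alpha)}\,|\,v_{\bar b,\beta},\pi_{\bar b}^{(\beta)})$ for each pair with $(b,\beta)>(a,\alpha)$. Collecting these contractions then yields
\begin{equation*}
\prod_{(a,\alpha)}\Lambda_{\bar a,\pi_{\bar a}^{(\alpha)}}(v_{\bar a,\alpha})
=\mathcal{Z}^{\D6}_{\text{1-loop}}\!\!\prod_{(b,\beta)>(a,\alpha)}\!\!\mathcal{Z}^{\D6\tbar\D6}_{\bar a|\bar b}\bigl(v_{\bar a,\alpha},\pi_{\bar a}^{(\alpha)}\,\big|\,v_{\bar b,\beta},\pi_{\bar b}^{(\beta)}\bigr)\;:\!\!\prod_{(a,\alpha)}\Lambda_{\bar a,\pi_{\bar a}^{(\alpha)}}(v_{\bar a,\alpha})\!\!:.
\end{equation*}
Taking $\bra{0}\cdots\ket{0}$ kills the normal-ordered exponential (leaving $1$), and one checks that when the bifundamental factor is combined with the diagonal contributions from pairs with $(a,\alpha)=(b,\beta)$ and $a=b$, $\alpha<\beta$ — which are precisely the $\widetilde{\mathcal{Z}}^{\D6}_{\bar a}[\pi_{\bar a}^{(\alpha)}]$ times $\mathcal{Z}^{\D6\tbar\D6}_{\bar a|\bar a}$ factors appearing in the definition of $\mathcal{Z}^{\D6}_{\text{tet.inst.}}[\underline{\vec v},\underline{\vec\pi}]$ in \eqref{eq:D6tetinst_partfunct} — the resulting sum over plane partitions reproduces the right-hand side of the theorem.

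The technical bookkeeping that I expect to require the most care is verifying the matching of zero-mode factors and the sign conventions in the ordered OPE, since the off-diagonal factor $\mathcal{Z}^{\D6\tbar\D6}_{\bar a|\bar b}$ is not symmetric in its two arguments and one must check that the ordering-dependent pieces reassemble correctly into the explicitly symmetric form \eqref{eq:D6tetinst_partfunct}. In contrast to the magnificent four, no sign ambiguity appears here (as is already visible in the algebraic derivation of Theorem~\ref{thm:D6qq-commute}), which is consistent with the mathematical result of~\cite{Fasola:2023ypx}. The core identity to verify is thus the reduction
\begin{equation*}
\mathcal{Z}^{\D6}_{\text{1-loop}}\!\prod_{a\in\four}\prod_{\alpha=1}^{n_{\bar a}}\widetilde{\mathcal{Z}}^{\D6}_{\bar a}[\pi_{\bar a}^{(\alpha)}]\!\prod_{(b,\beta)>(a,\alpha)}\!\mathcal{Z}^{\D6\tbar\D6}_{\bar a|\bar b}(v_{\bar a,\alpha},\pi_{\bar a}^{(\alpha)}|v_{\bar b,\beta},\pi_{\bar b}^{(\beta)})
=\mathcal{Z}^{\D6}_{\text{1-loop}}\mathcal{Z}^{\D6}_{\text{tet.inst.}}[\underline{\vec v},\underline{\vec\pi}],
\end{equation*}
which follows directly from comparing with the definition of $\mathcal{Z}^{\D6}_{\text{tet.inst.}}$ in section~\ref{sec:Tetrahedron_inst}. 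Summing over all $\underline{\vec\pi}$ weighted by $\mathfrak{q}^{|\underline{\vec\pi}|}$ then completes the proof.
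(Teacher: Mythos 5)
Your proposal is correct and follows essentially the same route the paper intends: expand each $\mathsf{T}_{\bar a}(v_{\bar a,\alpha})$ via Definition~\ref{def:D6_qq-ch}, normal-order the product of the $\Lambda_{\bar a,\pi}$ operators using the pairwise OPE of Lemma~\ref{lem:D6operatorproduct} so that the one-loop and bifundamental factors accumulate over ordered pairs, take the vacuum expectation value, and match against \eqref{eq:D6tetinst_partfunct}. The only nitpick is that the $\widetilde{\mathcal{Z}}^{\D6}_{\bar a}[\pi_{\bar a}^{(\alpha)}]$ factors come from the expansion coefficients of the $qq$-characters themselves rather than from any "diagonal contraction," but your displayed bookkeeping identity is correct as written.
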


\subsection{Fusion of D4 \texorpdfstring{$qq$}{qq}-characters to D6 \texorpdfstring{$qq$}{qq}-characters}\label{sec:D4fusiontoD6}
Let us see that the $\D6$ $qq$-characters are obtained by fusion of the $\D4$ $qq$-characters. By studying the zeros and pole structure, one can show the following lemma.
\begin{lemma}\label{lem:D6planecond}
 Given two Young diagrams $\lambda^{(1)},\lambda^{(2)}$, the contraction of the operators $\Lambda_{A,\lambda}(x)$ are given in Lemma \ref{lemm:D4ope} as
 \beq
     \Lambda_{A,\lambda^{(2)}}(x_{2})\Lambda_{A,\lambda^{(1)}}(x_{1})=\mathcal{Z}_{\text{1-loop}}^{\D4\tbar\D4}(x_{1},A\,|\,x_{2},A)\mathcal{Z}_{A|A}^{\D4\tbar\D4}(x_{1},\lambda^{(1)}\,|\,x_{2},\lambda^{(2)}): \Lambda_{A,\lambda^{(2)}}(x_{2})\Lambda_{A,\lambda^{(1)}}(x_{1}):.
 \eeq
 When the Young diagrams obey $\lambda^{(2)}\succ\lambda^{(1)}$ with the parameters as $x_{2}=q_{a}x_{1}\,(a\in\bar{A})$, we have 
 \beq
     \mathcal{Z}_{A|A}^{\D4\tbar\D4}(x_{1},\lambda^{(1)}\,|\,q_{a}x_{1},\lambda^{(2)})=0,\quad a\in\bar{A}
 \eeq
 which gives
 \beq
     \Lambda_{A,\lambda^{(2)}}(x_{2})\Lambda_{A,\lambda^{(1)}}(x_{1})=0.\label{eq:D4fusionproperty}
 \eeq
\end{lemma}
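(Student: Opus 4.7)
The plan is to establish the factor-level identity $\mathcal{Z}^{\D4\tbar\D4}_{A|A}(x_1, \lambda^{(1)} \mid q_a x_1, \lambda^{(2)}) = 0$ directly from the explicit product formula \eqref{eq:D4spikedpartition2}. Since the 1-loop prefactor $\mathcal{Z}^{\D4\tbar\D4}_{\text{1-loop}}(x_1, A \mid q_a x_1, A)$ is generically regular at this specialization, the operator-level conclusion $\Lambda_{A,\lambda^{(2)}}(x_2)\Lambda_{A,\lambda^{(1)}}(x_1) = 0$ will follow at once from the OPE in Lemma~\ref{lemm:D4ope}.

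Fix $A=ab$ and $a=c\in\bar{A}=\{c,d\}$, and substitute $v_1=x_1,v_2=q_c x_1$ in \eqref{eq:D4spikedpartition2}. For generic $q$-parameters subject only to $q_1q_2q_3q_4=1$, I would catalog the simple zeros and simple poles of each factor using $\mathscr{S}_{\bar{A}}(z)=(1-q_cz)(1-q_dz)/[(1-z)(1-q_{cd}z)]$ and $\mathcal{A}_{\mathbb{C}^4}(z)=\prod_i(1-q_iz)\prod_i(1-q_i^{-1}z)/[(1-z)^2\prod_{i<j}(1-q_iq_jz)]$. Four kinds of contributions arise generically: a simple corner zero from $\mathscr{S}_{\bar{A}}(q_c^{-1})$ at $\AboxF=(1,1)\in\lambda^{(2)}$; simple edge zeros of $\mathcal{A}_{\mathbb{C}^4}^{-1}$ at every pair with $\AboxF=\Abox+(1,0)$ or $\AboxF=\Abox+(0,1)$ (for $\Abox\in\lambda^{(1)}, \AboxF\in\lambda^{(2)}$), coming from the factors $(1-q_{ac}z)$ and $(1-q_{bc}z)$ in the denominator of $\mathcal{A}_{\mathbb{C}^4}$; simple diagonal poles of $\mathcal{A}_{\mathbb{C}^4}^{-1}$ at every $\Abox=\AboxF\in\lambda^{(1)}\cap\lambda^{(2)}=\lambda^{(1)}$ from $(1-q_cz)$; and simple shifted-diagonal poles at every pair $\AboxF=\Abox+(1,1)$ in the appropriate sets from $(1-q_d^{-1}z)$. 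All other factors are regular at the specialization.

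The decisive step is the combinatorial inequality $E_1+E_2+1-|\lambda^{(1)}|-P_d\geq 1$, where $E_1,E_2,P_d$ count the edge-zero and shifted-diagonal-pole pairs above. The crucial structural observation is that $\AboxF=\Abox+(1,1)\in\lambda^{(2)}$ forces both $\Abox+(1,0),\Abox+(0,1)\in\lambda^{(2)}$ by the Young-diagram shape, so each box's net contribution $n_\Abox-1-p_\Abox$ lies in $\{-1,0,+1\}$, with $-1$ occurring only at boxes of $\lambda^{(1)}$ that are removable corners of $\lambda^{(2)}$. I would then construct an injection sending each such ``bad'' box $(i,\lambda^{(2)}_i)\in\lambda^{(1)}$ to the predecessor $(i,\lambda^{(2)}_{i+1})$ of the adjacent interior addable corner of $\lambda^{(2)}$, verifying that this target lies in $\lambda^{(1)}$ and contributes $+1$; the corner zero then pushes the total order of vanishing strictly positive.

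The main obstacle is the degenerate case $\lambda^{(2)}_{i+1}=0$, where the adjacent addable position is exterior and the direct injection breaks; one must exhibit a compensating $+1$-box elsewhere by using the strict containment $\lambda^{(1)}\subsetneq\lambda^{(2)}$ to produce some row $i'<\ell(\lambda^{(2)})$ with $\lambda^{(1)}_{i'}<\lambda^{(2)}_{i'}$, which yields an interior addable of $\lambda^{(2)}$ whose predecessor remains in $\lambda^{(1)}$. A cleaner alternative that I would develop in parallel exploits the $(1,2)$-type decomposition of section~\ref{sec:decomp7d}: at $v_2=q_av_1$ the combined character $\bfY_A[v_1,\lambda^{(1)}]+q_a^{-1}\bfY_A[v_2,\lambda^{(2)}]$ matches the top two slices of a would-be D6 observable bundle $\bfY_{\bar{b}}[v_1,\pi]$, and the hypothesis $\lambda^{(2)}\succ\lambda^{(1)}$ violates the plane-partition condition $\Lambda^{(1)}\succeq\Lambda^{(2)}$; this introduces a formal negative contribution to the character whose equivariant index vanishes, recovering the lemma transparently as the algebraic avatar of the plane-partition condition underlying D4-to-D6 fusion.
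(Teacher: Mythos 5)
Your zero/pole catalogue at $x_2=q_cx_1$ (taking $a=c\in\bar A$) is correct, and I checked that it does reduce the claim to the inequality you state: writing $\mu=\lambda^{(1)}\subsetneq\nu=\lambda^{(2)}$, the total order of vanishing of $\mathcal{Z}^{\D4\tbar\D4}_{A|A}$ is $1+\sum_{\Box\in\mu}\bigl(n_\Box-1-p_\Box\bigr)$, the one-loop prefactor is indeed regular and nonzero there for generic $q$'s (its poles sit at $x_1/x_2\in q_{c}q_a^{-m}q_b^{-l},\,q_{d}q_a^{-m}q_b^{-l}$, none of which is $q_c^{-1}$ generically), and the inequality $\sum_{\Box\in\mu}(n_\Box-1-p_\Box)\ge 0$ is true. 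A clean way to see the latter, which also repairs the one soft spot in your injection, is to sum row by row: one finds $\sum_{\Box\in\mu}(n_\Box-1-p_\Box)=\#\{i:\ \mu_i\ge\nu_{i+1}\ge 1\}-\#\{i:\ \mu_i=\nu_i\ge 1\}$, and every row with $\mu_i=\nu_i\ge1$ and $\nu_{i+1}\ge1$ already lies in the first set, so at most one uncompensated row can occur, namely $i=\ell(\nu)$ when $\mu_{\ell(\nu)}=\nu_{\ell(\nu)}$; in that case take $i^*=\max\{i:\mu_i<\nu_i\}$ (which exists and satisfies $i^*<\ell(\nu)$ by strict containment), for which $\mu_{i^*}\ge\mu_{i^*+1}=\nu_{i^*+1}\ge1$ and $\nu_{i^*+1}<\nu_{i^*}$, so $(i^*,\nu_{i^*+1})$ is a genuine $+1$ box. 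Note that your phrasing ``some row $i'$ with $\lambda^{(1)}_{i'}<\lambda^{(2)}_{i'}$'' is not enough as stated: for $\mu=(1,1,1)\subset\nu=(3,3,1)$ the row $i'=1$ has $\mu_1<\nu_1$ but the candidate box $(1,\nu_2)=(1,3)$ is not in $\mu$; the maximal such row must be used.

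Concerning the comparison with the paper: the paper does not actually write a proof of this lemma (it is introduced with ``by studying the zeros and pole structure''); the closest written argument is the D6 analogue used for the D6$\to$D8 fusion, which simply exhibits the single numerator factor $(1-q_a\chi(\cube')/\chi(\cube'))=(1-q_ax_1/x_2)$ attached to a box addable to the smaller partition and contained in the larger one, leaving the non-cancellation against denominators implicit. Your global counting is therefore in the same spirit but more honest, since it rules out cancellation by the diagonal and shifted-diagonal poles explicitly. Finally, the ``cleaner alternative'' in your last paragraph should not be relied upon as written: a formally negative term in the character makes the index acquire a pole, not a zero ($\mathbb{I}[-x]=(1-x^{-1})^{-1}$); the vanishing would have to come from a weight-zero term of positive multiplicity in the relevant character, which is exactly the uncancelled zero you already counted, so that route adds nothing beyond the combinatorial argument.
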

Using the above lemma, finite products of the $\D4$ $qq$-characters are given as 
\begin{align}
\begin{split}
     \mathsf{T}_{A}(q_{a}^{N-1}x)\cdots\mathsf{T}_{A}(q_{a}x)\mathsf{T}_{A}(x)&=\sum_{\lambda^{(N)}\preceq\cdots \lambda^{(2)}\preceq\lambda^{(1)}}\mathfrak{q}^{\sum_{i=1}^{N}|\lambda^{(i)}|}\widetilde{\mathcal{Z}}^{\D4}_{A}[\lambda^{(1)}]\cdots\widetilde{\mathcal{Z}}^{\D4}_{A}[\lambda^{(N)}]\\
     &\qquad\times\Lambda_{A,\lambda^{(N)}}(q_{a}^{N-1}x)\cdots\Lambda_{A,\lambda^{(2)}}(q_{a}x)\Lambda_{A,\lambda^{(1)}}(x)
\end{split}
\end{align}
for $\forall a\in\bar{A}$. Using the operator product and extracting the one-loop perturbative factor, we define the renormalized $N$-fusion $\D4$ $qq$-character as 
\beq
    \overbar{\mathsf{T}}^{(N)}_{A;a}(x)=\sum_{\lambda^{(N)}\preceq\cdots \preceq\lambda^{(1)}}\mathfrak{q}^{\sum_{i=1}^{N}|\lambda^{(i)}|}\prod_{i=1}^{N}\widetilde{\mathcal{Z}}^{\D4}_{A}[\lambda^{(i)}]\prod_{i<j}\mathcal{Z}_{A|A}^{\D4\tbar\D4}(q_{a}^{i-1}x,\lambda^{(i)}\,|\,q_{a}^{j-1}x,\lambda^{(j)}):\prod_{i=1}^{N}\Lambda_{A,\lambda^{(i)}}(q_{a}^{i-1}x):
\eeq
Note that this is the rank $N$ $qq$-character in \eqref{eq:ranknD4qqch} with spectral parameters tuned as $\bar{x}=(x_{i})_{i=1}^{N}$ and $x_{i}=q_{a}^{i-1}x$ for $a\in\bar{A}$. Taking the limit $N\rightarrow \infty$ and considering infinite products of the $qq$-characters, we can see that they are related to the $\D6$ $qq$-character.
\begin{theorem}\label{thm:D4toD6fusion}
    Taking the limit $N\rightarrow \infty$ of the renormalized $N$-fusion $\D4$ $qq$-characters $\overbar{\mathsf{T}}^{(N)}_{ab;c}(x)$ give the $\D6$ $qq$-character $\mathsf{T}_{abc}(x)$:
    \beq
        \overbar{\mathsf{T}}^{(N)}_{ab;c}(x)\xrightarrow{N\rightarrow \infty} \mathsf{T}_{abc}(x).
    \eeq
    Equivalently, we have 
    \beq
        \overleftarrow{\prod_{i=1}^{\infty}}\mathsf{T}_{ab}(xq_{c}^{i-1})\simeq \mathsf{T}_{abc}(x)
    \eeq
    where the symbol ``$\simeq$'' means the equality is true up to one-loop perturbative factors.
\end{theorem}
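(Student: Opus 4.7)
The strategy is to verify, term by term in the plane-partition expansion, that the $N\to\infty$ limit of $\overbar{\mathsf{T}}^{(N)}_{ab;c}(x)$ matches the D6 $qq$-character in Def.~\ref{def:D6_qq-ch}. The tuning $x_i=q_c^{i-1}x$ with $c\in\overline{ab}$ is crucial because Lem.~\ref{lem:D6planecond} guarantees the vanishing $\Lambda_{ab,\lambda^{(j)}}(q_c^{j-1}x)\Lambda_{ab,\lambda^{(i)}}(q_c^{i-1}x)=0$ whenever $j>i$ and $\lambda^{(j)}\succ\lambda^{(i)}$. Consequently, the only sequences contributing to the $N$-fold product $\mathsf{T}_{ab}(q_c^{N-1}x)\cdots\mathsf{T}_{ab}(x)$ are nested ones $\lambda^{(1)}\succeq\lambda^{(2)}\succeq\cdots\succeq\lambda^{(N)}$, which by the $(1,2)$-type description (section~\ref{sec:multi-dim-part}) is exactly a plane partition $\pi$ of height $\leq N$ with $\Lambda^{(i)}=\lambda^{(i)}$; letting $N\to\infty$ ranges over all $\pi\in\mathcal{PP}$.

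The next step is to identify the vertex-operator monomial. Using the zero-mode relation in \eqref{eq:oprelation1}, $\mathsf{X}_{ab}(x)={:\mathsf{W}_{abc}(x)/\mathsf{W}_{abc}(q_c x):}$, the highest-weight piece telescopes as
\begin{equation}
:\prod_{i=1}^{N}\mathsf{X}_{ab}(q_c^{i-1}x): \;=\; :\frac{\mathsf{W}_{abc}(x)}{\mathsf{W}_{abc}(q_c^{N}x)}:,
\end{equation}
and provided $|q_c|<1$, the denominator tends to a trivial zero-mode in the $N\to\infty$ limit (absorbed into the perturbative prefactor). For the $\mathsf{A}^{-1}$ factors, the correspondence $\chi_{abc,x}(\cube)=\chi_{ab,q_c^{k-1}x}(\Bbox)$ for $\cube=(i,j,k)$ and $\Bbox=(i,j)\in\lambda^{(k)}$ immediately identifies
\begin{equation}
:\prod_{k=1}^{\infty}\prod_{\Abox\in\lambda^{(k)}}\mathsf{A}^{-1}(\chi_{ab,q_c^{k-1}x}(\Bbox)): \;=\; :\prod_{\scube\in\pi}\mathsf{A}^{-1}(\chi_{abc,x}(\cube)):,
\end{equation}
so the vertex-operator part of each summand matches $\Lambda_{abc,\pi}(x)$ up to zero modes.

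The remaining task is to match the coefficient factor
\begin{equation}
\prod_{i=1}^{N}\widetilde{\mathcal{Z}}^{\D4}_{ab}[\lambda^{(i)}]\prod_{i<j}\mathcal{Z}_{ab|ab}^{\D4\tbar\D4}(q_c^{i-1}x,\lambda^{(i)}\,|\,q_c^{j-1}x,\lambda^{(j)}) \;\xrightarrow{N\to\infty}\; \widetilde{\mathcal{Z}}^{\D6}_{abc}[\pi]
\end{equation}
(up to an overall perturbative factor independent of $\pi$). The cleanest route is at the level of equivariant characters: Eq.~\eqref{eq:7dYoungdiagramdecomp} of section~\ref{sec:decomp7d} gives exactly the decomposition $\bfY_{abc}[x,\pi]=\bfP_c\sum_{k\geq 1}\bfY_{ab}[q_c^{k-1}x,\lambda^{(k)}]$. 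Plugging this into the square-root character $\mathbf{v}_{\mathrm{inst.}}=-\bfP_{ab}^{\vee}\bfN^{\vee}\bfK+\bfP_{abc}^{\vee}\bfK^{\vee}\bfK$ of the 7d $\U(1)$ theory (see \eqref{eq:D6tetrainstch}), using the identity $\bfP_{abc}^{\vee}=\bfP_{ab}^{\vee}\bfP_c^{\vee}$, and organizing the double sum into diagonal ($k=k'$) and off-diagonal ($k<k'$) pieces gives term by term the 5d $\U(1)$ characters for each $\lambda^{(k)}$ and 5d--5d bifundamentals for the pairs. Taking the index $\mathbb{I}[\cdot]$ yields the desired identity; the off-diagonal sums over $i<j\leq N$ with $N\to\infty$ are precisely the semi-infinite bifundamental sums, and any divergent sum over $j>N$ is independent of $\pi$ and absorbs into the $N\to\infty$ renormalization.

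The main obstacle is the last step: converting the equivariant-character decomposition into the product identity for the 5d Nekrasov factors $\widetilde{\mathcal{Z}}^{\D4}_{ab}$ and $\mathcal{Z}^{\D4\tbar\D4}_{ab|ab}$, while carefully tracking the perturbative pieces (the $N$-dependent zero modes from the telescoping and the divergent off-diagonal bifundamental sums) that collectively form the prefactor hidden in the symbol ``$\simeq$''. Once the character-level identity is established, combined with the vanishing from Lem.~\ref{lem:D6planecond} and the vertex-operator matching, summing over plane partitions $\pi$ with weights $\mathfrak{q}^{|\pi|}\widetilde{\mathcal{Z}}^{\D6}_{abc}[\pi]$ reproduces $\mathsf{T}_{abc}(x)$ by Def.~\ref{def:D6_qq-ch}, completing the proof.
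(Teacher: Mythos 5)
Your proposal is correct and structurally coincides with the paper's proof on every load-bearing step: the vanishing from Lem.~\ref{lem:D6planecond} restricting the $N$-fold product to nested sequences $\lambda^{(1)}\succeq\cdots\succeq\lambda^{(N)}$, the identification of such sequences with plane partitions via the $(1,2)$-type description, the telescoping $:\prod_{i=1}^{\infty}\mathsf{X}_{ab}(q_c^{i-1}x):=\mathsf{W}_{abc}(x)$ from \eqref{eq:oprelation1}, and the matching of $q$-contents $\chi_{abc,x}(\cube)=\chi_{ab,q_c^{k-1}x}(\Bbox)$ for the $\mathsf{A}^{-1}$ insertions. The one place you diverge is the coefficient identity $\prod_i\widetilde{\mathcal{Z}}^{\D4}_{ab}[\lambda^{(i)}]\prod_{i<j}\mathcal{Z}^{\D4\tbar\D4}_{ab|ab}=\widetilde{\mathcal{Z}}^{\D6}_{abc}[\pi]$: the paper establishes it by directly manipulating the explicit products of $\mathscr{S}_{\bar A}$, $g_{\bar a}$ and $\mathcal{A}_{\mathbb{C}^4}$ factors (collapsing the semi-infinite telescoping sums by hand), whereas you route through the equivariant-character decomposition $\bfY_{abc}=\bfP_c\sum_k\bfY_{ab}[q_c^{k-1}x,\lambda^{(k)}]$ of \eqref{eq:7dYoungdiagramdecomp} and then apply the index functor. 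These are two presentations of the same computation — the index of the decomposed character is exactly what generates the paper's product identities — and your route is in fact how the paper itself organizes the analogous statement in sections~\ref{sec:decomp7d} and \ref{sec:D6lowerdim}; it arguably makes the fate of the perturbative/square-root ambiguities more transparent. Two caveats: (i) the step you explicitly defer ("converting the character decomposition into the product identity while tracking perturbative pieces") is precisely the computational core that the paper's proof carries out, so a complete write-up would still need that manipulation or a clean argument that $\mathbb{I}$ of the diagonal and off-diagonal character pieces reproduces $\widetilde{\mathcal{Z}}^{\D4}_{ab}$ and $\mathcal{Z}^{\D4\tbar\D4}_{ab|ab}$ with the square roots chosen consistently; (ii) the square-root character you quote, $-\bfP_{ab}^{\vee}\bfN^{\vee}\bfK+\bfP_{abc}^{\vee}\bfK^{\vee}\bfK$, is not the one used for the 7d $\U(1)$ theory in \eqref{eq:D6tetrainstch} and \eqref{eq:app-D6U1character} (which reads $-\bfN^{\vee}\bfK+q_{d}\bfK^{\vee}\bfN+\bfP_{abc}^{\vee}\bfK^{\vee}\bfK$ with $d$ the complementary index), though the discrepancy only shifts zero modes and perturbative normalizations already absorbed into ``$\simeq$''.
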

\begin{proof}
Let us focus on the case $\overbar{\mathsf{T}}^{(N)}_{12;3}(x)$ and $\mathsf{T}_{123}(x)$. Since the infinite products diverge, we need to regularize it properly (see for example \cite{Awata:2018svb}). Moreover, we need to take the inductive limit so that at large $N$, $\overbar{\mathsf{T}}_{12;3}^{(N)}(x)=\overbar{\mathsf{T}}^{(N+1)}_{12;3}(x)$ and we denote this $\overbar{\mathsf{T}}^{(\infty)}_{12;3}(x)$. We only give a sketch of the proof so see \cite{Awata:2018svb} for details. The operator $\overbar{\mathsf{T}}^{(\infty)}_{12;3}(x)$ is expanded as 
\bea
    \overbar{\mathsf{T}}^{(\infty)}_{12;3}(x)=\sum_{\emptyset\preceq\cdots \preceq\lambda^{(2)}\preceq\lambda^{(1)}}\mathfrak{q}^{\sum_{i=1}^{\infty}|\lambda^{(i)}|}\prod_{i=1}^{\infty}\widetilde{\mathcal{Z}}^{\D4}_{12}[\lambda^{(i)}]\prod_{i<j}\mathcal{Z}_{12|12}^{\D4\tbar\D4}(x_{i},\lambda^{(i)}\,|\,x_{j},\lambda^{(j)}):\prod_{i=1}^{\infty}\Lambda_{12,\lambda^{(i)}}(q_{3}^{i-1}x):
\eea
where $x_{i}=q_{3}^{i-1}x$. Since the right-hand side is expanded in all possible Young diagrams $\{\lambda^{(i)}\}$ obeying the condition $\lambda^{(i)}\succeq\lambda^{(i+1)}$, the right-hand side is expanded in all possible plane partitions (see the $(1,2)$ description in section \ref{sec:multi-dim-part}): $\pi=(\lambda^{(1)},\lambda^{(2)},\ldots,\emptyset,\emptyset,\ldots)$. Let us show that the coefficients and the operator part indeed can be written using the plane partition. The infinite product of the $\U(1)$ part is given by
\beq
\prod_{i=1}^{\infty}\widetilde{\mathcal{Z}}^{\D4}_{12}[\lambda^{(i)}]=\prod_{i=1}^{\infty}\prod_{\Abox\in\lambda^{(i)}}\mathscr{S}_{34}\left(\frac{x_{i}}{\chi_{12,x_{i}}(\Bbox)}\right)\prod_{\substack{\Abox\in\lambda^{(i)}\\\AboxF\in\lambda^{(i)}}}g_{\bar{4}}\left(\frac{\chi_{12,x_{i}}(\Bbox)}{\chi_{12,x_{i}}(\BboxF)}\right)^{-1}.
\eeq
We can rewrite the vector multiplet part as
\beq
    \prod_{i<j}\mathcal{Z}_{12|12}^{\D4\tbar\D4}(x_{i},\lambda^{(i)}\,|\,x_{j},\lambda^{(j)})=\prod_{ i<j}\prod_{\Abox\in\lambda^{(i)}}\mathscr{S}_{34}\left(\frac{x_{j}}{\chi_{12,x_{i}}(\Bbox)}\right)\prod_{\AboxF\in\lambda^{(j)}}\mathscr{S}_{34}\left(\frac{x_{i}}{\chi_{12,x_{j}}(\BboxF)}\right)\prod_{\substack{\Abox\in\lambda^{(i)}\\\AboxF\in\lambda^{(j)}}}\mathcal{A}_{\mathbb{C}^{4}}\left(\frac{\chi_{12,x_{i}}(\Bbox)}{\chi_{12,x_{j}}(\BboxF)}\right)^{-1}.
\eeq
Using 
\beq
\prod_{1\leq i<j\leq N}\prod_{\substack{\Abox\in\lambda^{(i)}\\\AboxF\in\lambda^{(j)}}}\mathcal{A}_{\mathbb{C}^{4}}\left(\frac{\chi_{12,x_{i}}(\Bbox)}{\chi_{12,x_{j}}(\BboxF)}\right)^{-1}=\prod_{i\neq j}\prod_{\substack{\Abox\in \lambda^{(i)}\\\AboxF\in\lambda^{(j)}}}g_{\bar{4}}\left(\frac{\chi_{12,x_{i}}(\Bbox)}{\chi_{12,x_{j}}(\BboxF)}\right)^{-1}
\eeq
and 
 \begin{align}
 \begin{split}
        &\prod_{1\leq i<j\leq \infty}\prod_{\Abox\in\lambda^{(i)}}\mathscr{S}_{34}\left(\frac{x_{j}}{\chi_{12,x_{i}}(\Bbox)}\right)\prod_{\AboxF\in\lambda^{(j)}}\mathscr{S}_{34}\left(\frac{x_{i}}{\chi_{12,x_{j}}(\BboxF)}\right)\\
&=\prod_{i=1}^{\infty}\prod_{\Abox\in\lambda^{(i)}}\mathscr{S}_{34}\left(\frac{x_{i}}{\chi_{12,x_{i}}(\Bbox)}\right)^{-1}\prod_{i=1}^{\infty}\prod_{\Abox\in\lambda^{(i)}}\frac{(1-q_{4}x/\chi_{12,x_{i}}(\Bbox))}{(1-x/\chi_{12,x_{i}}(\Bbox))},
    \end{split}
    \end{align}
the coefficient part will be 
\beq
    \prod_{i=1}^{\infty}\widetilde{\mathcal{Z}}^{\D4}_{12}[\lambda^{(i)}]\prod_{i<j}\mathcal{Z}_{12|12}^{\D4\tbar\D4}(x_{i},\lambda^{(i)}\,|\,x_{j},\lambda^{(j)})=\prod_{\scube\in\pi}\frac{(1-q_{4}x/\chi_{\bar{4},x}(\cube))}{(1-x/\chi_{\bar{4},x}(\cube))}\prod_{\substack{\cube\in\pi\\\cubeF\in\pi}}g_{\bar{4}}\left(\frac{\chi_{\bar{4},x}(\cube)}{\chi_{\bar{4},x}(\cubeF)}\right)^{-1}=\widetilde{\mathcal{Z}}_{\bar{4}}^{\D6}[\pi].
\eeq
The operator part is obtained from
\beq
    {:\prod_{i=1}^{\infty}\Lambda_{12,\lambda^{(i)}}(q_{3}^{i-1}x):}={:\prod_{i=1}^{\infty}\mathsf{X}_{12}(q_{3}^{i-1}x)\prod_{i=1}^{\infty}\prod_{\Abox\in\lambda^{(i)}}\mathsf{A}(\chi_{12,x_{i}}(\Bbox))^{-1}:}={:\mathsf{W}_{\bar{4}}(x)\prod_{\scube\in\pi}\mathsf{A}(\chi_{\bar{4},x}(\cube))^{-1}:}
\eeq
where we used 
\beq
    {:\prod_{i=1}^{\infty}\mathsf{X}_{12}(q_{3}^{i-1}x):}={:\prod_{i=1}^{\infty}\frac{\mathsf{W}_{\bar{4}}(q_{3}^{i-1}x)}{\mathsf{W}_{\bar{4}}(q_{3}^{i}x)}:}=\mathsf{W}_{\bar{4}}(x).
\eeq
We then get the identity $\overbar{\mathsf{T}}^{(\infty)}_{12;3}(x)=\mathsf{T}_{123}(x)$.
\end{proof}

\begin{remark}
The fusion procedure can be visualized as the case of the fusion of $\D2$ to $\D4$. Using the correspondence in \eqref{eq:D4Youngcorrespondence} (see also section \ref{sec:multi-dim-part}), we have 
\bea
\adjustbox{valign=c}{\includegraphics[width=5cm]{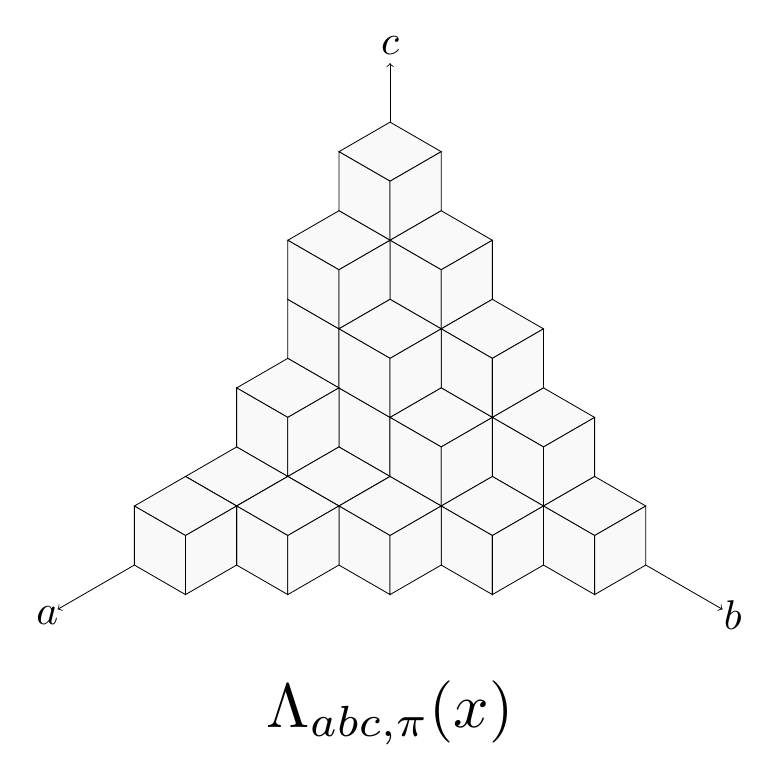}}\quad \longleftrightarrow \quad \adjustbox{valign=c}{\includegraphics[width=7cm]{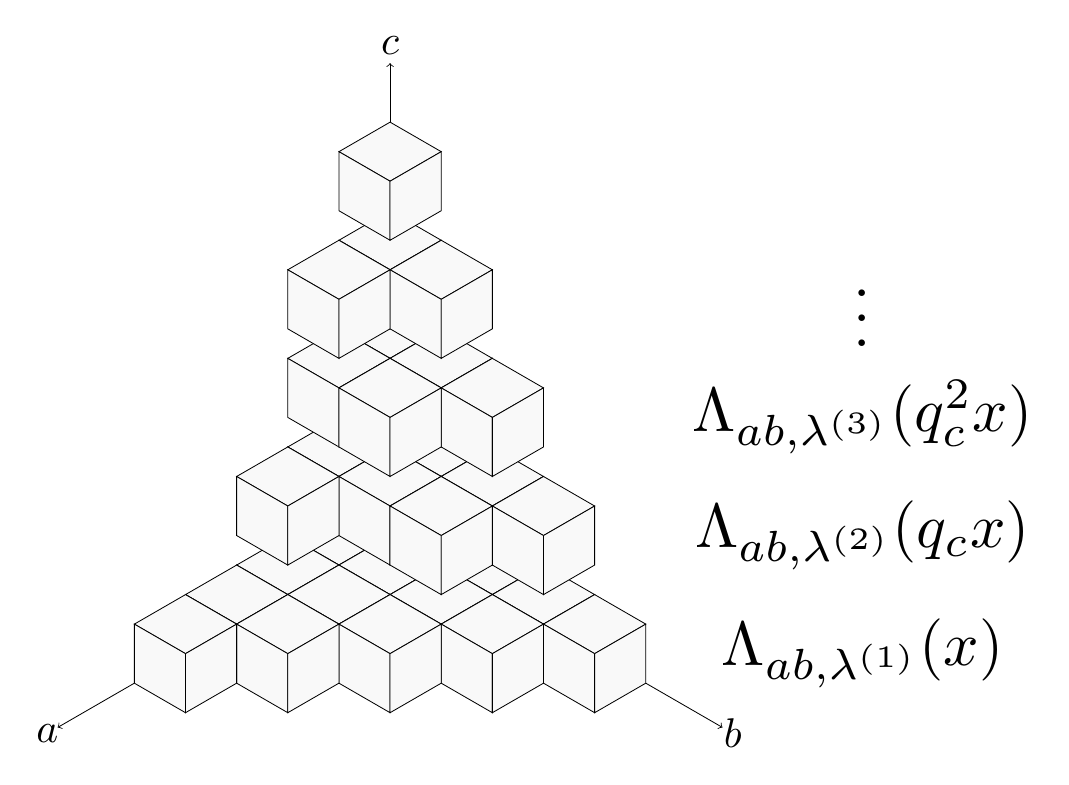}}
\eea

\end{remark}

\subsection{Description in lower dimensional \texorpdfstring{$qq$}{qq}-characters}\label{sec:D6lowerdim}
The previous theorem already implies that the gauge origami partition function of tetrahedron instantons can be reproduced algebraically using infinite products of lower dimensional $qq$-characters. Let us explicitly see this procedure.
\paragraph{\texorpdfstring{$(1,2)$}{(1,2)}-type description and D4 $qq$-characters}
Using the decomposition of the character in \eqref{eq:7dYoungdiagramdecomp}, the gauge origami partition function can be rewritten using the vertex operators $\Lambda_{A,\lambda}(x)$ as follows.
\begin{lemma}
    We have the following identity up to one-loop perturbative factors:
    \beq
\mathcal{Z}^{\D6}_{\text{1-loop}}\mathcal{Z}^{\D6}_{\text{tet.inst.}}[\underline{\vec{v}},\underline{\vec{\pi}}]\simeq \bra{0}\prod_{(abc)\in\four^{\vee}}\prod_{\alpha=1}^{n_{abc}}\prod_{k=1}^{\infty}\widetilde{\mathcal{Z}}^{\D4}_{ab}[\lambda_{abc,\alpha}^{(k)}]\Lambda_{ab,\lambda_{abc,\alpha}^{(k)}}(xq_{c}^{k-1})\ket{0} ,
    \eeq
    where the plane partitions on the right-hand side are rewritten in the $(1,2)$-type description as Young diagrams on the left-hand side.
\end{lemma}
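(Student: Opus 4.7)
The plan is to reduce the statement to a sector-by-sector application of the fusion identity established in Theorem \ref{thm:D4toD6fusion}, combined with a careful handling of the bifundamental (cross-sector) contractions. For each stack $(\bar{a}=abc,\alpha)$, I will use the $(1,2)$-type description of section \ref{sec:multi-dim-part} to replace the plane partition $\pi_{\bar{a}}^{(\alpha)}$ by its decomposition $(\lambda_{abc,\alpha}^{(1)},\lambda_{abc,\alpha}^{(2)},\ldots)$ as a non-increasing sequence of Young diagrams extending in the $(q_a,q_b)$-plane and stacked in the $q_c$ direction. At the level of characters this is precisely the identity \eqref{eq:7dYoungdiagramdecomp}; at the level of vertex operators it is implemented by the zero-mode factorization \eqref{eq:oprelation1}, $\mathsf{X}_{ab}(x) = \, :\mathsf{W}_{abc}(x)/\mathsf{W}_{abc}(q_c x):$, which immediately rewrites the highest weight $\mathsf{W}_{\bar{a}}(v_{\bar{a},\alpha})$ as an infinite telescoping product of $\mathsf{X}_{ab}(v_{\bar{a},\alpha}q_c^{k-1})$ operators.

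The first step is therefore to prove the single-sector identity
\begin{equation*}
\widetilde{\mathcal{Z}}^{\D6}_{\bar{a}}[\pi_{\bar{a}}^{(\alpha)}]\,\Lambda_{\bar{a},\pi_{\bar{a}}^{(\alpha)}}(v_{\bar{a},\alpha}) \simeq \prod_{k=1}^{\infty}\widetilde{\mathcal{Z}}^{\D4}_{ab}[\lambda_{abc,\alpha}^{(k)}]\,\Lambda_{ab,\lambda_{abc,\alpha}^{(k)}}(v_{\bar{a},\alpha}q_{c}^{k-1}),
\end{equation*}
which is essentially Theorem \ref{thm:D4toD6fusion} read in reverse: that theorem shows that the infinite product on the right, after extracting one-loop perturbative factors, equals the single D6 $qq$-character, so matching coefficient-by-coefficient in the fixed plane partition sector yields the required identity. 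I will reuse the computations from the proof of Theorem \ref{thm:D4toD6fusion}, in particular the rewritings of $\prod_i \widetilde{\mathcal{Z}}^{\D4}_{ab}[\lambda^{(i)}]$ and of the vector-multiplet factors via $\mathscr{S}_{cd}$ and $\mathcal{A}_{\mathbb{C}^4}$, which reorganize the doubly indexed sums over $(\Abox \in \lambda^{(i)}, \AboxF \in \lambda^{(j)})$ into single sums over $\cube \in \pi$. This gives exactly $\widetilde{\mathcal{Z}}^{\D6}_{\bar{a}}[\pi_{\bar{a}}^{(\alpha)}]$ as the coefficient.

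The second step is to track the cross-sector bifundamental contributions. By Lemma \ref{lem:D6operatorproduct} the operator products among the $\Lambda_{\bar{a},\pi}(x)$ reproduce the bifundamental factors $\mathcal{Z}^{\D6\tbar\D6}_{\bar{a}|\bar{b}}(v_{\bar{a},\alpha},\pi_{\bar{a}}^{(\alpha)}\,|\,v_{\bar{b},\beta},\pi_{\bar{b}}^{(\beta)})$ that appear in \eqref{eq:D6tetinst_partfunct}. On the right-hand side of the claimed identity, the analogous contractions are among $\Lambda_{ab,\lambda^{(k)}_{abc,\alpha}}$ and $\Lambda_{a'b',\lambda^{(l)}_{a'b'c',\beta}}$ operators, governed by Lemma \ref{lemm:D4ope}. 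I will verify that, upon substituting the $(1,2)$-decomposition and performing the analogous rewriting of doubly indexed sums $\sum_{i,j}\sum_{\Abox\in\lambda^{(i)},\AboxF\in\lambda^{(j)}}$ into single sums $\sum_{\cube\in\pi,\cubeF\in\pi'}$, the product of D4 bifundamental factors collapses to the D6 bifundamental factor up to perturbative factors. The character-level identity \eqref{eq:7dYoungdiagramdecomp}, combined with linearity of the index on each CY$_4$ fragment $\bfP_{\bar{a}}^{\vee}\bfP_{\bar{b}}/\bfP_{\four}$, makes this matching manifest: the $\D4\tbar\D4$ one-loop and Nekrasov factors assemble into the corresponding $\D6\tbar\D6$ ones.

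The main obstacle is the analytic regularization of the infinite products, since neither the operator product $\prod_{k=1}^{\infty}\Lambda_{ab,\lambda^{(k)}}(xq_c^{k-1})$ nor the infinite product of $\widetilde{\mathcal{Z}}^{\D4}_{ab}$ factors converges term-by-term; this is why the identity is only stated ``$\simeq$'' up to one-loop perturbative contributions. I will handle this exactly as in the proof of Theorem \ref{thm:D4toD6fusion}, namely by truncating at level $N$, extracting the finite D4 one-loop contribution $\mathcal{Z}^{\D4}_{\text{1-loop}}$ from the ordered OPE, and then taking the inductive limit $N\to\infty$ in the stable regime where adding another empty Young diagram does not modify the expression. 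A subsidiary bookkeeping issue is that the natural ordering of the D4 operators within each sector (indexed by $k$) must be compatible with the global ordering over the pairs $(\bar{a},\alpha)$; I will fix a lexicographic order so that the one-loop factors in \eqref{eq:D6oneloop} reproduce precisely the product of D4 one-loop factors in \eqref{eq:D4oneloop} that the infinite-product decomposition generates, which is the content of the ``$\simeq$'' in the statement.
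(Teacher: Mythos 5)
Your proposal is correct and follows essentially the same route as the paper: the paper justifies this lemma by the character decomposition \eqref{eq:7dYoungdiagramdecomp} together with the fusion computation in the proof of Theorem \ref{thm:D4toD6fusion}, which is exactly your single-sector identity plus the cross-sector OPE matching. Your write-up merely makes explicit the bifundamental bookkeeping and the $N\to\infty$ regularization that the paper leaves implicit.
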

Using the property \eqref{eq:D4fusionproperty}, we also have the following theorem.
\begin{theorem}
    The gauge origami partition function of the tetrahedron instanton system is written using infinite products of $\D4$ $qq$-characters.
    \beq
        \mathcal{Z}^{\D6}_{\text{1-loop}}\mathcal{Z}^{\D6}_{\text{inst.}}=\sum_{\underline{\vec{\pi}}}\mathfrak{q}^{|\underline{\vec{\pi}}|}\mathcal{Z}^{\D6}_{\text{1-loop}}\mathcal{Z}^{\D6}_{\text{tet.inst.}}[\underline{\vec{v}},\underline{\vec{\pi}}]\simeq \bra{0}\prod_{(abc)\in\four^{\vee}}\prod_{\alpha=1}^{n_{abc}}\overleftarrow{\prod_{i=1}^{\infty}}\mathsf{T}_{ab}(v_{\alpha}q_{c}^{i-1})\ket{0}
    \eeq
\end{theorem}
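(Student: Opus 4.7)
The plan is to derive the identity by combining two earlier results in this section. First I would invoke Theorem~\ref{thm:tetra-origamiBPSCFT}, which identifies the gauge origami partition function on the left-hand side with the correlation function $\bra{0}\prod_{a\in\four}\prod_{\alpha=1}^{n_{\bar{a}}}\mathsf{T}_{\bar{a}}(v_{\bar{a},\alpha})\ket{0}$ of $\D6$ $qq$-characters. I would then substitute each $\D6$ $qq$-character by its fusion representation from Theorem~\ref{thm:D4toD6fusion}, namely $\mathsf{T}_{abc}(x)\simeq \overleftarrow{\prod_{i=1}^{\infty}}\mathsf{T}_{ab}(xq_{c}^{i-1})$ for any choice of splitting $\bar d=\{a,b,c\}\in\four^{\vee}$. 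Tracking the overall one-loop factor $\mathcal{Z}^{\D6}_{\text{1-loop}}$ produced by recombining the pairwise OPEs then yields the claimed formula.

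Alternatively, I would give a direct combinatorial argument based on the $(1,2)$-type description of plane partitions, which is essentially the content of the lemma stated immediately before the theorem. For each $\bar a$ and each $\alpha$, the plane partition $\pi_{\bar a}^{(\alpha)}$ is decomposed as a non-increasing tower of Young diagrams $\lambda_{abc,\alpha}^{(k)}$ with origins shifted by $q_{c}^{k-1}$. The lemma rewrites the corresponding Nekrasov weight as the vacuum expectation value of a product of $\Lambda_{ab,\lambda^{(k)}}(v_{\alpha}q_{c}^{k-1})$ operators weighted by the $\U(1)$ affine-quiver factors $\widetilde{\mathcal{Z}}^{\D4}_{ab}[\lambda^{(k)}]$. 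The sum over the restricted configurations $\lambda^{(1)}\succeq \lambda^{(2)}\succeq \cdots$ is then promoted to an unrestricted sum over arbitrary tuples by invoking Lemma~\ref{lem:D6planecond}: whenever nesting is violated, the contraction $\Lambda_{ab,\lambda^{(k+1)}}(q_{c}^{k}v_{\alpha})\Lambda_{ab,\lambda^{(k)}}(q_{c}^{k-1}v_{\alpha})$ vanishes, so the forbidden terms drop out automatically, and the unrestricted sum reassembles into the infinite ordered product $\overleftarrow{\prod_{i=1}^{\infty}}\mathsf{T}_{ab}(v_{\alpha}q_{c}^{i-1})$.

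The main obstacle is the rigorous treatment of the infinite product. Following the regularization prescription of~\cite{Awata:2018svb} used in the proof of Theorem~\ref{thm:D4toD6fusion}, I would realize $\overleftarrow{\prod_{i=1}^{\infty}}\mathsf{T}_{ab}(xq_{c}^{i-1})$ as the inductive limit of the finite $N$-fusion operators $\overbar{\mathsf{T}}^{(N)}_{ab;c}(x)$, so that only finitely many $\Lambda_{ab,\lambda^{(k)}}$ are non-trivial and truncations stabilize in $N$. One then needs to verify that, under this regularization, the product of pairwise one-loop factors $\mathcal{Z}^{\D4\tbar\D4}_{\text{1-loop}}(xq_{c}^{i-1},ab\,|\,x'q_{c}^{j-1},ab)$ over $i,j\geq 1$ collapses to the $\D6\tbar\D6$ one-loop factor $\mathcal{Z}^{\D6\tbar\D6}_{\text{1-loop}}(x,\bar d\,|\,x',\bar d)$; this is exactly the zero-mode content of the identity ${:\prod_{i\geq 1}\mathsf{X}_{ab}(xq_{c}^{i-1}):}=\mathsf{W}_{\bar d}(x)$ already exploited in Theorem~\ref{thm:D4toD6fusion}, and it is what the symbol $\simeq$ absorbs in the final formula.

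A final minor bookkeeping step is to confirm that the left-arrow ordering $\overleftarrow{\prod}$ in the statement matches the ordering inherited from the choice of square-root bundle made in section~\ref{sec:Tetrahedron_inst} and from the proof of Theorem~\ref{thm:tetra-origamiBPSCFT}, so that no residual phase or sign enters when the two results are combined. Once these regularization and ordering checks are settled, both routes produce the asserted equality up to perturbative factors, establishing the BPS/CFT correspondence for the tetrahedron instantons in terms of infinite compositions of $\D4$ $qq$-characters.
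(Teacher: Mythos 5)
Your proposal is correct, and your second route — the $(1,2)$-type decomposition lemma combined with the vanishing property \eqref{eq:D4fusionproperty} of Lemma~\ref{lem:D6planecond} to promote the nested sum over Young diagrams to the unrestricted sum defining $\overleftarrow{\prod}_{i}\mathsf{T}_{ab}(v_\alpha q_c^{i-1})$ — is exactly the paper's argument. Your first route (composing Theorems~\ref{thm:tetra-origamiBPSCFT} and~\ref{thm:D4toD6fusion}) is an equivalent repackaging of the same computation, and your remarks on regularization and one-loop bookkeeping correctly identify what the symbol $\simeq$ absorbs.
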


\paragraph{Description in screening currents}
We can also represent the partition function of the tetrahedron instanton system using the screening currents. Using the character form in \eqref{eq:7dscreeningdecomp}, we omit the singular terms that might occur and extract the square root part of the total index by specifying an order in the elements as
\beq
    \bfV=\sum_{a,b\in\four}\frac{\bfP_{\four}}{\bfP^{\vee}_{\text{k}(a)}\bfP_{\text{k}(b)}}\bfX_{\bar{a}}^{\vee}\bfX_{\bar{b}}\longrightarrow \mathbf{v}=\sum_{\substack{(x,a)<(x',b)\\x\in\mathcal{X}_{\bar{a}},x'\in\mathcal{X}_{\bar{b}}}}\frac{\bfP_{\four}}{\bfP_{\text{k}(a)}^{\vee}\bfP_{\text{k}(b)}}\left(\frac{x'}{x}\right).
\eeq
Using the contraction formulas of the screening currents in \eqref{eq:D2op}, the gauge origami partition function of the tetrahedron instantons is rewritten using the screening currents.
\begin{theorem}
    We have the following identity up to one-loop perturbative factors:
    \beq\label{eq:D6screening}
        \mathbb{I}[\mathbf{v}]=\bra{0}\prod_{\substack{a\in\four\\x\in\mathcal{X}_{\bar{a}}}}\mathsf{S}_{\text{k}(a)}(x)\ket{0}\simeq \mathcal{Z}^{\D6}_{\text{1-loop}}\mathcal{Z}^{\D6}_{\text{tet.inst.}}[\underline{\vec{v}},\underline{\vec{\pi}}].
    \eeq
    Note that we are implicitly defining an order in the products of the screening currents as
    \beq
        \cdots \mathsf{S}_{\text{k}(b)}(x')\cdots \mathsf{S}_{\text{k}(a)}(x)\cdots \quad \Longleftrightarrow \quad (x,a)<(x',b).
    \eeq
\end{theorem}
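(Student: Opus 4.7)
The plan is to mimic the strategy used for the analogous theorems for the D2 (Thm.~\ref{thm:cplvortexBPSCFT}) and D4 (eq.~\eqref{eq:D4screening}) systems, adapted to the $(2,1)$-type decomposition of plane partitions. The key observation is that choosing the square root $\mathbf{v}$ as in~\eqref{eq:7dscreeningdecomp} precisely matches the sum of pairwise OPE contractions of screening currents $\mathsf{S}_{\mathrm{k}(a)}(x)$ indexed by the elements $x \in \mathcal{X}_{\bar{a}}$ of the $(2,1)$-decomposed plane partitions.

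First, I would recall from section~\ref{sec:decomp7d} that each plane partition $\pi^{(\alpha)}_{\bar{a}}$ admits a $(2,1)$-type description in which the observable sheaf obeys $\bfY_{\bar{a}} = \bfP_{\mathrm{i}(a)}\bfP_{\mathrm{j}(a)}\bfX_{\bar{a}}$ with $\bfX_{\bar{a}} = \sum_{x \in \mathcal{X}_{\bar{a}}} x$. Substituting into $\bfV = \bfY^{\vee}\bfY/\bfP_{\four}$ yields, after cancellation of the $\bfP_{\mathrm{i}(a)}\bfP_{\mathrm{j}(a)}$ factors, the form displayed in~\eqref{eq:7dscreeningdecomp}; extracting the square root part (while discarding singular diagonal contributions, which account for the one-loop perturbative factor $\mathcal{Z}^{\D6}_{\text{1-loop}}$) gives
\begin{equation}
\mathbf{v} = \sum_{(x,a)<(x',b)} \frac{\bfP_{\four}}{\bfP^{\vee}_{\mathrm{k}(a)}\bfP_{\mathrm{k}(b)}}\,(x'/x).
\end{equation}
Applying the index functor via its plethystic form $\mathbb{I}[\mathbf{v}] = \exp\bigl(-\sum_{n>0}\tfrac{1}{n}\mathbf{v}^{[n]}\bigr)$, I obtain a product over ordered pairs of factors $\exp\bigl(-\sum_{n>0}\tfrac{1}{n}\bfP_{\four}^{[n]}/(\bfP^{[-n]}_{\mathrm{k}(a)}\bfP^{[n]}_{\mathrm{k}(b)})(x'/x)^n\bigr)$.

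Next, I would compute the correlation function on the right-hand side. Using the commutation relations~\eqref{eq:D2op} of the $\D2$ vertex operators, the contraction of two screening currents is
\begin{equation}
\mathsf{S}_a(x)\mathsf{S}_b(x') = \mathcal{Z}^{\D2\tbar\D2}_{\text{1-loop}}(x,a\,|\,x',b)\,{:\mathsf{S}_a(x)\mathsf{S}_b(x'):},
\end{equation}
where the prefactor is precisely the plethystic exponential above (with $(a,b)\to(\mathrm{k}(a),\mathrm{k}(b))$). Since the normal-ordered product has trivial vacuum expectation value (up to the zero-mode conventions of section~\ref{sec:zeromodes}), the ordered VEV of all $\mathsf{S}_{\mathrm{k}(a)}(x)$ equals the product of pairwise contractions for $(x,a)<(x',b)$, matching $\mathbb{I}[\mathbf{v}]$ exactly.

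Finally, I would invoke the derivation of the tetrahedron partition function in section~\ref{sec:Tetrahedron_inst}: at a fixed point labelled by $\underline{\vec{\pi}}$, the character $\mathbf{v}_{\text{inst.}}$ of~\eqref{eq:D6tetrainstch} and the character $\mathbf{v}$ of our square root differ only by the perturbative part $\mathring{\mathbf{v}}$ and by terms already absorbed into the $q$-Cartan decomposition; thus $\mathbb{I}[\mathbf{v}] \simeq \mathcal{Z}^{\D6}_{\text{1-loop}}\mathcal{Z}^{\D6}_{\text{tet.inst.}}[\underline{\vec{v}},\underline{\vec{\pi}}]$, where $\simeq$ is understood up to the perturbative factor. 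The main obstacle I anticipate is a careful bookkeeping of (i) the freedom in choosing $\mathrm{k}(a)$ in the $(2,1)$-type description, which must be made consistent across all four summands so that cross-terms combine into the $\mathcal{A}_{\mathbb{C}^4}$ structure function of the bifundamental sector, and (ii) the handling of zero-mode and sign contributions, which in the parallel D4 case was straightforward but here requires verifying that the quadrality-symmetric contour-integral derivation of $\mathcal{Z}^{\D6}_{\text{tet.inst.}}$ does not introduce additional sign ambiguities (consistent with the absence of sign problems shown in~\cite{Fasola:2023ypx} and reproduced algebraically in Thm.~\ref{thm:D6qq-commute}).
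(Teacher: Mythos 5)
Your proposal is correct and follows essentially the same route as the paper: pass to the $(2,1)$-type variables $\mathcal{X}_{\bar{a}}$ via $\bfY_{\bar{a}}=\bfP_{\text{i}(a)}\bfP_{\text{j}(a)}\bfX_{\bar{a}}$, take the ordered square root of $\bfV$ as in \eqref{eq:7dscreeningdecomp}, and match $\mathbb{I}[\mathbf{v}]$ term by term with the pairwise contraction factors $\mathcal{Z}^{\D2\tbar\D2}_{\text{1-loop}}$ of the ordered screening currents $\mathsf{S}_{\text{k}(a)}(x)$. The only slight imprecision is attributing the one-loop factor to the discarded diagonal terms: those singular $(x,a)=(x',b)$ collisions are simply dropped, while the residual perturbative discrepancy between $\mathbb{I}[\mathbf{v}]$ and $\mathcal{Z}^{\D6}_{\text{1-loop}}\mathcal{Z}^{\D6}_{\text{tet.inst.}}$ is precisely what the symbol $\simeq$ absorbs.
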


\subsection{General D6 \texorpdfstring{$qq$}{qq}-characters}\label{sec:generalD6qq}
We can consider higher rank analogs of them similar to the $\D4$ $qq$-character
\beq
    \mathsf{T}^{(n)}_{\bar{a}}(\underline{x})={:\mathsf{W}_{\bar{a}}(x_{1})\cdots \mathsf{W}_{\bar{a}}(x_{n}):}+\cdots.
\eeq
This time the coefficients appearing in the expansion of the right-hand side correspond to the partition function of the $\U(n)$ gauge theory on the D6-brane on $\mathbb{C}^{3}_{\bar{a}}\times \mathbb{S}^{1}$. 

Analogous to the $\D4$ case, one would like to add negative weights as the highest weights. General $\D6$ $qq$-characters including negative weights are written as
\beq
    \mathsf{T}^{(n|m)}_{\bar{a}}(\underline{x}\,|\,\underline{y})={:\frac{\mathsf{W}_{\bar{a}}(x_{1})\cdots \mathsf{W}_{\bar{4}}(x_{n})}{\mathsf{W}_{\bar{a}}(y_{1})\cdots \mathsf{W}_{\bar{a}}(y_{m})}:}+\cdots.\label{eq:D6supergroupqqcharacter}
\eeq
The explicit coefficients are then obtained by imposing the commutativity with the screening charge $\mathscr{Q}_{a}(x')$. 

Similar to the $\D4$ case, one might think we need to introduce the negative iWeyl reflection introduced to reproduce the supergroup analog of $\D4$ $qq$-characters. However, noticing that 
\beq
    \mathsf{W}_{\bar{a}}(x)^{-1}\mathsf{S}_{a}(x')={x'}^{-1}(1-q_{a}^{-1}x'/x):\mathsf{W}_{\bar{a}}(x)^{-1}\mathsf{S}_{a}(x'):
\eeq
gives no new poles, $\mathsf{W}_{a}(x)^{-1}$ and $\mathsf{S}_{a}(x')$ commute with each other. Thus, compared to the $\D4$ case where we need to introduce negative iWeyl reflections to cancel the poles arising from the operators in the denominators, for the $\D6$ case, we do not need to do such kind of procedure. The right-hand side of \eqref{eq:D6supergroupqqcharacter} is expanded by the plane partitions generated by $\mathsf{W}_{\bar{a}}(x_{i})\,(i=1,\ldots,n)$ and no plane partitions generated by $\mathsf{W}_{\bar{a}}(y_{j})^{-1}\,(j=1,\ldots,m)$. We expect the coefficients appearing on the right-hand side give the instanton partition function of the 7d $\U(n|m)$ theory on $\mathbb{C}^{3}_{\bar{a}}\times \mathbb{S}^{1}$. We leave a detailed analysis of this for future work.

For later use, let us consider the case when there is only one positive and one negative weight:
\beq
    \mathsf{T}_{\bar{a}}^{(1|1)}(x\,|\,Kx)={:\frac{\mathsf{W}_{\bar{a}}(x)}{\mathsf{W}_{\bar{a}}(Kx)}:}+\cdots
\eeq
where $K$ is a generic parameter. The contraction $\mathsf{W}_{\bar{4}}(Kx)^{-1}$ with the screening current $\mathsf{S}_{4}(x')$ will give a pole free rational function but when $K$ is generic, no poles will be canceled and the coefficients are only modified:
\begin{align}
\begin{split}
    &\mathsf{S}_{a}(x')\Lambda_{\bar{a},\pi}^{K}(x)=K^{-1}\left[\left(1-\frac{Kq_{a}x}{x'}\right)\mathscr{W}_{\pi,x}^{\bar{a}}(q_{a}^{-1}x')^{-1}\right]_{+}:\mathsf{S}_{a}(x')\Lambda_{\bar{a},\pi}^{K}(x):,\\
    &\mathscr{W}_{\pi,x}^{\bar{a}}(x')\rightarrow \frac{1}{(1-Kx/x')}\mathscr{W}_{\pi,x}^{\bar{a}}(x')\coloneqq \mathscr{W}_{\pi,x}^{\bar{a},K}(x'),\quad \widetilde{Z}^{\D6}_{\bar{a}}[\pi]\rightarrow \prod_{\scube\in\pi}\frac{1-Kx/\chi_{\bar{a},x}(\cube)}{1-Kq_{a}x/\chi_{\bar{a},x}(\cube)}\widetilde{Z}^{\D6}_{\bar{a}}[\pi],
\end{split}
\end{align}
where
\bea
    &\Lambda_{\bar{a},\pi}^{K}(x)={:\frac{\mathsf{W}_{\bar{a}}(x)}{\mathsf{W}_{\bar{a}}(Kx)}\prod_{\scube\in\pi}\mathsf{A}^{-1}(\chi_{\bar{a},x}(\cube)):}.
\eea
Thus, we will obtain
\bea
    \mathsf{T}_{\bar{4}}^{(1|1)}(x\,|\,Kx)&=\sum_{\pi\in\mathcal{PP}}\widetilde{\mathcal{Z}}_{\bar{a}}^{\D6}[K,\pi]\Lambda_{\bar{a},\pi}^{K}(x),\\
    \widetilde{\mathcal{Z}}_{\bar{a}}^{\D6}[K,\pi]&=\prod_{\scube\in\pi}\frac{\left(1-Kx/\chi_{\bar{a},x}(\cube)\right)\left(1-q_{a}x/\chi_{\bar{a},x}(\cube)\right)}{\left(1-Kq_{a}x/\chi_{\bar{a},x}(\cube)\right)\left(1-x/\chi_{\bar{a},x}(\cube)\right)}\prod_{\substack{\scube\in\pi\\\scubeF\in\pi}}g_{\bar{a}}\left(\frac{\chi_{\bar{a},x}(\cube)}{\chi_{\bar{a},x},(\cubeF)}\right)^{-1}.
\eea
Note that after rescaling $\mathsf{A}(x)\rightarrow \mathfrak{q}^{-1}\mathsf{A}(x)$ we can change the topological term to $\mathfrak{q}^{|\pi|}$. For later use, let us list some properties of these operators. The operator products of $\{\Lambda_{\bar{a},\pi}^{K}(x)\}$ are 
\begin{align}\label{eq:D6supergroupcontraction1}
\begin{split}
        \Lambda_{\bar{b},\pi^{(2)}}^{K_{2}}(x_{2})\Lambda^{K_{1}}_{\bar{a},\pi^{(1)}}(x_{1})&=\mathcal{Z}^{\D6\tbar\D6}_{\text{1-loop}}(x_{1},\bar{a},K_{1}\,|\,x_{2},\bar{b},K_{2})\mathcal{Z}^{\D6\tbar\D6}_{\bar{a};K_{1}|\bar{b};K_{2}}(x_{1},\pi^{(1)}\,|\,x_{2},\pi^{(2)})\\
        &\qquad\times :\Lambda_{\bar{b},\pi^{(2)}}^{K_{2}}(x_{2})\Lambda^{K_{1}}_{\bar{a},\pi^{(1)}}(x_{1}):
\end{split}
\end{align}
where 
\begin{subequations}\label{eq:D6supergroupcontraction2}
\begin{align}
    \mathcal{Z}^{\D6\tbar\D6}_{\text{1-loop}}(x_{1},\bar{a},K_{1}\,|\,x_{2},\bar{b},K_{2})&=\frac{\mathcal{Z}_{\text{1-loop}}^{\D6\tbar\D6}(x_{1},\bar{a}\,|\,x_{2},\bar{b})\mathcal{Z}_{\text{1-loop}}^{\D6\tbar\D6}(K_{1}x_{1},\bar{a}\,|\,K_{2}x_{2},\bar{b})}{\mathcal{Z}_{\text{1-loop}}^{\D6\tbar\D6}(K_{1}x_{1},\bar{a}\,|\,x_{2},\bar{b})\mathcal{Z}_{\text{1-loop}}^{\D6\tbar\D6}(x_{1},\bar{a}\,|\,K_{2}x_{2},\bar{b})},\\
    \mathcal{Z}^{\D6\tbar\D6}_{\bar{a};K_{1}|\bar{b};K_{2}}(x_{1},\pi^{(1)}\,|\,x_{2},\pi^{(2)})&=\mathcal{Z}_{\bar{a}|\bar{b}}^{\D6\tbar\D6}(x_{1},\pi^{(1)}\,|\,x_{2},\pi^{(2)})\nonumber\\
    &\qquad \times\prod_{\scube\in\pi^{(1)}}\left(q_{b}^{-1}\mathscr{V}_{b}\left(\frac{\chi_{\bar{a},x_{1}}(\cube)}{q_{b}K_{2}x_{2}}\right)\right)\prod_{\scubeF\in\pi^{(2)}}\mathscr{V}_{a}\left(\frac{K_{1}x_{1}}{\chi_{\bar{b},x_{2}}(\cubeF)}\right)^{-1}.
\end{align}
\end{subequations}

\paragraph{Pit reduction of $\D6$ $qq$-characters}
When the parameter $K$ is generic, the coefficients are modified slightly without changing the structure of the $qq$-character. However, when we tune $K$ to specific values, the zeros appearing will cancel the poles, the iWeyl reflection will be restricted, and the right-hand side will not be expanded with arbitrary plane partitions but only by specified plane partitions. This can be understood also from the extra factors in the coefficients. When $K$ is tuned, the coefficient $\mathcal{Z}_{\bar{a}}^{\D6}[K,\pi]$ will be zero for some plane partition configurations and such terms will disappear from the $qq$-character. This procedure is well known in the literature of MacMahon representations and is called the \emph{pit reduction} \cite{Feigin2011plane,bershtein2018plane}. A plane partition with a pit~P is a plane partition that does not contain a box at the position~P.

Let us focus on the case when $a=4$. For example, when we tune the parameter $K$ as $K=q_{3}x$, we have 
\begin{align}
    {:\frac{\mathsf{W}_{\bar{4}}(x)}{\mathsf{W}_{\bar{4}}(q_{3}x)}:}=\mathsf{X}_{12}(x)
\end{align}
and the $\D6$ $qq$-character will reduce to the $\D4$ $qq$-character. This process is just placing a pit in $q_{3}x$ and reducing the plane partition in $(123)$ and restricting it to a Young diagram in the $(12)$-plane:
\begin{align}
    \adjustbox{valign=c}{\includegraphics[width=5cm]{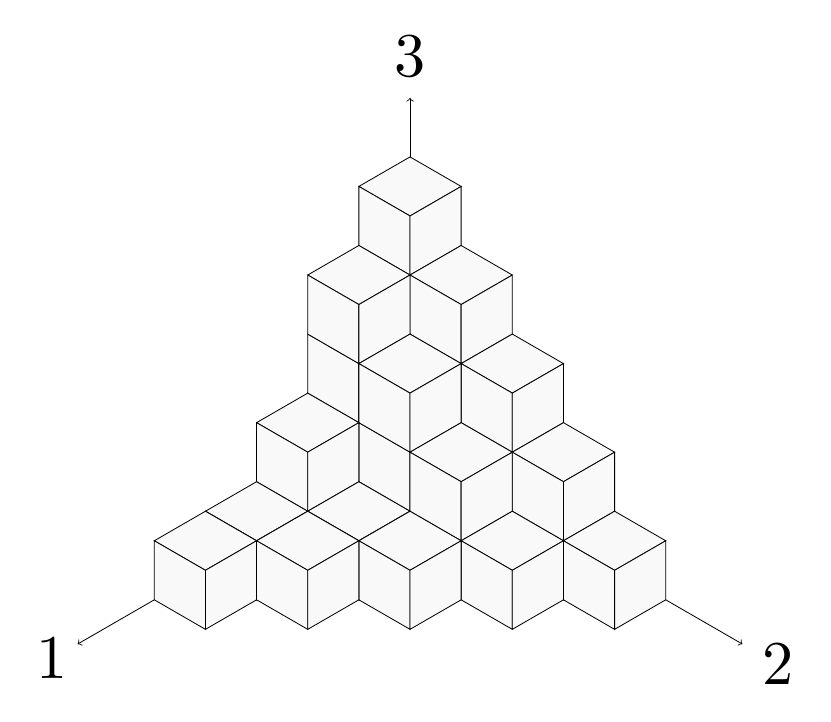}} \quad \Longrightarrow \quad \adjustbox{valign=c}{\includegraphics[width=5cm]{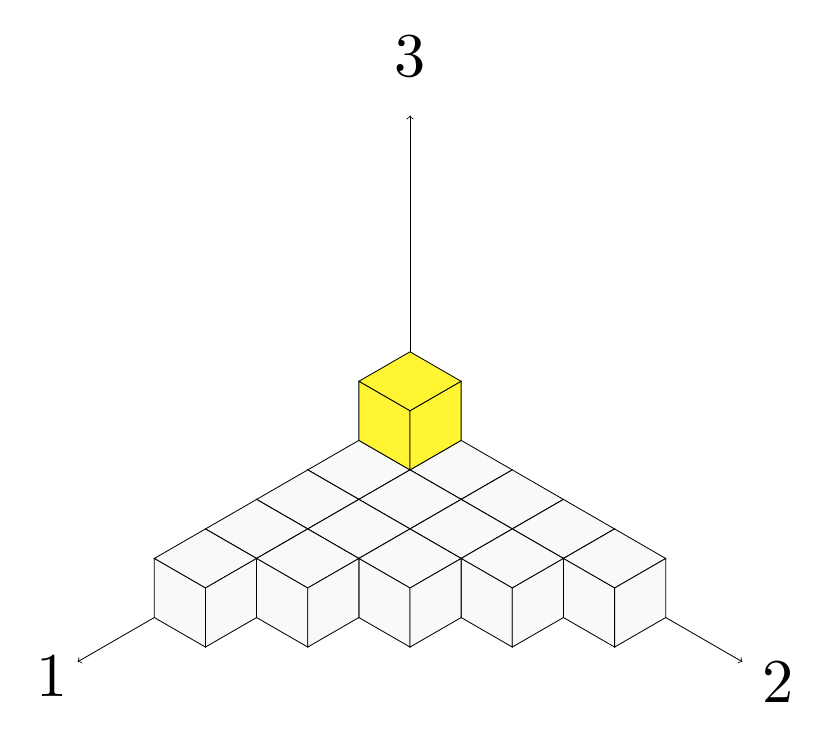}} 
\end{align}
Physically, this is interpreted as the Higgs mechanism and also the tachyon condensation as discussed in section~\ref{sec:tetraLMNS}.

When $K=\chi_{\bar{4},x}(\pitcube)/x=q_{1}^{L-1}q_{2}^{M-1}q_{3}^{N-1}$, we will get a pit reduction of the plane partition:
\begin{align}
\begin{split}
    \mathsf{T}_{\bar{4}}^{\pitcube}(x)&={:\frac{\mathsf{W}_{\bar{4}}(x)}{\mathsf{W}_{\bar{4}}(\chi_{\bar{4},x}(\hspace{-0.5mm}\scalebox{1.2}{\pitcube}))}:}+\cdots=\sum_{\substack{\pi:\text{plane partitions}\\\text{with a pit at $\pitcube$} }}\cdots
\end{split}
\end{align}
The highest weight has the following structure
\begin{align}
\begin{split}
    {:\frac{\mathsf{W}_{\bar{4}}(x)}{\mathsf{W}_{\bar{4}}(q_{1}^{L-1}q_{2}^{M-1}q_{3}^{N-1}x)}:}&={:\prod_{i=1}^{L}\frac{\mathsf{W}_{\bar{4}}(q_{1}^{i-1}q_{2}^{M-1}q_{3}^{N-1}x)}{\mathsf{W}_{\bar{4}}(q_{1}^{i}q_{2}^{M-1}q_{3}^{N-1}x)}\prod_{j=1}^{M}\frac{\mathsf{W}_{\bar{4}}(q_{2}^{j-1}q_{3}^{N-1}x)}{\mathsf{W}_{\bar{4}}(q_{2}^{j}q_{3}^{N-1}x)}\prod_{k=1}^{N}\frac{\mathsf{W}_{\bar{4}}(xq_{3}^{k-1})}{\mathsf{W}_{\bar{4}}(xq_{3}^{k})}:}\\
    &={:\prod_{i=1}^{L}\mathsf{X}_{23}(q_{1}^{i-1}q_{2}^{M-1}q_{3}^{N-1}x)\prod_{j=1}^{M}\mathsf{X}_{13}(q_{2}^{j-1}q_{3}^{N-1}x)\prod_{k=1}^{N}\mathsf{X}_{12}(xq_{3}^{k-1}):}
\end{split}
\end{align}
which implies the pit reduction of the plane partition is related to the general $\D4$ $qq$-characters $\mathsf{T}^{(\vec{n}|\vec{0})}_{12:23:13}(\underline{\vec{x}})$ in Thm.~\ref{thm:D4generalqqcharacter} after specializing the parameters as 
\begin{equation}
    n_{23}=L,\,\,n_{13}=M,\,\,n_{12}=N,\,\,\vec{x}_{12}=(xq_{3}^{k-1})_{k=1}^{N},\,\,\vec{x}_{13}=(q_{2}^{j-1}q_{3}^{N-1}x)_{j=1}^{M},\,\, \vec{x}_{12}=(q_{1}^{i-1}q_{2}^{M-1}q_{3}^{N-1}x)_{i=1}^{L}.
\end{equation}
This fact is nothing special from the plane partition viewpoint. This is because we can pile $L$ Young diagrams spanning the $(23)$-plane, $M$ Young diagrams spanning the $(13)$-plane, and $N$ Young diagrams spanning the $(12)$-plane on top of each other to obtain all possible plane partitions with a pit at $(L,M,N)$. Note also that this decomposition in Young diagrams is not unique and so we have multiple descriptions in the $\D4$ $qq$-characters. Physically, the system corresponding to the $\D6$ $qq$-character with a pit-reduced plane partition is just the gauge origami system with folded instantons where the Coulomb branch parameters are tuned in a specific way. 

Generally, we may add another pit to the plane partition and this is called the double-constrained plane partition which was introduced in \cite{Harada:2018bkb} (see also the references there) to discuss minimal models of W-algebras. Let $(L_{1},M_{1},N_{1})$ and $(L_{2},M_{2},N_{2})$ be the coordinates of the two pits. The parameter $K$ needs to obey the conditions of the two pits
\begin{align}
    K=q_{1}^{L_{1}-1}q_{2}^{M_{1}-1}q_{3}^{N_{1}-1}=q_{1}^{L_{2}-1}q_{2}^{M_{2}-1}q_{3}^{N_{2}-1}.
\end{align}
Imposing this condition causes the $q$-parameters to be not generic anymore. The physical meaning of these types of $qq$-characters and their relation with minimal models are still unclear for the moment. We note that the condition above is just the Burge condition~\cite{Belavin:2015ria,Alkalaev:2014sma}, and thus the BPS/CFT correspondence arising should be an analog of the AGT dual of minimal models of W-algebras. See also \cite{Kimura:2022spi} where some examples of these truncations were studied from the $qq$-character viewpoint.

\paragraph{Plane partitions with boundary conditions}
The $qq$-characters are uniquely determined from the highest weight after imposing the commutativity with the screening charges. 
We further can construct a $qq$-character where each term corresponds to plane partitions with asymptotic Young diagrams $\lambda,\mu,\nu$ in the three axes:
\begin{align}
    \adjustbox{valign=c}{\includegraphics[width=7cm]{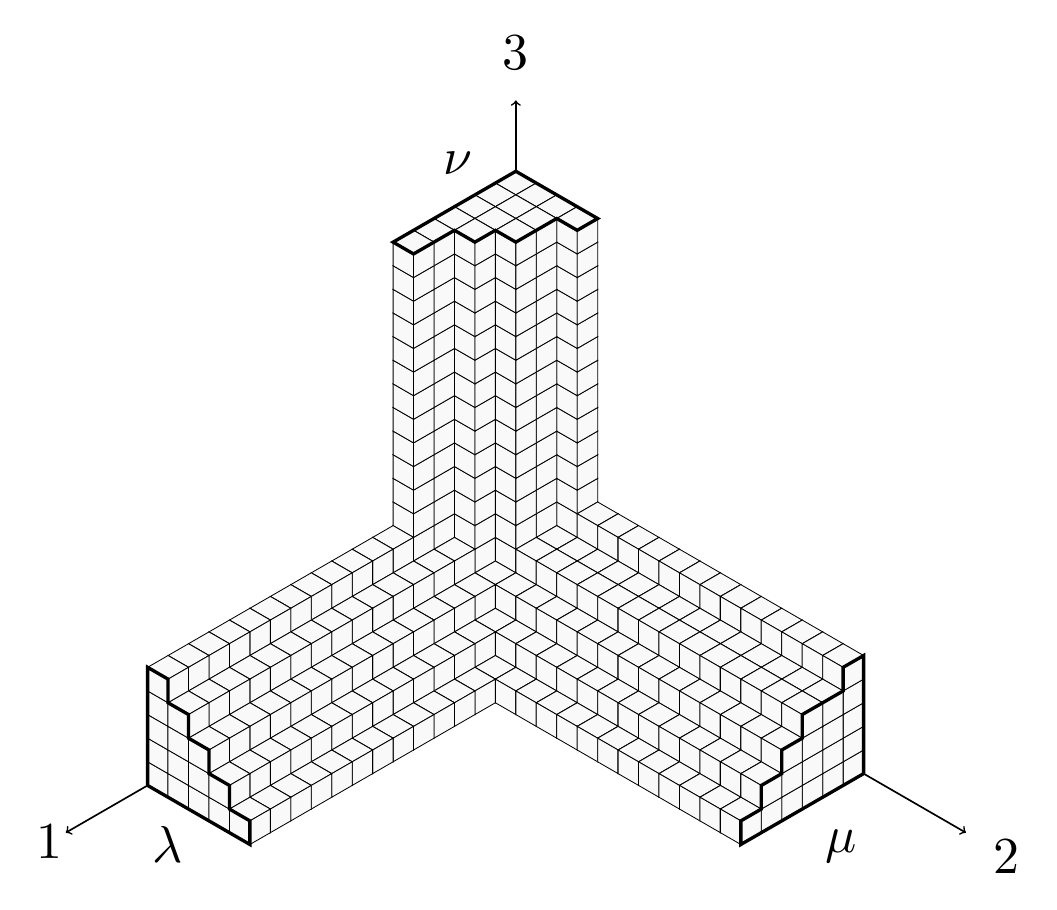}}\quad\adjustbox{valign=c}{\includegraphics[width=7cm]{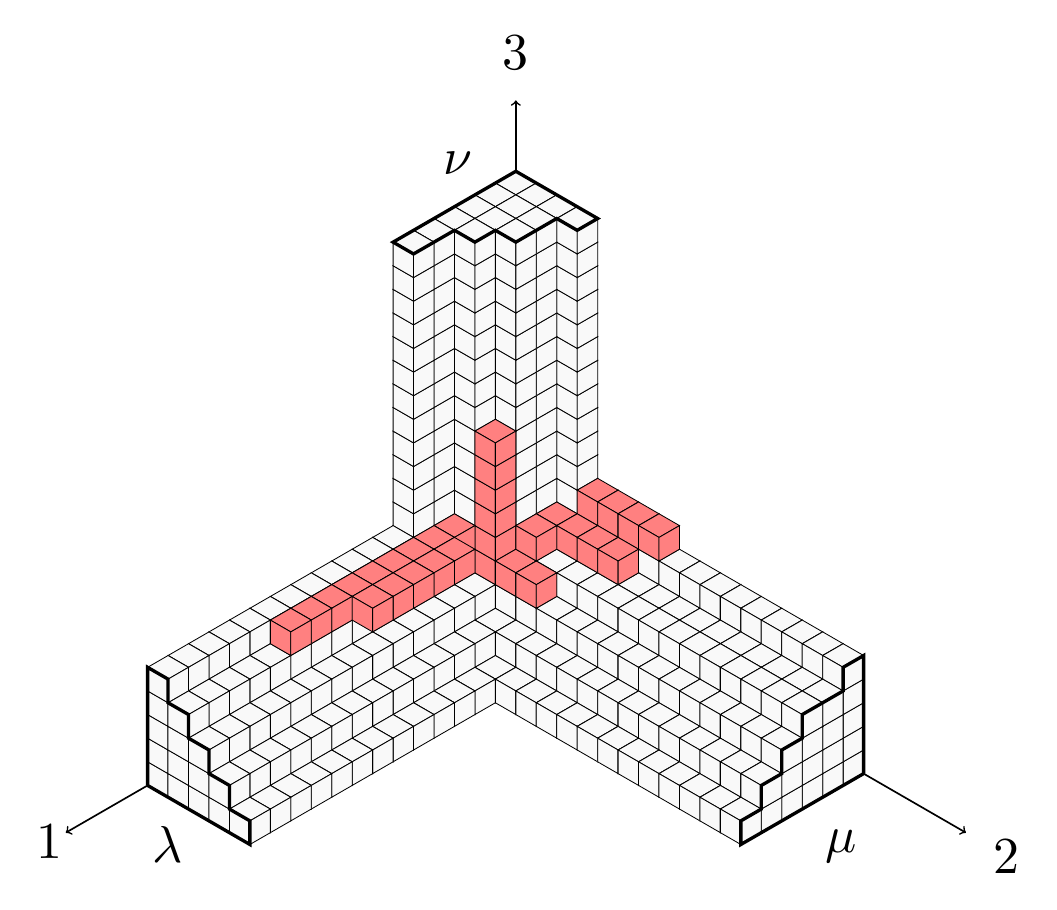}} 
\end{align}
The left figure is the vacuum configuration of the plane partition with boundary conditions and the right figure is the configuration with boxes added to the vacuum configuration. The highest weight reproducing this configuration is 
\beq
    :\mathsf{W}_{\bar{a}}(x)\prod_{\scube\in\mathcal{P}_{\lambda,\mu,\nu}}\mathsf{A}^{-1}(\chi_{\bar{a},x}(\cube)):,\label{eq:D6hw_boundarypartition}
\eeq
where $\mathcal{P}_{\lambda,\mu,\nu}$ is the set of boxes in the vacuum configuration. We denote the $qq$-character obtained from this highest weight as
\begin{equation}
    \mathsf{T}_{\bar{a},\lambda\mu\nu}(x)={:\mathsf{W}_{\bar{a}}(x)\prod_{\scube\in\mathcal{P}_{\lambda,\mu,\nu}}\mathsf{A}^{-1}(\chi_{\bar{a},x}(\cube)):}+\cdots.
\end{equation}
As an example, let us focus on $a=4$ and consider the case where the boundary conditions are $(\lambda,\mu,\nu)=(\Bbox,\emptyset,\emptyset)$. The highest weight is 
\beq
    {:\mathsf{W}_{\bar{4}}(x)\prod_{i=1}^{\infty}\mathsf{A}^{-1}(xq_{1}^{i-1}):}={:\mathsf{W}_{\bar{4}}(x)\prod_{i=1}^{\infty}\frac{\mathsf{S}_{1}(q_{1}^{i}x)}{\mathsf{S}_{1}(q_{1}^{i-1}x)}:}={:\frac{\mathsf{W}_{\bar{4}}(x)}{\mathsf{S}_{1}(x)}:}.
\eeq
The contraction is 
\beq
    :\frac{\mathsf{W}_{\bar{4}}(x)}{\mathsf{S}_{1}(x)}:\mathsf{S}_{4}(q_{4}x')=x'\frac{(1-q_{1}q_{4}x'/x)}{(1-q_{3}^{-1}x'/x)(1-q_{2}^{-1}x'/x)}:\frac{\mathsf{W}_{\bar{4}}(x)}{\mathsf{S}_{1}(x)}\mathsf{S}_{4}(q_{4}x'):
\eeq
which gives poles at $x'=q_{2}x,q_{3}x$. One can show that the terms that cancel the poles coming from these terms are 
\begin{equation}
:\mathsf{W}_{\bar{4}}(x)\mathsf{A}^{-1}(q_{2}x)\mathsf{S}_{1}(x)^{-1}:, \quad :\mathsf{W}_{\bar{4}}(x)\mathsf{A}^{-1}(q_{3}x)\mathsf{S}_{1}(x)^{-1}:
\end{equation}
which correspond to the following plane partitions:
\begin{equation}
   \hspace{-3cm} \adjustbox{valign=c}{\includegraphics[width=12cm]{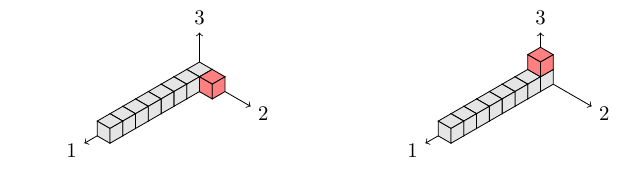}}
\end{equation}
By doing this procedure recursively, one will see that each term of the expansion of the $qq$-character indeed corresponds with the elements in $\mathcal{P}_{\Abox,\emptyset,\emptyset}$.

\paragraph{Web of $qq$-characters}
We may represent the $qq$-character $\mathsf{T}_{\bar{a},\lambda\mu\nu}$ using  a trivalent vertex as
\begin{align}\label{eq:trivalentC3}
    \adjustbox{valign=c}{\begin{tikzpicture}[thick]
		\begin{scope}[scale=1.5,xscale = 1]
		\draw[] (0,0) -- (-1,0) -- (-1.7,-0.7);
		\draw[] (-1,1) -- (-1,0);
  \node at (-1.9,-0.9){$\lambda$};
  \node at (0.2,0){$\mu$};
  \node at (-1,1.2){$\nu$};
		\end{scope}
    \end{tikzpicture}}
\end{align}
where the three legs correspond to the three axes of the plane partitions with boundary conditions $\lambda,\mu,\nu$. This reminds us of the topological vertex \cite{Aganagic:2003db,Okounkov:2003sp,Iqbal:2007ii,Awata:2005fa,Awata:2008ed,Taki:2007dh}. In the context of the topological vertex and the plane partition, one can glue the legs with the same boundary conditions. We expect that we can do the same gluing procedure for the $\D6$ $qq$-characters and obtain $qq$-characters associated with complicated webs of the trivalent vertices such as 
\begin{align}\label{eq:generalbraneweb}
    \adjustbox{valign=c}{\begin{tikzpicture}[thick]
    \begin{scope}[scale=0.8,xscale = 1]
	\draw (0,0)--++(0.8,0);
        \draw (0,0)--++(0,0.8);
        \draw (0,0)--++(-0.6,-0.6)--++(1,0);
        \draw (-0.6,-0.6)--++(-0.8,-0.6)--++(1,0);
        \draw(-1.4,-1.2)--++(-1,-0.6)--++(1,0);
        \draw (-2.4,-1.8)--++(-1.2,-0.6);
    \end{scope}
    \end{tikzpicture}},\hspace{2cm} \adjustbox{valign=c}{\begin{tikzpicture}[thick]
    \begin{scope}[scale=1,xscale = 1]
	\draw (0,0)--++(0.8,0);
        \draw (0,0)--++(0,0.8);
        \draw (0,0)--++(-0.6,-0.6)--++(-0.8,0);
        \draw (-0.6,-0.6)--++(0,-0.8);
    \end{scope}
    \end{tikzpicture}}
\end{align}
Note that the trivalent vertex in \eqref{eq:trivalentC3} and the left and right diagrams in \eqref{eq:generalbraneweb} are the brane webs dual to the toric diagrams of the $\mathbb{C}^{3}$, $\mathbb{C}^{2}/\mathbb{Z}_{4}\times \mathbb{C}$ and the resolved conifold, respectively. Since the $qq$-character associated with the trivalent vertex captures the partition function of the D-branes on the $\mathbb{C}^{3}$ subspace of $\mathbb{C}^{4}$, we expect that the glued $qq$-characters will capture the partition functions of D-branes on general toric CY$_{3}$ manifolds included in the $\text{CY$_{3}$}\times \mathbb{C}$ geometry. Such kind of generalizations are left for future work. We may also add boundary Young diagrams to each of the edges generally. They are expected to be $qq$-characters obtained after gluing $\mathsf{T}_{\bar{a},\lambda\mu\nu}(x)$.

We can combine the two kinds of $qq$-characters $\mathsf{T}_{\bar{a}}^{\pitcube}(x), \mathsf{T}_{\bar{a},\lambda\mu\nu}(x)$ and construct $qq$-characters where plane partitions with boundary conditions and a pit occur. The $qq$-character is uniquely determined from the highest weight as
\beq
    \mathsf{T}_{abc,\lambda\mu\nu}^{L,M,N}(x)={:\frac{\mathsf{W}_{abc}(x)}{\mathsf{W}_{abc}(q_{a}^{L-1}q_{b}^{M-1}q_{c}^{N-1}x)}\prod_{\scube\in\mathcal{P}_{\lambda,\mu,\nu}}\mathsf{A}^{-1}(\chi_{abc,x}(\cube)):}+\cdots.
\eeq
We represent this $qq$-character using a trivalent vertex as 
\begin{align}\label{eq:branewebalgebra1}
    \adjustbox{valign=c}{\begin{tikzpicture}[thick]
		\begin{scope}[scale=1.5,xscale = 1]
		  \node[scale=1] at (-0.5,0.5){$L$};
            \node[scale=1] at (-0.5,-0.4){$N$};
            \node[scale=1] at (-1.5,0.2){$M$};
		\draw[] (0,0) -- (-1,0) -- (-1.7,-0.7);
		\draw[] (-1,1) -- (-1,0);
  \node at (-1.9,-0.9){$\lambda$};
  \node at (0.2,0){$\mu$};
  \node at (-1,1.2){$\nu$};
		\end{scope}
    \end{tikzpicture}}
\end{align}
and gluing of these $qq$-characters should lead to a larger class of $qq$-characters. We call this large class of $qq$-characters, the \textbf{web of $qq$-characters}\footnote{There is a similar concept called \emph{web of W-algebras} \cite{Prochazka:2017qum} where the authors associated brane webs including integers inserted in the faces surrounded by the legs with W-algebras \eqref{eq:branewebalgebra1}. Each trivalent vertex corresponds to the corner VOA \cite{Gaiotto:2017euk} and the glued algebra gives the web of W-algebras. See also \cite{Harada:2020woh} where examples for the gluing process were explicitly done in the trigonometric language.}. For the moment, we do not know how to glue these $qq$-characters explicitly and we leave it for future work.

\paragraph{BPS $qq$-characters}
Another way to construct $qq$-characters associated with the CY$_{4}$ with the form $\CY 3\times \mathbb{C}$ is to use the vertex operators introduced in \eqref{eq:CY3D0D6vertexoperator} (see the notations there). We can define the screening charge corresponding to the $\mathbb{C}$-part as
\bea
\mathscr{Q}_{i}(x)=\sum_{k\in\mathbb{Z}}\mathsf{S}_{i}(q_{4}^{k}x),\quad i\in Q_{0}.
\eea
We then can introduce D6 $qq$-characters associated with the CY$_{3}$ part as
\bea
\mathsf{T}_{i}(x)=\mathsf{W}_{i}(x)+\cdots
\eea
satisfying
\bea
\relax [\mathsf{T}_{i}(x),\mathsf{Q}_{j}(x')]=0,\quad \forall i,j\in Q_{0}.
\eea
An interesting property is that the monomial terms of this $qq$-character are classified by the 3d BPS crystals \cite{Ooguri:2009ijd,Yamazaki:2008bt}. We present this property as a conjecture which will be clarified in a future publication. Details will be explained in \cite{Kimura-Noshita}.

\begin{conjecture}[\cite{Kimura-Noshita}]\label{conj:BPSqq}
Let $Z$ be a toric CY$_{4}$ which takes the form as $Z=X\times \mathbb{C}$ where $X$ is a toric CY$_{3}$. The corresponding quiver of $X$ is $Q=(Q_{0},Q_{1})$ with $q$-deformation parameters $\{q_{I}\}_{I\in Q_{1}}$. Given this quiver, we can construct a three-dimensional crystal called BPS crystals (see \cite{Ooguri:2009ijd,Yamazaki:2008bt} for details). The BPS crystals are sets of colored \emph{atoms} where the colors are labeled by $Q_{0}$. Namely, given a 3d BPS crystal $\Lambda$, we have a color projection map $\text{c}:\Lambda\rightarrow Q_{0}$. The BPS crystal has an atom in the origin which we denote $\mathfrak{o}$. Each atom $\cube$ of the BPS crystal is associated with a coordinate function\footnote{\label{foot:coordinatefunct}This coordinate function is a one-parameter deformation of the coordinate function of \cite{Galakhov:2021vbo,Noshita:2021ldl}. We will discuss the derivation of this in \cite{Kimura-Noshita}. For this paper, we only note that after taking the limit $q_{4}\rightarrow 1$, this coordinate function will become the coordinate functions defined in \cite{Galakhov:2021vbo,Noshita:2021ldl}.} $\chi_{X,x}(\cube)$:
\bea
\chi_{X,x}(\cube )=x\times \prod_{I\in \text{path}[\mathfrak{o}\rightarrow \cube]} q_{I}.
\eea
Given two atoms with color $i$ and $j$, the difference of the coordinates comes from the parameter $q_{I:i\rightarrow j}$ which is associated with the arrow connecting the two atoms. 

Under this condition, the $qq$-characters are given
    \bea
\mathsf{T}_{i}(x)&=\sum_{\Lambda^{(i)}}\mathfrak{q}_{i}^{|\Lambda^{(i)}|}\mathcal{Z}^{\D6}_{i}[\Lambda^{(i)}]:\mathsf{W}_{i}(x)\prod_{\scube\in \Lambda^{(i)}}\mathsf{A}_{\text{c}(\scube)}^{-1}(\chi_{X,x}(\cube)):,\\
\mathsf{T}^{K}_{i}(x)&=\sum_{\Lambda^{(i)}}\mathfrak{q}_{i}^{|\Lambda^{(i)}|}\mathcal{Z}^{\D6}_{i}[K,\Lambda^{(i)}]:\frac{\mathsf{W}_{i}(x)}{\mathsf{W}_{i}(Kx)}\prod_{\scube\in \Lambda^{(i)}}\mathsf{A}_{\text{c}(\scube)}^{-1}(\chi_{X,x}(\cube)):.
 \eea
where $\Lambda^{(i)}$ is the crystal whose atom at the origin is with color $i\in Q_{0}$. The coefficients $\mathcal{Z}_{i}^{\D6}[\Lambda^{(i)}],\mathcal{Z}_{i}^{\D6}[K,\Lambda^{(i)}]$ are determined from the commutativity with the screening charges:
\bea
\relax[\mathsf{T}_{i}(x),\mathscr{Q}_{j}(x')]=0,\quad \relax[\mathsf{T}^{K}_{i}(x),\mathscr{Q}_{j}(x')]=0
\eea
\end{conjecture}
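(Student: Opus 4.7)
The plan is to adapt the pole-cancellation argument used in the proofs of Thm.~\ref{thm:D2qq-commute}, Thm.~\ref{thm:D4qq-commute}, and especially Thm.~\ref{thm:D6qq-commute} to the general toric CY$_3$ setting. First I would establish the contraction formulas between the generating current $\mathsf{W}_i(x)$ and the screening current $\mathsf{S}_j(x')$, and more generally between the candidate monomial $\Lambda_{i,\Lambda^{(i)}}(x) := {:\mathsf{W}_i(x)\prod_{\scube\in\Lambda^{(i)}}\mathsf{A}_{\text{c}(\scube)}^{-1}(\chi_{X,x}(\cube)):}$ and $\mathsf{S}_j(x')$. Using the commutation relations dictated by the $q$-Cartan matrix $c_{X,ij}$ and the structure function $\varphi_{X,ij}(x)$ in \eqref{eq:CY3timesCstructurefunct}, the OPE factor should organize into a rational function whose simple poles correspond precisely to addable atoms of color $j$ in $\Lambda^{(i)}$ (located at $q_4\chi_{X,x}(\cube)$) and whose zeros correspond to removable atoms of color $j$ (located at $\chi_{X,x}(\cube)$). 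This mirrors the role of $\mathscr{W}^{\bar{a}}_{\pi,x}$ in the $\mathbb{C}^4$ case, with the crucial difference that the box content $\chi_{X,x}(\cube)$ is now a path-product along the quiver, well-defined thanks to the F-term relations on $\{q_I\}$.

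The second step is the inductive construction of the monomials. Starting from the vacuum $\mathsf{W}_i(x)$, I would show by pole analysis that each successive iWeyl reflection $\mathsf{W}_i(x)\to {:\mathsf{W}_i(x)\mathsf{A}_{\text{c}(\scube)}^{-1}(\chi_{X,x}(\cube)):}$ is in bijection with the addition of an addable atom to the current crystal configuration, and that the resulting monomial operator depends only on the crystal configuration $\Lambda^{(i)}$ and not on the order in which atoms are added. This operator-level well-definedness uses the loop/F-term constraints among the $q_I$ and follows the same logic that underlies the coordinate assignment in the melting crystal model of~\cite{Ooguri:2009ijd,Yamazaki:2008bt}. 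Once this bijection is in place, imposing $[\mathsf{T}_i(x),\mathscr{Q}_j(x')]=0$ reduces by the standard shift $\Lambda^{(i)}\mapsto \Lambda^{(i)}+\cube$ to a total-difference of delta functions, and therefore determines the ratio $\mathcal{Z}^{\D6}_i[\Lambda^{(i)}+\cube]/\mathcal{Z}^{\D6}_i[\Lambda^{(i)}]$ uniquely as a ratio of residues of the rational function identified in the first step. Summing this recursion gives a closed-form product expression for $\mathcal{Z}^{\D6}_i[\Lambda^{(i)}]$, which one then identifies with the equivariant K-theoretic DT vertex of $X$ with framing at node $i$. For $\mathsf{T}^K_i(x)$, the same argument goes through with the generating current replaced by ${:\mathsf{W}_i(x)/\mathsf{W}_i(Kx):}$; the contraction with $\mathsf{S}_j(x')$ picks up an additional factor of $(1-Kx/x')^{\delta_{ij}}$ type, producing the $K$-dependent factors in $\mathcal{Z}^{\D6}_i[K,\Lambda^{(i)}]$.

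The main obstacle will be establishing consistency (path-independence) of the recursion across the crystal melting poset for a \emph{general} toric CY$_3$. For $\CY_3=\mathbb{C}^3$, the crystal is a plane partition and the recursion closes by the same $\mathscr{W}^{\bar{a}}$-function machinery used in Thm.~\ref{thm:D6qq-commute}. For more intricate quivers (orbifolds, conifold, generalized conifolds, etc.) the quiver has multiple arrows and nontrivial loops, so a given atom can be reached by many distinct addition paths, and the ratio of residues along each path must agree. The expected mechanism is that the F-term relations among the $\{q_I\}$, together with the cocycle property of $\varphi_{X,ij}(x)$ built into the $q$-Cartan matrix $c_{X,ij}$, make the ratio depend only on the crystal $\Lambda^{(i)}$; a uniform proof will likely require using the dimer/brane-tiling description of the melting crystal and the Jeffrey--Kirwan-type residue formula for the equivariant DT vertex of $X$. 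The remaining subtleties---regularization of the would-be zero-over-zero factors when two atoms share the same $q$-coordinate, and the sign-coherence analogous to Fasola--Monavari--Ricolfi's result used in the $\mathbb{C}^4$ case---should be tractable once the dimer-model dictionary is in place, and are the natural objects of study in the follow-up paper~\cite{Kimura-Noshita}.
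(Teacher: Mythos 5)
This statement is presented in the paper as a \emph{conjecture} (Conj.~\ref{conj:BPSqq}), with the derivation explicitly deferred to the forthcoming work \cite{Kimura-Noshita}; the paper contains no proof of it, so there is nothing to compare your argument against line by line. What the paper does prove is the special case $X=\mathbb{C}^3$ (Thm.~\ref{thm:D6qq-commute} and its supporting recursion in Appendix~\ref{app:D6U1partitionfunction}), together with the vertex-operator setup for general $X\times\mathbb{C}$ in section~\ref{sec:CY3timesC} and the dictionary to quiver quantum toroidal algebras in section~\ref{sec:BPSqqQTA}. Your proposal is the natural extrapolation of exactly that machinery, and your overall strategy — contraction formulas from $c_{X,ij}$ and $\varphi_{X,ij}$, iWeyl reflection in bijection with atom addition, commutativity with $\mathscr{Q}_j$ forcing a residue recursion for $\mathcal{Z}^{\D6}_i[\Lambda^{(i)}]$ — is consistent with how the authors evidently intend to proceed.

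That said, your sketch does not close the conjecture, and you correctly identify why: the path-independence of the residue recursion over the crystal-melting poset for a general quiver with relations is the genuine open step, and it is precisely the content the authors postpone. Two smaller points. First, your description of the pole structure is slightly off relative to the proven $\mathbb{C}^4$ model: in the proof of Thm.~\ref{thm:D6qq-commute} the contraction factor $\mathscr{W}^{\bar{4}}_{\pi,x}(q_4^{-1}x')^{-1}$ has simple \emph{poles} both at $x'=q_4\chi(\cube)$ for addable cubes and at $x'=\chi(\cube)$ for removable cubes — the total-difference cancellation pairs the removable-atom pole of $\Lambda_{\pi+\cube}$ against the addable-atom pole of $\Lambda_{\pi}$ — rather than zeros sitting at removable atoms as you state. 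Second, your proposed identification of $\mathcal{Z}^{\D6}_i[\Lambda^{(i)}]$ with an equivariant DT vertex of $X$ is plausible but is an additional claim beyond the conjecture as stated (which only asserts that the coefficients are \emph{determined} by commutativity), and would itself need proof. As a research plan your proposal is sound and matches the authors' declared intentions; as a proof it leaves the same hole the paper does.
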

Similar to the $\mathbb{C}^{4}$-case, we can introduce pit reductions by adding extra $\mathsf{W}_{i}(x)$ at the denominators or tuning the parameter $K$. The monomial terms will then be classified by the subcrystals of the 3d BPS crystals. Such types of subcrystal were studied in \cite{Galakhov:2021xum,Noshita:2021dgj}. Since the $qq$-characters obtained in this way have a strong relationship with BPS crystals, we call them \textbf{BPS $qq$-characters}. Later in section~\ref{sec:BPSqqQTA}, we will see that such BPS $qq$-characters have a relation with the quiver quantum toroidal algebras/toroidal quiver BPS algebras \cite{Galakhov:2021vbo,Noshita:2021ldl}.


\section{Towards D8-brane \texorpdfstring{$qq$}{qq}-characters}\label{sec:D8qq}
We use the fusion process of the D6 $qq$-characters to define the D8 $qq$-characters in section~\ref{sec:fusionD6toD8}. We then study the contractions of the D8 $qq$-characters and show that they reproduce the instanton partition function of the magnificent four system up to sign factors in section~\ref{sec:M4rankN}.

\subsection{Fusion of D6 \texorpdfstring{$qq$}{qq}-characters to D8 \texorpdfstring{$qq$}{qq}-characters}\label{sec:fusionD6toD8}
After constructing $\D2,\D4,\D6$ $qq$-characters, we would like to obtain $\D8$ $qq$-characters that reproduce the magnificent four partition function in \eqref{eq:mag4Nekrasovfact}. In the lower-dimensional cases, thanks to Thm.~\ref{thm:tetrascreening}, we can define the $qq$-characters by choosing the highest weight and imposing the commutativity with the screening charges. The highest weight and the corresponding screening charges were chosen so that the associated subspaces do not intersect in the $\mathbb{C}^{4}$ geometry. However, for the $\D8$ operators $\mathsf{Z}(x),\widetilde{\mathsf{Z}}^{K}(x)$, the only operator that makes the operator products become rational functions is the root current $\mathsf{A}(x)$. In this sense, it is natural to construct a screening charge related to the D0 operator. For the moment, we do not know how to construct such kind of screening currents. Instead, we will use the fusion process discussed in Thm.~\ref{thm:D2toD4fusion}, \ref{thm:D4toD6fusion} to define the D8 $qq$-characters. Since we are interested in studying the relation between the magnificent four system, where D8 and anti D8-branes appear, we use $\widetilde{\mathsf{Z}}^{K}(x)$ as the highest weight.
\begin{definition}
We define the $\D8$ $qq$-characters as
\begin{equation}
    \mathsf{T}_{\four;a}^{K}(x)=\sum_{\rho\in\mathcal{SP}}\mathcal{Z}_{\four;a}^{\D8}[\rho,K]\Lambda^{K}_{\four,\rho}(x),\quad a\in\four,\label{eq:D8qqdef1}
\end{equation}
where 
\begin{equation}
    \Lambda^{K}_{\four,\rho}(x)={:\Tilde{\mathsf{Z}}^{K}(x)\prod_{\shcube\in\rho}\mathsf{A}^{-1}(\chi_{\four,x}(\hcube)):}
\end{equation}
Rescaling the root current as $\mathsf{A}(x)\rightarrow \mathfrak{q}^{-1}\mathsf{A}(x)$, we can modify the topological terms as
\begin{equation}
    \mathsf{T}_{\four;a}^{K}(x)=\sum_{\rho\in\mathcal{SP}}\mathfrak{q}^{|\rho|}\mathcal{Z}_{\four;a}^{\D8}[\rho,K]:\Tilde{\mathsf{Z}}^{K}(x)\prod_{\shcube\in\rho}\mathsf{A}^{-1}(\chi_{\four,x}(\hcube)):,\quad a\in\four.\label{eq:D8qqdef2}
\end{equation}
\end{definition}
Let us show that this is obtained by taking the infinite products of the $\D6$ $qq$-characters. Using Lem.~\ref{lem:D6operatorproduct}, one can obtain the following property which is a $\D8$ analog of Lem.~\ref{lem:D6planecond}.
\begin{lemma}
    For $a\in\four$, we have 
    \begin{equation}
        \Lambda_{\bar{a},\pi^{(2)}}(x_{2})\Lambda_{\bar{a},\pi^{(1)}}(x_{1})=0
    \end{equation}
    when $x_{2}=q_{a}x_{1}$ and $\pi^{(2)}\succ \pi^{(1)}$.
\end{lemma}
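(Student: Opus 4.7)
The proof plan parallels the D4 analogue (Lemma \ref{lem:D6planecond}): use Lemma \ref{lem:D6operatorproduct} with $a=b$ to reduce the claim to the vanishing of a scalar coefficient. Explicitly,
\begin{equation*}
\Lambda_{\bar{a},\pi^{(2)}}(x_2)\Lambda_{\bar{a},\pi^{(1)}}(x_1) = \mathcal{Z}^{\D6\tbar\D6}_{\text{1-loop}}(x_1,\bar{a}\,|\,x_2,\bar{a})\,\mathcal{Z}^{\D6\tbar\D6}_{\bar{a}|\bar{a}}(x_1,\pi^{(1)}\,|\,x_2,\pi^{(2)})\,{:}\Lambda_{\bar{a},\pi^{(1)}}(x_1)\Lambda_{\bar{a},\pi^{(2)}}(x_2){:},
\end{equation*}
so the task is to show that $\mathcal{Z}^{\D6\tbar\D6}_{\bar{a}|\bar{a}}(x_1,\pi^{(1)}\,|\,q_a x_1,\pi^{(2)})=0$ whenever $\pi^{(2)} \succ \pi^{(1)}$. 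The one-loop factor is just a scalar independent of the partitions, so it cannot compensate.

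To locate the zero, I would specialise the product formula from section~\ref{sec:Tetrahedron_inst}. Since the three directions of $\pi$ are orthogonal to the $a$-axis, $\chi_{\bar{a},q_a x_1}(\cubeF) = q_a\,\chi_{\bar{a},x_1}(\cubeF)$, and the second product becomes $\prod_{\cubeF\in\pi^{(2)}}\frac{1-x_1/\chi_{\bar{a},x_1}(\cubeF)}{1-q_a^{-1}x_1/\chi_{\bar{a},x_1}(\cubeF)}$. The corner box $\cubeF=(1,1,1)$ of $\pi^{(2)}$ (which must be present since $\pi^{(2)}\succ\pi^{(1)}$ forces $\pi^{(2)}$ to be non-empty, and a plane partition is rooted at the origin) satisfies $\chi_{\bar{a},x_1}((1,1,1))=x_1$, so that this factor contributes an explicit simple zero $(1-1)/(1-q_a^{-1})=0$.

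The main obstacle is verifying that no pole cancels this zero. Potential poles come from: (i) the denominators of the first product at boxes $\cube\in\pi^{(1)}$ with $\chi_{\bar{a},x_1}(\cube)=q_a x_1$; (ii) the cross factors $\mathcal{A}_{\mathbb{C}^4}(\chi_{\bar{a},x_1}(\cube)/(q_a\chi_{\bar{a},x_1}(\cubeF)))^{-1}$ at arguments equal to $q_i^{\pm 1}$ for $i\in\bar{a}$. For (i), writing $a=4$ for concreteness, the condition reduces to $q_1^{i-1}q_2^{j-1}q_3^{k-1}=q_4$, i.e.\ $q_1^{i}q_2^{j}q_3^{k}=1$ after using the Calabi--Yau constraint, which has no solution with $i,j,k\geq 1$ at generic $\Omega$-background. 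To handle (ii), I would rewrite $\mathcal{Z}^{\D6\tbar\D6}_{\bar{a}|\bar{a}}$ in terms of the Nekrasov factor $\mathsf{N}_{\bar{a}}$ and its shell reduction \eqref{eq:D6Nekrasov-shell}, then use the recursion \eqref{eq:D6Nekrasovrecursion} to add the boxes of $\pi^{(2)}\setminus\pi^{(1)}$ one at a time starting from $(1,1,1)$. At each step the arm/leg bookkeeping is the D6 analogue of the D4 arm/leg identity used in Lemma~\ref{lem:D6planecond}, and it shows that the only net simple zero produced at $x_2=q_a x_1$ is the one already identified, with no cancelling pole. The hardest piece of bookkeeping will be this inductive pole/zero count, and the cleanest presentation is probably to package it into the shell formula as in \cite{Feigin2011plane} rather than chasing individual factors.
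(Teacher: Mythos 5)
Your reduction to the vanishing of the scalar $\mathcal{Z}^{\D6\tbar\D6}_{\bar{a}|\bar{a}}(x_1,\pi^{(1)}\,|\,q_a x_1,\pi^{(2)})$ matches the paper, but the zero you exhibit is not the one that does the work, and your pole inventory omits exactly the pole that cancels it. The function $\mathcal{A}_{\mathbb{C}^{4}}(y)^{-1}$ has simple poles at $y=q_i^{\pm1}$ for \emph{all} $i\in\four$, not only for $i\in\bar{a}$. At $x_2=q_ax_1$ the cross factor with $\cube=\cubeF$ has argument $\chi_{\bar{a},x_1}(\cube)/\chi_{\bar{a},x_2}(\cube)=x_1/x_2=q_a^{-1}$, so every cube of $\pi^{(1)}$ (all of which lie in $\pi^{(2)}$, since $\pi^{(2)}\succ\pi^{(1)}$ means $\pi^{(1)}\subsetneq\pi^{(2)}$) contributes a simple pole $\mathcal{A}_{\mathbb{C}^{4}}(q_a^{-1})^{-1}$. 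In particular, whenever $\pi^{(1)}\neq\emptyset$ the diagonal pole at the origin cancels precisely the simple zero you extracted from the $\cubeF=(1,1,1)$ factor of the second product. A minimal check: for $a=4$, $\pi^{(1)}=\{(1,1,1)\}$, $\pi^{(2)}=\{(1,1,1),(2,1,1)\}$, the origin zero and the diagonal pole cancel, and the surviving zero instead comes from $\mathcal{A}_{\mathbb{C}^{4}}(q_2q_3)^{-1}$ (argument $x_1/(q_1x_2)=(q_1q_4)^{-1}=q_2q_3$), whose numerator contains $1-q_1q_4\cdot q_2q_3=0$ by the Calabi--Yau condition.

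The vanishing mechanism the paper uses is therefore essential, not optional bookkeeping: combining the $\pi^{(1)}$-indexed cross factors with the $\pi^{(2)}$-product via the shell formula \eqref{eq:D6Nekrasov-shell} rewrites the product over all cubes of $\pi^{(1)}$ as a product over $A(\pi^{(1)})$ and $R(\pi^{(1)})$, and the zero then sits at the factor $1-q_a\chi_{\bar a,x_1}(\cube')/\chi_{\bar a,x_2}(\cube')=1-q_ax_1/x_2$ attached to any $\cube'\in A(\pi^{(1)})\cap\pi^{(2)}$; such a cube exists because a minimal element of $\pi^{(2)}\setminus\pi^{(1)}$ is addable to $\pi^{(1)}$. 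Your chosen cube coincides with this one only when $\pi^{(1)}=\emptyset$, and your proposed induction ``starting from $(1,1,1)$'' is likewise only set up for that case ($(1,1,1)$ generally does not lie in $\pi^{(2)}\setminus\pi^{(1)}$). Your closing claim that ``the only net simple zero produced at $x_2=q_ax_1$ is the one already identified, with no cancelling pole'' is false for every nonempty $\pi^{(1)}$, so the argument as written does not close.
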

\begin{proof}
    Let us focus on the case $a=4$. It is enough to study the zeros of $\mathcal{Z}_{\bar{4}|\bar{4}}^{\D6\tbar\D6}(x_{1},\pi^{(1)}\,|\,x_{2},\pi^{(2)})$.
Using the formulas in \eqref{eq:D6tetinst_partfunct}, \eqref{eq:D6Nekrasov-shell}, we have\footnote{Strictly speaking, we need the information of the denominators of $\mathscr{W}^{\bar{{a}}}_{\pi,v}(x)$ in \eqref{eq:D6Nekrasov-shell}, which is omitted here.}
\bea
&\mathcal{Z}_{\bar{4}|\bar{4}}^{\D6\tbar\D6}(x_{1},\pi^{(1)}\,|\,x_{2},\pi^{(2)})\\
\propto&\prod_{\scube\in\pi^{(1)}}\frac{\left(1-q_{4}^{-1}\chi_{\bar{4},x_{1}}(\cube)/x_{2}\right)}{\left(1-\chi_{\bar{4},x_{1}}(\cube)/x_{2}\right)}\prod\limits_{\scubeF\in\pi^{(2)}}\frac{\prod\limits_{\scube\in A(\pi^{(1)})}(1-q_{4}\chi_{\overbar{4},x_{1}}(\cube)/\chi_{\overbar{4},x_{2}}(\cubeF))\prod\limits_{\scube\in R(\pi^{(1)})}(1-\chi_{\overbar{4},x_{1}}(\cube)/\chi_{\overbar{4},x_{2}}(\cubeF))}{\prod\limits_{\scube\in A(\pi^{(1)})}(1-\chi_{\overbar{4},x_{1}}(\cube)/\chi_{\overbar{4},x_{2}}(\cubeF))\prod\limits_{\scube\in R(\pi^{(1)})}(1-q_{4}^{-1}\chi_{\overbar{4},x_{1}}(\cube)/\chi_{\overbar{4},x_{2}}(\cubeF))}.
\eea
When $\pi^{(2)}\succ\pi^{(1)}$, there is a cube $\cube\,'=(i,j,k)$ that is addable to $\pi^{(1)}$ that is included in $\pi^{(2)}$:
\bea
    \exists\,\cube\,'=(i,j,k)\in A(\pi^{(1)}),\quad \cube\,'\in\pi^{(2)}.
\eea
When $x_{2}=q_{4}x_{1}$, the term $(1-q_{4}\chi_{\overbar{4},x_{1}}(\cube\,')/\chi_{\overbar{4},x_{2}}(\cube\,'))=(1-q_{4}x_{1}/x_{2})$ then gives the zero. Thus, we obtain the claim.
\end{proof}
The above discussion is also true for $\Lambda^{K}_{\bar{a},\pi}(x)$, where $K$ is generic.
\begin{lemma}
    For $a\in\four$, after using \eqref{eq:D6supergroupcontraction1} and \eqref{eq:D6supergroupcontraction2}, we have 
    \begin{equation}
        \Lambda_{\bar{a},\pi^{(2)}}^{K_{2}}(x_{2})\Lambda^{K_{1}}_{\bar{a},\pi^{(1)}}(x_{1})=0,
    \end{equation}
    when $x_{2}=q_{a}x_{1}$ and $\pi^{(2)}\succ \pi^{(a)}$ and $K_{1,2}$ are generic.
\end{lemma}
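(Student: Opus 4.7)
The plan is to reduce to the previous lemma (the $K$-free case) by tracking how the extra $K$-dependent factors in \eqref{eq:D6supergroupcontraction1} and \eqref{eq:D6supergroupcontraction2} modify the contraction, and checking that they do not kill the zero produced at $x_2=q_a x_1$. By the quadrality symmetry, it suffices to treat $a=4$. Writing out the contraction, we obtain
\begin{equation}
\Lambda_{\bar{4},\pi^{(2)}}^{K_2}(x_2)\,\Lambda_{\bar{4},\pi^{(1)}}^{K_1}(x_1) = \mathcal{Z}^{\D6\tbar\D6}_{\text{1-loop}}(x_1,\bar{4},K_1\,|\,x_2,\bar{4},K_2)\,\mathcal{Z}^{\D6\tbar\D6}_{\bar{4};K_1|\bar{4};K_2}(x_1,\pi^{(1)}\,|\,x_2,\pi^{(2)})\,:\!\Lambda_{\bar{4},\pi^{(2)}}^{K_2}(x_2)\Lambda_{\bar{4},\pi^{(1)}}^{K_1}(x_1)\!:,
\end{equation}
and the strategy is to isolate the same vanishing factor $(1-q_4 x_1/x_2)$ that appeared in the previous lemma and to argue that none of the $K$-deformations introduce cancelling poles at that locus.

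First I would recall, from the proof just given for the $K$-free case, that when $\pi^{(2)}\succ\pi^{(1)}$ there exists $\cube\,'\in A(\pi^{(1)})\cap\pi^{(2)}$, and the product $\prod_{\scubeF\in\pi^{(2)}}\prod_{\scube\in A(\pi^{(1)})}(1-q_4\chi_{\bar{4},x_1}(\cube)/\chi_{\bar{4},x_2}(\cubeF))$ appearing inside $\mathcal{Z}^{\D6\tbar\D6}_{\bar{4}|\bar{4}}(x_1,\pi^{(1)}\,|\,x_2,\pi^{(2)})$ produces the factor $(1-q_4x_1/x_2)$, which vanishes at $x_2=q_4 x_1$. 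Since $\mathcal{Z}^{\D6\tbar\D6}_{\bar{4};K_1|\bar{4};K_2}$ is defined in \eqref{eq:D6supergroupcontraction2} as $\mathcal{Z}^{\D6\tbar\D6}_{\bar{4}|\bar{4}}$ multiplied by the $K$-dependent prefactors
\begin{equation}
\prod_{\scube\in\pi^{(1)}}\!\left(q_4^{-1}\mathscr{V}_4\!\left(\tfrac{\chi_{\bar{4},x_1}(\cube)}{q_4 K_2 x_2}\right)\right)\prod_{\scubeF\in\pi^{(2)}}\!\mathscr{V}_4\!\left(\tfrac{K_1 x_1}{\chi_{\bar{4},x_2}(\cubeF)}\right)^{-1},
\end{equation}
the zeros and poles of these prefactors are located at points of the form $x_2 = (\text{rational monomial in } q_i)\cdot K_j^{\pm 1}\cdot x_1$. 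For generic $K_1,K_2$ none of these loci coincide with $x_2=q_4 x_1$, so the zero from the $K$-free part survives.

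Next I would perform the same check for the one-loop factor \eqref{eq:D6supergroupcontraction2}: it is the ratio $\mathcal{Z}^{\D6\tbar\D6}_{\text{1-loop}}(x_1,\bar{4}\,|\,x_2,\bar{4})\mathcal{Z}^{\D6\tbar\D6}_{\text{1-loop}}(K_1 x_1,\bar{4}\,|\,K_2 x_2,\bar{4})$ over $\mathcal{Z}^{\D6\tbar\D6}_{\text{1-loop}}(K_1 x_1,\bar{4}\,|\,x_2,\bar{4})\mathcal{Z}^{\D6\tbar\D6}_{\text{1-loop}}(x_1,\bar{4}\,|\,K_2 x_2,\bar{4})$. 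After analytic continuation, these one-loop factors are non-vanishing and non-singular at $x_2=q_4 x_1$ for generic $K_1,K_2$, because any potential poles in the denominators occur at shifts involving $K_j^{\pm 1}$, which do not hit $x_2/x_1=q_4$ generically. Hence, combining everything, the overall scalar prefactor of the normal-ordered product contains the factor $(1-q_4 x_1/x_2)$ with no compensating singularity, yielding $\Lambda_{\bar{4},\pi^{(2)}}^{K_2}(x_2)\Lambda_{\bar{4},\pi^{(1)}}^{K_1}(x_1)=0$ on the locus $x_2=q_4 x_1$ whenever $\pi^{(2)}\succ\pi^{(1)}$.

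The main obstacle I anticipate is bookkeeping rather than conceptual: one has to verify carefully that the genericity of $K_{1,2}$ really prevents the $K$-dependent prefactors in both \eqref{eq:D6supergroupcontraction1} and \eqref{eq:D6supergroupcontraction2} from producing a pole that cancels the distinguished zero at $x_2=q_4x_1$. A clean way to organize this is to write $K_j=e^{\kappa_j}$ and treat $\kappa_{1,2}$ as formal parameters; then the vanishing factor $(1-q_4 x_1/x_2)$ is independent of $\kappa_j$, while every other zero/pole depends nontrivially on $\kappa_1$ or $\kappa_2$, making cancellation impossible for generic values.
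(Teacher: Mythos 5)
Your proposal is correct and follows exactly the reasoning the paper intends (the paper states this lemma without a separate proof, as an immediate extension of the $K$-free case): the distinguished vanishing factor $(1-q_4x_1/x_2)$ coming from $\cube'\in A(\pi^{(1)})\cap\pi^{(2)}$ is untouched by the $K$-dependent corrections, whose zeros and poles all sit at loci proportional to $K_1$ or $K_2^{-1}$ and hence miss $x_2=q_4x_1$ for generic $K_{1,2}$.
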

The finite products of the $\D6$ $qq$-characters $\mathsf{T}^{K}_{\bar{a}}(x)\coloneqq\mathsf{T}^{(1|1)}_{\bar{a}}(x|Kx)$ are then expanded as 
\begin{equation}
\mathsf{T}_{\bar{a}}^{K}(x_{N})\dots \mathsf{T}_{\bar{a}}^{K}(x_{1})=\sum_{\pi^{(N)}\preceq\cdots\preceq\pi^{(1)}}\mathfrak{q}^{\sum_{i=1}^{N}|\pi^{(i)}|}\prod_{i=1}^{N}\widetilde{\mathcal{Z}}_{\bar{a}}^{\D6}[K,\pi^{(i)}]\Lambda^{K}_{\bar{a},\pi^{(N)}}(x_{N})\cdots \Lambda^{K}_{\bar{a},\pi^{(1)}}(x_{1}),
\end{equation}
where $x_{i}=q_{a}^{i-1}x\,(i=1,\ldots,N)$. Using the operator product and extracting the one-loop perturbative part, we define the normalized $N$-fusion $\D6$ $qq$-character as 
\begin{equation}\label{eq:D6Nfusion}
    \overline{\mathsf{T}}^{K\,(N)}_{\bar{a}}(x)\coloneqq \sum_{\pi^{(N)}\preceq\cdots\preceq\pi^{(1)}}\mathfrak{q}^{\sum_{i=1}^{N}|\pi^{(i)}|}\prod_{i=1}^{N}\widetilde{\mathcal{Z}}_{\bar{a}}^{\D6}[K,\pi^{(i)}]\prod_{i<j}\mathcal{Z}^{\D6\tbar\D6}_{\bar{a};K\,|\,\bar{a};K}(x_{i},\pi^{(i)}\,|\,x_{j},\pi^{(j)}):\prod_{i=1}^{N}\Lambda_{\bar{a},\pi^{(i)}}^{K}(x_{i}):.
\end{equation}
Taking the limit $N\rightarrow \infty$ and considering the infinite products, we can obtain the D8 $qq$-character defined in \eqref{eq:D8qqdef1}, \eqref{eq:D8qqdef2}.
\begin{theorem}
    The normalized $N$-fusion $\D6$ $qq$-characters $\overline{\mathsf{T}}^{K(N)}_{\bar{a}}(x)\,(a\in\four)$ give the $qq$-characters $\mathsf{T}^{K}_{\four;a}(x)$:
    \begin{equation}
        \overline{\mathsf{T}}^{K(N)}_{\bar{a}}(x)\xrightarrow{N\rightarrow \infty}\mathsf{T}^{K}_{\four;a}(x).
    \end{equation}
    Equivalently, we have 
    \begin{equation}
    \overleftarrow{\prod_{i=1}^{\infty}}\mathsf{T}_{\bar{a}}^{K}(q_{a}^{i-1}x)\simeq \mathsf{T}^{K}_{\four;a}(x),
    \end{equation}
    where the symbol $\simeq $ means the equality is up to one-loop perturbative factors.
\end{theorem}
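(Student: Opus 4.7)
My plan is to mirror the proof of Theorem~\ref{thm:D4toD6fusion}, just one dimension higher: I will take the $N\to\infty$ (inductive) limit of \eqref{eq:D6Nfusion} with $x_i=q_a^{i-1}x$, and show that the resulting expansion is naturally reorganized by solid partitions via the $(1,3)$-type decomposition $\rho=(\pi^{(1)},\pi^{(2)},\dots)$ in the $a$-direction from section~\ref{sec:decomp9d}. Under this decomposition, the coordinate of a 4-cube $\hcube\in\rho$ with $\hcube\in\pi^{(i)}$ is $\chi_{\four,x}(\hcube)=\chi_{\bar a,q_a^{i-1}x}(\cube)$, so the tuple of nested plane partitions is in bijection with a single solid partition. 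As in the proof of Thm.~\ref{thm:D4toD6fusion}, one must regularize the infinite product and take the inductive limit so that almost all $\pi^{(i)}$ are empty; this is legitimate because $\pi^{(N)}\succeq\pi^{(N+1)}$ forces stabilization to $\emptyset$ at large $N$.

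For the operator part, I use the telescoping relations \eqref{eq:oprelation1}. A direct computation gives
\begin{equation}
:\!\prod_{i=1}^{\infty}\frac{\mathsf{W}_{\bar a}(q_a^{i-1}x)}{\mathsf{W}_{\bar a}(Kq_a^{i-1}x)}\!:\; =\; :\!\prod_{i=1}^{\infty}\frac{\mathsf{Z}(q_a^{i-1}x)\,\mathsf{Z}(Kq_a^{i}x)}{\mathsf{Z}(q_a^{i}x)\,\mathsf{Z}(Kq_a^{i-1}x)}\!:\;=\;:\!\frac{\mathsf{Z}(x)}{\mathsf{Z}(Kx)}\!:\;=\;\widetilde{\mathsf{Z}}^{K}(x),
\end{equation}
and combining with the $\mathsf{A}^{-1}$ factors associated to each $\cube\in\pi^{(i)}$ reassembled according to the solid partition $\rho$ gives exactly $\Lambda^{K}_{\four,\rho}(x)$. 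Hence at the level of vertex operators the fusion produces the highest weight of \eqref{eq:D8qqdef1}.

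The bulk of the work is the coefficient identity: I must show that
\begin{equation}
\prod_{i=1}^{\infty}\widetilde{\mathcal{Z}}_{\bar a}^{\D6}[K,\pi^{(i)}]\prod_{i<j}\mathcal{Z}^{\D6\tbar\D6}_{\bar a;K\,|\,\bar a;K}(x_i,\pi^{(i)}\,|\,x_j,\pi^{(j)})\;=\;\mathcal{Z}^{\D8}_{\four;a}[\rho,K].
\end{equation}
To establish this, I will split each side into three pieces: (i) the $K$-independent self-interaction $g_{\bar a}^{-1}$ structure functions, (ii) the bifundamental $\mathcal{A}_{\mathbb{C}^4}^{-1}$ factors between distinct layers, and (iii) the flavor factors involving $(1-Kv/\chi)/(1-v/\chi)$. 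Pieces (i)+(ii) should combine, via the same telescoping used already in Thm.~\ref{thm:D4toD6fusion} but with $g_{\bar a}$ playing the role that $\mathscr{S}_{34}$ played there, into $\prod_{\shcube,\shcube'\in\rho}g_{\bar a}(\chi_{\four,x}(\hcube)/\chi_{\four,x}(\hcube'))^{-1}$, using the key identity $\mathcal{A}_{\mathbb C^4}(x)=g_{\bar a}(x)/g_{\bar a}(q_a x)$ together with the summation $\sum_{i<j}(\cdot)_{ij}+\sum_i(\cdot)_{ii}\to\sum_{i\le j}(\cdot)_{ij}$ induced by the $(1,3)$-decomposition. Piece (iii) is simpler: the extra factors introduced by the parameter $K$ in $\widetilde{\mathcal{Z}}_{\bar a}^{\D6}[K,\pi]$ and in $\mathcal{Z}^{\D6\tbar\D6}_{\bar a;K|\bar a;K}$ telescope layer by layer into $\prod_{\shcube\in\rho}(1-Kv/\chi_{\four,v}(\hcube))/(1-v/\chi_{\four,v}(\hcube))$.

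The main obstacle will be the bookkeeping in step (i)+(ii): the $\D6$ bifundamentals carry boundary contributions (the $\mathscr{V}_a$ factors in $\mathcal{Z}^{\D6\tbar\D6}_{\bar a|\bar a}$) that do not appear in $\mathcal{Z}^{\D8}_{\four;a}$, and I will have to check that these boundary pieces exactly cancel between adjacent layers under the telescoping, analogously to the way the ``extra'' $\mathscr{S}_{34}$ factors cancel in the proof of Thm.~\ref{thm:D4toD6fusion}. Once this cancellation is verified, the matching against \eqref{eq:mag4Nekrasovfact2} is immediate and the theorem follows; the equivalent statement $\overleftarrow{\prod}_{i\ge 1}\mathsf{T}^{K}_{\bar a}(q_a^{i-1}x)\simeq \mathsf{T}^{K}_{\four;a}(x)$ then records the same identity after reinstating the one-loop factors absorbed into the normalization of $\overline{\mathsf{T}}^{K(N)}_{\bar a}$.
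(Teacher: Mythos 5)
Your proposal is correct and follows essentially the same route as the paper's proof: the $(1,3)$-type decomposition of the solid partition into nested plane partitions, the telescoping $:\prod_i \mathsf{W}_{\bar a}(q_a^{i-1}x)/\mathsf{W}_{\bar a}(Kq_a^{i-1}x):\,=\widetilde{\mathsf{Z}}^K(x)$ for the operator part, and the reassembly of the coefficients via $\prod_{i<j}\mathcal{A}_{\mathbb{C}^4}^{-1}=\prod_{i\neq j}g_{\bar a}^{-1}$ together with the layer-by-layer telescoping of the $\mathscr{V}_a$ and $K$-dependent factors into $\prod_{\shcube\in\rho}(1-Kx/\chi)/(1-x/\chi)$. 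The only minor bookkeeping nuance is that the $\mathscr{V}_a$ "boundary" factors do not cancel among themselves but combine with your piece (iii) to produce the flavor factor of $\mathcal{Z}^{\D8}_{\four;a}[\rho,K]$, exactly as in the paper.
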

\begin{proof}
    Let us focus on the case $a=4$. Similar to what we have done in the fusion of $\D4$ $qq$-characters, we need to regularize properly the infinite products. We only give a sketch of the proof of how to take the infinite products:
    \begin{equation}
         \overline{\mathsf{T}}^{K\,(\infty)}_{\bar{4}}(x)\coloneqq \sum_{\emptyset\preceq\cdots\preceq\pi^{(2)}\preceq\pi^{(1)}}\mathfrak{q}^{\sum_{i=1}^{\infty}|\pi^{(i)}|}\prod_{i=1}^{\infty}\widetilde{\mathcal{Z}}_{\bar{4}}^{\D6}[K,\pi^{(i)}]\prod_{i<j}\mathcal{Z}^{\D6\tbar\D6}_{\bar{4};K\,|\,\bar{4};K}(x_{i},\pi^{(i)}\,|\,x_{j},\pi^{(j)}):\prod_{i=1}^{\infty}\Lambda_{\bar{4},\pi^{(i)}}^{K}(x_{i}):.
    \end{equation}
    Since the right-hand side is expanded in arbitrary plane partitions where the $q$-coordinates are tuned properly, we may understand the right-hand side as a sum of arbitrary solid partitions. This comes from the $(1,3)$-type description of the solid partition: $\rho=(\pi^{(1)},\pi^{(2)},\ldots,\emptyset,\ldots)$. The topological term is then rewritten as $\mathfrak{q}^{\sum_{i=1}^{\infty}\pi^{(i)}}=\mathfrak{q}^{|\rho|}$. Let us consider the coefficient parts. The infinite product of the $\U(1|1)$ part is given
    \begin{equation}
        \prod_{i=1}^{\infty}\widetilde{Z}_{\bar{4}}^{\D6}[K,\pi^{(i)}]=\prod_{i=1}^{\infty}\prod_{\scube\in\pi^{(i)}}\frac{(1-Kx_{i}/\chi_{\bar{4},x_{i}}(\cube))(1-x_{i+1}/\chi_{\bar{4},x_{i}}(\cube))}{(1-Kx_{i+1}/\chi_{\bar{4},x_{i}}(\cube))(1-x_{i}/\chi_{\bar{4},x_{i}}(\cube))}\prod_{\substack{\scube\in\pi^{(i)}\\\scubeF\in\pi^{(i)}}}g_{\bar{4}}\left(\frac{\chi_{\bar{4},x_{i}}(\cube)}{\chi_{\bar{4},x_{i}}(\cubeF)}\right)^{-1}.
    \end{equation}
    The other part is given by 
    \bea
        \prod_{i<j}\mathcal{Z}^{\D6\tbar\D6}_{\bar{4};K\,|\,\bar{4};K}(x_{i},\pi^{(i)}\,|\,x_{j},\pi^{(j)})&=\prod_{j>i}\prod_{\scube\in\pi^{(i)}}\frac{\mathscr{V}_{4}\left(x_{j}/\chi_{\bar{4},x_{i}}(\cube)\right)}{\mathscr{V}_{4}\left(Kx_{j}/\chi_{\bar{4},x_{i}}(\cube)\right)}\prod_{\scubeF\in\pi^{(j)}}\frac{\mathscr{V}_{4}\left(x_{i}/\chi_{\bar{4},x_{j}}(\cubeF)\right)}{\mathscr{V}_{4}\left(Kx_{i}/\chi_{\bar{4},x_{j}}(\cubeF)\right)}\\
        &\times \prod_{j>i}\prod_{\substack{\scube\in\pi^{(i)}\\\scubeF\in\pi^{(j)}}}\mathcal{A}_{\mathbb{C}^{4}}\left(\frac{\chi_{\bar{4},x_{i}}(\cube)}{\chi_{\bar{4},x_{j}}(\cubeF)}\right)^{-1}.
    \eea
    Using 
\bea\prod_{j>i}\prod_{\substack{\scube\in\pi^{(i)}\\\scubeF\in\pi^{(j)}}}\mathcal{A}_{\mathbb{C}^{4}}\left(\frac{\chi_{\bar{4},x_{i}}(\cube)}{\chi_{\bar{4},x_{j}}(\cubeF)}\right)^{-1}&=\prod_{i\neq j}\prod_{\substack{\scube\in\pi^{(i)}\\\scubeF\in\pi^{(j)}}}g_{\bar{4}}\left(\frac{\chi_{\bar{4},x_{i}}(\cube)}{\chi_{\bar{4},x_{j}}(\cubeF)}\right)^{-1},\\
     \prod_{i=1}^{\infty}\prod_{\scube\in\pi^{(i)}}\prod_{j=i+1}^{\infty}\frac{\mathscr{V}_{4}\left(x_{j}/\chi_{\bar{4},x_{i}}(\cube)\right)}{\mathscr{V}_{4}\left(Kx_{j}/\chi_{\bar{4},x_{i}}(\cube)\right)}&=\prod_{i=1}^{\infty}\prod_{\scube\in\pi^{(i)}}\frac{\left(1-Kx_{i+1}/\chi_{\bar{4},x_{i}}(\cube)\right)}{\left(1-x_{i+1}/\chi_{\bar{4},x_{i}}(\cube)\right)},\\
     \prod_{j=1}^{\infty}\prod_{\scubeF\in\pi^{(j)}}\prod_{i=1}^{j-1}\frac{\mathscr{V}_{4}\left(x_{i}/\chi_{\bar{4},x_{j}}(\cubeF)\right)}{\mathscr{V}_{4}\left(Kx_{i}/\chi_{\bar{4},x_{j}}(\cubeF)\right)}&=\prod_{j=1}^{\infty}\prod_{\scubeF\in\pi^{(j)}}\frac{\left(1-x_{j}/\chi_{\bar{4},x_{j}}(\cubeF)\right)\left(1-Kx/\chi_{\bar{4},x_{j}}(\cubeF)\right)}{(1-x/\chi_{\bar{4},x_{j}}(\cubeF))(1-Kx_{j}/\chi_{\bar{4},x_{j}}(\cubeF))},
    \eea
    we obtain
    \begin{align}
    \begin{split}
        \prod_{i=1}^{\infty}\widetilde{Z}_{\bar{4}}^{\D6}[K,\pi^{(i)}]\prod_{i<j}\mathcal{Z}^{\D6\tbar\D6}_{\bar{4};K\,|\,\bar{4};K}(x_{i},\pi^{(i)}\,|\,x_{j},\pi^{(j)})&=\prod_{i=1}^{\infty}\prod_{\scube\in\pi^{(i)}}\frac{\left(1-Kx/\chi_{\bar{4},x_{i}}(\cube)\right)}{\left(1-x/\chi_{\bar{4},x_{i}}(\cube)\right)}\prod_{i,j}\prod_{\substack{\scube\in\pi^{(i)}\\\scubeF\in\pi^{(j)}}}g_{\bar{4}}\left(\frac{\chi_{\bar{4},x_{i}}(\cube)}{\chi_{\bar{4},x_{j}}(\cubeF)}\right)^{-1}\\
        &=\prod_{\shcube\in\rho}\frac{\left(1-Kx/\chi_{\four,x}(\hcube)\right)}{\left(1-x/\chi_{\four,x}(\hcube)\right)}\prod_{\shcube,\shcube'\in\rho}g_{\bar{4}}\left(\frac{\chi_{\four,x}(\shcube)}{\chi_{\four,x}(\hcube')}\right)^{-1},
        \end{split}
    \end{align}
    which is exactly $\mathcal{Z}^{\D8}_{\four;4}[\rho;K]$. For the operator part, we can also show
    \begin{equation}
         {:\prod_{i=1}^{\infty}\Lambda^{K}_{\bar{4},\pi^{(i)}}(x_{i}):}={:\Tilde{\mathsf{Z}}^{K}(x)\prod_{\shcube\in\rho}\mathsf{A}^{-1}(\chi_{\four,x}(\hcube)):},
    \end{equation}
    where we used 
    \begin{equation}
        {:\prod_{i=1}^{\infty}\frac{\mathsf{W}_{\bar{4}}(q_{4}^{i-1}x)}{\mathsf{W}_{\bar{4}}(Kq_{4}^{i-1}x)}:}={:\prod_{i=1}^{\infty}\frac{\widetilde{\mathsf{Z}}^{K}(q_{4}^{i-1}x)}{\widetilde{\mathsf{Z}}^{K}(q_{4}^{i}x)}:}=\widetilde{\mathsf{Z}}^{K}(x).
    \end{equation}
    Therefore, we obtain the claim.
\end{proof}

Depending on which $\D6$ $qq$-character $\mathsf{T}_{\bar{a}}(x)$ we use to take the infinite products, we have four possibilities of the $\D8$ $qq$-character. The operator part is invariant under the permutation of the deformation parameters $q_{1},q_{2},q_{3},q_{4}$ but the coefficient part $\mathcal{Z}_{\four;a}^{\D8}[\rho,K]$ is not invariant under the permutation. For example, for lower levels, we have
\begin{subequations}
\begin{align}
    \rho=\{\{\{1\}\}\},&\quad \frac{\mathcal{Z}_{\four;1}^{\D8}[\rho,K]}{\mathcal{Z}_{\four;2}^{\D8}[\rho,K]}=\frac{\mathcal{Z}_{\four;2}^{\D8}[\rho,K]}{\mathcal{Z}_{\four;3}^{\D8}[\rho,K]}=\frac{\mathcal{Z}_{\four;3}^{\D8}[\rho,K]}{\mathcal{Z}_{\four;4}^{\D8}[\rho,K]}=\frac{\mathcal{Z}_{\four;4}^{\D8}[\rho,K]}{\mathcal{Z}_{\four;1}^{\D8}[\rho,K]}=1,\\
    \rho=\{\{\{2\}\}\},&\quad \frac{\mathcal{Z}_{\four;1}^{\D8}[\rho,K]}{\mathcal{Z}_{\four;2}^{\D8}[\rho,K]}=\frac{\mathcal{Z}_{\four;2}^{\D8}[\rho,K]}{\mathcal{Z}_{\four;3}^{\D8}[\rho,K]}=-\frac{\mathcal{Z}_{\four;3}^{\D8}[\rho,K]}{\mathcal{Z}_{\four;4}^{\D8}[\rho,K]}=-\frac{\mathcal{Z}_{\four;4}^{\D8}[\rho,K]}{\mathcal{Z}_{\four;1}^{\D8}[\rho,K]}=1,\\
    \rho=\{\{\{1,1\}\}\},&\quad \frac{\mathcal{Z}_{\four;1}^{\D8}[\rho,K]}{\mathcal{Z}_{\four;2}^{\D8}[\rho,K]}=-\frac{\mathcal{Z}_{\four;2}^{\D8}[\rho,K]}{\mathcal{Z}_{\four;3}^{\D8}[\rho,K]}=-\frac{\mathcal{Z}_{\four;3}^{\D8}[\rho,K]}{\mathcal{Z}_{\four;4}^{\D8}[\rho,K]}=\frac{\mathcal{Z}_{\four;4}^{\D8}[\rho,K]}{\mathcal{Z}_{\four;1}^{\D8}[\rho,K]}=1,\\
    \rho=\{\{\{1\},\{1\}\}\},&\quad -\frac{\mathcal{Z}_{\four;1}^{\D8}[\rho,K]}{\mathcal{Z}_{\four;2}^{\D8}[\rho,K]}=-\frac{\mathcal{Z}_{\four;2}^{\D8}[\rho,K]}{\mathcal{Z}_{\four;3}^{\D8}[\rho,K]}=\frac{\mathcal{Z}_{\four;3}^{\D8}[\rho,K]}{\mathcal{Z}_{\four;4}^{\D8}[\rho,K]}=\frac{\mathcal{Z}_{\four;4}^{\D8}[\rho,K]}{\mathcal{Z}_{\four;1}^{\D8}[\rho,K]}=1,
\end{align}
\end{subequations}
where we described the solid partition as 
\bea
\rho=\{\{\rho_{11},\rho_{12},\ldots\},\ldots,\{\rho_{i1},\rho_{i2}\ldots,\rho_{im_{i}}\},\ldots,\{\rho_{l1},\ldots,\rho_{lm_{l}}\}\},\quad \rho_{ij}=(\rho_{ij1},\ldots,\rho_{ijn_{ij}})
\eea
and the $q$-coordinates of the hypercubes are
\bea
vq_{1}^{a-1}q_{2}^{b-1}q_{3}^{c-1}q_{4}^{d-1},\quad 1\leq a\leq l,\,\,1\leq b\leq m_{a},\,\,1\leq c\leq n_{ab},\,\,1\leq d\leq \rho_{abc}.
\eea
As mentioned in section~\ref{sec:M4partitionfunction}, the partition function of the magnificent four has a sign factor $(-1)^{\sigma_{a}(\rho)}$ depending on the solid partition. In our construction here, this did not appear and thus the appearing coefficient is equal to the $\U(1)$ magnificent four partition function only up to sign factors. One of the reasons this happened is because we naively took the infinite products. For the lower-dimensional $qq$-characters, the infinite products were controlled so that even after taking the infinite products, it still commutes with at least one of the screening charges. To obtain the D8 $qq$-character with the correct sign factors, a different procedure to derive the D8 $qq$-character is necessary.
\subsection{Higher rank magnificent four and D8 \texorpdfstring{$qq$}{qq}-characters}\label{sec:M4rankN}
We still can use the $\D8$ $qq$-characters we constructed in the previous section to derive the partition functions of the rank $N$ magnificent four\footnote{Note that the higher rank version is called the magnificent four with \emph{color} in the original paper. } \cite{Nekrasov:2018xsb} up to sign factors.  The operator products of $\Lambda_{\four,\rho}^{K}(x)$ are 
\beq
    \Lambda_{\four,\rho^{(2)}}^{K_{2}}(x_{2})\Lambda_{\four,\rho^{(1)}}^{K_{1}}(x_{1})=\mathcal{Z}^{\D8\tbar\D8}_{\text{1-loop}}(x_{1},K_{1}\,|\,x_{2},K_{2})\mathcal{Z}^{\D8\tbar\D8}_{K_{1}|K_{2}}(x_{1},\rho^{(1)}\,|\,x_{2},\rho^{(2)}):\Lambda_{\four,\rho^{(2)}}^{K_{2}}(x_{2})\Lambda_{\four,\rho^{(1)}}^{K_{1}}(x_{1}):.
\eeq
The rank $N$ magnificent four partition function is then written using the vertex operators as the following theorem.
\begin{theorem}\label{thm:D8magnificentBPSCFT}
The composition of the $\D8$ $qq$-characters gives the partition function of higher rank magnificent four system up to sign factors:
\bea
    \bra{0}\mathsf{T}_{\four;a_{N}}(x_{N})\cdots \mathsf{T}_{\four;a_{1}}(x_{1})\ket{0}&=\sum_{\rho^{(1)},\cdots ,\rho^{(N)}}\mathfrak{q}^{|\rho|}\prod_{i=1}^{N}\mathcal{Z}_{\four;a_{i}}^{\D8}[\rho^{(i)},K_{i}]\prod_{j>i}\mathcal{Z}_{\text{1-loop}}^{\D8\tbar\D8}(x_{i},K_{i}\,|\,x_{j},K_{j})\\
    &\qquad \qquad\times\prod_{j>i}\mathcal{Z}^{\D8\tbar\D8}_{K_{i}|K_{j}}(x_{i},\rho^{(i)}\,|\,x_{j},\rho^{(j)}).
\eea
This is the BPS/CFT correspondence of the magnificent four.
\end{theorem}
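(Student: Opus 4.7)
The plan is to prove the statement by direct expansion and computation of the vacuum expectation value, using the definition of the D8 $qq$-character in \eqref{eq:D8qqdef2} together with the operator product formula stated immediately before the theorem. Since each $\mathsf{T}_{\four;a_i}^{K_i}(x_i)$ is an explicit sum over solid partitions, the composition naturally factorizes through the OPE of the vertex operators $\Lambda_{\four,\rho}^{K}(x)$.

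First, I would expand each $qq$-character using the definition,
\begin{equation*}
\mathsf{T}_{\four;a_{i}}^{K_{i}}(x_{i}) = \sum_{\rho^{(i)}\in\mathcal{SP}} \mathfrak{q}^{|\rho^{(i)}|}\,\mathcal{Z}_{\four;a_{i}}^{\D8}[\rho^{(i)},K_{i}]\,\Lambda^{K_{i}}_{\four,\rho^{(i)}}(x_{i}),
\end{equation*}
and then interchange the sums with the product to reduce the computation to the evaluation of $\bra{0}\Lambda_{\four,\rho^{(N)}}^{K_{N}}(x_{N})\cdots\Lambda_{\four,\rho^{(1)}}^{K_{1}}(x_{1})\ket{0}$ for fixed tuples of solid partitions. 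The weight $\mathfrak{q}^{|\rho|}$ with $|\rho|=\sum_{i}|\rho^{(i)}|$ then emerges automatically from the product of individual topological factors.

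Next, I would apply the pairwise OPE formula $\Lambda_{\four,\rho^{(2)}}^{K_{2}}(x_{2})\Lambda_{\four,\rho^{(1)}}^{K_{1}}(x_{1}) = \mathcal{Z}^{\D8\tbar\D8}_{\text{1-loop}}(x_{1},K_{1}\,|\,x_{2},K_{2})\,\mathcal{Z}^{\D8\tbar\D8}_{K_{1}|K_{2}}(x_{1},\rho^{(1)}\,|\,x_{2},\rho^{(2)}):\Lambda_{\four,\rho^{(2)}}^{K_{2}}(x_{2})\Lambda_{\four,\rho^{(1)}}^{K_{1}}(x_{1}):$ iteratively. Since the contraction between any two of these operators produces scalar factors independent of the other operators (a standard consequence of Wick's theorem for free bosonic vertex operators with commuting modes up to the central extension), the scalar prefactors arising from successive contractions simply multiply, yielding
\begin{equation*}
\prod_{N\ge j>i\ge 1}\mathcal{Z}^{\D8\tbar\D8}_{\text{1-loop}}(x_{i},K_{i}\,|\,x_{j},K_{j})\,\mathcal{Z}^{\D8\tbar\D8}_{K_{i}|K_{j}}(x_{i},\rho^{(i)}\,|\,x_{j},\rho^{(j)})
\end{equation*}
as the overall contraction factor. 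Combining with $\bra{0}{:\prod_{i}\Lambda_{\four,\rho^{(i)}}^{K_{i}}(x_{i}):}\ket{0}=1$, which follows from the chosen convention of the zero modes in section~\ref{sec:zeromodes} (the zero-mode factors can be arranged to give trivial vacuum expectation value, see Prop.~\ref{app:prop:contraction_formula}), assembles the right-hand side of the claimed identity.

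The main subtlety is bookkeeping rather than a conceptual obstacle: one must verify that the chosen decomposition of $\bfV_{\text{inst.}}$ into the square root part $\mathbf{v}_{\text{inst.}}$ matches the square root implicit in both the vertex-operator OPE and the definition of $\mathcal{Z}_{\four;a_{i}}^{\D8}[\rho^{(i)},K_{i}]$, which is precisely the origin of the sign factor $(-1)^{\sigma_{a}(\vec{\rho})}$ of~\eqref{eq:mag4inst}. Because the fusion construction of $\mathsf{T}_{\four;a}^{K}(x)$ in section~\ref{sec:fusionD6toD8} takes infinite products naively, the sign factors are reproduced only up to the permutation-dependent signs discussed after \eqref{eq:D6Nfusion}; hence the identification with the higher-rank magnificent four partition function holds \emph{up to sign factors}, as stated. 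A fully sign-correct version would require a refined fusion procedure that tracks the orientation of the square root more carefully, which is left for future work.
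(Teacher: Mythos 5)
Your proposal is correct and follows essentially the same route as the paper: expand each $\mathsf{T}_{\four;a_i}^{K_i}(x_i)$ over solid partitions via \eqref{eq:D8qqdef2}, apply the pairwise OPE of the $\Lambda^{K}_{\four,\rho}(x)$ operators iteratively (the scalar contraction factors multiply by Wick's theorem), and take the vacuum expectation value of the remaining normal-ordered product. Your remark that the "up to sign factors" caveat concerns only the identification with the signed magnificent four partition function \eqref{eq:mag4inst}, and not the displayed identity itself, matches the paper's intent.
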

\begin{remark}
    One can also use the other $(2,2),(3,1)$-type descriptions and use infinite products of screening charges or $\D4$ $qq$-characters to rewrite the partition functions. Note that the fusion process we used was essentially the $(1,3)$-type description.
\end{remark}


\section{Quantum toroidal algebras and BPS \texorpdfstring{$qq$}{qq}-characters}\label{sec:toroidal_alg}
In this section, we review the quantum toroidal $\mathfrak{gl}_{1}$ and point out observations regarding the $qq$-characters we introduced in the previous sections. We will show that the $\D2,\D4,\D6$ $qq$-characters are related with the vertical representations of the quantum toroidal $\mathfrak{gl}_{1}$. At the end, we have the following correspondence:
\begin{equation*}
    \renewcommand\arraystretch{1.2}{
    \begin{tabular}{|c|c|c|}\hline
       section & $qq$-characters &  quantum toroidal $\mathfrak{gl}_{1}$\\
     \hline \ref{sec:vectorrep}  & D2 $qq$-character & vector representation \\
      \ref{sec:verticalFockrep} &D4 $qq$-character & Fock representation \\
      \ref{sec:MacMahonrep} &D6 $qq$-character & MacMahon representation \\\hline
    \end{tabular}}
\end{equation*}
We then move on to the gauge origami system on toric CY$_{4}$ $Z=X\times \mathbb{C}$ where $X$ is a toric CY$_3$ and show that the BPS $qq$-characters introduced in Conj.~\ref{conj:BPSqq} are related to the vertical representations of general quantum toroidal algebras associated with toric CY$_3$, which were recently introduced in \cite{Noshita:2021ldl,Galakhov:2021vbo}.

\subsection{Quantum toroidal \texorpdfstring{$\mathfrak{gl}_{1}$}{gl(1)}}\label{sec:QTgl1}
The quantum toroidal $\mathfrak{gl}_{1}$ is an infinite-dimensional quantum algebra with two independent deformation parameters \cite{ding1997generalization,miki2007q,FFJMM1,Feigin2011plane,Feigin2011}. We follow the notations in \cite{DIMreview} (see also \cite[section 5.2, 5.3]{Noshita:2022otp} for a review).

\begin{definition}
Let $\mathsf{q}_{1},\mathsf{q}_{2},\mathsf{q}_{3}$ be the deformation parameters\footnote{In the literature, the deformation parameters of the algebra are denoted $q_{1},q_{2},q_{3}$. We use a different notation to prevent confusion with the parameters $q_{1},q_{2},q_{3},q_{4}$ introduced in this paper. As mentioned in footnote \ref{footnote:structure-function}, the structure function $\sfg(z)$ is related to the structure function $g_{\bar{4}}(z)$ after taking the limit $q_{4}\rightarrow 1$.}  with the condition $\sfq_{1}\sfq_{2}\sfq_{3}=1$. The quantum toroidal $\mathfrak{gl}_{1}$, which is denoted $\mathcal{E}$, is generated by three Drinfeld currents 
\begin{equation}
    E(z)=\sum_{m\in\mathbb{Z}}E_{m}z^{-m},\quad F(z)=\sum_{m\in\mathbb{Z}}F_{m}z^{-m},\quad K^{\pm}(z)=K^{\pm}\exp\left(\sum_{r>0}\mp\frac{\kappa_{r}}{r}H_{\pm r}z^{\mp r}\right)
\end{equation}
and central elements 
\begin{equation}
    C,\quad K^{-}=(K^{+})^{-1}.
\end{equation}
The defining relations are
\begin{align}
\begin{split}
    E(z)E(w)=\sfg(z/w)E(w)E(z),&\quad F(z)F(w)=\sfg(z/w)^{-1}F(w)F(z),\\
    K^{\pm}(z)K^{\pm}(w)=K^{\pm}(w)K^{\pm}(z),&\quad K^{-}(z)K^{+}(w)=\frac{\sfg(C^{-1}z/w)}{\sfg(Cz/w)}K^{+}(w)K^{-}(z),\\
    K^{\pm}(C^{(1\mp 1)/2}z)E(w)&=\sfg(z/w)E(w)K^{\pm}(C^{(1\mp1)/2}z),\\
    K^{\pm}(C^{(1\pm 1)/2}z)F(w)&=\mathsf{g}(z/w)^{-1}F(w)K^{\pm}(C^{(1\pm 1)/2}z),\\
    [E(z),F(w)]=\tilde{g}&\left(\delta\left(\frac{Cw}{z}\right)K^{+}(z)-\delta\left(\frac{Cz}{w}\right)K^{-}(w)\right)
\end{split}
\end{align}
where 
\begin{equation}\label{eq:gl1structurefunction}
    \sfg(z)=\frac{\prod_{i=1}^{3}(1-\sfq_{i}z)}{\prod_{i=1}^{3}(1-\sfq_{i}^{-1}z)},\quad \kappa_{r}=\prod_{i=1}^{3}(\sfq_{i}^{r/2}-\sfq_{i}^{-r/2}),
\end{equation}
and $\tilde{g}=1/\kappa_{1}$.
\end{definition}

Additionally, one needs the so-called Serre relations which are cubic relations of $E(z),F(z)$. We do not need them so we do not write the explicit form (see \cite{DIMreview}). The function $\sfg(z)$ is called the \emph{structure function} of the quantum toroidal $\mathfrak{gl}_{1}$.

The quantum toroidal $\mathfrak{gl}_{1}$ has a Hopf algebraic structure. We only list down the coproduct structure:
\begin{align}\label{eq:coproduct}
\begin{split}
\Delta E(z)&=E(z)\otimes 1+K^{-}(C_{1}z)\otimes E(C_{1}z),\\
\Delta F(z)&=F(C_{2}z)\otimes K^{+}(C_{2}z)+1\otimes F(z),\\
\Delta K^{+}(z)&=K^{+}(z)\otimes K^{+}(C_{1}^{-1}z),\\
\Delta K^{-}(z)&=K^{-}(C_{2}^{-1}z)\otimes K^{-}(z),\\
\Delta(X)&=X\otimes X,\quad X=C,K^{-},
\end{split}
\end{align}
where $C_{1}=C\otimes 1$ and $C_{2}=1\otimes C$. Using this coproduct, we can construct tensor product representations.

The representations of the quantum toroidal $\mathfrak{gl}_{1}$ are obtained by determining the values of the central elements $C,K^{-}$. We have two classes of representations called \emph{vertical representations} and \emph{horizontal representations}. Vertical representations are representations when the central element $C$ is trivial: $C=1$. We have three types of them
\begin{itemize}
    \item vector representation \cite{FFJMM1}: $(C,K^{-})=(1,1)$
    \item Fock representation \cite{Feigin2011}: $(C,K^{-})=(1,\sfq_{c}^{1/2})\,(c=1,2,3)$
    \item MacMahon representation \cite{Feigin2011plane}: $(C,K^{-})=(1,K^{1/2})\,(K\in\mathbb{C}^{\times})$
\end{itemize}
Multi-dimensional partitions appear as the bases of the representation spaces of these representations. For the vector representation, 1d partitions labeled by integers appear. For the Fock and MacMahon representations, 2d and 3d partitions appear respectively (see \cite{DIMreview} and \cite[section 5.3.1]{Noshita:2022otp} for the derivations). 

On the other hand, horizontal representations are representations where the central charges are $(C,K^{-})=(\sfq_{c}^{1/2},1)\,(c=1,2,3)$ \cite{miki2007q,bershtein2018plane,FHSSY:2010,Kojima2019,Kojima2021,Harada:2021xnm}. Drinfeld currents are represented in vertex operators for these representations. See for example \cite{DIMreview} and \cite[section 5.3.2]{Noshita:2022otp} for the explicit derivation of these representations.
\subsection{Vector representation and D2 \texorpdfstring{$qq$}{qq}-character}\label{sec:vectorrep}
There are three types of vector representations with central charges $(C,K^{-})=(1,1)$ and the actions of the Drinfeld currents are
\begin{align}\label{eq:vectorrep}
  \begin{split}
        K^{\pm}(z)[u]^{(c)}_{j}=&\left[\Psi_{[u]^{(c)}_{j}}(z)\right]^{z}_{\pm}[u]^{(c)}_{j}\eqqcolon[S_{c}\left(u\sfq_{c}^{j}/z\right)]_{\pm}[u]^{(c)}_{j},\\
        E(z)[u]^{(c)}_{j}=&\mathcal{E}\delta\left(u\sfq_{c}^{j}/z\right)[u]^{(c)}_{j+1},\\
        F(z)[u]^{(c)}_{j}=&\mathcal{F}\delta\left(u\sfq_{c}^{j-1}/z\right)[u]^{(c)}_{j-1},\quad c=1,2,3,\quad j\in\mathbb{Z}
   \end{split}
\end{align}
where
\begin{equation}
    \mathcal{E}\mathcal{F}=\Tilde{g}\frac{(1-\sfq_{c+1}^{-1})(1-\sfq_{c-1}^{-1})}{(1-\sfq_{c})},\quad S_{c}(z)=\frac{(1-\sfq_{c-1}z)(1-\sfq_{c+1}z)}{(1-z)(1-\sfq_{c-1}\sfq_{c+1}z)}.
\end{equation}
We denote these representations $\mathcal{V}_{c}(u),\,(c=1,2,3)$. The bases $\{[u]_{j}^{(c)}\}_{j\in\mathbb{Z}}$ are represented by 1d partitions (see also section \ref{sec:qcoordinate}):
\begin{equation}
    \begin{tikzpicture}[scale=1.5]
\node at (-2,0.35) {$[u]_{j}^{(c)}=$};
\draw[thick] (0,1)--(0,-0.2);
    \draw[->] (-1,0)--(4.5,0);
    \node at (4.5,0) [right] {$\sfq_{c}$};
    \draw (-1,0.7)--(3.5,0.7);
    \draw (3.5,0.7)--(3.5,0);
    \draw (-0.7,0.7)--(-0.7,0);
    \draw (0.7,0.7)--(0.7,0);
    \draw (1.4,0.7)--(1.4,0);
    \draw (2.1, 0.7)--(2.1,0);
    \node at (1.8, 0.35) {$\cdots$};
    \node at (2.5, 0.35) {$\cdots$};
    \draw (2.8, 0.7)--(2.8,0);
    \node at (0.35, 0) [below] {$1$};
    \node at (1.05,0)[below] {$2$};
    \node at (1.75,0)[below] {$\cdots$};
    \node at (2.45,0)[below] {$\cdots$};
    \node at (3.15,0)[below] {$j$};
    \draw[->] (3.85,1.05)--(3.15,0.35);
    \node at (3.85,1.05)[right] {$u\sfq_{c}^{j-1}$};
    \draw[->] (0.35, 1.05)--(0.35,0.45);
    \node at (0.35, 1.05)[above] {$u$};
    \node at (0.35, 0.20) [right]{$\longrightarrow$};
    \node at (1.05, 0.20) [right,above] {$\sfq_{c}$};
\end{tikzpicture}
\end{equation}
The operator $K^{\pm}(z)$ acts diagonally, $E(z)/F(z)$ adds/removes boxes to/from the configuration. 

To relate the $\D2$ $qq$-characters with the vector representations, we choose one specific direction, which is $\mathbb{C}_{4}$, in the gauge origami system. The motivation for choosing this direction will be explained in section \ref{sec:Betheansatz}. Let us study the relation of the $qq$-characters included in the $\mathbb{C}^{3}_{123}\times \mathbb{S}^{1}$. The operator products of $\mathsf{S}_{c}(z)\,(c=1,2,3)$ with $\mathsf{S}_{4}(q_{4}z)$ are
\begin{subequations}
\begin{align}
    \mathsf{S}_{c}(q_{c}^{j}u)\mathsf{S}_{4}(q_{4}z)&=\left[\mathscr{S}_{\overbar{c4}}(uq_{1}^{j}/z)\right]^{z}_{-}:\mathsf{S}_{c}(q_{c}^{j}u)\mathsf{S}_{4}(q_{4}z):\\
    \mathsf{S}_{4}(q_{4}z)\mathsf{S}_{c}(uq_{c}^{j})&=\left[\mathscr{S}_{\overbar{c4}}(uq_{c}^{j}/z)\right]^{z}_{+}:\mathsf{S}_{c}(q_{c}^{j}u)\mathsf{S}_{4}(q_{4}z):.
\end{align}
\end{subequations}
After taking the limit $q_{4}\rightarrow 1$, we can see that we have
\begin{equation}
    \mathscr{S}_{\overbar{c4}}(uq_{c}^{j}/z)\rightarrow S_{c}(u\sfq_{c}^{j}/z),\quad q_{1},q_{2},q_{3}\rightarrow \sfq_{1},\sfq_{2},\sfq_{3}.
\end{equation}
Comparing with \eqref{eq:vectorrep}, after taking the limit $q_{4}\rightarrow 1$, we can relate the monomial terms of the $\D2$ $qq$-characters with the bases of the vector representation of quantum toroidal $\mathfrak{gl}_{1}$ as\footnote{Note that this is not a strict correspondence. In the limit $q_{4}\rightarrow 1$, some of the vertex operators will diverge and they do not obey the defining relations of the quantum toroidal $\mathfrak{gl}_{1}$. Moreover, for the moment, we do not know how to relate the other operators $E(z),\,F(z)$.}
\begin{align}
    \mathsf{S}_{4}(q_{4}z)\rightarrow K^{\pm}(z),\quad \mathsf{S}_{1}(uq_{1}^{j})\rightarrow [u]_{j}^{(1)}.
\end{align}
This correspondence strengthens the interpretation in \eqref{eq:D2vectorfigure1}. Due to this observation, we can call the $\D2$ $qq$-characters the \textbf{vector $qq$-characters}.

\subsection{Fock representation and D4 \texorpdfstring{$qq$}{qq}-character}\label{sec:verticalFockrep}


Fock representations are representations with central charges $(C,K^{-})=(1,\sfq_{c}^{1/2}),\,(c=1,2,3)$. We denote them $\mathcal{F}_{c}(u),\, (c=1,2,3)$, respectively. The actions of the Drinfeld currents are given
\begin{align}\label{eq:Fockrep}
\begin{split}
    K^{\pm}(z)\ket{u,\lambda}^{(c)}=&\left[\Psi_{\lambda,u}^{(c)}(z)\right]^{z}_{\pm}\ket{u,\lambda}^{(c)}=\left[\sfq_{c}^{-1/2}\frac{\mathcal{Y}^{(c)}_{\lambda,u}(\sfq_{c}^{-1}z)}{\mathcal{Y}^{(c)}_{\lambda,u}(z)}\right]^{z}_{\pm}\ket{u,\lambda}^{(c)},\\
E(z)\ket{u,\lambda}^{(c)}=&\frac{1-\sfq_{c}}{\kappa_{1}}\sum_{\Abox\in A(\lambda)}\delta\left(\chi_{u}^{(c)}(\Bbox)/z\right)\underset{z=\chi_{u}^{(c)}(\Abox)}{\Res}z^{-1}\mathcal{Y}^{(c)}_{\lambda,u}(z)^{-1}\ket{u,\lambda+\Bbox}^{(c)},\\
    F(z)\ket{u,\lambda}^{(c)}=&-\frac{1-\sfq_{c}^{-1}}{\kappa_{1}}\sfq_{c}^{-1/2}\sum_{\Abox\in R(\lambda)} \delta\left(\chi_{u}^{(c)}(\Bbox)/z\right)\underset{z=\chi_{u}^{(c)}(\Abox)}{\Res}z^{-1}\mathcal{Y}_{\lambda,u}^{(c)}(\sfq_{c}^{-1}z)\ket{u,\lambda-\Bbox}^{(c)}
\end{split}
\end{align}
where
\begin{equation}
    \mathcal{Y}_{\lambda,u}^{(c)}(z)=(1-u/z)\prod_{\Abox\in\lambda}S_{c}(\chi^{(c)}_{u}(\Bbox)/z),\quad \chi^{(c)}_{u}(\Bbox)=u\sfq_{c+1}^{i-1}\sfq_{c-1}^{j-1}\,\,(i,j\geq1 ).\label{def-Yc}
\end{equation}
Note that the eigenvalue $\Psi_{\lambda,u}^{(c)}(z)$ can be rewritten as 
\begin{equation}\label{eq:FockCartan}
\Psi^{(c)}_{\lambda,u}(z)=\sfq_{c}^{-1/2}\frac{1-\sfq_{c}u/z}{1-u/z}\prod_{\Abox\in\lambda}\sfg\left(\frac{z}{\chi_{u}^{(c)}(\Bbox)}\right)
\end{equation}
The bases are represented by 2d partitions (see also section \ref{sec:qcoordinate} for the notation):
\begin{equation}
        \ket{u,\lambda}^{(c)}=\qquad \adjustbox{valign=c}{\begin{tikzpicture}[scale=0.7]
        \draw[->] (-1,0)--(4,0);
        \node[above] at (-0.5,4){$\sfq_{c-1}$};
        \node [right] at (4,0){$\sfq_{c+1}$};
        \node [below] at (1.25,0){$i$};
         \draw[->]   (-0.5,-0.5)--(-0.5,4);
         \draw (0.2,3.5)--(0.2,0.7);
         \draw (0.9,2.8)--(0.9,0.7);
         \draw (1.6,2.1)--(1.6,0.7);
         \draw (2.3,1.4)--(2.3,0.7);
         \draw (2.3,0.7)--(-0.5,0.7);
         \draw (2.3,1.4)--(-0.5,1.4);
         \draw (1.6,2.1)--(-0.5,2.1);
         \draw (0.9,2.8)--(-0.5,2.8);
         \draw (0.2,3.5)--(-0.5,3.5);
        \draw (-0.5,0.7)--(3,0.7);
        \draw (3,0)--(3,0.7);
        \draw (0.2,0)--(0.2,0.7);
        \draw (0.9,0)--(0.9,0.7);
         \draw (1.6,0)--(1.6,0.7);
          \draw (2.3,0)--(2.3,0.7);
          \draw (-0.15,0.35)--++(-0.7,-1);
          \node[left] at (-0.85,-0.65){$u$};
          \node [left] at (-0.5,1.75){$j$};
          \draw  (1.25,1.75)--++(0.9,0.9);
          \node[right] at (2.2,2.65) {$u\sfq_{c+1}^{i-1}\sfq_{c-1}^{j-1}$};
        \end{tikzpicture}
        }
    \end{equation}

Similar to the D2 case, we choose $\mathbb{C}_{4}$ to be a specific direction in the gauge origami system. Focusing on $\mathsf{T}_{12}(x)$ and using \eqref{eq:D4-D2contraction}, we obtain
\begin{subequations}
\begin{align}
     :\mathsf{X}_{12}(u)\prod_{\Abox\in\lambda}\mathsf{A}^{-1}(\chi_{12,u}(\Bbox)):\mathsf{S}_{4}(q_{4}z)&=\left[q_{3}^{-1}\frac{\mathscr{Y}^{12}_{\lambda,u}(q_{3}^{-1}z)}{\mathscr{Y}^{12}_{\lambda,u}(z)}\right]^{z}_{-} :\mathsf{X}_{12}(u)\prod_{\Abox\in\lambda}\mathsf{A}^{-1}(\chi_{12,u}(\Bbox))\mathsf{S}_{4}(q_{4}z):,\\
    \mathsf{S}_{4}(q_{4}z):\mathsf{X}_{12}(u)\prod_{\Abox\in\lambda}\mathsf{A}^{-1}(\chi_{12,u}(\Bbox)):&=\left[q_{3}^{-1}\frac{\mathscr{Y}^{12}_{\lambda,u}(q_{3}^{-1}z)}{\mathscr{Y}^{12}_{\lambda,u}(z)}\right]^{z}_{+} :\mathsf{X}_{12}(u)\prod_{\Abox\in\lambda}\mathsf{A}^{-1}(\chi_{12,u}(\Bbox))\mathsf{S}_{4}(q_{4}z):
\end{align}
\end{subequations}
and at the limit $q_{4}\rightarrow 1$, we have 
\begin{equation}
    \mathscr{Y}_{\lambda,u}^{12}(z)\rightarrow \mathcal{Y}^{(3)}_{\lambda,u}(z),\quad \chi_{12,u}(\Bbox)\rightarrow \chi_{u}^{(3)}(\Bbox).
\end{equation}
Thus, at the limit $q_{4}\rightarrow 1$, we can relate the monomial terms of the $\D4$ $qq$-characters with the bases of the Fock representation of quantum toroidal $\mathfrak{gl}_{1}$ as 
\begin{equation}
    \mathsf{S}_{4}(q_{4}z)\longrightarrow K^{\pm}(z),\quad 
    {:\mathsf{X}_{12}(u)\prod_{\Abox\in\lambda}\mathsf{A}^{-1}(\chi_{12,u}(\Bbox)):}\longrightarrow \ket{u,\lambda}^{(3)}.
\end{equation}
In this sense, we can call the $\D4$ $qq$-characters the \textbf{Fock $qq$-characters}.

\begin{remark}
    Actually, the Fock representation can be derived by an infinite number of tensor products of the vector representation \cite{FFJMM1}. The coproduct structure enables us to consider the action of the Drinfeld currents on tensor product representations $\otimes_{i=1}^{N}\mathcal{V}_{c-1}(u_{i})$. We tune the spectral parameter as $u_{i}=u\sfq_{c+1}^{i-1}$ and take the limit $N\rightarrow \infty$. After proper regularization of the infinite products, we obtain $\otimes_{i=1}^{\infty}\mathcal{V}_{c-1}(u\sfq_{c+1}^{i-1})\simeq \mathcal{F}_{c}(u) $. This property corresponds with the fact that the D4 $qq$-character can be obtained by the fusion process of the D2 $qq$-characters as discussed in Thm.~\ref{thm:D2toD4fusion}.
\end{remark}

\paragraph{Higher rank $qq$-characters}
Let us show that the higher rank $\D4$ $qq$-characters correspond to the tensor product representations of the vertical Fock representations. We only focus on the $qq$-characters with no negative weights. The monomial terms appearing in $\mathsf{T}_{12:23:13}^{(\vec{n}|\vec{0})}(\underline{\vec{x}}\,|\,\underline{\vec{0}})$ (see Thm.~\ref{thm:D4generalqqcharacter}) are 
\begin{equation}
    :\prod_{\alpha=1}^{n_{12}}\Lambda_{12,\lambda_{\alpha}}(x_{12,\alpha})\prod_{\beta=1}^{n_{13}}\Lambda_{13,\mu_{\beta}}(x_{13,\beta})\prod_{\gamma=1}^{n_{23}}\Lambda_{23,\nu_{\gamma}}(x_{23,\gamma}):
\end{equation}
where $\lambda_{\alpha},\mu_{\beta},\nu_{\gamma}$ are Young diagrams. The coefficient appearing after taking the operator product with $\mathsf{S}_{4}(q_{4}z)$ is 
\begin{equation}
    \prod_{\alpha=1}^{n_{12}}q_{3}^{-1}\frac{\mathscr{Y}^{12}_{\lambda_{\alpha},x_{12,\alpha}}(q_{3}^{-1}z)}{\mathscr{Y}^{12}_{\lambda_{\alpha},x_{12,\alpha}}(z)}\prod_{\beta=1}^{n_{13}}q_{2}^{-1}\frac{\mathscr{Y}^{13}_{\mu_{\beta},x_{13,\beta}}(q_{2}^{-1}z)}{\mathscr{Y}^{13}_{\mu_{\beta},x_{13,\beta}}(z)}\prod_{\gamma=1}^{n_{23}}q_{1}^{-1}\frac{\mathscr{Y}^{23}_{\nu_{\gamma},x_{23,\gamma}}(q_{1}^{-1}z)}{\mathscr{Y}^{23}_{\nu_{\gamma},x_{23,\gamma}}(z)}.
\end{equation}
After taking the limit $q_{4}\rightarrow 1$, it corresponds with the Cartan eigenvalue of the tensor product representation $\bigotimes_{\alpha=1}^{n_{12}}\mathcal{F}_{3}(x_{12,\alpha})\otimes \bigotimes_{\beta=1}^{n_{13}}\mathcal{F}_{2}(x_{13,\beta})\otimes\bigotimes_{\gamma=1}^{n_{23}}\mathcal{F}_{1}(x_{23,\gamma})$. Note that the Cartan eigenvalues of tensor product representations are simply the products of the Cartan eigenvalue of each representation. This is because at $C=1$, using the coproduct structure \eqref{eq:coproduct}, we have $\Delta K^{\pm}(z)=K^{\pm}(z)\otimes K^{\pm}(z)$. Note also that the ordering of the tensor products does not matter because of the existence of the universal R-matrix\footnote{Recently there have been attempts to construct the $qq$-character using the R-matrix of the quantum toroidal $\mathfrak{gl}_1$~\cite{Liu:2022gwf,Bayindirli:2023byn}.
} \cite{miki2007q,Feigin:2015raa,Feigin:2016pld}.

We can do the same analysis for the negative weights where supergroup analogs appear. The corresponding representations in the quantum toroidal $\mathfrak{gl}_{1}$ were constructed in \cite{Noshita:2022dxv,Bourgine:2018fjy,Feigin:2016pld}, where vertical Fock representations with negative levels were introduced. One can show that the representations in \cite{Noshita:2022dxv} correspond with the $\D4$ $qq$-character generated by $\mathsf{X}_{A}(z)^{-1}$.

\subsection{MacMahon representation and D6 \texorpdfstring{$qq$}{qq}-character}\label{sec:MacMahonrep}
MacMahon representations are representations with central charges $(C,K^{-})=(1,K^{1/2})$ where $K\in\mathbb{C}^{\times}$ is a generic parameter. The action of the Drinfeld currents is given as
\bea\label{eq:MacMahonrep}
    K^{\pm}(z)|u,\pi\rangle&=\left[\Psi_{\pi,u}(z)\right]^{z}_{\pm}|u,\pi\rangle,\\
    E(z)|u,\pi\rangle=&\sum_{\scube\in A(\pi)}\#\delta\left(\frac{z}{\chi_{u}(\cube)}\right)
    \sqrt{\underset{z=\chi_{u}(\scube)}{\mathrm{Res}}z^{-1}\Psi_{\pi,u}(z)}\,
    |u,\pi+\cube\rangle,\\
    F(z)|u,\pi\rangle&=\sum_{\scube\in R(\pi)}\#\delta\left(\frac{z}{\chi_{u}(\cube)}\right)
    \sqrt{\underset{z=\chi_{u}(\scube)}{\mathrm{Res}} z^{-1}\Psi_{\pi,u}(z)}\,
    |u,\pi-\cube\rangle,
\eea 
where $\#$ is some coefficient factor and 
\begin{equation}\label{eq:CartanMacMahon}
    \chi_{u}(\cube)=u\sfq_{1}^{i-1}\sfq_{2}^{j-1}\sfq_{3}^{k-1},\quad  \Psi_{\pi,u}(z)=K^{-1/2}\frac{1-Ku/z}{1-u/z}\prod_{\scube\in\pi}\sfg\left(\frac{z}{\chi_{u}(\cube)}\right).
\end{equation}
 We denote this representation $\mathcal{M}(u,K)$. The explicit coefficients of the right-hand side of $E(z),F(z)$ are omitted. The $qq$-character $\mathsf{T}_{123}(x)$ and screening current $\mathsf{S}_{4}(x')$ gives 
\begin{subequations}
\begin{align}
  :\frac{\mathsf{W}_{\bar{4}}(u)}{\mathsf{W}_{\bar{4}}(Ku)}\prod_{\scube\in\pi}\mathsf{A}^{-1}(\chi_{\overbar{4},u}(\cube)):\mathsf{S}_{4}(q_{4}z)
  &=-q_{4}u\left[\mathscr{W}^{\bar{4},K}_{\pi,u}(z)^{-1}\right]^{z}_{-}:\frac{\mathsf{W}_{\bar{4}}(u)}{\mathsf{W}_{\bar{4}}(Ku)}\prod_{\scube\in\pi}\mathsf{A}^{-1}(\chi_{\overbar{4},u}(\cube))\mathsf{S}_{4}(q_{4}z):,\\
  \mathsf{S}_{4}(q_{4}z):\frac{\mathsf{W}_{\bar{4}}(u)}{\mathsf{W}_{\bar{4}}(Ku)}\prod_{\scube\in\pi}\mathsf{A}^{-1}(\chi_{\overbar{4},u}(\cube)):&=-q_{4}u\left[\mathscr{W}_{\pi,x}^{\bar{4},K}(z)^{-1}\right]^{z}_{+}:\frac{\mathsf{W}_{\bar{4}}(u)}{\mathsf{W}_{\bar{4}}(Ku)}\prod_{\scube\in\pi}\mathsf{A}^{-1}(\chi_{\overbar{4},u}(\cube))\mathsf{S}_{4}(q_{4}z):.
\end{align}
\end{subequations}
Taking the limit $q_{4}\rightarrow 1$, we have $\chi_{\bar{4},u}(\cube)\rightarrow \chi_{u}(\cube)$ and 
\begin{equation}
g_{\bar{4}}(z)\longrightarrow \sfg(z),\quad 
\mathcal{W}_{\pi,u}^{\bar{4},K}(z)^{-1}\longrightarrow \frac{1-Ku/z}{1-u/z}\prod_{\scube\in\pi}\sfg\left(\frac{z}{\chi_{u}(\cube)}\right)
\end{equation}
which gives the identification
\begin{equation}
\mathsf{S}_{4}(q_{4}z)\longrightarrow K^{\pm}(z),\quad 
:\mathsf{W}_{\bar{4}}(u)\prod_{\scube\in\pi}\mathsf{A}^{-1}(\chi_{\overbar{4},u}(\cube)):\,\longrightarrow \ket{u,\pi}.
\end{equation}
Thus, at the limit $q_{4}\rightarrow 1$, we can relate the monomial terms of the $\D6$ $qq$-characters with the bases of the MacMahon representation of the quantum toroidal 
$\mathfrak{gl}_{1}$. In this sense, we call the $\D6$ $qq$-characters the \textbf{MacMahon $qq$-characters}. Under this identification, we can see that the distance between the $\D6$ and $\overline{\D6}$ branes, denoted as $K$, appear as the central charge of the MacMahon representation.

\begin{remark}
    Instead of considering the $\U(1|1)$ theory of $\D6$-branes, we can consider the $\U(1)$ theory by taking the limit $K\rightarrow 0,\,\infty$. After taking the limit $q_{4}\rightarrow 1$, the eigenvalue $\Psi_{\pi,u}(z)|_{K\rightarrow 0,\infty}$ is no longer a rational function with the same degrees in the numerator and the denominator. Note that such kind of representations are not the representation of quantum toroidal $\mathfrak{gl}_{1}$ but of the \emph{shifted} quantum toroidal $\mathfrak{gl}_{1}$ (see \cite{Bourgine:2022scz,Galakhov:2021xum,Noshita:2021dgj}). 
\end{remark}
\begin{remark}
    Similar to the relation between the vector representations and the Fock representations, the MacMahon representation also can be obtained by infinite tensor products of the Fock representations \cite{Feigin2011plane}. Consider the tensor product $\otimes_{i=1}^{N}\mathcal{F}_{c}(u_{i})$ and tune the spectral parameters as $u_{i}=u\sfq_{c}^{i-1}$. We then take the limit $N\rightarrow \infty$ and regularize it. After this process, we get $\otimes_{i=1}^{\infty}\mathcal{F}_{c}(u\sfq_{c}^{i-1})\simeq \mathcal{M}(K,u)$. Note that the nontrivial parameter $K$ appears from the regularization process. This property corresponds to the fusion process of the D4 $qq$-characters to $\D6$ $qq$-characters discussed in Thm.~\ref{thm:D4toD6fusion}.
\end{remark}

\paragraph{General $\D6$ $qq$-characters}
Similar to the $\D4$ case, a higher rank version of $\D6$ $qq$-characters corresponds to the tensor products of the MacMahon representations:
\begin{align}
\begin{split}
    :\prod_{i=1}^{N}\Lambda^{K_{i}}_{\bar{4},\pi^{(i)}}(x_{i}):\mathsf{S}_{4}(q_{4}x)=\prod_{i=1}^{N}(-q_{4}x_{i})\prod_{i=1}^{N}\left[\mathscr{W}^{\bar{4},K_{i}}_{\pi^{(i)},x_{i}}(z)^{-1}\right]^{z}_{-}:\prod_{i=1}^{N}\Lambda^{K_{i}}_{\bar{4},\pi^{(i)}}(x_{i})\mathsf{S}_{4}(q_{4}x):,\\
    \mathsf{S}_{4}(q_{4}x){:\prod_{i=1}^{N}\Lambda^{K_{i}}_{\bar{4},\pi^{(i)}}(x_{i}):}=\prod_{i=1}^{N}(-q_{4}x_{i})\prod_{i=1}^{N}\left[\mathscr{W}^{\bar{4},K_{i}}_{\pi^{(i)},x_{i}}(z)^{-1}\right]^{z}_{+}:\prod_{i=1}^{N}\Lambda^{K_{i}}_{\bar{4},\pi^{(i)}}(x_{i})\mathsf{S}_{4}(q_{4}x):.
\end{split}
\end{align}
After taking the limit $q_{4}\rightarrow 1$, one can see that it matches with the Cartan eigenvalue of the tensor product $\bigotimes_{i=1}^{N}\mathcal{M}(x_{i},K_{i})$.

As discussed in section \ref{sec:generalD6qq}, after specifying the value $K$, we can obtain the pit reduction of the $\D6$ $qq$-characters, which eventually gives the $\D4$ $qq$-character. This situation is the same in the quantum toroidal $\mathfrak{gl}_{1}$. Setting $K=\sfq_{c}$ in \eqref{eq:CartanMacMahon}, the Cartan eigenvalue will be \eqref{eq:FockCartan}. The residue of $\Psi_{\pi,u}(z)|_{K=\sfq_{c}}$ at $z=\sfq_{c}u$ will vanish and thus the action of $E(z)$ will stop the growth of the plane partition in the direction $\sfq_{c}$. We then obtain the $\mathcal{F}_{c}(u)$ representation.\footnote{The coefficients of the action of $E(z),F(z)$ on the bases in the MacMahon representation after setting $K=\mathsf{q}_{c}$ is different from the coefficients in the Fock representation. This comes from the degree of freedom to rescale the bases.} For a general pit located at $(L,M,N)$ in the MacMahon representation, the central charge is $K=\sfq_{1}^{L-1}\sfq_{2}^{M-1}\sfq_{3}^{N-1}$. A similar analysis can be done and we will see that the MacMahon representation will be reduced to $\mathcal{F}_{1}^{\otimes L}\otimes \mathcal{F}_{2}^{\otimes M}\otimes \mathcal{F}_{3}^{\otimes N}$ with the spectral parameters tuned properly (see for example \cite[section 5.1.4]{DIMreview}). 

We can also do the same analysis for the case when the plane partitions appearing have nontrivial boundary Young diagrams. For these cases, the eigenvalue of the vacuum configuration is determined as
\begin{equation}
    \Psi_{\lambda\mu\nu,u}^{\text{vac}}(z)=K^{-1/2}\frac{1-Ku/z}{1-u/z}\prod_{\scube\in \mathcal{P}_{\lambda,\mu,\nu}}\sfg\left(\frac{z}{\chi_{u}(\cube)}\right).
\end{equation}
After taking the $q_{4}\rightarrow 1$ limit, the coefficient appearing after taking the operator product of \eqref{eq:D6hw_boundarypartition} and $\mathsf{S}_{4}(q_{4}z)$ will become this vacuum Cartan eigenvalue.

\subsection{BPS \texorpdfstring{$qq$}{qq}-characters and quiver quantum toroidal algebras}\label{sec:BPSqqQTA}
Up to the previous section, we managed to relate the $qq$-characters with the vertical representation of the quantum toroidal $\mathfrak{gl}_{1}$. To construct the explicit representation of $\mathcal{E}$, the essential part is how to determine the bases and the eigenvalue of $K^{\pm}(z)$. The action of the other Drinfeld currents is determined schematically as \eqref{eq:MacMahonrep}, where the coefficients are proportional to the residue of the eigenvalue. Moreover, the eigenvalue has a general structure
\bea
\Psi_{\Lambda,u}(x)=\psi_{\emptyset,u}(x)\prod_{\scube\in \Lambda}\mathsf{g}\left(\frac{z}{\chi_{u}(\cube)}\right)
\eea
where $\Lambda$ is either the 1d, 2d, 3d partitions or general truncations of the plane partition. The eigenvalue is a product of the vacuum function $\psi_{\emptyset,u}(z)$, and the contributions come from the boxes in the configurations. Actually, one can show that the zero and pole structure of the vacuum function $\psi_{\emptyset,u}(z)$ determines how the bases should be constructed \cite{Feigin2011plane,Prochazka:2015deb,Galakhov:2021xum}.

This property is similar to the procedure to obtain $qq$-characters. To determine the $qq$-characters, we first started by choosing a screening current. We then choose a highest weight and impose the commutativity with the screening current. The $qq$-character is automatically determined in this way. The highest weight therefore has a one-to-one correspondence with the vacuum function of the vertical representations of $\mathcal{E}$:
\bea
\mathsf{S}_{c}(u)\quad &\longleftrightarrow \quad \frac{(1-\sfq_{c-1}u/z)(1-\sfq_{c+1}u/z)}{(1-u/z)(1-\sfq_{c-1}\sfq_{c+1}u/z)},\\
\mathsf{X}_{\overline{c4}}(u)\quad &\longleftrightarrow \quad \sfq_{c}^{-1/2}\frac{1-\sfq_{c}u/z}{1-u/z},\\
\frac{\mathsf{W}_{123}(u)}{\mathsf{W}_{123}(Ku)} \quad &\longleftrightarrow \quad K^{-1/2}\frac{1-Ku/z}{1-u/z}.
\eea
Note that this is just a rewriting of the conclusion obtained in sections \ref{sec:vectorrep}, \ref{sec:verticalFockrep}, \ref{sec:MacMahonrep}.
 
Recently a large class of quantum toroidal algebras associated with toric Calabi--Yau three-folds\footnote{There are also quantum toroidal algebras associated with non-toric CYs. They are related to affine D and E-type algebras. The discussion here can also be generalized to those cases.} were constructed \cite{Galakhov:2021vbo,Noshita:2021ldl}. They are called quiver quantum toroidal algebras or toroidal quiver BPS algebras\footnote{We use the former terminology.}. We denote such algebras as $\mathcal{E}_{(Q,W)}$, where $Q$ is the corresponding quiver and $W$ is the superpotential. The above correspondence between the $qq$-characters and the vertical representations implies that we should have generalizations to such cases too. Let us show that the BPS $qq$-characters introduced in Conj.~\ref{conj:BPSqq} are the $qq$-characters corresponding to the representations of such algebras.

\subsubsection{Quiver quantum toroidal algebra}
Let us introduce the definition of the algebra $\mathcal{E}_{(Q,W)}$ following \cite{Noshita:2021ldl}. Let $X$ be a toric Calabi--Yau three-fold. To $X$, one can associate a quiver $Q=(Q_{0},Q_{1})$ and a superpotential $W$. For each arrow of the quiver, we can associate parameters $\{\sfq_{I}\}_{I\in Q_{1}}$. The superpotential imposes nontrivial conditions on them and the independent parameters are reduced up to two. We assume such conditions are imposed implicitly in the parameters.
\begin{definition}[\cite{Noshita:2021dgj,Noshita:2021ldl,Galakhov:2021xum,Galakhov:2021vbo}]
    The algebra $\mathcal{E}_{(Q,W)}$ is generated by the Drinfeld currents:
    \bea
    E_{i}(z)=\sum_{m\in\mathbb{Z}}E_{i,m}z^{-m},\quad F_{i}(z)=\sum_{m\in\mathbb{Z}}F_{i,m}z^{-m},\quad K_{i}^{\pm}(z)=K_{i}^{\pm}z^{r^{\pm}_{i}}\exp\left(\pm \sum_{r=1}^{\infty}H_{i,\pm r}z^{\mp r}\right),
    \eea
   for $i\in Q_{0}$, where $r^{\pm}_{i}\in \mathbb{Z}$. For each node $i\in Q_{0}$, we assign a $\mathbb{Z}_{2}$ grading:
   \bea
   |i|=\begin{cases}
       0\quad (\exists I\in Q_{1}\quad \text{s.t.}\quad \text{$I$ is a loop}),\\
       1\quad (\text{otherwise}).
   \end{cases}
   \eea
   We have one universal central element which is denoted as $C$. The defining relations are 
    \bea
    K_{i}^{\pm}(z)K_{j}^{\pm}(w)&=K_{j}^{\pm}(w)K_{i}^{\pm}(z),\\
    K_{i}^{-}(z)K_{j}^{+}(w)&=\frac{\widetilde{\varphi}^{j\Rightarrow i}(z,Cw)}{\widetilde{\varphi}^{j\Rightarrow i}(Cz,w)}K_{j}^{+}(w)K_{i}^{-}(z),\\
    K_{i}^{\pm}(C^{\frac{1\mp 1}{2}}z)E_{j}(w)&=\widetilde{\varphi}^{j\Rightarrow i}(z,w)E_{j}(w)K_{i}^{\pm}(C^{\frac{1\mp 1}{2}}z),\\
    K_{i}^{\pm}(C^{\frac{1\pm 1}{2}}z)F_{j}(w)&=\widetilde{\varphi}^{j\Rightarrow i}(z,w)^{-1}F_{j}(w)K_{i}^{\pm}(C^{\frac{1\pm 1}{2}}z),\\
    [E_{i}(z),F_{j}(w)]=\delta_{ij}&\left(\delta\left(\frac{Cw}{z}\right)K^{+}_{i}(z)-\delta\left(\frac{Cz}{w}\right)K_{i}^{-}(w)\right),\\
    E_{i}(z)E_{j}(w)&=(-1)^{|i||j|}\widetilde{\varphi}^{j\Rightarrow i}(z,w)E_{j}(w)E_{i}(z),\\
    F_{i}(z)F_{j}(w)&=(-1)^{|i||j|}\widetilde{\varphi}^{j\Rightarrow i}(z,w)^{-1}F_{j}F_{i}(z),
    \eea
    where the structure functions $\widetilde{\varphi}^{i\Rightarrow j}(z,w)$ are defined as 
    \bea
    \widetilde{\varphi}^{i\Rightarrow j}(z,w)&=(-1)^{\chi_{i\rightarrow j}}\frac{\prod_{I\in\{j\rightarrow i \}}(\sfq_{I}^{1/2}z-\sfq_{I}^{-1/2}w)}{\prod_{I\in\{i\rightarrow j\}}(\sfq_{I}^{-1/2}z-\sfq_{I}^{1/2}w)}\\
    &=(-1)^{\chi_{i\rightarrow j}}\frac{\prod_{I\in\{j\rightarrow i\}}(-\sfq_{I}^{-1/2}w)}{\prod_{I\in\{i\rightarrow j\}}(-\sfq_{I}^{1/2}w)}\varphi^{i\Rightarrow j}(z/w)
    \eea
    where $\chi_{i\rightarrow j}$ is a factor determined by imposing the associativity condition
    \bea
    \widetilde{\varphi}^{j\Rightarrow i}(z,w)\widetilde{\varphi}^{i\Rightarrow j}(w,z)=1
    \eea
    and $\varphi^{i\Rightarrow j}(z)$ is defined in \eqref{eq:CY3timesCstructurefunct}.
\end{definition}
\begin{remark}
    We add the factor $z^{r_{i}^{\pm}}$ to the mode expansion of $K_{i}^{\pm}(z)$ and relaxed the algebra. Such modifications will be important only when we consider the mode expansions of representations. Moreover, they are related to shifted quantum algebras. They are not important in this paper so we will not discuss them.
\end{remark}

Setting $C=1$, one can see that $K_{i}^{\pm}(z)$ commutes with each other and we have diagonal bases. Such diagonal bases are labeled by 3d BPS crystals \cite{Ooguri:2009ijd} and we have a natural action of the algebra on them (see \cite{Li:2020rij} for details). The 3d BPS crystals are a set of colored atoms where the colors are labeled by $Q_{0}$ (see Conj.~\ref{conj:BPSqq} for the notations). The color projection map $\text{c}:\Lambda\rightarrow Q_{0}$ maps the color of the crystal to the quiver nodes. Similar to the discussion in Conj.~\ref{conj:BPSqq}, we can define a coordinate function:
\bea
\chi_{u}(\cube)\coloneqq u\times \prod_{I\in\text{path}[\mathfrak{o}\rightarrow\,\scube]}\sfq_{I}.
\eea
The difference with the coordinate function in Conj.~\ref{conj:BPSqq} is that we use the $q$-deformation parameters $\sfq_{I}$. There, we have an additional parameter $q_{4}$, while here we do not have such a parameter.

Under this situation, the representation is schematically given as
\bea
K_{i}^{\pm}(z)\ket{u,\Lambda}&=\left[\Psi_{\Lambda,u}^{(i)}(z)\right]^{z}_{\pm}\ket{u,\Lambda},\\
E_{i}(z)\ket{u,\Lambda}&=\sum_{\scube\in A(\Lambda)}\#\sqrt{\underset{z=\chi_{u}(\scube)}{\Res}z^{-1}\Psi^{(i)}_{\Lambda,u}(z)}\delta\left(\frac{z}{\chi_{u}(\cube)}\right)\ket{u,\Lambda+\cube},\\
F_{i}(z)\ket{u,\Lambda}&=\sum_{\scube\in R(\Lambda)}\#\sqrt{\underset{z=\chi_{u}(\scube)}{\Res}z^{-1}\Psi^{(i)}_{\Lambda,u}(z)}\delta\left(\frac{z}{\chi_{u}(\cube)}\right)\ket{u,\Lambda-\cube}
\eea
where $\Lambda$ is the 3d BPS crystal configuration and $A(\Lambda),R(\Lambda)$ are addable and removable atoms from the configuration and $\#$ is some coefficient factor. The eigenvalue $\Psi_{\Lambda,u}^{(i)}(z)$ schematically has the form
\bea\label{eq:QQTAeigenvalue}
\Psi^{(i)}_{\Lambda,u}(z)=\psi^{(i)}_{\emptyset,u}(z)\prod_{j\in Q_{0}}\prod_{\substack{\scube\,\in\Lambda,\\\text{c}(\scube)=j}}\widetilde{\varphi}^{j\Rightarrow i}(z,\chi_{u}(\cube))
\eea
where $\psi^{(i)}_{\emptyset,u}(z)$ is the vacuum function. Namely, the eigenvalue is a product of the vacuum function and contributions coming from all of the atoms in the crystal configuration. The contribution to $\Psi^{(i)}_{\Lambda,u}(z)$ coming from an atom at the coordinate $\chi_{u}(\cube)$ with color $j\in Q_{0}$ comes from $\widetilde{\varphi}^{j\Rightarrow i}(z,\chi_{u}(\cube))$. The poles of the eigenvalue $\Psi^{(i)}_{\Lambda,u}(z)$ are the $q$-coordinates of the addable and removable atoms with color $i\in Q_{0}$ and thus the action of $E_{i}(z),\,F_{i}(z)$ are determined by the residue of them. 

The possible configuration is uniquely determined by the vacuum function $\psi^{(i)}_{\emptyset,u}(z)$ (see \cite{Galakhov:2021xum} for examples). We only focus on 3d BPS crystals with only one atom at the origin. For such cases, the vacuum function is given 
\bea
\psi^{(i)}_{\emptyset,u}(z)=\left(\frac{K^{-1/2}z-K^{1/2}u}{z-u}\right)^{\delta_{ia}}.
\eea
This vacuum function gives the BPS crystals where the atom at the origin has color $a$. We denote such crystal configurations as $\Lambda^{(a)}$ for later use.

\begin{remark}
    When $X=\mathbb{C}^{2}/\mathbb{Z}_{n}\times \mathbb{C}$, the corresponding quantum toroidal algebra is the quantum toroidal $\mathfrak{gl}_{n}$ \cite{Ginzburg,Feigin:2013fga,Feigin:2013JA}, which we denote $\mathcal{E}_{n}$ for later use. Applications of these algebras to supersymmetric gauge theories were done in \cite{Awata:2017lqa}. For general abelian orbifold cases such as $X=\mathbb{C}^{3}/\mathbb{Z}_{n}$ or $X=\mathbb{C}^{3}/(\mathbb{Z}_{m}\times \mathbb{Z}_{n})$, the representations and some applications were studied in \cite{Jeong:2018qpc,Noshita:2021dgj,Bourgine:2019phm,Bao:2022oyn}.
\end{remark}

\subsubsection{BPS $qq$-characters}
Let $Z$ be a toric CY$_{4}$ with the form $Z=X\times \mathbb{C}$ where $X$ is a toric CY$_3$. We denote the corresponding quiver of $X$ as $Q=(Q_{0},Q_{1})$ (see section \ref{sec:CY3timesC} for the notations). As discussed in section \ref{sec:CY3timesC} and Conj.~\ref{conj:BPSqq}, we have a screening current $\mathsf{S}_{i}(x)$ and a screening charge corresponding to the $\mathbb{C}$ part of $Z$. The $qq$-character $\mathsf{T}_{j}^{K}(x)$ is obtained from the highest weight $:\frac{\mathsf{W}_{i}(x)}{\mathsf{W}_{j}(Kx)}:$ and the monomial terms are expanded in the 3d BPS crystal $\Lambda^{(j)}$. Let us study the operator products of the monomial terms with the screening current:
\bea
&:\frac{\mathsf{W}_{j}(u)}{\mathsf{W}_{j}(Ku)}\prod_{\scube\in \Lambda^{(j)}}\mathsf{A}^{-1}_{\text{c}(\scube)}(\chi_{X,u}(\cube)):\mathsf{S}_{i}(q_{4}z)\\
&=\mathsf{h}_{ji}(u,z)\left(\frac{1-K^{-1}z/u}{1-z/u}\right)^{\delta_{ij}}\prod_{\scube\in\Lambda^{(j)}}\varphi_{X,\text{c}(\scube)i}\left(\frac{q_{4}z}{\chi_{X,u}(\cube)}\right):\frac{\mathsf{W}_{j}(u)}{\mathsf{W}_{j}(Ku)}\prod_{\scube\in \Lambda^{(j)}}\mathsf{A}^{-1}_{\text{c}(\scube)}(\chi_{X,u}(\cube))\mathsf{S}_{i}(q_{4}z):,\\
&\mathsf{S}_{i}(q_{4}z):\frac{\mathsf{W}_{j}(u)}{\mathsf{W}_{j}(Ku)}\prod_{\scube\in \Lambda^{(j)}}\mathsf{A}^{-1}_{\text{c}(\scube)}(\chi_{X,u}(\cube)):\\
&=\tilde{\mathsf{h}}_{ji}(u,z)\left(\frac{1-Ku/z}{1-u/z}\right)^{\delta_{ij}}\prod_{\scube\in \Lambda^{(j)}}\varphi_{X,i\,\text{c}(\scube)}\left(\frac{\chi_{X,u}(\cube)}{z}\right)^{-1}:\frac{\mathsf{W}_{j}(u)}{\mathsf{W}_{j}(Ku)}\prod_{\scube\in \Lambda^{(j)}}\mathsf{A}^{-1}_{\text{c}(\scube)}(\chi_{X,u}(\cube))\mathsf{S}_{i}(q_{4}z):,
\eea
where $\mathsf{h}_{ji}(z,u),\tilde{\mathsf{h}}_{ji}(z,u)$ are extra factors coming from the zero-modes.

After taking the limit $q_{4}\rightarrow 1$, one can see that the contraction becomes the eigenvalue in \eqref{eq:QQTAeigenvalue} up to extra factors which do not affect the zero and pole structure:
\bea
\left(\frac{1-K^{-1}z/u}{1-z/u}\right)^{\delta_{ij}}\prod_{\scube\in\Lambda^{(j)}}\varphi_{X,\text{c}(\scube)i}\left(\frac{q_{4}z}{\chi_{X,u}(\cube)}\right)\xrightarrow{q_{4}\rightarrow 1}\left(\frac{1-K^{-1}z/u}{1-z/u}\right)^{\delta_{ij}}\prod_{\scube\in\Lambda^{(j)}}\varphi^{\text{c}(\scube)\Rightarrow i}\left(\frac{z}{\chi_{u}(\cube)}\right)
\eea
where $\chi_{X,u}(\cube)\xrightarrow{q_{4}\rightarrow 1} \chi_{u}(\cube)$. We thus have the following identification
\bea
\mathsf{S}_{i}(q_{4}z)\longrightarrow K^{\pm}_{i}(z),\quad :\frac{\mathsf{W}_{j}(u)}{\mathsf{W}_{j}(Ku)}\prod_{\scube\in \Lambda^{(j)}}\mathsf{A}^{-1}_{\text{c}(\scube)}(\chi_{X,u}(\cube)):\,\longrightarrow \ket{u,\Lambda^{(j)}}.
\eea

Therefore, we conclude that the BPS $qq$-characters associated with $Z=X\times \mathbb{C}$ are related to the vertical representations, the BPS crystal representations, of quiver quantum toroidal algebras. Actually, the terminology BPS $qq$-characters came from the fact that they are related to these BPS crystals.


\section{Semi-classical analysis and Bethe ansatz equation}\label{sec:Betheansatz}
We have discussed the gauge origami system on $\mathbb{C}^4$ with four associated equivariant parameters, $\epsilon_{1,2,3,4}$ obeying the CY$_4$ condition, $\sum_{a \in \underline{\textbf{4}}} \epsilon_a = 0$.
In the multiplicative notation, this is written as $q_1 q_2 q_3 q_4 = 1$.
It has been known that the asymptotic behavior of the partition function in the semi-classical limit, $\epsilon_a \to 0$, provides various physical information.
For example, for 4d $\mathcal{N}=2$ theory on $\mathbb{C}_{12}$, one can extract the Seiberg--Witten prepotential from the partition function in the limit $\epsilon_{1,2} \to 0$~\cite{Nekrasov:2002qd}.
Moreover, one can observe the relation between gauge theory and quantum integrable system in the limit, $\epsilon_1$ finite, $\epsilon_2 \to 0$, which is called the Nekrasov--Shatashvili limit (Bethe/Gauge correspondence)~\cite{Nekrasov:2009rc}.
In particular, the saddle point equation obtained from the gauge theory partition function in the NS limit is identified with the Bethe ansatz equation (BAE) in this context. 
In this section, we explore the NS-type limit of the gauge origami system and examine the corresponding BAE.

In general, the partition function defined on $\times_{a \in S} \mathbb{C}_a$ ($S \subset \underline{\textbf{4}}$) behaves in the semi-classical limit as follows,
\begin{equation}
    \mathcal{Z}_S \ \xrightarrow{\epsilon_{\forall a} \to 0} \ \exp \left( \frac{1}{\epsilon_S} \mathcal{F}_S + \cdots \right)
    \, , \qquad 
    \epsilon_S = \prod_{a \in S} \epsilon_a
    \, ,
\end{equation}
where $\mathcal{F}_S$ is identified with the ``prepotential'' associated with gauge theory on $\times_{a \in S} \mathbb{C}_a$.
We may also consider a partial semi-classical limit to obtain the following,
\begin{equation}
    \mathcal{Z}_S \ \xrightarrow{\epsilon_{a} \to 0} \ \exp \left( \frac{1}{\epsilon_a} \widetilde{\mathcal{W}}_a + \cdots \right)
    \, ,
\end{equation}
where $\widetilde{\mathcal{W}}_a (= \mathcal{F}_a)$ is the twisted superpotential associated with 2d (resp. 3d) gauge theory on $\mathbb{C}_a$ ($\mathbb{C}_a \times \mathbb{S}^1$).
In the context of Bethe/Gauge correspondence, this twisted superpotential is identified with the Yang--Yang functional, and the corresponding critical point condition (twisted F-term condition) provides the BAE,
\begin{equation}
    \exp \left( \frac{\partial \widetilde{\mathcal{W}}_a}{\partial x} \right) = 1
    \, .
\end{equation}
In fact, this is immediately obtained from the behavior of the partition function under the instanton adding/removing operation presented in this paper.
Applying this process to the gauge origami system on $\mathbb{C}^4$, we obtain the following BAE in the NS-type limit.

\begin{theorem}\label{thm:generalBetheansatz}%
Let $\underline{\mathbf{3}} = \{1,2,3\}$.
We denote $q_{a} \to \mathsf{q}_{a}$ ($a \in \underline{\mathbf{3}}$) in the limit $q_4 \to 1$ obeying $\mathsf{q}_1 \mathsf{q}_2 \mathsf{q}_3 = 1$. Consider a gauge origami system having the support on the $\mathbb{C}_4$-plane, where D-branes wrap the subspace $\mathcal{S}\times \mathbb{S}^{1}$, where $\mathcal{S} = \tilde{\mathcal{S}} \times \mathbb{C}_4$ with $\tilde{\mathcal{S}} \subsetneq \underline{\mathbf{3}}$ is generally 
\bea
\mathcal{S}=n\mathbb{C}_{4}+(L_{1}\mathbb{C}^{2}_{14}+M_{1}\mathbb{C}^{2}_{24}+N_{1}\mathbb{C}^{2}_{34})+(L_{2}\mathbb{C}^{3}_{234}+M_{2}\mathbb{C}^{3}_{134}+N_{2}\mathbb{C}^{3}_{124}).
\eea
The corresponding BAE of this gauge origami system obtained in the limit $q_4 \to 1$ is
\beq\label{eq:generalBetheansatz}
        1=-\mathfrak{q}\frac{\mathsf{Q}_{\mathcal{S}}(\mathsf{q}_{1}^{-1}x)\mathsf{Q}_{\mathcal{S}}(\mathsf{q}_{2}^{-1}x)\mathsf{Q}_{\mathcal{S}}(\mathsf{q}_{3}^{-1}x)}{\mathsf{Q}_{\mathcal{S}}(\mathsf{q}_{1}x)\mathsf{Q}_{\mathcal{S}}(\mathsf{q}_{2}x)\mathsf{Q}_{\mathcal{S}}(\mathsf{q}_{3}x)},
\eeq
    where the corresponding $\mathsf{Q}$-functions for the D2/D4/D6 system are given by \eqref{eq:Q-func_D2}, \eqref{eq:Q-func_D4}, \eqref{eq:Q-func_D6}, \eqref{eq:D4foldedBethe}, \eqref{eq:D6foldedBethe}.
\end{theorem}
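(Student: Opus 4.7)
The plan is to derive the BAE directly from the instanton partition function by taking the NS-type limit $q_4 \to 1$ of the recursion relations that were used to construct the $qq$-characters in Sections \ref{sec:D2qqcharacter}, \ref{sec:D4qqcharacter}, and \ref{sec:D6qqcharacter}. The starting point is the observation that, in the limit $q_4 \to 1$, the structure function $g_{\bar 4}(x)$ degenerates to $\sfg(x) = \prod_{i=1}^3 (1-\mathsf{q}_i x)/(1-\mathsf{q}_i^{-1}x)$ (see footnote \ref{footnote:structure-function}), and $\mathscr{S}_{A}$, $\mathscr{V}_a$ collapse to factors of the same type. This is why a single universal right-hand side can emerge from three geometrically distinct setups.

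First I would define, case by case, the $\mathsf{Q}$-function as the characteristic polynomial of the collection of $q$-coordinates of the active boxes in the multi-dimensional partition labelling the fixed point: $\mathsf{Q}_{\mathcal{S}}(x) = \prod_{\Box}(1-\chi_\Box/x)$, where $\Box$ runs over the 1d, 2d or 3d partition data appropriate to D2, D4 or D6 supports along $\mathbb{C}_4$. Then, using the one-box recursion formulas, \eqref{eq:D4qqrecursion} and its D2 analogue (established in the proof of Thm.~\ref{thm:D2qq-commute}) and the D6 analogue \eqref{eq:app-D6U1recursionformula}, the ratio $\mathcal{Z}[\Lambda+\Box]/\mathcal{Z}[\Lambda]$ becomes, after sending $q_4 \to 1$, a ratio of $\mathsf{Q}$-functions evaluated at $\mathsf{q}_{1,2,3}^{\pm 1}x$. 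The NS-limit saddle point for the twisted superpotential $\widetilde{\mathcal{W}}_4$ amounts to demanding that $\mathfrak q \cdot \mathcal{Z}[\Lambda+\Box]/\mathcal{Z}[\Lambda] = 1$ at the critical configuration, which rearranges directly into \eqref{eq:generalBetheansatz}. The sign and the prefactor $\mathfrak{q}$ on the right are tracked by carrying along the Chern--Simons-like $-q_a^{-1}\chi/v$ factors that appear in the normalized $\mathcal{Z}^{\D6}_{\bar a}$ and their D2/D4 counterparts, combined with the reflection identities \eqref{eq:reflec_structfunc}.

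Next I would treat the folded/mixed sectors where $\mathcal{S}$ contains several stacks simultaneously (the $L_i$, $M_i$, $N_i$ data). My strategy here is to exploit the fusion theorems, Thm.~\ref{thm:D2toD4fusion} and Thm.~\ref{thm:D4toD6fusion}, together with the BPS/CFT correspondences of Thms.~\ref{thm:cplvortexBPSCFT}, \ref{thm:spiked-qq-BPSCFT}, \ref{thm:tetra-origamiBPSCFT}, to write any such mixed partition function as a nested correlator of $qq$-characters built from a common pool of D0 root currents. Because every fixed point of the mixed system still corresponds to adding a single D0 box at some $\chi$, the NS-limit saddle-point equation remains a single-box condition, and all bifundamental contributions $\mathcal{Z}^{\D\cdot\tbar\D\cdot}$ contribute only through additional factors that assemble into the \emph{same} $\mathsf{Q}_{\mathcal{S}}$ on both sides of the equation, yielding \eqref{eq:D4foldedBethe}, \eqref{eq:D6foldedBethe} as the proper definitions of $\mathsf{Q}_{\mathcal{S}}$ in the folded cases.

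The main obstacle will be the folded/mixed part: verifying that the cross-contributions between stacks of different dimensions cancel in such a way that only the universal ratio $\sfg(\chi/x)$-type factors survive, so that the BAE truly takes the form \eqref{eq:generalBetheansatz} without stack-labelled modifications of the structure function. A secondary subtlety is the careful handling of the square-root decomposition of $\mathbf{V}_{\mathrm{inst}}$ chosen in \eqref{eq:D6tetrainstch}, \eqref{eq:D4squareroot}, since different choices produce different Chern--Simons phases that, while inessential for the partition function itself, can shift the constant prefactor $-\mathfrak{q}$ in the BAE; to fix this unambiguously I would align all three cases to the same polarization based on $\mathbf{P}_{\bar 4}$, which is the natural one once $\mathbb{C}_4$ has been singled out as the NS direction.
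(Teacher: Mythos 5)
Your proposal follows essentially the same route as the paper: define $\mathsf{Q}_{\mathcal{S}}(x)$ as the characteristic polynomial of the $q$-coordinates at the fixed point, use the one-box recursion relation $\mathcal{Z}[\Lambda+\Box]/\mathcal{Z}[\Lambda]$ for each of the D2/D4/D6 supports, and observe that in the limit $q_4\to 1$ the factors $\mathscr{V}_4$, $\mathscr{S}_{a4}$, $g_{ab4}$ trivialize so that the twisted F-term condition collapses to the universal ratio of $\mathsf{Q}$-functions at $\mathsf{q}_{1,2,3}^{\pm 1}x$, with the folded sectors handled by enlarging $\mathsf{Q}_{\mathcal{S}}$ to a product over all stacks. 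The only cosmetic differences are that the paper treats the mixed configurations by direct computation on the generalized $\mathsf{Q}$-function rather than via the fusion theorems, and that for the pure D2 case one must pass to $\U(n)$ (the $\U(1)$ recursion being trivial) to obtain a nontrivial equation.
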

We provide the proof of this Theorem for each system in the following.

\subsection{D4 system}
We start with the D4 system, which realizes 5d $\mathcal{N}=1$ gauge theory compactified on a circle with the adjoint matter, i.e., 5d $\mathcal{N} = 1^*$ theory.
The corresponding BAE was obtained in \cite{Nekrasov:2009rc,Chen:2012we}.
We here rederive the BAE from our analysis presented in Appendix~\ref{app:U1partition}.

Consider 5d U(1) gauge theories defined on $\mathbb{C}^{2}_{a4}\times \mathbb{S}^{1}$, where $a \in \underline{\mathbf{3}}$. See section~\ref{sec:BAE_general} for $\U(n)$ case. Namely, the gauge theories share a subspace $\mathbb{C}_{4}\times \mathbb{S}^{1}$. Under generic $\Omega$-background, the structure functions $\mathscr{Y}^{a4}_{\lambda,v}(x)\,\,(a\in\underline{\mathbf{3}})$ can be written using the $(1,1)$-type description in section~\ref{sec:decomp5d} as 
\begin{equation}
    \mathscr{Y}^{a4}_{\lambda,v}(x)=\prod_{i=1}^{\infty}\frac{1-x^{(a)}_{v;i}/x}{1-q_{a}x^{(a)}_{v;i}/x}=\prod_{i=1}^{\infty}\mathscr{V}_{a}\left(\frac{x_{v;i}^{(a)}}{x}\right)^{-1}
\end{equation}
where 
\begin{equation}
    x^{(a)}_{v;i}=vq_{a}^{i-1}q_{4}^{\lambda_{i}}.
\end{equation}
Introducing the $\mathsf{Q}$-functions~\cite{Nekrasov:2013xda} under generic instanton background
\begin{equation}
    \mathsf{Q}_{a4}(x)=\prod_{i=1}^{\infty}\left(1-x^{(a)}_{v;i}/x\right)
    \label{eq:Q-func_D4}
\end{equation}
we may rewrite the $\mathscr{Y}$-functions as 
\begin{equation}
    \mathscr{Y}^{a4}_{\lambda,v}(x)=\frac{\mathsf{Q}_{a4}(x)}{\mathsf{Q}_{a4}(q_{a}^{-1}x)}.
\end{equation}

The $\D4$ partition function behaves as Thm.~\ref{app:thm-D4recursion} and \eqref{eq:app-D4U1recursionformula}:
\begin{equation}
\begin{split}
\frac{\mathfrak{q}^{|\lambda+\Abox|}\widetilde{\mathcal{Z}}^{\D4}_{a4}[\lambda+\Bbox]}{\mathfrak{q}^{|\lambda|}\widetilde{\mathcal{Z}}_{a4}^{\D4}[\lambda]}&=-\mathfrak{q}\frac{\mathscr{Y}_{\lambda,v}^{a4}(q_{b}^{-1}\chi_{a4,v}(\Bbox))\mathscr{Y}^{a4}_{\lambda+\Abox,v}(q_{c}^{-1}\chi_{a4,v}(\Bbox))}{\mathscr{Y}^{a4}_{\lambda,v}(\chi_{a4,v}(\Bbox))\mathscr{Y}^{a4}_{\lambda+\Abox,v}(q_{bc}^{-1}\chi_{a4,v}(\Bbox))}\\
    &=-\mathfrak{q}\frac{\mathscr{S}_{a4}\left(q_{c4}\chi_{a4,v}(\Bbox)/x\right)}{\mathscr{S}_{a4}(q_{bc4}\chi_{a4,v}(\Bbox)/x)}\frac{\mathscr{Y}_{\lambda,v}^{a4}(q_{ac}x)\mathscr{Y}^{a4}_{\lambda,v}(q_{ab}x)}{\mathscr{Y}^{a4}_{\lambda,v}(q_{abc}x)\mathscr{Y}^{a4}_{\lambda,v}(q_{a}x)}
\end{split}
\end{equation}
where $a,b,c\in\underline{\mathbf{3}}$ and $x=q_{4}\chi_{a4,v}(\Bbox)$. In the NS limit $q_{4}\rightarrow 1$, the left hand side becomes $\exp\left(\partial\widetilde{\mathcal{W}}_{4}/\partial x\right)$ and $\mathscr{S}_{a4}(x)\rightarrow 1$. Hence, the twisted F-term condition reads
\begin{equation}
    1=-\mathfrak{q}\frac{\mathscr{Y}^{a4}_{\lambda,v}(q_{ac}x)\mathscr{Y}^{a4}_{\lambda,v}(q_{ab}x)}{\mathscr{Y}^{a4}_{\lambda,v}(q_{abc}x)\mathscr{Y}^{a4}_{\lambda,v}(q_{a}x)}.
\end{equation}
 Denoting the solution to this twisted F-term condition as $\lambda_{\ast}$ (critical configuration; limit shape), we can rewrite the $\mathscr{Y}$-functions in terms of the $\mathsf{Q}$-functions, where the $x$-variables will take values $x_{v;i}^{(a)}|_{\lambda=\lambda_{\ast}}$. The BAE is then written as 
\begin{equation} \label{eq:Bethe_eq_D4}
    1=-\mathfrak{q}\frac{\mathsf{Q}_{a4}(\mathsf{q}_{a}^{-1}x)\mathsf{Q}_{a4}(\mathsf{q}_{b}^{-1}x)\mathsf{Q}_{a4}(\mathsf{q}_{c}^{-1}x)}{\mathsf{Q}_{a4}(\mathsf{q}_{a}x)\mathsf{Q}_{a4}(\mathsf{q}_{b}x)\mathsf{Q}_{a4}(\mathsf{q}_{c}x)},\quad \{a,b,c\}=\three
\end{equation}
where $q_{a,b,c}\rightarrow \mathsf{q}_{a,b,c}$ under $q_{4}\rightarrow 1$ (see also section \ref{sec:QTgl1}). Using the structure function of the quantum toroidal $\mathfrak{gl}_{1}$ in \eqref{eq:gl1structurefunction}, we can rewrite this BAE as
\begin{equation}\label{eq:D4QTgl1Bethe}
    1=-\mathfrak{q}\prod_{i=1}^{\infty}\mathsf{g}\left(x_{v;i}^{(a)}/x\right).
\end{equation}
Note that we have the condition $\mathsf{q}_{a}\mathsf{q}_{b}\mathsf{q}_{c}=1$. This is the BAE for the elliptic Calogero--Moser/Ruijsenaars--Schneider model, which is the integrable system associated with $\widehat{A}_{0}$ quiver gauge theory~\cite{Donagi:1995cf,DHoker:1997hut,Nekrasov:2012xe}. Note that in the quantum integrable model side, the parameter $\mathsf{q}_{a}$ corresponds to the Planck constant, while one of the parameters $\mathsf{q}_{b},\mathsf{q}_{c}$ correspond to the mass parameter.
The instanton counting parameter $\mathfrak{q}$ becomes the elliptic parameter of the potential function. The weak coupling limit corresponds to the trigonometric limit.
See, for example,~\cite{Kimura:2022zsx}.

\subsection{D6 system}
We then consider the D6 system and derive the corresponding BAE. We consider 7d U(1) gauge theories defined on $\mathbb{C}^{3}_{ab4}\times \mathbb{S}^{1}\,(ab\in\{12,23,13\})$. Under generic $\Omega$-background, the structure functions $\mathscr{W}^{ab4}_{\pi,v}(x)\,(ab\in\{12,13,23\})$ associated with the plane partition configuration $\pi$ can be written using the $(2,1)$-type description in section~\ref{sec:decomp7d} as 
\begin{equation}
    \mathscr{W}^{ab4}_{\pi,v}(x)=\prod_{i,j=1}^{\infty}\frac{(1-x^{(ab)}_{v;i,j}/x)(1-q_{ab}x^{(ab)}_{v;i,j}/x)}{(1-q_{a}x^{(ab)}_{v;i,j}/x)(1-q_{b}x^{(ab)}_{v;i,j}/x)}=\prod_{i,j=1}^{\infty}\mathscr{S}_{ab}\left(\frac{x_{v;i,j}^{(ab)}}{x}\right)^{-1}
\end{equation}
where we introduced 
\begin{equation}
    x^{(ab)}_{v;i,j}=vq_{a}^{i-1}q_{b}^{j-1}q_{4}^{\pi_{ij}}.
\end{equation}
Similar to the D4-case, we introduce the $\mathsf{Q}$-functions as
\begin{equation}
    \mathsf{Q}_{ab4}(x)=\prod_{i,j=1}^{\infty}\left(1-x^{(ab)}_{v;i,j}/x\right)
    \label{eq:Q-func_D6}
\end{equation}
and then the $\mathscr{W}$-functions are rewritten as 
\begin{equation}
    \mathscr{W}^{ab4}_{\pi,v}(x)=\frac{\mathsf{Q}_{ab4}(x)\mathsf{Q}_{ab4}(q_{ab}^{-1}x)}{\mathsf{Q}_{ab4}(q_{a}^{-1}x)\mathsf{Q}_{ab4}(q_{b}^{-1}x)}.
\end{equation}
The recursion formula of the $\D6$-partition function comes from \eqref{eq:app-D6U1recursionformula} and Thm.~\ref{eq:app-thm-D6U1recursionformula}:
\begin{equation}
\begin{split}
    \frac{\mathfrak{q}^{|\pi+\cube|}\widetilde{\mathcal{Z}}^{\D6}_{ab4}[\pi+\cube]}{\mathfrak{q}^{|\pi|}\widetilde{\mathcal{Z}}^{\D6}_{ab4}[\pi]}&=-\mathfrak{q}\frac{\mathscr{W}^{ab4}_{\pi+\cube,v}(q_{ab4}\chi_{ab4,v}(\cube))}{\mathscr{W}^{ab4}_{\pi,v}(\chi_{ab4,v}(\cube))}\\
    &=-\mathfrak{q}\,g_{ab4}\left(\frac{\chi_{ab4,v}(\cube)}{q_{ab}x}\right)\frac{\mathscr{W}^{ab4}_{\pi,v}(q_{ab}x)}{\mathscr{W}^{ab4}_{\pi,v}(q_{4}^{-1}x)}
\end{split}
\end{equation}
where $x=q_{4}\chi_{ab4,v}(\cube)$. Taking the NS limit $q_{4}\rightarrow 1$ and using $g_{ab4}(x)\xrightarrow{q_4 \to 1} 1$, we obtain the twisted F-term condition as
\begin{equation}
    1=-\mathfrak{q}\frac{\mathscr{W}^{ab4}_{\pi,v}(q_{ab}x)}{\mathscr{W}^{ab4}_{\pi,v}(q_{4}^{-1}x)}.
\end{equation}
Denoting the critical configuration $\pi_{\ast}$ and using the $\mathsf{Q}$-functions, we have the following BAE:
\begin{equation} \label{eq:Bethe_eq_D6}
    1=-\mathfrak{q}\frac{\mathsf{Q}_{ab4}(\mathsf{q}_{a}^{-1}x)\mathsf{Q}_{ab4}(\mathsf{q}_{b}^{-1}x)\mathsf{Q}_{ab4}(\mathsf{q}_{c}^{-1}x)}{\mathsf{Q}_{ab4}(\mathsf{q}_{a}x)\mathsf{Q}_{ab4}(\mathsf{q}_{b}x)\mathsf{Q}_{ab4}(\mathsf{q}_{c}x)},\quad a,b,c\in\underline{\mathbf{3}}
\end{equation}
where after taking the limit $q_{4}\rightarrow 1$, we used $\mathsf{q}_{1,2,3}$ with the condition $\mathsf{q}_{1}\mathsf{q}_{2}\mathsf{q}_{3}=1$. Using the structure function of quantum toroidal $\mathfrak{gl}_{1}$ in \eqref{eq:gl1structurefunction}, this is rewritten as 
\begin{equation}\label{eq:D6QTgl1Bethe}
    1=-\mathfrak{q}\prod_{i,j=1}^{\infty}\mathsf{g}\left(x^{(ab)}_{v;i,j}/x\right).
\end{equation}

\subsection{D2 system}\label{sec:D2_BetheEq}
Let us consider next the $\D2$ system and derive the corresponding BAE. We consider the 3d gauge theories on $\mathbb{C}_{4}\times \mathbb{S}^{1}$. First, let us consider the $\U(1)$ gauge theory. The $\mathscr{U}$-function is written as \eqref{eq:D2Ufunctiondef}
\beq
\mathscr{U}^{4}_{k,v}(x)=\left(1-x_{v}/x\right),\quad x_{v}=vq_{4}^{k}.
\eeq
We can define the $\mathsf{Q}$-function with the $\U(1)$ instanton (vortex) background as 
\bea
\mathsf{Q}_4(x)=(1-x_{v}/x).
\eea
We can do the same procedure we have done in previous sections, by studying the recursion relation of the $\U(1)$ partition function. However, for the $\U(1)$ case, the partition function is actually trivial (see Appendix~\ref{app:D2U1partitionfunction}, \eqref{eq:D2U1partition1}, \eqref{eq:D2U1partition2}). Hence, we will obtain a trivial BAE for such a case. 

To obtain a nontrivial BAE, we need to consider the $\U(n)$ gauge theory on $\mathbb{C}_{4}\times \mathbb{S}^{1}$. The $\mathsf{Q}$-function is generalized to 
\bea
\mathsf{Q}_4(x)=\prod_{\alpha=1}^{n}\left(1-x_{v,\alpha}/x\right),\quad x_{v_{\alpha}}=v_{\alpha}q_{4}^{k_{\alpha}}.
\label{eq:Q-func_D2}
\eea
The recursion formula of the $\D2$ $\U(n)$ partition function comes from \eqref{eq:app-D2Unrecursion}:
\bea
\frac{\mathfrak{q}^{|\vec{k}+\Abox|}\mathcal{Z}^{\D2}_{4}[\vec{v},\vec{k}+\Bbox]}{\mathfrak{q}^{|\vec{k}|}\mathcal{Z}_{4}^{\D2}[\vec{v},\vec{k}]}&=\prod_{\beta\neq \alpha}\frac{\mathscr{U}^{4}_{k_{\beta},v_{\beta}}(x)\mathscr{U}^{4}_{k_{\beta},v_{\beta}}(q_{ij}x)\mathscr{U}^{4}_{k_{\beta},v_{\beta}}(q_{i4}^{-1}x)\mathscr{U}^{4}_{k_{\beta},v_{\beta}}(q_{j4}^{-1}x)}{\mathscr{U}^{4}_{k_{\beta},v_{\beta}}(q_{4}^{-1}x)\mathscr{U}^{4}_{k_{\beta},v_{\beta}}(q_{ij4}^{-1})\mathscr{U}^{4}_{k_{\beta},v_{\beta}}(q_{i}x)\mathscr{U}^{4}_{k_{\beta},v_{\beta}}(q_{j}x)}\\
&\times -\frac{\mathscr{U}^{4}_{k_{\alpha},v_{\alpha}}(q_{i4}^{-1}x)}{\mathscr{U}^{4}_{k_{\alpha}+\Abox,v_{\alpha}}(q_{i}x)}\frac{\mathscr{U}^{4}_{k_{\alpha},v_{\alpha}}(q_{j4}^{-1})}{\mathscr{U}^{4}_{k_{\alpha}+\Abox,v_{\alpha}}(q_{j}x)}\frac{\mathscr{U}^{4}_{k_{\alpha}+\Abox,v_{\alpha}}(q_{ij}x)}{\mathscr{U}^{4}_{k_{\alpha},v_{\alpha}}(q_{ij4}^{-1})}\frac{\mathscr{U}^{4}_{k_{\alpha}+\Abox,v_{\alpha}}(x)}{\mathscr{U}^{4}_{k_{\alpha},v_{\alpha}}(q_{4}^{-1}x)}
\eea
where $x=q_{4}\chi_{4,v_{\alpha}}(\Bbox)$. Noticing that $\mathscr{U}^{4}_{k+\Bbox,v}(x)/\mathscr{U}^{4}_{k,v}(x)=\mathscr{V}_{4}(\chi_{4,v}(\Bbox)/x)$ and that we have $\mathscr{V}_{4}(x)\rightarrow 1$ under the limit $q_{4}\rightarrow 1$, we obtain the following BAE:
\bea
1=-\mathfrak{q}\frac{\mathsf{Q}_4(\mathsf{q}_{1}^{-1}x)\mathsf{Q}_4(\mathsf{q}_{2}^{-1}x)\mathsf{Q}_4(\mathsf{q}_{3}^{-1}x)}{\mathsf{Q}_4(\mathsf{q}_{1}x)\mathsf{Q}_4(\mathsf{q}_{2}x)\mathsf{Q}_4(\mathsf{q}_{3}x)}.
\eea
Using the structure function of the quantum toroidal $\mathfrak{gl}_{1}$, we have 
\begin{equation}\label{eq:D2QTgl1Bethe}
    1=-\mathfrak{q}\prod_{\alpha=1}^{n}\mathsf{g}\left(x_{v_{\alpha}}/x\right).
\end{equation}
We note that this BAE appeared in \cite{Chen:2012we} from a 3d $\mathcal{N} = 2$ theory with 3 adjoint chiral matters with the twisted mass $\epsilon_{1,2,3}$, where the gauge group rank corresponds to the number of Bethe roots.

\paragraph{Flavor branes}
Based on previous sections, we expect that the vortex partition function on $\mathbb{C}_{a}\times \mathbb{S}^{1}$ including a flavor D8-$\overline{\D8}$ brane is obtained from the operator product
\bea
\bra{0}\widetilde{\mathsf{Z}}^{K}(\mu)\prod_{\alpha=1}^{n_{a}}\mathsf{S}_{a}(v_{a,\alpha}q_{a}^{k_{a}^{(\alpha)}})\ket{0}.
\eea
The non-perturbative partition function is then modified to 
\bea
\mathcal{Z}^{\D2}_{\text{vor.}}\rightarrow \prod_{\alpha=1}^{n_{a}}\prod_{\Abox\in k_{a}^{(\alpha)}}\left(K\frac{1-K^{-1}\chi_{a,v_{a,\alpha}}(\Bbox)/\mu}{1-\chi_{a,v_{a,\alpha}}(\Bbox)/\mu}\right)\times \mathcal{Z}^{\D2}_{\text{vor.}}
\eea
We may interpret that the $\D8$ and $\overline{\D8}$-branes give the fundamental chiral and anti-chiral multiplets. Since the D2 theory we are considering is a reduction from a 16 SUSY theory, it is natural to consider the same number of chiral and anti-chiral multiplets, that form a hypermultiplet.

Using this, the recursion formula is modified to 
\bea
\frac{\mathfrak{q}^{|\vec{k}+\Abox|}\mathcal{Z}^{\D2}_{4}[\vec{v},\vec{k}+\Bbox]}{\mathfrak{q}^{|\vec{k}|}\mathcal{Z}_{4}^{\D2}[\vec{v},\vec{k}]}&=\prod_{\beta\neq \alpha}\frac{\mathscr{U}^{4}_{k_{\beta},v_{\beta}}(x)\mathscr{U}^{4}_{k_{\beta},v_{\beta}}(q_{ij}x)\mathscr{U}^{4}_{k_{\beta},v_{\beta}}(q_{i4}^{-1}x)\mathscr{U}^{4}_{k_{\beta},v_{\beta}}(q_{j4}^{-1}x)}{\mathscr{U}^{4}_{k_{\beta},v_{\beta}}(q_{4}^{-1}x)\mathscr{U}^{4}_{k_{\beta},v_{\beta}}(q_{ij4}^{-1})\mathscr{U}^{4}_{k_{\beta},v_{\beta}}(q_{i}x)\mathscr{U}^{4}_{k_{\beta},v_{\beta}}(q_{j}x)}\\
&\times -\frac{\mathscr{U}^{4}_{k_{\alpha},v_{\alpha}}(q_{i4}^{-1}x)}{\mathscr{U}^{4}_{k_{\alpha}+\Abox,v_{\alpha}}(q_{i}x)}\frac{\mathscr{U}^{4}_{k_{\alpha},v_{\alpha}}(q_{j4}^{-1})}{\mathscr{U}^{4}_{k_{\alpha}+\Abox,v_{\alpha}}(q_{j}x)}\frac{\mathscr{U}^{4}_{k_{\alpha}+\Abox,v_{\alpha}}(q_{ij}x)}{\mathscr{U}^{4}_{k_{\alpha},v_{\alpha}}(q_{ij4}^{-1})}\frac{\mathscr{U}^{4}_{k_{\alpha}+\Abox,v_{\alpha}}(x)}{\mathscr{U}^{4}_{k_{\alpha},v_{\alpha}}(q_{4}^{-1}x)}\\
&\times \frac{1-K\mu/\chi_{4,v_{\alpha}}(\Bbox)}{1-\mu/\chi_{4,v_{\alpha}}(\Bbox)}.
\eea
After taking the NS limit $q_{4}\rightarrow 1$, we have the BAE involving the additional polynomials, 
\bea\label{eq:MacMahon1Bethe}
1=-\mathfrak{q}\frac{1-\widetilde{K}\mu/x}{1-\mu/x}\frac{\mathsf{Q}_4(\mathsf{q}_{1}^{-1}x)\mathsf{Q}_4(\mathsf{q}_{2}^{-1}x)\mathsf{Q}_4(\mathsf{q}_{3}^{-1}x)}{\mathsf{Q}_4(\mathsf{q}_{1}x)\mathsf{Q}_4(\mathsf{q}_{2}x)\mathsf{Q}_4(\mathsf{q}_{3}x)},
\eea
where we used $K\rightarrow \widetilde{K}$ at the limit\footnote{We use different symbols for the parameters $K$ and $\widetilde{K}$, because for example when $K=q_{1,2,3}^{\pm1}$, we have $\widetilde{K}=\sfq_{1,2,3}^{\pm1}$ after taking the limit. When $K$ is generic, they are identified as $K=\widetilde{K}$.} $q_{4}\rightarrow 1$. This corresponds to the trigonometric version of the expression shown in \cite[Prop. 7.13]{Cao:2023lon}, where they obtained the BAE associated with the MacMahon representation.

After specializing the parameter as $\widetilde{K}=\mathsf{q}_{1},\mathsf{q}_{2},\mathsf{q}_{3},\sfq_{1}^{-1},\sfq_{2}^{-1},\sfq_{3}^{-1}$, we obtain a physical setup where the $\D6_{\bar{i}}\,(i=1,2,3)$ branes or $\overline{\D6}_{\bar{i}}\,(i=1,2,3)$ branes will play the role of a flavor branes:
\bea\label{eq:FockBethe}
1&=-\mathfrak{q}\frac{1-\mathsf{q}_{i}\mu/x}{1-\mu/x}\frac{\mathsf{Q}_4(\mathsf{q}_{1}^{-1}x)\mathsf{Q}_4(\mathsf{q}_{2}^{-1}x)\mathsf{Q}_4(\mathsf{q}_{3}^{-1}x)}{\mathsf{Q}_4(\mathsf{q}_{1}x)\mathsf{Q}_4(\mathsf{q}_{2}x)\mathsf{Q}_4(\mathsf{q}_{3}x)},\\
1&=-\mathfrak{q}\frac{1-\mathsf{q}_{i}^{-1}\mu/x}{1-\mu/x}\frac{\mathsf{Q}_4(\mathsf{q}_{1}^{-1}x)\mathsf{Q}_4(\mathsf{q}_{2}^{-1}x)\mathsf{Q}_4(\mathsf{q}_{3}^{-1}x)}{\mathsf{Q}_4(\mathsf{q}_{1}x)\mathsf{Q}_4(\mathsf{q}_{2}x)\mathsf{Q}_4(\mathsf{q}_{3}x)}.
\eea
We may also add multiple flavor branes. Generally, the BAE is modified to 
\bea\label{eq:MacMahonNBethe}
1=-\mathfrak{q}\prod_{\alpha=1}^{n_\text{F}}\frac{1-\widetilde{K}_{\alpha}\mu_{\alpha}/x}{1-\mu_{\alpha}/x}\frac{\mathsf{Q}_4(\mathsf{q}_{1}^{-1}x)\mathsf{Q}_4(\mathsf{q}_{2}^{-1}x)\mathsf{Q}_4(\mathsf{q}_{3}^{-1}x)}{\mathsf{Q}_4(\mathsf{q}_{1}x)\mathsf{Q}_4(\mathsf{q}_{2}x)\mathsf{Q}_4(\mathsf{q}_{3}x)}.
\eea
Specializing the parameters as $\{K_{\alpha}\}\rightarrow \{q_{1},q_{2},q_{3}\}$ (which means $\{\widetilde{K}_{\alpha}\}\rightarrow \{\sfq_{1},\sfq_{2},\sfq_{3}\}$) give
\bea\label{eq:FockLMN}
1&=-\mathfrak{q}\prod_{\alpha=1}^{L}\frac{1-\mathsf{q}_{1}u_{\alpha}/x}{1-u_{\alpha}/x}\prod_{\beta=1}^{M}\frac{1-\mathsf{q}_{2}v_{\beta}/x}{1-v_{\beta}/x}\prod_{\gamma=1}^{N}\frac{1-\mathsf{q}_{3}w_{\gamma}/x}{1-w_{\gamma}/x}\\
&\qquad\qquad  \times\frac{\mathsf{Q}_4(\mathsf{q}_{1}^{-1}x)\mathsf{Q}_4(\mathsf{q}_{2}^{-1}x)\mathsf{Q}_4(\mathsf{q}_{3}^{-1}x)}{\mathsf{Q}_4(\mathsf{q}_{1}x)\mathsf{Q}_4(\mathsf{q}_{2}x)\mathsf{Q}_4(\mathsf{q}_{3}x)}
\eea
where we relabeled $\{\mu_{\alpha}\}_{\alpha=1}^{n_\text{F}=L+M+N}\rightarrow \{u_{\alpha}\}_{\alpha=1}^{L}\cup \{v_{\beta}\}_{\beta=1}^{M}\cup\{w_{\gamma}\}_{\gamma=1}^{N} $. The most general case is 
\bea\label{eq:Fockgeneral}
1&=-\mathfrak{q}\prod_{\alpha=1}^{L_{+}}\frac{1-\mathsf{q}_{1}u^{+}_{\alpha}/x}{1-u^{+}_{\alpha}/x}\prod_{\alpha=1}^{L_{-}}\frac{1-\mathsf{q}_{1}^{-1}u^{-}_{\alpha}/x}{1-u^{-}_{\alpha}/x}\prod_{\beta=1}^{M_{+}}\frac{1-\mathsf{q}_{2}v^{+}_{\beta}/x}{1-v^{+}_{\beta}/x}\prod_{\beta=1}^{M_{-}}\frac{1-\mathsf{q}_{2}^{-1}v^{-}_{\beta}/x}{1-v^{-}_{\beta}/x}\\
&\qquad \times \prod_{\gamma=1}^{N_{+}}\frac{1-\mathsf{q}_{3}w^{+}_{\gamma}/x}{1-w^{+}_{\gamma}/x}\prod_{\gamma=1}^{N_{-}}\frac{1-\mathsf{q}_{3}^{-1}w^{-}_{\gamma}/x}{1-w^{-}_{\gamma}/x}\frac{\mathsf{Q}_4(\mathsf{q}_{1}^{-1}x)\mathsf{Q}_4(\mathsf{q}_{2}^{-1}x)\mathsf{Q}_4(\mathsf{q}_{3}^{-1}x)}{\mathsf{Q}_4(\mathsf{q}_{1}x)\mathsf{Q}_4(\mathsf{q}_{2}x)\mathsf{Q}_4(\mathsf{q}_{3}x)}
\eea
which comes from $\{K_{a}\}\rightarrow \{q_{1}^{\pm1},q_{2}^{\pm1},q_{3}^{\pm1}\}$ and $\{\mu_{\alpha}\}_{\alpha=1}^{n_{\text{F}}}\rightarrow \{u_{\alpha}^{\pm}\}_{\alpha=1}^{L_{\pm}}\cup\{v_{\beta}^{\pm}\}_{\beta=1}^{M_{\pm}}\cup\{w_{\gamma}^{\pm}\}_{\gamma=1}^{N_{\pm}}$.

This is a gauge theoretic derivation and generalizations of the BAEs of \cite[eq.~(5.5), (6.1)]{Feigin:2015raa}, \cite{Litvinov:2013zda}. The equations \eqref{eq:MacMahon1Bethe} and \eqref{eq:MacMahonNBethe} are the BAEs of MacMahon representation and its tensor product representations. Other equations \eqref{eq:FockBethe}, \eqref{eq:FockLMN}, \eqref{eq:Fockgeneral} correspond to the BAEs of the representations $\mathcal{F}_{i}\,\,(\overline{\mathcal{F}}_{i})$, $\mathcal{F}_{1}^{\otimes L}\otimes \mathcal{F}_{2}^{\otimes M}\otimes \mathcal{F}_{3}^{\otimes N}$, and $\mathcal{F}_{1}^{\otimes L_{+}}\otimes\overbar{\mathcal{F}}_{1}^{\otimes L_{-}}\otimes  \mathcal{F}_{2}^{\otimes M_{+}}\otimes \overline{\mathcal{F}}_{2}^{\otimes M_{-}} \otimes \mathcal{F}_{3}^{\otimes N_{+}}\otimes \overline{\mathcal{F}}_{3}^{\otimes N_{-}}$, respectively, where we denoted the representation with negative central charges as $\overline{\mathcal{F}}_{i}$. The prefactors are determined from the representations.

Another specialization is to take the limit ${K}_{\alpha}\rightarrow 0,\infty$, which gives the following BAE:
\bea
1=-\mathfrak{q}\prod_{\alpha=1}^{n_\text{F}}\frac{1}{1-\mu_{\alpha}/x}\frac{\mathsf{Q}_4(\mathsf{q}_{1}^{-1}x)\mathsf{Q}_4(\mathsf{q}_{2}^{-1}x)\mathsf{Q}_4(\mathsf{q}_{3}^{-1}x)}{\mathsf{Q}_4(\mathsf{q}_{1}x)\mathsf{Q}_4(\mathsf{q}_{2}x)\mathsf{Q}_4(\mathsf{q}_{3}x)}.
\eea
This corresponds to the setup where the $\overline{\D8}$-brane is decoupled from the setup. We expect the underlying quantum algebra to be the shifted quantum toroidal algebras for these cases.

To summarize, we have the following BAE from the D2 system with additional flavor branes.
\begin{theorem}\label{thm:BAEwithPolynom}
In the NS limit of the D2 system with flavor, we obtain the BAE involving the additional polynomials $a(x)$ and $d(x)$ specifying the representation of quantum toroidal $\mathfrak{gl}_1$,
\bea
1=-\mathfrak{q}\frac{a(x)}{d(x)}\frac{\mathsf{Q}_4(\mathsf{q}_{1}^{-1}x)\mathsf{Q}_4(\mathsf{q}_{2}^{-1}x)\mathsf{Q}_4(\mathsf{q}_{3}^{-1}x)}{\mathsf{Q}_4(\mathsf{q}_{1}x)\mathsf{Q}_4(\mathsf{q}_{2}x)\mathsf{Q}_4(\mathsf{q}_{3}x)}.
\eea
\end{theorem}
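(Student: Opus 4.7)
The plan is to extend the argument already carried out for the D2 system without flavor to include the $\D8\text{-}\overline{\D8}$ insertion, and to show that this insertion contributes exactly the polynomial ratio $a(x)/d(x)$ in the BAE. First, I would write the flavor-deformed partition function as the vertex-operator correlator
\[
\bra{0}\prod_{\beta=1}^{n_{\mathrm{F}}}\widetilde{\mathsf{Z}}^{K_{\beta}}(\mu_{\beta})\prod_{\alpha=1}^{n_{4}}\mathsf{S}_{4}(v_{4,\alpha}q_{4}^{k_{4}^{(\alpha)}})\ket{0}
\]
and use the OPE of $\widetilde{\mathsf{Z}}^{K}(x)$ with $\mathsf{S}_{4}(x')$ (which only produces rational zero/pole factors, no dynamical vertex contribution) to show that, at the fixed point labeled by the 1d partitions $\vec k$, the flavor insertion multiplies the bare $\U(n)$ coupled-vortex partition function by
\[
\mathcal{Z}^{\D2\text{-flavor}}_{\text{vor.}}=\prod_{\beta=1}^{n_{\mathrm{F}}}\prod_{\alpha=1}^{n_{4}}\prod_{\Abox\in k_{4}^{(\alpha)}}K_{\beta}\,\frac{1-K_{\beta}^{-1}\chi_{4,v_{4,\alpha}}(\Bbox)/\mu_{\beta}}{1-\chi_{4,v_{4,\alpha}}(\Bbox)/\mu_{\beta}}\,\times\,\mathcal{Z}^{\D2}_{\text{vor.}},
\]
exactly as sketched in the excerpt just before the statement.

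Second, I would redo the box-adding recursion of Appendix~\ref{app:D2U1partitionfunction}, \eqref{eq:app-D2Unrecursion} in the presence of this flavor factor. Adding a box at coordinate $\chi_{4,v_{\alpha}}(\Bbox)$ to the partition $k_{4}^{(\alpha)}$ simply multiplies $\mathcal{Z}^{\D2\text{-flavor}}_{\text{vor.}}$ by $\prod_{\beta}K_{\beta}(1-K_{\beta}^{-1}\chi_{4,v_{\alpha}}(\Bbox)/\mu_{\beta})/(1-\chi_{4,v_{\alpha}}(\Bbox)/\mu_{\beta})$, so that the full ratio of partition functions reads
\[
\frac{\mathfrak{q}^{|\vec{k}+\Abox|}\mathcal{Z}^{\D2\text{-flavor}}_{4}[\vec v,\vec k+\Bbox]}{\mathfrak{q}^{|\vec{k}|}\mathcal{Z}^{\D2\text{-flavor}}_{4}[\vec v,\vec k]}
=\frac{\mathfrak{q}^{|\vec{k}+\Abox|}\mathcal{Z}^{\D2}_{4}[\vec v,\vec k+\Bbox]}{\mathfrak{q}^{|\vec{k}|}\mathcal{Z}^{\D2}_{4}[\vec v,\vec k]}\cdot \prod_{\beta}K_{\beta}\,\frac{1-K_{\beta}^{-1}q_{4}^{-1}x/\mu_{\beta}}{1-q_{4}^{-1}x/\mu_{\beta}},
\]
with $x=q_{4}\chi_{4,v_{\alpha}}(\Bbox)$, where the unmodified ratio is precisely the one already computed for the flavorless case.

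Third, I would apply the NS limit $q_{4}\to1$ exactly as in the proof of the flavorless BAE, so that $\mathscr{V}_{4}(x)\to1$ reduces the unmodified ratio to $-\mathfrak{q}\,\mathsf{Q}_{4}(\sfq_{1}^{-1}x)\mathsf{Q}_{4}(\sfq_{2}^{-1}x)\mathsf{Q}_{4}(\sfq_{3}^{-1}x)/[\mathsf{Q}_{4}(\sfq_{1}x)\mathsf{Q}_{4}(\sfq_{2}x)\mathsf{Q}_{4}(\sfq_{3}x)]$, while the flavor prefactor becomes $\prod_{\beta}\widetilde K_{\beta}(1-\widetilde K_{\beta}^{-1}x/\mu_{\beta})/(1-x/\mu_{\beta})$. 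Setting the logarithmic derivative of the twisted superpotential $\widetilde{\mathcal{W}}_{4}$ to zero (i.e.\ requiring the left-hand side of the ratio to equal $1$ at a critical configuration), defining
\[
d(x)=\prod_{\beta=1}^{n_{\mathrm{F}}}(1-\mu_{\beta}/x),\qquad a(x)=\prod_{\beta=1}^{n_{\mathrm{F}}}(1-\widetilde K_{\beta}\mu_{\beta}/x),
\]
and absorbing the overall $\prod_{\beta}\widetilde K_{\beta}$ factor into a redefinition of $\mathfrak{q}$, I obtain the claimed BAE. Finally, since $a(x)$ and $d(x)$ are arbitrary polynomials once the $\{\mu_{\beta},\widetilde K_{\beta}\}$ are varied, and since tuning $\widetilde K_{\beta}\in\{\sfq_{1}^{\pm1},\sfq_{2}^{\pm1},\sfq_{3}^{\pm1}\}$ recovers the Fock and its conjugate representations (cf.~\eqref{eq:Fockgeneral}) while generic $\widetilde K_{\beta}$ gives tensor products of MacMahon modules, the polynomial data $(a,d)$ is exactly the Drinfeld polynomial data specifying the highest-weight representation of quantum toroidal $\mathfrak{gl}_{1}$. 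The main obstacle is a bookkeeping one: ensuring that the overall $\widetilde K_{\beta}$ prefactors and the analytic-continuation zero-mode factors produced by the $\widetilde{\mathsf{Z}}^{K}$-$\mathsf{S}_{4}$ OPE combine into a pure rescaling of $\mathfrak{q}$ rather than contributing extra spurious $x$-dependence; this is essentially the same zero-mode check performed in section~\ref{sec:zeromodes} and should present no serious difficulty.
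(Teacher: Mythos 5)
Your proposal is correct and follows essentially the same route as the paper: insert $\widetilde{\mathsf{Z}}^{K_\beta}(\mu_\beta)$ into the screening-current correlator, read off the fixed-point factor $\prod_{\beta}\prod_{\Abox}K_\beta\frac{1-K_\beta^{-1}\chi/\mu_\beta}{1-\chi/\mu_\beta}=\prod_{\beta}\prod_{\Abox}\frac{1-K_\beta\mu_\beta/\chi}{1-\mu_\beta/\chi}$, append it to the box-adding recursion of \eqref{eq:app-D2Unrecursion}, and take $q_4\to 1$ to identify $a(x)=\prod_\beta(1-\widetilde K_\beta\mu_\beta/x)$ and $d(x)=\prod_\beta(1-\mu_\beta/x)$. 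The bookkeeping concern you flag at the end is moot, since the identity $K\frac{1-K^{-1}\chi/\mu}{1-\chi/\mu}=\frac{1-K\mu/\chi}{1-\mu/\chi}$ holds exactly, so there is no leftover $\prod_\beta\widetilde K_\beta$ to absorb into $\mathfrak{q}$.
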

This is analogous to Yangian and quantum affine algebra, where the representation data appears only in the $a$ and $d$ polynomials.

\subsection{General case}\label{sec:BAE_general}
Generalizations to $\U(n)$ theories of the $\D4$ and $\D6$ theories can be done similarly to the $\D2$-case. The BAE will have a general structure as in \eqref{eq:generalBetheansatz} of Thm.~\ref{thm:generalBetheansatz}. The explicit form of the $\mathsf{Q}$-functions is only modified. For the 5d $\U(n)$ theory on $\mathbb{C}_{a4}^{2}\times \mathbb{S}^{1}$, equation~\eqref{eq:Q-func_D4} is modified to
\bea
\mathsf{Q}_{a4}(x)=\prod_{i=1}^{\infty}\prod_{\alpha=1}^{n}\left(1-x^{(a)}_{\alpha,i}/x\right),\quad x_{\alpha,i}^{(a)}=v_{\alpha}q_{a}^{i-1}q_{4}^{\lambda_{i}^{(\alpha)}}
\eea
We can also consider more general configurations where folded instantons also appear. Consider a gauge origami system where the D4-branes wrap the subspace $\mathcal{S}\times \mathbb{S}^{1}$, where
\beq
\mathcal{S}=L\mathbb{C}^{2}_{14}+M\mathbb{C}^{2}_{24}+N\mathbb{C}^{2}_{34}. 
\eeq
Namely, we have $\U(L),\U(M),\U(N)$ affine quiver gauge theories on $\mathbb{C}^{2}_{14}\times \mathbb{S}^{1},\mathbb{C}^{2}_{24}\times \mathbb{S}^{1},\mathbb{C}^{2}_{34}\times \mathbb{S}^{1}$ respectively, with the folded instantons configurations connecting the different gauge theories. The $\mathsf{Q}$-function is generalized to 
\bea\label{eq:D4foldedBethe}
\mathsf{Q}_{\mathcal{S}}(x)=\prod_{i=1}^{\infty}\prod_{\alpha=1}^{L}(1-x_{\alpha,i}^{(1)}/x)\prod_{j=1}^{\infty}\prod_{\beta=1}^{M}(1-x_{\beta,i}^{(2)}/x)\prod_{k=1}^{\infty}\prod_{\gamma=1}^{N}(1-x_{\gamma,k}^{(3)}/x).
\eea

For a 7d $\U(n)$ theory on $\mathbb{C}^{3}_{ab4}\times\mathbb{S}^{1}$, equation \eqref{eq:Q-func_D6} is modified to
\bea
\mathsf{Q}_{ab4}(x)=\prod_{i,j=1}^{\infty}\prod_{\alpha=1}^{n}\left(1-x_{\alpha;i,j}^{(ab)}/x\right),\quad x_{\alpha;i,j}^{(ab)}=v_{\alpha}q_{a}^{i-1}q_{b}^{j-1}q_{4}^{\pi_{ij}^{(\alpha)}}.
\eea
The folded instanton system where $\D6$-branes wrapping
\bea
\mathcal{S}=L\mathbb{C}^{3}_{234}+M\mathbb{C}^{3}_{134}+N\mathbb{C}^{3}_{124}
\eea
gives the $\mathsf{Q}$-function
\bea\label{eq:D6foldedBethe}
\mathsf{Q}_{\mathcal{S}}(x)=\prod_{i,j=1}^{\infty}\prod_{\alpha=1}^{L}(1-x_{\alpha;i,j}^{(23)}/x)\prod_{k,l=1}^{\infty}\prod_{\beta=1}^{M}(1-x_{\beta;k,l}^{(13)}/x)\prod_{m,n=1}^{\infty}\prod_{\gamma=1}^{N}(1-x_{\gamma;m,n}^{(12)}/x).
\eea

For all cases, by direct computation, one can show that the arising BAE has the form 
\bea
1=-\mathfrak{q}\frac{\mathsf{Q}_{\mathcal{S}}(\mathsf{q}_{1}^{-1}x)\mathsf{Q}_{\mathcal{S}}(\mathsf{q}_{2}^{-1}x)\mathsf{Q}_{\mathcal{S}}(\mathsf{q}_{3}^{-1}x)}{\mathsf{Q}_{\mathcal{S}}(\mathsf{q}_{1}x)\mathsf{Q}_{\mathcal{S}}(\mathsf{q}_{2}x)\mathsf{Q}_{\mathcal{S}}(\mathsf{q}_{3}x)}.
\eea
Therefore, we obtain Thm.~\ref{thm:generalBetheansatz}.

\begin{remark}
We note that when only one of $L,M,N$ is non-zero in \eqref{eq:D4foldedBethe}, the dual integrable system is the elliptic Calogero--Moser/Ruijsenaars--Schneider model. When only one of $L,M,N$ is zero, the dual integrable system is the double elliptic Calogero--Moser system (and its trigonometric version) corresponding to the Calogero system associated with superalgebras~\cite{Kerov:1998IMRN,Sergeev:2001JNMP,Sergeev:2002TMP,Sergeev:2005AM} as discussed in the context of gauge origami~\cite{Nekrasov:2017gzb,Chen:2019vvt} (see also a related paper \cite{Jeong:2021rll}). We expect that the most general case when $L,M,N\neq 0$ gives a \emph{triple} elliptic Calogero--Moser system which is an elliptic generalization of the triple Calogero--Sutherland system~\cite[eq.~(2.15)]{Gaiotto:2020dsq}. For the \eqref{eq:D6foldedBethe}, we do not know the corresponding integrable systems and they are yet to be studied.
\end{remark}

\subsection{Quantum toroidal \texorpdfstring{$\mathfrak{gl}_1$}{gl(1)} and \texorpdfstring{$q$}{q}-characters}

In general, the $qq$-character is reduced to the $q$-character in the NS limit, which can be identified with the T-operator of the corresponding quantum integrable system~\cite{Knight:1995JA,Frenkel:1998ojj}.
We have discussed in section~\ref{sec:toroidal_alg} that the $qq$-characters of D2/D4/D6 systems correspond to the vector/Fock/MacMahon representations of quantum toroidal $\mathfrak{gl}_1$ denoted by $\mathscr{E}$, and actually they are reduced to the $q$-characters of the corresponding representations of $\mathscr{E}$ considered in~\cite{Feigin:2016pld}. The BAE obtained for $\mathscr{E}$ by~\cite{Feigin:2015raa} is based on the Fock representation, which is consistent with our analysis~\eqref{eq:Bethe_eq_D4}.
Hence, we propose that the BAE obtained from the D6 system~\eqref{eq:Bethe_eq_D6} is of the MacMahon representation of $\mathscr{E}$. 

Let us reproduce the $q$-characters of \cite{Feigin:2016pld} from a gauge theoretic perspective. Consider a gauge origami system where there are D-branes on subspaces including the $\mathbb{C}_{4}\times \mathbb{S}^{1}$:
\begin{equation}
\left(\text{pt}\,\,\cup\,\,\bigcup\limits_{a\in\underline{\mathbf{3}}}\mathbb{C}_{a}\cup \bigcup_{A\in\{12,13,23\}}\mathbb{C}^{2}_{A}\right)\times \mathbb{C}_{4}\times \mathbb{S}^{1}.
\end{equation}
Namely, we are considering a generalized gauge theory where we have D2-branes on $\mathbb{C}_{4}\times \mathbb{S}^{1}$, D4-branes on $\mathbb{C}^{2}_{a4}\times \mathbb{S}^{1}\,\,(a\in\underline{\mathbf{3}})$, and D6-branes on $\mathbb{C}^{3}_{A4}\times \mathbb{S}^{1}\,\,(A=12,13,23)$. Using \eqref{eq:D2screening}, \eqref{eq:D4screening}, and \eqref{eq:D6screening} (see also sections \ref{sec:decomp3d}, \ref{sec:decomp5d}, and \ref{sec:decomp7d}), the partition function of such theory is written using the screening current $\mathsf{S}_{4}(x)$ as
\beq
\mathcal{Z}[\vec{k},\underline{\vec{\lambda}},\underline{\vec{\pi}}]=\mathcal{Z}[\mathcal{X}]=\bra{0}\prod_{x\in\mathcal{X}}\mathsf{S}_{4}(x)\ket{0}\eqqcolon \bra{0}\ket{\mathcal{Z}}
\eeq
where 
\bea
&\mathcal{X}=\mathcal{X}_{4}\,\cup\,\bigcup_{a\in\underline{\mathbf{3}}}\mathcal{X}_{a4}\,\,\cup\bigcup_{ab\in\{12,23,13\}}\mathcal{X}_{ab4},\qquad \ket{\mathcal{Z}}=\prod_{x\in\mathcal{X}}\mathsf{S}_{4}(x)\ket{0}
\eea
and
\bea
&\mathcal{X}_{4}=\left\{v_{4,\alpha}q_{4}^{k_{4}^{(\alpha)}}\middle|\alpha=1,\ldots,n_{4}\right\},\quad \mathcal{X}_{a4}=\left\{v_{a4,\alpha}q_{a}^{i-1}q_{4}^{\lambda^{(\alpha)}_{a4,i}}\middle|\substack{\alpha=1,\ldots,n_{a4}\\i=1,\ldots,\infty}\right\}\quad (a\in\underline{\mathbf{3}})\\
&\mathcal{X}_{ab4}=\left\{v_{ab4,\alpha}q_{a}^{i-1}q_{b}^{j-1}q_{4}^{\pi^{(\alpha)}_{abc,ij}}\middle|\substack{\alpha=1,\ldots,n_{abc}\\i,j=1,\ldots,\infty}\right\}\quad (ab=12,13,23).
\eea
The partition function is given as $\mathcal{Z}=\sum_{\mathcal{X}}\mathcal{Z}[\mathcal{X}]$. Let us consider the expectation value of the $qq$-characters associated with the D-branes spanning the subspace $\mathbb{C}^{3}_{123}\times \mathbb{S}^{1}$ respectively: $\mathscr{Q}_{1,2,3}(x)$, $\mathsf{T}_{12,13,23}(x)$, $\mathsf{T}_{123}(x)$. 

For the $\D2$ $qq$-characters, the expectation value is
\bea
\mathcal{T}_{a}(x)=\left\langle\mathscr{Q}_{a}(x)\right\rangle&=\sum_{k\in\mathbb{Z}}\mathfrak{q}^{k}\left\langle\mathsf{S}_{a}(q_{a}^{k}x)\right\rangle=\frac{1}{\mathcal{Z}}\sum_{\mathcal{X}}\left(\sum_{k\in\mathbb{Z}}\mathfrak{q}^{k}\prod_{x'\in\mathcal{X}}\left[\mathscr{S}_{\overline{a4}}\left(xq_{a}^{k}q_{4}/x'\right)\right]^{x'}_{-}\right)\mathcal{Z}[\mathcal{X}]
\eea
for $a\in\underline{\mathbf{3}}$, where the expectation value of an operator $\mathcal{O}$ is defined as
\beq
\left\langle \mathcal{O}\right\rangle=\frac{\bra{0}\mathcal{O}\ket{\mathcal{Z}}}{\bra{0}\ket{\mathcal{Z}}}.
\eeq

For the $\D4$ $qq$-character, the expectation value is 
\bea
\mathcal{T}_{A}(x)=\left\langle\mathsf{T}_{A}(x)\right\rangle&=\sum_{\lambda}\mathfrak{q}^{|\lambda|}\widetilde{\mathcal{Z}}_{A}^{\D4}[\lambda]\left\langle\Lambda_{A,\lambda}(x)\right\rangle\\
&=\frac{1}{\mathcal{Z}}\sum_{\mathcal{X}}\left(\sum_{\lambda}\mathfrak{q}^{|\lambda|}\widetilde{\mathcal{Z}}_{A}^{\D4}[\lambda]\prod_{x'\in\mathcal{X}}\left[q_{A}q_{4}\frac{\mathscr{Y}^{A}_{\lambda,x}(q_{A}x')}{\mathscr{Y}^{A}_{\lambda,x}(q_{4}^{-1}x')}\right]^{x'}_{-}\right)\mathcal{Z}[\mathcal{X}],
\eea
for $A=12,13,23$. For the $\D6$ $qq$-character, we have 
\bea
\mathcal{T}_{123}(x)=\langle\mathsf{T}_{123}(x)\rangle&=\sum_{\pi}\mathfrak{q}^{|\pi|}\widetilde{\mathcal{Z}}^{\D6}_{123}[K,\pi]\left\langle\Lambda_{123,\pi}^{K}(x)\right\rangle\\
&=\frac{1}{\mathcal{Z}}\sum_{\mathcal{X}}\left(\sum_{\pi}\mathfrak{q}^{|\pi|}\widetilde{\mathcal{Z}}^{\D6}_{123}[K,\pi]\prod_{x'\in\mathcal{X}}\left[(-q_{4}x)\mathscr{W}^{\bar{4},K}_{\pi,x}(q_{4}^{-1}x')\right]^{x'}_{-}\right)\mathcal{Z}[\mathcal{X}].
\eea

After taking the NS limit, the fixed point configurations $\mathcal{X}$ transforms to a saddle point configuration\footnote{After taking the NS limit and denoting the saddle point configuration by $\mathcal{X}_*$, we have
\beq
    \left< 0 | Z \right> = \sum_{\mathcal{X}} Z_{\mathcal{X}} \approx Z_{\mathcal{X}_*} , \qquad
    \left< 0 | \mathcal{O} | Z \right> = \sum_{\mathcal{X}} \mathcal{O}_{\mathcal{X}} Z_{\mathcal{X}} \approx \mathcal{O}_{\mathcal{X}_*} Z_{\mathcal{X}_*} ,
\eeq
and then the expectation value is given by the on-shell value, $\frac{\left< 0 | \mathcal{O} | Z \right>}{\left< 0 | Z \right>} \approx \mathcal{O}_{\mathcal{X}_*}$.
} denoted as $\mathcal{X}_{\ast}$. Note also that under this limit, one can show that the $\U(1)$ contribution of the $\D4$ and $\D6$ partition functions become trivial:
\beq
    \widetilde{\mathcal{Z}}^{\D4}_{A}[\lambda]\xrightarrow{q_{4}\rightarrow 1}1,\quad A\in\{12,13,23\},\qquad
    \widetilde{\mathcal{Z}}^{\D6}_{123}[\pi]\xrightarrow{q_{4}\rightarrow 1}1.
\eeq 
The expectation values of the $qq$-characters then become
\begin{subequations}
\begin{align}
\mathcal{T}_{a}(x)\,\,&\xrightarrow{q_{4}\rightarrow 1}\,\, \sum_{k\in\mathbb{Z}}\mathfrak{q}^{k}\left(\prod_{x'\in\mathcal{X}_{\ast}}S_{a}(x\mathsf{q}_{a}^{k}/x')\right), \qquad a\in\underline{\mathbf{3}}\label{eq:vectorqch}\\
\mathcal{T}_{ij}(x)\,\,&\xrightarrow {q_{4}\rightarrow 1}\,\, \sum_{\lambda}\mathfrak{q}^{|\lambda|}\left(\prod_{x'\in\mathcal{X}_{\ast}}\mathsf{q}_{k}^{-1}\frac{\mathcal{Y}^{(k)}_{\lambda,x}(\mathsf{q}_{k}^{-1}x')}{\mathcal{Y}^{(k)}_{\lambda,x}(x')}\right), \qquad \{i,j,k\}=\underline{\mathbf{3}}\label{eq:Fockqch}\\
\mathcal{T}_{123}(x)\,\,&\xrightarrow {q_{4}\rightarrow 1}\,\, (-x)\sum_{\pi}\mathfrak{q}^{|\pi|}\prod_{x'\in\mathcal{X}_{\ast}}\left(\frac{1-Kx/x'}{1-x/x'}\prod_{\scube\in\pi}\mathsf{g}\left(\frac{x'}{\chi_{x}(\cube)}\right)\right)  ,\label{eq:MacMahonqch}
\end{align}
\end{subequations}
where we used the results in section \ref{sec:QTgl1}. The right-hand sides on \eqref{eq:vectorqch}, \eqref{eq:Fockqch}, \eqref{eq:MacMahonqch} are the vector, Fock, MacMahon $q$-characters of \cite[eq.~(3.15), (3.17)]{Feigin:2016pld} respectively. We can do the same analysis for general $\D4$, $\D6$ $qq$-characters using the results in sections~\ref{sec:verticalFockrep}, \ref{sec:MacMahonrep} and obtain a large class of $q$-characters after taking the NS limit. 

Even before taking the expectation values of the $qq$-characters, we have interesting properties under the NS limit. Under the limit $q_{4}\rightarrow 1$, the vertex operators $\mathsf{S}_{a}(x)\,(a\in\underline{\mathbf{3}})$, $\mathsf{X}_{A}(x)\,(A=12,23,13)$, $\mathsf{W}_{123}(x)$, $\mathsf{A}(x)$ all commute with each other. This is because generally these operators are expressed using the modes $\mathsf{a}_{n}$ as \eqref{eq:relationwithD0}:
\bea
\frac{\mathsf{a}_{n}}{\bfP_{\mathcal{S}}^{[-n]}}
\eea
where $\mathcal{S}\subseteq\underline{\mathbf{3}}$, $\mathcal{S}\neq\emptyset$: $\mathcal{S}\in\{1,2,3,12,13,23,123\}$. The commutation relation is 
\beq\label{eq:qchcommutation}
\left[\frac{\mathsf{a}_{n}}{\bfP_{\mathcal{S}}^{[-n]}},\frac{\mathsf{a}_{m}}{\bfP_{\mathcal{S}'}^{[-m]}}\right]=-\frac{1}{n}\delta_{n+m,0}\frac{\bfP_{\four}^{[n]}}{\bfP_{\mathcal{S}}^{[-n]}\bfP_{\mathcal{S'}}^{[n]}}\propto (1-q_{4}^{n}) \xrightarrow{q_{4}\rightarrow 1} 0
\eeq
where we used $\mathcal{S},\mathcal{S'}\in\underline{\mathbf{3}}$. Denoting the NS limit of the $qq$-characters themselves as 
\bea
\mathscr{Q}_{a}(x)\rightarrow \hat{\mathscr{Q}}_{a}(x)\,\,(a\in\underline{\mathbf{3}}),\quad \mathsf{T}_{A}(x)\rightarrow \hat{\mathsf{T}}_{A}(x)\,\,(A=12,13,23),\quad \mathsf{T}_{123}(x)\rightarrow \hat{\mathsf{T}}_{123}(x) 
\eea
we have the following result.
\begin{theorem}\label{thm:q-ch_commute}
For any $q$-characters obtained in the NS limit $q_4 \to 1$, $\mathcal{T}(x),\mathcal{T}'(x)\in\{\hat{\mathscr{Q}}_{1,2,3}(x),\hat{\mathsf{T}}_{12,23,13}(x),\hat{\mathsf{T}}^{K}_{123}(x)\}$, we have
\beq\label{eq:commutingH}
[\mathcal{T}(x),\mathcal{T}'(x')]=0.
\eeq    
\end{theorem}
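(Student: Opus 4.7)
\medskip

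\noindent\textbf{Proof proposal for Theorem \ref{thm:q-ch_commute}.}
The plan is to deduce commutativity at the level of the full $q$-characters from commutativity at the level of the basic vertex operators that generate them. Every building block of the $q$-characters in the list — namely $\mathsf{S}_a(x)$ for $a\in\underline{\mathbf{3}}$, $\mathsf{X}_A(x)$ for $A\in\{12,13,23\}$, $\mathsf{W}_{\bar{4}}(x)$, and the root current $\mathsf{A}(x)$ — has its modes of the form $\mathsf{a}_n/\bfP_{\mathcal{S}}^{[-n]}$ with $\mathcal{S}\subseteq \underline{\mathbf{3}}$. Since direction $4$ never appears in any such $\mathcal{S}$, the right-hand side of the general commutator \eqref{eq:qchcommutation} carries an overall factor $(1-q_4^n)$ coming from $\bfP_{\underline{\mathbf{4}}}^{[n]}$, and hence vanishes in the NS limit $q_4\to 1$ for every $n\neq 0$. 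So the first step is to write out this vanishing for all relevant pairs $(\mathcal{S},\mathcal{S}')$ and read off that, as formal power series, each mutual OPE contraction factor tends to $1$ in the NS limit.

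Next, I would upgrade mode commutativity to operator commutativity. Because the contraction factor from the non-zero modes becomes trivial, one finds $\hat{\mathsf{V}}_1(x)\hat{\mathsf{V}}_2(y) = f^{(0)}_{12}(x,y):\hat{\mathsf{V}}_1(x)\hat{\mathsf{V}}_2(y):$ where $f^{(0)}_{12}$ collects only the zero-mode contributions. I would then go through the explicit zero-mode conditions \eqref{eq:D0D0zero}--\eqref{eq:D2D6zero} case by case and check that for each pair arising among $\{\mathsf{A},\mathsf{S}_a,\mathsf{X}_A,\mathsf{W}_{\bar{4}}\}$ with $a\in\underline{\mathbf{3}}$, $A\in\{12,13,23\}$, the factor relating $f^{(0)}_{12}$ and $f^{(0)}_{21}$ is either already $1$ or is a power of $q_4$ (via the CY$_4$ relation $q_1 q_2 q_3 q_4 =1$, which converts e.g. $q_A^{-1}q_c^{-1}$ into $q_4$ whenever $A\cup\{c\}=\underline{\mathbf{3}}$) and therefore tends to $1$ as $q_4\to 1$. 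Combined with the vanishing of the non-zero mode contraction, this yields $\hat{\mathsf{V}}_1(x)\hat{\mathsf{V}}_2(y) = \hat{\mathsf{V}}_2(y)\hat{\mathsf{V}}_1(x)$ in the NS limit for every pair of building blocks.

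Once the building blocks commute, the conclusion is straightforward. Each $q$-character in the set is a formal sum $\hat{\mathcal{T}}(x) = \sum_{\Lambda} c_\Lambda(x) :\hat{\mathsf{V}}(x)\prod_{\shcube\in\Lambda}\hat{\mathsf{A}}^{-1}(\chi(\cube)):$ with scalar coefficients (given explicitly in \eqref{eq:vectorqch}--\eqref{eq:MacMahonqch} and their underlying definitions), and any product $\hat{\mathcal{T}}(x)\hat{\mathcal{T}}'(x')$ can be reordered term by term inside the double sum using the pairwise commutation just established. Comparing with $\hat{\mathcal{T}}'(x')\hat{\mathcal{T}}(x)$ then gives \eqref{eq:commutingH}.

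The main obstacle I expect is the careful bookkeeping of zero modes. The paper adopts a relaxed zero-mode convention (see footnote~\ref{note:D4footnote} and \eqref{eq:weakcommute}), so some pairs, in particular $\mathsf{X}_A$ with $\mathsf{X}_{\bar A}$, commute only up to a trivial $(x/x')$-type prefactor. For the pairs actually occurring in the theorem all the prefactors I encounter in the second step should collapse to $1$ in the NS limit thanks to the CY$_4$ relation, but verifying this uniformly — and handling regularization subtleties in the infinite sums defining $\mathsf{T}_{\bar{4}}^{K}(x)$ where the MacMahon coefficients $\widetilde{\mathcal{Z}}_{\bar{4}}^{\D6}[K,\pi]$ could in principle develop $q_4\to 1$ singularities — is the point requiring the most attention.
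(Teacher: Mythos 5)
Your proposal is correct and follows essentially the same route as the paper: the entire argument rests on the observation recorded in \eqref{eq:qchcommutation}, that every mode commutator among $\mathsf{a}_n/\bfP_{\mathcal{S}}^{[-n]}$ with $\mathcal{S}\subseteq\underline{\mathbf{3}}$ is proportional to $(1-q_4^n)$ and hence vanishes in the NS limit, after which term-by-term reordering of the $q$-characters is immediate. Your additional bookkeeping of the zero modes (checking that the relating prefactors collapse to powers of $q_4$ via the CY$_4$ condition) and your worry about the MacMahon coefficients are more careful than what the paper writes down, but they do not change the route — and the latter concern is resolved by the paper's observation that $\widetilde{\mathcal{Z}}^{\D6}_{\bar{4}}[\pi]\to 1$ as $q_4\to 1$.
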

The commutativity in \eqref{eq:commutingH} implies the existence of commuting Hamiltonians. Constructing the explicit Hamiltonians of the corresponding integrable models for the most general gauge origami system on the lines of \cite{Nekrasov:2017gzb,Chen:2019vvt,Jeong:2021rll,Prochazka:2023zdb} are left for future work. Moreover, it would be also interesting to relate the triple bilinear identity discussed in~\cite{Grekov:2023} to the triality of the $q$-characters.

\begin{remark}
Despite the coincidence of the BAEs in \eqref{eq:Bethe_eq_D4} and \eqref{eq:Bethe_eq_D6}, we would have different commuting Hamiltonians and spectra for these two cases since they have different $q$-characters as in \eqref{eq:Fockqch} and \eqref{eq:MacMahonqch}. Moreover, we see the difference between the perturbative part of the $\mathsf{Q}$-functions, \eqref{eq:Q-func_D4} and \eqref{eq:Q-func_D6},
\beq
    \mathsf{Q}_{a4}(x) \xrightarrow{\lambda \to \emptyset} \left( v/x;q_a\right)_\infty \, , \qquad
    \mathsf{Q}_{ab4}(x) \xrightarrow{\pi \to \emptyset} \left( v/x;q_a,q_b\right)_\infty \, ,
\eeq
which are given by the $q$-deformed gamma function for the D4 system, while it is given by the $q$-deformed double gamma function for the D6 system. 

\end{remark}

\paragraph{Generalizations to other geometries}
Based on the correspondence with BPS $qq$-characters and quantum toroidal algebras in section~\ref{sec:BPSqqQTA}, we can also study the $q$-character limit and the BAE. To do a detailed discussion, we need the explicit form of the coefficients $\mathcal{Z}_{i}^{\D6}[K,\Lambda^{(i)}]$ in Conj.~\ref{conj:BPSqq}. We hope to report a detailed analysis in a future work~\cite{Kimura-Noshita}. Based on the expression using the structure function of quantum toroidal $\mathfrak{gl}_{1}$ in \eqref{eq:D4QTgl1Bethe}, \eqref{eq:D6QTgl1Bethe}, \eqref{eq:D2QTgl1Bethe}, we propose the following BAE.
\begin{conjecture}\label{conj:BetheBPS}
Let $Z=X\times \mathbb{C}$ where $X$ is a toric CY$_3$ be the space-time of the gauge origami system. Consider a gauge theory including $\mathbb{C}$. The recursion relation of the partition function of this gauge theory gives the following BAE after the NS limit ($q_{4}\rightarrow 1$):
\bea
    1=-\mathfrak{q}_{i}\prod_{x'}\varphi^{\text{c}(x)\Rightarrow \text{c}(x')}(x'/x),
\eea
where the product is taken over some set of variables. We may add flavors that gives
\bea
1=-\mathfrak{q}_{i}\frac{a_{i}(x)}{d_{i}(x)}\prod_{x'}\varphi^{\text{c}(x)\Rightarrow \text{c}(x')}(x'/x)
\eea
where $a_{i}(x),d_{i}(x)$ are additional polynomials specifying the representations of the quiver quantum toroidal algebra.
\end{conjecture}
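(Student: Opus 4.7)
The plan is to derive the saddle point equation for each constituent D-brane theory of the gauge origami, take the NS limit $q_4 \to 1$, and then show that the contributions assemble into a single universal BAE in terms of $\mathsf{Q}_{\mathcal{S}}$. I would proceed as in sections~\ref{sec:D2_BetheEq}--\ref{sec:BAE_general}, but now keeping all D-brane types active simultaneously. Concretely, I would write down the ratio of the $(|\Lambda|+1)$- to $|\Lambda|$-instanton contribution of the full partition function for the combined setup, extending the U(1) formulas of Appendix~\ref{app:U1partition} to the U($n$) case and incorporating the folded configurations encoded in the parameters $n, L_i, M_i, N_i$. For each subsystem—the D2 theory on $\mathbb{C}_4$, the three D4 theories on $\mathbb{C}^2_{a4}$, and the three D6 theories on $\mathbb{C}^3_{ab4}$—the ``own'' contribution to the ratio is given by an $\mathscr{U}$, $\mathscr{Y}$, or $\mathscr{W}$ function, while the interactions between different subsystems are governed by the bifundamental OPE factors $\mathcal{Z}^{\D2\tbar\D4}$, $\mathcal{Z}^{\D2\tbar\D6}$, $\mathcal{Z}^{\D4\tbar\D6}$, etc.\ already established in sections~\ref{sec:D2qqcharacter}--\ref{sec:D6qqcharacter}.

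Next, I would take the NS limit $q_4 \to 1$. The key observation is that every cross-subsystem structure function $\mathscr{V}_a$, $\mathscr{S}_{ab}$, $g_{\bar a}$, and $\mathcal{A}_{\mathbb{C}^4}$ carries a factor $(1-q_4^{\pm 1})$ that trivializes in the limit, so that all bifundamental contributions become $1$. Hence the saddle-point equation obtained by adding a box of any given type depends only on the $\mathscr{U}$, $\mathscr{Y}$, or $\mathscr{W}$ function of its own subsystem evaluated at the critical configuration. I would then rewrite each of these in terms of the corresponding $\mathsf{Q}$-function using the factorizations
\begin{equation*}
\mathscr{Y}^{a4}_{\lambda,v}(x) = \frac{\mathsf{Q}_{a4}(x)}{\mathsf{Q}_{a4}(q_a^{-1}x)},\qquad \mathscr{W}^{ab4}_{\pi,v}(x) = \frac{\mathsf{Q}_{ab4}(x)\mathsf{Q}_{ab4}(q_{ab}^{-1}x)}{\mathsf{Q}_{ab4}(q_a^{-1}x)\mathsf{Q}_{ab4}(q_b^{-1}x)},
\end{equation*}
and the analogous D2 expression. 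Finally, observing that under the limit $q_4 \to 1$ the $\mathsf{Q}$-functions of all individual subsystems combine multiplicatively into one common $\mathsf{Q}_{\mathcal{S}}$ (this is exactly the definition implicit in \eqref{eq:D4foldedBethe}, \eqref{eq:D6foldedBethe}), I would conclude that the BAE for each type of box reduces to the universal form \eqref{eq:generalBetheansatz}. The observation in \eqref{eq:qchcommutation} that all D-brane vertex operators commute in the NS limit provides a structural reason for this collapse to a single equation.

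The main obstacle will be the third step: verifying that the bifundamental contributions across distinct subsystems really do vanish (or at worst give trivial prefactors $\propto \mathfrak{q}$) uniformly, and that the saddle-point equations for the 1d, 2d, and 3d boxes reduce to the \emph{same} universal equation. The most delicate case is the D4--D6 bifundamental, because the D4 and D6 configurations share two transverse directions and the na\"ive NS limit produces an indeterminate form of the type $0/0$ that requires resolving by first writing the relevant structure function in its reflected form via \eqref{eq:reflec_structfunc} and then invoking the CY$_4$ condition $\mathsf{q}_1\mathsf{q}_2\mathsf{q}_3=1$. A similar, but milder, issue appears for the D2--D6 and D2--D4 couplings, and I expect a uniform argument based on the identities \eqref{eq:oprelation1} together with careful expansion of $\mathcal{A}_{\mathbb{C}^4}$ near $q_4 = 1$ to settle all cases. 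Once this vanishing is established, the matching with the quantum toroidal $\mathfrak{gl}_1$ structure function as in \eqref{eq:D4QTgl1Bethe}, \eqref{eq:D6QTgl1Bethe}, \eqref{eq:D2QTgl1Bethe} is immediate, which gives both an independent consistency check and the universality asserted in the theorem.
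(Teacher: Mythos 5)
There is a genuine mismatch here: your proposal sketches an argument for the wrong statement. What you outline — decomposing the $\mathbb{C}^4$ origami into its D2/D4/D6 constituents on $\mathbb{C}_4$, $\mathbb{C}^2_{a4}$, $\mathbb{C}^3_{ab4}$, showing the cross-subsystem factors trivialize as $q_4\to 1$, and repackaging $\mathscr{U},\mathscr{Y},\mathscr{W}$ into $\mathsf{Q}_{\mathcal{S}}$ via \eqref{eq:D4foldedBethe}, \eqref{eq:D6foldedBethe} — is essentially the derivation of Theorem~\ref{thm:generalBetheansatz} (your own citation of \eqref{eq:generalBetheansatz} as the target confirms this), which the paper already carries out in sections~\ref{sec:D2_BetheEq}--\ref{sec:BAE_general}. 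Conjecture~\ref{conj:BetheBPS} is the generalization to $Z=X\times\mathbb{C}$ for an \emph{arbitrary} toric CY$_3$ $X$ with quiver $Q=(Q_0,Q_1)$: the BAE is indexed by quiver nodes $i$, the Bethe roots carry colors via the map $\mathrm{c}$, the configurations are colored 3d BPS crystals $\Lambda^{(i)}$ rather than plane partitions, and the right-hand side is built from the quiver structure functions $\varphi^{i\Rightarrow j}$ of the quiver quantum toroidal algebra. None of this multi-node, colored structure appears in your argument, which lives entirely in the single-node $\mathbb{C}^4$ world where $\varphi^{i\Rightarrow j}$ collapses to the $\mathfrak{gl}_1$ function $\mathsf{g}$.

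Concretely, a proof of the conjecture would require: (i) the explicit coefficients $\mathcal{Z}^{\D6}_{i}[K,\Lambda^{(i)}]$ of the BPS $qq$-characters for a general quiver — which the paper itself leaves open in Conjecture~\ref{conj:BPSqq} and defers to future work, so there is no established recursion formula to take the NS limit of; (ii) the recursion of that coefficient under adding an atom of a given color, organized so that the contribution of an atom at $x'$ of color $\mathrm{c}(x')$ to the equation for color $\mathrm{c}(x)$ is exactly $\varphi^{\mathrm{c}(x)\Rightarrow\mathrm{c}(x')}(x'/x)$ after $q_4\to 1$ (this is the analogue of the eigenvalue structure \eqref{eq:QQTAeigenvalue}, not of the $\mathscr{S}_{ab}$/$g_{\bar a}$ cancellations you discuss); and (iii) the flavor insertion for general quivers producing the node-dependent polynomials $a_i(x), d_i(x)$, which you do not address. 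Note also that the paper offers no proof of this statement — it is presented as a conjecture motivated by analogy with \eqref{eq:D4QTgl1Bethe}, \eqref{eq:D6QTgl1Bethe}, \eqref{eq:D2QTgl1Bethe} — so a correct submission would either have to supply the missing general-quiver ingredients or explicitly restrict its claim to the $\mathbb{C}^4$ case already covered by Theorem~\ref{thm:generalBetheansatz}.
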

We note that this BAE also appeared in \cite[eq.~(2.53)]{Galakhov:2022uyu} (see also \cite{Cao:2023lon}). We expect that the BPS $qq$-characters we introduced in this paper give a different derivation of the Bethe/Gauge correspondence. We also note that after taking the NS limit, the BPS $qq$-characters will transform to $q$-characters, which we call BPS $q$-characters. Similar to the $q$-characters of the quantum toroidal $\mathfrak{gl}_{1}$, the BPS $q$-characters obtained in this limit are the $q$-characters of the quiver quantum toroidal algebras. We note that they commute with each other and thus imply the existence of commuting Hamiltonians. We expect that the BPS $qq$-characters and $q$-characters we have introduced will be a tool to systematically derive Hamiltonians for new integrable models.



\section{Geometric realization of \texorpdfstring{$qq$}{qq}-character}\label{sec:geometryqq}

In addition to the algebraic construction based on the vertex operators, the $qq$-character allows geometric construction associated with the geometric representation theory of the corresponding quantum algebra~\cite{Nekrasov:2015wsu,KPfractional}.

\subsection{Quantum affine algebra}

Let $\Gamma = (\Gamma_0,\Gamma_1)$ be a finite-type Dynkin quiver with $\Gamma_0$ the set of nodes and $\Gamma_1$ the set of edges. Let $\mathfrak{g}_\Gamma$ be the corresponding simple Lie algebra.
We consider finite dimensional modules of quantum affine algebra $U_{q_1}(\widehat{\mathfrak{g}}_\Gamma)$ specified by $\Gamma_0$-tuple of polynomials with constant term one (Drinfeld polynomials)~\cite{Drinfeld:1987sy,Chari:1991CMP}. 
We denote the set of degrees of polynomials by $w = (w_i)_{i \in \Gamma_0} \in \mathbb{Z}_{\ge 0}^{\Gamma_0}$, the set of their roots by $\underline{x} = (x_{i,\alpha})_{i \in \Gamma_0,\alpha \in [w_i]}$, and the corresponding finite dimensional module by $M_{w;\underline{x}}$.
As $M_{w;\underline{x}}$ is known to be a type 1 module, we have the weight space decomposition as $U_{q_1}(\mathfrak{g}_\Gamma)$-module,
\begin{equation}
    M_{w;\underline{x}} = \bigoplus_{v} M_{w,v;\underline{x}} \, ,
\end{equation}
where $v = (v_i)_{i \in \Gamma_0} \in \mathbb{Z}_{\ge 0}^{\Gamma_0}$ parametrizes the weight lattice, $\sum_{i \in \Gamma_0} (w_i \varpi_i - v_i \alpha_i) \in P(\mathfrak{g}_\Gamma)$ with $(\varpi_i)_{i \in \Gamma_0}$ the fundamental weights and $(\alpha_i)_{i \in \Gamma_0}$ the simple roots of $\mathfrak{g}_\Gamma$.
Then, $qq$-character is defined as a map from the Grothendieck ring $\operatorname{Rep} U_{q_1}(\widehat{\mathfrak{g}}_\Gamma)$ of finite dimensional $U_{q_1}(\widehat{\mathfrak{g}}_\Gamma)$-modules of type 1 to the formal power series of the form, $(\partial^\bullet \mathscr{Y}_i(x)^{\pm 1})_{i \in \Gamma_0, x \in \mathbb{C}^\times}$.
\begin{proposition}[\cite{Nekrasov:2015wsu}]\label{prop:qq_ch_geom1}
    Let $M_{w;\underline{x}}$ be the $U_{q_1}(\mathfrak{g}_\Gamma)$-module defined above.
    The $qq$-character associated with $M_{w;\underline{x}}$ is given by the equivariant integral over the quiver variety $\mathfrak{M}_{w,v}$ parametrized by $(w,v)$ as follows,
    \begin{equation}
    \mathsf{T}_{w;\underline{x}}[\mathbf{Y}] = \sum_v \mathsf{T}_{w,v;\underline{x}}[\mathbf{Y}]
    \, ,
    \label{eq:qq-ch_w}
    \end{equation}
    where
    \begin{equation}
    \mathsf{T}_{w,v;\underline{x}}[\mathbf{Y}] = q_2^{-\frac{1}{2} \dim \mathfrak{M}_{w,v}} \int_{\mathfrak{M}_{w,v}} \operatorname{ch} \wedge^\bullet \mathbf{Y}_{w,v}^\vee \mathbf{Y} \operatorname{ch} \wedge^\bullet_{q_2} T^\vee \mathfrak{M}_{w,v} \operatorname{td} \left( T \mathfrak{M}_{w,v} \right)
    \, .
    \label{eq:qq-ch_wv}
    \end{equation}
    We denote $\mathbf{Y}_{w,v}^\vee \mathbf{Y} = \sum_{i \in \Gamma_0} \mathbf{Y}_{w,v,i}^\vee \mathbf{Y}_i$ where $\mathbf{Y}_{w,v} = (\mathbf{Y}_{w,v,i})_{i \in \Gamma_0}$ and $\mathbf{Y} = (\mathbf{Y}_i)_{i \in \Gamma_0}$ are the observable bundle and the formal bundle over $\mathfrak{M}_{w,v}$ that defines the $\mathscr{Y}$-function by
    \begin{equation}
    \mathscr{Y}_i(x) = \operatorname{ch} \wedge_{x^{-1}}^\bullet \mathbf{Y}_i
    \, , \qquad 
    i \in \Gamma_0
    \, .
    \end{equation}
    The dimension of the quiver variety is given by $\frac{1}{2} \dim \mathfrak{M}_{w,v} = \sum_{i \in \Gamma_0} w_i v_i -  \sum_{i, j \in \Gamma_0} v_i c_{ij}^+ v_j$ where $c^+ = (c_{ij})^+_{i, j \in \Gamma_0}$ is the half Cartan matrix associated with quiver $\Gamma$.
\end{proposition}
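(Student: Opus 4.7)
The plan is to establish the proposition via equivariant K-theoretic localization, reducing the integral over the Nakajima quiver variety $\mathfrak{M}_{w,v}$ to an explicit sum over torus fixed points and matching the result with the algebraic $qq$-character built from iWeyl reflections. First I would set up the equivariant torus $\mathbb{T} = T_F \times T_\Omega$, where $T_F = \prod_{i \in \Gamma_0}\U(1)^{w_i}$ acts on the framing spaces with weights $\underline{x} = (x_{i,\alpha})$, and $T_\Omega$ acts with weight $q_2$ on the cotangent fibers of $\mathfrak{M}_{w,v}$ (encoding the $\Omega$-background along the codimension-two plane transverse to the D-brane that supports the $\mathscr{Y}$-defect). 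By Nakajima's stability analysis, the fixed point locus $\mathfrak{M}_{w,v}^{\mathbb{T}}$ is a finite set whose points $p_{\vec{\lambda}}$ are in bijection with $\Gamma_0$-colored Young diagrams $\vec{\lambda} = (\lambda^{(i,\alpha)})$ satisfying a coloured box-counting constraint determined by $v$.

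Next I would apply the K-theoretic Atiyah--Bott localization: combining the Todd class with the Chern characters in \eqref{eq:qq-ch_wv} reproduces the Hirzebruch--Riemann--Roch pushforward, which localizes as
\begin{equation}
\int_{\mathfrak{M}_{w,v}} \operatorname{ch}(\mathcal{F}) \operatorname{td}(T\mathfrak{M}_{w,v}) = \sum_{\vec{\lambda}} \frac{\operatorname{ch}(\mathcal{F})\big|_{p_{\vec{\lambda}}}}{\mathbb{I}\bigl[T^\vee\mathfrak{M}_{w,v}\big|_{p_{\vec{\lambda}}}\bigr]}.
\end{equation}
Inserting $\mathcal{F} = \wedge^\bullet \mathbf{Y}_{w,v}^\vee \mathbf{Y} \otimes \wedge^\bullet_{q_2} T^\vee\mathfrak{M}_{w,v}$ and invoking the standard formula for the tangent character $T\mathfrak{M}_{w,v}|_{p_{\vec{\lambda}}}$ in terms of box contents, the denominator combines with $\operatorname{ch}\wedge^\bullet_{q_2} T^\vee\mathfrak{M}|_{p_{\vec\lambda}}$ to yield the (inverse) $\widehat{\Gamma}$-type Nekrasov factor evaluated at $(\vec\lambda,\vec\lambda)$, multiplied by the factor $q_2^{\dim\mathfrak{M}_{w,v}/2}$ that cancels the explicit prefactor in \eqref{eq:qq-ch_wv}. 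The observable factor $\operatorname{ch}\wedge^\bullet \mathbf{Y}_{w,v}^\vee \mathbf{Y}|_{p_{\vec\lambda}}$ produces the ratio of $\mathscr{Y}_i$-factors indexed by the addable and removable boxes of $\vec\lambda$, matching the monomial structure from Lem.~\ref{lemma:D6_iWeyl_ref} adapted to the $\Gamma$-quiver.

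To conclude the identification with \eqref{eq:qq-ch_w}, I would show that summing these local contributions reproduces the iWeyl-reflection expansion with highest monomial $\prod_{i,\alpha} \mathscr{Y}_i(q_2 x_{i,\alpha})$ obtained from the $v=0$ fixed point. Two routes are available: (i) a direct recursion in $v$ based on the adding-a-box tangent character, paralleling \eqref{eq:D4qqrecursion}; or (ii) an abstract uniqueness argument, observing that the geometric expression is manifestly a Laurent series in the $\mathscr{Y}_i$, commutes with the screening charges $\mathscr{Q}_i(x')$ because the equivariant integral is a regular function of the evaluation point (leaving no uncanceled delta-function residues), and has the correct highest weight, so by the uniqueness statement analogous to Thm.~\ref{thm:D4qq-commute} the two expressions must coincide.

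The hardest step will be the honest computation of the tangent space character at $p_{\vec\lambda}$ and the verification that the combination $\operatorname{ch}\wedge^\bullet_{q_2} T^\vee \cdot \mathbb{I}[T^\vee]^{-1}$ assembles into precisely the Nekrasov factor $\mathsf{N}_\Gamma(\vec\lambda|\vec\lambda)^{-1}$ with the correct sign and $q_2$-shift; for type $A$ this is classical, but for $\Gamma = D,E$ the Nakajima tangent character involves the Cartan matrix of $\mathfrak{g}_\Gamma$ in a nontrivial way and requires careful bookkeeping, including sign conventions coming from the choice of stability chamber. A secondary subtlety, already signalled by the $v=0$ term giving the highest monomial at $\mathscr{Y}_i(q_2 x_{i,\alpha})$ rather than $\mathscr{Y}_i(x_{i,\alpha})$, is to track the precise shift between the Drinfeld polynomial roots $\underline{x}$ and the spectral parameter appearing in the $\mathscr{Y}$-function; this shift is fixed by the convention for the universal bundle that defines $\mathbf{Y}_{w,v}$, and must be pinned down before stating the final match.
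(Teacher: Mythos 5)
The paper does not actually prove this proposition---it is quoted from Nekrasov's work \cite{Nekrasov:2015wsu} without an internal proof. The closest analogue in the paper is the proof of Theorem~\ref{thm:qq_ch_geom2}, which follows exactly your strategy: equivariant localization on the fixed locus, identification of the localized measure with the known partition-function coefficient, and identification of the $\wedge^\bullet \mathbf{Y}_{w,v}^\vee\mathbf{Y}$ insertion with the $\mathscr{Y}$-monomial at each fixed point. So your overall route is the intended one, and your option (ii)---fixing the answer by regularity/commutation with the screening charges plus the highest weight, in the spirit of Thm.~\ref{thm:D4qq-commute}---is a legitimate way to sidestep the explicit $D$, $E$ bookkeeping you correctly identify as the hard part.

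Two concrete problems in your setup, however. First, for a finite-type Dynkin quiver $\Gamma$ the torus fixed points of the Nakajima quiver variety $\mathfrak{M}_{w,v}$ are \emph{not} colored Young diagrams: already for $\Gamma=A_1$, $w=2$, $v=1$ one has $\mathfrak{M}_{2,1}\cong T^*\mathbb{P}^1$, whose two fixed points are coordinate lines, not partitions. The Young-diagram combinatorics belongs to the affine $\widehat{A}_0$ (instanton moduli) case; for finite $\Gamma$ the fixed points are in bijection with the monomials of the $q$-character (equivalently, the weights of the corresponding $\mathfrak{g}_\Gamma$-module), which is precisely why the sum over $v$ terminates. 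Your localization formula survives, but the ``coloured box-counting constraint'' and the ``adding-a-box'' recursion in your step (i) must be replaced by the correct fixed-point data. Second, your equivariant torus carries only $q_2$ on the cotangent fibers; the parameter $q_1$ must also act (it is the loop-rotation weight that enters both the tangent character at a fixed point and the $q_1$-shifts in the arguments of $\mathscr{Y}_i$), and without it the combination $\operatorname{ch}\wedge^\bullet_{q_2}T^\vee\mathfrak{M}\cdot\mathbb{I}[T^\vee\mathfrak{M}]^{-1}$ cannot assemble into the Nekrasov-type factor you need to match.
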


In fact, this formula is understood as the equivariant $\chi_{q_2}$-genus of the quiver variety with the additional insertion $\mathbf{Y}_{w,v}^\vee \mathbf{Y}$, which is physically interpreted as the coupling with the defect brane as discussed in section~\ref{sec:qqpartitionfunct}.
From this point of view, in the limit $q_2 \to 1$, it is reduced to the Euler characteristics of the corresponding quiver variety, which provides the geometric realization of the $q$-character.
In order to consider the non-simply-laced algebras, we need to consider the fractional quiver variety~\cite{KPfractional}. See also~\cite{Nakajima:2019olw}.

\begin{remark}
    In the geometric formula, two parameters $q_1$ and $q_2$ play different roles: $q_1$ is the quantum deformation parameter of the algebra $U_{q_1}(\widehat{\mathfrak{g}})$, while $q_2$ is the twist parameter for the cotangent bundle insertion. 
    Although they have different meanings, the resulting expression is in the end symmetric for $q_1$ and $q_2$ for the simply-laced algebra~\cite{Nekrasov:2015wsu}.
    On the other hand, the $qq$-character is not symmetric under $q_1$ and $q_2$ for non-simply-laced cases~\cite{Kimura:2017hez,KPfractional}.
    See also~\cite{Frenkel:1997CMP}.
\end{remark}

\begin{remark}
    It has been known that there exists another (non-commutative) deformation of the $q$-character, which is called the $t$-analog of $q$-character~\cite{Nakajima:1999JAMS,Nakajima:2001PC,Nakajima:2004AM}.
    From the geometric point of view, the $t$-analog is the deformation based on the Poincaré polynomial compared with the Euler characteristics, while the $qq$-character corresponds to the $\chi_{q_2}$-genus of the quiver variety.
\end{remark}

\subsection{Quantum toroidal \texorpdfstring{$\mathfrak{gl}_1$}{gl(1)}}

Prop.~\ref{prop:qq_ch_geom1} also applies to affine quivers, which gives rise to the $qq$-character of quantum toroidal algebra.
For example, for $\Gamma = \widehat{A}_0$, we obtain the $qq$-character of the Fock representation of quantum toroidal $\mathfrak{gl}_1$ that we denote by $\mathscr{E}$.
Motivated by our analysis of the $qq$-character based on the vertex operator formalism, we have the geometric formula for the $qq$-character of the MacMahon module (and its tensor product) of $\mathscr{E}$.
\begin{theorem}\label{thm:qq_ch_geom2}
    Let $w \in \mathbb{Z}_+$ and $(x_\alpha)_{\alpha = 1, \ldots, w} \in (\mathbb{C}^\times)^w$.
    The $qq$-character of degree-$w$ tensor product of the MacMahon module of quantum toroidal $\mathfrak{gl}_1$ is given by the following equivariant integral,
    \begin{equation}
    \mathsf{T}_{w;\underline{x}}[\mathbf{Y}] = \sum_{v \ge 0} \mathsf{T}_{w,v;\underline{x}}[\mathbf{Y}]
    \, , \qquad
    \mathsf{T}_{w,v;\underline{x}}[\mathbf{Y}] =  \int_{[\mathfrak{M}_{w,v}]^\text{vir}} \operatorname{ch} \wedge^\bullet \mathbf{Y}_{w,v}^\vee \mathbf{Y} \operatorname{td} \left( T \mathfrak{M}_{w,v} \right)
    \, ,
    \label{eq:qq-ch_wv_MacMahon}
    \end{equation}
    where the integral is taken over the virtual fundamental cycle of $\mathfrak{M}_{w,v}$, the moduli space of $v$ D0 and $w$ D6 system (rank $w$ tetrahedron instanton on $\mathbb{C}_{123}^3$), which is isomorphic to the Quot scheme, $\mathfrak{M}_{w,v} \cong \operatorname{Quot}_{\mathbb{C}^3}^v(\mathcal{O}^{\oplus w})$.
    We denote the observable bundle and the formal bundle over $\mathfrak{M}_{w,v}$ by $\mathbf{Y}_{w,v}$ and $\mathbf{Y}$ that defines the $\mathscr{Y}$-function as before.
\end{theorem}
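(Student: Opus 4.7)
The plan is to apply virtual equivariant localization to the right-hand side of \eqref{eq:qq-ch_wv_MacMahon} and match the result term-by-term with the algebraic $qq$-character constructed in Def.~\ref{def:D6_qq-ch}, together with its higher-rank generalization of section~\ref{sec:generalD6qq}. First I would identify the torus-fixed locus: under the $T = (\mathbb{C}^\times)^3$-action on $\mathbb{C}^3_{123}$, together with the framing torus acting on $\mathcal{O}^{\oplus w}$ with weights $x_1,\ldots,x_w$, the fixed points of $\operatorname{Quot}_{\mathbb{C}^3}^v(\mathcal{O}^{\oplus w})$ are classified by $w$-tuples of plane partitions $\vec\pi = (\pi^{(1)},\ldots,\pi^{(w)})$ with $\sum_\alpha |\pi^{(\alpha)}| = v$, and the tautological instanton bundle $\mathbf{K}$ has character $\mathbf{K}|_{\vec\pi} = \sum_\alpha \sum_{\square\in\pi^{(\alpha)}} \chi_{\bar 4,x_\alpha}(\square)$. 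The observable bundle $\mathbf{Y}_{w,v}$ is then $\mathbf{Y}|_{\vec\pi} = \sum_\alpha x_\alpha - \mathbf{P}_{123}\mathbf{K}|_{\vec\pi}$, so by definition of $\mathbb{I}$ its Chern character of $\wedge^\bullet_{x^{-1}}$ is precisely $\prod_\alpha \mathscr{W}^{\bar 4}_{\pi^{(\alpha)},x_\alpha}(x)$ up to the shift of variables matching \eqref{eq:D6Nekrasov-shell}.

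Next I would apply virtual equivariant localization (Graber--Pandharipande, extended to the DT$_3$-type perfect obstruction theory on Quot schemes on $\mathbb{C}^3$ as in e.g.~Cao--Kool--Monavari). The integrand $\operatorname{ch}\wedge^\bullet \mathbf{Y}_{w,v}^\vee \mathbf{Y}\cdot \operatorname{td}(T\mathfrak{M}_{w,v})$ becomes a sum over fixed points of $\prod_\alpha \mathscr{W}^{\bar 4}_{\pi^{(\alpha)},x_\alpha}(x)^{-1}$ weighted by $\mathbb{I}[T^{\mathrm{vir}}\mathfrak{M}_{w,v}|_{\vec\pi}]$. A direct character computation gives
\begin{equation*}
T^{\mathrm{vir}}\mathfrak{M}_{w,v}|_{\vec\pi} = \sum_\alpha x_\alpha^\vee \mathbf{K}|_{\vec\pi} - \mathbf{P}^\vee_{123}\,\mathbf{K}|_{\vec\pi}^\vee \mathbf{K}|_{\vec\pi},
\end{equation*}
which is exactly the square-root instanton character of \eqref{eq:D6tetrainstch} restricted to one rank sector. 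Taking the index reproduces $\prod_\alpha \widetilde{\mathcal{Z}}^{\D6}_{\bar 4}[\pi^{(\alpha)}]$ together with the bifundamental factor $\prod_{\alpha<\beta}\mathcal{Z}^{\D6\tbar\D6}_{\bar 4|\bar 4}(x_\alpha,\pi^{(\alpha)}\,|\,x_\beta,\pi^{(\beta)})$ as in \eqref{eq:D6tetinst_partfunct}.

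Having localized both factors, the sum over $v$ combined with the sum over $\vec\pi$ reorganizes into $\sum_{\vec\pi}\prod_\alpha \widetilde{\mathcal{Z}}^{\D6}_{\bar 4}[\pi^{(\alpha)}]\cdot\mathrm{(bifund)}\cdot \prod_\alpha\mathscr{W}^{\bar 4}_{\pi^{(\alpha)},x_\alpha}(x)^{-1}$. This is precisely the evaluation of $\prod_\alpha \mathsf{T}_{\bar 4}(x_\alpha)$ extracted from the ordered product correlator in Thm.~\ref{thm:tetra-origamiBPSCFT}, with the bifundamental one-loop/bulk factors coming from the operator product of the $\Lambda^{K}_{\bar 4,\pi}$'s. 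In the MacMahon case the parameter $K$ enters as an additional framing weight that shifts $\mathbf{Y}|_{\vec\pi}$ by $-Kx$, reproducing the $(1|1)$-type highest weight of section~\ref{sec:generalD6qq}.

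The main obstacle will be giving a rigorous meaning to the virtual fundamental cycle: the Quot scheme on $\mathbb{C}^3$ admits a natural symmetric-type obstruction theory, and one has to make a consistent ``square root'' choice of orientation to match the specific $\mathbf{P}_{123}^\vee$ decomposition chosen in \eqref{eq:D6tetrainstch}. Unlike the magnificent four case (Thm.~\ref{thm:D8magnificentBPSCFT}), where sign ambiguities appear and require additional orientation data $(-1)^{\sigma_a(\vec\rho)}$, for the D6 system sign-positivity is expected to hold as proved in~\cite{Fasola:2023ypx}; invoking that result is what secures the clean absence of signs in \eqref{eq:qq-ch_wv_MacMahon}. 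A secondary subtlety is the regularization of infinite products implicit in $\mathscr{W}^{\bar 4}_{\pi,x}(x')^{-1}$ coming from the denominator suppressed in \eqref{eq:D6Nekrasov-shell}; these ambiguities are absorbed into the overall normalization and do not affect the matching of pole/zero data that characterizes the $qq$-character uniquely by Thm.~\ref{thm:D6qq-commute}.
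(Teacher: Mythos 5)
Your proposal is correct and follows essentially the same route as the paper's own proof: equivariant localization on $\operatorname{Quot}^v_{\mathbb{C}^3}(\mathcal{O}^{\oplus w})$, identification of the fixed points with ($w$-tuples of) plane partitions, the virtual tangent character reproducing the tetrahedron-instanton coefficients $\widetilde{\mathcal{Z}}^{\D6}_{\bar 4}$ (with bifundamental factors for $w>1$), and the $\wedge^\bullet \mathbf{Y}_{w,v}^\vee\mathbf{Y}$ insertion reproducing the monomial terms $\Lambda_{\bar 4,\pi}$ under $\mathsf{W}_{\bar 4}\to\mathscr{Y}$. The paper only spells out $w=1$ and is silent on the orientation/sign and regularization issues you flag, so your additional remarks are compatible elaborations rather than a different argument.
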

\begin{proof}
    We consider the case $w = 1$ for simplicity. 
    As this integral is understood as the equivariant integral, it localizes on the equivariant fixed points, which are characterized by plane partitions $\pi$ with $|\pi| = v$. 
    Trivializing the formal bundle, $\operatorname{ch}\mathbf{Y} \to 1$, it is reduced to the equivariant integral over the moduli space without any insertion, which gives rise to the K-theoretic tetrahedron instanton partition function on $\mathbb{C}^3_{123}$, corresponding to $n_{\bar{4}} = w = 1$ and $n_{\bar{a}} = 0$ for $a \neq 4$ in our notation (see section~\ref{sec:Tetrahedron_inst}). 
    Hence, we have the coefficient $\widetilde{\mathcal{Z}}^\text{D6}_{\bar{4}}[\vec{\pi}]$ in Def.~\ref{def:D6_qq-ch} as the contribution from each fixed point.
    On the other hand, the character of the observable bundle~\eqref{eq:D6Ybundle} is given by $\operatorname{ch}\mathbf{Y}_{w,v}\Big|_{{\pi}} = x\left(1-\prod_{i=1}^{3}(1-q_{i})\sum_{(i,j,k) \in \pi}q_1^{i-1} q_2^{j-1} q_3^{k-1})\right)$.
    Hence, from Lem.~\ref{lemma:D6_iWeyl_ref}, $\operatorname{ch} \wedge^\bullet \mathbf{Y}_{w,v}^\vee \mathbf{Y}$ provides  $\Lambda_{\bar{4},\pi}(x)$ at the fixed point ${\pi}$ under the replacement of $\mathsf{W}_{\bar{4}}(x)$ with $\mathscr{Y}(x) = \operatorname{ch} \wedge_{x^{-1}}^\bullet \mathbf{Y}$.
    The higher rank case $w > 1$ works totally in the same way.   
\end{proof}

The moduli space $\mathfrak{M}_{w,v}$ has more constraints than the ordinary D0-D4 instanton moduli space, which are implemented by the potential of the quiver variety~\cite{Pomoni:2021hkn,Cao:2023lon}. 
As a result, the virtual dimension of the moduli space becomes zero in this case, and hence, compared to the previous formula~\eqref{eq:qq-ch_wv}, we have no additional $q_2$-twisted cotangent bundle insertion.

\subsection{Quantum toroidal \texorpdfstring{$\mathfrak{gl}_n$}{gl(n)}}

Let us also comment on the geometric formula of the $qq$-character of quantum toroidal $\mathfrak{gl}_n$ that we denote by $\mathscr{E}_n$.
We also have vector/Fock/MacMahon representations for $\mathscr{E}_n$~\cite{Feigin:2013JA}.
We can apply the previous formula \eqref{eq:qq-ch_wv} to affine quiver $\widehat{A}_{n-1}$ to obtain the Fock representation of $\mathscr{E}_n$~\cite{Kimura:2022spi}.
For the MacMahon representation, we apply the formula \eqref{eq:qq-ch_wv_MacMahon} with the moduli space of D0-D6 system on $\mathbb{C}^3/\mathbb{Z}_n$ to construct the $qq$-character.
\begin{conjecture}
    Let $\underline{i} = (i_j)_{j \in 0,\ldots,n-1} \in (\mathbb{Z}/n\mathbb{Z})^n$ with $i_j = \delta_{i,j}$.
    Let $M_{\underline{i};x}$ be the MacMahon module of color $i$ of $\mathscr{E}_n$.
    For $w = (w_i)_{i \in 0,\ldots,n-1} \in \mathbb{Z}_{\ge 0}^n$ and $\underline{x} = (x_{i,\alpha})_{i \in 0,\ldots,n-1, \alpha = 1,\ldots,w_i} \in (\mathbb{C}^\times)^{|w|}$, we consider the tensor product module of the MacMahon modules,
    \begin{equation}
        M_{w;\underline{x}} = \bigotimes_{i \in 0,\ldots,n-1} \bigotimes_{\alpha = 1,\ldots,w_i} M_{\underline{i};x_{i,\alpha}} \, .
    \end{equation}
    Then, the $qq$-character of $\mathscr{E}_n$-module $M_{w;\underline{x}}$ is given by the equivariant integral over the moduli space of D0-D6 system on $\mathbb{C}^3/\mathbb{Z}_n$ denoted by $\mathfrak{M}_{w,v} \cong \operatorname{Quot}_{\mathbb{C}^3/\mathbb{Z}_n}^{|v|}(\mathcal{O}^{\oplus |w|})$, $v = (v_i)_{i = 0,\ldots,n-1} \in \mathbb{Z}_{\ge 0}^n$, as follows,
    \begin{equation}
    \mathsf{T}_{w;\underline{x}}[\mathbf{Y}] = \sum_{v \in \mathbb{Z}_{\ge 0}^n} \mathsf{T}_{w,v;\underline{x}}[\mathbf{Y}]
    \, , \qquad
    \mathsf{T}_{w,v;\underline{x}}[\mathbf{Y}] =  \int_{[\mathfrak{M}_{w,v}]^\text{vir}} \operatorname{ch} \wedge^\bullet \mathbf{Y}_{w,v}^\vee \mathbf{Y} \operatorname{td} \left( T \mathfrak{M}_{w,v} \right)
    \, .
    \end{equation}
\end{conjecture}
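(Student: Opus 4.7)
The plan is to mimic the proof strategy of Theorem~\ref{thm:qq_ch_geom2} with suitable modifications for the $\mathbb{Z}_n$ orbifold. The first step is to analyze the $T$-fixed locus of the moduli space $\mathfrak{M}_{w,v} \cong \operatorname{Quot}_{\mathbb{C}^3/\mathbb{Z}_n}^{|v|}(\mathcal{O}^{\oplus |w|})$ under the natural torus action. Using the McKay-correspondence-type description of $\mathbb{C}^3/\mathbb{Z}_n$-instantons, the fixed points are classified by $|w|$-tuples of \emph{colored plane partitions}, i.e.\ plane partitions $\pi^{(i,\alpha)}$ whose boxes carry colors in $\mathbb{Z}/n\mathbb{Z}$ inherited from the origin's color $i \in \{0,\ldots,n-1\}$, with $v_j = \#\{\text{boxes of color } j\}$ giving the weight decomposition. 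At each fixed point, trivializing the formal bundle $\mathbf{Y} \to 1$, the equivariant integral of $\operatorname{td}(T\mathfrak{M}_{w,v})$ computes the K-theoretic tetrahedron instanton partition function on $\mathbb{C}^3/\mathbb{Z}_n$, which should be identified with the appropriate $\mathbb{Z}_n$-invariant part of the tetrahedron partition function on $\mathbb{C}^3$ (an orbifold projection of the coefficients $\widetilde{\mathcal{Z}}^{\D6}_{\bar 4}$).

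The second step is to compute $\operatorname{ch}\wedge^\bullet \mathbf{Y}_{w,v}^\vee \mathbf{Y}$ at each fixed point, generalizing Lem.~\ref{lemma:D6_iWeyl_ref}. The character of the observable bundle at a colored plane partition fixed point takes the form $\operatorname{ch}\mathbf{Y}_{w,v}\Big|_{\underline{\pi}} = \sum_{(i,\alpha)} x_{i,\alpha}\bigl(1 - \prod_{j=1}^3(1-q_j)\sum_{\scube \in \pi^{(i,\alpha)}}\chi(\cube)\bigr)$ with the usual $q$-coordinates restricted to boxes compatible with color $i$. This is exactly the character one obtains by applying the appropriate D0 iWeyl reflections to a MacMahon highest weight of color $i$ in the quantum toroidal $\mathfrak{gl}_n$ setting. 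Passing through $\wedge^\bullet$, each fixed-point contribution produces the operator analog $\Lambda_{\bar 4,\underline{\pi};\underline{i}}(x)$ under the substitution $\mathsf{W}_{\bar 4,i}(x) \mapsto \mathscr{Y}_i(x) = \operatorname{ch}\wedge^\bullet_{x^{-1}}\mathbf{Y}_i$.

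The third step is to match these geometric coefficients with the algebraic $qq$-character of $M_{w;\underline{x}}$. For quantum toroidal $\mathfrak{gl}_1$, Thm.~\ref{thm:qq_ch_geom2} used Def.~\ref{def:D6_qq-ch}, where the coefficient $\widetilde{\mathcal{Z}}^{\D6}_{\bar a}[\pi]$ was uniquely fixed by commutativity with the screening charge $\mathscr{Q}_a$. For $\mathscr{E}_n$ one must first set up the corresponding vertex operator / screening current formalism, with $n$ colored screening currents $\mathscr{Q}_{a,i}(x)$ dual to the $n$ quiver nodes of $\widehat{A}_{n-1}$, and define the tensor-product MacMahon $qq$-character as the unique combination of monomial terms $\Lambda_{\bar 4,\underline{\pi};\underline{i}}(x)$ commuting with all $\mathscr{Q}_{a,i}$. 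Then one shows that the orbifold tetrahedron coefficients satisfy the same recursion relations as in Thm.~\ref{eq:app-thm-D6U1recursionformula}, projected onto $\mathbb{Z}_n$-equivariant sectors, and conclude equality with the algebraic coefficients.

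The main obstacle will be Step three: the vertex operator construction of the MacMahon module for $\mathscr{E}_n$ and the proof that the orbifold tetrahedron partition function precisely matches this algebraic datum. The bosonization of MacMahon-type modules of $\mathscr{E}_n$ (including the central charges $K$ and the tensor-product coproduct structure) is not developed in this paper, and verifying the recursion requires a careful bookkeeping of arm/leg formulas on colored plane partitions together with the $\widehat{A}_{n-1}$ analog of the shell formula underlying \eqref{eq:D6Nekrasov-shell}. Once these ingredients are in place, the assembly of the geometric integral into the $qq$-character is formally parallel to the $\mathfrak{gl}_1$ case.
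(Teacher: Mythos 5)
The statement you are proving is presented in the paper as a \emph{conjecture}: the authors give no proof, only the single remark that ``the integral localizes on the fixed points parametrized by the colored plane partition, which characterizes the MacMahon module of $\mathscr{E}_n$.'' So there is no paper proof to compare against; your proposal has to be judged as a stand-alone argument, and as such it is a program rather than a proof. Steps one and two are the natural orbifold transcription of the proof of the $\mathfrak{gl}_1$ theorem (fixed points are $|w|$-tuples of colored plane partitions, the observable bundle reproduces the monomial terms under $\mathsf{W}\mapsto\mathscr{Y}$), and they are consistent with the authors' remark. One small imprecision there: the fixed-point character you write with the overall factor $\prod_{j=1}^3(1-q_j)$ is the ungraded $\mathbb{C}^3$ expression; for $\mathbb{C}^3/\mathbb{Z}_n$ the observable bundle must be decomposed into its $\mathbb{Z}_n$-isotypic components $\mathbf{Y}_{w,v,i}$, so that each box contributes only to the $\mathscr{Y}_i$ of its own color --- you say this at the end of step two, but the displayed formula does not reflect it.

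The genuine gap is the one you flag yourself in step three, and it is not a technicality that can be deferred: the object ``the $qq$-character of $M_{w;\underline{x}}$'' on the algebraic side is not defined anywhere in the paper for $\mathscr{E}_n$ MacMahon modules. For $\mathfrak{gl}_1$ the proof of the geometric theorem leans on the fact that the coefficients $\widetilde{\mathcal{Z}}^{\mathrm{D6}}_{\bar{a}}[\pi]$ are \emph{uniquely} fixed by commutativity with a single screening charge (the paper's Theorem on commutativity of the D6 $qq$-character with $\mathscr{Q}_a$), so that matching the fixed-point contributions with those coefficients closes the argument. For $\widehat{A}_{n-1}$ you would need (i) the colored screening currents and the analogue of that uniqueness statement, (ii) the shell/recursion formula for the orbifold-projected tetrahedron coefficients on colored plane partitions, and (iii) a proof that the $\mathbb{Z}_n$-invariant projection of $\widetilde{\mathcal{Z}}^{\mathrm{D6}}$ satisfies exactly that recursion. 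None of these is established in the paper (the authors relegate the general quiver case to future work), so your ``once these ingredients are in place'' is precisely where the conjecture lives. Your write-up is an honest and correctly structured reduction of the conjecture to these three open inputs, but it does not prove the statement.
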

In this case, the integral localizes on the fixed points parametrized by the colored plane partition, which characterizes the MacMahon module of $\mathscr{E}_n$~\cite{Feigin:2013JA}.

The geometric formula presented here could be straightforwardly extended to various $qq$-characters discussed in previous sections by replacing the moduli space with the corresponding one.


\section{Quiver elliptic W-algebra}\label{sec:ellipticWorigami}
As mentioned in section \ref{sec:physicalsetup}, instead of considering the gauge origami system on $\mathbb{C}^{4}\times \mathbb{R}\times \mathbb{S}^{1}$, we can consider it also on $\mathbb{C}^{4}\times \mathbb{R}^{2}$ or $\mathbb{C}^{4}\times \mathbb{T}^{2}$. For the former case, the arising algebra is the rational version of the quiver W-algebra \cite{Nieri:2019mdl} (see also \cite{Bourgine:2018uod}), while for the latter case, the arising algebra is the quiver elliptic W-algebra \cite{Kimura:2016dys}. For the rational case, one needs to take care of the perturbative one-loop part carefully but a similar computation can be done. In this paper, we only discuss the elliptic counterpart of the $qq$-characters introduced in previous sections.

We denote the elliptic parameter $p=e^{2\pi i \tau}\in\mathbb{C}^\times$. The elliptic deformation is obtained by modifying the index as in \eqref{eq:rat_trig_ell}:
\beq
    \mathbb{I}[x]=1-x^{-1}\rightarrow \mathbb{I}_{p}[x]=\theta(x^{-1};p),
\eeq
where the theta function is defined in \eqref{eq:ellipticthetadef}. Note that in the limit $p\rightarrow 0$, we obtain the trigonometric index. 

At the operator level, we need to double the number of modes and introduce two independent operators \cite{Kimura:2016dys}. For example, the root current will be modified as 
\bea
    \mathsf{A}(x)=\mathsf{a}_{0}(x):\exp\left(\sum_{n\neq 0}\mathsf{a_{n}}x^{-n}\right):\,\,&\rightarrow \,\, \mathsf{A}^{\theta}(x)=\mathsf{a}_{0}(x):\exp\left(\sum_{n\neq 0}\left(\mathsf{a}_{\theta,n}^{\splus}x^{-n}+\mathsf{a}_{\theta,n}^{\sminus}x^{n}\right)\right):,
\eea
where
\beq
[\mathsf{a}_{\theta,n}^{\spm},\mathsf{a}_{\theta,m}^{\spm}]=\mp\frac{1}{n}\frac{\mathbf{P}_{\four}^{[\pm n]}}{1-p^{\pm n}}\delta_{n+m,0}.
\eeq
In this process, the modes will be elliptically deformed. We only deform the nonzero modes and leave the zero modes undeformed.

The operators for $\D2$, $\D4$, $\D6$, $\D8$ branes are similarly defined as 
\begin{subequations}
\begin{align}
\D2:&\quad\mathsf{S}^{\theta}_{a}(x)=\mathsf{s}_{a,0}(x):\exp\left(\sum_{n\neq 0}\left(\mathsf{s}_{\theta,a,n}^{\splus}x^{-n}+\mathsf{s}^{\sminus}_{\theta,a,n}x^{n}\right)\right):,\quad \mathsf{s}_{\theta,a,n}^{\spm}=\frac{\mathsf{a}_{\theta,n}^{\spm}}{\mathbf{P}_{a}^{[\mp n]}},\quad a\in\four,\\
\D4:&\quad \mathsf{X}_{A}^{\theta}(x)=\mathsf{x}_{A,0}(x):\exp\left(\sum_{n\neq0}\left(\mathsf{x}_{\theta,A,n}^{\splus}x^{-n}+\mathsf{x}_{\theta,A,n}^{\sminus}x^{n}\right)\right):,\quad
\mathsf{x}_{\theta,A,n}^{\spm}=\frac{\mathsf{a}_{\theta,n}^{\spm}}{\mathbf{P}_{A}^{[\mp n]}},\quad A\in\six,\\
\D6:&\quad \mathsf{W}^{\theta}_{\bar{a}}(x)=\mathsf{w}_{\bar{a},0}(x):\exp\left(\sum_{n\neq 0}\left(\mathsf{w}^{\splus}_{\theta,\bar{a},n}x^{-n}+\mathsf{w}_{\theta,\bar{a},n}^{\sminus}x^{n}\right)\right):,\quad \mathsf{w}_{\theta,\bar{a},n}^{\spm}=\frac{\mathsf{a}_{\theta,n}^{\spm}}{\mathbf{P}_{\bar{a}}^{[\mp n]}},\quad a\in\four,\\
\D8:&\quad \mathsf{Z}^{\theta}(x)=\mathsf{z}_{0}(x):\exp\left(\sum_{n\neq 0}\left(\mathsf{z}^{\splus}_{\theta,n}x^{-n}+\mathsf{z}^{\sminus}_{\theta,n}x^{n}\right)\right):,\quad \mathsf{z}^{\spm}_{\theta,n}=\frac{\mathsf{a}_{\theta,n}^{\spm}}{\bfP_{\four}^{[\mp n]}}.
\end{align}   
\end{subequations}
Similarly, elliptic analogs of the structure functions \eqref{eq:struct_funct} are
\bea
\mathscr{V}^{\theta}_{a}(x)&=\mathbb{I}_{p}[-\bfP_{a}^{\vee}x^{\vee}]=\frac{\theta(q_{a}x;p)}{\theta(x;p)},\\
\mathscr{S}^{\theta}_{ab}(x)&=\mathbb{I}_{p}[-\mathbf{P}_{ab}^{\vee}x^{\vee}]=\frac{\theta(q_{a}x;p)\theta(q_{b}x;p)}{\theta(x;p)\theta(q_{a}q_{b}x;p)},\\
g_{\bar{a}}^{\theta}(x)&=\mathbb{I}_{p}[-\bfP_{\bar{a}}^{\vee}x^{\vee}]=\frac{\prod_{i\neq a}\theta(q_{i}x;p)\theta(q_{\bar{a}}x;p)}{\theta(x;p)\prod_{i\neq a}\theta(q_{a}^{-1}q_{i}^{-1}x;p)},\\
\mathcal{A}_{\mathbb{C}^{4}}^{\theta}(x)&=\mathbb{I}_{p}[-\mathbf{P}_{\four}^{\vee}x^{\vee}]=\frac{\prod_{a\in\four}\theta(q_{a}x;p)\prod_{a\in\four}\theta(q_{a}^{-1}x;p)}{\theta(x;p)^{2}\prod_{A\in\six}\theta(q_{A}x;p)}.
\eea
The D-brane $qq$-characters are defined by changing the operators to the elliptic deformed vertex operators:
\begin{subequations}
\begin{align}
\D2\,\,\text{elliptic $qq$-character}:&\quad \mathscr{Q}^{\theta}_{a}(x)=\sum_{k\in\mathbb{Z}}\mathfrak{q}^{k}\mathsf{S}^{\theta}_{a}(q_{a}^{k}x),\\
\D4\,\,\text{elliptic $qq$-character}:&\quad \mathsf{T}^{\theta}_{A}(x)=\sum_{\lambda}\mathfrak{q}^{|\lambda|}\mathcal{Z}_{\theta,A}^{\D4}[\lambda]:\mathsf{X}_{A}^{\theta}(x)\prod_{\Abox\in\lambda}\mathsf{A}^{\theta}(\chi_{A,x}(\Bbox))^{-1}:,\\
\D6\,\,\text{elliptic $qq$-character}:&\quad \mathsf{T}_{\bar{4}}^{\theta}(x)=\sum_{\pi}\mathfrak{q}^{|\pi|}\mathcal{Z}_{\bar{4},\text{CS}}^{\D6}\mathcal{Z}_{\theta,\bar{4}}^{\D6}[\pi]:\mathsf{W}_{\bar{4}}^{\theta}(x)\prod_{\scube\in\pi}\mathsf{A}^{\theta}(\chi_{\bar{4},x}(\cube))^{-1}:,\\
\D8\,\,\text{elliptic $qq$-character}:&\quad \mathsf{T}^{K\,\theta}_{\four;a}(x)=\sum_{\rho\in\mathcal{SP}}\mathfrak{q}^{|\rho|}\mathcal{Z}^{\D8}_{\theta,\four;a}[\rho,K]:\widetilde{\mathsf{Z}}^{K,\theta}(x)\prod_{\shcube\in\rho}\mathsf{A}^{\theta}(\chi_{\four,x}(\hcube))^{-1}:.
\end{align}    
\end{subequations}
One can show that the commutativity with the elliptic deformed screening charges also holds similar to the trigonometric $qq$-characters. We also can obtain elliptic deformations of the general D4, D6 $qq$-characters and even the BPS $qq$-characters introduced in the previous sections. Since it is a straightforward generalization, we omit the results of them.


\section{Conclusion and discussions}\label{sec:conclusion}
We introduced vertex operators corresponding to the D$(2p)$-branes $(p=0,1,2,3,4)$ and gave the free field realizations of the contour integral formulas of the gauge origami partition function in $\mathbb{C}^{4}$. Interestingly, they can be understood in terms of graded quivers associated with 2d $\mathcal{N}=(0,2)$ quiver gauge theories, which enabled us to generalize the conventional quiver W-algebra. Based on this free field realization, we managed to construct $\D2,\D4,\D6$ $qq$-characters and show the BPS/CFT correspondence for the coupled vortex, spiked instanton, and tetrahedron instanton systems. We also managed to give many generalizations and conjectures of the gauge origami setup from the quantum algebraic viewpoint. 

The D2 $qq$-characters play the roles of screening charges and the $\D4,\D6$ $qq$-characters are uniquely determined after setting the highest weight and imposing the commutativity condition with the screening charge. An interesting property was that the monomial terms are classified by truncations of plane partitions and the patterns of them were determined from the highest weight. Moreover, we have a one-to-one correspondence with the MacMahon representation and its generalizations (e.g. truncations, boundary conditions, etc.) of the quantum toroidal $\mathfrak{gl}_{1}$, and the Bethe ansatz equations were derived in a natural way.

Despite the fact that all processes in D2, D4, and D6 were relatively successful, the D8 case was not so successful because we could not construct the screening charge associated with it, and as a result, the D8 $qq$-character we constructed could only reproduce the partition function of the magnificent four model up to sign factors. The construction of the complete D8 $qq$-character is one of the ultimate goals because this D8 $qq$-character is expected to be the mother of all other D-brane $qq$-characters in the gauge origami system.

Let us list down possible directions for future work.
\vspace{-0.2cm}
\paragraph{Gauge theoretical interpretation of general $\D6$ $qq$-characters} As discussed in section~\ref{sec:generalD6qq}, general D6 $qq$-characters have a one-to-one correspondence with truncations of plane partitions. Using the BPS/CFT correspondence (see Thm.~\ref{thm:BPS/CFTintro}), we can obtain generalizations of the tetrahedron instanton partition functions by constructing the $qq$-characters and studying their compositions. Let $\pi_{1},\pi_{2}$ be the two different truncations of the plane partition. After constructing the associated D6 $qq$-characters $\mathsf{T}_{\pi_{1,2}}(x)$, the vacuum expectation value of the composition of them takes the form:
\bea
\mathcal{Z}=\sum_{\pi_{1,2}}\mathfrak{q}^{|\pi_{1}|+|\pi_{2}|}\prod_{i=1}^{2}\mathcal{Z}[\pi_{i}]\mathcal{Z}[\pi_{1},\pi_{2}]\mathcal{Z}[\pi_{2},\pi_{1}].
\eea
The physical interpretation of both $\mathcal{Z}[\pi_{i}]$ and $\mathcal{Z}[\pi_{i},\pi_{j}]$ is necessary. Studying the relation with open BPS states might help to solve this problem \cite{Nagao:2009ky,Nagao:2009rq,Sulkowski:2010eg}. Recently, the relation with a different algebra (shifted quiver Yangian) was studied in \cite{Galakhov:2021xum}. Understanding the connection with the representations appearing there might help.

\vspace{-0.2cm}
\paragraph{Gauge origami of general toric Calabi--Yau four-folds}
As mentioned in section~\ref{sec:introduction} and~\ref{sec:physicalsetup}, the gauge origami system is understood as a setup where D1-branes are probing intersecting D-branes wrapping cycles in the Calabi--Yau four-fold. In \cite{Pomoni:2021hkn}, the authors defined the elliptic genus \cite{Benini:2013nda,Benini:2013xpa,Benini:2018hjy} of the tetraheron instanton system and evaluated the poles by using the JK-residue techniques \cite{Jeffrey1993LocalizationFN}. Generalization of this computation to general gauge origami setup is necessary. Since toric Calabi--Yau four-folds are related to brane brick models, the computation in \cite{Franco:2017cjj} might help. Comparison with the contour integral formulas we proposed in Conj.~\ref{conj:CY4}, \ref{conj:CY3} might be interesting too.

\vspace{-0.2cm}
\paragraph{BPS $qq$-characters and crystal melting}
In the main section, we gave multiple conjectures (Conj.~\ref{conj:CY4}, \ref{conj:CY3}, \ref{conj:BPSqq}) related to generalizations to toric Calabi--Yau four-folds. We hope to come back and solve these conjectures by discussing the relation with brane brick models and brane tilings in a future work \cite{Kimura-Noshita}. Since all of the BPS $qq$-characters associated with the gauge origami system of a Calabi--Yau four-fold $Z$ are expected to be understood as truncations of a four-dimensional analog of the 3d BPS crystals \cite{Ooguri:2009ijd}, it is necessary to generalize the discussion of~\cite{Ooguri:2009ijd} to Calabi--Yau four-folds. Truncations of them are also necessary to be studied.

\vspace{-0.2cm}
\paragraph{Finite type quivers and $qq$-characters}
Although the $qq$-characters we introduced in this paper are associated with ``affine"-type quivers which give an infinite number of monomial terms appearing in the $qq$-character, we can consider $qq$-characters with a finite number of monomial terms. Such kind of $qq$-character should be understood as a singular limit of the $q$-parameters appearing in the $q$-Cartan matrix. For example, consider the $\mathbb{C}^{2}_{12}\times \mathbb{C}_{34}^{2}$ gauge origami system and place a D4$_{12}$-brane and a D4$_{34}$-brane spanning transversely. The D4$_{12}$ theory is a 5d $\mathcal{N}=1^{\ast}$ theory with adjoint mass $q_{3}$. We can take the limit $q_{3}\rightarrow 0,\infty$ while keeping the product $q_{3}q_{4}$ fixed. In this process, the 5d $\mathcal{N}=1^{\ast}$ theory on $\mathbb{C}^{2}_{12}\times \mathbb{S}^{1}$ becomes the pure SYM theory whose quiver structure is the $A_{1}$. Namely, the quiver structure changes $\widehat{A}_{0}\rightarrow A_{1}$ under this limit. Actually, one can show that the two $qq$-characters $\mathsf{T}_{12}(x),\mathsf{T}_{34}(x)$ will be deformed as
\bea
\mathsf{T}_{12}(x)&\longrightarrow \mathsf{T}_{\text{pure SYM}}(x),\\
\mathsf{T}_{34}(x)&\longrightarrow \mathsf{T}_{\text{$q$-Vir}}(x),
\eea
where $\mathsf{T}_{\text{pure SYM}}(x)$ is the $qq$-character reproducing the instanton partition functions of the pure SYM, while $\mathsf{T}_{\text{$q$-Vir}}(x)$ is the generator of the $q$-Virasoro algebra \cite{Kimura-Noshita}. Generalizations of this discussion to other quiver W-algebras is a possible future work.

\vspace{-0.2cm}
\paragraph{Relation with intertwiners and quantum toroidal algebras}
A recent work \cite{Konno:2021zvl} derived the D4 $qq$-character from a new elliptic quantum toroidal algebra denoted as $U_{q,t,p}(\mathfrak{gl}_{1,\text{tor}})$ by studying the intertwiner formalism based on \cite{Awata:2011ce} (see also \cite{Awata:2016riz, Awata:2016mxc,Awata:2016bdm,Bourgine:2016vsq,Awata:2017cnz,Awata:2017lqa, Bourgine:2017jsi, Bourgine:2017rik,Bourgine:2018uod,Zenkevich:2018fzl,Bourgine:2019phm,Zenkevich:2020ufs,Ghoneim:2020sqi,Bourgine:2021nyw,Bourgine:2022scz} for related works). The elliptic quantum toroidal algebra $U_{q,t,p}(\mathfrak{gl}_{1,\text{tor}})$ is an algebra with four parameters $q,t,p,p^{\ast}=p\gamma^{-2}$, where $\gamma$ is the central element of the algebra. In their paper, they studied the vertical representation with $\gamma=1$ and the horizontal representation with $\gamma=(t/q)^{1/2}$. In the horizontal representation, we have $p^{\ast}/p=q/t$ which implies the identification $(q_{1},q_{2},q_{3},q_{4})=(q,t^{-1},p,t/(qp))$. Note at the limit $p^{\ast}\rightarrow 1$, we have $p=t/q$. We also note that their elliptic parameter $p$ is related to the $\Omega$-deformation parameter of $\mathbb{C}^{4}$ and is \emph{different} from the elliptic parameter in section~\ref{sec:ellipticWorigami}. Our elliptic parameter $p$ is related to $\mathcal{C}=\mathbb{T}^{2}$ but not $\mathbb{C}^{4}$ and from the algebraic viewpoint, it is rather related to~\cite{Saito2013EllipticDA,Zhu:2017ysu,Foda:2018sce}. 

The two screening currents in \cite[section 5.1]{Konno:2021zvl} are identified as 
\bea
s_{n}^{+}\leftrightarrow \mathsf{s}_{4,n},\quad s_{n}^{-}\leftrightarrow \mathsf{s}_{3,n}
\eea
and the $T(u)$ operator \cite[section 5.2]{Konno:2021zvl} is identified with the $\D4_{12}$ $qq$-character $\mathsf{T}_{12}(u)$. As the authors studied one type of D4 $qq$-characters spanning the $\mathbb{C}_{12}^{2}$ out of the other six possible cases, the BPS/CFT correspondence of the most general spiked instanton system could not be reproduced. It would be interesting to consider other types of $qq$-characters in the context of elliptic quantum toroidal algebra as well. Since the algebra still has a triality symmetry which is the subgroup of the quadrality symmetry, we expect we can construct three types of vertical and horizontal representations (see \cite{Frenkel:1997CMP,bershtein2018plane,Kojima2019,Feigin2020DeformationsO,Kojima2021,Harada:2021xnm}).  

Studying interwiners relating these three different types of representations following the line of \cite{Zenkevich:2018fzl,Zenkevich:2020ufs} should give the partition function of intersecting 5d gauge theories. However, we note that since the elliptic parameter $p$ (or $p^{\ast}$) is already breaking the quadrality symmetry, the partition function obtained in this way might be the gauge origami partition function of D4-branes wrapping the subspaces of the $\mathbb{C}^{3}$ part and thus we still can not obtain the full spiked instanton partition function. This is a situation similar to what happened in this paper, where we can choose one screening charge $\mathscr{Q}_{4}(x)$ and the commutativity with this screening charge gives general $qq$-characters reproducing the gauge origami system in the $\mathbb{C}^{3}_{123}$ subspace. To obtain the complete partition function, we may need a new algebra symmetric in $q_{1,2,3,4}$ with $q_{1}q_{2}q_{3}q_{4}=1$.

\vspace{-0.2cm}
\paragraph{Commuting Hamiltonians and quantum integrable models}
In section~\ref{sec:Betheansatz}, we discussed the Bethe ansatz equation of the gauge origami system and related it with the $q$-characters and the associate quantum toroidal algebras. Thm.~\ref{thm:generalBetheansatz}, \ref{thm:BAEwithPolynom}, \ref{thm:q-ch_commute}, and Conj.~\ref{conj:BetheBPS}, imply new types of quantum integrable models. Explicitly deriving the commuting Hamiltonians is an interesting topic. Recent studies \cite{Nekrasov:2017gzb,Jeong:2017pai,Jeong:2021rll,Chen:2019vvt,Jeong:2023qdr} gave a way to derive the commuting Hamiltonians by studying surface defects in the gauge origami system. Generalizations of them might help solve this problem. 

We also note that the similar Bethe ansatz equations in Conj.~\ref{conj:BetheBPS} were proposed in~\cite{Galakhov:2022uyu} (see also \cite{Bao:2022fpk,Cao:2023lon,Prochazka:2023zdb} for related works) by studying the R-matrices of the quiver Yangians following previous works \cite{Maulik2012QuantumGA,Fukuda:2017qki,Litvinov:2020zeq,Litvinov:2021phc,Chistyakova:2021yyd,Kolyaskin:2022tqi}. It was shown in \cite{Galakhov:2022uyu} that under suitable assumptions, there is no universal R-matrix for some representations, and thus there are obstructions in the Bethe/Gauge correspondence. However, the BPS $q$-characters obtained under the NS limit of the BPS $qq$-characters look to commute with each other and therefore imply commuting Hamiltonians. Therefore, naively we expect to still have Bethe/Gauge correspondence. To fill in this gap, detailed discussions on the BPS $qq$-characters and their truncated versions are necessary. In the historical paper~\cite{Frenkel:1998ojj}, the $q$-characters associated with quantum affine algebras and their relations with deformed W-algebras \cite{Frenkel:1997CMP} and the R-matrices were discussed. Generalization of these works to quantum toroidal $\mathfrak{gl}_{1}$ was done in \cite{Feigin:2015raa,Feigin:2016pld}. Understanding the BPS $qq$-characters, BPS $q$-characters, and Bethe ansatz equations from the R-matrix of quiver quantum toroidal algebras following \cite{Frenkel:1998ojj,Frenkel:1997CMP,Feigin:2015raa,Feigin:2016pld} is also a topic that must be studied.

\section*{Acknowledgements}

We would like to thank Yalong Cao, Hiroaki Kanno, Hitoshi Konno, Norton Lee, and Hiraku Nakajima for fruitful discussions on the project.
The work of TK was in part supported by EIPHI Graduate School (No.~ANR-17-EURE-0002) and Bourgogne-Franche-Comté region. A part of the results in this paper has been presented by TK at \href{https://indico.sissa.it/event/98/}{XIII Workshop on Geometric Correspondences of Gauge Theories}, June 2023, Trieste, Italy, and \href{https://sites.google.com/view/msj-si-2023/home}{The 16th MSJ-SI: Elliptic Integrable Systems, Representation Theory and Hypergeometric Functions}, July 2023, Tokyo, Japan. He is grateful to the organizers for the invitation and hospitality.
GN is supported by JSPS KAKENHI Grant-in-Aid for JSPS fellows Grant No.~JP22J20944, JSR Fellowship, and FoPM (WINGS Program), the University of Tokyo. GN is also grateful for the hospitality during his stay in Université de Bourgogne where a part of this work was initiated.



\appendix
\allowdisplaybreaks
\newpage
\section{Special functions}\label{app:specialfunct}

\subsection{\texorpdfstring{$q$}{q}-functions}\label{sec:q-functions}
We define the $q$-shifted factorial as
\begin{equation}
    (z;q)_{n}=\prod_{m=0}^{n-1}(1-zq^{m}).
\end{equation}
Taking the limit $n\rightarrow \infty$ for $|q|<1$, we have 
\beq
(z;q)_{\infty}=\prod_{m=0}^{\infty}(1-zq^{m})=\exp\left(-\sum_{m=1}^{\infty}\frac{z^{m}}{m(1-q^{m})}\right).
\eeq
For the analytic region $|q|>1$, it is given through analytic continuation
\begin{equation}\label{eq:app-qPochreflec}
    (z;q)_{\infty}=(zq^{-1};q^{-1})_{\infty}^{-1}.
\end{equation}
Note that we have the property 
\beq
(x;q)_{\infty}=(1-x)(xq;q)_{\infty}
\eeq
and 
\beq
(z;q)_{n}=\frac{(z;q)_{\infty}}{(zq^{n};q)_{\infty}}=\exp\left(-\sum_{m=1}^{\infty}\frac{z^{m}}{m}\frac{1-q^{mn}}{1-q^{m}}\right).
\eeq
We also define the multiple version of the $q$-shifted factorial for $|q_{1}|,\ldots,|q_{k}|<1$,
\footnote{
This multiple $q$-shifted factorial is interpreted as a $q$-analog of the multiple gamma function, $\Gamma_k(x;q_1,\ldots,q_k) = (z;q_{1},\ldots,q_{k})_{\infty}^{(-1)^k}$ obeying the relation,
\beq
    \frac{\Gamma_k(xq_i;q_1,\ldots,q_k)}{\Gamma_k(q_i;q_1,\ldots,q_k)} = \Gamma_{k-1}(x;q_1,\ldots,q_i\hspace{-.65em}/\hspace{.3em},\ldots,q_k) .
\eeq
}
\beq
(z;q_{1},\ldots,q_{k})_{\infty}=\prod_{0\leq n_{1},\ldots,n_{k}\leq \infty}(1-zq_{1}^{n_{1}}\cdots q_{k}^{n_{k}}).
\eeq
For other analytic regions, we use 
\beq
\exp\left(-\sum_{m=1}^{\infty}\frac{z^{m}}{m}\frac{1}{(1-q_{1}^{m})\cdots (1-q_{k}^{m})}\right)
\eeq
and do a similar process as \eqref{eq:app-qPochreflec}. 

The one-loop perturbative functions appearing in \eqref{eq:D8oneloop}, \eqref{eq:D6oneloop}, \eqref{eq:D4oneloop}, \eqref{eq:D2oneloop} can be rewritten using the $q$-shifted factorial. For example, for the $\D2$-case in \eqref{eq:D2oneloop}, we have
\begin{align}\label{eq:D2oneloop-qgamma}
\begin{split}
\mathcal{Z}^{\D2\tbar\D2}_{\text{1-loop}}(x,a|x',a)&=\exp\left(-\sum_{n=1}^{\infty}\frac{1}{n}\frac{\bfP_{\bar{a}}^{[n]}\bfP_{\bar{a}}^{[-n]}}{\bfP_{\four}^{[n]}}\left(\frac{x}{x'}\right)^{n}\right)\\
&=\begin{dcases}
  \frac{(x/x';q_{a})_{\infty}\prod\limits_{i\neq a}(q_{i}q_{a}x/x';q_{a})_{\infty}}{(q_{a}x/x';q_{a})_{\infty}\prod\limits_{i\neq a}(q_{i}^{-1}x/x';q_{a})_{\infty}} ,\quad |q_{a}|<1,\\
  \frac{(x/x';q_{a}^{-1})_{\infty}\prod\limits_{i\neq a}(q_{i}^{-1}q_{a}^{-1}x/x';q_{a}^{-1})_{\infty}}{(q_{a}^{-1}x/x';q_{a}^{-1})_{\infty}\prod\limits_{i\neq a}(q_{i}x/x';q_{a}^{-1})_{\infty}} ,\quad |q_{a}|>1.
\end{dcases}
\end{split}
\end{align}

For the $\D4$-case when $A=ab$, $\bar{A}=cd$ and $|q_{a}|,|q_{b}|<1$
\bea\label{eq:D4oneloop-qgamma}
\mathcal{Z}^{\D4\tbar\D4}_{\text{1-loop}}(x,A\,|\,x',A)&=\exp\left(-\sum_{n=1}^{\infty}\frac{1}{n}\frac{\bfP_{\bar{A}}^{[n]}\bfP_{\bar{A}}^{[-n]}}{\bfP_{\four}^{[n]}}\left(\frac{x}{x'}\right)^{n}\right)\\
&=\frac{(x/x';q_{a},q_{b})_{\infty}(q_{c}q_{d}x/x';q_{a},q_{b})_{\infty}}{(q_{c}x/x';q_{a},q_{b})_{\infty}(q_{d}x/x';q_{a},q_{b})_{\infty}}.
\eea
Other formulas in other analytic regions are written using the reflection formula in \eqref{eq:app-qPochreflec}.

Similarly, for the $\D6$-case with $|q_{i}|<1\,\,(i\in\bar{a})$, we have 
\bea\label{eq:D6oneloop-qgamma}
    \mathcal{Z}^{\D6\tbar\D6}_{\text{1-loop}}(x,\bar{a}\,|\,x',\bar{a})&=\exp\left(-\sum_{n>0}\frac{1}{n}\frac{\bfP_{a}^{[n]}\bfP_{a}^{[-n]}}{\bfP_{\four}^{[n]}}\left(\frac{x}{x'}\right)^{n}\right)\\
    &=\frac{(x/x';q_{b},q_{c},q_{d})_{\infty}}{(q_{a}x/x';q_{b},q_{c},q_{d})_{\infty}}
\eea
where $b,c,d\in\bar{a}$.

For the $\D8$-case, we have to take care of the analytic region. Since we are imposing the Calabi--Yau condition $q_{1}q_{2}q_{3}q_{4}=1$, we can not simply take all of the parameters in the same analytic region. We choose $|q_{1}|,|q_{2}|,|q_{3}|<1$ and $|q_{4}|>1$. For simplicity, we set $K_{1},K_{2}=0$. The one-loop perturbative factor is then written as
\bea\label{eq:D8oneloop-qgamma}
\mathcal{Z}_{\text{1-loop}}^{\D8\tbar\D8}(x,0|x',0)&=\exp\left(-\sum_{n>0}\frac{1}{n}\frac{1}{\bfP_{\four}^{[n]}}\left(\frac{x}{x'}\right)^{n}\right)\\
&=\exp\left(\sum_{n>0}\frac{1}{n}\frac{q_{4}^{-n}}{(1-q_{1}^{n})(1-q_{2}^{n})(1-q_{3}^{n})(1-q_{4}^{-n})}\left(\frac{x}{x'}\right)^{n}\right)\\
&=(q_{4}^{-1}x/x';q_{1},q_{2},q_{3},q_{4}^{-1})_{\infty}^{-1}.
\eea
Other analytic regions can be obtained similarly using the reflection formula \eqref{eq:app-qPochreflec}.

\subsection{Elliptic Formulas}\label{sec:ellipticformula}
Let us also introduce the elliptic formulas used in the main section:
\bea\label{eq:ellipticthetadef}
\theta(x;p)&=(x;p)_{\infty}(px^{-1};p)_{\infty}=\exp\left(-\sum_{m\neq0}\frac{x^{m}}{m(1-p^{m})}\right),\quad |p|<1.
\eea
We have the following reflection property
\beq
\theta(x^{-1};p)=-x^{-1}\theta(x;p).
\eeq
To study the commutativity of the elliptic $qq$-characters and the elliptic deformed screening currents, one can use the following formulas.
\begin{theorem}
    \bea
    \delta(x)&=\frac{1}{1-x}+\frac{x^{-1}}{1-x^{-1}},\quad 
    \frac{\delta(x)}{(p;p)^{2}_{\infty}}&=\frac{1}{\theta(x;p)}+\frac{x^{-1}}{\theta(x^{-1};p)}
    \eea
\end{theorem}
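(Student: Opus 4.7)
The first identity is elementary: expanding $\frac{1}{1-x}$ as a formal power series in $x$ gives $\sum_{n\geq 0} x^n$, and expanding $\frac{x^{-1}}{1-x^{-1}}$ as a formal power series in $x^{-1}$ gives $\sum_{n\geq 1} x^{-n}$; their sum is $\sum_{n\in\mathbb{Z}} x^n = \delta(x)$. This takes care of the rational case in one line and, more importantly, it will be the seed for the theta-function version.

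For the second identity the plan is to isolate, inside each theta function, the single linear factor responsible for the pole at $x=1$, so that the rational identity can be applied directly. Starting from the infinite product $\theta(x;p) = (1-x)(xp;p)_\infty(p/x;p)_\infty$, set
\begin{equation*}
h(x) \,:=\, (xp;p)_\infty (p/x;p)_\infty \, ,
\end{equation*}
which is holomorphic and non-vanishing in a punctured annulus containing $|x|=1$, with $h(1) = (p;p)_\infty^2$. A short manipulation using $1 - x^{-1} = -x^{-1}(1-x)$ turns the reflection formula $\theta(x^{-1};p) = -x^{-1}\theta(x;p)$ into
\begin{equation*}
\theta(x^{-1};p) \,=\, (1-x^{-1})\, h(x) \, ,
\end{equation*}
so that both $\frac{1}{\theta(x;p)}$ and $\frac{x^{-1}}{\theta(x^{-1};p)}$ share the common factor $1/h(x)$. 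Factoring this out yields
\begin{equation*}
\frac{1}{\theta(x;p)} + \frac{x^{-1}}{\theta(x^{-1};p)} \,=\, \frac{1}{h(x)} \left[\, \frac{1}{1-x} + \frac{x^{-1}}{1-x^{-1}}\, \right] \,=\, \frac{\delta(x)}{h(x)} \, ,
\end{equation*}
where the last step invokes the first identity.

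To finish I would apply the standard property $f(x)\,\delta(x) = f(1)\,\delta(x)$ for any formal Laurent series $f$ regular at $x=1$. Since $1/h(x)$ is holomorphic near $x=1$ with value $1/(p;p)_\infty^2$ there, this gives $\delta(x)/h(x) = \delta(x)/(p;p)_\infty^2$, which is the claimed formula. The point requiring a little care, and which I regard as the only real obstacle in the argument, is the interpretation of the two terms as formal Laurent expansions in complementary regions: the natural choice is to expand $\frac{1}{\theta(x;p)}$ in the annulus $|p| < |x| < 1$ and $\frac{x^{-1}}{\theta(x^{-1};p)}$ in the annulus $1 < |x| < |p|^{-1}$, so that each is an unambiguous formal series and their sum acquires the distributional meaning needed to make the evaluation $h(x)\mapsto h(1)$ on the support of $\delta(x)$ rigorous.
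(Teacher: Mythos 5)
Your proof is correct and follows essentially the same route as the paper: both factor $\theta(x;p)=(1-x)h(x)$ and $\theta(x^{-1};p)=(1-x^{-1})h(x)$ with $h(x)=(xp;p)_\infty(p/x;p)_\infty$, invoke the rational identity for $\delta(x)$, and evaluate $h$ at $x=1$ via the support property of $\delta(x)$, giving $h(1)=(p;p)_\infty^2$. The only cosmetic difference is that the paper substitutes $\frac{1}{1-x}=\delta(x)-\frac{x^{-1}}{1-x^{-1}}$ inside $1/\theta(x;p)$ rather than adding the two theta terms first, which is the same computation in a different order.
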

\begin{proof}
The first equation is well known. The second one is obtained using the first one as
\bea
    \frac{1}{\theta(x;p)}&=\frac{1}{(x;p)_{\infty}(px^{-1};p)_{\infty}}=\frac{1}{(1-x)(x;p)_{\infty}(px^{-1};p)_{\infty}}\\
&=\left(\delta(x)-\frac{x^{-1}}{1-x^{-1}}\right)\frac{1}{(xp;p)_{\infty}(px^{-1};p)_{\infty}}=\frac{\delta(x)}{(p;p)^{2}_{\infty}}-\frac{x^{-1}}{\theta(x^{-1};p)}.
\eea
\end{proof}

\section{\texorpdfstring{$\U(1)$}{U(1)} partition functions and structure functions}\label{app:U1partition}
In this section, we summarize some properties of the functions frequently used in the main text. We also give the explicit formulas of the $\U(1)$ partition functions for the $\D2$, $\D4$, $\D6$ theories.
\subsection{Structure functions}\label{app:structurefunct}
The structure functions in \eqref{eq:struct_funct} were defined as 
\bea\label{app-eq:struct_funct}
\mathscr{V}_{a}(x)&=\mathbb{I}[-\bfP_{a}^{\vee}x^{\vee}]=\frac{1-q_{a}x}{1-x},\\
\mathscr{S}_{ab}(x)&=\mathbb{I}[-\bfP_{ab}^{\vee}x^{\vee}]=\frac{(1-q_{a}x)(1-q_{b}x)}{(1-x)(1-q_{a}q_{b}x)},\\
g_{\bar{a}}(x)&=\mathbb{I}[-\bfP_{\bar{a}}^{\vee}x^{\vee}]=\frac{\prod_{i\neq a}(1-q_{i}x)(1-q_{\bar{a}}x)}{(1-x)\prod_{i\neq a}(1-q_{a}^{-1}q_{i}^{-1}x)},\\
\mathcal{A}_{\mathbb{C}^{4}}(x)&=\mathbb{I}[-\bfP_{\four}^{\vee}x^{\vee}]=\frac{\prod_{i=1}^{4}(1-q_{i}x)\prod^{4}_{i=1}(1-q_{i}^{-1}x)}{(1-x)^{2}\prod_{i\neq j}(1-q_{i}q_{j}x)}.
\eea
We have the following properties
\bea
    \mathscr{S}_{ab}(x)=\frac{\mathscr{V}_{a}(x)}{\mathscr{V}_{a}(q_{b}x)},\quad g_{abc}(x)=\frac{\mathscr{S}_{ab}(x)}{\mathscr{S}_{ab}(q_{c}x)},\quad \mathcal{A}_{\mathbb{C}^{4}}(x)=\frac{g_{\bar{a}}(x)}{g_{\bar{a}}(q_{a}x)}
\eea
and the reflection formulas
\bea
\mathscr{V}_{a}(x)=q_{a}\mathscr{V}_{a}(q_{a}^{-1}x^{-1})^{-1},\quad \mathscr{S}_{ab}(x)=\mathscr{S}_{ab}(q_{ab}^{-1}x^{-1}),\quad g_{\bar{a}}(x)=g_{\bar{a}}(q_{a}x^{-1})^{-1},\quad \mathcal{A}_{\mathbb{C}^{4}}(x)=\mathcal{A}_{\mathbb{C}^{4}}(x^{-1}).
\eea

The following functions were also defined to study the recursion formulas of the Nekrasov factors for $\D2,\D4,\D6$ theories. They are determined from the index of the observable bundles $\bfY_{a}$ in \eqref{eq:D2Ybundle}, $\bfY_{A}$ in \eqref{eq:D4Ybundle}, $\bfY_{\bar{a}}$ in \eqref{eq:D6Ybundle} for each theory after localization:
\bea\label{eq:app-structurefunct}
\mathscr{U}^{a}_{k,v}(x)&=\mathbb{I}\left[\bfY_{a}^{\vee}x\right]=\left(1-\frac{v}{x}\right)\prod_{\Abox \in k}\mathscr{V}_{a}\left(\frac{\chi_{a,v}(\Bbox)}{x}\right)=\left(1-\frac{vq_{a}^{k}}{x}\right),\\
\mathscr{Y}^{A}_{\lambda,v}(x)&=\mathbb{I}\left[\bfY_{A}^{\vee}x\right]=\left(1-\frac{v}{x}\right)\prod_{\Abox\in\lambda}\mathscr{S}_{A}\left(\frac{\chi_{A,v}(\Bbox)}{x}\right)=\frac{\prod_{\Abox\in A(\lambda)}\left(1-\chi_{A,v}(\Bbox)/x\right)}{\prod_{\Abox\in R(\lambda)} (1-q_{A}\chi_{A,v}(\Bbox)/x)},\\
\mathscr{W}^{\bar{a}}_{\pi,v}(x)&=\mathbb{I}[\bfY_{\bar{a}}^{\vee}x]=\left(1-\frac{v}{x}\right)\prod_{\scube\in \pi}g_{\bar{a}}\left(\frac{\chi_{\bar{a},v}(\cube)}{x}\right)\propto \prod_{\scube\in A(\pi)}\left(1-\frac{\chi_{\bar{a},v}(\cube)}{x}\right)\prod_{\scube\in R(\pi)}\left(1-q_{a}^{-1}\frac{\chi_{\bar{a},v}(\cube)}{x}\right).
\eea
Although we did not use it in the main text, we formally can define a similar function for the $\D8\tbar\overline{\D8}$ case using $\bfY$ in \eqref{eq:D8Ybundle}
\bea
\mathscr{M}^{K}_{\rho,v}(x)=\mathbb{I}[\bfY^{\vee}x]=\frac{(1-v/x)}{(1-Kv/x)}\prod_{\shcube\in\rho}\mathcal{A}_{\mathbb{C}^{4}}\left(\frac{\chi_{\four,v}(\hcube)}{x}\right).
\eea

We can also define the dual functions as
\bea\label{eq:app-dualfunctions}
\mathscr{U}^{a\,\vee}_{k,v}(x)&=\mathbb{I}\left[\bfY_{a}x^{\vee}\right]=\left(1-\frac{x}{v}\right)\prod_{\Abox\in k}\mathscr{V}_{a}\left(q_{a}^{-1}\frac{x}{\chi_{a,v}(\Bbox)}\right)^{-1}=\left(1-\frac{x}{vq_{a}^{k}}\right),\\
\mathscr{Y}^{A\,\vee}_{\lambda,v}(x)&=\mathbb{I}\left[\bfY_{A}x^{\vee}\right]=\left(1-\frac{x}{v}\right)\prod_{\Abox\in \lambda}\mathscr{S}_{A}\left(q_{A}^{-1}\frac{x}{\chi_{A,v}(\Bbox)}\right),\\
\mathscr{W}^{\bar{a}\,\vee}_{\pi,v}(x)&=\mathbb{I}\left[\bfY_{\bar{a}}x^{\vee}\right]=\left(1-\frac{x}{v}\right)\prod_{\scube\in\pi}g_{\bar{a}}\left(q_{a}\frac{x}{\chi_{\bar{a},v}(\cube)}\right)^{-1},\\
\mathscr{M}^{K\,\vee}_{\rho,v}(x)&=\mathbb{I}[\bfY_{\four}x^{\vee}]=\frac{(1-x/v)}{(1-K^{-1}x/v)}\prod_{\shcube\in\rho}\mathcal{A}_{\mathbb{C}^{4}}\left(\frac{x}{\chi_{\four,v}(\hcube)}\right).
\eea
Note that we have the following reflection formulas 
\bea\label{eq:app-dualreflection}
\mathscr{U}^{a\,\vee}_{k,v}(x)=\left(-\frac{x}{vq_{a}^{k}}\right)\mathscr{U}^{a}_{k,v}(x),&\qquad
\mathscr{Y}^{A\,\vee}_{\lambda,v}(x)=\left(-\frac{x}{v}\right)\mathscr{Y}^{A}_{\lambda,v}(x),\\
\mathscr{W}^{\bar{a}\,\vee}_{\pi,v}(x)=\left(-\frac{x}{v}\right)\mathscr{W}^{\bar{a}}_{\pi,v}(x),&\qquad
\mathscr{M}^{K\,\vee}_{\rho,v}(x)=K\mathscr{M}^{K}_{\rho,v}(x).
\eea

\subsection{Examples of structure functions}
Let us list some explicit formulas for low levels. For the $\D2$ case, one can easily use the formula in \eqref{eq:app-structurefunct} so we omit it.

\paragraph{$\D4$ structure function}
We focus on $\mathscr{Y}^{12}_{\lambda,v}(x)$, and other cases are simply obtained by the symmetries of the parameters. The 2d partitions are denoted as \bea\{\lambda_{1},\lambda_{2},\ldots,\lambda_{\ell(\lambda)}\}\eea
where the $q$-coordinates are
\begin{equation}
    vq_{1}^{i-1}q_{2}^{j-1},\quad 1\leq i\leq \ell(\lambda),\quad 1\leq j\leq \lambda_{i}.
\end{equation}
Up to level $|\lambda|=3$, we have the following functions.
\begin{align*}
    \begin{array}{|c|c|}\hline
       \text{ Young diagram $\lambda$ }  & \,\,\mathscr{Y}^{12}_{\lambda,v}(x)\,\, \rule[-4mm]{0mm}{10mm}\\
    \hline\hline  \emptyset  &  \left(1-v/x\right)      \rule[-4mm]{0mm}{10mm}\\ 
    \hline \{1\}  & \frac{\left(1-q_1 v/x\right) \left(1-q_2 v/x\right)}{\left(1-q_1 q_2
   v/x\right)}\rule[-4mm]{0mm}{10mm} \\
   \hline \{1,1\}   &  \frac{\left(1-q_1^2 v/x\right) \left(1-q_2 v/x\right)}{\left(1-q_1^2 q_2
   v/x\right)} \rule[-4mm]{0mm}{10mm} \\
   \{2\}    &   \frac{\left(1-q_1 v/x\right) \left(1-q_2^2 v/x\right)}{\left(1-q_1 q_2^2
   v/x\right)} \rule[-4mm]{0mm}{10mm} \\
   \hline \{1,1,1\}   &  \frac{\left(1-q_1^3 v/x\right) \left(1-q_2 v/x\right)}{\left(1-q_1^3 q_2
   v/x\right)}\rule[-4mm]{0mm}{10mm}\\
   \{2,1\}  & \frac{\left(1-q_1^2 v/x\right) \left(1-q_1 q_2 v/x\right)
   \left(1-q_2^2 v/x\right)}{\left(1-q_1^2 q_2 v/x\right)   \left(1-q_1 q_2^2 v/x\right)} \rule[-4mm]{0mm}{10mm} \\
   \{3\}   & \frac{\left(1-q_1 v/x\right) \left(1-q_2^3 v/x\right)}{\left(1-q_1 q_2^3
   v/x\right)} \rule[-4mm]{0mm}{10mm} \\ \hline
    \end{array}
\end{align*}

\paragraph{$\D6$ structure function}
For the $\D6$-case, we focus only on $\mathscr{W}^{\bar{4}}_{\pi,v}(x)$. The plane partitions are denoted as lists of numbers as
\begin{equation}
    \{\{\pi_{11},\pi_{12},\ldots \pi_{1n_{1}} \},\{\pi_{21},\pi_{22},\ldots \pi_{2n_{2}}\}\ldots \{\pi_{m1},\pi_{m2},\ldots \pi_{mn_{m}}\}\}
\end{equation}
where we used the $(2,1)$-type description. The $q$-coordinates of the boxes in the plane partition will be 
\begin{equation}
    vq_{1}^{i-1}q_{2}^{j-1}q_{3}^{k-1},\quad 1\leq i\leq m,\quad 1\leq j\leq n_{i},\quad 1\leq k\leq \pi_{ij}.
\end{equation} Denoting the number of boxes as $k=|\pi|$, we have the following functions up to level $k=3$.
\begin{align*}
\renewcommand\arraystretch{3.1}{
\begin{array}{|c|c|}\hline
\raisebox{5pt}{\text{ plane partition $\pi$ }}&\,\,\raisebox{5pt}{$\mathscr{W}^{\bar{4}}_{\pi,v}\left(x\right)$}\,\,\\
\hline\hline \raisebox{5pt}{$\emptyset$} & \raisebox{5pt}{$\left(1-\frac{v}{x}\right)$}\\
\hline\raisebox{5pt}{$ \{\{1\}\}$} & \raisebox{5pt}{$\frac{\left(1-\frac{q_1 v}{x}\right) \left(1-\frac{q_2 v}{x}\right) \left(1-\frac{q_3
   v}{x}\right) \left(1-\frac{q_1 q_2 q_3 v}{x}\right)}{\left(1-\frac{q_1 q_2
   v}{x}\right) \left(1-\frac{q_1 q_3 v}{x}\right) \left(1-\frac{q_2 q_3 v}{x}\right)}$}\\
\hline \{\{1\},\{1\}\} & \raisebox{3pt}{$\frac{\left(1-\frac{q_1^2 v}{x}\right) \left(1-\frac{q_2 v}{x}\right) \left(1-\frac{q_3
   v}{x}\right) \left(1-\frac{q_1^2 q_2 q_3 v}{x}\right)}{\left(1-\frac{q_1^2 q_2
   v}{x}\right) \left(1-\frac{q_1^2 q_3 v}{x}\right) \left(1-\frac{q_2 q_3 v}{x}\right)}$}\\
   \{\{2\}\} & \raisebox{3pt}{$\frac{\left(1-\frac{q_1 v}{x}\right) \left(1-\frac{q_2 v}{x}\right) \left(1-\frac{q_3^2
   v}{x}\right) \left(1-\frac{q_1 q_2 q_3^2 v}{x}\right)}{\left(1-\frac{q_1 q_2
   v}{x}\right) \left(1-\frac{q_1 q_3^2 v}{x}\right) \left(1-\frac{q_2 q_3^2
   v}{x}\right)}$}\\
   \{\{1,1\}\} & \raisebox{3pt}{$\frac{\left(1-\frac{q_1 v}{x}\right) \left(1-\frac{q_2^2 v}{x}\right) \left(1-\frac{q_3
   v}{x}\right) \left(1-\frac{q_1 q_2^2 q_3 v}{x}\right)}{\left(1-\frac{q_1 q_2^2
   v}{x}\right) \left(1-\frac{q_1 q_3 v}{x}\right) \left(1-\frac{q_2^2 q_3 v}{x}\right)}$}\\
   \hline \{\{3\}\} & \frac{\left(1-\frac{q_1 v}{x}\right) \left(1-\frac{q_2 v}{x}\right) \left(1-\frac{q_3^3
   v}{x}\right) \left(1-\frac{q_1 q_2 q_3^3 v}{x}\right)}{\left(1-\frac{q_1 q_2
   v}{x}\right) \left(1-\frac{q_1 q_3^3 v}{x}\right) \left(1-\frac{q_2 q_3^3
   v}{x}\right)}\\
   \{\{2,1\}\} & \frac{\left(1-\frac{q_1 v}{x}\right) \left(1-\frac{q_2^2 v}{x}\right) \left(1-\frac{q_2
   q_3 v}{x}\right) \left(1-\frac{q_1 q_2^2 q_3 v}{x}\right) \left(1-\frac{q_3^2
   v}{x}\right) \left(1-\frac{q_1 q_2 q_3^2 v}{x}\right)}{\left(1-\frac{q_1 q_2^2
   v}{x}\right) \left(1-\frac{q_1 q_2 q_3 v}{x}\right) \left(1-\frac{q_2^2 q_3
   v}{x}\right) \left(1-\frac{q_1 q_3^2 v}{x}\right) \left(1-\frac{q_2 q_3^2
   v}{x}\right)}\\
   \{\{1,1,1\}\} & \frac{\left(1-\frac{q_1 v}{x}\right) \left(1-\frac{q_2^3 v}{x}\right) \left(1-\frac{q_3
   v}{x}\right) \left(1-\frac{q_1 q_2^3 q_3 v}{x}\right)}{\left(1-\frac{q_1 q_2^3
   v}{x}\right) \left(1-\frac{q_1 q_3 v}{x}\right) \left(1-\frac{q_2^3 q_3 v}{x}\right)}\\
   \{\{2\},\{1\}\} & \frac{\left(1-\frac{q_1^2 v}{x}\right) \left(1-\frac{q_2 v}{x}\right) \left(1-\frac{q_1
   q_3 v}{x}\right) \left(1-\frac{q_1^2 q_2 q_3 v}{x}\right) \left(1-\frac{q_3^2
   v}{x}\right) \left(1-\frac{q_1 q_2 q_3^2 v}{x}\right)}{\left(1-\frac{q_1^2 q_2
   v}{x}\right) \left(1-\frac{q_1^2 q_3 v}{x}\right) \left(1-\frac{q_1 q_2 q_3
   v}{x}\right) \left(1-\frac{q_1 q_3^2 v}{x}\right) \left(1-\frac{q_2 q_3^2
   v}{x}\right)}\\
   \{\{1,1\},\{1\}\} & \frac{\left(1-\frac{q_1^2 v}{x}\right) \left(1-\frac{q_1 q_2 v}{x}\right)
   \left(1-\frac{q_2^2 v}{x}\right) \left(1-\frac{q_3 v}{x}\right) \left(1-\frac{q_1^2
   q_2 q_3 v}{x}\right) \left(1-\frac{q_1 q_2^2 q_3 v}{x}\right)}{\left(1-\frac{q_1^2 q_2
   v}{x}\right) \left(1-\frac{q_1 q_2^2 v}{x}\right) \left(1-\frac{q_1^2 q_3 v}{x}\right)
   \left(1-\frac{q_1 q_2 q_3 v}{x}\right) \left(1-\frac{q_2^2 q_3 v}{x}\right)}\\
   \{\{1\},\{1\},\{1\}\} &\raisebox{3pt}{$\frac{\left(1-\frac{q_1^3 v}{x}\right) \left(1-\frac{q_2 v}{x}\right) \left(1-\frac{q_3
   v}{x}\right) \left(1-\frac{q_1^3 q_2 q_3 v}{x}\right)}{\left(1-\frac{q_1^3 q_2
   v}{x}\right) \left(1-\frac{q_1^3 q_3 v}{x}\right) \left(1-\frac{q_2 q_3 v}{x}\right)}$}\\
   \hline
\end{array}
}
\end{align*}

\subsection{D6 partition functions}\label{app:D6U1partitionfunction}
The partition function of the 7d $\U(1)$ theory on $\mathbb{C}^{3}_{\bar{a}}\times \mathbb{S}^{1}$ is given from
\bea
\mathcal{Z}^{\D6}_{\bar{a}}[\pi]=\frac{1}{\mathsf{N}_{\bar{a}}(v,\pi\,|\,v,\pi)},\quad \widetilde{Z}^{\D6}_{\bar{a}}[\pi]=\prod_{\scube\in\pi}\left(-\frac{q_{a}v}{\chi_{\bar{a},v}(\cube)}\right)\mathcal{Z}^{\D6}_{\bar{a}}[\pi],
\eea
where 
\beq
\mathsf{N}_{\bar{a}}(v,\pi\,|\,v,\pi)=\prod_{\scube\in\pi}\frac{(1-v/\chi_{\bar{a},v}(\cube))}{(1-q_{a}^{-1}\chi_{\bar{a},v}(\cube)/v)}\prod_{\substack{\scube\in\pi\\\scubeF\in\pi}}g_{\bar{a}}\left(\frac{\chi_{\bar{a},v}(\cube)}{\chi_{\bar{a},v}(\cubeF)}\right).
\eeq
For simplicity, we only consider $a=4$ and $\mathcal{Z}^{\D6}_{\bar{4}}[\pi]$. Note that this comes from the following character:
\bea\label{eq:app-D6U1character}
    &\mathbf{v}_{\text{inst}}=-\bfN_{123}^{\vee}\bfK_{123}+q_{4}\bfK_{123}^{\vee}\bfN_{123}+\bfP^{\vee}_{123}\bfK_{123}^{\vee}\bfK_{123},\\
    &\bfN_{123}=v,\quad \bfK_{123}=\sum_{\scube\,\in\pi}\chi_{123,v}(\cube)=\sum_{(x,y,z)\in\pi}vq_{1}^{x-1}q_{2}^{y-1}q_{3}^{z-1}.
\eea

The generating function of the plane partition is referred to as the MacMahon function, which is given as 
\bea
Z_{3}(\mathfrak{q})=\sum_{\pi}\mathfrak{q}^{|\pi|}=\prod_{n=1}^{\infty}\frac{1}{(1-\mathfrak{q}^{n})^{n}}=1+\mathfrak{q}+3\mathfrak{q}^{2}+6\mathfrak{q}^{3}+13\mathfrak{q}^{4}+\cdots .
\eea
For levels of instanton $k=|\pi|\leq 3$, the $\U(1)$ contribution is given as the following.  
\begin{itemize}
\item $k=0,\,\,\pi=\emptyset $:
\beq
1
\eeq
    \item $k=1,\,\,\pi=\{\{1\}\}$:
    \beq
        \frac{(1-q_{1}q_{2})(1-q_{1}q_{3})(1-q_{2}q_{3})}{(1-q_{1})(1-q_{2})(1-q_{3})}
    \eeq
    \item $k=2$
    \bea
        &\pi=\{\{1\},\{1\}\}:\quad \frac{q_{1}(1-q_{1}q_{2})(1-q_{1}^{2}q_{2})(1-q_{1}q_{3})(1-q_{1}^{2}q_{3})(q_{1}-q_{2}q_{3})}{(1-q_{1})(1-q_{1}^2)(1-q_{2})(q_{1}-q_{2})(1-q_{3})(q_{1}-q_{3})},\\
        &\pi=\{\{2\}\}:\quad \frac{(1 - q_{1} q_{2}) q_{3} (-q_{1} q_{2} + q_{3}) (1 - q_{1} q_{3}) (1 - q_{2} q_{3}) (1 - q_{1} q_{3}^{2}) (1 - q_{2} q_{3}^{2})}{(1 - q_{1}) (1 - q_{2}) (1 - q_{3}) (-q_{1} + q_{3}) (-q_{2} + q_{3}) (1 - q_{3}^{2})},\\
        &\pi=\{\{1,1,\}\}:\quad \frac{q_{2} (1 - q_{1}q_{2}) (1 - q_{1}q_{2}^{2}) (1 - q_{1}q_{3}) (q_{2} - q_{1}q_{3}) (1 - q_{2}q_{3}) (1 - q_{2}^{2}q_{3})}{(1 - q_{1})(1 - q_{2})(-q_{1} + q_{2})(1 - q_{2}^{2})(1 - q_{3})(q_{2} - q_{3})}
    \eea
    \item $k=3$
    \bea
        &\pi=\{\{3\}\}:\quad \\
        &\frac{(1 - q_{1}q_{2})q_{3}^{3}(-q_{1}q_{2} + q_{3})(1 - q_{1}q_{3})(1 - q_{2}q_{3})(-q_{1}q_{2} + q_{3}^{2})(1 - q_{1}q_{3}^{2})(1 - q_{2}q_{3}^{2})(1 - q_{1}q_{3}^{3})(1 - q_{2}q_{3}^{3})}{(1 - q_{1})(1 - q_{2})(1 - q_{3})(-q_{1} + q_{3})(-q_{2} + q_{3})(1 - q_{3}^{2})(-q_{1} + q_{3}^{2})(-q_{2} + q_{3}^{2})(1 - q_{3}^{3})}\\
        &\pi=\{\{2,1\}\}:\quad\\
        &\frac{q_{2} \left(1-q_{1} q_{2}\right)^{2} q_{3} \left(-q_{1} q_{2}^{2}+q_{3}\right) \left(1-q_{1} q_{3}\right)^{2} \left(1-q_{2} q_{3}\right) \left(1-q_{2}^{2} q_{3}\right) \left(q_{2}-q_{1} q_{3}^{2}\right)}{\left(1-q_{1}\right) \left(1-q_{2}\right)^{2} \left(-q_{1}+q_{2}\right) \left(1-q_{3}\right)^{2} \left(-q_{1}+q_{3}\right) \left(-q_{2}^{2}+q_{3}\right) \left(q_{2}-q_{3}^{2}\right)}\\
        &\pi=\{\{1,1,1\}\}:\\
        &\frac{q_{2}^{3} (1 - q_{1} q_{2}) (1 - q_{1} q_{2}^{2}) (1 - q_{1} q_{2}^{3}) (1 - q_{1} q_{3}) (q_{2} - q_{1} q_{3}) (q_{2}^{2} - q_{1} q_{3}) (1 - q_{2} q_{3}) (1 - q_{2}^{2} q_{3}) (1 - q_{2}^{3} q_{3})}{(1 - q_{1}) (1 - q_{2}) (-q_{1} + q_{2}) (1 - q_{2}^{2}) (-q_{1} + q_{2}^{2}) (1 - q_{2}^{3}) (1 - q_{3}) (q_{2} - q_{3}) (q_{2}^{2} - q_{3})}\nonumber
\eea
\bea
&\pi=\{\{2\},\{1\}\}:\\
&\frac{q_1 (1 - q_1 q_2)^2 q_3 (-q_1^2 q_2 + q_3) (1 - q_1 q_3) (1 - q_1^2 q_3) (1 - 
   q_2 q_3)^2 (1 - q_1 q_3^2) (q_1 - q_2 q_3^2)}{(1 - q_1)^2 (1 - q_2) (q_1 - 
   q_2) (1 - q_3)^2 (-q_1^2 + q_3) (-q_2 + q_3) (q_1 - q_3^2)} \\
&\pi=\{\{1,1\}\{1\}\}:\\
&\frac{q_{1} q_{2} (1 - q_{1} q_{2}) (1 - q_{1}^{2} q_{2}) (1 - q_{1} q_{2}^{2}) (1 - q_{1} q_{3})^{2} (q_{2} - q_{1}^{2} q_{3}) (1 - q_{2} q_{3})^{2} (q_{1} - q_{2}^{2} q_{3})}{(1 - q_{1})^{2} (1 - q_{2})^{2} (-q_{1}^{2} + q_{2}) (q_{1} - q_{2}^{2}) (1 - q_{3}) (q_{1} - q_{3}) (q_{2} - q_{3})}\\
&\pi=\{\{1\},\{1\},\{1\}\}:\\
&\frac{q_{1}^{3} (1 - q_{1} q_{2}) (1 - q_{1}^{2} q_{2}) (1 - q_{1}^{3} q_{2}) (1 - q_{1} q_{3}) (1 - q_{1}^{2} q_{3}) (1 - q_{1}^{3} q_{3}) (1 - q_{2} q_{3}) (q_{1} - q_{2} q_{3}) (q_{1}^{2} - q_{2} q_{3})}{(1 - q_{1}) (1 - q_{1}^{2}) (1 - q_{1}^{3}) (1 - q_{2}) (q_{1} - q_{2}) (q_{1}^{2} - q_{2}) (1 - q_{3}) (q_{1} - q_{3}) (q_{1}^{2} - q_{3})}
\eea
\end{itemize}
Therefore, in the simplest case when there is only one box in the plane partition, we have 
\bea\label{eq:app-D6level1}
\mathcal{Z}_{\bar{4}}^{\D6}[\{\{1\}\}]=\frac{(1-q_{1}q_{2})(1-q_{1}q_{3})(1-q_{2}q_{3})}{(1-q_{1})(1-q_{2})(1-q_{3})},\quad \widetilde{\mathcal{Z}}_{\bar{4}}^{\D6}[\{\{1\}\}]=-q_{4}\frac{(1-q_{1}q_{2})(1-q_{1}q_{3})(1-q_{2}q_{3})}{(1-q_{1})(1-q_{2})(1-q_{3})}.
\eea

Under the NS limit $q_{4}\rightarrow 1$ and $q_{1,2,3}\rightarrow \sfq_{1,2,3}$, the partition function becomes trivial: $\widetilde{\mathcal{Z}}_{\bar{4}}^{\D6}[\pi]=1$. For example, we have
\bea
\widetilde{\mathcal{Z}}_{\bar{4}}^{\D6}[\{\{1\}\}]\,\,\longrightarrow\,\,-\frac{(1-\sfq_{1}\sfq_{2})(1-\sfq_{1}\sfq_{3})(1-\sfq_{2}\sfq_{3})}{(1-\sfq_{1})(1-\sfq_{2})(1-\sfq_{3})}=1.
\eea
Hence, under this limit the total $\U(1)$ partition function becomes the MacMahon function:
\bea
\mathcal{Z}^{\D6}_{\bar{4},\text{inst.}}[\U(1)]=\sum_{\pi}\mathfrak{q}^{|\pi|}\widetilde{\mathcal{Z}}^{\D6}_{\bar{4}}[\pi]\,\,\xrightarrow{q_{4}\rightarrow 1}\,\, Z_{3}(\mathfrak{q})=\sum_{\pi}\mathfrak{q}^{|\pi|}.
\eea

\paragraph{Recursion formulas}
For low levels, using the explicit form of the partition functions, we can write down the recursion formula for $\mathcal{Z}^{\D6}_{\bar{4}}[\pi]$ as
\begin{align*}
\begin{array}{|c|c|c|c|}\hline
\text{ $\pi$ } & \,\,\pi+\cube\,\,&\text{ $\chi_{\bar{4},v}(\cube)$ }& \,\,\mathcal{Z}^{\D6}_{\bar{4}}[\pi+\cube]/\mathcal{Z}^{\D6}_{\bar{4}}[\pi] \rule[-4mm]{0mm}{10mm}\\
\hline\hline\emptyset&\{\{1\}\} & v &  \frac{(1-q_{1}q_{2})(1-q_{1}q_{3})(1-q_{2}q_{3})}{(1-q_{1})(1-q_{2})(1-q_{3})}\rule[-4mm]{0mm}{10mm}\\
\hline \{\{1\}\} &\{\{1\},\{1\}\} & vq_{1} & \frac{q_1 \left(1-q_1^2 q_2\right) \left(1-q_1^2 q_3\right) \left(q_1-q_2
   q_3\right)}{\left(1-q_1^2\right) \left(q_1-q_2\right) \left(q_1-q_3\right)}\rule[-4mm]{0mm}{10mm}\\
   \hline \{\{1\}\} & \{\{2\}\}  & vq_{3} & \frac{q_3 \left(q_3-q_1 q_2\right) \left(1-q_1 q_3^2\right) \left(1-q_2
   q_3^2\right)}{\left(q_3-q_1\right) \left(q_3-q_2\right) \left(1-q_3^2\right)}\rule[-4mm]{0mm}{10mm}\\
   \hline \{\{1\}\} &\{\{1,1\}\} &vq_{2} &\frac{q_2 \left(1-q_1 q_2^2\right) \left(q_2-q_1 q_3\right) \left(1-q_2^2
   q_3\right)}{\left(q_2-q_1\right) \left(1-q_2^2\right) \left(q_2-q_3\right)}\rule[-4mm]{0mm}{10mm}\\
  \hline \{\{1\},\{1\}\} & \{\{1\},\{1\},\{1\}\}& vq_{1}^{2}&\frac{q_1^2 \left(1-q_1^3 q_2\right) \left(1-q_1^3 q_3\right) \left(q_1^2-q_2
   q_3\right)}{\left(1-q_1^3\right) \left(q_1^2-q_2\right) \left(q_1^2-q_3\right)}\rule[-4mm]{0mm}{10mm}\\
  \hline\{\{1\},\{1\}\} &\{\{1,1\},\{1\}\}& vq_{2}&\frac{\left(1-q_1^2\right) \left(q_1-q_2\right) q_2 \left(1-q_1 q_2^2\right) \left(1-q_1
   q_3\right) \left(q_2-q_1^2 q_3\right) \left(1-q_2 q_3\right) \left(q_1-q_2^2
   q_3\right)}{\left(1-q_1\right) \left(1-q_2\right) \left(q_2-q_1^2\right)
   \left(q_1-q_2^2\right) \left(q_2-q_3\right) \left(1-q_1^2 q_3\right) \left(q_1-q_2
   q_3\right)}\rule[-4mm]{0mm}{10mm}\\
  \hline \{\{1\},\{1\}\}&\{\{2\},\{1\}\}& vq_{3}&\frac{\left(1-q_1^2\right) \left(1-q_1 q_2\right) \left(q_1-q_3\right) q_3
   \left(q_3-q_1^2 q_2\right) \left(1-q_2 q_3\right) \left(1-q_1 q_3^2\right)
   \left(q_1-q_2 q_3^2\right)}{\left(1-q_1\right) \left(1-q_1^2 q_2\right)
   \left(1-q_3\right) \left(q_3-q_1^2\right) \left(q_3-q_2\right) \left(q_1-q_2
   q_3\right) \left(q_1-q_3^2\right)}\rule[-4mm]{0mm}{10mm}\\
  \hline \{\{2\}\} & \{\{2\},\{1\}\}& vq_{1}&\frac{q_1 \left(1-q_1 q_2\right) \left(q_3-q_1\right) \left(q_3-q_1^2 q_2\right)
   \left(1-q_1^2 q_3\right) \left(1-q_2 q_3\right) \left(1-q_3^2\right) \left(q_1-q_2
   q_3^2\right)}{\left(1-q_1\right) \left(q_1-q_2\right) \left(1-q_3\right)
   \left(q_3-q_1^2\right) \left(q_3-q_1 q_2\right) \left(q_1-q_3^2\right) \left(1-q_2
   q_3^2\right)}\rule[-4mm]{0mm}{10mm}\\
  \hline\{\{2\}\} &\{\{2,1\}\}& vq_{2}&\frac{q_2 \left(1-q_1 q_2\right) \left(q_3-q_2\right) \left(q_3-q_1 q_2^2\right)
   \left(1-q_1 q_3\right) \left(1-q_2^2 q_3\right) \left(1-q_3^2\right) \left(q_2-q_1
   q_3^2\right)}{\left(1-q_2\right) \left(q_2-q_1\right) \left(1-q_3\right) \left(q_3-q_1
   q_2\right) \left(q_3-q_2^2\right) \left(q_2-q_3^2\right) \left(1-q_1 q_3^2\right)}\rule[-4mm]{0mm}{10mm}\\
  \hline \{\{2\}\}&\{\{3\}\}& vq_{3}^{2}&\frac{q_3^2 \left(q_3^2-q_1 q_2\right) \left(1-q_1 q_3^3\right) \left(1-q_2
   q_3^3\right)}{\left(q_3^2-q_1\right) \left(q_3^2-q_2\right) \left(1-q_3^3\right)}\rule[-4mm]{0mm}{10mm}\\
  \hline \{\{1,1\}\} & \{\{1,1,1\}\}& vq_{2}^{2}&\frac{q_2^2 \left(1-q_1 q_2^3\right) \left(q_2^2-q_1 q_3\right) \left(1-q_2^3
   q_3\right)}{\left(q_2^2-q_1\right) \left(1-q_2^3\right) \left(q_2^2-q_3\right)}\rule[-4mm]{0mm}{10mm}\\
  \hline\{\{1,1\}\} &\{\{1,1\},\{1\}\}& vq_{1}&\frac{q_1 \left(q_2-q_1\right) \left(1-q_1^2 q_2\right) \left(1-q_2^2\right) \left(1-q_1
   q_3\right) \left(q_2-q_1^2 q_3\right) \left(1-q_2 q_3\right) \left(q_1-q_2^2
   q_3\right)}{\left(1-q_1\right) \left(1-q_2\right) \left(q_2-q_1^2\right)
   \left(q_1-q_2^2\right) \left(q_1-q_3\right) \left(q_2-q_1 q_3\right) \left(1-q_2^2
   q_3\right)}\rule[-4mm]{0mm}{10mm}\\
  \hline \{\{1,1\}\}&\{\{2,1\}\}& vq_{3}&\frac{\left(1-q_1 q_2\right) \left(1-q_2^2\right) \left(q_2-q_3\right) q_3 \left(q_3-q_1
   q_2^2\right) \left(1-q_1 q_3\right) \left(q_2-q_1 q_3^2\right) \left(1-q_2
   q_3^2\right)}{\left(1-q_2\right) \left(1-q_1 q_2^2\right) \left(1-q_3\right)
   \left(q_3-q_1\right) \left(q_3-q_2^2\right) \left(q_2-q_1 q_3\right)
   \left(q_2-q_3^2\right)}\rule[-4mm]{0mm}{10mm}\\
  \hline
\end{array}
\end{align*}
Using the explicit forms of $\mathscr{W}^{\bar{4}}_{\pi,v}(x)$, we can compute the following function for low levels:
\begin{equation}
    \frac{\underset{x=\chi_{\bar{4},v}(\scube)}{\Res}x^{-1}\mathscr{W}^{\bar{4}}_{\pi,v}(x)^{-1}}{\underset{x=\chi_{\bar{4},v}(\scube)}{\Res}x^{-1}\mathscr{W}^{\bar{4}}_{\pi+\scube,v}(q_{4}^{-1}x)^{-1}}.
\end{equation}
where note that $\underset{x'=q_{4}\chi_{\bar{4},x}(\scube)}{\Res}x'^{-1}\mathscr{W}^{\bar{4}}_{\pi,x}(q_{4}^{-1}x')^{-1}=\underset{x=\chi_{\bar{4},v}(\scube)}{\Res}x^{-1}\mathscr{W}^{\bar{4}}_{\pi,v}(x)^{-1}$. The results are
\begin{align*}
\begin{tabular}{|c|c|c|c|}\hline
 $\pi$  & \,\,$\pi+\cube$\,\,& $\chi_{\bar{4},v}(\cube)$ & $\frac{\underset{x=\chi_{\bar{4},v}(\scube)}{\Res}x^{-1}\mathscr{W}^{\bar{4}}_{\pi,v}(x)^{-1}}{\underset{x=\chi_{\bar{4},v}(\scube)}{\Res}x^{-1}\mathscr{W}^{\bar{4}}_{\pi+\scube,v}(q_{4}^{-1}x)^{-1}}$\rule[0mm]{0mm}{9mm}\\
\hline\hline$\emptyset$&$\{\{1\}\}$ & $v $&  $q_{123}^{-1}\frac{(1-q_{1}q_{2})(1-q_{1}q_{3})(1-q_{2}q_{3})}{(1-q_{1})(1-q_{2})(1-q_{3})}$\rule[-3.5mm]{0mm}{9mm}\\
\hline $\{\{1\}\}$ &$\{\{1\},\{1\}\}$ & $vq_{1}$ & $-\frac{\left(q_1^2 q_2-1\right) \left(q_1^2 q_3-1\right) \left(q_1-q_2 q_3\right)}{q_1
   \left(q_1^2-1\right) \left(q_1-q_2\right) q_2 \left(q_1-q_3\right) q_3}$\rule[-4mm]{0mm}{10mm}\\
   \hline $\{\{1\}\}$ & $\{\{2\}\}$  & $vq_{3}$ & $\frac{\left(q_1 q_2-q_3\right) \left(q_1 q_3^2-1\right) \left(q_2 q_3^2-1\right)}{q_1 q_2
   \left(q_1-q_3\right) \left(q_2-q_3\right) q_3 \left(q_3^2-1\right)}$\rule[-4mm]{0mm}{10mm}\\
   \hline $\{\{1\}\}$ &$\{\{1,1\}\}$ &$vq_{2}$ &$-\frac{\left(q_1 q_2^2-1\right) \left(q_1 q_3-q_2\right) \left(q_2^2 q_3-1\right)}{q_1
   \left(q_1-q_2\right) q_2 \left(q_2^2-1\right) \left(q_2-q_3\right) q_3}$\rule[-4mm]{0mm}{10mm}\\
  \hline $\{\{1\},\{1\}\}$ & $\{\{1\},\{1\},\{1\}\}$& $vq_{1}^{2}$&$-\frac{\left(q_1^3 q_2-1\right) \left(q_1^3 q_3-1\right) \left(q_1^2-q_2 q_3\right)}{q_1
   \left(q_1^3-1\right) \left(q_1^2-q_2\right) q_2 \left(q_1^2-q_3\right) q_3}$\rule[-4mm]{0mm}{10mm}\\
  \hline$\{\{1\},\{1\}\}$ &$\{\{1,1\},\{1\}\}$& $vq_{2}$&$-\frac{\left(q_1^2-1\right) \left(q_1-q_2\right) \left(q_1 q_2^2-1\right) \left(q_1
   q_3-1\right) \left(q_1^2 q_3-q_2\right) \left(q_2 q_3-1\right) \left(q_1-q_2^2
   q_3\right)}{\left(q_1-1\right) q_1 \left(q_1^2-q_2\right) \left(q_2-1\right) q_2
   \left(q_1-q_2^2\right) \left(q_2-q_3\right) q_3 \left(q_1^2 q_3-1\right) \left(q_1-q_2
   q_3\right)}$\rule[-4mm]{0mm}{10mm}\\
  \hline $\{\{1\},\{1\}\}$&$\{\{2\},\{1\}\}$& $vq_{3}$&$\frac{\left(q_1^2-1\right) \left(q_1 q_2-1\right) \left(q_1-q_3\right) \left(q_1^2
   q_2-q_3\right) \left(q_2 q_3-1\right) \left(q_1 q_3^2-1\right) \left(q_1-q_2
   q_3^2\right)}{\left(q_1-1\right) q_1 q_2 \left(q_1^2 q_2-1\right)
   \left(q_1^2-q_3\right) \left(q_2-q_3\right) \left(q_3-1\right) q_3 \left(q_1-q_2
   q_3\right) \left(q_1-q_3^2\right)}$\rule[-4mm]{0mm}{10mm}\\
  \hline $\{\{2\}\}$ & $\{\{2\},\{1\}\}$& $vq_{1}$&$-\frac{\left(q_1 q_2-1\right) \left(q_1-q_3\right) \left(q_1^2 q_2-q_3\right) \left(q_1^2
   q_3-1\right) \left(q_2 q_3-1\right) \left(q_3^2-1\right) \left(q_1-q_2
   q_3^2\right)}{\left(q_1-1\right) q_1 \left(q_1-q_2\right) q_2 \left(q_1^2-q_3\right)
   \left(q_1 q_2-q_3\right) \left(q_3-1\right) q_3 \left(q_1-q_3^2\right) \left(q_2
   q_3^2-1\right)}$\rule[-4mm]{0mm}{10mm}\\
  \hline$\{\{2\}\}$ &$\{\{2,1\}\}$& $vq_{2}$&$-\frac{\left(q_1 q_2-1\right) \left(q_2-q_3\right) \left(q_1 q_2^2-q_3\right) \left(q_1
   q_3-1\right) \left(q_2^2 q_3-1\right) \left(q_3^2-1\right) \left(q_1
   q_3^2-q_2\right)}{q_1 \left(q_1-q_2\right) \left(q_2-1\right) q_2 \left(q_1
   q_2-q_3\right) \left(q_2^2-q_3\right) \left(q_3-1\right) q_3 \left(q_2-q_3^2\right)
   \left(q_1 q_3^2-1\right)}$\rule[-4mm]{0mm}{10mm}\\
  \hline $\{\{2\}\}$&$\{\{3\}\}$& $vq_{3}^{2}$&$\frac{\left(q_1 q_2-q_3^2\right) \left(q_1 q_3^3-1\right) \left(q_2 q_3^3-1\right)}{q_1
   q_2 q_3 \left(q_1-q_3^2\right) \left(q_2-q_3^2\right) \left(q_3^3-1\right)}$\rule[-4mm]{0mm}{10mm}\\
  \hline $\{\{1,1\}\}$ & $\{\{1,1,1\}\}$ &$vq_{2}^{2}  $&$-\frac{\left(q_1 q_2^3-1\right) \left(q_1 q_3-q_2^2\right) \left(q_2^3 q_3-1\right)}{q_1
   q_2 \left(q_1-q_2^2\right) \left(q_2^3-1\right) \left(q_2^2-q_3\right) q_3}$\rule[-4mm]{0mm}{10mm}\\
  \hline$\{\{1,1\}\}$ &$\{\{1,1\},\{1\}\}$&$vq_{1}$ &$-\frac{\left(q_1-q_2\right) \left(q_1^2 q_2-1\right) \left(q_2^2-1\right) \left(q_1
   q_3-1\right) \left(q_1^2 q_3-q_2\right) \left(q_2 q_3-1\right) \left(q_1-q_2^2
   q_3\right)}{\left(q_1-1\right) q_1 \left(q_1^2-q_2\right) \left(q_2-1\right) q_2
   \left(q_1-q_2^2\right) \left(q_1-q_3\right) q_3 \left(q_1 q_3-q_2\right) \left(q_2^2
   q_3-1\right)}$\rule[-4mm]{0mm}{10mm}\\
  \hline $\{\{1,1\}\}$&$\{\{2,1\}\}$&$vq_{3}$ &$\frac{\left(q_1 q_2-1\right) \left(q_2^2-1\right) \left(q_2-q_3\right) \left(q_1
   q_2^2-q_3\right) \left(q_1 q_3-1\right) \left(q_1 q_3^2-q_2\right) \left(q_2
   q_3^2-1\right)}{q_1 \left(q_2-1\right) q_2 \left(q_1 q_2^2-1\right)
   \left(q_1-q_3\right) \left(q_2^2-q_3\right) \left(q_3-1\right) q_3 \left(q_1
   q_3-q_2\right) \left(q_2-q_3^2\right)}$\rule[-4mm]{0mm}{10mm}\\
  \hline
\end{tabular}
\end{align*}
We then have the following relation.
\begin{theorem}\label{eq:app-thm-D6U1recursionformula}
    The recursion formula of the $\U(1)$ partition function $\widetilde{\mathcal{Z}}^{\D6}_{\bar{a}}[\pi]$ is related with the functions $\mathscr{W}^{\bar{a}}_{\pi,v}(x)$:
    \bea
        \frac{\widetilde{\mathcal{Z}}^{\D6}_{\bar{a}}[\pi+\cube]}{\widetilde{\mathcal{Z}}^{\D6}_{\bar{a}}[\pi]}=-\frac{\underset{x'=q_{a}\chi_{\bar{a},v}(\scube)}{\Res}x'^{-1}\mathscr{W}^{\bar{a}}_{\pi,v}(q_{a}^{-1}x')^{-1}}{\underset{x'=\chi_{\bar{a},v}(\scube)}{\Res}x'^{-1}\mathscr{W}^{\bar{a}}_{\pi+\scube,v}(q_{a}^{-1}x')^{-1}}
    \eea
\end{theorem}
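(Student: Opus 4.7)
The strategy is to reduce the statement to the diagonal Nekrasov recursion formula quoted in the main text, and then to perform a change of variables in the residue. First, I would start from the definition
\begin{equation*}
\widetilde{\mathcal{Z}}^{\D6}_{\bar{a}}[\pi]=\prod_{\scube\in\pi}\left(-\frac{q_{a}v}{\chi_{\bar{a},v}(\cube)}\right)\frac{1}{\mathsf{N}_{\bar{a}}(v,\pi\,|\,v,\pi)},
\end{equation*}
so that
\begin{equation*}
\frac{\widetilde{\mathcal{Z}}^{\D6}_{\bar{a}}[\pi+\cube]}{\widetilde{\mathcal{Z}}^{\D6}_{\bar{a}}[\pi]}=\left(-\frac{q_{a}v}{\chi_{\bar{a},v}(\cube)}\right)\frac{\mathsf{N}_{\bar{a}}(v,\pi\,|\,v,\pi)}{\mathsf{N}_{\bar{a}}(v,\pi+\cube\,|\,v,\pi+\cube)}.
\end{equation*}

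Second, I would substitute the diagonal Nekrasov recursion quoted in the main text,
\begin{equation*}
\frac{\mathsf{N}_{\bar{a}}(v,\pi+\cube\,|\,v,\pi+\cube)}{\mathsf{N}_{\bar{a}}(v,\pi\,|\,v,\pi)}=\frac{q_{a}v}{\chi_{\bar{a},v}(\cube)}\cdot\frac{\underset{x=\chi_{\bar{a},v}(\scube)}{\Res}x^{-1}\mathscr{W}^{\bar{a}}_{\pi+\scube,v}(q_{a}^{-1}x)^{-1}}{\underset{x=\chi_{\bar{a},v}(\scube)}{\Res}x^{-1}\mathscr{W}^{\bar{a}}_{\pi,v}(x)^{-1}}.
\end{equation*}
The two explicit prefactors $(-q_{a}v/\chi_{\bar{a},v}(\cube))$ and $(q_{a}v/\chi_{\bar{a},v}(\cube))^{-1}$ then combine to $-1$, producing
\begin{equation*}
\frac{\widetilde{\mathcal{Z}}^{\D6}_{\bar{a}}[\pi+\cube]}{\widetilde{\mathcal{Z}}^{\D6}_{\bar{a}}[\pi]}=-\frac{\underset{x=\chi_{\bar{a},v}(\scube)}{\Res}x^{-1}\mathscr{W}^{\bar{a}}_{\pi,v}(x)^{-1}}{\underset{x=\chi_{\bar{a},v}(\scube)}{\Res}x^{-1}\mathscr{W}^{\bar{a}}_{\pi+\scube,v}(q_{a}^{-1}x)^{-1}}.
\end{equation*}
Third, I would perform the change of variables $x'=q_{a}x$ in the numerator. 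Since the logarithmic measure $x^{-1}dx=x'^{-1}dx'$ is scale invariant, this yields
\begin{equation*}
\underset{x=\chi_{\bar{a},v}(\scube)}{\Res}x^{-1}\mathscr{W}^{\bar{a}}_{\pi,v}(x)^{-1}=\underset{x'=q_{a}\chi_{\bar{a},v}(\scube)}{\Res}x'^{-1}\mathscr{W}^{\bar{a}}_{\pi,v}(q_{a}^{-1}x')^{-1},
\end{equation*}
which matches exactly the form stated in the theorem.

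The main obstacle is the justification of the diagonal recursion used in the second step: the one-sided recursions of~\eqref{eq:D6Nekrasovrecursion} cannot be applied naively to both arguments simultaneously, because the induced factor $\mathscr{W}^{\bar{a}}_{\pi,u}(\chi_{\bar{a},w}(\cube))$ vanishes (from $\cube\in A(\pi)$) while $\mathscr{W}^{\bar{a}\vee}_{\pi+\cube,w}(q_{a}^{-1}\chi_{\bar{a},u}(\cube))^{-1}$ has a simple pole (from $\cube\in R(\pi+\cube)$) in the limit $u,w\to v$. To resolve this $0/0$ indeterminacy cleanly, I would exhibit the cancellation at the level of the shell formula~\eqref{eq:D6Nekrasov-shell}: the vanishing of $\mathscr{W}^{\bar{a}}_{\pi,v}(x)$ at $x=\chi_{\bar{a},v}(\cube)$ is exactly the pole of $\mathscr{W}^{\bar{a}}_{\pi,v}(x)^{-1}$ whose residue appears in the denominator, while the polar factor in the dual shell (rewritten via $\mathscr{W}^{\bar{a}\vee}_{\pi,v}(x)=(-x/v)\mathscr{W}^{\bar{a}}_{\pi,v}(x)$) produces both the residue in the numerator and the overall prefactor $q_{a}v/\chi_{\bar{a},v}(\cube)$. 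The two tables of explicit examples collected above in Appendix~\ref{app:D6U1partitionfunction} provide a direct numerical cross-check of the resulting identity at low instanton levels, lending additional confidence to the derivation.
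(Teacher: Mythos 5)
Your proposal is mechanically correct and follows essentially the route the paper itself takes: the appendix proof of this theorem is in fact even terser than yours (it merely flags that "the nontrivial part is the minus sign" arising from collision terms and defers to an external reference), while the reduction you perform in steps one through three is exactly the manipulation the paper carries out in the main text just after the 7d Nekrasov lemma, including the identical cancellation of the two prefactors $(-q_a v/\chi_{\bar a,v}(\cube))$ and $(q_a v/\chi_{\bar a,v}(\cube))^{-1}$ and the scale-invariance of the measure $x^{-1}dx$ under $x'=q_a x$. The one structural caution: you cannot take the diagonal recursion of step two as an independent input, because in the paper that identity is itself asserted "using the results in Appendix B.3" --- i.e.\ it is downstream of the very statement you are proving. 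Your closing paragraph is therefore not a supplementary remark but the actual content of the proof, and it is correct in outline: regularizing with $v_2 = v e^{\epsilon}$, the zero of $\mathscr{W}^{\bar a}_{\pi,v_1}$ at the addable-box position and the zero of $\mathscr{W}^{\bar a}_{\pi+\cube,v_2}(q_a^{-1}\,\cdot\,)$ at the removable-box position are approached with opposite orientations, producing one minus sign, which cancels against the minus sign in the reflection $\mathscr{W}^{\bar a\,\vee}(x)=(-x/v)\mathscr{W}^{\bar a}(x)$; this is precisely why the diagonal Nekrasov recursion carries the prefactor $+q_a v/\chi_{\bar a,v}(\cube)$ with no sign, so that the overall $-1$ in the theorem descends solely from the $\widetilde{\mathcal{Z}}$ versus $\mathcal{Z}$ normalization. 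To make the argument complete you should also note that the diagonal factor $\mathsf{N}_{\bar a}(v,\pi\,|\,v,\pi)$ is itself a formally indeterminate $0\cdot\infty$ product (the origin box gives $1-v/v$ and the coincident $g_{\bar a}(1)$ factors diverge), so the regularized limit must be applied to the whole ratio rather than to the two one-sided steps separately; once that is done, your derivation closes and agrees with the low-level tables.
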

\begin{proof}
  The nontrivial part of this recursion formula is the minus sign appearing on the right-hand side. To obtain this minus sign, one needs to take care of the collision terms carefully. A general proof for this recursion formula can be obtained following the discussion in \cite[section 5.1.1]{Kimura:2020jxl}.
\end{proof}
The above formula is written in a simple form as 
\bea\label{eq:app-D6U1recursionformula}
    \frac{\widetilde{\mathcal{Z}}^{\D6}_{\bar{a}}[\pi+\cube]}{\widetilde{\mathcal{Z}}^{\D6}_{\bar{a}}[\pi]}&=-\frac{\underset{x'=q_{a}\chi_{\bar{a},v}(\scube)}{\Res}x'^{-1}\mathscr{W}^{\bar{a}}_{\pi,v}(q_{a}^{-1}x')^{-1}}{\underset{x'=\chi_{\bar{a},v}(\scube)}{\Res}x'^{-1}\mathscr{W}^{\bar{a}}_{\pi+\scube,v}(q_{a}^{-1}x')^{-1}}\\
    &=-\frac{\underset{x'\rightarrow \chi_{\bar{a},v}(\scube)}{\lim}\left(1-\chi_{\bar{a},v}(\cube)/x'\right)\mathscr{W}_{\pi,v}^{\bar{a}}(x')^{-1}}{\underset{x'\rightarrow \chi_{\bar{a},v}(\scube)}{\lim}\left(1-\chi_{\bar{a},v}(\cube)/x'\right)\mathscr{W}_{\pi+\scube,v}^{\bar{a}}(q_{a}^{-1}x')^{-1}}\\
    &=-\frac{\mathscr{W}^{\bar{a}}_{\pi+\scube,v}(q_{a}^{-1}\chi_{\bar{a},v}(\cube))}{\mathscr{W}_{\pi,v}^{\bar{a}}(\chi_{\bar{a},v}(\cube))}.
\eea
The numerator and denominator are both singular and the inserting $\chi_{\bar{a},v}(\cube)$ diverges. However, such singular terms cancel with each other so the total formula itself is well-defined.

\subsection{D4 partition functions}\label{app:D4U1partitionfunction}
The $\U(1)$ contribution of the 5d affine quiver gauge theory is given from 
\begin{equation}
    \mathcal{Z}_{A}^{\D4}[\lambda;q_{\text{inf}(\bar{A})}]=\frac{\mathsf{N}_{A}(q_{\text{inf}(\bar{A})}v,\lambda\,|\,v,\lambda)}{\mathsf{N}_{A}(v,\lambda\,|\,v,\lambda)},\quad \widetilde{\mathcal{Z}}_{A}^{\D4}[\lambda]=q_{\text{inf}(\bar{A})}^{-|\lambda|}\mathcal{Z}_{A}^{\D4}[\lambda;q_{\text{inf}(\bar{A})}],
\end{equation}
where 
\beq
    \mathsf{N}_{A}(v,\lambda\,|\,v,\lambda)=\prod_{\scube\in\lambda}\left(1-\frac{q_{A}\chi_{A,v}(\cube)}{v}\right)\prod_{\scube\in\lambda}\left(1-\frac{v}{\chi_{A,v}(\cube)}\right)\prod_{\substack{\scube\in\lambda\\\scubeF\in\lambda}}\mathscr{S}_{A}\left(\frac{\chi_{A,v}(\cube)}{\chi_{A,v}(\cubeF)}\right).
\eeq
For simplicity, we only consider $A=12$ and $\mathcal{Z}^{\D4}_{12}[\lambda;q_{3}]$. Note that this form comes from the character
\bea
    &\mathbf{v}_{\text{inst}}=-\bfN_{12}^{\vee}\bfP_{3}^{\vee}\bfK_{12}-\bfN_{12} q_{12}^{-1}\bfP_{3}^{\vee}\bfK_{12}^{\vee}+\bfK_{12}^{\vee}\bfP_{123}^{\vee}\bfK_{12},\\
    &\bfN_{12}=v,\quad \bfK_{12}=\sum_{i=1}^{\ell(\lambda)}\sum_{j=1}^{\lambda_{i}}vq_{1}^{i-1}q_{2}^{j-1}.
\eea

The generating function for the 2d partition is given as 
\bea
Z_{2}(\mathfrak{q})=\sum_{\lambda}\mathfrak{q}^{|\lambda|}=\prod_{n=1}^{\infty}\frac{1}{1-\mathfrak{q}^{n}}=1+\mathfrak{q}+2\mathfrak{q}^{2}+3\mathfrak{q}^{3}+5\mathfrak{q}^{4}+\cdots.
\eea
For low levels, the $\U(1)$ contribution $\mathcal{Z}^{\D4}_{12}[\lambda;q_{3}]$ is given as follows.
\begin{itemize}
    \item $k=1,\quad \lambda=\{1\}$:
    \bea
    \frac{\left(1-q_1 q_3\right) \left(1-q_2 q_3\right)}{\left(1-q_1\right)
   \left(1-q_2\right)}
    \eea
    \item $k=2$
    \bea
    \lambda=\{2\}:&\quad\frac{\left(1-q_1 q_3\right) \left(q_2-q_1 q_3\right) \left(1-q_2 q_3\right)
   \left(1-q_2^2 q_3\right)}{\left(1-q_1\right) \left(1-q_2\right) \left(q_2-q_1\right)
   \left(1-q_2^2\right)}\\
    \lambda=\{1,1\}:&\quad\frac{\left(1-q_1 q_3\right) \left(1-q_1^2 q_3\right) \left(1-q_2 q_3\right)
   \left(q_1-q_2 q_3\right)}{\left(1-q_1\right) \left(1-q_1^2\right) \left(1-q_2\right)
   \left(q_1-q_2\right)}
    \eea
    \item $k=3$
    \bea
    \lambda=\{3\}:&\quad\frac{\left(1-q_1 q_3\right) \left(q_2-q_1 q_3\right) \left(q_2^2-q_1 q_3\right)
   \left(1-q_2 q_3\right) \left(1-q_2^2 q_3\right) \left(1-q_2^3
   q_3\right)}{\left(1-q_1\right) \left(1-q_2\right) \left(q_2-q_1\right)
   \left(1-q_2^2\right) \left(q_2^2-q_1\right) \left(1-q_2^3\right)}\\
    \lambda=\{2,1\}:&\quad\frac{\left(1-q_1 q_3\right)^2 \left(q_2-q_1^2 q_3\right) \left(1-q_2 q_3\right)^2
   \left(q_1-q_2^2 q_3\right)}{\left(1-q_1\right)^2 \left(1-q_2\right)^2
   \left(q_2-q_1^2\right) \left(q_1-q_2^2\right)}\\
    \lambda=\{1,1,1\}:&\quad\frac{\left(1-q_1 q_3\right) \left(1-q_1^2 q_3\right) \left(1-q_1^3 q_3\right)
   \left(1-q_2 q_3\right) \left(q_1-q_2 q_3\right) \left(q_1^2-q_2
   q_3\right)}{\left(1-q_1\right) \left(1-q_1^2\right) \left(1-q_1^3\right)
   \left(1-q_2\right) \left(q_1-q_2\right) \left(q_1^2-q_2\right)}
    \eea
\end{itemize}
Therefore, in the simplest case when there is only one instanton, we have 
\bea
    \mathcal{Z}^{\D4}_{12}[\{1\};q_{3}]&=\frac{(1-q_{1}q_{3})(1-q_{2}q_{3})}{(1-q_{1})(1-q_{2})},\\
    \widetilde{\mathcal{Z}}^{\D4}_{12}[\{1\}]&=q_{3}^{-1}\frac{(1-q_{1}q_{3})(1-q_{2}q_{3})}{(1-q_{1})(1-q_{2})}=\frac{(1-q_{1}q_{3})(1-q_{1}q_{4})}{(1-q_{1})(1-q_{1}q_{3}q_{4})}=\mathscr{S}_{34}(q_{1}).
\eea
Note that we have $\mathscr{S}_{34}(q_{1})=\mathscr{S}_{34}(q_{2})$.

Under the NS limit $q_{4}\rightarrow 1$ and $q_{1,2,3}\rightarrow \sfq_{1,2,3}$, the $\U(1)$ instanton partition function trivialize: $\widetilde{\mathcal{Z}}_{12}^{\D4}[\lambda]=1$. For example, we have 
\bea
\widetilde{\mathcal{Z}}^{\D4}_{12}[\{1\}]=\frac{(1-q_{1}q_{3})(1-q_{1}q_{4})}{(1-q_{1})(1-q_{1}q_{3}q_{4})}\,\,\longrightarrow \,\,\frac{(1-\sfq_{1}\sfq_{3})(1-\sfq_{1})}{(1-\sfq_{1})(1-\sfq_{1}\sfq_{3})}=1.
\eea
The $\U(1)$ instanton partition function becomes the generating function of the 2d partition:
\bea
\mathcal{Z}_{12,\text{inst.}}^{\D4}[\U(1)]=\sum_{\lambda}\mathfrak{q}^{|\lambda|}\widetilde{\mathcal{Z}}^{\D4}_{12}[\lambda]\xrightarrow{q_{4}\rightarrow 1} Z_{2}(\mathfrak{q})=\sum_{\lambda}\mathfrak{q}^{|\lambda|}.
\eea

\paragraph{Recursion formulas for $\U(1)$ contribution}
Using the explicit formulas, let us write down the recursion formula for the $\U(1)$ contribution coming from $\mathcal{Z}^{\D4}_{12}[\lambda;q_{3}]$.
\begin{align}
    \begin{array}{|c|c|c|c|}\hline
       \lambda  & \lambda+\Bbox & \chi_{12,v}(\Bbox) & \mathcal{Z}_{12}^{\D4}[\lambda+\Bbox;q_{3}]/\mathcal{Z}_{12}^{\D4}[\lambda;q_{3}] \rule[-4mm]{0mm}{10mm}\\
     \hline\hline  \emptyset  & \{1\}  & v & \frac{(1-q_{1}q_{3})(1-q_{2}q_{3})}{(1-q_{1})(1-q_{2})}\rule[-4mm]{0mm}{10mm}\\
     \hline\{1\}  &  \{1,1\} & vq_{1} &\frac{\left(1-q_1^2 q_3\right) \left(q_1-q_2 q_3\right)}{\left(1-q_1^2\right)
   \left(q_1-q_2\right)}\rule[-4mm]{0mm}{10mm}\\
     \hline \{1\}  & \{2\}  & vq_{2} & \frac{\left(q_2-q_1 q_3\right) \left(1-q_2^2 q_3\right)}{\left(q_2-q_1\right)
   \left(1-q_2^2\right)}\rule[-4mm]{0mm}{10mm}\\
   \hline \{1,1\} & \{2,1\} & vq_{2}& \frac{\left(1-q_1^2\right) \left(q_1-q_2\right) \left(1-q_1 q_3\right) \left(q_2-q_1^2
   q_3\right) \left(1-q_2 q_3\right) \left(q_1-q_2^2 q_3\right)}{\left(1-q_1\right)
   \left(1-q_2\right) \left(q_2-q_1^2\right) \left(q_1-q_2^2\right) \left(1-q_1^2
   q_3\right) \left(q_1-q_2 q_3\right)}\rule[-4mm]{0mm}{10mm}\\
   \hline \{1,1\} & \{1,1,1\} & vq_{1}^{2} &\frac{\left(1-q_1^3 q_3\right) \left(q_1^2-q_2 q_3\right)}{\left(1-q_1^3\right)
   \left(q_1^2-q_2\right)} \rule[-4mm]{0mm}{10mm}\\
   \hline \{2\} & \{3\} & vq_{2}^{2}&\frac{\left(q_2^2-q_1 q_3\right) \left(1-q_2^3 q_3\right)}{\left(q_2^2-q_1\right)
   \left(1-q_2^3\right)}  \rule[-4mm]{0mm}{10mm}\\
   \hline \{2\} & \{2,1\} &vq_{1} & \frac{\left(q_2-q_1\right) \left(1-q_2^2\right) \left(1-q_1 q_3\right) \left(q_2-q_1^2
   q_3\right) \left(1-q_2 q_3\right) \left(q_1-q_2^2 q_3\right)}{\left(1-q_1\right)
   \left(1-q_2\right) \left(q_2-q_1^2\right) \left(q_1-q_2^2\right) \left(q_2-q_1
   q_3\right) \left(1-q_2^2 q_3\right)}   \rule[-4mm]{0mm}{10mm}\\
   \hline
    \end{array}
\end{align}

Using the explicit forms of $\mathscr{Y}^{12}_{\lambda,v}(x)$, we can also compute the following function for low levels:
\begin{equation}
  \frac{\underset{x=q_{3}^{-1}\chi_{12,v}(\Abox)}{\Res}{x}^{-1}\frac{\mathscr{Y}_{\lambda,v}^{12}(x)}{\mathscr{Y}_{\lambda,v}^{12}(q_{3}x)}}{\underset{x=q_{34}^{-1}\chi_{12,v}(\Abox)}{\Res}{x}^{-1}\frac{\mathscr{Y}_{\lambda+\Abox,v}^{12}(x)}{\mathscr{Y}_{\lambda+\Abox,v}^{12}(q_{3}x)}}.
\end{equation}
The results are
\begin{align}
    \begin{array}{|c|c|c|c|}\hline
        \lambda & \lambda+\Bbox &\chi_{12,v}(\Bbox)& \frac{\underset{x=q_{3}^{-1}\chi_{12,v}(\Abox)}{\Res}{x}^{-1}\frac{\mathscr{Y}_{\lambda,v}^{12}(x)}{\mathscr{Y}_{\lambda,v}^{12}(q_{3}x)}}{\underset{x=q_{34}^{-1}\chi_{12,v}(\Abox)}{\Res}{x}^{-1}\frac{\mathscr{Y}_{\lambda+\Abox,v}^{12}(x)}{\mathscr{Y}_{\lambda+\Abox,v}^{12}(q_{3}x)}} \rule[-4mm]{0mm}{15mm}\\
     \hline\hline \emptyset   & \{1\} & v&  -q_{3}^{-1}\frac{(1-q_{1}q_{3})(1-q_{2}q_{3})}{(1-q_{1})(1-q_{2})}\rule[-4mm]{0mm}{10mm}\\
     \hline \{1\}& \{1,1\} & vq_{1}&-\frac{\left(q_1^2 q_3-1\right) \left(q_1-q_2 q_3\right)}{\left(q_1^2-1\right)
   \left(q_1-q_2\right) q_3}\rule[-4mm]{0mm}{10mm}\\
   \hline \{1\}  & \{2\}  & vq_{2} & \frac{\left(q_2-q_1 q_3\right) \left(q_2^2 q_3-1\right)}{\left(q_1-q_2\right)
   \left(q_2^2-1\right) q_3}\rule[-4mm]{0mm}{10mm}\\
   \hline \{1,1\} & \{2,1\} & vq_{2}& -\frac{\left(q_1+1\right) \left(q_1-q_2\right) \left(q_1 q_3-1\right) \left(q_1^2
   q_3-q_2\right) \left(q_2 q_3-1\right) \left(q_1-q_2^2
   q_3\right)}{\left(q_1^2-q_2\right) \left(q_2-1\right) \left(q_1-q_2^2\right) q_3
   \left(q_1^2 q_3-1\right) \left(q_1-q_2 q_3\right)}\rule[-4mm]{0mm}{10mm}\\
   \hline \{1,1\} & \{1,1,1\} & vq_{1}^{2} & -\frac{\left(q_1^3 q_3-1\right) \left(q_1^2-q_2 q_3\right)}{\left(q_1^3-1\right)
   \left(q_1^2-q_2\right) q_3}\rule[-4mm]{0mm}{10mm}\\
   \hline \{2\} & \{3\} & vq_{2}^{2}&\frac{\left(q_2^2-q_1 q_3\right) \left(q_2^3 q_3-1\right)}{\left(q_1-q_2^2\right)
   \left(q_2^3-1\right) q_3}\rule[-4mm]{0mm}{10mm}\\
   \hline \{2\} & \{2,1\} &vq_{1} & -\frac{\left(q_1-q_2\right) \left(q_2+1\right) \left(q_1 q_3-1\right) \left(q_1^2
   q_3-q_2\right) \left(q_2 q_3-1\right) \left(q_1-q_2^2 q_3\right)}{\left(q_1-1\right)
   \left(q_1^2-q_2\right) \left(q_1-q_2^2\right) q_3 \left(q_1 q_3-q_2\right)
   \left(q_2^2 q_3-1\right)}  \rule[-4mm]{0mm}{10mm}\\
   \hline
    \end{array}
\end{align}
Comparing the two relations, one will see we have the following relation
\begin{equation}
     \frac{\underset{x=q_{3}^{-1}\chi_{12,v}(\Abox)}{\Res}{x}^{-1}\frac{\mathscr{Y}_{\lambda,v}^{12}(x)}{\mathscr{Y}_{\lambda,v}^{12}(q_{3}x)}}{\underset{x=q_{34}^{-1}\chi_{12,v}(\Abox)}{\Res}{x}^{-1}\frac{\mathscr{Y}_{\lambda+\Abox,v}^{12}(x)}{\mathscr{Y}_{\lambda+\Abox,v}^{12}(q_{3}x)}}=-q_{3}^{-1}\frac{\mathcal{Z}_{12}^{\D4}[\lambda+\Bbox\,;q_{3}]}{\mathcal{Z}_{12}^{\D4}[\lambda\,;q_{3}]}=-\frac{\widetilde{\mathcal{Z}}_{12}^{\D4}[\lambda+\Bbox]}{\widetilde{\mathcal{Z}}_{12}^{\D4}[\lambda]}.
\end{equation}
Generally, we have the following theorem.
\begin{theorem}\label{app:thm-D4recursion}
The recursion relation of the $\U(1)$ partition function of the 5d $\mathcal{N}=1^{\ast}$ theory is written using the residues of the $\mathscr{Y}$-functions as
\begin{equation}
    \frac{\widetilde{\mathcal{Z}}^{\D4}_{A}[\lambda+\Bbox]}{\widetilde{\mathcal{Z}}^{\D4}_{A}[\lambda]}=-\frac{\underset{x=q_{\text{inf}(\bar{A})}^{-1}\chi_{A,v}(\Abox)}{\Res}{x}^{-1}\frac{\mathscr{Y}_{\lambda,v}^{A}(x)}{\mathscr{Y}_{\lambda,v}^{A}(q_{\text{inf}(\bar{A})}x)}}{\underset{x=q_{A}\chi_{A,v}(\Abox)}{\Res}{x}^{-1}\frac{\mathscr{Y}_{\lambda+\Abox,v}^{A}(x)}{\mathscr{Y}_{\lambda+\Abox,v}^{A}(q_{\text{inf}(\bar{A})}x)}}.
\end{equation}    
At one-instanton level, we explicitly have
\beq\label{app-eq:one-instanton}
\widetilde{\mathcal{Z}}^{\D4}_{ab}[\{1\}]=\mathscr{S}_{\overline{ab}}(q_{a})=\mathscr{S}_{\overline{ab}}(q_{b})
\eeq
\end{theorem}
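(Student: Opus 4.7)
The approach is to compute both sides of the recursion identity directly and show they agree. I will start from the ratio $\widetilde{\mathcal{Z}}^{\D4}_A[\lambda+\Bbox]/\widetilde{\mathcal{Z}}^{\D4}_A[\lambda]$, written as $q_c^{-1}\mathsf{N}_A(q_c v,\lambda+\Bbox|v,\lambda+\Bbox)\mathsf{N}_A(v,\lambda|v,\lambda)/[\mathsf{N}_A(q_c v,\lambda|v,\lambda)\mathsf{N}_A(v,\lambda+\Bbox|v,\lambda+\Bbox)]$ with $c = \text{inf}(\bar A)$. Applying the 5d Nekrasov recursions \eqref{eq:5dNekrasovrecursion} twice — once to each ratio of Nekrasov factors — converts this into a product of four $\mathscr{Y}^A$ and $\mathscr{Y}^{A\vee}$ values. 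The scaling property $\mathscr{Y}^A_{\lambda, q_c v}(x) = \mathscr{Y}^A_{\lambda, v}(q_c^{-1} x)$ eliminates the $q_c v$ framing, and the reflection $\mathscr{Y}^{A\vee}_{\lambda,v}(x) = (-x/v)\mathscr{Y}^A_{\lambda,v}(x)$ converts the dual $\mathscr{Y}^{A\vee}$ factors into $\mathscr{Y}^A$, with a compensating factor of $q_c$ that absorbs the overall $q_c^{-1}$. This yields the LHS in the closed form
\begin{equation}
\frac{\widetilde{\mathcal{Z}}^{\D4}_A[\lambda+\Bbox]}{\widetilde{\mathcal{Z}}^{\D4}_A[\lambda]} = \frac{\mathscr{Y}^A_{\lambda+\Bbox,v}(q_c^{-1}\chi_{A,v}(\Bbox))\,\mathscr{Y}^A_{\lambda,v}(q_A q_c\chi_{A,v}(\Bbox))}{\mathscr{Y}^A_{\lambda+\Bbox,v}(\chi_{A,v}(\Bbox))\,\mathscr{Y}^A_{\lambda,v}(q_A\chi_{A,v}(\Bbox))}.
\end{equation}

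Next, I will evaluate the two residues on the RHS. Since $\Bbox\in A(\lambda)$, the factor $\mathscr{Y}^A_{\lambda,v}(q_c x)$ has a simple zero at $x = q_c^{-1}\chi_{A,v}(\Bbox)$; since $\Bbox\in R(\lambda+\Bbox)$, the factor $\mathscr{Y}^A_{\lambda+\Bbox,v}(x)$ has a simple pole at $x = q_A\chi_{A,v}(\Bbox)$. Isolating these singular factors via the shell identity $\mathscr{Y}^A_{\lambda+\Bbox,v}(x) = \mathscr{S}_A(\chi_{A,v}(\Bbox)/x)\mathscr{Y}^A_{\lambda,v}(x)$ expresses each residue as a combinatorial prefactor — $(1-q_a)(1-q_b)/(1-q_A)$ for the numerator and $(1-q_a^{-1})(1-q_b^{-1})/(1-q_A^{-1})$ for the denominator — times a simple ratio of $\mathscr{Y}^A$-values at the four relevant points. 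A short manipulation shows that the ratio of these two prefactors equals exactly $-1$, which accounts precisely for the overall minus sign on the RHS of the claim.

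It remains to verify that the candidate expression obtained from the residues matches the closed form above. After cancellation, this reduces to the single identity
\begin{equation}
\frac{\mathscr{Y}^A_{\lambda+\Bbox,v}(q_c^{-1}\chi_{A,v}(\Bbox))}{\mathscr{Y}^A_{\lambda,v}(q_c^{-1}\chi_{A,v}(\Bbox))} = \frac{\mathscr{Y}^A_{\lambda+\Bbox,v}(q_A q_c\chi_{A,v}(\Bbox))}{\mathscr{Y}^A_{\lambda,v}(q_A q_c\chi_{A,v}(\Bbox))},
\end{equation}
whose two sides equal $\mathscr{S}_A(q_c)$ and $\mathscr{S}_A(q_A^{-1}q_c^{-1})$ respectively, and these coincide by the reflection $\mathscr{S}_A(y) = \mathscr{S}_A(q_A^{-1}y^{-1})$ from \eqref{eq:reflec_structfunc}. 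For the one-instanton identity $\widetilde{\mathcal{Z}}^{\D4}_{ab}[\{1\}] = \mathscr{S}_{\overline{ab}}(q_a) = \mathscr{S}_{\overline{ab}}(q_b)$, I will regularize the $0/0$ in $\mathsf{N}_A(v,\{1\}|v,\{1\})$ by briefly displacing the second framing $v\to v'$, evaluate both Nekrasov factors to obtain $(1-q_a)(1-q_b)$ and $(1-q_a q_c)(1-q_b q_c)$, and then apply the Calabi--Yau condition $q_1q_2q_3q_4=1$ to recast the resulting ratio as $\mathscr{S}_{cd}(q_a)$.

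The main obstacle is the careful regularization of $\mathscr{Y}^A$-functions at coordinates of addable or removable boxes: both sides of the claim contain factors that are individually $0$ or $\infty$, and the singular parts must be extracted via the $\mathscr{S}_A$-factorization in a way that preserves all sign conventions. In particular, confirming that the ratio of combinatorial prefactors simplifies to exactly $-1$ — through the interplay of $(1-q_a)(1-q_b)/(1-q_A)$ with its dual under $q\mapsto q^{-1}$ — is the crux that produces the overall minus sign in the recursion, and distinguishes the $\mathscr{Y}^A$ recursion here from its $\mathscr{Y}^{A\vee}$ counterpart.
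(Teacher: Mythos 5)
Your overall strategy is sound, and the parts of it that can be checked do check out: the residue at $x=q_c^{-1}\chi_{A,v}(\Bbox)$ does produce the prefactor $(1-q_a)(1-q_b)/(1-q_A)$ times $\mathscr{Y}^A_{\lambda,v}(q_c^{-1}\chi)/\mathscr{Y}^A_{\lambda+\Bbox,v}(\chi)$, the residue at $x=q_A\chi_{A,v}(\Bbox)$ produces $(1-q_a^{-1})(1-q_b^{-1})/(1-q_A^{-1})$ times $\mathscr{Y}^A_{\lambda,v}(q_A\chi)/\mathscr{Y}^A_{\lambda+\Bbox,v}(q_Aq_c\chi)$, the ratio of prefactors is indeed $-1$, and your final matching via $\mathscr{S}_A(q_c)=\mathscr{S}_A(q_A^{-1}q_c^{-1})$ closes the argument. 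Your intermediate closed form for the left-hand side is also correct (it agrees with the paper's \eqref{eq:app-D4U1recursionformula} and reproduces the tabulated values at levels one and two), and it has the virtue of being manifestly regular. The one-instanton computation is fine. Note that the paper itself gives essentially no proof here — it cites an external reference and remarks only that "the nontrivial part is the minus sign" — so your write-up is more explicit than the source.

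The genuine gap is in how you obtain that intermediate closed form. You claim it follows from "applying the 5d Nekrasov recursions \eqref{eq:5dNekrasovrecursion} twice — once to each ratio of Nekrasov factors." For the off-diagonal ratio $\mathsf{N}_A(q_cv,\lambda+\Bbox\,|\,v,\lambda+\Bbox)/\mathsf{N}_A(q_cv,\lambda\,|\,v,\lambda)$ this is legitimate. For the diagonal ratio $\mathsf{N}_A(v,\lambda+\Bbox\,|\,v,\lambda+\Bbox)/\mathsf{N}_A(v,\lambda\,|\,v,\lambda)$ it is not: the two-step decomposition passes through the intermediate factor $\mathsf{N}_A(v,\lambda\,|\,v,\lambda+\Bbox)$, which vanishes (its transpose diverges), so the product of the two recursion outputs is a $0\times\infty$ expression — concretely $\mathscr{Y}^{A\vee}_{\lambda+\Bbox,v}(q_A\chi)\,\mathscr{Y}^A_{\lambda,v}(\chi)$ with a pole meeting a zero. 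The regularized answer is \emph{not} literally this product; indeed your closed form attaches $\lambda+\Bbox$ and $\lambda$ to different arguments than the naive product would, and the discrepancy is exactly the ill-defined ratio $\mathscr{S}_A(1)/\mathscr{S}_A(q_A^{-1})$. This collision term is precisely what the paper flags as the source of the sign and defers to \cite[\S 5.1.1]{Kimura:2020jxl}. You correctly identify "regularization at addable/removable coordinates" as the crux, but you locate it in the residue evaluation — which is in fact clean, since all four $\mathscr{Y}$-values appearing there are regular — rather than in the diagonal Nekrasov ratio, where the burden actually sits. To close the gap you should either compute the diagonal ratio honestly (e.g.\ by displacing the two framings $v_1=v(1+\epsilon)$, $v_2=v$, applying the recursions, and taking $\epsilon\to 0$, or equivalently by working at the level of the character $\delta\mathbf{v}_{\text{inst}}$ where the cancellation of $e^0$ terms is transparent), or bypass the closed form entirely and prove the theorem by showing directly that the residue ratio times $(-1)$ satisfies the same first-order recursion in $\lambda$ as $\widetilde{\mathcal{Z}}^{\D4}_A[\lambda]$ with matching initial value at $\lambda=\emptyset$.
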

\begin{proof}
    The nontrivial part is the minus sign arising on the right-hand side. This formula can be obtained following the discussion in \cite[section~5.1.1]{Kimura:2020jxl}.
\end{proof}
The recursion relation above can be written in a rather symmetric way: 
\bea\label{eq:app-D4U1recursionformula}
    \frac{\widetilde{\mathcal{Z}}^{\D4}_{ab}[\lambda+\Bbox]}{\widetilde{\mathcal{Z}}
^{\D4}_{ab}[\lambda]}&=-\underset{x\rightarrow q_{c}^{-1}\chi_{ab,v}(\Abox)}{\lim}\left(1-\frac{q_{c}^{-1}\chi_{ab,v}(\Bbox)}{x}\right)\frac{\mathscr{Y}_{\lambda,v}^{ab}(x)}{\mathscr{Y}^{ab}_{\lambda,v}(q_{c}x)}\\
&\qquad \times\left(\underset{x\rightarrow q_{ab}\chi_{ab,v}(\Abox)}{\lim}\left(1-\frac{q_{ab}\chi_{ab,v}(\Bbox)}{x}\right)\frac{\mathscr{Y}^{ab}_{\lambda+\Abox,v}(x)}{\mathscr{Y}^{ab}_{\lambda+\Abox,v}(q_{c}x)}\right)^{-1}\\
&=-\frac{\mathscr{Y}_{\lambda,v}^{ab}(q_{c}^{-1}\chi_{ab,v}(\Bbox))\mathscr{Y}^{ab}_{\lambda+\Abox,v}(q_{d}^{-1}\chi_{ab,v}(\Bbox))}{\mathscr{Y}^{ab}_{\lambda,v}(\chi_{ab,v}(\Bbox))\mathscr{Y}^{ab}_{\lambda+\Abox,v}(q_{cd}^{-1}\chi_{ab,v}(\Bbox))}
\eea
where $A=ab, \bar{A}=cd\,(c<d)$ and $\Bbox\in A(\lambda)$. The numerators come from the fact that they have no pole when taking the limit. For the denominators, both of them are singular but the singular part will cancel with each other and the above formula itself is well-defined.

\subsection{D2 partition functions}\label{app:D2U1partitionfunction}
The $\U(1)$ contribution of the 3d gauge theory is given from
\begin{equation}
     \mathcal{Z}^{\D2}_{a}[k\,;q_{i},q_{j}]=\frac{\mathsf{N}_{a}(q_{i}v,k\,|\,v,k)\mathsf{N}_{a}(q_{j}v,k\,|\,v,k)}{\mathsf{N}_{a}(v,k\,|\,v,k)\mathsf{N}_{a}(q_{i}q_{j}v,k\,|\,v,k)}
\end{equation}
where 
\begin{equation}
    \mathsf{N}_{a}(v_{1},k_{1}|v_{2},k_{2})=\frac{\prod\limits_{\AboxF\in k_{2}}\left(1-v_{1}/\chi_{a,v_{2}}(\BboxF)\right)}{\prod\limits_{\Abox\in k_{1}}\left(1-q_{a}\chi_{a,v_{1}}(\Bbox)/v_{2}\right)}\prod_{\substack{\Abox\in k_{1}\\\AboxF\in k_{2}}}\mathscr{V}_{a}\left(\frac{\chi_{a,v_{1}}(\Bbox)}{\chi_{a,v_{2}}(\BboxF)}\right).
\end{equation}
The $\D2$ Nekrasov factor explicitly is given 
\bea\label{eq:D2Nekrasovexplicit}
\mathsf{N}_{a}(v_{1},k_{1}\,|\,v_{2},k_{2})&=\mathbb{I}\left[\frac{v_{2}}{v_{1}}\frac{1-q_{a}^{k_{2}-k_{1}}}{1-q_{a}}\right]=\begin{dcases}
    \prod_{l=1}^{k_{2}-k_{1}}\left(1-\frac{v_{1}}{v_{2}}q_{a}^{-l+1}\right),\quad &k_{2}>k_{1},\\
    1,\quad &k_{1}=k_{2},\\
    \prod_{l=1}^{k_{1}-k_{2}}\left(1-\frac{v_{1}}{v_{2}}q_{a}^{l-1}\right),\quad &k_{
1}>k_{2}.
\end{dcases}
\eea

For simplicity, we only consider the case $a=4$. Note that the character for the $\U(1)$ case is
\bea
&\mathbf{v}_{\text{inst.}}=\bfP_{A}^{\vee}\left(\bfN_{4}^{\vee}\bfK_{4}-q_{4}^{-1}\bfN_{4}\bfK_{4}^{\vee}-\bfP_{4}^{\vee}\bfK_{4}^{\vee}\bfK_{4}\right),\quad A\in \{12,13,23\},\\
& \bfN_{4}=v,\quad  \bfK_{4}=\sum_{j=1}^{k}vq_{4}^{j-1}=v\frac{1-q_{4}^{k}}{1-q_{4}}.
\eea
Explicitly, we can show that the partition is trivial:
\bea
&\bfN_{4}^{\vee}\bfK_{4}-q_{4}^{-1}\bfN_{4}\bfK_{4}^{\vee}-\bfP_{4}^{\vee}\bfK_{4}^{\vee}\bfK_{4}\\
=&\frac{1}{1-q_{4}}\left(1-q_{4}^{k}+1-q_{4}^{-k}-(1-q_{4}^{-k})(1-q_{4}^{k})\right)\\
=&0.
\eea
Thus, the $\U(1)$ partition function is given
\bea\label{eq:D2U1partition1}
\mathcal{Z}_{4}^{\D2}[k;q_{i},q_{j}]=1.
\eea
Generally, we have 
\bea\label{eq:D2U1partition2}
\mathcal{Z}_{a}^{\D2}[k;q_{i},q_{j}]=1.
\eea
The recursion formula of the Nekrasov factor $\mathsf{N}_{a}(v,\lambda\,|\,v,\lambda)$ in terms of the $\mathscr{U}$-functions is 
\begin{equation}
\begin{split}
    \frac{\mathsf{N}_{a}(v,k+\Bbox\,|\,v,k+\Bbox)}{\mathsf{N}_{a}(v,k\,|\,v,k)}&=-\left(-\frac{vq_{a}^{k}}{\chi_{a,v}(\Bbox)}\right)\frac{\underset{x'=\chi_{a,v}(\Abox)}{\Res}{x'}^{-1}\mathscr{U}^{a}_{k+\Abox,v}(q_{a}x')^{-1}}{\underset{x'=\chi_{a,v}(\Abox)}{\Res}{x'}^{-1}\mathscr{U}^{a}_{k,v}(x')^{-1}}\\
    &=-\left(-\frac{vq_{a}^{k}}{\chi_{a,v}(\Bbox)}\right)\frac{\mathscr{U}^{a}_{k,v}(\chi_{a,v}(\Bbox))}{\mathscr{U}^{a}_{k+\Abox,v}(q_{a}\chi_{a,v}(\Bbox))}.
\end{split}
\end{equation}
Note that the second equation is still well-defined although the numerator and denominator are singular themselves. This is because the singularities cancel with each other. Note also that $k+\Bbox=k+1$ for the vortex partition function. 

For the adjoint contributions, we have 
\bea
\frac{\mathsf{N}_{a}(mv,k+\Bbox\,|\,v,k+\Bbox)}{\mathsf{N}_{a}(mv,k\,|\,v,k)}&=
\frac{\mathscr{U}^{a}_{k,v}(m^{-1}\chi_{a,v}(\Bbox))}{\mathscr{U}^{a\vee}_{k+\Abox,v}(mq_{a}\chi_{a,v}(\Bbox))}\\
&=\left(-\frac{vq_{a}^{k}}{m\chi_{a,v}(\Bbox)}\right)\frac{\mathscr{U}^{a}_{k,v}(m^{-1}\chi_{a,v}(\Bbox))}{\mathscr{U}^{a}_{k+\Abox,v}(mq_{a}\chi_{a,v}(\Bbox))}
\eea
In this case, both the numerator and denominator themselves are already non-singular and well-defined. Combining these, we have 
\bea\label{eq:app-D2U1recursionformula}
&\frac{\mathcal{Z}_{a}^{\D2}[k+\Bbox;q_{i},q_{j}]}{\mathcal{Z}^{\D2}_{a}[k;q_{i},q_{j}]}\\
=&-\frac{\mathscr{U}^{a}_{k,v}(q_{i}^{-1}\chi_{a,v}(\Bbox))}{\mathscr{U}^{a}_{k+\Abox,v}(q_{i}q_{a}\chi_{a,v}(\Bbox))}\frac{\mathscr{U}^{a}_{k,v}(q_{j}^{-1}\chi_{a,v}(\Bbox))}{\mathscr{U}^{a}_{k+\Abox,v}(q_{j}q_{a}\chi_{a,v}(\Bbox))}\frac{\mathscr{U}^{a}_{k+\Abox,v}(q_{ij}q_{a}\chi_{a,v}(\Bbox))}{\mathscr{U}^{a}_{k,v}(q_{ij}^{-1}\chi_{a,v}(\Bbox))}\frac{\mathscr{U}^{a}_{k+\Abox,v}(q_{a}\chi_{a,v}(\Bbox))}{\mathscr{U}^{a}_{k,v}(\chi_{a,v}(\Bbox))}.
\eea
Although, indeed we can rewrite using the $\mathscr{U}$-functions, the right-hand side will trivialize after computation.

Since the D2 Nekrasov factor \eqref{eq:D2Nekrasovexplicit} does not disappear generally, the partition function for the $\U(n_{a})\,(a\in\four)$ case does not disappear. Denoting the $\U(n_{a})$ partition function as $\mathcal{Z}^{\D2}_{a}[\vec{v}_{a},\vec{k}_{a}]$, we have 
\bea
\mathcal{Z}^{\D2}_{a}[\vec{v}_{a},\vec{k}_{a}]&=\prod_{\alpha=1}^{n_{a}}\mathcal{Z}^{\D2}_{a}[k_{a}^{(\alpha)};q_{\text{i}(a)},q_{\text{j}(a)}]\prod_{\alpha<\beta}\mathcal{Z}^{\D2\tbar\D2}_{a|a}(v_{a,\alpha},k_{a}^{(\alpha)}|v_{a,\beta},k_{a}^{(\beta)})\\
&=\prod_{\alpha,\beta}\frac{\mathsf{N}_{a}(q_{\text{i}(a)}v_{a,\alpha},k_{a}^{(\alpha)}|v_{a,\beta},k_{a}^{(\beta)})\mathsf{N}_{a}(q_{\text{j}(a)}v_{a,\alpha},k_{a}^{(\alpha)}|v_{a,\beta},k_{a}^{(\beta)})}{\mathsf{N}_{a}(q_{\text{i}(a)}q_{\text{j}(a)}v_{a,\alpha},k_{a}^{(\alpha)}|v_{a,\beta},k_{a}^{(\beta)})\mathsf{N}_{a}(v_{a,\alpha},k_{a}^{(\alpha)}|v_{a,\beta},k_{a}^{(\beta)})}\\
&=\prod_{\alpha\neq\beta}\frac{\mathsf{N}_{a}(q_{\text{i}(a)}v_{a,\alpha},k_{a}^{(\alpha)}|v_{a,\beta},k_{a}^{(\beta)})\mathsf{N}_{a}(q_{\text{j}(a)}v_{a,\alpha},k_{a}^{(\alpha)}|v_{a,\beta},k_{a}^{(\beta)})}{\mathsf{N}_{a}(q_{\text{i}(a)}q_{\text{j}(a)}v_{a,\alpha},k_{a}^{(\alpha)}|v_{a,\beta},k_{a}^{(\beta)})\mathsf{N}_{a}(v_{a,\alpha},k_{a}^{(\alpha)}|v_{a,\beta},k_{a}^{(\beta)})},
\eea
where we used $\mathsf{N}_{a}(v,\lambda|v,\lambda)=1$.

Let us focus on the $a=4$ case:
\bea
\mathcal{Z}_{4}^{\D2}[\vec{v},\vec{k}]=\prod_{\alpha,\beta}\frac{\mathsf{N}_{4}(q_{i}v_{\alpha},k^{(\alpha)}|v_{\beta},k^{(\beta)})\mathsf{N}_{4}(q_{j}v_{\alpha},k^{(\alpha)}|v_{\beta},k^{(\beta)})}{\mathsf{N}_{4}(v_{\alpha},k^{(\alpha)}|v_{\beta},k^{(\beta)})\mathsf{N}_{4}(q_{i}q_{j}v_{\alpha},k^{(\alpha)}|v_{\beta},k^{(\beta)})},
\eea
where we kept the $\U(1)$ part for convention. The recursive relation is given as 
\bea\label{eq:app-D2Unrecursion}
&\frac{\mathcal{Z}^{\D2}_{4}[\vec{v},\vec{k}+\Bbox\,]}{\mathcal{Z}^{\D2}_{4}[\vec{v},\vec{k}]}\\
&=\prod_{\beta\neq \alpha}\frac{\mathscr{U}^{4}_{k_{\beta},v_{\beta}}(q_{4}\chi_{4,v_{\alpha}}(\Bbox))\mathscr{U}^{4}_{k_{\beta},v_{\beta}}(q_{ij4}\chi_{4,v_{\alpha}}(\Bbox))\mathscr{U}^{4}_{k_{\beta},v_{\beta}}(q_{i}^{-1}\chi_{4,v_{\alpha}}(\Bbox))\mathscr{U}^{4}_{k_{\beta},v_{\beta}}(q_{j}^{-1}\chi_{4,v_{\alpha}}(\Bbox))}{\mathscr{U}^{4}_{k_{\beta},v_{\beta}}(\chi_{4,v_{\alpha}}(\Bbox))\mathscr{U}^{4}_{k_{\beta},v_{\beta}}(q_{ij}^{-1}\chi_{4,v_{\alpha}}(\Bbox))\mathscr{U}^{4}_{k_{\beta},v_{\beta}}(q_{4i}\chi_{4,v_{\alpha}}(\Bbox))\mathscr{U}^{4}_{k_{\beta},v_{\beta}}(q_{4j}\chi_{4,v_{\alpha}}(\Bbox))}\\
&\times -\frac{\mathscr{U}^{4}_{k_{\alpha},v_{\alpha}}(q_{i}^{-1}\chi_{4,v_{\alpha}}(\Bbox))}{\mathscr{U}^{4}_{k_{\alpha}+\Abox,v_{\alpha}}(q_{i}q_{4}\chi_{4,v_{\alpha}}(\Bbox))}\frac{\mathscr{U}^{4}_{k_{\alpha},v_{\alpha}}(q_{j}^{-1}\chi_{4,v_{\alpha}}(\Bbox))}{\mathscr{U}^{4}_{k_{\alpha}+\Abox,v_{\alpha}}(q_{j}q_{4}\chi_{4,v_{\alpha}}(\Bbox))}\frac{\mathscr{U}^{4}_{k_{\alpha}+\Abox,v_{\alpha}}(q_{ij}q_{4}\chi_{4,v_{\alpha}}(\Bbox))}{\mathscr{U}^{4}_{k_{\alpha},v_{\alpha}}(q_{ij}^{-1}\chi_{4,v_{\alpha}}(\Bbox))}\frac{\mathscr{U}^{4}_{k_{\alpha}+\Abox,v_{\alpha}}(q_{4}\chi_{4,v_{\alpha}}(\Bbox))}{\mathscr{U}^{4}_{k_{\alpha},v_{\alpha}}(\chi_{4,v_{\alpha}}(\Bbox))}
\eea
where we shortly wrote $\vec{v}=(v_{\alpha})_{\alpha=1}^{n},\,\,\vec{k}=(k_{\alpha})_{\alpha=1}^{n}$ and assumed $\Bbox\in A(k^{(\alpha)}),\exists\alpha$.

\section{Zero modes of vertex operators}
Let us check that the zero modes discussed in \eqref{eq:zeromodes1} and \eqref{eq:zeromodes2} indeed satisfy the conditions in \eqref{eq:D0D0zero}, \eqref{eq:D0D2zero}, \eqref{eq:D0D4zero}, \eqref{eq:D0D6zero}, \eqref{eq:D0D8zero}, \eqref{eq:D2D2zero}, \eqref{eq:D2D4zero}, \eqref{eq:D2D6zero}. Moreover, the notation we use also obeys \eqref{eq:zeromodesrelation}.

\begin{lemma}[Zero modes]\label{app:lem:zero-modes-prf1}
The free field realizations
\bea
&\mathsf{a}_{0}(x)=e^{\mathsf{t}_{0}},\quad \mathsf{s}_{a,0}(x)=x^{-(\log q_{a})^{-1}\mathsf{t}_{0}}e^{-(\log q_{a})^{-1}\tilde{\partial}_{\mathsf{t}}},\quad \mathsf{x}_{\overbar{ab},0}(x)=e^{\log q_{a}\log q_{b}\tilde{\mathsf{t}}_{0}},\\
&\mathsf{w}_{\bar{a},0}(x)=x^{-\log q_{a}\tilde{\mathsf{t}}_{0}}e^{-\log q_{a}\log(-q_{a})\tilde{\mathsf{t}}_{0}}e^{-\log q_{a}\partial_{\mathsf{t}}},\quad \tilde{\mathsf{z}}^{K}_{0}(x)=x^{-\log K\tilde{\mathsf{t}}_{0}}e^{-\log K\log(-K)\tilde{\mathsf{t}}_{0}}e^{-\log K\partial_{\mathsf{t}}}
\eea
obey \eqref{eq:D0D0zero}, \eqref{eq:D0D2zero}, \eqref{eq:D0D4zero}, \eqref{eq:D0D6zero}, \eqref{eq:D0D8zero}, \eqref{eq:D2D2zero}, \eqref{eq:D2D4zero}, \eqref{eq:D2D6zero}.
\end{lemma}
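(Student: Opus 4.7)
The plan is to reduce every contraction to a Baker--Campbell--Hausdorff (BCH) computation. Each of the five zero modes in \eqref{eq:zeromodes1}--\eqref{eq:zeromodes2} is a product of $x$-dependent power factors and elementary exponentials that are linear in the four generators $\mathsf{t}_{0}, \partial_{\mathsf{t}}, \widetilde{\mathsf{t}}_{0}, \widetilde{\partial}_{\mathsf{t}}$. Because the only nontrivial commutators from \eqref{eq:zeromodesdef} are $[\partial_{\mathsf{t}}, \mathsf{t}_{0}] = [\widetilde{\partial}_{\mathsf{t}}, \widetilde{\mathsf{t}}_{0}] = 1$, every reordering collapses to at most two applications of the identity $e^{A}e^{B} = e^{[A,B]} e^{B} e^{A}$ with $[A,B]$ a central scalar, and the ratio of the two contractions is then simply a product of such scalars.

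My first step is to tabulate which of the four generators actually appears in each zero mode: $\mathsf{a}_{0}$ contains only $\mathsf{t}_{0}$; $\mathsf{s}_{a,0}$ contains $\mathsf{t}_{0}$ and $\widetilde{\partial}_{\mathsf{t}}$; $\mathsf{x}_{A,0}$ contains only $\widetilde{\mathsf{t}}_{0}$; and both $\mathsf{w}_{\bar{a},0}$ and $\widetilde{\mathsf{z}}^{K}_{0}$ contain $\widetilde{\mathsf{t}}_{0}$ and $\partial_{\mathsf{t}}$. From this table it is immediate that the pairs D0--D0, D0--D2, D0--D4, D2--D2 share no non-commuting generators, so both contractions coincide and conditions \eqref{eq:D0D0zero}, \eqref{eq:D0D2zero}, \eqref{eq:D0D4zero}, \eqref{eq:D2D2zero} hold trivially (both sides equal $1$). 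The only cases requiring real work are D0--D6, D0--D8, D2--D4, D2--D6.

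As a representative computation I would carry out D0--D6 in full. The only non-commuting pair is $(\mathsf{t}_{0}, \partial_{\mathsf{t}})$; BCH yields $e^{-\log q_{a}\,\partial_{\mathsf{t}}} e^{\mathsf{t}_{0}} = q_{a}^{-1} e^{\mathsf{t}_{0}} e^{-\log q_{a}\,\partial_{\mathsf{t}}}$, so $\wick{\c{\mathsf{a}_{0}(x)}\c{\mathsf{w}_{\bar{a},0}(x')}} = 1$ and $\wick{\c{\mathsf{w}_{\bar{a},0}(x')}\c{\mathsf{a}_{0}(x)}} = q_{a}^{-1}$, matching \eqref{eq:D0D6zero}. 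The same template gives $K^{-1}$ for D0--D8 (with $\log q_{a}$ replaced by $\log K$) and, via the mirror pair $(\widetilde{\mathsf{t}}_{0}, \widetilde{\partial}_{\mathsf{t}})$, a factor $q_{d}^{-1}$ for D2--D4; matching this with the expected ratio $q_{A}^{-1}q_{c}^{-1} = q_{a}^{-1}q_{b}^{-1}q_{c}^{-1}$ requires the Calabi--Yau condition $q_{1}q_{2}q_{3}q_{4}=1$, which converts $q_{d}^{-1}$ into $q_{a}q_{b}q_{c}$ and closes the identity.

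The main obstacle will be the D2--D6 verification, where both non-commuting pairs contribute simultaneously and the spectral parameters enter nontrivially through the $x$-dependent power factors. Commuting $e^{-(\log q_{a})^{-1}\widetilde{\partial}_{\mathsf{t}}}$ past $x^{-\log q_{a}\,\widetilde{\mathsf{t}}_{0}}$ produces a factor of $x$, past the anomalous exponential $e^{-\log q_{a}\log(-q_{a})\widetilde{\mathsf{t}}_{0}}$ produces $-q_{a}$, while commuting $e^{-\log q_{a}\,\partial_{\mathsf{t}}}$ past ${x'}^{-(\log q_{a})^{-1}\mathsf{t}_{0}}$ produces $x'$. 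The first two factors combine to give $\wick{\c{\mathsf{s}_{a,0}(x')}\c{\mathsf{w}_{\bar{a},0}(x)}} = -q_{a}x$ and the third gives $\wick{\c{\mathsf{w}_{\bar{a},0}(x)}\c{\mathsf{s}_{a,0}(x')}} = x'$, so that the ratio $x'/(-q_{a}x) = -x'/(q_{a}x)$ matches \eqref{eq:D2D6zero} exactly. The sign is supplied precisely by the $\log(-q_{a})$ piece in the definition of $\mathsf{w}_{\bar{a},0}$, which is the sole motivation for including that otherwise peculiar exponential; once this single bookkeeping exercise is carried out carefully, the full lemma follows.
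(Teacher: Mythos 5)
Your proposal is correct and follows essentially the same route as the paper's own proof: both reduce each contraction to commuting the elementary exponentials using $[\partial_{\mathsf{t}},\mathsf{t}_{0}]=[\widetilde{\partial}_{\mathsf{t}},\widetilde{\mathsf{t}}_{0}]=1$, dispatch the pairs D0--D0, D0--D2, D0--D4, D2--D2 as trivial, and verify D0--D6, D0--D8, D2--D4, D2--D6 by a single BCH factor each, with the $\log(-q_a)$ term supplying the sign in \eqref{eq:D2D6zero} and the Calabi--Yau condition closing \eqref{eq:D2D4zero}. All the computed factors ($q_a^{-1}$, $K^{-1}$, $q_d^{-1}$, and $x'$ versus $-q_a x$) agree with the paper.
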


\begin{proof}
    Let us consider the commutation relations of the zero modes with $\mathsf{a}_{0}(x)$. Obviously, using \eqref{eq:zeromodesdef}, we have \eqref{eq:D0D0zero}, \eqref{eq:D0D2zero}, \eqref{eq:D0D4zero}:
    \beq
        \mathsf{a}_{0}(x)\mathsf{a}_{0}(x')=\mathsf{a}_{0}(x')\mathsf{a}_{0}(x),\quad \mathsf{a}_{0}(x)\mathsf{s}_{0}(x')=\mathsf{s}_{0}(x')\mathsf{a}_{0}(x),\quad \mathsf{a}_{0}(x)\mathsf{x}_{A,0}(x')=\mathsf{x}_{A,0}(x')\mathsf{a}_{0}(x).
    \eeq
    For the $\D0$-$\D6$ and $\D0$-$\D8$ relations, we have
    \beq
        \mathsf{w}_{\bar{a},0}(x)\mathsf{a}_{0}(x')=q_{a}^{-1}\mathsf{a}_{0}(x')\mathsf{w}_{\bar{a},0}(x),\quad
        \tilde{\mathsf{z}}^{K}_{0}(x)\mathsf{a}_{0}(x')=K^{-1}\mathsf{a}_{0}(x')\tilde{\mathsf{z}}^{K}_{0}(x).
    \eeq
    which give \eqref{eq:D0D6zero}, \eqref{eq:D0D8zero}. Let us next consider the commutation relations of the zero modes with $\mathsf{s}_{a,0}(x)$. For the $\D2_{a}$-$\D2_{b}$\,$(a\neq b)$ case, we obviously have 
    \beq
        \mathsf{s}_{a,0}(x)\mathsf{s}_{b,0}(x')=\mathsf{s}_{b,0}(x')\mathsf{s}_{a,0}(x)
    \eeq
    which gives \eqref{eq:D2D2zero}. The $\D2_{c}$-$\D4_{A}$ case where $A=ab,\,\bar{A}=cd$ is given as 
    \bea
        \mathsf{s}_{c,0}(x)\mathsf{x}_{ab,0}(x')&=e^{-\frac{\log q_{c}\log q_{d}}{\log q_{c}}[\tilde{\partial}_{\mathsf{t}},\tilde{\mathsf{t}_{0}}]}\mathsf{x}_{ab,0}(x')\mathsf{s}_{c,0}(x)\\
        &=q_{d}^{-1}\mathsf{x}_{ab,0}(x')\mathsf{s}_{c,0}(x).
    \eea
    Switching $c\leftrightarrow d$, we also obtain
    \beq
        \mathsf{s}_{d,0}(x)\mathsf{x}_{ab,0}(x') =q_{c}^{-1}\mathsf{x}_{ab,0}(x')\mathsf{s}_{d,0}(x),
    \eeq
    and obtain \eqref{eq:D2D4zero}. The $\D2_{a}$-$\D6_{\bar{a}}$ case \eqref{eq:D2D6zero} comes from 
    \beq
        \mathsf{s}_{a,0}(x)\mathsf{w}_{\bar{a},0}(x')=-q_{a}x':\mathsf{s}_{a,0}(x)\mathsf{w}_{\bar{a},0}(x'):,\quad
        \mathsf{w}_{\bar{a},0}(x')\mathsf{s}_{a,0}(x)=x:\mathsf{s}_{a,0}(x)\mathsf{w}_{\bar{a},0}(x'):
    \eeq
\end{proof}

\begin{corollary}\label{app:cor:zero-modes-prf2}
  The zero modes obey the relations in \eqref{eq:zeromodesrelation}.  
\end{corollary}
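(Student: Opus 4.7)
The plan is to verify each of the three identities in \eqref{eq:zeromodesrelation} by direct substitution of the explicit free field realizations given in Lemma \ref{app:lem:zero-modes-prf1}. Since all the zero modes are built from the commuting generators $\mathsf{t}_0$, $\tilde{\mathsf{t}}_0$, $\partial_{\mathsf{t}}$, $\tilde\partial_{\mathsf{t}}$ (with the only nontrivial pairings being $[\partial_{\mathsf{t}},\mathsf{t}_0]=[\tilde\partial_{\mathsf{t}},\tilde{\mathsf{t}}_0]=1$), the ratios of the form $\mathcal{O}(x)/\mathcal{O}(q x)$ reduce, under normal ordering, to purely scalar manipulations: the $x$-independent exponential factors cancel exactly, and only the power-of-$x$ prefactors survive. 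The key algebraic input is the elementary identity $(qx)^{-\alpha} = q^{-\alpha}x^{-\alpha}$ applied formally to the operator-valued exponents, which is legitimate because the exponents commute with themselves.

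For the first identity $\mathsf{a}_{0}(x)=:\mathsf{s}_{a,0}(x)/\mathsf{s}_{a,0}(q_{a}x):$, I would substitute $\mathsf{s}_{a,0}(x)=x^{-(\log q_{a})^{-1}\mathsf{t}_{0}}e^{-(\log q_{a})^{-1}\tilde\partial_{\mathsf{t}}}$. The exponential factor $e^{-(\log q_{a})^{-1}\tilde\partial_{\mathsf{t}}}$ is $x$-independent and cancels between numerator and denominator under normal ordering, leaving $x^{-(\log q_{a})^{-1}\mathsf{t}_{0}}/(q_{a}x)^{-(\log q_{a})^{-1}\mathsf{t}_{0}}=q_{a}^{(\log q_{a})^{-1}\mathsf{t}_{0}}=e^{\mathsf{t}_{0}}=\mathsf{a}_{0}(x)$, as required.

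For the second identity $\mathsf{x}_{ab,0}(x)=:\mathsf{w}_{abc,0}(x)/\mathsf{w}_{abc,0}(q_{c}x):$, note that with $\bar{a}=abc$ the formula from Lemma \ref{app:lem:zero-modes-prf1} reads $\mathsf{w}_{abc,0}(x)=x^{-\log q_{d}\,\tilde{\mathsf{t}}_{0}}e^{-\log q_{d}\log(-q_{d})\tilde{\mathsf{t}}_{0}}e^{-\log q_{d}\partial_{\mathsf{t}}}$ where $\{a,b,c,d\}=\four$. Both exponential factors are $x$-independent and cancel in the normal-ordered ratio, while the $x$-dependent part yields $q_{c}^{\log q_{d}\,\tilde{\mathsf{t}}_{0}}=e^{\log q_{c}\log q_{d}\tilde{\mathsf{t}}_{0}}$. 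Since $\bar{A}=cd$ for $A=ab$, this matches exactly $\mathsf{x}_{ab,0}(x)=e^{\log q_{c}\log q_{d}\tilde{\mathsf{t}}_{0}}$. The third identity $\mathsf{w}_{\bar{a},0}(x)=\tilde{\mathsf{z}}_{0}^{q_{a}}(x)$ is immediate: comparing the explicit expressions and setting $K=q_{a}$ in $\tilde{\mathsf{z}}_{0}^{K}(x)$ produces precisely $\mathsf{w}_{\bar{a},0}(x)$.

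There is no genuine obstacle to the proof; it is a routine check at the level of zero modes. The only subtlety worth flagging is bookkeeping: one must consistently use the identification between the index $\bar{a}$ and its complementary element, and track which of the two pairs of zero-mode generators $(\mathsf{t}_{0},\partial_{\mathsf{t}})$ versus $(\tilde{\mathsf{t}}_{0},\tilde\partial_{\mathsf{t}})$ appears in each D-brane operator, since they commute and never mix. Once this is handled, the three identities \eqref{eq:zeromodesrelation} follow mechanically, and by lifting them to the full vertex operators (where the nonzero-mode relation $\mathsf{a}_{n}=\mathcal{A}_{n}\bfP_{\mathcal{S}}^{[-n]}$ is already built in by construction from \eqref{eq:relationwithD0}), one recovers the operator-level identities \eqref{eq:oprelation1}.
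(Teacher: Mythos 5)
Your proof is correct and follows essentially the same route as the paper: direct substitution of the explicit zero-mode realizations from Lemma~\ref{app:lem:zero-modes-prf1}, cancellation of the $x$-independent exponential factors in the normal-ordered ratios, and the elementary identification $q^{\alpha\,\mathsf{t}} = e^{\alpha\log q\,\mathsf{t}}$, with the third identity read off by setting $K=q_a$. The index bookkeeping ($\mathsf{w}_{abc,0}$ carrying the exponent $\log q_d$ with $\{a,b,c,d\}=\four$, matching $\bar{A}=cd$ for $A=ab$) is handled correctly.
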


\begin{proof}
The relation between $\mathsf{s}_{a,0}(x)$ and $\mathsf{a}_{0}(x)$ is given
\beq
    :\frac{\mathsf{s}_{a,0}(x)}{\mathsf{s}_{a,0}(q_{a}x)}:=\frac{x^{-(\log q_{a})^{-1}\mathsf{t}_{0}}}{(q_{a}x)^{-(\log q_{a})^{-1}\mathsf{t}_{0}}}=e^{\mathsf{t}_{0}}=\mathsf{a}_{0}(x).
\eeq
The relation between $\mathsf{w}_{abc,0}(x)$ and $\mathsf{x}_{ab,0}(x)$ is given
\beq
    :\frac{\mathsf{w}_{abc,0}(x)}{\mathsf{w}_{abc,0}(q_{c}x)}:=q_{c}^{\log q_{d}\tilde{\mathsf{t}}_{0}}=e^{\log q_{c}\log q_{d}\tilde{\mathsf{t}}_{0}}=\mathsf{x}_{ab,0}(x)
\eeq
where $\{a,b,c,d\}=\four$. For $\tilde{\mathsf{z}}_{0}^{K}(x)$ and $\mathsf{w}_{\bar{a},0}(x)$, we have 
\beq
    \tilde{\mathsf{z}}_{0}^{q_{a}}(x)=\mathsf{w}_{\bar{a},0}(x).
\eeq
Using this relation, one can also show that we have the relation
\beq
    :\frac{\mathsf{w}_{\bar{a},0}(x)}{\mathsf{w}_{\bar{a},0}(Kx)}:=:\frac{\tilde{\mathsf{z}}_{0}^{K}(x)}{\tilde{\mathsf{z}}_{0}^{K}(q_{a}x)}:
\eeq
\end{proof}

\begin{proposition}[Contraction formulas]\label{app:prop:contraction_formula}
Under the above free field realizations of the zero modes, the contraction formulas are
\begin{subequations}
\begin{align}
\mathsf{A}(x)\mathsf{S}_{a}(x')&=g_{\bar{a}}\left(x'/x\right)^{-1}: \mathsf{A}(x)\mathsf{S}_{a}(x'):,\\
\mathsf{S}_{a}(x')\mathsf{A}(x)&=g_{\bar{a}}(q_{a}x/x'):\mathsf{A}(x)\mathsf{S}_{a}(x'):,\\
\StepSubequations
\mathsf{A}(x)\mathsf{X}_{A}(x')&=\mathscr{S}_{\bar{A}}(x'/x)^{-1}:\mathsf{A}(x)\mathsf{X}_{A}(x'):,\\
    \mathsf{X}_{A}(x')\mathsf{A}(x)&=\mathscr{S}_{\bar{A}}(q_{A}x/x')^{-1}:\mathsf{X}_{A}(x')\mathsf{A}(x):,\\
\StepSubequations
    \mathsf{A}(x)\mathsf{W}_{\bar{a}}(x')&=\mathscr{V}_{a}\left(x'/x\right)^{-1}:\mathsf{A}(x)\mathsf{W}_{\bar{a}}(x'):,\\
    \mathsf{W}_{\bar{a}}(x')\mathsf{A}(x)&=q_{a}^{-1}\mathscr{V}_{a}(q_{a}^{-1}x/x'):\mathsf{W}_{\bar{a}}(x')\mathsf{A}(x):\\
\StepSubequations
    \mathsf{A}(x)\widetilde{\mathsf{Z}}^{K}(x')&=\frac{1-x'/x}{1-Kx'/x}:\mathsf{A}(x)\widetilde{\mathsf{Z}}^{K}(x'):,\\
    \widetilde{\mathsf{Z}}^{K}(x')\mathsf{A}(x)&=K^{-1}\frac{1-x/x'}{1-K^{-1}x/x'}:\mathsf{A}(x)\widetilde{\mathsf{Z}}^{K}(x'):,\\
\StepSubequations
    \mathsf{S}_{a}(x)\mathsf{S}_{b}(x')&=\mathscr{S}_{\overbar{ab}}(q_{a}x'/x):\mathsf{S}_{a}(x)\mathsf{S}_{b}(x'):,\\
        \mathsf{S}_{b}(x')\mathsf{S}_{a}(x)&=\mathscr{S}_{\overbar{ab}}(q_{b}x/x'):\mathsf{S}_{a}(x)\mathsf{S}_{b}(x'):,\\
\StepSubequations
        \mathsf{X}_{A}(x)\mathsf{S}_{c}(x')&=\mathscr{V}_{d}\left(q_{A}x'/x\right)^{-1} : \mathsf{X}_{A}(x)\mathsf{S}_{c}(x'):  ,\\
        \mathsf{S}_{c}(x')\mathsf{X}_{A}(x)&=q_{d}^{-1}\mathscr{V}_{d}\left(q_{d}^{-1}q_{A}^{-1}x/x'\right): \mathsf{X}_{A}(x)\mathsf{S}_{c}(x'):,\\
\StepSubequations
\mathsf{W}_{\bar{a}}(x)\mathsf{S}_{a}(x')&=x'\frac{1}{1-q_{a}^{-1}x'/x}:\mathsf{W}_{\bar{a}}(x)\mathsf{S}_{a}(x'):,\\
\mathsf{S}_{a}(x')\mathsf{W}_{\bar{a}}(x)&=(-q_{a}x)\frac{1}{1-q_{a}x/x'}:\mathsf{W}_{\bar{a}}(x)\mathsf{S}_{a}(x'):
\end{align}\label{eq:app-contractions}
\end{subequations}
\end{proposition}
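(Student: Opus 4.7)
The strategy is to split each vertex operator into a zero-mode piece and a nonzero-mode piece, handle the two factors separately, and then reassemble. Each vertex operator has the form $\mathsf{V}(x)=\mathsf{v}_0(x)\,{:}\exp(\sum_{n\neq 0}\mathsf{v}_n x^{-n}){:}$, and the two pieces commute with the zero-mode piece of any other operator only up to a position-dependent scalar, since the explicit realizations in Lemma~\ref{app:lem:zero-modes-prf1} contain factors like $x^{\alpha \tilde{\mathsf{t}}_0}$. So each contraction $\mathsf{V}(x)\mathsf{V}'(x')$ will produce a product of (i) the structure function coming from the nonzero-mode commutator, which I can read off directly from the commutation relations used already in Section~\ref{sec:freefieldintegral}, and (ii) a zero-mode scalar which must be computed anew using the explicit free-field forms.

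First I would record the nonzero-mode contractions. By the commutation relations $[\mathsf{a}_n,\mathsf{s}_{a,m}]$, $[\mathsf{a}_n,\mathsf{x}_{A,m}]$, $[\mathsf{a}_n,\mathsf{w}_{\bar a,m}]$, $[\mathsf{a}_n,\mathsf{z}_m]$, $[\mathsf{s}_{a,n},\mathsf{s}_{b,m}]$, $[\mathsf{x}_{A,n},\mathsf{s}_{c,m}]$, $[\mathsf{w}_{\bar a,n},\mathsf{s}_{a,m}]$ and the standard BCH identity for exponentials of Heisenberg generators, the nonzero-mode parts contract to $g_{\bar a}(x'/x)^{-1}$, $\mathscr{S}_{\bar A}(x'/x)^{-1}$, $\mathscr{V}_a(x'/x)^{-1}$, $(1-x'/x)/(1-Kx'/x)$, $\mathscr{S}_{\overbar{ab}}(q_a x'/x)$, $\mathscr{V}_d(q_A x'/x)^{-1}$, and $1/(1-q_a^{-1}x'/x)$, respectively, in the $|x'/x|\ll 1$ region. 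The reversed orderings give the same rational functions evaluated in the analytic continuation $x\leftrightarrow x'$, combined with the reflection properties~\eqref{eq:reflec_structfunc}. This part is routine once the commutator tables are fixed.

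Next I would evaluate the zero-mode prefactors. For example, for $\mathsf{W}_{\bar a}(x)\mathsf{S}_a(x')$ I compute
\begin{equation}
\mathsf{w}_{\bar a,0}(x)\,\mathsf{s}_{a,0}(x') \;=\; x^{-\log q_a \,\tilde{\mathsf{t}}_0}\,e^{-\log q_a\log(-q_a)\tilde{\mathsf{t}}_0}\,e^{-\log q_a\,\partial_{\mathsf{t}}}\, (x')^{-(\log q_a)^{-1}\mathsf{t}_0}\,e^{-(\log q_a)^{-1}\tilde\partial_{\mathsf{t}}}
\end{equation}
and move the $e^{-\log q_a \partial_{\mathsf t}}$ past $(x')^{-(\log q_a)^{-1}\mathsf{t}_0}$ using $[\partial_{\mathsf t},\mathsf{t}_0]=1$, which produces the factor $(x')^{1}$ (and similarly a factor $(-q_a x)$ in the opposite order), matching the required $x'$ and $-q_a x$ prefactors. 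The same mechanism accounts for the $q_a^{-1}$ in $\mathsf{W}_{\bar a}(x')\mathsf{A}(x)$, the $K^{-1}$ in $\widetilde{\mathsf{Z}}^K(x')\mathsf{A}(x)$, and the $q_d^{-1}$ in $\mathsf{S}_c(x')\mathsf{X}_A(x)$, all of which are already encoded in the zero-mode commutation relations checked in Lemma~\ref{app:lem:zero-modes-prf1}: these are precisely the multiplicative defects measured by conditions \eqref{eq:D0D0zero}--\eqref{eq:D2D6zero}. Collecting the nonzero-mode contraction together with this scalar gives each of the claimed formulas.

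The main obstacle, if any, is purely bookkeeping: one must track carefully which operator ordering produces which branch of the analytic continuation, since for the non-symmetric cases (D0-D6, D0-D8, D2-D6) the two orderings differ by precisely the zero-mode defect factor rather than being strictly equal after analytic continuation. It is helpful to verify consistency by checking that in every case the ratio $\mathsf{V}(x)\mathsf{V}'(x')/\mathsf{V}'(x')\mathsf{V}(x)$ reproduces the zero-mode relation (e.g.~\eqref{eq:D0D6zero}, \eqref{eq:D0D8zero}, \eqref{eq:D2D6zero}) exactly, which in turn confirms that the two contraction formulas in each subequation are compatible. No new ingredient is required beyond the explicit realizations in Lemma~\ref{app:lem:zero-modes-prf1}, the reflection formulas \eqref{eq:reflec_structfunc}, and the BCH identity for the Heisenberg algebra.
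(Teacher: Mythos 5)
Your proposal is correct and follows essentially the same route as the paper: the nonzero-mode contractions are exactly the structure functions already recorded with abstract zero-mode symbols in Section~\ref{sec:freefieldintegral} and the zero-modes subsection, and the remaining scalar prefactors (e.g. $x'$, $-q_a x$, $q_a^{-1}$, $K^{-1}$, $q_d^{-1}$) are obtained by commuting the shift operators $e^{c\partial_{\mathsf t}}$, $e^{c\tilde\partial_{\mathsf t}}$ through the $x^{\alpha\mathsf t_0}$-type factors in Lemma~\ref{app:lem:zero-modes-prf1}, precisely as you describe. The consistency check against the zero-mode relations \eqref{eq:D0D6zero}, \eqref{eq:D0D8zero}, \eqref{eq:D2D6zero} is also how the paper fixes the two orderings, so no gap.
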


\section{Supergroup gauge theory}\label{app:supergroup}
In this section, we review the instanton partition function of supergroup gauge theories in 5d theories. From the D-brane perspective, we need to include \textit{ghost} D4-branes, denoted by $\D4^{-}$, in the system. The partition functions were first derived by \cite{Kimura:2019msw} (see also \cite{Kimura:2023iup} for a review). The quantum algebraic perspective was studied in \cite{Noshita:2022dxv} for A, D-quiver gauge theories. The discussion here is complementary to the discussion in \cite{Noshita:2022dxv}.
\subsection{Contour integral formula and partition functions}
We consider 5d gauge theories on $\mathbb{C}^{2}_{12}\times \mathbb{S}^{1}$. The instanton partition functions of superunitary groups $\U(n_{+}|n_{-})$ are described by changing the character to super characters:
\bea
&\bfN=\bfN^{+}-\bfN^{-},\quad \bfK=\bfK^{+}-\bfK^{-},\\
&\bfN^{\pm}=\sum_{\alpha=1}^{n_{\pm}}e^{a_{\alpha}^{\pm}}=\sum_{\alpha=1}^{n_{\pm}}v_{\alpha}^{\pm},\quad \bfK^{\pm}=\sum_{I=1}^{k_{\pm}}e^{\phi_{I}^{\pm}}=\sum_{I=1}^{k_{\pm}}x_{I}^{\pm}.
\eea
The partition function is given by inserting these characters to \eqref{eq:affinequiver} or generally to \eqref{eq:D4squareroot}. Let us focus on the affine quiver gauge theory case whose character is\footnote{We use the character in \eqref{eq:D4squareroot} where the topological term is slightly modified.} 
\bea
\mathbf{v}_{\text{inst.}}&=-\bfP_{34}^{\vee}\bfN^{\vee}\bfK+\bfP_{123}^{\vee}\bfK^{\vee}\bfK\\
&=-\sum_{\sigma,\sigma'=\pm}\sigma\sigma'\bfP_{34}^{\vee}\bfN^{\sigma\,\vee}\bfK^{\sigma'}+\sum_{\sigma,\sigma'=\pm}\sigma\sigma'\bfP_{123}^{\vee}\bfK^{\sigma\vee}\bfK^{\sigma'}.
\eea
The contour integral is then given by
\bea\label{eq:app-supergroupLMNS}
\mathcal{Z}^{\D4}_{12,k_{+}|k_{-}}=\mathbb{I}[\mathbf{v}_{\text{inst.}}]=&\frac{\mathcal{G}_{\bar{4}}^{k_{+}}\mathcal{G}_{\bar{4}}^{k_{-}}}{k_{+}!\,k_{-}!}\oint\prod_{\sigma=\pm}\prod_{I=1}^{k_{\sigma}}\frac{dx_{I}^{\sigma}}{2\pi\iota x_{I}^{\sigma}}\prod_{\sigma\sigma'=\pm}\prod_{\alpha=1}^{n^{\sigma}}\prod_{I=1}^{k_{\sigma'}}\mathscr{S}_{34}\left(\frac{v_{\alpha}^{\sigma}}{x_{I}^{\sigma'}}\right)^{\sigma\sigma'}\\
&\qquad\times\prod_{\sigma,\sigma'=\pm}\prod_{\substack{I=1,\ldots,k_{\sigma}\\J=1,\ldots,k_{\sigma'}\\(I,\sigma)\neq (J,\sigma')}}g_{\bar{4}}\left(\frac{x_{I}^{\sigma}}{x_{J}^{\sigma'}}\right)^{-\sigma\sigma'}.
\eea
The instanton partition function is schematically written as
\bea
\mathcal{Z}_{\text{inst.}}=\sum_{k_{\pm}\geq 0}\mathfrak{q}^{k_{+}-k_{-}}\mathcal{Z}^{\D4}_{12,k_{+}|k_{-}}.
\eea
Note that the topological term is in the opposite power compared to the normal group case. The supergroup analogue of the gauge origami of the spiked instantons is obtained in a similar way.

The contour integral will localize on the poles classified by $(n_{+},n_{-})$-tuples of Young diagrams:
\bea
&\vec{v}=(v_{\alpha}^{\sigma})_{\alpha=1,\ldots,n_{\sigma}}^{\sigma=\pm},\quad \vec{\lambda}=(\lambda^{(\alpha)}_{\sigma})_{\alpha=1,\ldots,n_{\sigma}}^{\sigma=\pm},\quad \sum_{\alpha=1}^{n_{\sigma}}|\lambda^{(\alpha)}_{\sigma}|=k_{\sigma}\\
&\{x_{I}^{\sigma}\}_{I=1,\ldots,k_{\sigma}}\rightarrow \{\chi_{12,v_{\alpha}^{+}}(\Bbox)\}_{\alpha=1,\ldots,n_{+},\Abox\in\lambda^{(\alpha)}_{+}}\cup\{\bar{\chi}_{12,v_{\alpha}^{-}}(\Bbox)\}_{\alpha=1,\ldots,n_{-},\Abox\in\lambda_{-}^{(\alpha)}}
\eea
where
\bea
\bar{\chi}_{12,v}(\Bbox)=vq_{1}^{-i}q_{2}^{-j},\quad \Bbox=(i,j).
\eea
The character of the instanton bundle for the negative instanton contribution is
\bea
\bfK^{-}=\sum_{\alpha=1}^{n_{-}}\sum_{\Abox\in\lambda^{(\alpha)}_{-}}\bar{\chi}_{12,v_{\alpha}^{-}}(\Bbox).
\eea

The explicit form of the instanton partition functions is expressed by the Nekrasov factors for supergroup gauge theories (see \cite{Noshita:2022dxv}):
\bea
\mathsf{N}_{A}^{\sigma\sigma'}(v_{1},\lambda^{(1)}\,|\,v_{2},\lambda^{(2)})=\prod_{\Abox\in\lambda^{(1)}}\left(1-q_{A}\frac{\chi_{A,v_{1}}^{(\sigma)}(\Bbox)}{v_{2}}\right)\prod_{\Abox\in\lambda^{(2)}}\left(1-\frac{v_{1}}{\chi_{A,v_{2}}^{(\sigma')}(\Bbox)}\right)\prod_{\substack{\Abox\in\lambda^{(1)}\\\AboxF\in\lambda^{(2)}}}\mathscr{S}_{A}\left(\frac{\chi_{A,v_{1}}^{(\sigma)}(\Bbox)}{\chi_{A,v_{2}}^{(\sigma')}(\BboxF)}\right)
\eea
where 
\bea
\chi_{A,v}^{(\sigma)}=\begin{dcases}
    \chi_{A,v}(\Bbox),\quad \sigma=+,\\
    \bar{\chi}_{A,v}(\Bbox),\quad \sigma=-.
\end{dcases}
\eea
These are the supergroup analogs of the Nekrasov factors introduced in \eqref{eq:D4Nekrasovfactor}. See \cite{Noshita:2022dxv} for the explicit form of how the partition function looks like using the Nekrasov factors. 

For later use, let us write down the $\U(0|1)$ partition function:
\bea
\widetilde{\mathcal{Z}}^{\D4^{-}}_{A}[\lambda]=q_{\text{inf}(\bar{A})}^{-|\lambda|}\frac{\mathsf{N}^{--}_{A}(q_{\text{inf}(\bar{A})}x,\lambda\,|\,x,\lambda)}{\mathsf{N}^{--}_{A}(x,\lambda\,|\,x,\lambda)}.
\eea
For example, the one instanton contribution for $A=12$ is given by the 
\bea\label{eq:negativeoneinstanton}
\widetilde{\mathcal{Z}}^{\D4^{-}}_{12}[\{1\}]=\mathscr{S}_{34}(q_{1}),
\eea
which is equivalent to the one instanton contribution for the $\U(1|0)$ case.

The recursion formulas of these Nekrasov factors are given as
\bea
\frac{\mathsf{N}_{A}^{\sigma\sigma'}(v_{1},\lambda^{(1)}+\Bbox\,|\,v_{2},\lambda^{(2)})}{\mathsf{N}_{A}^{\sigma\sigma'}(v_{1},\lambda^{(1)}\,|\,v_{2},\lambda^{(2)})}&=\left(-\frac{q_{A}\chi_{A,v_{1}}^{(\sigma)}}{v_{2}}\right)\mathscr{Y}^{A,\sigma'}_{\lambda^{(2)},v_{2}}(q_{A}\chi_{A,v}^{(\sigma)}(\Bbox)),\\
\frac{\mathsf{N}_{A}^{\sigma\sigma'}(v_{1},\lambda^{(1)}\,|\,v_{2},\lambda^{(2)}+\Bbox)}{\mathsf{N}_{A}^{\sigma\sigma'}(v_{1},\lambda^{(1)}\,|\,v_{2},\lambda^{(2)})}&=\mathscr{Y}_{\lambda^{(1)},v_{1}}^{A,\sigma}(\chi_{A,v_{2}}^{(\sigma')}(\Bbox))
\eea
where
\bea
\mathscr{Y}^{A,\sigma}_{\lambda,v}(x)=\left(1-\frac{v}{x}\right)\prod_{\Abox \in \lambda}\mathscr{S}_{A}\left(\frac{\chi_{A,v}^{(\sigma)}(\Bbox)}{x}\right).
\eea
For $\sigma=-$, we shortly use (see \cite{Noshita:2022dxv} for the derivations)
\bea
\bar{\mathscr{Y}}_{\lambda,v}^{A}(x)\coloneqq\mathscr{Y}^{A,-}_{\lambda,v}(x)=\frac{\prod_{\Abox\in A(\lambda)}(1-q_{A}\bar{\chi}_{A,v}(\Bbox)/x)}{\prod_{\Abox \in R(\lambda)}(1-\bar{\chi}_{A,v}(\Bbox)/x)}.
\eea

\subsection{Supergroup analog of affine quiver W-algebra}
Let us give a proof of Thm.~\ref{thm:D4negativeqqch}. It is enough to consider the $qq$-character generated by $\mathsf{X}_{12}(x)^{-1}$. Let $F^{\D4^{-}}_{12}[\lambda]$ be an undetermined function and let us find the condition it should obey so that the $qq$-character commutes:
\begin{equation}
    \mathsf{T}_{12}^{(0|1)}(x)=\sum_{\lambda} F^{\D4^{-}}_{12}[\lambda]\overbar{\Lambda}_{12,\lambda}(x),\quad  \overbar{\Lambda}_{12,\lambda}(x)=:\mathsf{X}_{12}(x)^{-1}\prod_{\Abox\in\lambda}\mathsf{A}(\overbar{\chi}_{12,x}(\Bbox)):.
\end{equation}
First of all, let us check the commutativity for the first term. The operator product is given 
\bea
\mathsf{X}_{12}^{-1}(q_{12}x)\mathsf{S}_{4}(x')&=\frac{1-q_{3}x'/x}{1-x'/x}:\mathsf{X}_{12}(x)^{-1}\mathsf{S}_{4}(x'):,\\
:\mathsf{X}_{12}^{-1}(q_{12}x)\mathsf{A}(x):\mathsf{S}_{4}(x')&=\frac{(1-q_{12}x'/x)(1-q_{13}x'/x)(1-q_{23}x'/x)}{(1-q_{4}^{-1}x'/x)(1-q_{1}x'/x)(1-q_{2}x'/x)}:\mathsf{X}_{12}^{-1}(q_{12}x)\mathsf{A}(x)\mathsf{S}_{4}(x'):
\eea
and thus we have
\bea
\relax[\mathsf{X}_{12}^{-1}(q_{12}x),\mathsf{S}_{4}(x')]&=(1-q_{3})\delta\left(x'/x\right):\mathsf{X}_{12}(x)^{-1}\mathsf{S}_{4}(x'):,\\
[{:\mathsf{X}_{12}^{-1}(q_{12}x)\mathsf{A}(x):},\mathsf{S}_{4}(x')]&=(1-q_{3}^{-1})\frac{(1-q_{2}^{-1})(1-q_{1}^{-1})}{(1-q_{14})(1-q_{24})}\delta\left(q_{4}x/x'\right):\mathsf{X}_{12}^{-1}(q_{12}x)\mathsf{A}(x)\mathsf{S}_{4}(x'):\\
&\qquad+\cdots.
\eea
Therefore, if we do the i-Weyl reflection as 
\bea
\mathsf{X}_{12}(x)\rightarrow q_{3}\frac{(1-q_{13}^{-1})(1-q_{23}^{-1})}{(1-q_{1}^{-1})(1-q_{2}^{-1})}:\mathsf{X}_{12}^{-1}(x)\mathsf{A}(q_{12}^{-1}x):=\mathscr{S}_{34}(q_{1}):\mathsf{X}_{12}^{-1}(x)\mathsf{A}(q_{12}^{-1}x):,
\eea
the terms coming from the first pole cancel up to total difference. Note that the coefficient $\mathscr{S}_{34}(q_{1})$ is just the one instanton contribution coming from $\U(0|1)$ (see \eqref{eq:negativeoneinstanton}).

Generally, using
\bea
    &\left[\frac{\overbar{\mathscr{Y}}^{12}_{\lambda,x}(q_{123}x')}{\overbar{\mathscr{Y}}^{12}_{\lambda,x}(q_{12}x')}\right]_{+}-\left[\frac{\overbar{\mathscr{Y}}^{12}_{\lambda,x}(q_{123}x')}{\overbar{\mathscr{Y}}^{12}_{\lambda,x}(q_{12}x')}\right]_{-}\\
    =&\sum_{\Abox \in R(\lambda)}\delta\left(q_{4}\frac{\bar{\chi}_{12,x}(\Bbox)}{x'}\right)\underset{x'=q_{4}\bar{\chi}_{12,x}(\Abox)}{\Res}{x'}^{-1}\frac{\overbar{\mathscr{Y}}^{12}_{\lambda,x}(q_{123}x')}{\overbar{\mathscr{Y}}^{12}_{\lambda,x}(q_{12}x')}\\
    &+\sum_{\Abox\in A(\lambda)}\delta\left(\frac{\overbar{\chi}_{12,x}(\Bbox)}{x'}\right)\underset{x'=\bar{\chi}_{12,x}(\Abox)}{\Res}{x'}^{-1}\frac{\overbar{\mathscr{Y}}^{12}_{\lambda,x}(q_{123}x')}{\overbar{\mathscr{Y}}^{12}_{\lambda,x}(q_{12}x')}.
\eea
and 
\bea
    \mathsf{S}_{4}(x')\overbar{\Lambda}_{12,\lambda}(x)&=\left[q_{3}\frac{\overbar{\mathscr{Y}}^{12}_{\lambda,x}(q_{123}x')}{\overbar{\mathscr{Y}}^{12}_{\lambda,x}(q_{12}x')}\right]_{+}:\mathsf{S}_{4}(x')\overbar{\Lambda}_{12,\lambda}(x):,\\
    \overbar{\Lambda}_{12,\lambda}(x)\mathsf{S}_{4}(x')&=\left[q_{3}\frac{\overbar{\mathscr{Y}}^{12}_{\lambda,x}(q_{123}x')}{\overbar{\mathscr{Y}}^{12}_{\lambda,x}(q_{12}x')}\right]_{-}:\mathsf{S}_{4}(x')\overbar{\Lambda}_{12,\lambda}(x)
\eea
Then, the commutation relation with the screening current is 
\begin{align}
\begin{split}
    &\left[\mathsf{T}_{12}^{(0|1)}(x),\mathsf{S}_{4}(x')\right]\\
    =&(-q_{3})\sum_{\lambda\in\mathcal{P}}F^{\D4^{-}}_{12}[\lambda]\left[ \sum_{\Abox \in R(\lambda)}\delta\left(q_{4}\frac{\bar{\chi}_{12,x}(\Bbox)}{x'}\right)\underset{x'=q_{4}\bar{\chi}_{12,x}(\Abox)}{\Res}{x'}^{-1}\frac{\overbar{\mathscr{Y}}^{12}_{\lambda,x}(q_{123}x')}{\overbar{\mathscr{Y}}^{12}_{\lambda,x}(q_{12}x')}\right.\\
    &\quad +\left.\sum_{\Abox\in A(\lambda)}\delta\left(\frac{\overbar{\chi}_{12,x}(\Bbox)}{x'}\right)\underset{x'=\bar{\chi}_{12,x}(\Abox)}{\Res}{x'}^{-1}\frac{\overbar{\mathscr{Y}}^{12}_{\lambda,x}(q_{123}x')}{\overbar{\mathscr{Y}}^{12}_{\lambda,x}(q_{12}x')}\right]:\overbar{\Lambda}_{12,\lambda}(x)\mathsf{S}_{4}(x'):.
\end{split}
\end{align}
After shifting the first term using $\lambda=\lambda'+\Bbox$, the first term transforms to
\bea
=&(-q_{3})\sum_{\lambda'\in\mathcal{P}}\sum_{\Abox\in A(\lambda')}F^{\D4^{-}}_{12}[\lambda'+\Bbox]\delta\left(q_{4}\frac{\overbar{\chi}_{12,x}(\Bbox)}{x'}\right)\underset{x'=q_{4}\bar{\chi}_{12,x}(\Abox)}{\Res}{x'}^{-1}\frac{\overbar{\mathscr{Y}}^{12}_{\lambda'+\Abox,x}(q_{123}x')}{\overbar{\mathscr{Y}}^{12}_{\lambda'+\Abox,x}(q_{12}x')}\\
&\qquad\times :\overbar{\Lambda}_{12,\lambda'+\Abox}(x)\mathsf{S}_{4}(q_{4}\overbar{\chi}_{12,x}(\Bbox)):\\
\eea
Using 
\begin{equation}
    :\overbar{\Lambda}_{12,\lambda'+\Abox}(x)\mathsf{S}_{4}(q_{4}\overbar{\chi}_{12,x}(\Bbox)):=:\overbar{\Lambda}_{12,\lambda'}(x)\mathsf{A}(\overbar{\chi}_{12,x}(\Bbox))\mathsf{S}_{4}(q_{4}\overbar{\chi}_{12,x}(\Bbox)):=:\overbar{\Lambda}_{12,\lambda'}(x)\mathsf{S}_{4}(\overbar{\chi}_{12,x}(\Bbox)):
\end{equation}
and imposing the condition
\bea
\frac{F^{\D4^{-}}_{12}[\lambda+\Bbox]}{F^{\D4^{-}}_{12}[\lambda]}=-\frac{\underset{x'=\bar{\chi}_{12,x}(\Abox)}{\Res}{x'}^{-1}\frac{\overbar{\mathscr{Y}}^{12}_{\lambda,x}(q_{123}x')}{\overbar{\mathscr{Y}}^{12}_{\lambda,x}(q_{12}x')} }{\underset{x'=q_{4}\bar{\chi}_{12,x}(\Abox)}{\Res}{x'}^{-1}\frac{\overbar{\mathscr{Y}}^{12}_{\lambda+\Abox,x}(q_{123}x')}{\overbar{\mathscr{Y}}^{12}_{\lambda+\Abox,x}(q_{12}x')}}
\eea
we obtain
\begin{equation}
    [\mathsf{T}^{(0|1)}_{12}(x),\mathscr{Q}_{4}(x')]=0.
\end{equation}
The condition actually implies $F^{\D4^{-}}_{12}[\lambda]=\widetilde{\mathcal{Z}}_{12}^{\D4^{-}}[\lambda]$. This can be shown by doing a similar analysis in Appendix \ref{app:D4U1partitionfunction}.


\bibliographystyle{ytamsalpha.bst}
\bibliography{Worigami}
\end{document}